\newif\iflongversion
\newcommand{\citeappendix}[2]{\iflongversion
Appendix~\ref{#1}\else
Appendix #2~\cite{DBLP:journals/corr/abs-2401-17199}\fi}
\newcommand{\citeappendixtwo}[4]{\iflongversion
Appendix~\ref{#1} and \ref{#2}\else
Appendix #3 and #4~\cite{DBLP:journals/corr/abs-2401-17199}\fi}
\newcommand{\mGLdrule}[4][]{{\displaystyle\frac{\begin{array}{l}#2\end{array}}{#3}\quad\mGLdrulename{#4}}}
\newcommand{\mGLpremise}[1]{ #1 \\}
\newenvironment{mGLdefnblock}[3][]{ \framebox{\mbox{#2}} \quad #3 \\[0pt]}{}
\newcommand{\mGLnt}[1]{\mathit{#1}}
\newcommand{\mGLmv}[1]{\mathit{#1}}
\newcommand{\mGLkw}[1]{\mathbf{#1}}
\newcommand{\mGLsym}[1]{#1}
\newcommand{\mGLdrulename}[1]{\textsc{#1}}
\newcommand{\mGLdruleGSTXXUnitRName}[0]{\mGLdrulename{GST\_UnitR}}
\newcommand{\mGLdruleGSTXXUnitR}[1]{\mGLdrule[#1]{%
}{
\emptyset  \odot  \emptyset  \vdash_{\mathsf{GS} }  \mathsf{j}  \mGLsym{:}  \mathsf{J}}{%
{\mGLdruleGSTXXUnitRName}{}%
}}
\newcommand{\mGLdruleGSTXXUnitLName}[0]{\mGLdrulename{GST\_UnitL}}
\newcommand{\mGLdruleGSTXXUnitL}[1]{\mGLdrule[#1]{%
\mGLpremise{ (  \delta_{{\mathrm{1}}}  \mGLsym{,}  \delta_{{\mathrm{2}}}  )   \odot   ( \Delta_{{\mathrm{1}}}  \mGLsym{,}  \Delta_{{\mathrm{2}}} )   \vdash_{\mathsf{GS} }  \mGLnt{t}  \mGLsym{:}  \mGLnt{X}}%
}{
 (  \delta_{{\mathrm{1}}}  \mGLsym{,}  \mGLnt{r}  \mGLsym{,}  \delta_{{\mathrm{2}}}  )   \odot   ( \Delta_{{\mathrm{1}}}  \mGLsym{,}  \mGLmv{x}  \mGLsym{:}  \mathsf{J}  \mGLsym{,}  \Delta_{{\mathrm{2}}} )   \vdash_{\mathsf{GS} }  \mathsf{let} \, \mathsf{j} \, \mGLsym{=}  \mGLmv{x} \, \mathsf{in} \, \mGLnt{t}  \mGLsym{:}  \mGLnt{X}}{%
{\mGLdruleGSTXXUnitLName}{}%
}}
\newcommand{\mGLdruleGSTXXTenRName}[0]{\mGLdrulename{GST\_TenR}}
\newcommand{\mGLdruleGSTXXTenR}[1]{\mGLdrule[#1]{%
\mGLpremise{\delta_{{\mathrm{1}}}  \odot  \Delta_{{\mathrm{1}}}  \vdash_{\mathsf{GS} }  \mGLnt{t_{{\mathrm{1}}}}  \mGLsym{:}  \mGLnt{X}}%
\mGLpremise{\delta_{{\mathrm{2}}}  \odot  \Delta_{{\mathrm{2}}}  \vdash_{\mathsf{GS} }  \mGLnt{t_{{\mathrm{2}}}}  \mGLsym{:}  \mGLnt{Y}}%
}{
 (  \delta_{{\mathrm{1}}}  \mGLsym{,}  \delta_{{\mathrm{2}}}  )   \odot   ( \Delta_{{\mathrm{1}}}  \mGLsym{,}  \Delta_{{\mathrm{2}}} )   \vdash_{\mathsf{GS} }  \mGLsym{(}  \mGLnt{t_{{\mathrm{1}}}}  \mGLsym{,}  \mGLnt{t_{{\mathrm{2}}}}  \mGLsym{)}  \mGLsym{:}  \mGLnt{X}  \boxtimes  \mGLnt{Y}}{%
{\mGLdruleGSTXXTenRName}{}%
}}
\newcommand{\mGLdruleGSTXXSubName}[0]{\mGLdrulename{GST\_Sub}}
\newcommand{\mGLdruleGSTXXSub}[1]{\mGLdrule[#1]{%
\mGLpremise{\delta_{{\mathrm{1}}}  \odot  \Delta  \vdash_{\mathsf{GS} }  \mGLnt{t}  \mGLsym{:}  \mGLnt{X}  \quad  \delta_{{\mathrm{1}}}  \leq  \delta_{{\mathrm{2}}}}%
}{
\delta_{{\mathrm{2}}}  \odot  \Delta  \vdash_{\mathsf{GS} }  \mGLnt{t}  \mGLsym{:}  \mGLnt{X}}{%
{\mGLdruleGSTXXSubName}{}%
}}
\newcommand{\mGLdruleGSTXXidName}[0]{\mGLdrulename{GST\_id}}
\newcommand{\mGLdruleGSTXXid}[1]{\mGLdrule[#1]{%
}{
1  \odot  \mGLmv{x}  \mGLsym{:}  \mGLnt{X}  \vdash_{\mathsf{GS} }  \mGLmv{x}  \mGLsym{:}  \mGLnt{X}}{%
{\mGLdruleGSTXXidName}{}%
}}
\newcommand{\mGLdruleGSTXXTenLName}[0]{\mGLdrulename{GST\_TenL}}
\newcommand{\mGLdruleGSTXXTenL}[1]{\mGLdrule[#1]{%
\mGLpremise{ (  \delta_{{\mathrm{1}}}  \mGLsym{,}  \mGLnt{r}  \mGLsym{,}  \mGLnt{r}  \mGLsym{,}  \delta_{{\mathrm{2}}}  )   \odot   ( \Delta_{{\mathrm{1}}}  \mGLsym{,}  \mGLmv{x}  \mGLsym{:}  \mGLnt{X}  \mGLsym{,}  \mGLmv{y}  \mGLsym{:}  \mGLnt{Y}  \mGLsym{,}  \Delta_{{\mathrm{2}}} )   \vdash_{\mathsf{GS} }  \mGLnt{t}  \mGLsym{:}  \mGLnt{Z}}%
}{
 (  \delta_{{\mathrm{1}}}  \mGLsym{,}  \mGLnt{r}  \mGLsym{,}  \delta_{{\mathrm{2}}}  )   \odot   ( \Delta_{{\mathrm{1}}}  \mGLsym{,}  \mGLmv{z}  \mGLsym{:}  \mGLnt{X}  \boxtimes  \mGLnt{Y}  \mGLsym{,}  \Delta_{{\mathrm{2}}} )   \vdash_{\mathsf{GS} }   \mathsf{let} \,( \mGLmv{x} , \mGLmv{y} ) =  \mGLmv{z} \, \mathsf{in} \, \mGLnt{t}   \mGLsym{:}  \mGLnt{Z}}{%
{\mGLdruleGSTXXTenLName}{}%
}}
\newcommand{\mGLdruleGSTXXLinRName}[0]{\mGLdrulename{GST\_LinR}}
\newcommand{\mGLdruleGSTXXLinR}[1]{\mGLdrule[#1]{%
\mGLpremise{\delta  \odot  \Delta  \mGLsym{;}  \emptyset  \vdash_{\mathsf{MS} }  \mGLnt{l}  \mGLsym{:}  \mGLnt{B}}%
}{
\delta  \odot  \Delta  \vdash_{\mathsf{GS} }  \mathsf{Lin} \, \mGLnt{l}  \mGLsym{:}  \mathsf{Lin} \, \mGLnt{B}}{%
{\mGLdruleGSTXXLinRName}{}%
}}
\newcommand{\mGLdruleGSTXXCutName}[0]{\mGLdrulename{GST\_Cut}}
\newcommand{\mGLdruleGSTXXCut}[1]{\mGLdrule[#1]{%
\mGLpremise{\delta_{{\mathrm{2}}}  \odot  \Delta_{{\mathrm{2}}}  \vdash_{\mathsf{GS} }  \mGLnt{t_{{\mathrm{1}}}}  \mGLsym{:}  \mGLnt{X}}%
\mGLpremise{ (  \delta_{{\mathrm{1}}}  \mGLsym{,}  \mGLnt{r}  \mGLsym{,}  \delta_{{\mathrm{3}}}  )   \odot   ( \Delta_{{\mathrm{1}}}  \mGLsym{,}  \mGLmv{x}  \mGLsym{:}  \mGLnt{X}  \mGLsym{,}  \Delta_{{\mathrm{3}}} )   \vdash_{\mathsf{GS} }  \mGLnt{t_{{\mathrm{2}}}}  \mGLsym{:}  \mGLnt{Y}}%
}{
 (  \delta_{{\mathrm{1}}}  \mGLsym{,}   \mGLnt{r}   *   \delta_{{\mathrm{2}}}   \mGLsym{,}  \delta_{{\mathrm{3}}}  )   \odot   ( \Delta_{{\mathrm{1}}}  \mGLsym{,}  \Delta_{{\mathrm{2}}}  \mGLsym{,}  \Delta_{{\mathrm{3}}} )   \vdash_{\mathsf{GS} }  \mGLsym{[}  \mGLnt{t_{{\mathrm{1}}}}  \mGLsym{/}  \mGLmv{x}  \mGLsym{]}  \mGLnt{t_{{\mathrm{2}}}}  \mGLsym{:}  \mGLnt{Y}}{%
{\mGLdruleGSTXXCutName}{}%
}}
\newcommand{\mGLdruleGSTXXWeakName}[0]{\mGLdrulename{GST\_Weak}}
\newcommand{\mGLdruleGSTXXWeak}[1]{\mGLdrule[#1]{%
\mGLpremise{ (  \delta_{{\mathrm{1}}}  \mGLsym{,}  \delta_{{\mathrm{2}}}  )   \odot   ( \Delta_{{\mathrm{1}}}  \mGLsym{,}  \Delta_{{\mathrm{2}}} )   \vdash_{\mathsf{GS} }  \mGLnt{t}  \mGLsym{:}  \mGLnt{Y}}%
}{
 (  \delta_{{\mathrm{1}}}  \mGLsym{,}  \mathsf{0}  \mGLsym{,}  \delta_{{\mathrm{2}}}  )   \odot   ( \Delta_{{\mathrm{1}}}  \mGLsym{,}  \mGLmv{x}  \mGLsym{:}  \mGLnt{X}  \mGLsym{,}  \Delta_{{\mathrm{2}}} )   \vdash_{\mathsf{GS} }  \mGLnt{t}  \mGLsym{:}  \mGLnt{Y}}{%
{\mGLdruleGSTXXWeakName}{}%
}}
\newcommand{\mGLdruleGSTXXContName}[0]{\mGLdrulename{GST\_Cont}}
\newcommand{\mGLdruleGSTXXCont}[1]{\mGLdrule[#1]{%
\mGLpremise{ (  \delta_{{\mathrm{1}}}  \mGLsym{,}  \mGLnt{r_{{\mathrm{1}}}}  \mGLsym{,}  \mGLnt{r_{{\mathrm{2}}}}  \mGLsym{,}  \delta_{{\mathrm{2}}}  )   \odot   ( \Delta_{{\mathrm{1}}}  \mGLsym{,}  \mGLmv{x}  \mGLsym{:}  \mGLnt{X}  \mGLsym{,}  \mGLmv{y}  \mGLsym{:}  \mGLnt{X}  \mGLsym{,}  \Delta_{{\mathrm{2}}} )   \vdash_{\mathsf{GS} }  \mGLnt{t}  \mGLsym{:}  \mGLnt{Y}}%
}{
 (  \delta_{{\mathrm{1}}}  \mGLsym{,}  \mGLnt{r_{{\mathrm{1}}}}  +  \mGLnt{r_{{\mathrm{2}}}}  \mGLsym{,}  \delta_{{\mathrm{2}}}  )   \odot   ( \Delta_{{\mathrm{1}}}  \mGLsym{,}  \mGLmv{x}  \mGLsym{:}  \mGLnt{X}  \mGLsym{,}  \Delta_{{\mathrm{2}}} )   \vdash_{\mathsf{GS} }  \mGLsym{[}  \mGLmv{x}  \mGLsym{/}  \mGLmv{y}  \mGLsym{]}  \mGLnt{t}  \mGLsym{:}  \mGLnt{Y}}{%
{\mGLdruleGSTXXContName}{}%
}}
\newcommand{\mGLdruleGSTXXExName}[0]{\mGLdrulename{GST\_Ex}}
\newcommand{\mGLdruleGSTXXEx}[1]{\mGLdrule[#1]{%
\mGLpremise{ (  \delta_{{\mathrm{1}}}  \mGLsym{,}  \mGLnt{r_{{\mathrm{1}}}}  \mGLsym{,}  \mGLnt{r_{{\mathrm{2}}}}  \mGLsym{,}  \delta_{{\mathrm{2}}}  )   \odot   ( \Delta_{{\mathrm{1}}}  \mGLsym{,}  \mGLmv{x}  \mGLsym{:}  \mGLnt{X}  \mGLsym{,}  \mGLmv{y}  \mGLsym{:}  \mGLnt{Y}  \mGLsym{,}  \Delta_{{\mathrm{2}}} )   \vdash_{\mathsf{GS} }  \mGLnt{t}  \mGLsym{:}  \mGLnt{Z}}%
}{
 (  \delta_{{\mathrm{1}}}  \mGLsym{,}  \mGLnt{r_{{\mathrm{2}}}}  \mGLsym{,}  \mGLnt{r_{{\mathrm{1}}}}  \mGLsym{,}  \delta_{{\mathrm{2}}}  )   \odot   ( \Delta_{{\mathrm{1}}}  \mGLsym{,}  \mGLmv{y}  \mGLsym{:}  \mGLnt{Y}  \mGLsym{,}  \mGLmv{x}  \mGLsym{:}  \mGLnt{X}  \mGLsym{,}  \Delta_{{\mathrm{2}}} )   \vdash_{\mathsf{GS} }  \mGLnt{t}  \mGLsym{:}  \mGLnt{Z}}{%
{\mGLdruleGSTXXExName}{}%
}}
\newcommand{\mGLdruleGSTXXMCutName}[0]{\mGLdrulename{GST\_MCut}}
\newcommand{\mGLdruleGSTXXMCut}[1]{\mGLdrule[#1]{%
\mGLpremise{\delta_{{\mathrm{2}}}  \odot  \Delta_{{\mathrm{2}}}  \vdash_{\mathsf{GS} }  \mGLnt{t_{{\mathrm{1}}}}  \mGLsym{:}  \mGLnt{X}}%
\mGLpremise{ (  \delta_{{\mathrm{1}}}  \mGLsym{,}  \delta  \mGLsym{,}  \delta_{{\mathrm{3}}}  )   \odot   ( \Delta_{{\mathrm{1}}}  \mGLsym{,}   \mGLmv{x} ^{ \mGLmv{n} } :  \mGLnt{X} ^{ \mGLmv{n} }   \mGLsym{,}  \Delta_{{\mathrm{3}}} )   \vdash_{\mathsf{GS} }  \mGLnt{t_{{\mathrm{2}}}}  \mGLsym{:}  \mGLnt{Y}}%
}{
 (  \delta_{{\mathrm{1}}}  \mGLsym{,}   (   \delta  \boxast [ { \delta_{{\mathrm{2}}} }^{ \mGLmv{n} } ]   )   \mGLsym{,}  \delta_{{\mathrm{3}}}  )   \odot   ( \Delta_{{\mathrm{1}}}  \mGLsym{,}  \Delta_{{\mathrm{2}}}  \mGLsym{,}  \Delta_{{\mathrm{3}}} )   \vdash_{\mathsf{GS} }  \mGLsym{[}   \mGLnt{t_{{\mathrm{1}}}} ,\ldots, \mGLnt{t_{{\mathrm{1}}}}   \mGLsym{/}  \mGLmv{x_{{\mathrm{1}}}}  \mGLsym{,} \, ... \, \mGLsym{,}  \mGLmv{x_{\mGLmv{n}}}  \mGLsym{]}  \mGLnt{t_{{\mathrm{2}}}}  \mGLsym{:}  \mGLnt{Y}}{%
{\mGLdruleGSTXXMCutName}{}%
}}
\newcommand{\mGLdruleMSTXXidName}[0]{\mGLdrulename{MST\_id}}
\newcommand{\mGLdruleMSTXXid}[1]{\mGLdrule[#1]{%
}{
\emptyset  \odot  \emptyset  \mGLsym{;}  \mGLmv{x}  \mGLsym{:}  \mGLnt{A}  \vdash_{\mathsf{MS} }  \mGLmv{x}  \mGLsym{:}  \mGLnt{A}}{%
{\mGLdruleMSTXXidName}{}%
}}
\newcommand{\mGLdruleMSTXXUnitRName}[0]{\mGLdrulename{MST\_UnitR}}
\newcommand{\mGLdruleMSTXXUnitR}[1]{\mGLdrule[#1]{%
}{
\emptyset  \odot  \emptyset  \mGLsym{;}  \emptyset  \vdash_{\mathsf{MS} }  \mathsf{i}  \mGLsym{:}  \mathsf{I}}{%
{\mGLdruleMSTXXUnitRName}{}%
}}
\newcommand{\mGLdruleMSTXXUnitLName}[0]{\mGLdrulename{MST\_UnitL}}
\newcommand{\mGLdruleMSTXXUnitL}[1]{\mGLdrule[#1]{%
\mGLpremise{\delta  \odot  \Delta  \mGLsym{;}   ( \Gamma_{{\mathrm{1}}}  \mGLsym{,}  \Gamma_{{\mathrm{2}}} )   \vdash_{\mathsf{MS} }  \mGLnt{l}  \mGLsym{:}  \mGLnt{A}}%
}{
\delta  \odot  \Delta  \mGLsym{;}   ( \Gamma_{{\mathrm{1}}}  \mGLsym{,}  \mGLmv{x}  \mGLsym{:}  \mathsf{I}  \mGLsym{,}  \Gamma_{{\mathrm{2}}} )   \vdash_{\mathsf{MS} }  \mathsf{let} \, \mathsf{i} \, \mGLsym{=}  \mGLmv{x} \, \mathsf{in} \, \mGLnt{l}  \mGLsym{:}  \mGLnt{A}}{%
{\mGLdruleMSTXXUnitLName}{}%
}}
\newcommand{\mGLdruleMSTXXGUnitLName}[0]{\mGLdrulename{MST\_GUnitL}}
\newcommand{\mGLdruleMSTXXGUnitL}[1]{\mGLdrule[#1]{%
\mGLpremise{ (  \delta_{{\mathrm{1}}}  \mGLsym{,}  \delta_{{\mathrm{2}}}  )   \odot   ( \Delta_{{\mathrm{1}}}  \mGLsym{,}  \Delta_{{\mathrm{2}}} )   \mGLsym{;}  \Gamma  \vdash_{\mathsf{MS} }  \mGLnt{l}  \mGLsym{:}  \mGLnt{A}}%
}{
 (  \delta_{{\mathrm{1}}}  \mGLsym{,}  \mGLnt{r}  \mGLsym{,}  \delta_{{\mathrm{2}}}  )   \odot   ( \Delta_{{\mathrm{1}}}  \mGLsym{,}  \mGLmv{z}  \mGLsym{:}  \mathsf{J}  \mGLsym{,}  \Delta_{{\mathrm{2}}} )   \mGLsym{;}  \Gamma  \vdash_{\mathsf{MS} }  \mathsf{let} \, \mathsf{j} \, \mGLsym{=}  \mGLmv{z} \, \mathsf{in} \, \mGLnt{l}  \mGLsym{:}  \mGLnt{A}}{%
{\mGLdruleMSTXXGUnitLName}{}%
}}
\newcommand{\mGLdruleMSTXXTenLName}[0]{\mGLdrulename{MST\_TenL}}
\newcommand{\mGLdruleMSTXXTenL}[1]{\mGLdrule[#1]{%
\mGLpremise{\delta  \odot  \Delta  \mGLsym{;}   ( \Gamma_{{\mathrm{1}}}  \mGLsym{,}  \mGLmv{x}  \mGLsym{:}  \mGLnt{A}  \mGLsym{,}  \mGLmv{y}  \mGLsym{:}  \mGLnt{B}  \mGLsym{,}  \Gamma_{{\mathrm{2}}} )   \vdash_{\mathsf{MS} }  \mGLnt{l}  \mGLsym{:}  \mGLnt{C}}%
}{
\delta  \odot  \Delta  \mGLsym{;}   ( \Gamma_{{\mathrm{1}}}  \mGLsym{,}  \mGLmv{z}  \mGLsym{:}  \mGLnt{A}  \otimes  \mGLnt{B}  \mGLsym{,}  \Gamma_{{\mathrm{2}}} )   \vdash_{\mathsf{MS} }  \mathsf{let} \, \mGLsym{(}  \mGLmv{x}  \mGLsym{,}  \mGLmv{y}  \mGLsym{)}  \mGLsym{=}  \mGLmv{z} \, \mathsf{in} \, \mGLnt{l}  \mGLsym{:}  \mGLnt{C}}{%
{\mGLdruleMSTXXTenLName}{}%
}}
\newcommand{\mGLdruleMSTXXTenRName}[0]{\mGLdrulename{MST\_TenR}}
\newcommand{\mGLdruleMSTXXTenR}[1]{\mGLdrule[#1]{%
\mGLpremise{\delta_{{\mathrm{1}}}  \odot  \Delta_{{\mathrm{1}}}  \mGLsym{;}  \Gamma_{{\mathrm{1}}}  \vdash_{\mathsf{MS} }  \mGLnt{l_{{\mathrm{1}}}}  \mGLsym{:}  \mGLnt{A}}%
\mGLpremise{\delta_{{\mathrm{2}}}  \odot  \Delta_{{\mathrm{2}}}  \mGLsym{;}  \Gamma_{{\mathrm{2}}}  \vdash_{\mathsf{MS} }  \mGLnt{l_{{\mathrm{2}}}}  \mGLsym{:}  \mGLnt{B}}%
}{
 (  \delta_{{\mathrm{1}}}  \mGLsym{,}  \delta_{{\mathrm{2}}}  )   \odot   ( \Delta_{{\mathrm{1}}}  \mGLsym{,}  \Delta_{{\mathrm{2}}} )   \mGLsym{;}   ( \Gamma_{{\mathrm{1}}}  \mGLsym{,}  \Gamma_{{\mathrm{2}}} )   \vdash_{\mathsf{MS} }  \mGLsym{(}  \mGLnt{l_{{\mathrm{1}}}}  \mGLsym{,}  \mGLnt{l_{{\mathrm{2}}}}  \mGLsym{)}  \mGLsym{:}  \mGLnt{A}  \otimes  \mGLnt{B}}{%
{\mGLdruleMSTXXTenRName}{}%
}}
\newcommand{\mGLdruleMSTXXImpRName}[0]{\mGLdrulename{MST\_ImpR}}
\newcommand{\mGLdruleMSTXXImpR}[1]{\mGLdrule[#1]{%
\mGLpremise{\delta  \odot  \Delta  \mGLsym{;}   ( \Gamma  \mGLsym{,}  \mGLmv{x}  \mGLsym{:}  \mGLnt{A} )   \vdash_{\mathsf{MS} }  \mGLnt{l}  \mGLsym{:}  \mGLnt{B}}%
}{
\delta  \odot  \Delta  \mGLsym{;}  \Gamma  \vdash_{\mathsf{MS} }  \lambda  \mGLmv{x}  \mGLsym{.}  \mGLnt{l}  \mGLsym{:}  \mGLnt{A}  \multimap  \mGLnt{B}}{%
{\mGLdruleMSTXXImpRName}{}%
}}
\newcommand{\mGLdruleMSTXXImpLName}[0]{\mGLdrulename{MST\_ImpL}}
\newcommand{\mGLdruleMSTXXImpL}[1]{\mGLdrule[#1]{%
\mGLpremise{\delta_{{\mathrm{2}}}  \odot  \Delta_{{\mathrm{2}}}  \mGLsym{;}  \Gamma_{{\mathrm{2}}}  \vdash_{\mathsf{MS} }  \mGLnt{l_{{\mathrm{2}}}}  \mGLsym{:}  \mGLnt{A}}%
\mGLpremise{\delta_{{\mathrm{1}}}  \odot  \Delta_{{\mathrm{1}}}  \mGLsym{;}   ( \Gamma_{{\mathrm{1}}}  \mGLsym{,}  \mGLmv{x}  \mGLsym{:}  \mGLnt{B}  \mGLsym{,}  \Gamma_{{\mathrm{3}}} )   \vdash_{\mathsf{MS} }  \mGLnt{l_{{\mathrm{1}}}}  \mGLsym{:}  \mGLnt{C}}%
}{
 (  \delta_{{\mathrm{1}}}  \mGLsym{,}  \delta_{{\mathrm{2}}}  )   \odot   ( \Delta_{{\mathrm{1}}}  \mGLsym{,}  \Delta_{{\mathrm{2}}} )   \mGLsym{;}   ( \Gamma_{{\mathrm{1}}}  \mGLsym{,}  \mGLmv{z}  \mGLsym{:}  \mGLnt{A}  \multimap  \mGLnt{B}  \mGLsym{,}  \Gamma_{{\mathrm{2}}}  \mGLsym{,}  \Gamma_{{\mathrm{3}}} )   \vdash_{\mathsf{MS} }  \mGLsym{[}  \mGLmv{z} \, \mGLnt{l_{{\mathrm{2}}}}  \mGLsym{/}  \mGLmv{x}  \mGLsym{]}  \mGLnt{l_{{\mathrm{1}}}}  \mGLsym{:}  \mGLnt{C}}{%
{\mGLdruleMSTXXImpLName}{}%
}}
\newcommand{\mGLdruleMSTXXSubName}[0]{\mGLdrulename{MST\_Sub}}
\newcommand{\mGLdruleMSTXXSub}[1]{\mGLdrule[#1]{%
\mGLpremise{\delta_{{\mathrm{1}}}  \odot  \Delta  \mGLsym{;}  \Gamma  \vdash_{\mathsf{MS} }  \mGLnt{l}  \mGLsym{:}  \mGLnt{B}  \quad  \delta_{{\mathrm{1}}}  \leq  \delta_{{\mathrm{2}}}}%
}{
\delta_{{\mathrm{2}}}  \odot  \Delta  \mGLsym{;}  \Gamma  \vdash_{\mathsf{MS} }  \mGLnt{l}  \mGLsym{:}  \mGLnt{B}}{%
{\mGLdruleMSTXXSubName}{}%
}}
\newcommand{\mGLdruleMSTXXGTenLName}[0]{\mGLdrulename{MST\_GTenL}}
\newcommand{\mGLdruleMSTXXGTenL}[1]{\mGLdrule[#1]{%
\mGLpremise{ (  \delta_{{\mathrm{1}}}  \mGLsym{,}  \mGLnt{r}  \mGLsym{,}  \mGLnt{r}  \mGLsym{,}  \delta_{{\mathrm{2}}}  )   \odot   ( \Delta_{{\mathrm{1}}}  \mGLsym{,}  \mGLmv{x}  \mGLsym{:}  \mGLnt{X}  \mGLsym{,}  \mGLmv{y}  \mGLsym{:}  \mGLnt{Y}  \mGLsym{,}  \Delta_{{\mathrm{2}}} )   \mGLsym{;}  \Gamma  \vdash_{\mathsf{MS} }  \mGLnt{l}  \mGLsym{:}  \mGLnt{A}}%
}{
 (  \delta_{{\mathrm{1}}}  \mGLsym{,}  \mGLnt{r}  \mGLsym{,}  \delta_{{\mathrm{2}}}  )   \odot   ( \Delta_{{\mathrm{1}}}  \mGLsym{,}  \mGLmv{z}  \mGLsym{:}  \mGLnt{X}  \boxtimes  \mGLnt{Y}  \mGLsym{,}  \Delta_{{\mathrm{2}}} )   \mGLsym{;}  \Gamma  \vdash_{\mathsf{MS} }  \mathsf{let} \, \mGLsym{(}  \mGLmv{x}  \mGLsym{,}  \mGLmv{y}  \mGLsym{)}  \mGLsym{=}  \mGLmv{z} \, \mathsf{in} \, \mGLnt{l}  \mGLsym{:}  \mGLnt{A}}{%
{\mGLdruleMSTXXGTenLName}{}%
}}
\newcommand{\mGLdruleMSTXXLinLName}[0]{\mGLdrulename{MST\_LinL}}
\newcommand{\mGLdruleMSTXXLinL}[1]{\mGLdrule[#1]{%
\mGLpremise{\delta  \odot  \Delta  \mGLsym{;}   ( \mGLmv{x}  \mGLsym{:}  \mGLnt{A}  \mGLsym{,}  \Gamma )   \vdash_{\mathsf{MS} }  \mGLnt{l}  \mGLsym{:}  \mGLnt{B}}%
}{
 (  \delta  \mGLsym{,}  1  )   \odot   ( \Delta  \mGLsym{,}  \mGLmv{z}  \mGLsym{:}  \mathsf{Lin} \, \mGLnt{A} )   \mGLsym{;}  \Gamma  \vdash_{\mathsf{MS} }  \mGLsym{[}   \mathsf{Unlin} \, \mGLmv{z}   \mGLsym{/}  \mGLmv{x}  \mGLsym{]}  \mGLnt{l}  \mGLsym{:}  \mGLnt{B}}{%
{\mGLdruleMSTXXLinLName}{}%
}}
\newcommand{\mGLdruleMSTXXGrdLName}[0]{\mGLdrulename{MST\_GrdL}}
\newcommand{\mGLdruleMSTXXGrdL}[1]{\mGLdrule[#1]{%
\mGLpremise{ (  \delta  \mGLsym{,}  \mGLnt{r}  )   \odot   ( \Delta  \mGLsym{,}  \mGLmv{x}  \mGLsym{:}  \mGLnt{X} )   \mGLsym{;}  \Gamma  \vdash_{\mathsf{MS} }  \mGLnt{l}  \mGLsym{:}  \mGLnt{C}}%
}{
\delta  \odot  \Delta  \mGLsym{;}   ( \mGLmv{z}  \mGLsym{:}   \mathsf{Grd} _{ \mGLnt{r} }\, \mGLnt{X}   \mGLsym{,}  \Gamma )   \vdash_{\mathsf{MS} }  \mathsf{let} \, \mathsf{Grd} \, \mGLnt{r} \, \mGLmv{x}  \mGLsym{=}  \mGLmv{z} \, \mathsf{in} \, \mGLnt{l}  \mGLsym{:}  \mGLnt{C}}{%
{\mGLdruleMSTXXGrdLName}{}%
}}
\newcommand{\mGLdruleMSTXXGrdRName}[0]{\mGLdrulename{MST\_GrdR}}
\newcommand{\mGLdruleMSTXXGrdR}[1]{\mGLdrule[#1]{%
\mGLpremise{\delta  \odot  \Delta  \vdash_{\mathsf{GS} }  \mGLnt{t}  \mGLsym{:}  \mGLnt{X}}%
}{
 \mGLnt{r}   *   \delta   \odot  \Delta  \mGLsym{;}  \emptyset  \vdash_{\mathsf{MS} }  \mathsf{Grd} \, \mGLnt{r} \, \mGLnt{t}  \mGLsym{:}   \mathsf{Grd} _{ \mGLnt{r} }\, \mGLnt{X} }{%
{\mGLdruleMSTXXGrdRName}{}%
}}
\newcommand{\mGLdruleMSTXXCutName}[0]{\mGLdrulename{MST\_Cut}}
\newcommand{\mGLdruleMSTXXCut}[1]{\mGLdrule[#1]{%
\mGLpremise{\delta_{{\mathrm{2}}}  \odot  \Delta_{{\mathrm{2}}}  \mGLsym{;}  \Gamma_{{\mathrm{2}}}  \vdash_{\mathsf{MS} }  \mGLnt{l_{{\mathrm{1}}}}  \mGLsym{:}  \mGLnt{A}}%
\mGLpremise{\delta_{{\mathrm{1}}}  \odot  \Delta_{{\mathrm{1}}}  \mGLsym{;}   ( \Gamma_{{\mathrm{1}}}  \mGLsym{,}  \mGLmv{x}  \mGLsym{:}  \mGLnt{A}  \mGLsym{,}  \Gamma_{{\mathrm{3}}} )   \vdash_{\mathsf{MS} }  \mGLnt{l_{{\mathrm{2}}}}  \mGLsym{:}  \mGLnt{B}}%
}{
 (  \delta_{{\mathrm{1}}}  \mGLsym{,}  \delta_{{\mathrm{2}}}  )   \odot   ( \Delta_{{\mathrm{1}}}  \mGLsym{,}  \Delta_{{\mathrm{2}}} )   \mGLsym{;}   ( \Gamma_{{\mathrm{1}}}  \mGLsym{,}  \Gamma_{{\mathrm{2}}}  \mGLsym{,}  \Gamma_{{\mathrm{3}}} )   \vdash_{\mathsf{MS} }  \mGLsym{[}  \mGLnt{l_{{\mathrm{1}}}}  \mGLsym{/}  \mGLmv{x}  \mGLsym{]}  \mGLnt{l_{{\mathrm{2}}}}  \mGLsym{:}  \mGLnt{B}}{%
{\mGLdruleMSTXXCutName}{}%
}}
\newcommand{\mGLdruleMSTXXGCutName}[0]{\mGLdrulename{MST\_GCut}}
\newcommand{\mGLdruleMSTXXGCut}[1]{\mGLdrule[#1]{%
\mGLpremise{\delta_{{\mathrm{2}}}  \odot  \Delta_{{\mathrm{2}}}  \vdash_{\mathsf{GS} }  \mGLnt{t}  \mGLsym{:}  \mGLnt{X}}%
\mGLpremise{ (  \delta_{{\mathrm{1}}}  \mGLsym{,}  \mGLnt{r}  \mGLsym{,}  \delta_{{\mathrm{3}}}  )   \odot   ( \Delta_{{\mathrm{1}}}  \mGLsym{,}  \mGLmv{x}  \mGLsym{:}  \mGLnt{X}  \mGLsym{,}  \Delta_{{\mathrm{3}}} )   \mGLsym{;}  \Gamma  \vdash_{\mathsf{MS} }  \mGLnt{l}  \mGLsym{:}  \mGLnt{B}}%
}{
 (  \delta_{{\mathrm{1}}}  \mGLsym{,}   \mGLnt{r}   *   \delta_{{\mathrm{2}}}   \mGLsym{,}  \delta_{{\mathrm{3}}}  )   \odot   ( \Delta_{{\mathrm{1}}}  \mGLsym{,}  \Delta_{{\mathrm{2}}}  \mGLsym{,}  \Delta_{{\mathrm{3}}} )   \mGLsym{;}  \Gamma  \vdash_{\mathsf{MS} }  \mGLsym{[}  \mGLnt{t}  \mGLsym{/}  \mGLmv{x}  \mGLsym{]}  \mGLnt{l}  \mGLsym{:}  \mGLnt{B}}{%
{\mGLdruleMSTXXGCutName}{}%
}}
\newcommand{\mGLdruleMSTXXMGCutName}[0]{\mGLdrulename{MST\_MGCut}}
\newcommand{\mGLdruleMSTXXMGCut}[1]{\mGLdrule[#1]{%
\mGLpremise{\delta_{{\mathrm{2}}}  \odot  \Delta_{{\mathrm{2}}}  \vdash_{\mathsf{GS} }  \mGLnt{t}  \mGLsym{:}  \mGLnt{X}}%
\mGLpremise{ (  \delta_{{\mathrm{1}}}  \mGLsym{,}  \delta  \mGLsym{,}  \delta_{{\mathrm{3}}}  )   \odot   ( \Delta_{{\mathrm{1}}}  \mGLsym{,}   \mGLnt{X} ^{ \mGLmv{n} }   \mGLsym{,}  \Delta_{{\mathrm{3}}} )   \mGLsym{;}  \Gamma  \vdash_{\mathsf{MS} }  \mGLnt{l}  \mGLsym{:}  \mGLnt{B}}%
}{
 (  \delta_{{\mathrm{1}}}  \mGLsym{,}   (   \delta  \boxast [ { \delta_{{\mathrm{2}}} }^{ \mGLmv{n} } ]   )   \mGLsym{,}  \delta_{{\mathrm{3}}}  )   \odot   ( \Delta_{{\mathrm{1}}}  \mGLsym{,}  \Delta_{{\mathrm{2}}}  \mGLsym{,}  \Delta_{{\mathrm{3}}} )   \mGLsym{;}  \Gamma  \vdash_{\mathsf{MS} }  \mGLsym{[}   \mGLnt{t} ,\ldots, \mGLnt{t}   \mGLsym{/}  \mGLmv{x_{{\mathrm{1}}}}  \mGLsym{,} \, ... \, \mGLsym{,}  \mGLmv{x_{\mGLmv{n}}}  \mGLsym{]}  \mGLnt{l}  \mGLsym{:}  \mGLnt{B}}{%
{\mGLdruleMSTXXMGCutName}{}%
}}
\newcommand{\mGLdruleMSTXXWeakName}[0]{\mGLdrulename{MST\_Weak}}
\newcommand{\mGLdruleMSTXXWeak}[1]{\mGLdrule[#1]{%
\mGLpremise{ (  \delta_{{\mathrm{1}}}  \mGLsym{,}  \delta_{{\mathrm{2}}}  )   \odot   ( \Delta_{{\mathrm{1}}}  \mGLsym{,}  \Delta_{{\mathrm{2}}} )   \mGLsym{;}  \Gamma  \vdash_{\mathsf{MS} }  \mGLnt{l}  \mGLsym{:}  \mGLnt{B}}%
}{
 (  \delta_{{\mathrm{1}}}  \mGLsym{,}  \mathsf{0}  \mGLsym{,}  \delta_{{\mathrm{2}}}  )   \odot   ( \Delta_{{\mathrm{1}}}  \mGLsym{,}  \mGLmv{x}  \mGLsym{:}  \mGLnt{X}  \mGLsym{,}  \Delta_{{\mathrm{2}}} )   \mGLsym{;}  \Gamma  \vdash_{\mathsf{MS} }  \mGLnt{l}  \mGLsym{:}  \mGLnt{B}}{%
{\mGLdruleMSTXXWeakName}{}%
}}
\newcommand{\mGLdruleMSTXXContName}[0]{\mGLdrulename{MST\_Cont}}
\newcommand{\mGLdruleMSTXXCont}[1]{\mGLdrule[#1]{%
\mGLpremise{ (  \delta_{{\mathrm{1}}}  \mGLsym{,}  \mGLnt{r_{{\mathrm{1}}}}  \mGLsym{,}  \mGLnt{r_{{\mathrm{2}}}}  \mGLsym{,}  \delta_{{\mathrm{2}}}  )   \odot   ( \Delta_{{\mathrm{1}}}  \mGLsym{,}  \mGLmv{x}  \mGLsym{:}  \mGLnt{X}  \mGLsym{,}  \mGLmv{y}  \mGLsym{:}  \mGLnt{X}  \mGLsym{,}  \Delta_{{\mathrm{2}}} )   \mGLsym{;}  \Gamma  \vdash_{\mathsf{MS} }  \mGLnt{l}  \mGLsym{:}  \mGLnt{B}}%
}{
 (  \delta_{{\mathrm{1}}}  \mGLsym{,}  \mGLnt{r_{{\mathrm{1}}}}  +  \mGLnt{r_{{\mathrm{2}}}}  \mGLsym{,}  \delta_{{\mathrm{2}}}  )   \odot   ( \Delta_{{\mathrm{1}}}  \mGLsym{,}  \mGLmv{x}  \mGLsym{:}  \mGLnt{X}  \mGLsym{,}  \Delta_{{\mathrm{2}}} )   \mGLsym{;}  \Gamma  \vdash_{\mathsf{MS} }  \mGLsym{[}  \mGLmv{x}  \mGLsym{/}  \mGLmv{y}  \mGLsym{]}  \mGLnt{l}  \mGLsym{:}  \mGLnt{B}}{%
{\mGLdruleMSTXXContName}{}%
}}
\newcommand{\mGLdruleMSTXXGExName}[0]{\mGLdrulename{MST\_GEx}}
\newcommand{\mGLdruleMSTXXGEx}[1]{\mGLdrule[#1]{%
\mGLpremise{ (  \delta_{{\mathrm{1}}}  \mGLsym{,}  \mGLnt{r_{{\mathrm{1}}}}  \mGLsym{,}  \mGLnt{r_{{\mathrm{2}}}}  \mGLsym{,}  \delta_{{\mathrm{2}}}  )   \odot   ( \Delta_{{\mathrm{1}}}  \mGLsym{,}  \mGLmv{x}  \mGLsym{:}  \mGLnt{X}  \mGLsym{,}  \mGLmv{y}  \mGLsym{:}  \mGLnt{Y}  \mGLsym{,}  \Delta_{{\mathrm{2}}} )   \mGLsym{;}  \Gamma  \vdash_{\mathsf{MS} }  \mGLnt{l}  \mGLsym{:}  \mGLnt{B}}%
}{
 (  \delta_{{\mathrm{1}}}  \mGLsym{,}  \mGLnt{r_{{\mathrm{2}}}}  \mGLsym{,}  \mGLnt{r_{{\mathrm{1}}}}  \mGLsym{,}  \delta_{{\mathrm{2}}}  )   \odot   ( \Delta_{{\mathrm{1}}}  \mGLsym{,}  \mGLmv{x}  \mGLsym{:}  \mGLnt{Y}  \mGLsym{,}  \mGLmv{y}  \mGLsym{:}  \mGLnt{X}  \mGLsym{,}  \Delta_{{\mathrm{2}}} )   \mGLsym{;}  \Gamma  \vdash_{\mathsf{MS} }  \mGLnt{l}  \mGLsym{:}  \mGLnt{B}}{%
{\mGLdruleMSTXXGExName}{}%
}}
\newcommand{\mGLdruleMSTXXExName}[0]{\mGLdrulename{MST\_Ex}}
\newcommand{\mGLdruleMSTXXEx}[1]{\mGLdrule[#1]{%
\mGLpremise{\delta  \odot  \Delta  \mGLsym{;}   ( \Gamma_{{\mathrm{1}}}  \mGLsym{,}  \mGLmv{x}  \mGLsym{:}  \mGLnt{A}  \mGLsym{,}  \mGLmv{y}  \mGLsym{:}  \mGLnt{B}  \mGLsym{,}  \Gamma_{{\mathrm{2}}} )   \vdash_{\mathsf{MS} }  \mGLnt{l}  \mGLsym{:}  \mGLnt{C}}%
}{
\delta  \odot  \Delta  \mGLsym{;}   ( \Gamma_{{\mathrm{1}}}  \mGLsym{,}  \mGLmv{y}  \mGLsym{:}  \mGLnt{B}  \mGLsym{,}  \mGLmv{x}  \mGLsym{:}  \mGLnt{A}  \mGLsym{,}  \Gamma_{{\mathrm{2}}} )   \vdash_{\mathsf{MS} }  \mGLnt{l}  \mGLsym{:}  \mGLnt{C}}{%
{\mGLdruleMSTXXExName}{}%
}}
\newcommand{\mGLdruleMSTXXGImpLName}[0]{\mGLdrulename{MST\_GImpL}}
\newcommand{\mGLdruleMSTXXGImpL}[1]{\mGLdrule[#1]{%
\mGLpremise{\delta_{{\mathrm{2}}}  \odot  \Delta_{{\mathrm{2}}}  \vdash_{\mathsf{GS} }  \mGLnt{t}  \mGLsym{:}  \mGLnt{X}}%
\mGLpremise{ (  \delta_{{\mathrm{1}}}  \mGLsym{,}  \delta_{{\mathrm{3}}}  )   \odot   ( \Delta_{{\mathrm{1}}}  \mGLsym{,}  \Delta_{{\mathrm{3}}} )   \mGLsym{;}   ( \Gamma_{{\mathrm{1}}}  \mGLsym{,}  \mGLmv{x}  \mGLsym{:}  \mGLnt{A}  \mGLsym{,}  \Gamma_{{\mathrm{2}}} )   \vdash_{\mathsf{MS} }  \mGLnt{l}  \mGLsym{:}  \mGLnt{B}}%
}{
 (  \delta_{{\mathrm{1}}}  \mGLsym{,}   \mGLnt{r}   *   \delta_{{\mathrm{2}}}   \mGLsym{,}  \delta_{{\mathrm{3}}}  )   \odot   ( \Delta_{{\mathrm{1}}}  \mGLsym{,}  \Delta_{{\mathrm{2}}}  \mGLsym{,}  \Delta_{{\mathrm{3}}} )   \mGLsym{;}   ( \Gamma_{{\mathrm{1}}}  \mGLsym{,}  \mGLmv{z}  \mGLsym{:}   \mathsf{Grd} _{ \mGLnt{r} }\, \mGLnt{X}   \multimap  \mGLnt{A}  \mGLsym{,}  \Gamma_{{\mathrm{2}}} )   \vdash_{\mathsf{MS} }  \mGLsym{[}  \mGLmv{z} \, \mGLsym{(}  \mathsf{Grd} \, \mGLnt{r} \, \mGLnt{t}  \mGLsym{)}  \mGLsym{/}  \mGLmv{x}  \mGLsym{]}  \mGLnt{l}  \mGLsym{:}  \mGLnt{B}}{%
{\mGLdruleMSTXXGImpLName}{}%
}}
\newcommand{\mGLdruleMSTXXGImpRName}[0]{\mGLdrulename{MST\_GImpR}}
\newcommand{\mGLdruleMSTXXGImpR}[1]{\mGLdrule[#1]{%
\mGLpremise{ (  \delta  \mGLsym{,}  \mGLnt{r}  )   \odot   ( \Delta  \mGLsym{,}  \mGLmv{x}  \mGLsym{:}  \mGLnt{X} )   \mGLsym{;}  \Gamma  \vdash_{\mathsf{MS} }  \mGLnt{l}  \mGLsym{:}  \mGLnt{A}}%
}{
\delta  \odot  \Delta  \mGLsym{;}  \Gamma  \vdash_{\mathsf{MS} }  \lambda  \mGLmv{y}  \mGLsym{.}  \mGLsym{(}  \mathsf{let} \, \mathsf{Grd} \, \mGLnt{r} \, \mGLmv{x}  \mGLsym{=}  \mGLmv{y} \, \mathsf{in} \, \mGLnt{l}  \mGLsym{)}  \mGLsym{:}   \mathsf{Grd} _{ \mGLnt{r} }\, \mGLnt{X}   \multimap  \mGLnt{A}}{%
{\mGLdruleMSTXXGImpRName}{}%
}}
\newcommand{\mGLdruleMSTXXBoxIName}[0]{\mGLdrulename{MST\_BoxI}}
\newcommand{\mGLdruleMSTXXBoxI}[1]{\mGLdrule[#1]{%
\mGLpremise{\delta  \odot  \Delta  \mGLsym{;}  \emptyset  \vdash_{\mathsf{MS} }  \mGLnt{l}  \mGLsym{:}  \mGLnt{A}}%
}{
 (  \mGLnt{r}  *  \delta  )   \odot  \Delta  \mGLsym{;}  \emptyset  \vdash_{\mathsf{MS} }  \mathsf{Grd} \, \mGLnt{r} \, \mGLsym{(}  \mathsf{Lin} \, \mGLnt{l}  \mGLsym{)}  \mGLsym{:}   \Box _{ \mGLnt{r} } \mGLnt{A} }{%
{\mGLdruleMSTXXBoxIName}{}%
}}
\newcommand{\mGLdruleMSTXXBoxEName}[0]{\mGLdrulename{MST\_BoxE}}
\newcommand{\mGLdruleMSTXXBoxE}[1]{\mGLdrule[#1]{%
\mGLpremise{\delta_{{\mathrm{2}}}  \odot  \Delta_{{\mathrm{2}}}  \mGLsym{;}  \Gamma_{{\mathrm{2}}}  \vdash_{\mathsf{MS} }  \mGLnt{l_{{\mathrm{1}}}}  \mGLsym{:}   \Box _{ \mGLnt{r} } \mGLnt{A} }%
\mGLpremise{ (  \delta_{{\mathrm{1}}}  \mGLsym{,}  \mGLnt{r}  \mGLsym{,}  \delta_{{\mathrm{3}}}  )   \odot   ( \Delta_{{\mathrm{1}}}  \mGLsym{,}  \mGLmv{x}  \mGLsym{:}  \mathsf{Lin} \, \mGLnt{A}  \mGLsym{,}  \Delta_{{\mathrm{3}}} )   \mGLsym{;}  \Gamma_{{\mathrm{1}}}  \vdash_{\mathsf{MS} }  \mGLnt{l_{{\mathrm{2}}}}  \mGLsym{:}  \mGLnt{B}}%
}{
 (  \delta_{{\mathrm{1}}}  \mGLsym{,}  \delta_{{\mathrm{2}}}  \mGLsym{,}  \delta_{{\mathrm{3}}}  )   \odot   ( \Delta_{{\mathrm{1}}}  \mGLsym{,}  \Delta_{{\mathrm{2}}}  \mGLsym{,}  \Delta_{{\mathrm{3}}} )   \mGLsym{;}   ( \Gamma_{{\mathrm{1}}}  \mGLsym{,}  \Gamma_{{\mathrm{2}}} )   \vdash_{\mathsf{MS} }  \mathsf{let} \, \mathsf{Grd} \, \mGLnt{r} \, \mGLmv{x}  \mGLsym{=}  \mGLnt{l_{{\mathrm{1}}}} \, \mathsf{in} \, \mGLnt{l_{{\mathrm{2}}}}  \mGLsym{:}  \mGLnt{B}}{%
{\mGLdruleMSTXXBoxEName}{}%
}}
\newcommand{\mGLdruleGTXXUnitIName}[0]{\mGLdrulename{GT\_UnitI}}
\newcommand{\mGLdruleGTXXUnitEName}[0]{\mGLdrulename{GT\_UnitE}}
\newcommand{\mGLdruleGTXXTenIName}[0]{\mGLdrulename{GT\_TenI}}
\newcommand{\mGLdruleGTXXSubName}[0]{\mGLdrulename{GT\_Sub}}
\newcommand{\mGLdruleGTXXIdName}[0]{\mGLdrulename{GT\_Id}}
\newcommand{\mGLdruleGTXXLinIName}[0]{\mGLdrulename{GT\_LinI}}
\newcommand{\mGLdruleGTXXLinI}[1]{\mGLdrule[#1]{%
\mGLpremise{\delta  \odot  \Delta  \mGLsym{;}  \emptyset  \vdash_{\mathsf{MT} }  \mGLnt{l}  \mGLsym{:}  \mGLnt{B}}%
}{
\delta  \odot  \Delta  \vdash_{\mathsf{GT} }  \mathsf{Lin} \, \mGLnt{l}  \mGLsym{:}  \mathsf{Lin} \, \mGLnt{B}}{%
{\mGLdruleGTXXLinIName}{}%
}}
\newcommand{\mGLdruleGTXXTenEName}[0]{\mGLdrulename{GT\_TenE}}
\newcommand{\mGLdruleGTXXWeakName}[0]{\mGLdrulename{GT\_Weak}}
\newcommand{\mGLdruleGTXXWeak}[1]{\mGLdrule[#1]{%
\mGLpremise{ (  \delta_{{\mathrm{1}}}  \mGLsym{,}  \delta_{{\mathrm{2}}}  )   \odot   ( \Delta_{{\mathrm{1}}}  \mGLsym{,}  \Delta_{{\mathrm{2}}} )   \vdash_{\mathsf{GT} }  \mGLnt{t}  \mGLsym{:}  \mGLnt{Y}}%
}{
 (  \delta_{{\mathrm{1}}}  \mGLsym{,}  \mathsf{0}  \mGLsym{,}  \delta_{{\mathrm{2}}}  )   \odot   ( \Delta_{{\mathrm{1}}}  \mGLsym{,}  \mGLmv{x}  \mGLsym{:}  \mGLnt{X}  \mGLsym{,}  \Delta_{{\mathrm{2}}} )   \vdash_{\mathsf{GT} }  \mGLnt{t}  \mGLsym{:}  \mGLnt{Y}}{%
{\mGLdruleGTXXWeakName}{}%
}}
\newcommand{\mGLdruleGTXXContName}[0]{\mGLdrulename{GT\_Cont}}
\newcommand{\mGLdruleGTXXCont}[1]{\mGLdrule[#1]{%
\mGLpremise{ (  \delta_{{\mathrm{1}}}  \mGLsym{,}  \mGLnt{r_{{\mathrm{1}}}}  \mGLsym{,}  \mGLnt{r_{{\mathrm{2}}}}  \mGLsym{,}  \delta_{{\mathrm{2}}}  )   \odot   ( \Delta_{{\mathrm{1}}}  \mGLsym{,}  \mGLmv{x}  \mGLsym{:}  \mGLnt{X}  \mGLsym{,}  \mGLmv{y}  \mGLsym{:}  \mGLnt{X}  \mGLsym{,}  \Delta_{{\mathrm{2}}} )   \vdash_{\mathsf{GT} }  \mGLnt{t}  \mGLsym{:}  \mGLnt{Y}}%
}{
 (  \delta_{{\mathrm{1}}}  \mGLsym{,}  \mGLnt{r_{{\mathrm{1}}}}  +  \mGLnt{r_{{\mathrm{2}}}}  \mGLsym{,}  \delta_{{\mathrm{2}}}  )   \odot   ( \Delta_{{\mathrm{1}}}  \mGLsym{,}  \mGLmv{x}  \mGLsym{:}  \mGLnt{X}  \mGLsym{,}  \Delta_{{\mathrm{2}}} )   \vdash_{\mathsf{GT} }  \mGLsym{[}  \mGLmv{x}  \mGLsym{/}  \mGLmv{y}  \mGLsym{]}  \mGLnt{t}  \mGLsym{:}  \mGLnt{Y}}{%
{\mGLdruleGTXXContName}{}%
}}
\newcommand{\mGLdruleGTXXExName}[0]{\mGLdrulename{GT\_Ex}}
\newcommand{\mGLdruleGTXXEx}[1]{\mGLdrule[#1]{%
\mGLpremise{ (  \delta_{{\mathrm{1}}}  \mGLsym{,}  \mGLnt{r_{{\mathrm{1}}}}  \mGLsym{,}  \mGLnt{r_{{\mathrm{2}}}}  \mGLsym{,}  \delta_{{\mathrm{2}}}  )   \odot   ( \Delta_{{\mathrm{1}}}  \mGLsym{,}  \mGLmv{x}  \mGLsym{:}  \mGLnt{X}  \mGLsym{,}  \mGLmv{y}  \mGLsym{:}  \mGLnt{Y}  \mGLsym{,}  \Delta_{{\mathrm{2}}} )   \vdash_{\mathsf{GT} }  \mGLnt{t}  \mGLsym{:}  \mGLnt{Z}}%
}{
 (  \delta_{{\mathrm{1}}}  \mGLsym{,}  \mGLnt{r_{{\mathrm{2}}}}  \mGLsym{,}  \mGLnt{r_{{\mathrm{1}}}}  \mGLsym{,}  \delta_{{\mathrm{2}}}  )   \odot   ( \Delta_{{\mathrm{1}}}  \mGLsym{,}  \mGLmv{y}  \mGLsym{:}  \mGLnt{Y}  \mGLsym{,}  \mGLmv{x}  \mGLsym{:}  \mGLnt{X}  \mGLsym{,}  \Delta_{{\mathrm{2}}} )   \vdash_{\mathsf{GT} }  \mGLnt{t}  \mGLsym{:}  \mGLnt{Z}}{%
{\mGLdruleGTXXExName}{}%
}}
\newcommand{\mGLdruleMTXXIdName}[0]{\mGLdrulename{MT\_Id}}
\newcommand{\mGLdruleMTXXUnitIName}[0]{\mGLdrulename{MT\_UnitI}}
\newcommand{\mGLdruleMTXXUnitEName}[0]{\mGLdrulename{MT\_UnitE}}
\newcommand{\mGLdruleMTXXGUnitEName}[0]{\mGLdrulename{MT\_GUnitE}}
\newcommand{\mGLdruleMTXXGTenEName}[0]{\mGLdrulename{MT\_GTenE}}
\newcommand{\mGLdruleMTXXTenIName}[0]{\mGLdrulename{MT\_TenI}}
\newcommand{\mGLdruleMTXXTenEName}[0]{\mGLdrulename{MT\_TenE}}
\newcommand{\mGLdruleMTXXImpIName}[0]{\mGLdrulename{MT\_ImpI}}
\newcommand{\mGLdruleMTXXImpEName}[0]{\mGLdrulename{MT\_ImpE}}
\newcommand{\mGLdruleMTXXGrdEName}[0]{\mGLdrulename{MT\_GrdE}}
\newcommand{\mGLdruleMTXXGrdE}[1]{\mGLdrule[#1]{%
\mGLpremise{\delta_{{\mathrm{2}}}  \odot  \Delta_{{\mathrm{2}}}  \mGLsym{;}  \Gamma_{{\mathrm{2}}}  \vdash_{\mathsf{MT} }  \mGLnt{l_{{\mathrm{1}}}}  \mGLsym{:}   \mathsf{Grd} _{ \mGLnt{r} }\, \mGLnt{X} }%
\mGLpremise{ (  \delta_{{\mathrm{1}}}  \mGLsym{,}  \mGLnt{r}  \mGLsym{,}  \delta_{{\mathrm{3}}}  )   \odot   ( \Delta_{{\mathrm{1}}}  \mGLsym{,}  \mGLmv{x}  \mGLsym{:}  \mGLnt{X}  \mGLsym{,}  \Delta_{{\mathrm{3}}} )   \mGLsym{;}  \Gamma_{{\mathrm{1}}}  \vdash_{\mathsf{MT} }  \mGLnt{l_{{\mathrm{2}}}}  \mGLsym{:}  \mGLnt{B}}%
}{
 (  \delta_{{\mathrm{1}}}  \mGLsym{,}  \delta_{{\mathrm{2}}}  \mGLsym{,}  \delta_{{\mathrm{3}}}  )   \odot   ( \Delta_{{\mathrm{1}}}  \mGLsym{,}  \Delta_{{\mathrm{2}}}  \mGLsym{,}  \Delta_{{\mathrm{3}}} )   \mGLsym{;}   ( \Gamma_{{\mathrm{1}}}  \mGLsym{,}  \Gamma_{{\mathrm{2}}} )   \vdash_{\mathsf{MT} }  \mathsf{let} \, \mathsf{Grd} \, \mGLnt{r} \, \mGLmv{x}  \mGLsym{=}  \mGLnt{l_{{\mathrm{1}}}} \, \mathsf{in} \, \mGLnt{l_{{\mathrm{2}}}}  \mGLsym{:}  \mGLnt{B}}{%
{\mGLdruleMTXXGrdEName}{}%
}}
\newcommand{\mGLdruleMTXXGSubName}[0]{\mGLdrulename{MT\_GSub}}
\newcommand{\mGLdruleMTXXGrdIName}[0]{\mGLdrulename{MT\_GrdI}}
\newcommand{\mGLdruleMTXXGrdI}[1]{\mGLdrule[#1]{%
\mGLpremise{\delta  \odot  \Delta  \vdash_{\mathsf{GT} }  \mGLnt{t}  \mGLsym{:}  \mGLnt{X}}%
}{
\mGLnt{r}  *  \delta  \odot  \Delta  \mGLsym{;}  \emptyset  \vdash_{\mathsf{MT} }  \mathsf{Grd} \, \mGLnt{r} \, \mGLnt{t}  \mGLsym{:}   \mathsf{Grd} _{ \mGLnt{r} }\, \mGLnt{X} }{%
{\mGLdruleMTXXGrdIName}{}%
}}
\newcommand{\mGLdruleMTXXLinEName}[0]{\mGLdrulename{MT\_LinE}}
\newcommand{\mGLdruleMTXXLinE}[1]{\mGLdrule[#1]{%
\mGLpremise{\delta  \odot  \Delta  \vdash_{\mathsf{GT} }  \mGLnt{t}  \mGLsym{:}  \mathsf{Lin} \, \mGLnt{A}}%
}{
\delta  \odot  \Delta  \mGLsym{;}  \emptyset  \vdash_{\mathsf{MT} }   \mathsf{Unlin} \, \mGLnt{t}   \mGLsym{:}  \mGLnt{A}}{%
{\mGLdruleMTXXLinEName}{}%
}}
\newcommand{\mGLdruleMTXXWeakName}[0]{\mGLdrulename{MT\_Weak}}
\newcommand{\mGLdruleMTXXContName}[0]{\mGLdrulename{MT\_Cont}}
\newcommand{\mGLdruleMTXXGExName}[0]{\mGLdrulename{MT\_GEx}}
\newcommand{\mGLdruleMTXXExName}[0]{\mGLdrulename{MT\_Ex}}
\let\mto\to
\renewcommand{\to}{\rightarrow}
\newcommand{\id}{\mathsf{id}}
\newcommand{\cat}[1]{\mathcal{#1}}
\newcommand{\op}[1]{{#1}^{\mathsf{op}}}
\newcommand{\interp}[1]{\llbracket #1 \rrbracket}
\newcommand{\interpMS}[1]{{\llbracket #1 \rrbracket}^\MS}
\newcommand{\interpGS}[1]{{\llbracket #1 \rrbracket}^\GS}
\newcommand{\func}[1]{\mathsf{#1}}
\newcommand{\Lin}{\mathsf{Lin}}
\newcommand{\Mny}{\mathsf{Mny}}
\newcommand{\ruleinterp}[1]{\Biggl\llbracket #1 \Biggr\rrbracket}
\newcommand\dotover{\leavevmode\cleaders\hb@xt@ .22em{\hss $\cdot$\hss}\hfill\kern\z@}
\newcommand{\morph}[1]{\mto^{\text{\normalsize{$#1$}}}}
\newcommand{\catAux}[2]{[#1,#2]^\odot}
\newcommand{\catUnder}[2]{\catAux{\cat{#1}}{\cat{#2}}}
\newcommand{\mGL}{\mathsf{mGL}}
\newcommand{\GS}{\mathsf{GS}}
\newcommand{\MS}{\mathsf{MS}}
\newcommand{\GST}{\mathsf{GT}}
\newcommand{\MST}{\mathsf{MT}}
\newcommand{\mGLL}{Mixed Graded/Linear}
\title{A Mixed Linear and Graded Logic:\\Proofs, Terms, and Models {\small{(with appendices)}}}
\title{A Mixed Linear and Graded Logic:\\Proofs, Terms, and Models}
\titlerunning{A Mixed Linear and Graded Logic: Proofs, Terms, and Models}
\author{Victoria Vollmer}
  {School of Computing, University of Kent, UK}
  {v.vollmer@kent.ac.uk}{https://orcid.org/0009-0005-2082-559X}{}
\author{Danielle Marshall}
{School of Computing, University of Kent, UK. \url{https://starsandspira.ls/}}
{dm635@kent.ac.uk}{https://orcid.org/0000-0002-4284-3757}{}
\author{Harley Eades III}
  {Computer Science, Augusta University,
    USA. \url{http://metatheorem.org/}}
  {harley.eades@gmail.com}{https://orcid.org/0000-0001-8474-5971}{}
\author{Dominic Orchard}
  {School of Computing, University of Kent, UK
\and Department of Computer Science and Technology, University of
Cambridge, UK}
  {d.a.orchard@kent.ac.uk
\and dominic.orchard@cl.cam.ac.uk}{https://orcid.org/0000-0002-7058-7842}{}
\authorrunning{V. Vollmer, D. Marshall, H. Eades III and D. Orchard}
\keywords{linear logic, graded modal logic, adjoint decomposition}
\begin{document}

\maketitle

\begin{abstract}
  Graded modal logics generalise standard modal logics
via families of modalities indexed by an algebraic structure
whose operations mediate between the different modalities.
The graded ``of-course'' modality $!_r$ captures how many times a proposition is used
and has an analogous interpretation to the of-course modality from linear
logic; the of-course modality from linear logic can be modelled
by a linear exponential comonad and graded of-course can be modelled
by a graded linear exponential comonad. Benton showed in his seminal
paper on Linear/Non-Linear logic that the of-course modality
can be split into two modalities connecting intuitionistic logic with
linear logic, forming a symmetric monoidal adjunction. Later, Fujii et al. demonstrated that every graded comonad can be decomposed into an adjunction
and a `strict action'. We give a similar result to Benton, leveraging Fujii et al.'s decomposition, showing that graded modalities can be
split into two modalities connecting a graded logic with a graded linear logic.
We propose a sequent calculus, its proof theory and categorical model,
and a natural deduction system
which we show is isomorphic to the sequent calculus system. Interestingly, our
system can also be understood as Linear/Non-Linear logic composed with
an action that adds the grading, further illuminating the shared principles
between linear logic and a class of graded modal logics.

\end{abstract}



\iflongversion
\vspace{1em}
\noindent\fbox{\begin{minipage}{\dimexpr\textwidth-2\fboxsep-2\fboxrule\relax}
  This is an extended version of the paper with appendices.
  The official version of the published paper can be found in
  the 33rd EACSL Annual Conference on Computer Science Logic (CSL 2025):
  \newline ~\url{https://doi.org/10.4230/LIPIcs.CSL.2025.32}
  \end{minipage}}
\fi

\section{Introduction}
\label{sec:introduction}
Intuitionistic logic has a central role in the foundations of programming
language theory, providing a logical basis for type theories and type systems, and other
program reasoning principles. A significant amount of the
expressivity of proof systems for intuitionistic logic (both natural deduction and
sequent calculus forms) lies within the structure of the hypotheses---the
\emph{context}. Probing the foundations of this part of the logic has,
perhaps surprisingly, yielded the very fertile field of
\emph{substructural logics}~\cite{restall2002introduction} including
influential logics such as linear logic~\cite{Girard:1987} and its
variants, and the Lambek calculus~\cite{lambek1958mathematics}.

By restricting the manipulation of hypotheses in the context we typically arrive
at logics which align more closely with physical reality, where propositions are
instead `resources' that cannot necessarily be copied, discarded, or reordered.
Such restricted logics have been used to construct type systems for
safely manipulating values that should be treated in a resourceful way, such as file handlers, pointers
to mutable memory, or channels~\cite{walker2005substructural,wadler1990linear}.

However, such pervasive restrictions often hamper expressivity
and thus some substructural logics then seek to carefully control the reintroduction of structural rules.
For example, linear logic provides the $!$ modality (`of course') for
reintroducing weakening and contraction of propositions, which linear logic
 otherwise prohibits. However, this modality is coarse-grained: for those
propositions under the modality, it re-enables all the structural rules that
have been removed in linear logic.  \emph{Subexponentials} instead aim to be
more fine-grained, offering families of modalities capturing specific structural
rules~\cite{danos93kgc, kanovich2020soft}. The related notion of
\emph{grading}~\cite{Gaboardi:2016,Orchard:2019} gives an alternate view,
providing an indexed family of modalities whose indices are subject to an
algebra which accounts for any structural rules applied: structural rules are
`counted' by the algebra (whose operations mirror the shape of structural
rules). Bounded Linear Logic~\cite{girard1992bounded} is a special case where
the family of modalities $!_{n } A$ uses indices $n$ which are natural numbers
(or polynomial terms over naturals) counting the upper bound on usage of the
proposition $A$. Various systems generalise this approach to arbitrary semirings
to capture data-flow
properties~\cite{DBLP:journals/pacmpl/AbelB20,atkey2018syntax,DBLP:conf/csl/BreuvartP15,Gaboardi:2016,Ghica:2014,Katsumata:2018,DBLP:conf/birthday/McBride16,Orchard:2019,DBLP:conf/icalp/PetricekOM13,DBLP:conf/icfp/PetricekOM14, DBLP:conf/fscd/PimentelON21}.
Such graded systems annotate hypotheses/variables in the context with elements
of the semiring (`grades') denoting their usage, e.g., $x :_0 A \vdash t : B$ types a
term $t$ which does not use $x$ and $y :_{1 + 1} A \vdash t' : B$ types a term $t'$
in which $y$ is used in two different subterms once each, accounted for by the semiring
addition. A graded modality \emph{internalises} the semiring grade, causing a multiplication
to the grades of any captured dependencies when the graded modality is
introduced, e.g., $y :_{0*(1+1)} A \vdash \Box t' : \Box_0 B$.

We seek here to further
understand the underlying structure of graded modal logics by following
an `adjoint resolution' approach \`{a} la Benton's seminal
``A Mixed Linear and Non-Linear Logic: Proofs, Terms and Models'' at CSL 1994~\cite{Benton:1994}.
Benton showed that the exponential modality of linear logic (modelled by a
comonad) can be decomposed into an adjunction, defining a pair of `adjoint' logics (a linear logic and a
non-linear intuitionistic, or ``Cartesian'', logic) which embed into each other~\cite{Benton:1994}. This provides a beautiful reduction of
the core features of linear logic and its non-linearity modality.
\emph{Adjoint logic} applies the same idea but to
subexponentials~\cite{Pruiksma:2018,pruiksma2020message}.
We follow the same scheme, via the adjoint decomposition of graded
modalities which generalise linear logic's $!$ and which are traditionally modelled by graded exponential comonads~\cite{DBLP:conf/csl/BreuvartP15,Brunel:2014,Gaboardi:2016,Ghica:2014,Katsumata:2018,DBLP:conf/icfp/PetricekOM14}.
Whilst Benton's work has a pair of adjoint modalities mediating between the two sublogics, we have a
pair of a modality $\mathsf{Lin}$ and a \emph{graded modality} $\mathsf{Grd} _{ \mGLnt{r} }$. We give a categorical model, showing that these are captured by an LNL-like adjunction paired with a `strict action' for incorporating the grading, following the Fujii-Katsumata-Melliès adjoint decomposition of graded (co)monads~\cite{Fujii:2016b,Katsumata:2018}.
The result is a pair of logics which serve to explain and clarify
the relationship between linearity and grading. We call our system
\mGLL{} ($\mGL{}$) Logic.

This pair of logics also shines light on a relationship between two
styles of graded system in the literature: those which take
linear types as their basis augmented with a graded modality~\cite{Brunel:2014,Gaboardi:2016,Orchard:2019} versus those with no base
notion of linearity where grading is pervasive, tracking all substructurality~\cite{DBLP:journals/pacmpl/AbelB20,atkey2018syntax,DBLP:journals/pacmpl/BernardyBNJS18,DBLP:journals/pacmpl/ChoudhuryEEW21,DBLP:conf/birthday/McBride16,DBLP:conf/esop/MoonEO21,DBLP:conf/icfp/PetricekOM14}.
Our linear fragment is analogous to the former whilst our graded fragment is analogous
to the latter. The mutual embedding shows that these two styles of
graded logics have a similar relationship to the adjoint relationship of intuitionistic
logic and linear logic.

Aside from the internal motivation of better understanding the
relationship between grading and linearity, an external motivation for
this work is that it can provide a basis for flexible, safe programming with
resources. By separating out the linear fragment from
an intuitionistic graded fragment, one could avoid the strictures of
linearity for working with standard data types which need
not be linear, working only in the linear fragment for handling
resources like file handles. The mutual embedding would allow the programmer to
move smoothly between these two subcalculi, as seen
also in other adjunction-based calculi, e.g., for concurrent programming~\cite{DBLP:conf/fossacs/PfenningG15}.
The focus here however is on the core
theory rather than developing these applications yet.

Since our focus is on the relationship between grading and linearity,
we consider the semiring-graded modalities that generalise linear logic's $!$.
Other flavours of graded modality (e.g., graded
monads for capturing side effecting behaviour~\cite{DBLP:conf/popl/Katsumata14,DBLP:journals/corr/OrchardPM14}) are not
considered here.

\paragraph*{Roadmap}

Section~\ref{sec:mixed_graded/linear_logic} defines a pair of sequent calculi,
the mixed fragment $\MS{}$, which has both linear and graded assumptions, and the graded
fragment $\GS{}$, which has only graded assumptions and no function arrow. As described
above, these calculi have a mutual embedding via modalities between the two.
Section~\ref{sec:mGL-seq_denotational_model} considers the categorical model
of $\mGL{}$ leveraging recent work on the adjoint resolution of graded
comonads~\cite{Fujii:2016b,Katsumata:2018}. Section~\ref{subsec:mgl_nat_deduc}
provides the natural deduction formulation of the calculus, which is proved
equivalent to the sequent calculus version. Section~\ref{sec:discussion}
discusses how this work gives a view on the landscape of graded systems in the literature and considers other related work and future applications.

\iflongversion
\else
A version of this paper with the appendices providing full proof details
can be found on the arXiv~\cite{DBLP:journals/corr/abs-2401-17199}.
\fi



\section{Mixed Graded/Linear Logic: Proofs and Terms}
\label{sec:mixed_graded/linear_logic}
\setlength\arraycolsep{0.3em}
\begin{figure}[b]
  \begin{gather*}
    \hspace{-1.9em}
    \begin{array}{rl}
    (\GS{}/\GST{}) \quad  t ::= &  \mGLmv{x} \mid \mathsf{j} \mid \mathsf{let} \, \mathsf{j} \, \mGLsym{=}  \mGLnt{t_{{\mathrm{1}}}} \, \mathsf{in} \, \mGLnt{t_{{\mathrm{2}}}}  \\
       \textit{graded} \qquad\quad\;  \mid &  \mGLsym{(}  \mGLnt{t_{{\mathrm{1}}}}  \mGLsym{,}  \mGLnt{t_{{\mathrm{2}}}}  \mGLsym{)} \mid \mathsf{let} \,( \mGLmv{x} , \mGLmv{y} ) =  \mGLnt{t_{{\mathrm{1}}}} \, \mathsf{in} \, \mGLnt{t_{{\mathrm{2}}}} \\
      \mid & \mathsf{Lin} \, \mGLnt{l} \\[4em]
    \end{array}
    \;
    \begin{array}{rl}
    (\MS{}/\MST{}) \quad  l  ::= & \mGLmv{x} \mid \mathsf{i} \mid \mathsf{let} \, \mathsf{i} \, \mGLsym{=}  \mGLnt{l_{{\mathrm{1}}}} \, \mathsf{in} \, \mGLnt{l_{{\mathrm{2}}}} \\
    \; \textit{linear} \qquad\quad\;\;  \mid & \mGLsym{(}  \mGLnt{l_{{\mathrm{1}}}}  \mGLsym{,}  \mGLnt{l_{{\mathrm{2}}}}  \mGLsym{)} \mid \mathsf{let} \, \mGLsym{(}  \mGLmv{x}  \mGLsym{,}  \mGLmv{y}  \mGLsym{)}  \mGLsym{=}  \mGLnt{l_{{\mathrm{1}}}} \, \mathsf{in} \, \mGLnt{l_{{\mathrm{2}}}} \\
          \mid & \lambda  \mGLmv{x}  \mGLsym{.}  \mGLnt{l} \mid \mGLnt{l_{{\mathrm{1}}}} \, \mGLnt{l_{{\mathrm{2}}}} \\
          \mid & \mathsf{Grd} \, \mGLnt{r} \, \mGLnt{t} \mid \mathsf{let} \, \mathsf{Grd} \, \mGLnt{r} \, \mGLmv{x}  \mGLsym{=}  \mGLnt{l_{{\mathrm{1}}}} \, \mathsf{in} \, \mGLnt{l_{{\mathrm{2}}}} \\
          \mid & \mathsf{Unlin} \, \mGLmv{z} \\
          \mid & \mathsf{let} \, \mathsf{j} \, \mGLsym{=}  \mGLmv{z} \, \mathsf{in} \, \mGLnt{l} \mid \mathsf{let} \, \mGLsym{(}  \mGLmv{x}  \mGLsym{,}  \mGLmv{y}  \mGLsym{)}  \mGLsym{=}  \mGLmv{z} \, \mathsf{in} \, \mGLnt{l}
    \end{array}
  \end{gather*}
  \justifying
  Variables are ranged over by $x$, $y$, $z$ in both fragments.

    Terms are mostly grouped above with introduction forms followed by elimination forms,
    though note that in the last two lines of syntax for $\mGLnt{l}$ there are additional eliminators: for the linear modality ($\mathsf{Unlin}$), for units $\mathsf{j}$, and for tensors coming from the graded context.
  \caption{Collected term syntax}
  \label{fig:collected-syntax}
\end{figure}

We present first a sequent calculus for \mGLL{} logic, which comes in the form of a
term assignment. Figure~\ref{fig:collected-syntax} collects the term syntax for reference; it will also be used in Section~\ref{subsec:mgl_nat_deduc} for the natural deduction formulation.
The syntax is explained with reference to its associated proof rules in the
next section.

Benton's approach has two proof systems~\cite{Benton:1994}: one system
of linear propositions (the L of LNL) with two contexts for linear and
non-linear propositions respectively, and one system of non-linear
propositions (the NL in LNL). We generalise this approach to the
graded setting by replacing the non-linear parts with \emph{graded} notions.
Thus, our system ($\mGL{}$) has two analogous proof systems: one of
linear propositions with two contexts for linear and graded propositions,
with judgments subscripted as $\vdash_{\mathsf{MS} }$ (for `Mixed (linear/graded) Sequent'),
and one system of graded propositions, with judgments
subscripted as $\vdash_{\mathsf{GS} }$ (`Graded Sequent').

The syntax of Benton's propositions is split into two, `\emph{conventional}' (i.e.,
Cartesian / non-linear) and \emph{linear}~\cite{Benton:1994}.
The syntax of our propositions is analogously split into two, \emph{graded} and \emph{linear}:
\[
\begin{array}{lll}
  \text{(Graded)} & \mGLnt{X}, \mGLnt{Y},\mGLnt{Z} ::= \mathsf{J} \mid \mGLnt{X}  \boxtimes  \mGLnt{Y} \mid \mathsf{Lin} \, \mGLnt{A}\\
  \text{(Linear)} & \mGLnt{A},\mGLnt{B},\mGLnt{C} ::= \mathsf{I} \mid \mGLnt{A}  \otimes  \mGLnt{B} \mid \mGLnt{A}  \multimap  \mGLnt{B} \mid \mathsf{Grd} _{ \mGLnt{r} }\, \mGLnt{X}\\
\end{array}
\]
where $\mathsf{J}$ and $\mathsf{I}$ are unit types, and $\boxtimes$ and
$\otimes$ are tensor (product) operators in their respective
domains.  In the case of the linear domain, the product is the
standard multiplicative conjunction.  The $\mathsf{Lin}$ modality
encapsulates a linear proposition as a graded proposition, and the
$\mathsf{Grd} _{ \mGLnt{r} }$ modality encapsulates a graded proposition (at
grade $r$, whose structure is defined below) as a linear proposition. Thus, the two logics
are interconnected by $\mathsf{Grd} _{ \mGLnt{r} }\, \mGLnt{X}$
and $\mathsf{Lin} \, \mGLnt{A}$. Using these two modalities we will later define graded modalities $\Box _{ \mGLnt{r} } \mGLnt{A}$ as $\mathsf{Grd} _{ \mGLnt{r} }\, \mGLsym{(}  \mathsf{Lin} \, \mGLnt{A}  \mGLsym{)}$, similarly to how the of-course modality $\oc  \mGLnt{A}$ can be
defined in LNL logic as the composition of two adjoint modalities.

\begin{definition}
  \label{def:resource-algebra}
Grades (ranged over by $r, s$) are drawn from a semiring parameterizing the system
  $\mGLsym{(}  \mathcal{R}  \mGLsym{,}  1  \mGLsym{,}  *  \mGLsym{,}  \mathsf{0}  \mGLsym{,}  +  \mGLsym{,}  \leq  \mGLsym{)}$ with preorder $\mGLsym{(}  \mathcal{R}  \mGLsym{,}  \leq  \mGLsym{)}$ such that both $*$ and $+$ are monotonic wrt $\leq$.
\end{definition}

\noindent
The semiring governs the structural rules: the additive part of the semiring is
involved in weakening and contraction, and the multiplicative part in usage and
composition. Various concrete examples of interesting semirings are given at the
end of Subsection~\ref{subsec:mgl_sequent_calculus}.

Section~\ref{subsec:mgl_nat_deduc} develops an equivalent
natural deduction formulation of $\mGL{}$. We then show that the natural
deduction and the sequent calculus are interderivable without modifying the term
witnessing a derivation. Thus, any semantic model of one is a model of the
other. We opt to focus on the sequent calculus form for now without loss
of generality.

\subsection{Sequent Calculus}
\label{subsec:mgl_sequent_calculus}
\providecommand{\mGLdruleGSTXXUnitRName}{}
\providecommand{\mGLdruleGSTXXUnitLName}{}
\providecommand{\mGLdruleGSTXXTenRName}{}
\providecommand{\mGLdruleGSTXXTenLName}{}
\providecommand{\mGLdruleGSTXXSubName}{}
\providecommand{\mGLdruleGSTXXidName}{}
\providecommand{\mGLdruleGSTXXLinRName}{}
\providecommand{\mGLdruleGSTXXCutName}{}
\providecommand{\mGLdruleGSTXXWeakName}{}
\providecommand{\mGLdruleGSTXXContName}{}
\providecommand{\mGLdruleGSTXXExName}{}

\providecommand{\mGLdruleMSTXXUnitRName}{}
\providecommand{\mGLdruleMSTXXUnitLName}{}
\providecommand{\mGLdruleMSTXXGUnitLName}{}
\providecommand{\mGLdruleMSTXXTenRName}{}
\providecommand{\mGLdruleMSTXXTenLName}{}
\providecommand{\mGLdruleMSTXXGTenLName}{}
\providecommand{\mGLdruleMSTXXSubName}{}
\providecommand{\mGLdruleMSTXXidName}{}
\providecommand{\mGLdruleMSTXXLinLName}{}
\providecommand{\mGLdruleMSTXXGrdLName}{}
\providecommand{\mGLdruleMSTXXGrdRName}{}
\providecommand{\mGLdruleMSTXXImpLName}{}
\providecommand{\mGLdruleMSTXXImpRName}{}
\providecommand{\mGLdruleMSTXXCutName}{}
\providecommand{\mGLdruleMSTXXGCutName}{}
\providecommand{\mGLdruleMSTXXWeakName}{}
\providecommand{\mGLdruleMSTXXContName}{}
\providecommand{\mGLdruleMSTXXExName}{}
\providecommand{\mGLdruleMSTXXGExName}{}
\providecommand{\mGLdruleMSTXXBoxEName}{}
\providecommand{\mGLdruleMSTXXBoxIName}{}
\providecommand{\mGLdruleMSTXXGImpLName}{}
\providecommand{\mGLdruleMSTXXGImpRName}{}
\providecommand{\mGLdruleGSTXXMCutName}{}
\providecommand{\mGLdruleMSTXXMGCutName}{}

\providecommand{\mGLdruleGTXXUnitIName}{}
\providecommand{\mGLdruleGTXXUnitEName}{}
\providecommand{\mGLdruleGTXXTenIName}{}
\providecommand{\mGLdruleGTXXSubName}{}
\providecommand{\mGLdruleGTXXIdName}{}
\providecommand{\mGLdruleGTXXLinIName}{}
\providecommand{\mGLdruleGTXXTenEName}{}
\providecommand{\mGLdruleGTXXWeakName}{}
\providecommand{\mGLdruleGTXXContName}{}
\providecommand{\mGLdruleGTXXExName}{}

\providecommand{\mGLdruleMTXXIdName}{}
\providecommand{\mGLdruleMTXXUnitIName}{}
\providecommand{\mGLdruleMTXXUnitEName}{}
\providecommand{\mGLdruleMTXXGUnitEName}{}
\providecommand{\mGLdruleMTXXGTenEName}{}
\providecommand{\mGLdruleMTXXTenIName}{}
\providecommand{\mGLdruleMTXXTenEName}{}
\providecommand{\mGLdruleMTXXImpIName}{}
\providecommand{\mGLdruleMTXXImpEName}{}
\providecommand{\mGLdruleMTXXGrdEName}{}
\providecommand{\mGLdruleMTXXGSubName}{}
\providecommand{\mGLdruleMTXXGrdIName}{}
\providecommand{\mGLdruleMTXXLinEName}{}
\providecommand{\mGLdruleMTXXWeakName}{}
\providecommand{\mGLdruleMTXXContName}{}
\providecommand{\mGLdruleMTXXGExName}{}
\providecommand{\mGLdruleMTXXExName}{}

\renewcommand{\mGLdruleGSTXXUnitRName}{$\text{unit}^{\mathsf{J}}_R$}
\renewcommand{\mGLdruleGSTXXUnitLName}{$\text{unit}^{\mathsf{J}}_L$}
\renewcommand{\mGLdruleGSTXXTenRName}{$\boxtimes_R$}
\renewcommand{\mGLdruleGSTXXTenLName}{$\boxtimes_L$}
\renewcommand{\mGLdruleGSTXXSubName}{\textsc{sub$_{\mathsf{GS}}$}}
\renewcommand{\mGLdruleGSTXXidName}{\textsc{id$_{\mathsf{GS}}$}}
\renewcommand{\mGLdruleGSTXXLinRName}{$\mathsf{Lin}_R$}
\renewcommand{\mGLdruleGSTXXCutName}{\textsc{cut$_{\mathsf{GS}}$}}
\renewcommand{\mGLdruleGSTXXWeakName}{\textsc{weak$_{\mathsf{GS}}$}}
\renewcommand{\mGLdruleGSTXXContName}{\textsc{cont$_{\mathsf{GS}}$}}
\renewcommand{\mGLdruleGSTXXExName}{\textsc{ex$_{\mathsf{GS}}$}}

\renewcommand{\mGLdruleMSTXXUnitRName}{$\text{unit}^{\mathsf{I}}_R$}
\renewcommand{\mGLdruleMSTXXUnitLName}{$\text{unit}^{\mathsf{I}}_L$}
\renewcommand{\mGLdruleMSTXXGUnitLName}{$\text{unit}^{\mathsf{J}-\mathsf{MS}}_L$}
\renewcommand{\mGLdruleMSTXXTenRName}{$\otimes_R$}
\renewcommand{\mGLdruleMSTXXTenLName}{$\otimes_L$}
\renewcommand{\mGLdruleMSTXXGTenLName}{$\boxtimes_{L-\mathsf{MS}}$}
\renewcommand{\mGLdruleMSTXXSubName}{\textsc{sub$_{\mathsf{MS}}$}}
\renewcommand{\mGLdruleMSTXXidName}{\textsc{id$_{\mathsf{MS}}$}}
\renewcommand{\mGLdruleMSTXXLinLName}{$\mathsf{Lin}_L$}
\renewcommand{\mGLdruleMSTXXGrdLName}{$\mathsf{Grd}_L$}
\renewcommand{\mGLdruleMSTXXGrdRName}{$\mathsf{Grd}_R$}
\renewcommand{\mGLdruleMSTXXImpLName}{$\multimap_{L}$}
\renewcommand{\mGLdruleMSTXXImpRName}{$\multimap_{R}$}
\renewcommand{\mGLdruleMSTXXCutName}{\textsc{cut$_{\mathsf{MS}}$}}
\renewcommand{\mGLdruleMSTXXGCutName}{\textsc{gcut$_{\mathsf{MS}}$}}
\renewcommand{\mGLdruleMSTXXWeakName}{\textsc{weak$_{\mathsf{MS}}$}}
\renewcommand{\mGLdruleMSTXXContName}{\textsc{cont$_{\mathsf{MS}}$}}
\renewcommand{\mGLdruleMSTXXExName}{\textsc{ex$_{\mathsf{MS}}$}}
\renewcommand{\mGLdruleMSTXXGExName}{\textsc{gex$_{\mathsf{MS}}$}}
\renewcommand{\mGLdruleMSTXXBoxEName}{$\Box_E$}
\renewcommand{\mGLdruleMSTXXBoxIName}{$\Box_i$}
\renewcommand{\mGLdruleMSTXXGImpLName}{$\multimap_{GL}$}
\renewcommand{\mGLdruleMSTXXGImpRName}{$\multimap_{GR}$}
\renewcommand{\mGLdruleGSTXXMCutName}{\textsc{mcut}}
\renewcommand{\mGLdruleMSTXXMGCutName}{\textsc{gmcut}}

\renewcommand{\mGLdruleGTXXUnitIName}{$\text{unit}^{\mathsf{J}}_{I}$}
\renewcommand{\mGLdruleGTXXUnitEName}{$\text{unit}^{\mathsf{J}}_{E}$}
\renewcommand{\mGLdruleGTXXTenIName}{$\boxtimes_{I}$}
\renewcommand{\mGLdruleGTXXSubName}{\textsc{sub}}
\renewcommand{\mGLdruleGTXXIdName}{\textsc{id}}
\renewcommand{\mGLdruleGTXXLinIName}{$\mathsf{Lin}_{I}$}
\renewcommand{\mGLdruleGTXXTenEName}{$\boxtimes_{E}$}
\renewcommand{\mGLdruleGTXXWeakName}{\textsc{weak}}
\renewcommand{\mGLdruleGTXXContName}{\textsc{cont}}
\renewcommand{\mGLdruleGTXXExName}{\textsc{ex}}

\renewcommand{\mGLdruleMTXXIdName}{\textsc{id}}
\renewcommand{\mGLdruleMTXXUnitIName}{$\text{unit}^{\mathsf{I}}_{I}$}
\renewcommand{\mGLdruleMTXXUnitEName}{$\text{unit}^{\mathsf{I}}_{E}$}
\renewcommand{\mGLdruleMTXXGUnitEName}{$\text{unit}_{GE}$}
\renewcommand{\mGLdruleMTXXGTenEName}{$\boxtimes_{GE}$}
\renewcommand{\mGLdruleMTXXTenIName}{$\otimes_{I}$}
\renewcommand{\mGLdruleMTXXTenEName}{$\otimes_{E}$}
\renewcommand{\mGLdruleMTXXImpIName}{$\multimap_{I}$}
\renewcommand{\mGLdruleMTXXImpEName}{$\multimap_{E}$}
\renewcommand{\mGLdruleMTXXGrdEName}{$\mathsf{Grd}_{E}$}
\renewcommand{\mGLdruleMTXXGSubName}{\textsc{sub}}
\renewcommand{\mGLdruleMTXXGrdIName}{$\mathsf{Grd}_{I}$}
\renewcommand{\mGLdruleMTXXLinEName}{$\mathsf{Lin}_{E}$}
\renewcommand{\mGLdruleMTXXWeakName}{\textsc{weak}}
\renewcommand{\mGLdruleMTXXContName}{\textsc{cont}}
\renewcommand{\mGLdruleMTXXGExName}{\textsc{gex}}
\renewcommand{\mGLdruleMTXXExName}{\textsc{ex}}

\setlength{\arraycolsep}{0.1em}

We first define contexts used in the judgments:

\begin{definition}[Graded contexts]
  \label{def:graded-contexts}
  Suppose $\mGLsym{(}  \mathcal{R}  \mGLsym{,}  1  \mGLsym{,}  *  \mGLsym{,}  \mathsf{0}  \mGLsym{,}  +  \mGLsym{,}  \leq  \mGLsym{)}$ is a preordered semiring (Def.~\ref{def:resource-algebra}).  Then \emph{grade
  vectors} $\delta$ are sequences of $\mathcal{R}$, contexts $\Delta$ are sequences of
  graded formulas $\mGLnt{X}$, and contexts $\Gamma$ are sequences of linear
  formulas:
  \begin{align*}
    \delta := \emptyset \mid \delta  \mGLsym{,}  \mGLnt{r}
  \qquad\;\;
    \Delta := \emptyset \mid \Delta  \mGLsym{,}  \mGLmv{x}  \mGLsym{:}  \mGLnt{X}
  \qquad\;\;
    \Gamma := \emptyset \mid \Gamma  \mGLsym{,}  \mGLmv{x}  \mGLsym{:}  \mGLnt{A}
  \end{align*}
  The comma operator is overloaded for sequence concatenation, i.e., we can write
  $\delta_{{\mathrm{1}}}  \mGLsym{,}  \delta_{{\mathrm{2}}}$ and $\Delta_{{\mathrm{1}}}  \mGLsym{,}  \Delta_{{\mathrm{2}}}$, which further requires that
  $\Delta_{{\mathrm{1}}}$ and $\Delta_{{\mathrm{2}}}$ are disjoint contexts.

  A \emph{graded context} $\delta  \odot  \Delta$ is a pairing of a grade vector
  and a context defined as follows:
  \begin{align*}
  \begin{array}{cll}
    \emptyset  \odot  \emptyset = \emptyset
    \qquad\;\;
    (  \delta  \mGLsym{,}  \mGLnt{r}  )   \odot   ( \Delta  \mGLsym{,}  \mGLmv{x}  \mGLsym{:}  \mGLnt{X} ) = (\delta  \odot  \Delta),x:(\mGLnt{r}  \odot  \mGLnt{X})
  \end{array}
  \end{align*}
  where $\mGLnt{r}  \odot  \mGLnt{X}$ pairs a formula with a grade $r$ capturing (by the rules
  of the system) how the formula $X$ (named $x$) is used to form a judgment.

  We lift the operations of semirings to grade vectors,
  forming a semimodule, with the
  pointwise addition and scalar multiplication defined in a standard way:
  \begin{align*}
    \emptyset  +  \emptyset &= \emptyset & \mGLnt{r}  *  \emptyset &= \emptyset\\
    (  \delta_{{\mathrm{1}}}  \mGLsym{,}  \mGLnt{r_{{\mathrm{1}}}}  )   +   (  \delta_{{\mathrm{2}}}  \mGLsym{,}  \mGLnt{r_{{\mathrm{2}}}}  ) &= (  \delta_{{\mathrm{1}}}  +  \delta_{{\mathrm{2}}}  )   \mGLsym{,}   (  \mGLnt{r_{{\mathrm{1}}}}  +  \mGLnt{r_{{\mathrm{2}}}}  ) & \mGLnt{r}  *   (  \delta  \mGLsym{,}  \mGLnt{s}  ) &= (  \mGLnt{r}  *  \delta  )   \mGLsym{,}   (  \mGLnt{r}  *  \mGLnt{s}  )
  \end{align*}
  Addition of grade vectors requires the vectors to be of the same length.
\end{definition}
%

The judgment form for our fully graded logic $\delta  \odot  \Delta  \vdash_{\mathsf{GS} }  \mGLnt{t}  \mGLsym{:}  \mGLnt{X}$
captures a concluding proposition $X$ under the graded context
of assumptions $\delta  \odot  \Delta$. Mixed graded/linear logic judgments
$\delta  \odot  \Delta  \mGLsym{;}  \Gamma  \vdash_{\mathsf{MS} }  \mGLnt{l}  \mGLsym{:}  \mGLnt{A}$ are similar but also have a context $\Gamma$ of
linear assumptions which, being linear, do not have a corresponding grade vector.

The two judgments $\vdash_{\mathsf{GS} }$ and $\vdash_{\mathsf{MS} }$ (also called \emph{sub-logics} or \emph{fragments}) are
defined by mutual induction. We present conceptually related rules
from both systems side-by-side where possible, or one-after-the-other, in the
order $\mathsf{GS}$ then $\mathsf{MS}$.

The identity (axiom) rules are:
\begin{gather*}
  \begin{array}{lll}
  \mGLdruleGSTXXid{} \quad & \qquad & \quad \mGLdruleMSTXXid{}
\end{array}
\end{gather*}
The multiplicative identity $1$ is the `default' grade
for formulas in the graded logic $\GS{}$ (left), in the sense that we
can think of the right-hand side of the judgment as also
implicitly having grade $1$. The graded identity rule says that
a graded formula that is used must have the default grade.  For
example, in the natural number semiring $\mGLsym{(}  \mathbb{N}  \mGLsym{,}  1  \mGLsym{,}  *  \mGLsym{,}  0  \mGLsym{,}  +  \mGLsym{,}  \mGLsym{=}  \mGLsym{)}$ the
multiplicative identity $1 \in \mathbb{N}$ captures linear usage.
The mixed identity rule types linear assumption use, requiring just a singleton
linear context (forcing a lack of weakening). It also
requires that there are no graded
formulas in context---the graded context is empty $\emptyset$.

The `cut' rules are:
\begin{gather*}
\begin{array}{ll}
  \mGLdruleGSTXXCut{} &  \\[0.75em]
  \mGLdruleMSTXXCut{} & \mGLdruleMSTXXGCut{}
\end{array}
\end{gather*}
The \mGLdruleGSTXXCutName{} rule provides a cut through a graded proposition $X$
of grade $r$ in the receiving context (second premise). Thus, the resulting
term uses semiring multiplication (lifted to contexts, Def.~\ref{def:graded-contexts}) to capture sequential usage, scaling the grade vector
$\delta_{{\mathrm{2}}}$ of the cut term $\mGLnt{t_{{\mathrm{1}}}}$ by $\mGLnt{r}$.
The \mGLdruleMSTXXCutName{} rules provides a cut through a linear proposition $A$
and has no effect on the graded contexts. However, $\mathsf{MS}$ has
a further cut rule \mGLdruleMSTXXGCutName{} for graded propositions
in its graded context, incurring a scaling similarly to \mGLdruleGSTXXCutName{}.
This pattern occurs throughout: operations applied to the graded context
in $\mathsf{GS}$ have a sister rule in $\mathsf{MS}$ applying the same operation
in the $\mathsf{MS}$ graded context.

Both sub-logics have free use of exchange:
\begin{gather*}
\begin{array}{l}
  \mGLdruleGSTXXEx{} \\[0.75em]
\mGLdruleMSTXXEx{} \quad \mGLdruleMSTXXGEx{}
\end{array}
\end{gather*}
Exchanging graded propositions simultaneously exchanges their
 grades in the grade vector.

We can use weakening and contraction in the graded system and the mixed system
within the graded contexts, with the semiring's $0$ representing weakened
hypotheses and the grades of contracted hypotheses combined
via semiring addition $+$:
\begin{gather*}
  \hspace{-0.5em}
\begin{array}{ll}
  \mGLdruleGSTXXWeak{} & \mGLdruleGSTXXCont{} \\[0.75em]
  \mGLdruleMSTXXWeak{} & \mGLdruleMSTXXCont{}
\end{array}
\end{gather*}
The left and right rules for units for the graded
and mixed logics are akin to linear logic:
\begin{gather*}
  \begin{array}{ll}
  \mGLdruleGSTXXUnitL{} &
   \mGLdruleGSTXXUnitR{} \\[0.75em]
  \mGLdruleMSTXXUnitL{} & \mGLdruleMSTXXUnitR{} \\[0.75em]
  \mGLdruleMSTXXGUnitL{}
  \end{array}
\end{gather*}
Thus, in $\mathsf{GS}$, we can eliminate a graded unit $\mathsf{j}$
at an arbitrary grade $r$, whereas the linear unit $\mathsf{i}$ in
$\mathsf{MS}$ gets eliminated from the linear context.  The
additional left rule (\mGLdruleMSTXXGUnitLName{}) for
$\mathsf{MS}$ again similarly eliminates graded units $\mathsf{J}$ in the graded context.

Tensor products are then eliminated in each fragment as follows:
\begin{gather*}
\begin{array}{l}
    \mGLdruleGSTXXTenL{} \;\,
    \mGLdruleGSTXXTenR{} \\[0.75em]
  \mGLdruleMSTXXTenL{} \;\,
  \mGLdruleMSTXXTenR{} \\[0.75em]
  \mGLdruleMSTXXGTenL{}
\end{array}
\end{gather*}
The left rule for $\boxtimes$ eliminates from the graded context
at any grade $r$, where the components of the tensor product both inherit
this grade in the premise. Reading instead top-down,
 the graded tensor product requires that both components are graded with
the same grade; this is similar to linear products, where both components are
linear.

Note that Benton has two left rules for (non-linear) tensor products, in the `projection' style. We instead must use the pattern matching style for the
soundness of grading so that each component is bound to a variable with the same grade.

Only the mixed linear-graded system has implication, and only on linear
propositions, thus we have $\multimap$ in $\MS$ with left and right rules:
\begin{gather*}
  \begin{array}{c}
     \mGLdruleMSTXXImpL{}  \qquad \mGLdruleMSTXXImpR{}
  \end{array}
  \end{gather*}
In Lemma~\ref{lemma:graded_implication_in_mgl}, we recover a graded
implication through the modal operators of the system in the same way
that Melli\`es did for (ungraded) LNL logic~\cite{Mellies:2009}.

We now consider the modal operators $\mathsf{Lin}$ and $\mathsf{Grd} _{ \mGLnt{r} }$
which connect the two sub-logics.

The right rule for the $\mathsf{Lin}$ modality transports a linear formula from the
linear system $\MS$ into the graded system $\GS$ where it can be
reasoned with non-linearly as accounted for by grading. The
corresponding left rule is akin to
\emph{dereliction} from linear logic, enabling a linear assumption $x : A$ to
treated as a (renamed) graded assumption $z : \mathsf{Lin} \, \mGLnt{A}$ at grade $1$:
\begin{gather*}
\begin{array}{c}
  \mGLdruleMSTXXLinL{} \qquad \mGLdruleGSTXXLinR{}
\end{array}
\end{gather*}
The other modal operator $\mathsf{Grd}$, or rather the family
of modal operators $\mathsf{Grd}_r$, transports a graded
formula with its grade into the linear system where it can be reasoned
with linearly:
\begin{gather*}
  \begin{array}{c}
  \mGLdruleMSTXXGrdL{} \qquad \mGLdruleMSTXXGrdR{}
\end{array}
\end{gather*}
The right rule is akin to \emph{promotion} for $\mathsf{Grd} _{ \mGLnt{r} }$ where
we subsequently scale the graded context by the grade $r$. The left
rule `unboxes' a graded modality $\mathsf{Grd} _{ \mGLnt{r} }\, \mGLnt{X}$ providing access to the
$\mGLnt{X}$ formula `inside', graded at $\mGLnt{r}$.

Perhaps the most remarkable property of these modal operators is that
they decompose semiring-graded necessity modalities into $\Box _{ \mGLnt{r} } \mGLnt{A} = \mathsf{Grd} _{ \mGLnt{r} }\, \mGLsym{(}  \mathsf{Lin} \, \mGLnt{A}  \mGLsym{)}$~\cite{Katsumata:2018} within the mixed system.  In fact, their
introduction and elimination rules are derivable:
\begin{lemma}[$\mGL{}$ Graded Necessity Modality]
  \label{lemma:graded_modalities_in_mGL}
  The following are derivable:
  \begin{gather*}
    \hspace{-1.75em}
  \begin{array}{c}
    \mGLdruleMSTXXBoxE{} \,
    \mGLdruleMSTXXBoxI{}
  \end{array}
  \end{gather*}
\end{lemma}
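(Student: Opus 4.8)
The plan is to derive each rule by composing the modal rules already available in the two fragments, mirroring exactly the way the of-course modality is recovered from its two adjoint halves in LNL logic. Since $\Box_r A = \mathsf{Grd}_r(\mathsf{Lin}\, A)$ holds by definition, each direction factors through the intermediate graded proposition $\mathsf{Lin}\, A$: the introduction rule is obtained by $\mathsf{Lin}$-introduction followed by $\mathsf{Grd}$-promotion, and the elimination rule by $\mathsf{Grd}$-unboxing followed by a linear cut.

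For the introduction rule $\Box_i$, whose premise is a mixed judgment $\delta \odot \Delta ; \emptyset \vdash_{\MS} l : A$ with empty linear context, I would first apply the right rule $\mathsf{Lin}_R$ to transport the linear conclusion $A$ into the graded fragment, obtaining $\delta \odot \Delta \vdash_{\GS} \mathsf{Lin}\, l : \mathsf{Lin}\, A$; the empty linear context is precisely the side condition that $\mathsf{Lin}_R$ requires. I would then apply the promotion rule $\mathsf{Grd}_R$ to this $\GS$ derivation, which wraps the conclusion in $\mathsf{Grd}_r(-)$ and scales the graded context by $r$, yielding $(r * \delta) \odot \Delta ; \emptyset \vdash_{\MS} \mathsf{Grd}\, r\, (\mathsf{Lin}\, l) : \mathsf{Grd}_r(\mathsf{Lin}\, A)$. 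Since $\mathsf{Grd}_r(\mathsf{Lin}\, A)$ is literally $\Box_r A$ and the scaling $r * \delta$ is exactly the context transformation demanded by $\Box_i$, this is the desired conclusion.

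For the elimination rule $\Box_E$, I would work backwards from the continuation premise, which assumes $\mathsf{Lin}\, A$ at grade $r$ in its graded context, say $(\delta_2, r) \odot (\Delta_2, x : \mathsf{Lin}\, A) ; \Gamma_2 \vdash_{\MS} l_2 : B$. Applying the unboxing rule $\mathsf{Grd}_L$ against a fresh linear hypothesis $z : \mathsf{Grd}_r(\mathsf{Lin}\, A)$ reintroduces exactly this assumption at grade $r$, producing $\delta_2 \odot \Delta_2 ; \Gamma_2, z : \Box_r A \vdash_{\MS} l_2' : B$. I would then cut this against the scrutinee premise $\delta_1 \odot \Delta_1 ; \Gamma_1 \vdash_{\MS} l_1 : \Box_r A$ using the linear cut $\mathsf{cut}_{\MS}$, which is legitimate precisely because $\Box_r A$ is a linear proposition; this combines the graded and linear contexts of the two premises in the manner prescribed by $\mathsf{cut}_{\MS}$ and delivers the conclusion of $\Box_E$.

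The construction itself is routine; the step I expect to require the most care is the grade bookkeeping. For $\Box_i$ I must confirm that the scaling introduced by $\mathsf{Grd}_R$ is $r * \delta$ and nothing else, and that $\mathsf{Lin}_R$'s empty-linear-context requirement is genuinely met. For $\Box_E$ the crucial alignment is that the grade at which $\mathsf{Grd}_L$ reintroduces the unboxed assumption, namely $r$, coincides with the grade annotation demanded in the continuation premise, and that $\mathsf{cut}_{\MS}$ assembles the grade vectors and contexts of the two premises exactly as the statement of $\Box_E$ prescribes. Once these grade equalities are checked, both derivations go through using only the modal rules and cut, with no further appeal to the structural rules.
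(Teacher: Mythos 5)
Your proposal is correct and follows essentially the same route as the paper's own proof: the introduction rule via $\mathsf{Lin}_R$ followed by $\mathsf{Grd}_R$, and the elimination rule via $\mathsf{Grd}_L$ applied to the continuation premise followed by a linear cut against the scrutinee premise. The grade-bookkeeping checks you flag (the $r * \delta$ scaling from promotion, the empty linear context for $\mathsf{Lin}_R$, and the grade alignment at the cut) are exactly the points that make the derivations go through, so nothing is missing.
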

\begin{proof}
  The elimination rule follows by applying \mGLdruleMSTXXGrdLName{}
  to the second premise and then applying cut with the first
  premise.  The introduction rule follows by
  \mGLdruleGSTXXLinRName{} then \mGLdruleMSTXXGrdRName{}.
\end{proof}
The previous lemma reveals a lot about the structure of existing graded type
systems. First, graded hypotheses, usually denoted with their grade $\mGLmv{x} :_{\mGLnt{r}} \mGLnt{A}$ in
the literature (e.g.~\cite{DBLP:journals/pacmpl/AbelB20,Orchard:2019}), are
graded hypotheses $\mGLnt{r}  \odot  \mathsf{Lin} \, \mGLnt{A}$ where the linear formula has been transported
into the graded system.  Second, the restriction to only graded variables in the
promotion rule is very explicit in the introduction rule
\mGLdruleMSTXXBoxIName{} above.  Third, the elimination rule (here
\mGLdruleMSTXXBoxEName) is really the left rule for $\mathsf{Grd}$ followed by a cut.
Thus, graded type systems, whilst typically of a natural deduction form,
incorporate a little of the flavour of sequent calculi in the rules for graded
modalities because of the integrated cut.

Using the modal operators we can derive a graded implication of the form $\mathsf{Grd} _{ \mGLnt{r} }\, \mGLnt{X}   \multimap  \mGLnt{A}$:
\begin{lemma}[$\mGL{}$ Graded Implication]
  \label{lemma:graded_implication_in_mgl}
  The following rules are derivable:
  \begin{gather*}
    \begin{array}{c}
      \mGLdruleMSTXXGImpL{}  \\[0.5em] \mGLdruleMSTXXGImpR{}
    \end{array}
    \end{gather*}
\end{lemma}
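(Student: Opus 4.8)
The plan is to derive each rule by composing the relevant primitive $\mathsf{Grd}$ modality rule with the ordinary linear implication rule, exactly in the spirit of the proof of Lemma~\ref{lemma:graded_modalities_in_mGL}. Since $\mathsf{Grd} _{ \mGLnt{r} }\, \mGLnt{X}   \multimap  \mGLnt{A}$ is literally a linear implication whose antecedent is a $\mathsf{Grd}$-modal formula, both derived rules will factor as ``unbox/promote, then do the implication''.

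For the right rule \mGLdruleMSTXXGImpRName{}, I would read the premise as a judgment in which $\mGLnt{X}$ occurs in the graded context at grade $\mGLnt{r}$, i.e.\ as an entry $\mGLnt{r}  \odot  \mGLnt{X}$. First I would apply \mGLdruleMSTXXGrdLName{} to this graded assumption, which replaces it by a fresh \emph{linear} hypothesis $z : \mathsf{Grd} _{ \mGLnt{r} }\, \mGLnt{X}$ while removing $\mGLnt{X}$ from the graded context. The resulting judgment carries $z : \mathsf{Grd} _{ \mGLnt{r} }\, \mGLnt{X}$ in its linear context and still concludes $\mGLnt{A}$. Applying \mGLdruleMSTXXImpRName{} to abstract over $z$ then yields $\mathsf{Grd} _{ \mGLnt{r} }\, \mGLnt{X}   \multimap  \mGLnt{A}$ over the original graded context $\delta  \odot  \Delta$, which is precisely the conclusion of \mGLdruleMSTXXGImpRName{}.

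For the left rule \mGLdruleMSTXXGImpLName{}, I would manufacture the antecedent of the implication by promotion. Applying \mGLdruleMSTXXGrdRName{} to a $\GS$-derivation of $\mGLnt{X}$ produces an $\MS$-derivation of $\mathsf{Grd} _{ \mGLnt{r} }\, \mGLnt{X}$ whose graded context is scaled by $\mGLnt{r}$ and whose linear context is empty. Supplying this as the function-argument premise of \mGLdruleMSTXXImpLName{}, together with the receiving premise that uses $\mGLnt{A}$ linearly, discharges the hypothesis $f : \mathsf{Grd} _{ \mGLnt{r} }\, \mGLnt{X}   \multimap  \mGLnt{A}$ and produces the stated conclusion.

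The only point needing care, rather than any genuine obstacle, is the grade-vector bookkeeping induced by promotion: \mGLdruleMSTXXGrdRName{} scales the graded context of the argument derivation by $\mGLnt{r}$, so I must check that the concatenation of the scaled vector with the graded context of the receiving premise coincides with the grade vector declared in the conclusion of \mGLdruleMSTXXGImpLName{}. This follows immediately from the pointwise definitions of scalar multiplication and concatenation on grade vectors in Definition~\ref{def:graded-contexts} and contributes no new content. In both derivations the witnessing terms are obtained by the corresponding syntactic compositions and can be read off directly from the two subderivations.
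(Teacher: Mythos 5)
Your proposal is correct and follows exactly the paper's own derivation: the right rule by \mGLdruleMSTXXGrdLName{} on the premise followed by \mGLdruleMSTXXImpRName{}, and the left rule by \mGLdruleMSTXXGrdRName{} on the first (graded) premise followed by \mGLdruleMSTXXImpLName{} with the second premise. The grade-vector bookkeeping you flag for the promotion step is indeed routine and is implicitly absorbed in the paper's one-line proof as well.
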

\begin{proof}
  The left rule follows by applying \mGLdruleMSTXXGrdRName{} to
  the first premise, and then \mGLdruleMSTXXImpLName{} using
  the second premise.  The right rule follows by applying
  \mGLdruleMSTXXGrdLName{} to the premise and then
  \mGLdruleMSTXXImpRName{}.
\end{proof}
\noindent
The final rules are the approximation rules
for grades:
\begin{gather*}
\begin{array}{l}
  \mGLdruleGSTXXSub{} \quad \mGLdruleMSTXXSub{}
\end{array}
\end{gather*}
Approximation allows for the abstraction of grades along an ordering. For
example, in the semiring $\mGLsym{(}  \mathbb{N}  \mGLsym{,}  1  \mGLsym{,}  *  \mGLsym{,}  0  \mGLsym{,}  +  \mGLsym{,}  \leq  \mGLsym{)}$ where the
order is the usual ordering on natural numbers, then a grade $\mGLnt{r}$ stands for ``at-most $\mGLnt{r}$'',
generalising the notion of `affine' usage tracking.  Disabling the ordering
by forcing it to be true only in the case of reflexive pairs (i.e., an equality
relation) results in exact usage tracking.  This demonstrates that having a notion of
approximation results in a more general usage tracking framework.
\begin{example}[Derivation]
  As an example derivation in the natural numbers semiring $\mGLsym{(}  \mathbb{N}  \mGLsym{,}  1  \mGLsym{,}  *  \mGLsym{,}  0  \mGLsym{,}  +  \mGLsym{,}  \leq  \mGLsym{)}$,
  for `affine' usage
  tracking, the following copies an assumption to make a pair in the graded fragment,
  then uses an approximation, and then transports the pair into the linear fragment,
  scaling its grades further:
  {\small{
      \begin{gather*}
        \vspace{-1em}
      \inferrule* [right=$\mGLdruleGSTXXTenRName{}$]
       {\inferrule* [flushleft,right=$\mGLdruleGSTXXidName{}$]
        {\,}{1  \odot  \mGLmv{x}  \mGLsym{:}  \mGLnt{X}  \vdash_{\mathsf{GS} }  \mGLmv{x}  \mGLsym{:}  \mGLnt{X}} \quad
        \inferrule* [right=$\mGLdruleGSTXXidName{}$]{\;}{1  \odot  \mGLmv{y}  \mGLsym{:}  \mGLnt{X}  \vdash_{\mathsf{GS} }  \mGLmv{y}  \mGLsym{:}  \mGLnt{X}}
       \hspace{-5em}}
    {
    \inferrule* [right=$\mGLdruleGSTXXContName{}$]
     {1  \mGLsym{,}  1  \odot  \mGLmv{x}  \mGLsym{:}  \mGLnt{X}  \mGLsym{,}  \mGLmv{y}  \mGLsym{:}  \mGLnt{X}  \vdash_{\mathsf{GS} }  \mGLsym{(}  \mGLmv{x}  \mGLsym{,}  \mGLmv{y}  \mGLsym{)}  \mGLsym{:}  \mGLnt{X}  \boxtimes  \mGLnt{X}}
    {
   \inferrule* [right=$\mGLdruleGSTXXSubName{}$]
       {2  \odot  \mGLmv{x}  \mGLsym{:}  \mGLnt{X}  \vdash_{\mathsf{GS} }  \mGLsym{[}  \mGLmv{x}  \mGLsym{/}  \mGLmv{y}  \mGLsym{]}  \mGLsym{(}  \mGLmv{x}  \mGLsym{,}  \mGLmv{y}  \mGLsym{)}  \mGLsym{:}  \mGLnt{X}  \boxtimes  \mGLnt{X} \quad 2 \leq 3}
       {\inferrule* [right=$\mGLdruleMSTXXGrdRName{}$]
         {\mGLsym{3}  \odot  \mGLmv{x}  \mGLsym{:}  \mGLnt{X}  \vdash_{\mathsf{GS} }  \mGLsym{(}  \mGLmv{x}  \mGLsym{,}  \mGLmv{x}  \mGLsym{)}  \mGLsym{:}  \mGLnt{X}  \boxtimes  \mGLnt{X}}
         {2  *  \mGLsym{3}  \odot  \mGLmv{x}  \mGLsym{:}  \mGLnt{X}  \mGLsym{;}  \emptyset  \vdash_{\mathsf{MS} }  \mathsf{Grd} \, 2 \, \mGLsym{(}  \mGLmv{x}  \mGLsym{,}  \mGLmv{x}  \mGLsym{)}  \mGLsym{:}   \mathsf{Grd} _{ 2 }\, \mGLsym{(}  \mGLnt{X}  \boxtimes  \mGLnt{X}  \mGLsym{)}}
       }
    }
    }
  \end{gather*}}}
\end{example}

\begin{example}[None-One-Tons~\cite{DBLP:conf/birthday/McBride16}] The
  semiring over $\{0, 1, \omega \}$ with
  $0 \leq \omega$ and $1 \leq \omega$, where $+$ and $*$ are saturating at $\omega$,
  can be used to distinguish between
  linear and various non-linear uses: assigning $\mGLnt{r}  \mGLsym{=}  1$ to linear usage,
  $\mGLnt{r}  \mGLsym{=}  0$ to non-usage (when a resource is discarded), and $\mGLnt{r} = \omega $
  to arbitrary usage.
\end{example}

Note, however, that even with the above semiring we are unable to exactly
represent the exponential modality $\oc$ from linear logic via some particular grade $r$
within the graded logic. This is because no matter which grade we
choose, we are able to ``push'' the grade into the tensor product using this
graded tensor elimination (\mGLdruleGSTXXTenLName{}), allowing
derivation of $\mathsf{Grd} _{ \mGLnt{r} }\, \mGLsym{(}  \mGLnt{X}  \boxtimes  \mGLnt{Y}  \mGLsym{)}   \multimap  \mGLsym{(}   \mathsf{Grd} _{ \mGLnt{r} }\, \mGLnt{X}   \otimes   \mathsf{Grd} _{ \mGLnt{r} }\, \mGLnt{Y}   \mGLsym{)}$, and yet in linear logic it
is not possible to derive $\oc (A \otimes B) \multimap \oc A \otimes \oc B$.
Therefore, our logic cannot reduce to Benton's LNL logic simply by taking the
Cartesian (trivial) semiring, as one might expect at first glance. This quality
is typical of \emph{graded base} systems, so reconciling these with linear logic
requires some additional structure on the semiring~\cite{hughes:lirmm-03271465}
(though this is not the focus here).

On the other hand, notice that we have another way to represent graded products: as
linear products wrapped in the derived graded modality, or $\Box _{ \mGLnt{r} } \mGLsym{(}  \mGLnt{A}  \otimes  \mGLnt{B}  \mGLsym{)}$.
Importantly, here it is not possible to `push' the grade `through' the tensor as
we can for the graded product; we cannot derive $\mGLsym{(}   \Box _{ \mGLnt{r} } \mGLnt{A}   \mGLsym{)}  \otimes  \mGLsym{(}   \Box _{ \mGLnt{r} } \mGLnt{B}   \mGLsym{)}$. This representation of graded products thus has behaviour more typical of a
\emph{linear base} graded type system, with our combined logic again giving us a
clearer understanding of the relationship between these contrasting styles.

\begin{example}[Security levels]
Information-Flow Control properties can be tracked by instantiating
the semiring with a lattice of security levels~\cite{Gaboardi:2016}, e.g.,
with $(\{\mathsf{Lo} \leq \mathsf{Hi}\}, \mathsf{Lo}, \wedge,
\mathsf{Hi}, \vee)$ where $\mathsf{Hi}$-graded inputs are treated as
irrelevant: we cannot depend on any high-security inputs
when building a low-security graded output $\mathsf{Grd}_{\mathsf{Lo}} A$.
\end{example}

\begin{example}[Sensitivity]
The real number semiring $(\mathbb{R}, 1, *, 0, +, \leq)$ can be
leveraged to capture a notion of numerical sensitivity in
programs/logic~\cite{DBLP:conf/popl/GaboardiHHNP13,DBLP:conf/icfp/DAntoniGAHP13}, where
a program is $k$-sensitive (for $k \in \mathbb{R}$) in a variable if a change
$\epsilon$ in its inputs to $x$ produces at most a change of $k\epsilon$
in the output of the program. This instantiation of the system tracks
sensitivities as grades where additional dependent-type-based
mechanisms are needed to lift program values into the types, e.g., $\mathsf{scale}
: (k : \mathbb{R}) \rightarrow \mathsf{Grd}_k \mathbb{R} \rightarrow \mathbb{R}$.
\end{example}

\subsection{Metatheory}
\label{subsubsec:metatheory_of_mgl}

\noindent
$\mGL{}$ enjoys a rich metatheory.  First, it satisfies cut elimination,
for which we give the full proof.  The proof of cut reduction requires
a generalization of the graded cut rules to graded \emph{multicut} rules in
order to accommodate the structural rule of graded contraction.\footnote{Whilst
cut reduction can be proved for intuitionistic sequent
calculus without multicut~\cite{von2001proof}, we use the standard
multicut approach as it relates well to the categorical models developed later.}

Thus, throughout the cut elimination proof we use the following graded
multicut rules:
\begin{gather*}
\begin{array}{lll}
  \mGLdruleGSTXXMCut{} \\[0.5em] \mGLdruleMSTXXMGCut{}
\end{array}
\end{gather*}
Both rules compute the contraction of the $n$ hypotheses
involved in the multicut on the cut-formula $\mGLnt{X}$.  To do this we use \emph{row-vector
matrix multiplication}.  We denote the matrix consisting of
$\mGLmv{n}$-copies of the row vector $\delta_{{\mathrm{2}}}$ by $[  \delta_{{\mathrm{2}}} ^{  \mGLmv{n}  } ]$. Then row-vector multiplication is:
\[
\delta  \boxast [ { \delta_{{\mathrm{2}}} }^{ \mGLmv{n} } ] = \bigoplus ^{ \mGLmv{n} }_{ \mGLmv{k}  = 1}   (  \delta  \mGLsym{(}  \mGLmv{k}  \mGLsym{)}  *  \delta_{{\mathrm{2}}}  )
\]
where the $*$ on the right is the scalar multiplication derived
from the semiring, $\bigoplus$ is the pointwise addition of
vectors, and where $\delta  \mGLsym{(}  \mGLmv{k}  \mGLsym{)}$ is the $k$-th element of the vector
$\delta$. This computes the usages of the hypotheses in
$\Delta_{{\mathrm{2}}}$ as the multiplication of a matrix of size
$1 \times n$ with a
matrix of size $n \times |\Delta_{{\mathrm{2}}}|$ to yield a matrix of size $1 \times |\Delta_{{\mathrm{2}}}|$.

\sloppy We now proceed with the proof of cut elimination.  The \emph{rank}
$\mathsf{Rank}  (  \mGLnt{X}  )$ and $\mathsf{Rank}  (  \mGLnt{A}  )$ of a formula is the height of the input
formula's syntax tree where constants are of rank $0$.  The
\emph{cut rank} $\mathsf{CutRank} \, \mGLsym{(}  \Pi  \mGLsym{)}$ of a derivation $\Pi$ of some
judgment is defined to be one more than the maximum rank of the cut
formula's in $\Pi$, and $0$ if $\Pi$ is cut free.  The
\emph{depth} $\mathsf{Depth}  (  \Pi  )$ of a derivation $\Pi$ is the length
of the longest path in the proof tree of $\Pi$, and hence, the
depth of an axiom is $0$.  We prove cut-elimination without term
annotations on the rules, in keeping with traditional proofs.

\begin{restatable}[Cut Reduction for $\mGL{}$]{lemma}{cutElimMGL}
  \label{lemma:cut_reduction_for_mgl}
  \begin{enumerate}
  \item[]
  \item (Graded) If
    $\Pi_{{\mathrm{1}}}$ is a proof of $\delta_{{\mathrm{2}}}  \odot  \Delta_{{\mathrm{2}}}  \vdash_{\mathsf{GS} }  \mGLnt{X}$ and
    $\Pi_{{\mathrm{2}}}$ is a proof of $(  \delta_{{\mathrm{1}}}  \mGLsym{,}  \delta  \mGLsym{,}  \delta_{{\mathrm{3}}}  )   \odot   ( \Delta_{{\mathrm{1}}}  \mGLsym{,}   \mGLnt{X} ^{ \mGLmv{n} }   \mGLsym{,}  \Delta_{{\mathrm{3}}} )   \vdash_{\mathsf{GS} }  \mGLnt{Y}$ with
    $\mathsf{CutRank} \, \mGLsym{(}  \Pi_{{\mathrm{1}}}  \mGLsym{)}  \mGLsym{,}  \mathsf{CutRank} \, \mGLsym{(}  \Pi_{{\mathrm{2}}}  \mGLsym{)} \, \leq \,  \mathsf{Rank}  (  \mGLnt{X}  )$, then
    there exists a proof $\Pi$ of
    $(   \delta_{{\mathrm{1}}}  \mGLsym{,}  \delta  \boxast [ { \delta_{{\mathrm{2}}} }^{ \mGLmv{n} } ]   \mGLsym{,}  \delta_{{\mathrm{3}}}  )   \odot   ( \Delta_{{\mathrm{1}}}  \mGLsym{,}  \Delta_{{\mathrm{2}}}  \mGLsym{,}  \Delta_{{\mathrm{3}}} )   \vdash_{\mathsf{GS} }  \mGLnt{Y}$ with $\mathsf{CutRank} \, \mGLsym{(}  \Pi  \mGLsym{)} \, \leq \,  \mathsf{Rank}  (  \mGLnt{X}  )$.

  \item (Graded/Mixed) If
    $\Pi_{{\mathrm{1}}}$ is a proof of $\delta_{{\mathrm{2}}}  \odot  \Delta_{{\mathrm{2}}}  \vdash_{\mathsf{GS} }  \mGLnt{X}$ and
    $\Pi_{{\mathrm{2}}}$ is a proof of $(  \delta_{{\mathrm{1}}}  \mGLsym{,}  \delta  \mGLsym{,}  \delta_{{\mathrm{3}}}  )   \odot   ( \Delta_{{\mathrm{1}}}  \mGLsym{,}   \mGLnt{X} ^{ \mGLmv{n} }   \mGLsym{,}  \Delta_{{\mathrm{3}}} )   \mGLsym{;}  \Gamma  \vdash_{\mathsf{MS} }  \mGLnt{B}$ with
    $\mathsf{CutRank} \, \mGLsym{(}  \Pi_{{\mathrm{1}}}  \mGLsym{)}  \mGLsym{,}  \mathsf{CutRank} \, \mGLsym{(}  \Pi_{{\mathrm{2}}}  \mGLsym{)} \, \leq \,  \mathsf{Rank}  (  \mGLnt{X}  )$, then
    there exists a proof $\Pi$ of
    $(   \delta_{{\mathrm{1}}}  \mGLsym{,}  \delta  \boxast [ { \delta_{{\mathrm{2}}} }^{ \mGLmv{n} } ]   \mGLsym{,}  \delta_{{\mathrm{3}}}  )   \odot   ( \Delta_{{\mathrm{1}}}  \mGLsym{,}  \Delta_{{\mathrm{2}}}  \mGLsym{,}  \Delta_{{\mathrm{3}}} )   \mGLsym{;}  \Gamma  \vdash_{\mathsf{MS} }  \mGLnt{B}$ with $\mathsf{CutRank} \, \mGLsym{(}  \Pi  \mGLsym{)} \, \leq \,  \mathsf{Rank}  (  \mGLnt{X}  )$.

  \item (Mixed) If
    $\Pi_{{\mathrm{1}}}$ is a proof of $\delta_{{\mathrm{2}}}  \odot  \Delta_{{\mathrm{2}}}  \mGLsym{;}  \Gamma_{{\mathrm{2}}}  \vdash_{\mathsf{MS} }  \mGLnt{A}$ and
    $\Pi_{{\mathrm{2}}}$ is a proof of $\delta_{{\mathrm{1}}}  \odot  \Delta_{{\mathrm{1}}}  \mGLsym{;}   ( \Gamma_{{\mathrm{1}}}  \mGLsym{,}  \mGLnt{A}  \mGLsym{,}  \Gamma_{{\mathrm{3}}} )   \vdash_{\mathsf{MS} }  \mGLnt{B}$ with
    $\mathsf{CutRank} \, \mGLsym{(}  \Pi_{{\mathrm{1}}}  \mGLsym{)}  \mGLsym{,}  \mathsf{CutRank} \, \mGLsym{(}  \Pi_{{\mathrm{2}}}  \mGLsym{)} \, \leq \,  \mathsf{Rank}  (  \mGLnt{A}  )$, then
    there exists a proof $\Pi$ of
    $(  \delta_{{\mathrm{1}}}  \mGLsym{,}  \delta_{{\mathrm{2}}}  )   \odot   ( \Delta_{{\mathrm{1}}}  \mGLsym{,}  \Delta_{{\mathrm{2}}} )   \mGLsym{;}   ( \Gamma_{{\mathrm{1}}}  \mGLsym{,}  \Gamma_{{\mathrm{2}}}  \mGLsym{,}  \Gamma_{{\mathrm{3}}} )   \vdash_{\mathsf{MS} }  \mGLnt{B}$ with $\mathsf{CutRank} \, \mGLsym{(}  \Pi  \mGLsym{)} \, \leq \,  \mathsf{Rank}  (  \mGLnt{A}  )$.
  \end{enumerate}
\end{restatable}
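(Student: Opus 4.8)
The plan is to prove the three statements \emph{simultaneously} by a single well-founded induction, because the modal rules tie the fragments together: a principal cut on a $\mathsf{Lin}\,A$ formula will reduce to a cut controlled by the mixed statement, a principal cut on a $\mathsf{Grd}_r X$ formula will reduce to one controlled by the graded/mixed statement, and commuting a graded/mixed multicut upward past a promotion ($\mathsf{Grd}_R$) inside the mixed fragment lands in the graded fragment and appeals to the graded statement. I would induct on $\mathsf{Depth}(\Pi_1) + \mathsf{Depth}(\Pi_2)$, reading $\mathsf{Rank}(X)$ (resp.\ $\mathsf{Rank}(A)$) as a fixed \emph{budget}: the hypothesis $\mathsf{CutRank}(\Pi_i) \leq \mathsf{Rank}(X)$ says every cut already inside $\Pi_1,\Pi_2$ acts on a formula of rank strictly below $\mathsf{Rank}(X)$, so the task is to discharge the single top-level multicut while only ever introducing \emph{new} cuts of rank $< \mathsf{Rank}(X)$; such cuts may then simply be \emph{retained} in the output without exceeding the budget. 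Note that the mixed statement uses an ordinary single cut rather than a multicut, since linearity forbids contraction of the linear cut-formula, so only the graded and graded/mixed statements carry the multiplicity $n$ and the row-vector accounting.

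First I would dispatch the base cases, where either premise is an identity and the multicut collapses. The bulk of the argument is a case split on the last rules of $\Pi_1$ and $\Pi_2$. In every \emph{commuting} case the last rule is not principal on the cut-formula, so I permute the multicut above it; this strictly lowers one depth and the (simultaneous) induction hypothesis applies, possibly in another fragment, as in the promotion case above. The structural rules on the cut-formula are absorbed into this pattern through the multiplicity $n$: a graded contraction on a copy of $X$ deletes a rule and increments $n$, while a graded weakening decrements $n$ towards the base case $n = 0$, at which $\delta \boxast [\,{\delta_2}^{0}\,]$ is the zero vector and matches a freshly weakened context. In the \emph{principal logical} cases ($\mathsf{J}$, $X_1 \boxtimes X_2$, $\mathsf{I}$, $A_1 \otimes A_2$, $A_1 \multimap A_2$) I apply the standard Gentzen reductions, replacing the multicut by cuts on immediate subformulas, each of rank $< \mathsf{Rank}(X)$ and hence retained; when $n > 1$ these leave a residual multicut on the remaining $n-1$ copies of the \emph{same} formula, discharged by the induction hypothesis because it now acts on a strictly shallower $\Pi_2$. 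Finally, the \emph{modal principal} cases cross fragments: a cut on $\mathsf{Lin}\,A$, with $\Pi_1$ ending in $\mathsf{Lin}_R$ and $\Pi_2$ derelicting one copy, reduces for that copy to a retained mixed cut on $A$; and a cut on $\mathsf{Grd}_r X$, with $\Pi_1$ ending in promotion and $\Pi_2$ in unboxing, reduces to a retained graded/mixed cut on $X$ at the single grade $r$ (so with $n = 1$).

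Throughout, each reassembled derivation carries grade vectors that must be checked against those prescribed by the statement, and this grade bookkeeping --- not the proof-theoretic skeleton --- is the main obstacle. Two points need genuine care. The commuting conversion that pushes a graded/mixed multicut past a promotion scales the whole graded context by $r$ and then performs the multicut in the graded fragment; matching this against performing the multicut first and scaling afterwards requires the row-vector identity
\[ r * (\delta \boxast [\,{\delta_2}^{n}\,]) = (r * \delta) \boxast [\,{\delta_2}^{n}\,], \]
which unfolds to associativity of $*$ and left-distribution of scalar multiplication over the pointwise sum $\bigoplus$, and so must be recorded as an explicit lemma about $\boxast$. Secondly, in the compound principal cases with $n > 1$, the grade contributed by the $n-1$ residual copies (returned by the induction hypothesis as a $\boxast$ over $n-1$ rows) must recombine with the freshly introduced subformula cuts --- each scaled by the grade $\delta(k)$ of the peeled copy --- to total exactly $\delta \boxast [\,{\delta_2}^{n}\,]$. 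Proving that peeling one copy off a multicut corresponds precisely to peeling the $k$-th summand $\delta(k) * \delta_2$ off the $\bigoplus$ is the delicate step, and it is exactly what the row-vector formulation of the multicut rules is engineered to make tractable.
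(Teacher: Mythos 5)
Your proposal is correct and follows essentially the same route as the paper's own proof: the paper likewise proves the three statements simultaneously by mutual induction on $\mathsf{Depth}(\Pi_1) + \mathsf{Depth}(\Pi_2)$, organises the argument into the same groups of cases (axiom, commuting conversions, principal-vs-principal, secondary, and structural), makes exactly the cross-fragment appeals you describe (principal $\mathsf{Lin}$ cuts invoke the mixed statement, principal $\mathsf{Grd}_r$ cuts the graded/mixed statement, and commuting a graded/mixed multicut past $\mathsf{Grd}_R$ the graded statement), and carries out the same row-vector grade bookkeeping for weakening, contraction, and promotion. The only difference is cosmetic: in the modal principal cases you retain the subformula cut (using the strict rank drop), whereas the paper re-invokes the induction hypothesis there while retaining cuts in the tensor case; both stay within the $\mathsf{CutRank} \leq \mathsf{Rank}$ budget, so the arguments coincide.
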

\begin{proof}
  By mutual induction on $\mathsf{Depth}  (  \Pi_{{\mathrm{1}}}  )   +   \mathsf{Depth}  (  \Pi_{{\mathrm{2}}}  )$
  (see \citeappendix{subsec:cut_reduction_of_mgl}{C.1} for proof).
\end{proof}

\begin{restatable}[Decreasing Order of $\mGL{}$]{lemma}{decreasingMGL}
  \label{lemma:decreasing_order_of_mgl}
  If $\Pi$ is a proof of $\delta  \odot  \Delta  \vdash_{\mathsf{GS} }  \mGLnt{X}$ or $\delta  \odot  \Delta  \mGLsym{;}  \Gamma  \vdash_{\mathsf{MS} }  \mGLnt{A}$
  with $\mathsf{CutRank} \, \mGLsym{(}  \Pi  \mGLsym{)} \, \mGLsym{>} \, 0$, then there is a proof $\Pi'$ of
  $\delta'  \odot  \Delta  \vdash_{\mathsf{GS} }  \mGLnt{X}$ or $\delta'  \odot  \Delta  \mGLsym{;}  \Gamma  \vdash_{\mathsf{MS} }  \mGLnt{A}$ with
  $\delta  \leq  \delta'$ and $\mathsf{CutRank} \, \mGLsym{(}  \Pi'  \mGLsym{)} \, \mGLsym{<} \, \mathsf{CutRank} \, \mGLsym{(}  \Pi  \mGLsym{)}$.
\end{restatable}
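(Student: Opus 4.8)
The plan is to prove this as the standard Gentzen ``reduction of the cut rank'' lemma, using Lemma~\ref{lemma:cut_reduction_for_mgl} as the engine and carrying the grade bookkeeping along. Write $m = \mathsf{CutRank}(\Pi) > 0$; by definition $\Pi$ contains at least one (multi)cut whose cut formula has rank exactly $m - 1$, and every cut formula in $\Pi$ has rank at most $m-1$. The goal is to remove \emph{all} maximal-rank cuts, leaving only cuts of rank $\le m-2$, i.e.\ a proof of cut rank $< m$, while only ever increasing the grade vector in the $\le$ ordering. Since the claim ranges over the mutually-defined $\vdash_{\mathsf{GS}}$ and $\vdash_{\mathsf{MS}}$ judgments, it is proved as a single simultaneous statement over both fragments.

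First I would fix the induction measure: the number $N$ of maximal-rank (multi)cuts in $\Pi$, i.e.\ those whose cut formula $X$ has $\mathsf{Rank}(X) = m-1$. Within each step I would select an \emph{uppermost} such cut — one with no maximal-rank cut strictly above either premise. For such a cut with premise derivations $\Pi_{1}$ and $\Pi_{2}$, everything above both premises has cuts of rank at most $m-2$, so $\mathsf{CutRank}(\Pi_{1}), \mathsf{CutRank}(\Pi_{2}) \le m-1 = \mathsf{Rank}(X)$, which is exactly the hypothesis required to invoke Lemma~\ref{lemma:cut_reduction_for_mgl}.

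The core step is then to apply the matching case of Lemma~\ref{lemma:cut_reduction_for_mgl} to the two premises: part~(1) for a graded \mGLdruleGSTXXMCutName{}, part~(2) for a graded multicut \mGLdruleMSTXXMGCutName{} sitting inside an $\mathsf{MS}$ derivation, and part~(3) for a linear \mGLdruleMSTXXCutName{}. Each yields a replacement derivation of the \emph{same} end-sequent with cut rank $\le m-1$, hence containing no maximal-rank cut. Splicing the replacement back into $\Pi$ leaves the rest of the derivation untouched, since the end-sequent is unchanged, so $N$ strictly decreases and no new maximal-rank cut is created. Iterating until $N = 0$ gives $\Pi'$ with $\mathsf{CutRank}(\Pi') \le m-1 < m$. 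In this organisation the end-sequent is preserved exactly, so in fact $\delta' = \delta$ and $\delta \le \delta'$ holds trivially; the $\le$ in the statement is the slack required by the more liberal premise-first variant of the induction.

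The graded-specific obligation, and the step I expect to be the main obstacle, is ensuring that every grade manipulation is \emph{monotone}, so that whenever a subderivation's grade vector is permitted to rise from $\delta$ to some $\delta' \ge \delta$ — either by running the induction premise-first, or because Lemma~\ref{lemma:cut_reduction_for_mgl} internally reconciles grades via \mGLdruleGSTXXSubName{}/\mGLdruleMSTXXSubName{} — the rule below still applies and its conclusion rises in the same ordering. This reduces to monotonicity of $*$ and $+$ (Def.~\ref{def:resource-algebra}) lifted pointwise to grade vectors: the scalar scaling in \mGLdruleMSTXXGrdRName{}, the pointwise addition in \mGLdruleGSTXXContName{}, and above all the row-vector product $\boxast$ used by the multicut rules are each monotone in every argument, so $\delta_{1}, \delta \boxast [\delta_{2}^{n}], \delta_{3}$ increases whenever its inputs do. The genuinely delicate point is the \mGLdruleGSTXXSubName{}/\mGLdruleMSTXXSubName{} rule itself, where a raised premise grade cannot in general be approximated back down to the original target without assuming joins in the preorder; the uppermost-cut organisation above sidesteps this precisely by never disturbing the end-sequent of the spliced subderivation, which is why I would adopt it in preference to a naive structural induction through \textsc{sub}.
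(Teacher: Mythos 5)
Your proof is correct, and it reaches the goal by a genuinely different organisation than the paper's. The paper argues by induction on $\mathsf{Depth}(\Pi)$, working at the \emph{root} of the derivation: if the last inference is not a cut, or is a cut on a non-maximal formula, it applies the induction hypothesis to the premises and reapplies the rule; if the last inference is a (multi)cut on a maximal-rank formula $X$, it first uses the induction hypothesis to bring both premises down to $\mathsf{CutRank} \leq \mathsf{Rank}(X)$ and only then invokes Lemma~\ref{lemma:cut_reduction_for_mgl}. You instead induct on the number of maximal-rank cuts and attack an \emph{uppermost} one, which buys two things. First, the rank hypothesis of Lemma~\ref{lemma:cut_reduction_for_mgl} comes for free from uppermost-ness, so you never need the induction hypothesis to precondition the premises. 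Second, since the lemma returns a derivation of exactly the multicut's conclusion sequent, splicing it in leaves every sequent below unchanged; the end-sequent, and in particular its grade vector, never moves, so you get $\delta' = \delta$ and the monotonicity and \mGLdruleGSTXXSubName{} concerns in your final paragraph indeed never arise. The paper's root-first organisation pays for its compactness precisely there: the induction hypothesis may raise the grade vectors of subderivations, that raise must be pushed back through the reapplied rule (monotonicity of $+$, $*$ and $\boxast$, plus the delicate approximation-rule case you identify, which the paper glosses with ``we simply apply the induction hypothesis''), and this is exactly why the statement carries the slack $\delta \leq \delta'$. Your closing observation, that the slack is an artefact of the root-first induction rather than of the mathematics, is accurate, and your variant arguably gives the cleaner route to the cut-elimination theorem, whose statement demands a cut-free proof of the \emph{original} sequent.
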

\begin{proof}
  By induction on $\mathsf{Depth}  (  \Pi  )$ (see
  \citeappendix{subsec:decreasing_order_of_mgl}{C.2} for proof).
\end{proof}

\begin{restatable}[Cut Elimination of $\mGL{}$]{theorem}{cutelim}
  \label{theorem:cut_elimination_of_mgl}
  If $\Pi$ is a proof of $\delta  \odot  \Delta  \vdash_{\mathsf{GS} }  \mGLnt{X}$ or $\delta  \odot  \Delta  \mGLsym{;}  \Gamma  \vdash_{\mathsf{MS} }  \mGLnt{A}$
  with $\mathsf{CutRank} \, \mGLsym{(}  \Pi  \mGLsym{)} \, \mGLsym{>} \, 0$, then there is an algorithm which yields
  a cut-free proof $\Pi'$ of $\delta  \odot  \Delta  \vdash_{\mathsf{GS} }  \mGLnt{X}$ or
  $\delta  \odot  \Delta  \mGLsym{;}  \Gamma  \vdash_{\mathsf{MS} }  \mGLnt{A}$ respectively.
\end{restatable}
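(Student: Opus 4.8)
The plan is to prove the theorem as a grade-preserving Gentzen \emph{Hauptsatz}, by well-founded induction on the cut rank $c = \mathsf{CutRank}(\Pi)$, treating the $\GS$ and $\MS$ judgment forms simultaneously by mutual induction so that all three kinds of (multi)cut — a purely graded cut, a graded cut feeding a mixed sequent, and a linear cut inside $\MS$ — can be dispatched by the matching three cases of Lemma~\ref{lemma:cut_reduction_for_mgl}. The base case $c = 0$ is immediate, since $\Pi$ is then already cut-free and proves the required sequent with its grade vector $\delta$ untouched. For the step $c > 0$, the goal is to remove every cut of rank $c$ — every (multi)cut whose cut-formula $X$ (or $A$) satisfies $\mathsf{Rank}(X) = c - 1$ — while keeping the conclusion, and in particular the grade vector $\delta$, \emph{exactly} fixed; the resulting proof then has cut rank ${<}\, c$ and the induction hypothesis finishes it off.

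The crucial point is that Lemma~\ref{lemma:cut_reduction_for_mgl} preserves grades on the nose: from premises of cut rank $\leq \mathsf{Rank}(X)$ it returns a proof whose conclusion is the \emph{same} sequent $(\delta_{1}, \delta \boxast [\delta_{2}^{n}], \delta_{3}) \odot (\Delta_{1}, \Delta_{2}, \Delta_{3}) \vdash Y$ that the eliminated multicut already had, again with cut rank $\leq \mathsf{Rank}(X)$. I would exploit this by always acting on a \emph{topmost} rank-$c$ multicut, i.e.\ one neither of whose premise subderivations contains a rank-$c$ cut. For such a cut both premises have cut rank $\leq c - 1 = \mathsf{Rank}(X)$, so the lemma applies and yields a subproof of cut rank $\leq c - 1$ proving precisely the conclusion of that cut. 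Because the conclusion (grades included) is literally unchanged, this subproof splices back into $\Pi$ seamlessly: every rule instance below it still typechecks verbatim, so the overall root sequent $\delta \odot \Delta \vdash_{\GS} X$ (or $\delta \odot \Delta; \Gamma \vdash_{\MS} A$) is retained unchanged. Ordinary cuts are handled as the $n = 1$ instances of the multicut rules, so they need no separate treatment.

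Since the replacement subproof has cut rank $\leq c - 1$ it contains no rank-$c$ cut, and the remainder of $\Pi$ is untouched, so the number of rank-$c$ multicuts strictly decreases at each reduction. Iterating over this finite measure eliminates all rank-$c$ cuts and produces a proof of the \emph{same} sequent with cut rank $<\, c$; the outer induction on $c$ then delivers a cut-free proof of exactly $\delta \odot \Delta \vdash_{\GS} X$ (resp.\ the $\MS$ form), which is the algorithm the theorem asserts.

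I expect the genuine obstacle to be grade-preservation rather than termination. The tempting shortcut — iterating Lemma~\ref{lemma:decreasing_order_of_mgl} until the cut rank reaches $0$ — is exactly what must be avoided, because that lemma lowers the cut rank everywhere at once and, in doing so, recursively relaxes the grades of a cut's premises, so it can only deliver a cut-free proof at some strictly larger $\delta' \geq \delta$; and since the approximation rules \mGLdruleGSTXXSubName{}/\mGLdruleMSTXXSubName{} can only \emph{increase} grades, there is no route back down from $\delta'$ to $\delta$. Attacking topmost maximal cuts directly with Lemma~\ref{lemma:cut_reduction_for_mgl} sidesteps this entirely: because we never reduce the cut rank \emph{inside} a premise before cutting, we never invoke grade relaxation, and the exact-grade conclusion of Lemma~\ref{lemma:cut_reduction_for_mgl} propagates unchanged to the root. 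The only care required is in the measure — a count of rank-$c$ multicuts suffices for the inner loop once one checks that splicing a reduced subproof back cannot raise the rank-$c$ count elsewhere, which it cannot, as only the chosen subderivation is replaced and its replacement has cut rank $\leq c - 1$.
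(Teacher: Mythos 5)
Your argument is sound and is, in substance, the paper's: an outer induction on $\mathsf{CutRank}(\Pi)$ whose step removes maximal-rank cuts by applying Lemma~\ref{lemma:cut_reduction_for_mgl} at a \emph{topmost} such cut, where both premises automatically have cut rank $\leq \mathsf{Rank}(X)$. The organizational difference is that the paper packages this step as Lemma~\ref{lemma:decreasing_order_of_mgl}, proved by induction on $\mathsf{Depth}(\Pi)$, whereas you run an inner induction on the count of rank-$c$ multicuts; both measures terminate for the same reason you give (the spliced-in replacement has cut rank $\leq c-1$, so no new rank-$c$ cuts can appear below or beside it). Where you part company with the paper is in insisting that Lemma~\ref{lemma:decreasing_order_of_mgl} must be avoided. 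Your diagnosis of its \emph{statement} is accurate: it promises only a proof at some $\delta'$ with $\delta \leq \delta'$, and since \mGLdruleGSTXXSubName{}/\mGLdruleMSTXXSubName{} move grades upward only (smaller vector in the premise, larger in the conclusion), a genuinely relaxed $\delta'$ could never be brought back down to $\delta$, so the theorem's exact-grade claim would not follow ``immediately'' from the statement alone. But the slack is vestigial: the paper's proof of that lemma (Appendix~\ref{subsec:decreasing_order_of_mgl}) constructs at each stage proofs of \emph{the same sequents}, precisely because Lemma~\ref{lemma:cut_reduction_for_mgl} pins its conclusion down on the nose --- grade vector $\delta \boxast [\,{\delta_2}^{n}\,]$ included --- and the non-cut cases (including \mGLdruleGSTXXSubName{} itself) simply reapply the final rule. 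So your grade-preservation concern is a fair criticism of how the intermediate lemma is \emph{stated}, not of the argument behind it; your inlined reorganization proves the same theorem by the same mechanism, with the preserved-grade invariant made explicit rather than left implicit.
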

\begin{proof}
  Follows immediately by induction on
  $\mathsf{CutRank} \, \mGLsym{(}  \Pi  \mGLsym{)}$ and the previous lemma.
\end{proof}

\begin{restatable}[Subformula property]{lemma}{subformula}
\label{lemma:subformula}
\begin{enumerate}
  \item[]
  \item
    (Graded) Every formula occurring in a cut-free proof $\Pi$ of a judgment, $\delta  \odot  \Delta  \vdash_{\mathsf{GS} }  \mGLnt{X}$,
    consists of subformulas of the formulas occurring in $\delta  \odot  \Delta  \vdash_{\mathsf{GS} }  \mGLnt{X}$.
  \item
    (Mixed) Every formula occurring in a cut-free proof $\Pi$ of a judgment, $\delta  \odot  \Delta  \mGLsym{;}  \Gamma  \vdash_{\mathsf{MS} }  \mGLnt{A}$,
    consists of subformulas of the formulas occurring in $\delta  \odot  \Delta  \mGLsym{;}  \Gamma  \vdash_{\mathsf{MS} }  \mGLnt{A}$.
\end{enumerate}
\end{restatable}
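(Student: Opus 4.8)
The plan is to prove both clauses simultaneously by induction on the structure of the cut-free derivation $\Pi$. By cut elimination (Theorem~\ref{theorem:cut_elimination_of_mgl}) we may assume without loss of generality that the proof in hand is already cut-free, so that every inference in $\Pi$ is an instance of one of the logical, structural, or modal rules, but never a cut. The engine of the argument is the \emph{local} subformula property of the remaining rules: for each rule other than cut, every formula appearing in a premise is either a context or side formula carried unchanged into the conclusion, or an immediate subformula of the principal formula introduced by the rule. Granting this for every rule, the global statement follows immediately by transitivity of the subformula relation along the induction.

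Before the rule analysis I would fix the notion of subformula, which must range over \emph{both} syntactic categories at once, since the modalities $\mathsf{Lin}$ and $\mathsf{Grd}_r$ bridge the graded and linear grammars. Concretely, the subformulas of $\mathsf{Lin}\, A$ are $\mathsf{Lin}\, A$ itself together with all (linear) subformulas of $A$, and dually the subformulas of $\mathsf{Grd}_r\, X$ are $\mathsf{Grd}_r\, X$ itself together with all (graded) subformulas of $X$; the remaining clauses are the evident structural ones for $\otimes$, $\multimap$, $\boxtimes$, and the units $\mathsf{I}$ and $\mathsf{J}$. Because $\GS$ and $\MS$ are defined by mutual induction, with their rule sets interdependent precisely through these modal rules, the two clauses of the lemma cannot be separated and must be established by one simultaneous induction.

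The rule analysis then splits into three easy families. For the identity rules the only formulas present are the endformula and its copy, which are trivially subformulas of the conclusion. For the purely grade-manipulating and structural rules (\textsc{weak}, \textsc{cont}, \textsc{ex}, \textsc{gex}, and subsumption \textsc{sub} in both fragments) no new formula is ever created: the collection of formulas in the premise coincides with that of the conclusion up to the duplication performed by contraction and the reordering performed by exchange, so the induction hypothesis transfers verbatim. For the logical rules ($\otimes_R$, $\otimes_L$, $\multimap_R$, $\multimap_L$, the unit rules, the $\boxtimes$ rules, and their $\MS$ siblings) the formulas appearing newly in the premises are exactly the components of the principal formula formed in the conclusion, while all context formulas pass through unchanged; applying the induction hypothesis to each premise and then adjoining the single principal formula closes the case.

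The only genuinely nontrivial cases are the modal rules $\mathsf{Lin}_R$, $\mathsf{Lin}_L$, $\mathsf{Grd}_R$, and $\mathsf{Grd}_L$, which cross between the two fragments, and this is exactly where the cross-category subformula relation and the simultaneous induction earn their keep. For instance, in $\mathsf{Grd}_R$ the graded formula $X$ occurring on the right of the $\GS$ premise is, by definition, a subformula of the linear principal formula $\mathsf{Grd}_r\, X$ formed in the $\MS$ conclusion; the induction hypothesis applied to the $\GS$ premise then guarantees that every formula in that subderivation is a subformula of $X$ or of a context formula, and composing with the clause just noted yields the claim for the $\MS$ conclusion. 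The cases $\mathsf{Lin}_R$, $\mathsf{Lin}_L$, and $\mathsf{Grd}_L$ are handled symmetrically. I expect this cross-fragment bookkeeping to be the main (indeed the only) point requiring care; once the subformula relation is set up to span both grammars, the remainder of the proof is routine.
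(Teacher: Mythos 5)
Your proposal is correct and follows essentially the same route as the paper's proof: the paper likewise defines a subformula function spanning both grammars (with $Sf(\mathsf{Lin}\,A)$ containing the linear subformulas of $A$ and $Sf(\mathsf{Grd}_r\,X)$ the graded subformulas of $X$), formalises the sets of formulas occurring in a judgment and in a proof, and then argues by mutual induction on cut-free derivations, with the axiom cases trivial, the logical/structural cases routine (the paper works $\boxtimes_L$ as its representative example), and mutual induction invoked precisely at the fragment-crossing right rules for $\mathsf{Lin}$ and $\mathsf{Grd}$. The only cosmetic differences are that the paper packages the bookkeeping into explicit $Sf$ and $\mathcal{F}$ operators, and that your appeal to cut elimination is unnecessary since cut-freeness is already hypothesised in the statement.
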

\begin{proof}
  By induction on $\Pi$ (See
  \citeappendix{subsec:proof_of_subformaul_property}{C.3}
  for proof).
\end{proof}

Lastly, we define an equational theory for $\mGL{}$:

\begin{definition}[Equational theory $\equiv$, subset]
  An equational theory on derivations accounts for
  equalities between proofs of the same sequent
  arising from the graded structure (where the terms
  are the same but the structure of the proof tree
  differs), as well as cut elimination,
  i.e., in $\mathsf{GS}$, if cut elimination on
  derivation $\Pi_{{\mathrm{1}}}$ of $\delta  \odot  \Delta  \vdash_{\mathsf{GS} }  \mGLnt{t}  \mGLsym{:}  \mGLnt{X}$
  yields the cut-free derivation of $\Pi_{{\mathrm{2}}}$ for $\delta  \odot  \Delta  \vdash_{\mathsf{GS} }  \mGLnt{t'}  \mGLsym{:}  \mGLnt{X}$
  then the equational theory has $\Pi_{{\mathrm{1}}} \equiv \Pi_{{\mathrm{2}}}$, and similarly for $\MS{}$.

  As a
  sample of two equations from the $\mathsf{GS}$ fragment, the following shows an
  equation leveraging the commutativity of contraction, and another on
  the interaction between weakening and contraction leveraging the left-unit
  of semiring addition:
  \begin{gather*}
  \setlength\arraycolsep{0.1em}
  \hspace{-2em}
  \begin{array}{rcl}
    \multirow{3}{*}{
 $ \inferrule*[right=$\mGLdruleGSTXXContName{}$]
  {
\inferrule*[right=$\mGLdruleGSTXXWeakName{}$]
           {\delta_{{\mathrm{1}}}  \mGLsym{,}  \mGLnt{r}  \mGLsym{,}  \delta_{{\mathrm{2}}}  \odot  \Delta_{{\mathrm{1}}}  \mGLsym{,}  \mGLnt{X}  \mGLsym{,}  \Delta_{{\mathrm{2}}}  \vdash_{\mathsf{GS} }  \mGLnt{t}  \mGLsym{:}  \mGLnt{Y}}
           {\delta_{{\mathrm{1}}}  \mGLsym{,}  0  \mGLsym{,}  \mGLnt{r}  \mGLsym{,}  \delta_{{\mathrm{2}}}  \odot  \Delta_{{\mathrm{1}}}  \mGLsym{,}  \mGLnt{X}  \mGLsym{,}  \mGLnt{X}  \mGLsym{,}  \Delta_{{\mathrm{2}}}  \vdash_{\mathsf{GS} }  \mGLnt{t}  \mGLsym{:}  \mGLnt{Y}}
  }
  {\delta_{{\mathrm{1}}}  \mGLsym{,}  0  +  \mGLnt{r}  \mGLsym{,}  \delta_{{\mathrm{2}}}  \odot  \Delta_{{\mathrm{1}}}  \mGLsym{,}  \mGLnt{X}  \mGLsym{,}  \Delta_{{\mathrm{2}}}  \vdash_{\mathsf{GS} }  \mGLnt{t}  \mGLsym{:}  \mGLnt{Y}}$
    }
    & \\
    & \equiv\;\; & \delta_{{\mathrm{1}}}  \mGLsym{,}  \mGLnt{r}  \mGLsym{,}  \delta_{{\mathrm{2}}}  \odot  \Delta_{{\mathrm{1}}}  \mGLsym{,}  \mGLnt{X}  \mGLsym{,}  \Delta_{{\mathrm{2}}}  \vdash_{\mathsf{GS} }  \mGLnt{t}  \mGLsym{:}  \mGLnt{Y} 
    \\
   &  & \hspace{13em} \textnormal{\large{(\textsc{contr-unitL})}} \\ \\
    %
    \multirow{3}{*}{$
\inferrule*[right=$\mGLdruleGSTXXExName{}$]
  {\delta_{{\mathrm{1}}}  \mGLsym{,}  \mGLnt{r}  \mGLsym{,}  \mGLnt{s}  \mGLsym{,}  \delta_{{\mathrm{2}}}  \odot  \Delta_{{\mathrm{1}}}  \mGLsym{,}  \mGLnt{X}  \mGLsym{,}  \mGLnt{X}  \mGLsym{,}  \Delta_{{\mathrm{2}}}  \vdash_{\mathsf{GS} }  \mGLnt{t}  \mGLsym{:}  \mGLnt{Y}}
  {\inferrule*[right=$\mGLdruleGSTXXContName{}$]
    {
    \delta_{{\mathrm{1}}}  \mGLsym{,}  \mGLnt{s}  \mGLsym{,}  \mGLnt{r}  \mGLsym{,}  \delta_{{\mathrm{2}}}  \odot  \Delta_{{\mathrm{1}}}  \mGLsym{,}  \mGLnt{X}  \mGLsym{,}  \mGLnt{X}  \mGLsym{,}  \Delta_{{\mathrm{2}}}  \vdash_{\mathsf{GS} }  \mGLnt{t}  \mGLsym{:}  \mGLnt{Y}
    }
    {
     \delta_{{\mathrm{1}}}  \mGLsym{,}  \mGLnt{s}  +  \mGLnt{r}  \mGLsym{,}  \delta_{{\mathrm{2}}}  \odot  \Delta_{{\mathrm{1}}}  \mGLsym{,}  \mGLnt{X}  \mGLsym{,}  \Delta_{{\mathrm{2}}}  \vdash_{\mathsf{GS} }  \mGLnt{t}  \mGLsym{:}  \mGLnt{Y}
    }
  }$}
  &  &  \multirow{3}{*}{$
 \inferrule*[right=$\mGLdruleGSTXXContName{}$]
  {
    \delta_{{\mathrm{1}}}  \mGLsym{,}  \mGLnt{r}  \mGLsym{,}  \mGLnt{s}  \mGLsym{,}  \delta_{{\mathrm{2}}}  \odot  \Delta_{{\mathrm{1}}}  \mGLsym{,}  \mGLnt{X}  \mGLsym{,}  \mGLnt{X}  \mGLsym{,}  \Delta_{{\mathrm{2}}}  \vdash_{\mathsf{GS} }  \mGLnt{t}  \mGLsym{:}  \mGLnt{Y}
  }
  {
   \delta_{{\mathrm{1}}}  \mGLsym{,}  \mGLnt{r}  +  \mGLnt{s}  \mGLsym{,}  \delta_{{\mathrm{2}}}  \odot  \Delta_{{\mathrm{1}}}  \mGLsym{,}  \mGLnt{X}  \mGLsym{,}  \mGLnt{X}  \mGLsym{,}  \Delta_{{\mathrm{2}}}  \vdash_{\mathsf{GS} }  \mGLnt{t}  \mGLsym{:}  \mGLnt{Y}
  }$} \\
    & \equiv\;\; & \\
    & &
  \end{array}
\tag{\textsc{contr-sym}}
\end{gather*}

\end{definition}
\citeappendix{sec:full-eq-theory}{A.1} gives the full definition of the equational theory.

\begin{remark}[``$\beta\eta$-equalities'' and ``Triangle identities'' via cut reduction]
  One might wonder where $\beta$-equalities are in the above
  equational theory, e.g., that $(\lambda x . l) l'$ in
  $\mathsf{MS}$ is equal to the cut $[l'/x] l$. Such $\beta$-equalities are provided by the cut elimination procedure, which
  reduces away interacting pairs of right and left formulas (the
  principal vs. principal cases).

  Similarly, $\eta$-equalities are equivalent to the identity
  expansion part of cut elimination procedure (where the cut of
  an identity axiom is transformed into an interacting
  left and right pair, with identity axioms expanded towards the leaves).

The internal derivations for the graded equivalent of the `triangle
identities' (that one usually has associated with an adjunction)
are also handled in the cut elimination procedure. The main feature
of the derivations for both identities is that after one step the left and right rules for the
modal operators match up. This leads to consecutive principal vs. principal
cases where rules for the interacting left and right pairs in the two subproofs are
removed by the reduction step.
\end{remark}





\section{Model}
\label{sec:mGL-seq_denotational_model}
\newcommand{\diag}[0]{\mathsf{diag}}
\newcommand{\ixcat}{\cat{I}}
\newcommand{\basecat}{\cat{C}}
\newcommand{\functorCategory}[2]{[#1, #2]}
\newcommand{\Endo}[1]{\functorCategory{#1}{#1}}

\noindent
We detail a denotational model for $\mGL{}$ which is based on an
adjoint decomposition of graded comonads. We introduce key definitions
as needed.

A \emph{graded comonad} can be
summarised as a colax monoidal functor
$\Box : \ixcat{} \mto\ \Endo{\basecat{}}$
where $\ixcat{}$ is a preordered monoid
$(\ixcat, 1, *, \leq)$
treated as a monoidal category
and $\Endo{\cat{C}}$ is the category of endofunctors on $\cat{C}$~\cite{DBLP:conf/icalp/PetricekOM13,DBLP:conf/icfp/PetricekOM14}.
Colax monoidality of $\Box$ means that the laws of a monoidal functor become 2-cells, providing the graded comonad operations:
  \begin{align*}
    \hspace{-1em}\begin{array}{c}
{\xymatrix@C=2.7em@R=1.8em{
1 \ar[d]_-{1} \ar@/^1pc/[dr]^{\mathsf{Id}} & \\
\ixcat{} \ar[r]_-{\Box}
\ar@{}[ur]^(.2){}="a"^(.65){}="b" \ar@{=>}@<-0.7ex>^{\varepsilon}
"a";"b"
 & \Endo{\basecat}
}}
\quad 
{\xymatrix@C=2.7em@R=1.8em{
\ixcat{} \times \ixcat{}
\ar[d]_-{*}
\ar[r]^-{\Box \times \Box}
& \Endo{\basecat} \times \Endo{\basecat}
\ar[d]^-{\circ}
\\
\ixcat{} \ar[r]_-{\Box}
\ar@{}[ur]^(.2){}="a"^(.85){}="b"
\ar@{=>}@<-0.8ex>^{\delta}"a";"b"
 & \Endo{\basecat}
}}
\end{array}
\end{align*}
which are thus natural transformations
$\varepsilon_{A} : \Box _{ 1 } \mGLnt{A} \rightarrow \mGLnt{A}$.
and
$\delta_{r, s, A} : \Box _{ \mGLsym{(}  \mGLnt{r}  *  \mGLnt{s}  \mGLsym{)} } \mGLnt{A} \rightarrow \Box _{ \mGLnt{r} } \mGLsym{(}   \Box _{ \mGLnt{s} } \mGLnt{A}   \mGLsym{)}$.

Fujii et al.~\cite{Fujii:2016b} gave a formal theory for graded monads,
which can be easily dualised to graded comonads,
showing that in an analogous way to an ordinary comonad, every graded
comonad can be decomposed into an adjunction
$\func{Mny} \dashv \func{Lin} : \cat{M} \mto \cat{C} $
and (key to \emph{graded} comonads) a monoidal action
$\odot :  \mathcal{R} \times \cat{C} \mto \cat{C}$, and thus vice
versa:
\begin{lemma}(Resolution of a graded comonad~\cite{Katsumata:2018,Fujii:2016b})
  \label{lemma:adjoint-decomposition}
An adjunction $\func{L} \dashv \func{R} : \cat{M} \mto \cat{C} $ and a strict monoidal action $\odot :  \mathcal{R} \times \cat{C} \mto
\cat{C}$ together induce a graded comonad over the family of endofunctors defined by
$\Box_r = \func{L}(r \odot (\func{R} -)) : \cat{M} \mto \cat{M}$.
\end{lemma}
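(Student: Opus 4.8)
The plan is to equip $\Box$ with a colax monoidal structure built directly from the (co)unit of $\func{L} \dashv \func{R}$ and the strictness of the action $\odot$, and then to reduce each coherence law to a triangle identity of the adjunction together with naturality of $\eta$ and $\odot$. Throughout, write $\eta : \Id_{\cat{C}} \Rightarrow \func{R}\func{L}$ and $e : \func{L}\func{R} \Rightarrow \Id_{\cat{M}}$ for the unit and counit, so that the triangle identities read $e_{\func{L}} \circ \func{L}\eta = \id_{\func{L}}$ and $\func{R}e \circ \eta_{\func{R}} = \id_{\func{R}}$. The key structural fact I will lean on is that, because $\odot$ is a \emph{strict} monoidal action, $1 \odot X = X$ and $(r * s) \odot X = r \odot (s \odot X)$ hold on the nose, so every cell below is well-typed without inserting associator or unitor isomorphisms.

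First I would define the functor $\Box : \ixcat \to \Endo{\cat{M}}$: on objects by $\Box_r = \func{L}(r \odot (\func{R}-))$, and on the $\leq$-morphisms of $\ixcat$ by whiskering the monotonicity cells of the action, i.e.\ an inequality $r \leq s$ gives a natural transformation $r \odot - \Rightarrow s \odot -$ which, precomposed with $\func{R}$ and postcomposed with $\func{L}$, yields $\Box_r \Rightarrow \Box_s$; functoriality is immediate from functoriality of $\odot$. Next I would define the colax cells. For the counit, strictness gives $\Box_1 = \func{L}\func{R}$, and I set $\varepsilon = e : \Box_1 \Rightarrow \Id_{\cat{M}}$. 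For the comultiplication, strictness identifies $\Box_{r*s}A = \func{L}(r \odot (s \odot \func{R}A))$, and I set
\[
\delta_{r,s,A} \;=\; \func{L}\bigl(r \odot \eta_{s \odot \func{R}A}\bigr) : \func{L}(r \odot (s \odot \func{R}A)) \to \func{L}\bigl(r \odot \func{R}\func{L}(s \odot \func{R}A)\bigr) = \Box_r(\Box_s A),
\]
using that $\func{R}\Box_s A = \func{R}\func{L}(s \odot \func{R}A)$. Naturality of $\varepsilon$ and $\delta$ in $A$, and their compatibility with the $\leq$-cells, follows from naturality of $e$, $\eta$ and $\odot$.

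The substance of the proof is the three colax coherence laws. The left unit law $\varepsilon_{\Box_s A} \circ \delta_{1,s,A} = \id_{\Box_s A}$ unfolds, using $1 \odot f = f$, to $e_{\func{L}X} \circ \func{L}\eta_{X} = \id$ with $X = s \odot \func{R}A$, which is exactly the first triangle identity; the right unit law $\Box_r \varepsilon_A \circ \delta_{r,1,A} = \id_{\Box_r A}$ collapses, after pushing $\func{L}(r \odot -)$ outside, to $\func{R}e_A \circ \eta_{\func{R}A} = \id_{\func{R}A}$, the second triangle identity. Coassociativity, $\Box_r\delta_{s,t,A} \circ \delta_{r,s*t,A} = \delta_{r,s,\Box_t A} \circ \delta_{r*s,t,A}$, is the step I expect to carry the most bookkeeping: after using strict associativity to identify the common source $\Box_{r*s*t}A$ and expanding, both composites take the form $\func{L}(r \odot (-))$ applied to a composite of $\eta$-components, and the remaining equality is precisely the naturality square of $\eta$ at the morphism $s \odot \eta_{t \odot \func{R}A}$.

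Thus the entire lemma reduces to the two triangle identities and naturality of $\eta$, with no genuinely new content beyond careful tracking of where strictness of $\odot$ is invoked to keep the cells composable. I would present the final verification as three commuting diagrams, each square of which is discharged by one of: naturality of $\eta$, functoriality of $\odot$ and $\func{L}$, or a triangle identity. The only real obstacle is organisational rather than conceptual: one must be disciplined about which occurrences of objects are being identified on the nose via strictness, since a sloppy identification would make a cell appear to type-check while silently smuggling in a non-strict coherence isomorphism.
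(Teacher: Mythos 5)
Your proof is correct and is exactly the standard construction: $\varepsilon$ is the counit via the strict identification $\Box_1 = \func{L}\func{R}$, the comultiplication is $\delta_{r,s,A} = \func{L}(r \odot \eta_{s \odot \func{R}A})$, the two counit laws reduce to the triangle identities of the adjunction, and coassociativity reduces to the naturality square of $\eta$ at the morphism $s \odot \eta_{t \odot \func{R}A}$. Note that the paper itself offers no proof of this lemma---it is imported by citation from Fujii et al.\ and Katsumata---so your argument simply supplies, correctly, the construction those sources establish, dualized from graded monads to graded comonads.
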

Along with some additional structure relating to substructurality
(see below), this result provides a model of $\mGL{}$ with $\cat{C}$ providing
a model for $\mathsf{GS}$ derivations, $\cat{M}$ providing a model
for $\mathsf{MS}$ derivations, the type constructor $\func{Grd}_r$ transporting
from $\mathsf{GS}$ to $\mathsf{MS}$ modelled
by $\func{L} (r \odot -) : \cat{C} \rightarrow \cat{M}$,
and type constructor $\func{Lin}$ transporting $\mathsf{MS}$
to $\mathsf{GS}$ modelled by $\func{R} : \cat{M} \rightarrow \cat{C}$.

However we need additional structure for the (sub)structural
behaviour of our logic. In the literature on graded modal type theories,
graded comonads are extended to \emph{graded exponential comonads} (sometimes called
 \emph{graded linear exponential comonads}~\cite{Katsumata:2018})
defined as a colax monoidal functor $\Box : \mathcal{R} \mto\ \Endo{\cat{M}}$
where $\mathcal{R}$ is a preordered semiring $\mGLsym{(}  \mathcal{R}  \mGLsym{,}  1  \mGLsym{,}  *  \mGLsym{,}  \mathsf{0}  \mGLsym{,}  +  \mGLsym{,}  \leq  \mGLsym{)}$
(viewed as a category), $\Endo{\cat{M}}$ is the
category of symmetric lax monoidal endofunctors on a symmetric
monoidal category $\cat{M}$, and $\Box$ has additional
symmetric lax monoidal structure for the additional
monoidality of $\mathcal{R}$ and $\Endo{\cat{M}}$~\cite{Gaboardi:2016}.
This additional structure provides natural transformations $w_A : \Box _{ \mathsf{0} } \mGLnt{A} \mto 1$ and $c_{r,s,A} : \Box _{ \mGLsym{(}  \mGLnt{r}  +  \mGLnt{s}  \mGLsym{)} } \mGLnt{A} \mto \mGLsym{(}   \Box _{ \mGLnt{r} } \mGLnt{A}   \mGLsym{)}  \otimes  \mGLsym{(}   \Box _{ \mGLnt{s} } \mGLnt{A}   \mGLsym{)}$ capturing (graded) weakening and contraction, subject
to comonoidal coherence conditions.
This additional structure can be induced by the adjoint decomposition
given an \emph{exponential action}:

\begin{definition}[Exponential action]
  \label{def:strict-action}
  Given a preordered semiring $\mGLsym{(}  \mathcal{R}  \mGLsym{,}  1  \mGLsym{,}  *  \mGLsym{,}  \mathsf{0}  \mGLsym{,}  +  \mGLsym{,}  \leq  \mGLsym{)}$ and a
  symmetric monoidal category $(\cat{C}, \mathsf{J}, \boxtimes)$, we
  say that a bifunctor
  $\odot : \mathcal{R} \times \mGLnt{C} \mto \cat{C}$ is
  \begin{enumerate}

  \item a \emph{strict action} (a \emph{strict graded comonad}), if it
    satisfies the following equalities:
    \begin{align*}
    \begin{array}{rrll}
      \varepsilon_X :\  & 1  \odot  \mGLnt{X} & = & \mGLnt{X}\\
      \delta_{X,r,s} :\ & \mGLsym{(}  \mGLnt{r}  *  \mGLnt{s}  \mGLsym{)}  \odot  \mGLnt{X} & = & \mGLnt{r}  \odot  \mGLsym{(}  \mGLnt{s}  \odot  \mGLnt{X}  \mGLsym{)}
    \end{array}
    \end{align*}
    Note that we treat these equalities as strict natural transformations
    named $\varepsilon$ and $\delta$;

    \item \emph{symmetric lax monoidal} in the second argument if it has:
    \begin{gather*}
      \begin{array}{rrll}
        \qquad
        m_{\mathsf{J},r} :\ &  \mathsf{J}       &  \rightarrow & \mGLnt{r}  \odot  \mathsf{J}  \\
        m_{\boxtimes,r,X,Y} :\ & (\mGLnt{r}  \odot  \mGLnt{X}) \boxtimes (\mGLnt{r}  \odot  \mGLnt{Y}) & \rightarrow & \mGLnt{r}  \odot  \mGLsym{(}  \mGLnt{X}  \boxtimes  \mGLnt{Y}  \mGLsym{)}
    \end{array}
    \end{gather*}
    where $m_{\mathsf{J}}$ is the unit of $m_{\boxtimes}$ and $m_{\boxtimes}$ is associative
    and commutative up to isomorphism;

  \item \emph{symmetric colax monoidal} between $(\mathcal{R}, 0, +, \leq)$ and
    $(\cat{C}, \mathsf{J}, \boxtimes)$ in the first argument if it has natural transformations:
    \[
    \begin{array}{rrll}
      \mathsf{weak}_X : & 0  \odot  \mGLnt{X} & \rightarrow & \mathsf{J} \\
      \mathsf{contr}_{r,s,X} : & \mGLsym{(}  \mGLnt{r}  +  \mGLnt{s}  \mGLsym{)}  \odot  \mGLnt{X} & \rightarrow & (\mGLnt{r}  \odot  \mGLnt{X}) \boxtimes (\mGLnt{s}  \odot  \mGLnt{X})
    \end{array}
    \]
    where $\mathsf{weak}$ is the unit of $\mathsf{contr}$, e.g.
    $\rho_{r \odot X} \circ (id \boxtimes \func{weak}_X) \circ \func{contr}_{r,0,X} = id$
    with right unitor $\rho$, and
    $\mathsf{contr}$ is associative and commutative, i.e.,
    that $\mathsf{contr}_{r,s,X} = c \circ \mathsf{contr}_{s,r,X}$.

    Furthermore,
    these natural transformations must be preserved by the strict
    action and monoidal structure as described by the standard additional equations
    in Figure~\ref{fig:equations-strict-action}.

    \begin{figure}[t]
      \begin{center}
      {\scalebox{0.97}{\begin{minipage}{1\linewidth}
            \begin{align*}
              \hspace{3em}
              \begin{array}{c}
      \xymatrix@C=2em{
        0  \odot  \mGLnt{X} \ar@{=}[r] \ar[ddr]_{\mathsf{weak}_{\mGLnt{X}}}
             & \mGLsym{(}  0  *  \mGLnt{s}  \mGLsym{)}  \odot  \mGLnt{X} \ar@{=}[d]^{\delta_{X,0,s}} \\
             & 0  \odot  \mGLsym{(}  \mGLnt{s}  \odot  \mGLnt{X}  \mGLsym{)} \ar[d]^{\mathsf{weak}_{\mGLnt{s}  \odot  \mGLnt{X}}} \\
             & \mathsf{J}
      }
      \xymatrix@C=2em{
        0  \odot  \mGLnt{X} \ar@{=}[r] \ar[d]_{\mathsf{weak}_{X}}
         & \mGLsym{(}  \mGLnt{s}  *  0  \mGLsym{)}  \odot  \mGLnt{X} \ar@{=}[d]^{\delta_{X,s,0}} \\
        \mathsf{J} \ar[dr]_{m_{\mathsf{J},s}}  & \mGLnt{s}  \odot  \mGLsym{(}  0  \odot  \mGLnt{X}  \mGLsym{)} \ar[d]^{s \odot \mathsf{weak}_{X}} \\
            & \mGLnt{s}  \odot  \mathsf{J}
      }
      \\
      \xymatrix@C=6em{
        \mGLsym{(}  \mGLnt{r}  *  \mGLsym{(}  \mGLnt{s_{{\mathrm{1}}}}  +  \mGLnt{s_{{\mathrm{2}}}}  \mGLsym{)}  \mGLsym{)}  \odot  \mGLnt{X} \ar@{=}[r]
                              \ar[d]_{\delta_{\mGLnt{r},\mGLnt{s_{{\mathrm{1}}}}  +  \mGLnt{s_{{\mathrm{2}}}},X}}
          & \mGLsym{(}  \mGLsym{(}  \mGLnt{r}  *  \mGLnt{s_{{\mathrm{1}}}}  \mGLsym{)}  +  \mGLsym{(}  \mGLnt{r}  *  \mGLnt{s_{{\mathrm{2}}}}  \mGLsym{)}  \mGLsym{)}  \odot  \mGLnt{X} \ar[d]^{\mathsf{contr}_{\mGLnt{r}  *  \mGLnt{s_{{\mathrm{1}}}}, \mGLnt{r}  *  \mGLnt{s_{{\mathrm{2}}}},X}} \\
        \mGLnt{r}  \odot  \mGLsym{(}  \mGLsym{(}  \mGLnt{s_{{\mathrm{1}}}}  +  \mGLnt{s_{{\mathrm{2}}}}  \mGLsym{)}  \odot  \mGLnt{X}  \mGLsym{)}
                              \ar[d]_{r \odot \mathsf{contr}_{\mGLnt{s_{{\mathrm{1}}}},\mGLnt{s_{{\mathrm{2}}}},X}}
          & \mGLsym{(}  \mGLnt{r}  *  \mGLnt{s_{{\mathrm{1}}}}  \mGLsym{)}  \odot  \mGLnt{X}   \boxtimes   \mGLsym{(}  \mGLnt{r}  *  \mGLnt{s_{{\mathrm{2}}}}  \mGLsym{)}  \odot  \mGLnt{X}
                              \ar[d]^{\delta_{r,\mGLnt{s_{{\mathrm{1}}}},X}\ \boxtimes\ \delta_{r,\mGLnt{s_{{\mathrm{2}}}},X}} \\
        \mGLnt{r}  \odot  \mGLsym{(}  \mGLsym{(}  \mGLnt{s_{{\mathrm{1}}}}  \odot  \mGLnt{X}  \mGLsym{)}  \boxtimes  \mGLsym{(}  \mGLnt{s_{{\mathrm{2}}}}  \odot  \mGLnt{X}  \mGLsym{)}  \mGLsym{)}
        & \ar[l]^{m_{\boxtimes, r, \mGLnt{s_{{\mathrm{1}}}}  \odot  \mGLnt{X}, \mGLnt{s_{{\mathrm{2}}}}  \odot  \mGLnt{X}}}
        \mGLnt{r}  \odot  \mGLsym{(}  \mGLnt{s_{{\mathrm{1}}}}  \odot  \mGLnt{X}  \mGLsym{)}  \boxtimes  \mGLnt{r}  \odot  \mGLsym{(}  \mGLnt{s_{{\mathrm{2}}}}  \odot  \mGLnt{X}  \mGLsym{)}
      }
      \\\\
      \xymatrix{
         \mGLsym{(}  \mGLsym{(}  \mGLnt{s_{{\mathrm{1}}}}  +  \mGLnt{s_{{\mathrm{2}}}}  \mGLsym{)}  *  \mGLnt{r}  \mGLsym{)}  \odot  \mGLnt{X} \ar@{=}[r]
                              \ar[d]_{\delta_{\mGLnt{s_{{\mathrm{1}}}}  +  \mGLnt{s_{{\mathrm{2}}}},\mGLnt{r},X}}
          & \mGLsym{(}  \mGLsym{(}  \mGLnt{s_{{\mathrm{1}}}}  *  \mGLnt{r}  \mGLsym{)}  +  \mGLsym{(}  \mGLnt{s_{{\mathrm{2}}}}  *  \mGLnt{r}  \mGLsym{)}  \mGLsym{)}  \odot  \mGLnt{X} \ar[d]^{\mathsf{contr}_{\mGLnt{s_{{\mathrm{1}}}}  *  \mGLnt{r}, \mGLnt{s_{{\mathrm{2}}}}  *  \mGLnt{r},X}} \\
        \mGLsym{(}  \mGLnt{s_{{\mathrm{1}}}}  +  \mGLnt{s_{{\mathrm{2}}}}  \mGLsym{)}  \odot  \mGLsym{(}  \mGLnt{r}  \odot  \mGLnt{X}  \mGLsym{)}
                              \ar[d]_{\mathsf{contr}_{\mGLnt{s_{{\mathrm{1}}}},\mGLnt{s_{{\mathrm{2}}}},{\mGLnt{r}  \odot  \mGLnt{X}}}}
          & \mGLsym{(}  \mGLnt{s_{{\mathrm{1}}}}  *  \mGLnt{r}  \mGLsym{)}  \odot  \mGLnt{X}   \boxtimes   \mGLsym{(}  \mGLnt{s_{{\mathrm{2}}}}  *  \mGLnt{r}  \mGLsym{)}  \odot  \mGLnt{X}
                              \ar[d]^{\delta_{\mGLnt{s_{{\mathrm{1}}}},r,X}\ \boxtimes\ \delta_{\mGLnt{s_{{\mathrm{2}}}},r,X}} \\
        \mGLsym{(}  \mGLnt{s_{{\mathrm{1}}}}  \odot  \mGLsym{(}  \mGLnt{r}  \odot  \mGLnt{X}  \mGLsym{)}  \mGLsym{)}  \boxtimes  \mGLsym{(}  \mGLnt{s_{{\mathrm{2}}}}  \odot  \mGLsym{(}  \mGLnt{r}  \odot  \mGLnt{X}  \mGLsym{)}  \mGLsym{)} \ar@{=}[r]
          & \mGLsym{(}  \mGLnt{s_{{\mathrm{1}}}}  \odot  \mGLsym{(}  \mGLnt{r}  \odot  \mGLnt{X}  \mGLsym{)}  \mGLsym{)}  \boxtimes  \mGLsym{(}  \mGLnt{s_{{\mathrm{2}}}}  \odot  \mGLsym{(}  \mGLnt{r}  \odot  \mGLnt{X}  \mGLsym{)}  \mGLsym{)}
      }
      \end{array}
    \end{align*}
      \end{minipage}}}
      \end{center}
      \caption{Further
        equations of a strict exponential action, interacting the colax symmetric
        monoidal structure, strict action, and (strict) monoidality.}
        \label{fig:equations-strict-action}
    \end{figure}

  \end{enumerate}
\end{definition}
 If we have all of the above properties then we refer to $\odot$ as an \emph{exponential action}. This terminology recalls the
\emph{exponential action} of Brunel et al.~\cite{Brunel:2014} which is
the same as the above but where strictness is instead
laxness in their definition. Our definition is also similar to linear exponential graded comonads (see e.g.,~\cite{Katsumata:2018,Gaboardi:2016}), but here the graded comonad is uncurried (in the form of an action) and has equalities for its natural transformations (strictness).

We define a \emph{strict exponential action} to be an exponential action as above but where the monoidal structure
$m_{\mathsf{J}}$ and $m_{\boxtimes}$ is also strict, where for clarity (in the appendix) we sometimes orient the equality as a morphism, where in the opposite direction we denote these morphisms by $n_{\mathsf{J},r}$ and $n_{\boxtimes,r,X,Y}$ respectively. Strictness of the monoidal structure
is needed for soundness of our model.

We now give the definition of the model of $\mGL{}$, where we now
use the opposite category $\op{\mathcal{R}}$ to capture the correct
polarity of the approximation rules.
\begin{definition}[\mGLL{} model]
  \label{def:mGLL-adjoint-model}
  Suppose $(\cat{C}, \mathsf{J}, \boxtimes)$ and $(\cat{M}, \mathsf{I}, \otimes)$
  are symmetric monoidal
  categories, where $\cat{M}$ is symmetric monoidal closed (with exponents $\multimap$),
  and $\mGLsym{(}  \mathcal{R}  \mGLsym{,}  1  \mGLsym{,}  *  \mGLsym{,}  \mathsf{0}  \mGLsym{,}  +  \mGLsym{,}  \leq  \mGLsym{)}$ is a preordered semiring.
  Then a \emph{\mGLL{} model} is a symmetric monoidal adjunction
  $\func{Mny} \dashv \func{Lin} : \cat{M} \mto \cat{C}$ along with an exponential action
  $\odot : \op{\mathcal{R}} \times \mathcal{C} \mto \mathcal{C}$.
\end{definition}
Thus an $\mGL{}$ model is essentially an LNL model with a strict action.
However, whilst Benton's LNL models are initially stated to require
that $\cat{M}$ is Cartesian closed, he goes on to show that
Cartesian properties are induced for the Eilenberg-Moore
category of $!$-coalgebras for a symmetric monoidal category~\cite{Benton:1994}.
In our setting, the Cartesian structure is not needed since the
$\MS$ logic is a mix of graded and linear logic,
rather than Cartesian and linear logic. That is, graded propositions
do not have arbitrary weakening and contraction, but instead these
structural rules are controlled by grades (and corresponding
underlying categorical structure~\cite{Gaboardi:2016,Katsumata:2018}). Therefore,
a symmetric monoidal closed $\cat{M}$ suffices.

From Definition~\ref{def:mGLL-adjoint-model}, we define our denotational
model of $\mGL{}$:
\begin{definition}[Interpretation of \mGLL{} Logic.]
  Given a \mGLL{} model (Def.~\ref{def:mGLL-adjoint-model}) (with $\func{Mny} \dashv \func{Lin} :
  \cat{M} \mto \cat{C} $ and $\odot : \op{\mathcal{R}} \times \mathcal{C} \mto
  \mathcal{C}$),  we interpret by two mutually
  defined interpretations $\interp{-}^{\GS}$ and $\interp{-}^{\MS{}}$
  on types and proofs (derivations):
  \begin{itemize}
  \item For every $\GS{}$ type $X$ there is an object $\interp{ \mGLnt{X} }^{\GS} \in \cat{C}$ and
    for every $\MS{}$ type $A$ there is an object $\interp{ \mGLnt{A} }^{\MS} \in \cat{M}$, mutually defined inductively as:
    \begin{align*}
      \begin{array}{rl}
      \interp{ \mathsf{J} }^{\GS} & = \mathsf{J} \\
      \interp{ \mGLnt{X}  \boxtimes  \mGLnt{Y} }^{\GS} & = \interp{ \mGLnt{X} }^{\GS} \boxtimes \interp{ \mGLnt{Y} }^{\GS} \\
      \interp{ \mathsf{Lin} \, \mGLnt{A} }^{\GS} & = \Lin \interp{ \mGLnt{A} }^\MS \\
      \quad & \quad
      \end{array}
      \quad & \quad
      \begin{array}{rl}
      \interp{ \mathsf{I} }^\MS & = \mathsf{I} \\
      \interp{ \mGLnt{A}  \otimes  \mGLnt{B} }^\MS & = \interp{ \mGLnt{A} }^\MS \otimes \interp{ \mGLnt{B} }^\MS \\
      \interp{ \mGLnt{A}  \multimap  \mGLnt{B} }^\MS & = \interp{ \mGLnt{A} }^\MS \multimap \interp{ \mGLnt{B} }^\MS \\
      \interp{  \mathsf{Grd} _{ \mGLnt{r} }\, \mGLnt{X}  }^\MS & = \Mny (r \odot \interp{ \mGLnt{X} }^{\GS} )
      \end{array}
    \end{align*}

   \item For every proof $\Pi$ of a $\GS{}$ sequent
    $(  \mGLnt{r_{{\mathrm{1}}}}  \mGLsym{,} \, ... \, \mGLsym{,}  \mGLnt{r_{\mGLmv{n}}}  )   \odot   ( \mGLmv{x_{{\mathrm{1}}}}  \mGLsym{:}  \mGLnt{X_{{\mathrm{1}}}}  \mGLsym{,} \, ... \, \mGLsym{,}  \mGLmv{x_{\mGLmv{n}}}  \mGLsym{:}  \mGLnt{X_{\mGLmv{n}}} )   \vdash_{\mathsf{GS} }  \mGLnt{t}  \mGLsym{:}  \mGLnt{X}$
    there is a morphism in the category $\cat{C}$:
    \[
      {\interp{ \Pi }^{\GS}} : (\mGLnt{r_{{\mathrm{1}}}}  \odot   \interp{ \mGLnt{X_{{\mathrm{1}}}} }^{\mathsf{GS} }) \boxtimes \ldots \boxtimes (\mGLnt{r_{\mGLmv{n}}}  \odot   \interp{ \mGLnt{X_{\mGLmv{n}}} }^{\mathsf{GS} }) \mto \interp{ \mGLnt{X} }
      \]
      (where an empty context is interpreted as $\emptyset^{\GS} = \mathsf{J}$).

  \item For every proof $\Pi$ of an $\MS{}$ sequent
    $(  \mGLnt{r_{{\mathrm{1}}}}  \mGLsym{,} \, ... \, \mGLsym{,}  \mGLnt{r_{\mGLmv{n}}}  )   \odot   ( \mGLmv{x_{{\mathrm{1}}}}  \mGLsym{:}  \mGLnt{X_{{\mathrm{1}}}}  \mGLsym{,} \, ... \, \mGLsym{,}  \mGLmv{x_{\mGLmv{n}}}  \mGLsym{:}  \mGLnt{X_{\mGLmv{n}}} )   \mGLsym{;}  \mGLmv{y_{{\mathrm{1}}}}  \mGLsym{:}  \mGLnt{A_{{\mathrm{1}}}}  \mGLsym{,} \, ... \, \mGLsym{,}  \mGLmv{y_{\mGLmv{m}}}  \mGLsym{:}  \mGLnt{A_{\mGLmv{m}}}  \vdash_{\mathsf{MS} }  \mGLnt{l}  \mGLsym{:}  \mGLnt{B}$
    there is a morphism in the category $\mathcal{M}$:
    \begin{gather*}
      {\interp{ \Pi }^{\MS}} :
      \Mny\mGLsym{(}  \mGLnt{r_{{\mathrm{1}}}}  \odot   \interp{ \mGLnt{X_{{\mathrm{1}}}} }^{\mathsf{GS} }   \mGLsym{)} \otimes \ldots \otimes \Mny \mGLsym{(}  \mGLnt{r_{\mGLmv{n}}}  \odot   \interp{ \mGLnt{X_{\mGLmv{n}}} }^{\mathsf{GS} }   \mGLsym{)} \otimes \interp{ \mGLnt{A_{{\mathrm{1}}}} }^{\MS} \otimes \ldots \otimes \interp{ \mGLnt{A_{\mGLmv{m}}} }^{\MS} \mto \interp{ \mGLnt{B} }^{\MS}
    \end{gather*}
    (where an empty $\MS{}$ context is interpreted as $\emptyset^{\MS} = \mathsf{I}$).
  \end{itemize}
  \citeappendix{subsec:model}{C.4} gives the full definition of the
  interpretation, including intermediate derivations from the
  $\mGL{}$ model.
\end{definition}
\noindent
Finally, we have our soundness and completeness theorems:

\begin{restatable}[Soundness of \mGLL{} Logic models]{theorem}{mGLLSoundTheorem}
  \label{theorem:soundness_of_mgll_logic}
  Suppose a mixed graded/linear model as above.
  Then for derivation
  $\Pi_{{\mathrm{1}}}$ of
  $\delta  \odot  \Delta  \vdash_{\mathsf{GS} }  \mGLnt{t_{{\mathrm{1}}}}  \mGLsym{:}  \mGLnt{X}$
  and derivation
  $\Pi_{{\mathrm{2}}}$ of
  $\delta  \odot  \Delta  \vdash_{\mathsf{GS} }  \mGLnt{t_{{\mathrm{2}}}}  \mGLsym{:}  \mGLnt{X}$
  then if $\Pi_{{\mathrm{1}}} \equiv \Pi_{{\mathrm{2}}}$
  then $\interp{\Pi_{{\mathrm{1}}}} = \interp{\Pi_{{\mathrm{2}}}}$.

  Similarly for $\Pi_{{\mathrm{1}}}$ of
  $\delta  \odot  \Delta  \mGLsym{;}  \Gamma  \vdash_{\mathsf{MS} }  \mGLnt{l_{{\mathrm{1}}}}  \mGLsym{:}  \mGLnt{A}$
  and derivation
  $\Pi_{{\mathrm{2}}}$ of
  $\delta  \odot  \Delta  \mGLsym{;}  \Gamma  \vdash_{\mathsf{MS} }  \mGLnt{l_{{\mathrm{2}}}}  \mGLsym{:}  \mGLnt{A}$
  then if $\Pi_{{\mathrm{1}}} \equiv \Pi_{{\mathrm{2}}}$
  then $\interp{\Pi_{{\mathrm{1}}}} = \interp{\Pi_{{\mathrm{2}}}}$.

\end{restatable}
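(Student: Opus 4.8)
The plan is to prove soundness by induction on the derivation of the equivalence $\Pi_1 \equiv \Pi_2$, using the fact that $\equiv$ is the least congruence generated by a fixed collection of equations: the graded-structural equations (such as \textsc{contr-sym} and \textsc{contr-unitL}), together with the cut-reduction and identity-expansion rewrites that, as noted in the remark, implement all of the $\beta$-, $\eta$- and triangle identities. The interpretations $\interp{-}^{\GS}$ and $\interp{-}^{\MS}$ are defined compositionally: the denotation of a derivation ending in a given rule is a fixed categorical combinator applied to the denotations of its immediate subderivations. Hence every \emph{congruence} case---where the two sides of an equation differ only inside a subderivation that is itself related by $\equiv$---is immediate from the inductive hypothesis, since any such combinator sends equal morphisms to equal morphisms. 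This reduces the theorem to verifying that each \emph{generating} equation has equal interpretations on both sides, which I would then discharge by diagram chasing in $\cat{C}$, $\cat{M}$ and $\op{\mathcal{R}}$.

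First, the purely graded-structural equations follow directly from the coherence axioms of the exponential action (Definition~\ref{def:strict-action}): commutativity and associativity of $\mathsf{contr}$, the fact that $\mathsf{weak}$ is its unit, commutativity and associativity of the lax monoidal maps $m_{\mathsf{J}}$ and $m_{\boxtimes}$, and naturality of $\odot$ in its first argument. For example, \textsc{contr-unitL} is exactly the unit law $\rho \circ (\id \boxtimes \mathsf{weak}) \circ \mathsf{contr}_{r,0} = \id$ composed with functoriality, and \textsc{contr-sym} is the commutativity $\mathsf{contr}_{r,s} = c \circ \mathsf{contr}_{s,r}$.

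Second, the cut-reduction equations associated with Lemma~\ref{lemma:cut_reduction_for_mgl} split into commutative cases, where a cut permutes past a rule not acting on the cut-formula, and principal cases, where both premises introduce the cut-formula. The commutative cases follow uniformly from naturality of the structural transformations and bifunctoriality of $\otimes$, $\boxtimes$ and $\odot$. The principal cases are the genuine $\beta$-equalities: those for $\otimes$ and $\multimap$ in $\MS$ use the symmetric monoidal closed structure of $\cat{M}$ (the adjunction $- \otimes A \dashv A \multimap -$), while those for the modalities $\mathsf{Grd}$ and $\mathsf{Lin}$ use the unit and counit of $\func{Mny} \dashv \func{Lin}$ together with the strict action equalities $\varepsilon$ and $\delta$. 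The identity-expansion rewrites reduce to the triangle identities of the adjunction, which---as observed in the text---surface as consecutive principal cuts of the modal left and right rules.

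The step I expect to be the main obstacle is the principal cut case for the graded modality in the presence of the \emph{graded multicut} rule, where contraction of the $n$ copies of the cut-formula is computed by row-vector matrix multiplication $\delta \boxast [\delta_2^n]$. Showing that the interpretation is stable under this collapse requires precisely the coherence equations of Figure~\ref{fig:equations-strict-action}, which govern how $\mathsf{contr}$ distributes through the action map $\delta$ and the lax map $m_{\boxtimes}$; reconciling the categorical bookkeeping of iterated contractions with the semimodule arithmetic of $\boxast$ is where most of the work lies, and it is also where strictness of the monoidal structure is used to keep the diagrams free of coherence-isomorphism clutter.
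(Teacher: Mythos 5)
Your proposal is correct and follows essentially the same route as the paper: the paper's proof also proceeds by (mutual) induction over the equational theory, first establishing a sub-lemma that interpretations are invariant under the cut-reduction strategy (with commutative cases discharged by naturality/bifunctoriality, principal modal cases by the unit/counit and triangle identities of $\func{Mny} \dashv \func{Lin}$ together with the strict action equalities, and the multicut/contraction interaction handled by the coherence equations of Figure~\ref{fig:equations-strict-action}), and then checking the remaining graded-structural generators (\textsc{contr-sym}, \textsc{contr-unitL}, \textsc{sub-comm-conv}, \textsc{mult-mono}, etc.) against the coherence axioms of the exponential action exactly as you describe. Your anticipated main obstacle is also where the paper leans on strictness of the monoidal structure, so the plan is faithful to the actual argument.
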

\begin{proof}
  This proof holds by mutual induction.  For the details see \citeappendix{subsec:proof_of_soundness_of_mgll_logic}{C.5}.
\end{proof}

\begin{restatable}[Completeness of \mGLL{} Logic models]{theorem}{mGLLCompletenessTheorem}
  \label{theorem:completeness}
  For derivations
  $\Pi_{{\mathrm{1}}}$, $\Pi_{{\mathrm{2}}}$ (of either $\mathsf{GS}$ or $\mathsf{MS}$)
  if $\interp{ \Pi_{{\mathrm{1}}} } = \interp{ \Pi_{{\mathrm{2}}} }$ in all mixed graded/linear models, then
  $\Pi_{{\mathrm{1}}} \equiv \Pi_{{\mathrm{2}}}$.
\end{restatable}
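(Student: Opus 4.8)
The plan is to prove completeness by constructing a \emph{syntactic} (term) model $\mathcal{T}$ of $\mGL{}$ and instantiating the hypothesis at it; this is the standard converse to soundness (Theorem~\ref{theorem:soundness_of_mgll_logic}). The point is that in $\mathcal{T}$ the interpretation map is, up to $\equiv$, the identity on derivations, so equality of interpretations \emph{in} $\mathcal{T}$ is exactly the relation $\equiv$ we want to conclude. Concretely, I would build a \mGLL{} model (Def.~\ref{def:mGLL-adjoint-model}) whose two categories are generated by the two fragments.

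First I would define the category $\cat{C}$ for $\GS{}$: its objects are $\GS{}$ types closed under the formal grade action $r \odot (-)$, the tensor $\boxtimes$ and unit $\mathsf{J}$, quotiented by the strict laws $1 \odot X = X$ and $(r * s) \odot X = r \odot (s \odot X)$ and by the strict monoidal laws of the action, so that $\varepsilon$, $\delta$, $m_{\mathsf{J}}$ and $m_{\boxtimes}$ all hold on the nose; a morphism is an $\equiv$-equivalence class of a $\GS{}$ derivation out of the corresponding graded context, with composition given by cut and identity by the identity rule. Dually, $\cat{M}$ is built from $\MS{}$ derivations with linear composition by $\MS{}$-cut. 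That composition is well defined and associative on $\equiv$-classes because cut reduction (Lemma~\ref{lemma:cut_reduction_for_mgl}) and its induced equalities are part of $\equiv$. The symmetric monoidal structures come from the $\boxtimes$/$\otimes$ and unit rules, the closed structure of $\cat{M}$ from $\multimap_{L}$ and $\multimap_{R}$, the adjunction $\func{Mny} \dashv \func{Lin}$ from the $\mathsf{Lin}$ and $\mathsf{Grd}$ rules (with unit and counit supplied by dereliction and promotion), and the exponential action $\odot$ from the grade scaling in the cut and promotion rules together with the graded weakening and contraction rules supplying $\mathsf{weak}$ and $\mathsf{contr}$.

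The bulk of the argument is verifying that $\mathcal{T}$ genuinely satisfies every axiom of a \mGLL{} model: the symmetric-monoidal coherences, the triangle identities of $\func{Mny} \dashv \func{Lin}$, the coherence squares of the exponential action in Figure~\ref{fig:equations-strict-action}, and strictness. Each such categorical equation unfolds to an equality between two derivations of the same sequent, and I would discharge it using the equational theory $\equiv$: the structural equalities (e.g.\ \textsc{contr-unitL}, \textsc{contr-sym}) handle the comonoidal and action coherences, while the $\beta\eta$- and triangle-identity equalities are supplied by the cut-elimination procedure, exactly as observed in the remark on $\beta\eta$-equalities and triangle identities. With $\mathcal{T}$ established as a model, I would prove the key \emph{initiality} lemma by mutual induction on derivations: for every type $\interp{ \mGLnt{X} }^{\GS} = \mGLnt{X}$ and $\interp{ \mGLnt{A} }^{\MS} = \mGLnt{A}$, and for every derivation $\Pi$ one has $\interp{\Pi} = [\Pi]_{\equiv}$. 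Each inductive case holds because, by construction, the interpretation clause for a rule was matched to that same rule in $\mathcal{T}$ (cut to composition, $\boxtimes_R$ to the monoidal product, $\mathsf{Grd}_R$ to $\func{Mny}(r \odot -)$, and so on).

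Completeness then follows immediately: if $\interp{ \Pi_{{\mathrm{1}}} } = \interp{ \Pi_{{\mathrm{2}}} }$ in \emph{all} models, then in particular in $\mathcal{T}$, and the initiality lemma turns this into $[\Pi_{{\mathrm{1}}}]_{\equiv} = [\Pi_{{\mathrm{2}}}]_{\equiv}$, i.e.\ $\Pi_{{\mathrm{1}}} \equiv \Pi_{{\mathrm{2}}}$. The main obstacle is the model-verification step, and within it the treatment of the grade action: since there is no $\GS{}$ type denoting $r \odot \mGLnt{X}$, the objects of $\cat{C}$ must be enlarged beyond types and the strict action and strict monoidal laws must be imposed on objects so that $\varepsilon$, $\delta$, $m_{\mathsf{J}}$ and $m_{\boxtimes}$ become equalities, all while keeping the hom-sets coherent when a codomain again carries the action. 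Checking that the coherence diagrams of Figure~\ref{fig:equations-strict-action} and the triangle identities all reduce to provable instances of $\equiv$ is where the real work lies; everything else is bookkeeping underwritten by cut elimination (Theorem~\ref{theorem:cut_elimination_of_mgl}).
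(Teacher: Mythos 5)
Your proposal is correct and follows essentially the same route as the paper: the paper's proof also builds a generic syntactic model (categories $\mathbb{G}_\equiv$ and $\mathbb{L}_\equiv$ whose objects are formal grade--type pairs and $\MS{}$ types, with hom-sets given by derivations quotiented by $\equiv$, equipped with a syntactic action and adjunction), and then instantiates the hypothesis at that model. Your write-up is in fact more explicit than the paper's about the two points it glosses over --- the verification of the model axioms and the ``initiality'' lemma that interpretation in the generic model is the quotient map (the paper calls this coherence and omits its proof) --- but the underlying argument is the same.
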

\begin{proof}
  This is a standard proof,
  where we build a generic model based on the syntax and the equational theory.
  For the details see \citeappendix{subsec:proof_of_completeness_of_mgll_logic}{C.6}.
\end{proof}



\section{Natural Deduction}
\label{subsec:mgl_nat_deduc}
We now develop a natural deduction formulation of $\mGL{}$. Whilst sequent
calculus judgments were denoted $\vdash_{\mathsf{MS} }$ and $\vdash_{\mathsf{GS} }$, natural
deduction judgments are correspondingly $\vdash_{\mathsf{MT} }$ and $\vdash_{\mathsf{GT} }$.


%
The syntax for terms is identical to the sequent calculus, collected in Figure~\ref{fig:collected-syntax}.
%
\citeappendix{sec:term-assigment_for_the_mGL-nat-deduc}{B} gives the introduction
and elimination rules and structural rules for $\mGL{}$'s natural deduction
formulation. The unit constructors are $\mathsf{j}$ and $\mathsf{i}$. Tensor
products in both systems are denoted by pairs of terms with corresponding
let-expressions for eliminators.  The graded modal introduction form $\mathsf{Lin} \, \mGLnt{l}$
operates on mixed terms, dual to $\mathsf{Grd} \, \mGLnt{r} \, \mGLnt{t}$ which operates on graded
terms.  The mixed syntax includes abstraction $\lambda  \mGLmv{x}  \mGLsym{.}  \mGLnt{l}$ and
function application $\mGLnt{l_{{\mathrm{1}}}} \, \mGLnt{l_{{\mathrm{2}}}}$.

\noindent
The most interesting aspect is the rules for the modal
operators:
\begin{gather*}
\hspace{-2em}\begin{align*}
\begin{array}{ll}
  \mGLdruleGTXXLinI{} & \; \mGLdruleMTXXLinE{} \\[0.75em]
  \mGLdruleMTXXGrdI{} & \; \mGLdruleMTXXGrdE{}
\end{array}
\end{align*}
\end{gather*}
In the sequent calculus presented in
Section~\ref{sec:mixed_graded/linear_logic}, the right rule for $\mathsf{Lin}$ is in
the graded subsystem, but the left rule is in the mixed subsystem.  A similar
idea arises here, the introduction rule for $\mathsf{Lin}$ (rule
\mGLdruleGTXXLinIName{}) is in the graded subsystem and the elimination rule
(rule \mGLdruleMTXXLinEName{}) is in the mixed subsystem. Introducing $\mathsf{Grd} _{ \mGLnt{r} }$ formulas (rule \mGLdruleMTXXGrdIName{}) has the effect of scaling the input
grades by $\mGLnt{r}$. The elimination rule for $\mathsf{Grd} _{ \mGLnt{r} }$ (rule
\mGLdruleMTXXGrdEName{}) is a pattern match on the form of $\mGLnt{l_{{\mathrm{1}}}}$. Since
$\mathsf{Lin}$ and $\mathsf{Grd}$ are the decomposition of graded modalities
(Section~\ref{sec:mGL-seq_denotational_model}), the form of the elimination rule
for $\mathsf{Grd} _{ \mGLnt{r} }$ is defined in a way which resembles that of elimination rules
for graded modalities in other natural deduction-based type
systems~\cite{Orchard:2019}.

This formulation also has explicit graded structural rules:
\begin{gather*}
  \begin{align*}
  \begin{array}{l}
    \mGLdruleGTXXWeak{} \quad \mGLdruleGTXXCont{} \\[0.75em]
    \mGLdruleGTXXEx{}
  \end{array}
  \end{align*}
  \end{gather*}
In the transition from sequent calculus to natural deduction, left rules
transform into elimination rules, and as a result the additional graded
left rules in the mixed sequent calculus are no longer explicitly part of the
system, but can be derived.  We go on to prove that the sequent calculus
of Section~\ref{subsec:mgl_sequent_calculus} is equivalent to the natural
deduction system.

We give two main results related to the natural deduction system; the first is
substitution for typing.
Note that this reuses the row-vector multiplication operation of
Section~\ref{subsubsec:metatheory_of_mgl}.
\begin{restatable}[Substitution for $\vdash_{\mathsf{GT} }$ and $\vdash_{\mathsf{MT} }$]{lemma}{substitutionLemma}
  \label{lemma:subsitution_for_gt_mt}
  The following hold by mutual induction:
  \begin{enumerate}
  \item (Graded) If $\delta_{{\mathrm{2}}}  \odot  \Delta_{{\mathrm{2}}}  \vdash_{\mathsf{GT} }  \mGLnt{t_{{\mathrm{1}}}}  \mGLsym{:}  \mGLnt{X}$ and
    $(  \delta_{{\mathrm{1}}}  \mGLsym{,}  \delta  \mGLsym{,}  \delta_{{\mathrm{3}}}  )   \odot   ( \Delta_{{\mathrm{1}}}  \mGLsym{,}   \mGLmv{x} ^{ \mGLmv{n} } :  \mGLnt{X} ^{ \mGLmv{n} }   \mGLsym{,}  \Delta_{{\mathrm{3}}} )   \vdash_{\mathsf{GT} }  \mGLnt{t_{{\mathrm{2}}}}  \mGLsym{:}  \mGLnt{Y}$, then
    $(   \delta_{{\mathrm{1}}}  \mGLsym{,}  \delta  \boxast [ { \delta_{{\mathrm{2}}} }^{ \mGLmv{n} } ]   \mGLsym{,}  \delta_{{\mathrm{3}}}  )   \odot   ( \Delta_{{\mathrm{1}}}  \mGLsym{,}  \Delta_{{\mathrm{2}}}  \mGLsym{,}  \Delta_{{\mathrm{3}}} )   \vdash_{\mathsf{GT} }  \mGLsym{[}   \mGLnt{t_{{\mathrm{1}}}} ,\ldots, \mGLnt{t_{{\mathrm{1}}}}   \mGLsym{/}  \mGLmv{x_{{\mathrm{1}}}}  \mGLsym{,} \, ... \, \mGLsym{,}  \mGLmv{x_{\mGLmv{n}}}  \mGLsym{]}  \mGLnt{t_{{\mathrm{2}}}}  \mGLsym{:}  \mGLnt{Y}$.

  \item (Graded/Mixed) If $\delta_{{\mathrm{2}}}  \odot  \Delta_{{\mathrm{2}}}  \vdash_{\mathsf{GT} }  \mGLnt{t}  \mGLsym{:}  \mGLnt{X}$ and
    $(  \delta_{{\mathrm{1}}}  \mGLsym{,}  \delta  \mGLsym{,}  \delta_{{\mathrm{3}}}  )   \odot   ( \Delta_{{\mathrm{1}}}  \mGLsym{,}   \mGLmv{x} ^{ \mGLmv{n} } :  \mGLnt{X} ^{ \mGLmv{n} }   \mGLsym{,}  \Delta_{{\mathrm{3}}} )   \mGLsym{;}  \Gamma  \vdash_{\mathsf{MT} }  \mGLnt{l}  \mGLsym{:}  \mGLnt{B}$, then
    $(   \delta_{{\mathrm{1}}}  \mGLsym{,}  \delta  \boxast [ { \delta_{{\mathrm{2}}} }^{ \mGLmv{n} } ]   \mGLsym{,}  \delta_{{\mathrm{3}}}  )   \odot   ( \Delta_{{\mathrm{1}}}  \mGLsym{,}  \Delta_{{\mathrm{2}}}  \mGLsym{,}  \Delta_{{\mathrm{3}}} )   \mGLsym{;}  \Gamma  \vdash_{\mathsf{MT} }  \mGLsym{[}   \mGLnt{t} ,\ldots, \mGLnt{t}   \mGLsym{/}  \mGLmv{x_{{\mathrm{1}}}}  \mGLsym{,} \, ... \, \mGLsym{,}  \mGLmv{x_{\mGLmv{n}}}  \mGLsym{]}  \mGLnt{l}  \mGLsym{:}  \mGLnt{B}$.

  \item (Mixed) If $\delta_{{\mathrm{2}}}  \odot  \Delta_{{\mathrm{2}}}  \mGLsym{;}  \Gamma_{{\mathrm{2}}}  \vdash_{\mathsf{MT} }  \mGLnt{l_{{\mathrm{1}}}}  \mGLsym{:}  \mGLnt{A}$ and
    $\delta_{{\mathrm{1}}}  \odot  \Delta_{{\mathrm{1}}}  \mGLsym{;}   ( \Gamma_{{\mathrm{1}}}  \mGLsym{,}  \mGLmv{x}  \mGLsym{:}  \mGLnt{A}  \mGLsym{,}  \Gamma_{{\mathrm{3}}} )   \vdash_{\mathsf{MT} }  \mGLnt{l_{{\mathrm{2}}}}  \mGLsym{:}  \mGLnt{B}$, then
    $(  \delta_{{\mathrm{1}}}  \mGLsym{,}  \delta_{{\mathrm{2}}}  )   \odot   ( \Delta_{{\mathrm{1}}}  \mGLsym{,}  \Delta_{{\mathrm{2}}} )   \mGLsym{;}   ( \Gamma_{{\mathrm{1}}}  \mGLsym{,}  \Gamma_{{\mathrm{2}}}  \mGLsym{,}  \Gamma_{{\mathrm{3}}} )   \vdash_{\mathsf{MT} }  \mGLsym{[}  \mGLnt{l_{{\mathrm{1}}}}  \mGLsym{/}  \mGLmv{x}  \mGLsym{]}  \mGLnt{l_{{\mathrm{2}}}}  \mGLsym{:}  \mGLnt{B}$.
  \end{enumerate}
\end{restatable}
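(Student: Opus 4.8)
The plan is to prove all three statements simultaneously by mutual structural induction on the derivation of the \emph{second} hypothesis in each part (the derivations of $t_2$, $l$, and $l_2$ respectively), case-splitting on the last rule applied and holding the substituted derivation (of $t_1$, $t$, $l_1$) fixed, reassociating or weakening it into the recomputed context as needed. The induction measure is the height of this second derivation, and every recursive appeal is to a premise of its final rule, hence strictly smaller, which keeps the mutual calls well-founded. The reason the induction genuinely interleaves parts (1) and (2) is that the modal introduction rules cross between the fragments: $\mathsf{Lin}_I$ in $\vdash_{\mathsf{GT}}$ has a \emph{mixed} premise (carrying the graded copies of $x$ in its $\Delta$), while $\mathsf{Grd}_I$ in $\vdash_{\mathsf{MT}}$ has a \emph{graded} premise. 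Thus a $\mathsf{Lin}_I$ node met in part (1) appeals to part (2) on its premise, and a $\mathsf{Grd}_I$ node met in part (2) appeals to part (1). Part (3) substitutes a single linear variable, which can only live in the linear context $\Gamma$; since $\mathsf{Grd}_I$ forces an empty linear context in its premise, part (3) never descends into the graded fragment, so it is essentially the standard linear-substitution argument, self-contained within $\vdash_{\mathsf{MT}}$.

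The routine cases are the logical rules that do not touch grades. For the base case ($\mathsf{id}$), if the conclusion is exactly a substituted variable then $n = 1$, $\delta = (1)$, and $(1) \boxast [\delta_2^1] = \delta_2$, so the result is precisely the premise derivation of $t_1$; otherwise the variable is untouched and the grade vector is unchanged. For the multi-premise rules ($\boxtimes_I$, $\otimes_I$, $\multimap_E$, and the elimination rules) the $n$ copies of $x$ distribute across the premises, and here I rely on the fact that $\boxast$ distributes over concatenation of its left argument, so that the row-vector product of the combined grade vector equals the pointwise sum of the products over the split pieces. Accordingly I would first isolate a handful of arithmetic facts about $\boxast$ as preliminary observations: distribution over concatenation of the left vector, compatibility with scalar multiplication, namely $r * (\delta \boxast [\delta_2^n]) = (r * \delta) \boxast [\delta_2^n]$ by associativity and distributivity of the semimodule action, and the behaviour on singletons and on the zero scalar.

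The grade-sensitive cases carry the real content, and I expect the main obstacle to be the interaction of $\boxast$ with the modal rule $\mathsf{Grd}_I$ and with the structural rule of contraction. For $\mathsf{Grd}_I$ in part (2) the entire graded context of the premise is scaled by $r$, so after recursively substituting into the premise via part (1) one must show that scaling commutes with the row-vector product, i.e. $r * (\delta \boxast [\delta_2^n]) = (r * \delta) \boxast [\delta_2^n]$; this is exactly the scalar-compatibility observation above, and it is the point at which parts (1) and (2) must be kept in lockstep. For contraction, two of the $n$ copies of the cut variable are merged and their grades combined by semiring $+$, so I must check that contracting positions $i$ and $j$ before substitution, contributing $(r_i + r_j) * \delta_2$, agrees with substituting first and contracting the two resulting copies of $\Delta_2$, contributing $r_i * \delta_2 + r_j * \delta_2$; this holds by right-distributivity of the semimodule action. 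Weakening is the degenerate instance: a $0$-graded copy contributes $0 * \delta_2 = \vec{0}$ and the substituted subterm is simply discarded. Once these $\boxast$-arithmetic lemmas are in place, every remaining case reduces to reassembling the subderivations under the recomputed grade vector, with the approximation rule and the other modal rules following from monotonicity of $*$ and $+$ and the compatibilities already established.
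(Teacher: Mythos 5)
Your proposal is correct and follows essentially the same route as the paper: the paper also proceeds by mutual induction on the second assumed derivation, with parts (1) and (2) interleaved exactly through the $\mathsf{Lin}_I$/$\mathsf{Grd}_I$ rules, the identity case discharged by $1 * \delta_{{\mathrm{2}}} = \delta_{{\mathrm{2}}}$, and contraction/weakening handled by the same semimodule distributivity facts you isolate for $\boxast$. The only differences are presentational: you factor the grade arithmetic into explicit preliminary lemmas (and are somewhat more candid about the case where the block of substituted copies straddles a context split in a multi-premise rule), whereas the paper verifies these identities inline, enumerates only the non-straddling cases, and, like you, gives part (1) in full while treating parts (2) and (3) as similar.
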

\begin{proof}
  By mutual induction on the second assumed derivation (see
  \citeappendix{subsec:proof_of_substitution_for_mgll_logic}{C.7}).
\end{proof}
Since we have an explicit structural rule for contraction (above and
listed in \citeappendix{sec:term-assigment_for_the_mGL-nat-deduc}{B}), the
substitution lemma on the graded fragment is formalized as multi-substitution.
Otherwise, its proof is a fairly standard substitution proof for graded
systems (e.g., as in~\cite{Orchard:2019}). Lastly, the natural deduction system
is interderivable with the sequent calculus, which we establish such that the
term witnessing the derivations does not change between systems:
\begin{restatable}[Sequent calculus and natural deduction
interderivability]{theorem}{MSTimpliesMTgen}
  \label{lemma:mst-implies-mt-gen}
  $\delta  \odot  \Delta  \vdash_{\mathsf{GS} }  \mGLnt{t}  \mGLsym{:}  \mGLnt{X} \Leftrightarrow \delta  \odot  \Delta  \vdash_{\mathsf{GT} }  \mGLnt{t}  \mGLsym{:}  \mGLnt{X}$
  and $\delta  \odot  \Delta  \mGLsym{;}  \Gamma  \vdash_{\mathsf{MS} }  \mGLnt{l}  \mGLsym{:}  \mGLnt{A} \Leftrightarrow \delta  \odot  \Delta  \mGLsym{;}  \Gamma  \vdash_{\mathsf{MT} }  \mGLnt{l}  \mGLsym{:}  \mGLnt{A}$.
\end{restatable}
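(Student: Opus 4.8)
The plan is to prove both directions of each biconditional by mutual induction on derivations, exploiting the fact that the term syntax is shared between the two formulations (Figure~\ref{fig:collected-syntax}) and, crucially, that the cut rules of the sequent calculus perform substitution at the level of terms. Indeed, as noted in the remark following the equational theory, a cut on $\mathsf{MS}$ has conclusion term $[l'/x]l$ rather than a distinguished cut-term, and likewise for the graded cuts; this is why the term syntax contains no cut constructor. This alignment makes the two halves of the proof essentially dual: sequent-calculus right rules correspond to natural-deduction introduction rules, left rules correspond to eliminations, and cut corresponds to substitution.

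For the ($\Leftarrow$) direction (natural deduction implies sequent calculus), I would induct simultaneously on $\vdash_{\mathsf{GT}}$ and $\vdash_{\mathsf{MT}}$ derivations. Identity axioms and the explicit structural rules (weakening, contraction, exchange) map directly onto their sequent-calculus counterparts, and each introduction rule is translated by the corresponding right rule applied to the derivations supplied by the induction hypotheses. The substantive cases are the elimination rules. An elimination whose eliminand is a term $l_1$ is simulated by first deriving a fresh assumption $x$ of the appropriate type via $\mathsf{id}$, applying the matching sequent-calculus left rule to $x$ (together with the continuation derivation from the induction hypothesis), and then cutting $l_1$ in for $x$. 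For instance, for $\multimap$-elimination one applies the left rule $\multimap_{L}$ to a fresh $f$ to obtain $f\,l_2$, then cuts $l_1$ for $f$, yielding exactly $l_1\,l_2$. Because the sequent calculus has cut and its graded variants (whose scaling uses $\boxast$), and because cut is term-substitution, the resulting term matches the natural-deduction term on the nose. The modal cases ($\mathsf{Lin}_E$ and $\mathsf{Grd}_E$) are the most delicate here, since their elimination forms are pattern-matching $\mathsf{let}$-expressions, and one must check that left-rule-then-cut reproduces the same binding and the same grade scaling.

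For the ($\Rightarrow$) direction (sequent calculus implies natural deduction), I would again induct simultaneously on $\vdash_{\mathsf{GS}}$ and $\vdash_{\mathsf{MS}}$. Right rules become introduction rules and structural rules transfer directly. The cut rules are discharged immediately by the substitution lemma (Lemma~\ref{lemma:subsitution_for_gt_mt}): since cut already produces a substituted term, and the substitution lemma's grade bookkeeping is expressed with the very same row-vector multiplication $\boxast$, the grade vectors agree without further work. The interesting cases are the left rules, which have no direct analogue in natural deduction. Each left rule acting on an assumption $x$ is simulated by deriving $x$ through $\mathsf{id}$, applying the corresponding natural-deduction elimination to expose its components, and then substituting (again via Lemma~\ref{lemma:subsitution_for_gt_mt}) into the derivation given by the induction hypothesis on the continuation premise; this reproduces exactly the term built by the sequent-calculus left rule.

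I expect the main obstacle to be the bookkeeping of grades through the modal and tensor rules and their interaction with $\boxast$, ensuring that the grade vectors produced by composing a left/elimination rule with a cut/substitution coincide across the two systems. A secondary point of care is confirming that terms are genuinely unchanged: although this is built into the design (shared syntax, cut-as-substitution), each modal, unit, and tensor case must still be checked to verify that the composite construction in one system yields literally the same $\mathsf{let}$-, pairing-, or application-term as the primitive rule in the other. The structural and identity cases, by contrast, are routine.
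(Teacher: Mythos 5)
Your proposal is correct and follows essentially the same route as the paper's proof (Appendices C.8 and C.9): mutual induction on derivations, with right/left rules exchanged for introduction/elimination rules, eliminations in the natural-deduction-to-sequent direction simulated by a left rule on a fresh identity-derived variable followed by cut, and cuts and left rules in the converse direction discharged by the substitution lemma (Lemma~\ref{lemma:subsitution_for_gt_mt}), whose shared $\boxast$ bookkeeping makes the grade vectors and terms coincide. The only minor difference is that for the pattern-matching left rules (tensor and units) the paper needs no substitution step at all, since the natural-deduction eliminator takes the continuation directly as a premise.
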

\begin{proof}
  By mutual induction on the assumed derivations (\citeappendixtwo{subsec:proof_of_mst_implies_mt}{subsec:proof_of_mt_implies_mst}{C.8}{C.9}).
  The sequent calculus to natural
  deduction direction requires the substitution lemma above.
\end{proof}
The implication of the previous result is that we only need a semantic model of
one of the two systems, and the other can be modelled using the same
interpretation of terms. We chose to model the sequent calculus form directly.




\section{Discussion}
\label{sec:discussion}

\subsection{Relating linear base vs. graded base calculi}
\label{subsec:linear-vs-graded-base}

A major thread of graded type systems in the literature starts with a
linear logic base and then generalises the $!$ modality to a
semiring-graded modality atop a linear logic, e.g., the systems of Brunel et
al.~\cite{Brunel:2014}, Gaboardi et al.~\cite{Gaboardi:2016}, Orchard
et al.~\cite{Orchard:2019}, and others~\cite{gaboardi2013linear,hughes:lirmm-03271465}. Often these systems are
presented with a single context containing
both linear and graded propositions~\cite{Gaboardi:2016,Orchard:2019}. Overall,
these approaches have a common core which is isomorphic to the natural deduction $\MST$ fragment shown here
with the (natural deduction analogue of the) derived $\Box_r$ graded modality of Lemma~\ref{lemma:graded_modalities_in_mGL} as part of their
definition (i.e., not derived).
We refer to this style of graded type system as the \emph{linear base} style.

A contrasting approach has no base notion of linearity, but instead has pervasive
grading tracking substructurality, i.e., no linear assumptions, every assumption has a grade, and function arrows
come equipped with a grade describing the usage of their input in the function (e.g., written  $A \xrightarrow{r} B$). Such
systems include the coeffect calculi of Petricek et
al.~\cite{DBLP:conf/icalp/PetricekOM13,DBLP:conf/icfp/PetricekOM14}, the general
graded modal system of Bernardy et al.~\cite{DBLP:journals/pacmpl/AbelB20}, and several
others~\cite{atkey2018syntax,DBLP:journals/pacmpl/BernardyBNJS18,
DBLP:journals/pacmpl/ChoudhuryEEW21,
DBLP:conf/birthday/McBride16,DBLP:conf/esop/MoonEO21}. The $\GST{}$ fragment
of our system here corresponds to a common subset of these approaches: a subset
without function arrows and without a graded modality,
since there is no graded modality that lives in the $\GST{}$ side ($\Box_i$ is derived into $\MST{}$). Hughes et al. also develop
a program synthesis technique for graded base systems, where grades are used
to prune the search space~\cite{DBLP:conf/esop/HughesO24}; its synthesis calculus formulation
 resembles closely $\GS$.

Our work thus shows the relationship between the linear base and graded base style, namely that there is a mutual
embedding between these two approaches which generates the graded modality
 in the linear base (Lemma~\ref{lemma:graded_modalities_in_mGL}). Exploring this in more depth is further work. For instance, it is unclear
 what is needed to realise a graded comonadic modality in $\GST{}$ that arises from the embedding (or a different embedding), and how this could
 interact with a graded function arrow in $\GS$ or $\GST$. Pursuing this line of work would help to
 explain the relationship between the two dominant styles of graded system in the literature, which seem strongly related, and their relative expressive power.  Nonetheless,
 by following Benton's programme and giving it a graded rendering here, we can already see here the close connection between these two
 styles of graded system.

\subsection{Related work on adjoint logics}
\label{sec:related_work}

Pruiksma et al. formalized a general way to add and remove structural
rules from a logic through adjunctions~\cite{Pruiksma:2018}. Their work is
similar to ours as it relates logics through adjoint decompositions based on modal
operators to control structural rules. Their formulation with ``modes of truth''
resembles our work with grades; however, modes of truth lack the algebraic properties
graded formulations depend on and instead have a very relational flavor.
Building on this work of Pruiksma et al., Jang et al. develop
a natural deduction formulation of adjoint logic~\cite{DBLP:conf/fscd/0001RPP24}.
They leverage this to give a functional language able to reason about
resource properties like strictness and erasure. Similar reasoning
can be developed ontop of our natural deduction formulation here, though
this is left as further work.

A question is whether grading can be unified with
the adjoint logic approach.  Eades and Orchard sketched a unification
based on generalising semiring operations to
relations rather than functions, with predicates classifying unit
values~\cite{DBLP:journals/corr/abs-2006-08854}.
Hanukaev et al. develop this idea further,
introducing a dependent type system based on a similar
structure as the logics here but using a generalised notion of
grading that combines the modes of adjoint
logic~\cite{10.1145/3609027.3609408}. They prove that their system is
well-formed syntactically, but do not introduce any semantic
model. Our logic $\mGL{}$ can be seen as an instantiation of their
system, but the categorical model given here could potentially be
generalised into a model of their system.

\subsection{Further work}
\paragraph*{Practical implementation to leverage linear/grading separation}
The separation of the mixed system ($\MS$/$\MST$) from
the purely graded fragment ($\GS$/$\GST$) (which acts more as a standard
intuitionistic system) can provide a basis for a programming
language design. In such a language, the restrictions of linearity could be
used only for handling data that needs to be linear, such as file
handles or channels. However, for data types which need not be linear,
e.g., primitive types like integers, characters, or structures
over them, the graded fragment could be used without having to confront
linearity constraints. The mutual embedding would allow the programmer to
move smoothly between these two subcalculi. Similar ideas
are discussed for the polarized extension of SILL for concurrent programming~\cite{DBLP:conf/fossacs/PfenningG15}. The implementation could borrow ideas from
the Granule programming language, which already provides a mature
and feature rich implementation of a linear-base style graded type system~\cite{Orchard:2019}. Instead, an $\mGL{}$-inspired implementation could be
based on the natural deduction term calculus with the modalities mediating
between the two judgments. Exploring this application, perhaps as an extension to Granule,
is future work.

\paragraph*{Other generalisations}

In LNL, the adjunction can be followed in the opposite
direction to derive a monad $? A = \mathsf{Lin} (\mathsf{Mny} A)$. However,
in $\mGL{}$ we do not get a graded monad by composing $\mathsf{Lin} \, \mGLsym{(}   \mathsf{Grd} _{ \mGLnt{r} }\, \mGLnt{X}   \mGLsym{)}$
since the adjoint resolution of a graded monad has a strict action
on the other side (on $\cal{M}$ in the model). Exploring a calculus
with a pair of actions to allow both graded monads and graded comonads is further work.

\paragraph*{Uniqueness typing}
Recent work has demonstrated that
\emph{uniqueness} is a closely related but distinct concept to linearity~\cite{DBLP:conf/esop/MarshallVO22}; uniqueness logic~\cite{DBLP:journals/tcs/Harrington06} is substructural in much
the same way as linear logic, but provides a monadic modality for enabling the
structural rules in contrast to linear logic's comonadic $\oc$ modality.
Building an adjoint model for uniqueness or a calculus which
unifies uniqueness and linearity~\cite{DBLP:conf/esop/MarshallVO22} would be
interesting further work, and this could potentially be extended to more recent
systems which develop graded notions of uniqueness~\cite{marshall2023functional}.


\paragraph*{Acknowledgements}
 We thank all the anonymous reviewers of this, and previous versions, of this paper.
 We are also grateful for discussions with Peter Hanukaev and helpful
 comments from Paulo Torrens on a draft of this manuscript.
 This work was supported in part by the EPSRC
 grant EP/T013516/1 (\emph{Verifying Resource-like Data Use in Programs via
Types}). The second author received support through Schmidt Sciences, LLC.
\bibliography{refs}

\begin{thebibliography}{10}

\bibitem{DBLP:journals/pacmpl/AbelB20}
Andreas Abel and Jean{-}Philippe Bernardy.
\newblock A unified view of modalities in type systems.
\newblock {\em Proc. {ACM} Program. Lang.}, 4({ICFP}):90:1--90:28, 2020.
\newblock \href {https://doi.org/10.1145/3408972} {\path{doi:10.1145/3408972}}.

\bibitem{atkey2018syntax}
Robert Atkey.
\newblock Syntax and {S}emantics of {Q}uantitative {T}ype {T}heory.
\newblock In Anuj Dawar and Erich Gr{\"{a}}del, editors, {\em Proceedings of
  the 33rd Annual {ACM/IEEE} Symposium on Logic in Computer Science, {LICS}
  2018, Oxford, UK, July 09-12, 2018}, pages 56--65. {ACM}, 2018.
\newblock \href {https://doi.org/10.1145/3209108.3209189}
  {\path{doi:10.1145/3209108.3209189}}.

\bibitem{Benton:1994}
P.~N. Benton.
\newblock A mixed linear and non-linear logic: Proofs, terms and models
  (extended abstract).
\newblock In Leszek Pacholski and Jerzy Tiuryn, editors, {\em Computer Science
  Logic, 8th International Workshop, {CSL} '94, Kazimierz, Poland, September
  25-30, 1994, Selected Papers}, volume 933 of {\em Lecture Notes in Computer
  Science}, pages 121--135. Springer, 1994.
\newblock URL: \url{https://doi.org/10.1007/BFb0022251}, \href
  {https://doi.org/10.1007/BFB0022251} {\path{doi:10.1007/BFB0022251}}.

\bibitem{DBLP:journals/pacmpl/BernardyBNJS18}
Jean{-}Philippe Bernardy, Mathieu Boespflug, Ryan~R. Newton, Simon~Peyton
  Jones, and Arnaud Spiwack.
\newblock Linear {H}askell: practical linearity in a higher-order polymorphic
  language.
\newblock {\em Proc. {ACM} Program. Lang.}, 2({POPL}):5:1--5:29, 2018.
\newblock \href {https://doi.org/10.1145/3158093} {\path{doi:10.1145/3158093}}.

\bibitem{DBLP:conf/csl/BreuvartP15}
Flavien Breuvart and Michele Pagani.
\newblock Modelling coeffects in the relational semantics of linear logic.
\newblock In {\em 24th {EACSL} Annual Conference on Computer Science Logic,
  {CSL} 2015, September 7-10, 2015, Berlin, Germany}, pages 567--581, 2015.
\newblock \href {https://doi.org/10.4230/LIPIcs.CSL.2015.567}
  {\path{doi:10.4230/LIPIcs.CSL.2015.567}}.

\bibitem{Brunel:2014}
Alo{\"{\i}}s Brunel, Marco Gaboardi, Damiano Mazza, and Steve Zdancewic.
\newblock A core quantitative coeffect calculus.
\newblock In Zhong Shao, editor, {\em Programming Languages and Systems - 23rd
  European Symposium on Programming, {ESOP} 2014, Held as Part of the European
  Joint Conferences on Theory and Practice of Software, {ETAPS} 2014, Grenoble,
  France, April 5-13, 2014, Proceedings}, volume 8410 of {\em Lecture Notes in
  Computer Science}, pages 351--370. Springer, 2014.
\newblock \href {https://doi.org/10.1007/978-3-642-54833-8\_19}
  {\path{doi:10.1007/978-3-642-54833-8\_19}}.

\bibitem{DBLP:journals/pacmpl/ChoudhuryEEW21}
Pritam Choudhury, Harley~Eades III, Richard~A. Eisenberg, and Stephanie
  Weirich.
\newblock A graded dependent type system with a usage-aware semantics.
\newblock {\em Proc. {ACM} Program. Lang.}, 5({POPL}):1--32, 2021.
\newblock \href {https://doi.org/10.1145/3434331} {\path{doi:10.1145/3434331}}.

\bibitem{danos93kgc}
Vincent Danos, Jean{-}Baptiste Joinet, and Harold Schellinx.
\newblock {The Structure of Exponentials: Uncovering the Dynamics of Linear
  Logic Proofs}.
\newblock In Georg Gottlob, Alexander Leitsch, and Daniele Mundici, editors,
  {\em Computational Logic and Proof Theory, Third Kurt G{\"{o}}del Colloquium,
  KGC'93, Brno, Czech Republic, August 24-27, 1993, Proceedings}, volume 713 of
  {\em Lecture Notes in Computer Science}, pages 159--171. Springer, 1993.
\newblock URL: \url{https://doi.org/10.1007/BFb0022564}, \href
  {https://doi.org/10.1007/BFB0022564} {\path{doi:10.1007/BFB0022564}}.

\bibitem{DBLP:conf/icfp/DAntoniGAHP13}
Loris D'Antoni, Marco Gaboardi, Emilio Jes{\'{u}}s~Gallego Arias, Andreas
  Haeberlen, and Benjamin~C. Pierce.
\newblock Sensitivity analysis using type-based constraints.
\newblock In {\em Proceedings of the 1st Annual Workshop on Functional
  Programming Concepts in Domain-Specific Languages, FPCDSL@ICFP 2013, Boston,
  Massachusetts, USA, September 22, 2013}, pages 43--50, 2013.
\newblock \href {https://doi.org/10.1145/2505351.2505353}
  {\path{doi:10.1145/2505351.2505353}}.

\bibitem{Fujii:2016b}
Soichiro Fujii, Shin{-}ya Katsumata, and Paul{-}Andr{\'{e}} Melli{\`{e}}s.
\newblock {Towards a Formal Theory of Graded Monads}.
\newblock In Bart Jacobs and Christof L{\"{o}}ding, editors, {\em Foundations
  of Software Science and Computation Structures - 19th International
  Conference, {FOSSACS} 2016, Held as Part of the European Joint Conferences on
  Theory and Practice of Software, {ETAPS} 2016, Eindhoven, The Netherlands,
  April 2-8, 2016, Proceedings}, volume 9634 of {\em Lecture Notes in Computer
  Science}, pages 513--530. Springer, 2016.
\newblock \href {https://doi.org/10.1007/978-3-662-49630-5\_30}
  {\path{doi:10.1007/978-3-662-49630-5\_30}}.

\bibitem{DBLP:conf/popl/GaboardiHHNP13}
Marco Gaboardi, Andreas Haeberlen, Justin Hsu, Arjun Narayan, and Benjamin~C.
  Pierce.
\newblock Linear dependent types for differential privacy.
\newblock In {\em The 40th Annual {ACM} {SIGPLAN-SIGACT} Symposium on
  Principles of Programming Languages, {POPL} '13, Rome, Italy - January 23 -
  25, 2013}, pages 357--370, 2013.
\newblock \href {https://doi.org/10.1145/2429069.2429113}
  {\path{doi:10.1145/2429069.2429113}}.

\bibitem{gaboardi2013linear}
Marco Gaboardi, Andreas Haeberlen, Justin Hsu, Arjun Narayan, and Benjamin~C.
  Pierce.
\newblock Linear dependent types for differential privacy.
\newblock In Roberto Giacobazzi and Radhia Cousot, editors, {\em The 40th
  Annual {ACM} {SIGPLAN-SIGACT} Symposium on Principles of Programming
  Languages, {POPL} '13, Rome, Italy - January 23 - 25, 2013}, pages 357--370.
  {ACM}, 2013.
\newblock \href {https://doi.org/10.1145/2429069.2429113}
  {\path{doi:10.1145/2429069.2429113}}.

\bibitem{Gaboardi:2016}
Marco Gaboardi, Shin{-}ya Katsumata, Dominic~A. Orchard, Flavien Breuvart, and
  Tarmo Uustalu.
\newblock Combining effects and coeffects via grading.
\newblock In {\em Proceedings of the 21st {ACM} {SIGPLAN} International
  Conference on Functional Programming, {ICFP} 2016, Nara, Japan, September
  18-22, 2016}, pages 476--489, 2016.
\newblock \href {https://doi.org/10.1145/2951913.2951939}
  {\path{doi:10.1145/2951913.2951939}}.

\bibitem{Ghica:2014}
Dan~R. Ghica and Alex~I. Smith.
\newblock Bounded linear types in a resource semiring.
\newblock In Zhong Shao, editor, {\em Programming Languages and Systems - 23rd
  European Symposium on Programming, {ESOP} 2014, Held as Part of the European
  Joint Conferences on Theory and Practice of Software, {ETAPS} 2014, Grenoble,
  France, April 5-13, 2014, Proceedings}, volume 8410 of {\em Lecture Notes in
  Computer Science}, pages 331--350. Springer, 2014.
\newblock \href {https://doi.org/10.1007/978-3-642-54833-8\_18}
  {\path{doi:10.1007/978-3-642-54833-8\_18}}.

\bibitem{Girard:1987}
Jean{-}Yves Girard.
\newblock Linear logic.
\newblock {\em Theor. Comput. Sci.}, 50:1--102, 1987.
\newblock \href {https://doi.org/10.1016/0304-3975(87)90045-4}
  {\path{doi:10.1016/0304-3975(87)90045-4}}.

\bibitem{girard1992bounded}
Jean{-}Yves Girard, Andre Scedrov, and Philip~J. Scott.
\newblock Bounded linear logic: {A} modular approach to polynomial-time
  computability.
\newblock {\em Theor. Comput. Sci.}, 97(1):1--66, 1992.
\newblock \href {https://doi.org/10.1016/0304-3975(92)90386-T}
  {\path{doi:10.1016/0304-3975(92)90386-T}}.

\bibitem{10.1145/3609027.3609408}
Peter Hanukaev and Harley Eades~III.
\newblock Combining dependency, grades, and adjoint logic.
\newblock In {\em Proceedings of the 8th ACM SIGPLAN International Workshop on
  Type-Driven Development}, TyDe 2023, page 58–70, New York, NY, USA, 2023.
  Association for Computing Machinery.
\newblock \href {https://doi.org/10.1145/3609027.3609408}
  {\path{doi:10.1145/3609027.3609408}}.

\bibitem{DBLP:journals/tcs/Harrington06}
Dana Harrington.
\newblock Uniqueness logic.
\newblock {\em Theor. Comput. Sci.}, 354(1):24--41, 2006.
\newblock \href {https://doi.org/10.1016/j.tcs.2005.11.006}
  {\path{doi:10.1016/j.tcs.2005.11.006}}.

\bibitem{hughes:lirmm-03271465}
Jack Hughes, Danielle Marshall, James Wood, and Dominic Orchard.
\newblock {Linear Exponentials as Graded Modal Types}.
\newblock In {\em {5th International Workshop on Trends in Linear Logic and
  Applications (TLLA 2021)}}, Rome (virtual), Italy, June 2021.
\newblock URL: \url{https://hal-lirmm.ccsd.cnrs.fr/lirmm-03271465}.

\bibitem{DBLP:conf/esop/HughesO24}
Jack Hughes and Dominic Orchard.
\newblock Program synthesis from graded types.
\newblock In Stephanie Weirich, editor, {\em Programming Languages and Systems
  - 33rd European Symposium on Programming, {ESOP} 2024, Held as Part of the
  European Joint Conferences on Theory and Practice of Software, {ETAPS} 2024,
  Luxembourg City, Luxembourg, April 6-11, 2024, Proceedings, Part {I}}, volume
  14576 of {\em Lecture Notes in Computer Science}, pages 83--112. Springer,
  2024.
\newblock \href {https://doi.org/10.1007/978-3-031-57262-3\_4}
  {\path{doi:10.1007/978-3-031-57262-3\_4}}.

\bibitem{DBLP:journals/corr/abs-2006-08854}
Harley~Eades III and Dominic Orchard.
\newblock Grading adjoint logic.
\newblock {\em CoRR}, abs/2006.08854, 2020.
\newblock URL: \url{https://arxiv.org/abs/2006.08854}, \href
  {https://arxiv.org/abs/2006.08854} {\path{arXiv:2006.08854}}.

\bibitem{DBLP:conf/fscd/0001RPP24}
Junyoung Jang, Sophia Roshal, Frank Pfenning, and Brigitte Pientka.
\newblock Adjoint {N}atural {D}eduction.
\newblock In Jakob Rehof, editor, {\em 9th International Conference on Formal
  Structures for Computation and Deduction, {FSCD} 2024, July 10-13, 2024,
  Tallinn, Estonia}, volume 299 of {\em LIPIcs}, pages 15:1--15:23. Schloss
  Dagstuhl - Leibniz-Zentrum f{\"{u}}r Informatik, 2024.
\newblock \href {https://doi.org/10.4230/LIPICS.FSCD.2024.15}
  {\path{doi:10.4230/LIPICS.FSCD.2024.15}}.

\bibitem{kanovich2020soft}
Max~I. Kanovich, Stepan~L. Kuznetsov, Vivek Nigam, and Andre Scedrov.
\newblock Soft subexponentials and multiplexing.
\newblock In Nicolas Peltier and Viorica Sofronie{-}Stokkermans, editors, {\em
  Automated Reasoning - 10th International Joint Conference, {IJCAR} 2020,
  Paris, France, July 1-4, 2020, Proceedings, Part {I}}, volume 12166 of {\em
  Lecture Notes in Computer Science}, pages 500--517. Springer, 2020.
\newblock \href {https://doi.org/10.1007/978-3-030-51074-9\_29}
  {\path{doi:10.1007/978-3-030-51074-9\_29}}.

\bibitem{DBLP:conf/popl/Katsumata14}
Shin{-}ya Katsumata.
\newblock Parametric effect monads and semantics of effect systems.
\newblock In Suresh Jagannathan and Peter Sewell, editors, {\em The 41st Annual
  {ACM} {SIGPLAN-SIGACT} Symposium on Principles of Programming Languages,
  {POPL} '14, San Diego, CA, USA, January 20-21, 2014}, pages 633--646. {ACM},
  2014.
\newblock \href {https://doi.org/10.1145/2535838.2535846}
  {\path{doi:10.1145/2535838.2535846}}.

\bibitem{Katsumata:2018}
Shin{-}ya Katsumata.
\newblock A double category theoretic analysis of graded linear exponential
  comonads.
\newblock In Christel Baier and Ugo~Dal Lago, editors, {\em Foundations of
  Software Science and Computation Structures - 21st International Conference,
  {FOSSACS} 2018, Held as Part of the European Joint Conferences on Theory and
  Practice of Software, {ETAPS} 2018, Thessaloniki, Greece, April 14-20, 2018,
  Proceedings}, volume 10803 of {\em Lecture Notes in Computer Science}, pages
  110--127. Springer, 2018.
\newblock \href {https://doi.org/10.1007/978-3-319-89366-2\_6}
  {\path{doi:10.1007/978-3-319-89366-2\_6}}.

\bibitem{lambek1958mathematics}
Joachim Lambek.
\newblock The mathematics of sentence structure.
\newblock {\em The American Mathematical Monthly}, 65(3):154--170, 1958.

\bibitem{marshall2023functional}
Danielle Marshall and Dominic Orchard.
\newblock {Functional Ownership through Fractional Uniqueness}.
\newblock {\em Proc. {ACM} Program. Lang.}, 8({OOPSLA1}):1040--1070, 2024.
\newblock \href {https://doi.org/10.1145/3649848} {\path{doi:10.1145/3649848}}.

\bibitem{DBLP:conf/esop/MarshallVO22}
Danielle Marshall, Michael Vollmer, and Dominic Orchard.
\newblock Linearity and {U}niqueness: {A}n {E}ntente {C}ordiale.
\newblock In Ilya Sergey, editor, {\em Programming Languages and Systems - 31st
  European Symposium on Programming, {ESOP} 2022, Held as Part of the European
  Joint Conferences on Theory and Practice of Software, {ETAPS} 2022, Munich,
  Germany, April 2-7, 2022, Proceedings}, volume 13240 of {\em Lecture Notes in
  Computer Science}, pages 346--375. Springer, 2022.
\newblock \href {https://doi.org/10.1007/978-3-030-99336-8\_13}
  {\path{doi:10.1007/978-3-030-99336-8\_13}}.

\bibitem{DBLP:conf/birthday/McBride16}
Conor McBride.
\newblock {I Got Plenty o' Nuttin'}.
\newblock In Sam Lindley, Conor McBride, Philip~W. Trinder, and Donald
  Sannella, editors, {\em A List of Successes That Can Change the World -
  Essays Dedicated to Philip Wadler on the Occasion of His 60th Birthday},
  volume 9600 of {\em Lecture Notes in Computer Science}, pages 207--233.
  Springer, 2016.
\newblock \href {https://doi.org/10.1007/978-3-319-30936-1\_12}
  {\path{doi:10.1007/978-3-319-30936-1\_12}}.

\bibitem{Mellies:2009}
Paul-Andr{\'e} Melli{\`e}s.
\newblock Categorical semantics of linear logic.
\newblock In Pierre-Louis Curien, Hugo Herbelin, Jean-Louis Krivine, and
  Paul-Andr{\'e} Melli{\`e}s, editors, {\em Interactive Models of Computation
  and Program Behaviour}. Panoramas et Synth{\`e}ses 27, Soci{\'e}t{\'e}
  Math{\'e}matique de France, 2009.

\bibitem{DBLP:conf/esop/MoonEO21}
Benjamin Moon, Harley~Eades III, and Dominic Orchard.
\newblock Graded modal dependent type theory.
\newblock In {\em Programming Languages and Systems - 30th European Symposium
  on Programming, {ESOP} 2021, Held as Part of the European Joint Conferences
  on Theory and Practice of Software, {ETAPS} 2021, Luxembourg City,
  Luxembourg, March 27 - April 1, 2021, Proceedings}, pages 462--490, 2021.
\newblock \href {https://doi.org/10.1007/978-3-030-72019-3\_17}
  {\path{doi:10.1007/978-3-030-72019-3\_17}}.

\bibitem{Orchard:2019}
Dominic Orchard, Vilem{-}Benjamin Liepelt, and Harley~Eades III.
\newblock Quantitative program reasoning with graded modal types.
\newblock {\em Proc. {ACM} Program. Lang.}, 3({ICFP}):110:1--110:30, 2019.
\newblock \href {https://doi.org/10.1145/3341714} {\path{doi:10.1145/3341714}}.

\bibitem{DBLP:journals/corr/OrchardPM14}
Dominic~A. Orchard, Tomas Petricek, and Alan Mycroft.
\newblock The semantic marriage of monads and effects.
\newblock {\em CoRR}, abs/1401.5391, 2014.
\newblock URL: \url{http://arxiv.org/abs/1401.5391}, \href
  {https://arxiv.org/abs/1401.5391} {\path{arXiv:1401.5391}}.

\bibitem{DBLP:conf/icalp/PetricekOM13}
Tomas Petricek, Dominic~A. Orchard, and Alan Mycroft.
\newblock Coeffects: Unified static analysis of context-dependence.
\newblock In {\em Automata, Languages, and Programming - 40th International
  Colloquium, {ICALP} 2013, Riga, Latvia, July 8-12, 2013, Proceedings, Part
  {II}}, pages 385--397, 2013.
\newblock \href {https://doi.org/10.1007/978-3-642-39212-2\_35}
  {\path{doi:10.1007/978-3-642-39212-2\_35}}.

\bibitem{DBLP:conf/icfp/PetricekOM14}
Tomas Petricek, Dominic~A. Orchard, and Alan Mycroft.
\newblock Coeffects: a calculus of context-dependent computation.
\newblock In {\em Proceedings of the 19th {ACM} {SIGPLAN} international
  conference on Functional programming, Gothenburg, Sweden, September 1-3,
  2014}, pages 123--135, 2014.
\newblock \href {https://doi.org/10.1145/2628136.2628160}
  {\path{doi:10.1145/2628136.2628160}}.

\bibitem{DBLP:conf/fossacs/PfenningG15}
Frank Pfenning and Dennis Griffith.
\newblock Polarized substructural session types.
\newblock In Andrew~M. Pitts, editor, {\em Foundations of Software Science and
  Computation Structures - 18th International Conference, FoSSaCS 2015, Held as
  Part of the European Joint Conferences on Theory and Practice of Software,
  {ETAPS} 2015, London, UK, April 11-18, 2015. Proceedings}, volume 9034 of
  {\em Lecture Notes in Computer Science}, pages 3--22. Springer, 2015.
\newblock \href {https://doi.org/10.1007/978-3-662-46678-0\_1}
  {\path{doi:10.1007/978-3-662-46678-0\_1}}.

\bibitem{DBLP:conf/fscd/PimentelON21}
Elaine Pimentel, Carlos Olarte, and Vivek Nigam.
\newblock Process-as-formula interpretation: {A} substructural multimodal view
  (invited talk).
\newblock In Naoki Kobayashi, editor, {\em 6th International Conference on
  Formal Structures for Computation and Deduction, {FSCD} 2021, July 17-24,
  2021, Buenos Aires, Argentina (Virtual Conference)}, volume 195 of {\em
  LIPIcs}, pages 3:1--3:21. Schloss Dagstuhl - Leibniz-Zentrum f{\"{u}}r
  Informatik, 2021.
\newblock URL:
  \url{https://drops.dagstuhl.de/entities/document/10.4230/LIPIcs.FSCD.2021.3},
  \href {https://doi.org/10.4230/LIPIcs.FSCD.2021.3}
  {\path{doi:10.4230/LIPIcs.FSCD.2021.3}}.

\bibitem{Pruiksma:2018}
Klaas Pruiksma, William Chargin, Frank Pfenning, and Jason Reed.
\newblock Adjoint logic.
\newblock Unpublished Draft:
  \url{http://www.cs.cmu.edu/~fp/papers/adjoint18b.pdf}, 2018.

\bibitem{pruiksma2020message}
Klaas Pruiksma and Frank Pfenning.
\newblock A message-passing interpretation of adjoint logic.
\newblock {\em Journal of Logical and Algebraic Methods in Programming}, page
  100637, 2020.

\bibitem{restall2002introduction}
Greg Restall.
\newblock {\em An introduction to substructural logics}.
\newblock Routledge, 2002.

\bibitem{von2001proof}
Jan von Plato.
\newblock A proof of {G}entzen's {H}auptsatz without multicut.
\newblock {\em Arch. Math. Log.}, 40(1):9--18, 2001.
\newblock URL: \url{https://doi.org/10.1007/s001530050170}, \href
  {https://doi.org/10.1007/S001530050170} {\path{doi:10.1007/S001530050170}}.

\bibitem{wadler1990linear}
Philip Wadler.
\newblock Linear types can change the world!
\newblock In Manfred Broy and Cliff~B. Jones, editors, {\em Programming
  concepts and methods: Proceedings of the {IFIP} Working Group 2.2, 2.3
  Working Conference on Programming Concepts and Methods, Sea of Galilee,
  Israel, 2-5 April, 1990}, page 561. North-Holland, 1990.

\bibitem{walker2005substructural}
David Walker.
\newblock Substructural type systems.
\newblock {\em Advanced topics in types and programming languages}, pages
  3--44, 2005.

\end{thebibliography}

\iflongversion
\onecolumn
\appendix
\section{Collected rules for the $\mGL{}$ sequent calculus}
\label{sec:term-assigment_for_the_mGL-seq-calc}
\begin{mdframed}
\drules[GST]{$\delta  \odot  \Delta  \vdash_{\mathsf{GS} }  \mGLnt{t}  \mGLsym{:}  \mGLnt{r}  \odot  \mGLnt{X}$}{Graded System}{
  id,UnitR,UnitL,TenR,TenL,LinR,Cut,Weak,Cont,Ex,Sub
}
\end{mdframed}

\begin{mdframed}
\drules[MST]{$\delta  \odot  \Delta  \mGLsym{;}  \Gamma  \vdash_{\mathsf{MS} }  \mGLnt{l}  \mGLsym{:}  \mGLnt{A}$}{Mixed System}{
  id,Sub,UnitR,UnitL
}
\end{mdframed}

\begin{mdframed}
\drules[MST]{$\delta  \odot  \Delta  \mGLsym{;}  \Gamma  \vdash_{\mathsf{MS} }  \mGLnt{l}  \mGLsym{:}  \mGLnt{A}$}{Mixed System (continued)}{
 ImpR,ImpL,TenL, TenR,GUnitL,GTenL,GrdR,LinL,GrdL,Cut,GCut
}
\end{mdframed}

\begin{mdframed}
\drules[MST]{$\delta  \odot  \Delta  \mGLsym{;}  \Gamma  \vdash_{\mathsf{MS} }  \mGLnt{l}  \mGLsym{:}  \mGLnt{A}$}{Mixed System (continued)}{
 Weak,Cont,Ex,GEx
}
\end{mdframed}

\subsection{Equational theory}
\label{sec:full-eq-theory}

If cut elimination on
  derivation $\Pi_{{\mathrm{1}}}$ of $\delta  \odot  \Delta  \vdash_{\mathsf{GS} }  \mGLnt{t}  \mGLsym{:}  \mGLnt{X}$
  yields the cut-free derivation of $\Pi_{{\mathrm{2}}}$ for $\delta  \odot  \Delta  \vdash_{\mathsf{GS} }  \mGLnt{t'}  \mGLsym{:}  \mGLnt{X}$
  then $\Pi_{{\mathrm{1}}} \equiv \Pi_{{\mathrm{2}}}$, and similarly for $\MS{}$.
  Below gives the remaining equalities.

Given an inequality $\mGLnt{r}  \leq  \mGLnt{s}$ then we write:
\begin{align*}
  \inferrule*[right=$\mGLdruleGSTXXSubName{}$]
  { \delta_{{\mathrm{1}}}  \mGLsym{,}  \mGLnt{s}  \mGLsym{,}  \delta_{{\mathrm{2}}}  \odot  \Delta_{{\mathrm{1}}}  \mGLsym{,}  \mGLnt{X}  \mGLsym{,}  \Delta_{{\mathrm{2}}}  \vdash_{\mathsf{GS} }  \mGLnt{t}  \mGLsym{:}  \mGLnt{Y} \quad \mGLnt{r}  \leq  \mGLnt{s}}
  { \delta_{{\mathrm{1}}}  \mGLsym{,}  \mGLnt{r}  \mGLsym{,}  \delta_{{\mathrm{2}}}  \odot  \Delta_{{\mathrm{1}}}  \mGLsym{,}  \mGLnt{X}  \mGLsym{,}  \Delta_{{\mathrm{2}}}  \vdash_{\mathsf{GS} }  \mGLnt{t}  \mGLsym{:}  \mGLnt{Y} }
\end{align*}
as syntactic sugar for:
\begin{align*}
  \inferrule*[right=$\mGLdruleGSTXXSubName{}$]
  { \delta_{{\mathrm{1}}}  \mGLsym{,}  \mGLnt{s}  \mGLsym{,}  \delta_{{\mathrm{2}}}  \odot  \Delta_{{\mathrm{1}}}  \mGLsym{,}  \mGLnt{X}  \mGLsym{,}  \Delta_{{\mathrm{2}}}  \vdash_{\mathsf{GS} }  \mGLnt{t}  \mGLsym{:}  \mGLnt{Y} \quad \dfrac{\mGLnt{r}  \leq  \mGLnt{s}}{(  \delta_{{\mathrm{1}}}  \mGLsym{,}  \mGLnt{r}  \mGLsym{,}  \delta_{{\mathrm{2}}}  )   \leq   (  \delta_{{\mathrm{1}}}  \mGLsym{,}  \mGLnt{s}  \mGLsym{,}  \delta_{{\mathrm{2}}}  )}\textsc{cong}  }
  { \delta_{{\mathrm{1}}}  \mGLsym{,}  \mGLnt{r}  \mGLsym{,}  \delta_{{\mathrm{2}}}  \odot  \Delta_{{\mathrm{1}}}  \mGLsym{,}  \mGLnt{X}  \mGLsym{,}  \Delta_{{\mathrm{2}}}  \vdash_{\mathsf{GS} }  \mGLnt{t}  \mGLsym{:}  \mGLnt{Y} }
\end{align*}


\begin{align*}
      \fbox{$\vdash_{\mathsf{GS}} \equiv$}\quad\hfill\textit{Graded system}%
\end{align*}

  \begin{gather*}
    \inferrule*[right=$\mGLdruleGSTXXSubName{}$]
    {
      \inferrule*[right=\Phi( \delta )]
                 { \mathcal{G}[ {  \delta  } ]   \odot  \Delta  \vdash_{\mathsf{GS} }  \mGLnt{t}  \mGLsym{:}  \mGLnt{X} }
                 { \mathcal{G}'[ {  \delta  } ]   \odot  \Delta'  \vdash_{\mathsf{GS} }  \mGLnt{t'}  \mGLsym{:}  \mGLnt{X'} }
      \quad
      \delta  \leq  \delta'
    }
    {
      \mathcal{G}'[ {  \delta'  } ]   \odot  \Delta'  \vdash_{\mathsf{GS} }  \mGLnt{t'}  \mGLsym{:}  \mGLnt{X'}
    }
    \equiv
    \inferrule*[right=\Phi( \delta' )]
    {
      \inferrule*[right=$\mGLdruleGSTXXSubName{}$]
         { \mathcal{G}[ {  \delta  } ]   \odot  \Delta  \vdash_{\mathsf{GS} }  \mGLnt{t}  \mGLsym{:}  \mGLnt{X} \quad \delta  \leq  \delta' }
         { \mathcal{G}[ {  \delta'  } ]   \odot  \Delta  \vdash_{\mathsf{GS} }  \mGLnt{t}  \mGLsym{:}  \mGLnt{X} }
    }
    {
      \mathcal{G}'[ {  \delta'  } ]   \odot  \Delta'  \vdash_{\mathsf{GS} }  \mGLnt{t'}  \mGLsym{:}  \mGLnt{X'}
    }
    \tag{\textsc{sub-comm-conv}}
  \end{gather*}

    \begin{gather*}
  \inferrule*[right=$\mGLdruleGSTXXContName{}$]
  {
\inferrule*[right=$\mGLdruleGSTXXWeakName{}$]
           {\delta_{{\mathrm{1}}}  \mGLsym{,}  \mGLnt{r}  \mGLsym{,}  \delta_{{\mathrm{2}}}  \odot  \Delta_{{\mathrm{1}}}  \mGLsym{,}  \mGLnt{X}  \mGLsym{,}  \Delta_{{\mathrm{2}}}  \vdash_{\mathsf{GS} }  \mGLnt{t}  \mGLsym{:}  \mGLnt{Y}}
           {\delta_{{\mathrm{1}}}  \mGLsym{,}  \mGLnt{r}  \mGLsym{,}  0  \mGLsym{,}  \delta_{{\mathrm{2}}}  \odot  \Delta_{{\mathrm{1}}}  \mGLsym{,}  \mGLnt{X}  \mGLsym{,}  \mGLnt{X}  \mGLsym{,}  \Delta_{{\mathrm{2}}}  \vdash_{\mathsf{GS} }  \mGLnt{t}  \mGLsym{:}  \mGLnt{Y}}
  }
  {\delta_{{\mathrm{1}}}  \mGLsym{,}  \mGLnt{r}  +  0  \mGLsym{,}  \delta_{{\mathrm{2}}}  \odot  \Delta_{{\mathrm{1}}}  \mGLsym{,}  \mGLnt{X}  \mGLsym{,}  \Delta_{{\mathrm{2}}}  \vdash_{\mathsf{GS} }  \mGLnt{t}  \mGLsym{:}  \mGLnt{Y}}
\equiv\;\;
    \delta_{{\mathrm{1}}}  \mGLsym{,}  \mGLnt{r}  \mGLsym{,}  \delta_{{\mathrm{2}}}  \odot  \Delta_{{\mathrm{1}}}  \mGLsym{,}  \mGLnt{X}  \mGLsym{,}  \Delta_{{\mathrm{2}}}  \vdash_{\mathsf{GS} }  \mGLnt{t}  \mGLsym{:}  \mGLnt{Y}
\quad
\tag{\textsc{contr-unitR}}
  \end{gather*}

  \begin{gather*}
%
%
\inferrule*[right=$\mGLdruleGSTXXContName{}$]{
  \inferrule*[right=$\mGLdruleGSTXXContName{}$]
  {
    \delta_{{\mathrm{1}}}  \mGLsym{,}  \mGLnt{r_{{\mathrm{1}}}}  \mGLsym{,}  \mGLnt{r_{{\mathrm{2}}}}  \mGLsym{,}  \mGLnt{r_{{\mathrm{3}}}}  \mGLsym{,}  \delta_{{\mathrm{2}}}  \odot  \Delta_{{\mathrm{1}}}  \mGLsym{,}  \mGLnt{X}  \mGLsym{,}  \mGLnt{X}  \mGLsym{,}  \mGLnt{X}  \mGLsym{,}  \Delta_{{\mathrm{2}}}  \vdash_{\mathsf{GS} }  \mGLnt{t}  \mGLsym{:}  \mGLnt{Y}
  }
  {
   \delta_{{\mathrm{1}}}  \mGLsym{,}  \mGLnt{r_{{\mathrm{1}}}}  +  \mGLnt{r_{{\mathrm{2}}}}  \mGLsym{,}  \mGLnt{r_{{\mathrm{3}}}}  \mGLsym{,}  \delta_{{\mathrm{2}}}  \odot  \Delta_{{\mathrm{1}}}  \mGLsym{,}  \mGLnt{X}  \mGLsym{,}  \mGLnt{X}  \mGLsym{,}  \Delta_{{\mathrm{2}}}  \vdash_{\mathsf{GS} }  \mGLnt{t}  \mGLsym{:}  \mGLnt{Y}
  }
}
{
\delta_{{\mathrm{1}}}  \mGLsym{,}  \mGLsym{(}  \mGLnt{r_{{\mathrm{1}}}}  +  \mGLnt{r_{{\mathrm{2}}}}  \mGLsym{)}  +  \mGLnt{r_{{\mathrm{3}}}}  \mGLsym{,}  \delta_{{\mathrm{2}}}  \odot  \Delta_{{\mathrm{1}}}  \mGLsym{,}  \mGLnt{X}  \mGLsym{,}  \Delta_{{\mathrm{2}}}  \vdash_{\mathsf{GS} }  \mGLnt{t}  \mGLsym{:}  \mGLnt{Y}
}
\equiv
\inferrule*[right=$\mGLdruleGSTXXContName{}$]{
  \inferrule*[right=$\mGLdruleGSTXXContName{}$]
  {
    \delta_{{\mathrm{1}}}  \mGLsym{,}  \mGLnt{r_{{\mathrm{1}}}}  \mGLsym{,}  \mGLnt{r_{{\mathrm{2}}}}  \mGLsym{,}  \mGLnt{r_{{\mathrm{3}}}}  \mGLsym{,}  \delta_{{\mathrm{2}}}  \odot  \Delta_{{\mathrm{1}}}  \mGLsym{,}  \mGLnt{X}  \mGLsym{,}  \mGLnt{X}  \mGLsym{,}  \mGLnt{X}  \mGLsym{,}  \Delta_{{\mathrm{2}}}  \vdash_{\mathsf{GS} }  \mGLnt{t}  \mGLsym{:}  \mGLnt{Y}
  }
  {
   \delta_{{\mathrm{1}}}  \mGLsym{,}  \mGLnt{r_{{\mathrm{1}}}}  \mGLsym{,}  \mGLnt{r_{{\mathrm{2}}}}  +  \mGLnt{r_{{\mathrm{3}}}}  \mGLsym{,}  \delta_{{\mathrm{2}}}  \odot  \Delta_{{\mathrm{1}}}  \mGLsym{,}  \mGLnt{X}  \mGLsym{,}  \mGLnt{X}  \mGLsym{,}  \Delta_{{\mathrm{2}}}  \vdash_{\mathsf{GS} }  \mGLnt{t}  \mGLsym{:}  \mGLnt{Y}
  }
}
{
\delta_{{\mathrm{1}}}  \mGLsym{,}  \mGLnt{r_{{\mathrm{1}}}}  +  \mGLsym{(}  \mGLnt{r_{{\mathrm{2}}}}  +  \mGLnt{r_{{\mathrm{3}}}}  \mGLsym{)}  \mGLsym{,}  \delta_{{\mathrm{2}}}  \odot  \Delta_{{\mathrm{1}}}  \mGLsym{,}  \mGLnt{X}  \mGLsym{,}  \Delta_{{\mathrm{2}}}  \vdash_{\mathsf{GS} }  \mGLnt{t}  \mGLsym{:}  \mGLnt{Y}
}
\tag{\textsc{contr-assoc}}
  \end{gather*}

  \begin{gather*}
%
%
\inferrule*[right=$\mGLdruleGSTXXExName{}$]
           {
             \inferrule*[right=$\mGLdruleGSTXXExName{}$]
  { \delta_{{\mathrm{1}}}  \mGLsym{,}  \mGLnt{r}  \mGLsym{,}  \mGLnt{s}  \mGLsym{,}  \delta_{{\mathrm{2}}}  \odot  \Delta_{{\mathrm{1}}}  \mGLsym{,}  \mGLnt{X}  \mGLsym{,}  \mGLnt{Y}  \mGLsym{,}  \Delta_{{\mathrm{2}}}  \vdash_{\mathsf{GS} }  \mGLnt{t}  \mGLsym{:}  \mGLnt{Z} }
  { \delta_{{\mathrm{1}}}  \mGLsym{,}  \mGLnt{s}  \mGLsym{,}  \mGLnt{r}  \mGLsym{,}  \delta_{{\mathrm{2}}}  \odot  \Delta_{{\mathrm{1}}}  \mGLsym{,}  \mGLnt{Y}  \mGLsym{,}  \mGLnt{X}  \mGLsym{,}  \Delta_{{\mathrm{2}}}  \vdash_{\mathsf{GS} }  \mGLnt{t}  \mGLsym{:}  \mGLnt{Z} }
           }
           {
    \delta_{{\mathrm{1}}}  \mGLsym{,}  \mGLnt{r}  \mGLsym{,}  \mGLnt{s}  \mGLsym{,}  \delta_{{\mathrm{2}}}  \odot  \Delta_{{\mathrm{1}}}  \mGLsym{,}  \mGLnt{X}  \mGLsym{,}  \mGLnt{Y}  \mGLsym{,}  \Delta_{{\mathrm{2}}}  \vdash_{\mathsf{GS} }  \mGLnt{t}  \mGLsym{:}  \mGLnt{Z}
           }
\equiv
\;
\delta_{{\mathrm{1}}}  \mGLsym{,}  \mGLnt{r}  \mGLsym{,}  \mGLnt{s}  \mGLsym{,}  \delta_{{\mathrm{2}}}  \odot  \Delta_{{\mathrm{1}}}  \mGLsym{,}  \mGLnt{X}  \mGLsym{,}  \mGLnt{Y}  \mGLsym{,}  \Delta_{{\mathrm{2}}}  \vdash_{\mathsf{GS} }  \mGLnt{t}  \mGLsym{:}  \mGLnt{Z}
\tag{\textsc{ex-ex}}
  \end{gather*}

  \begin{gather*}
%
%
\inferrule*[right=$\mGLdruleGSTXXSubName$]
           {\delta_{{\mathrm{1}}}  \odot  \Delta  \vdash_{\mathsf{GS} }  \mGLnt{t}  \mGLsym{:}  \mGLnt{X} \quad \delta_{{\mathrm{1}}}  \leq  \delta_{{\mathrm{1}}}}
           {\delta_{{\mathrm{1}}}  \odot  \Delta  \vdash_{\mathsf{GS} }  \mGLnt{t}  \mGLsym{:}  \mGLnt{X}}
\equiv
\;
\delta_{{\mathrm{1}}}  \odot  \Delta  \vdash_{\mathsf{GS} }  \mGLnt{t}  \mGLsym{:}  \mGLnt{X}
\tag{\textsc{sub-refl}}
\\[2em]
%
%
\inferrule*[right=$\mGLdruleGSTXXSubName$]
{\inferrule*[right=$\mGLdruleGSTXXSubName$]
           {\delta_{{\mathrm{1}}}  \odot  \Delta  \vdash_{\mathsf{GS} }  \mGLnt{t}  \mGLsym{:}  \mGLnt{X} \quad \delta_{{\mathrm{1}}}  \leq  \delta_{{\mathrm{2}}}}
           {\delta_{{\mathrm{2}}}  \odot  \Delta  \vdash_{\mathsf{GS} }  \mGLnt{t}  \mGLsym{:}  \mGLnt{X}} \quad \delta_{{\mathrm{2}}}  \leq  \delta_{{\mathrm{3}}} }
{\delta_{{\mathrm{3}}}  \odot  \Delta  \vdash_{\mathsf{GS} }  \mGLnt{t}  \mGLsym{:}  \mGLnt{X}}
\equiv
\inferrule*[right=$\mGLdruleGSTXXSubName$]
           {\delta_{{\mathrm{1}}}  \odot  \Delta  \vdash_{\mathsf{GS} }  \mGLnt{t}  \mGLsym{:}  \mGLnt{X} \quad \dfrac{\delta_{{\mathrm{1}}}  \leq  \delta_{{\mathrm{2}}} \quad \delta_{{\mathrm{2}}}  \leq  \delta_{{\mathrm{3}}}}{\delta_{{\mathrm{1}}}  \leq  \delta_{{\mathrm{3}}}} }
           {\delta_{{\mathrm{3}}}  \odot  \Delta  \vdash_{\mathsf{GS} }  \mGLnt{t}  \mGLsym{:}  \mGLnt{X}}
\tag{\textsc{sub-trans}}
  \end{gather*}

  \begin{gather*}
%
%
\inferrule*[right=$\mGLdruleGSTXXContName$]
{
\inferrule*[right=$\mGLdruleGSTXXSubName$]
{
\inferrule*[right=$\mGLdruleGSTXXSubName$]
           {\delta_{{\mathrm{1}}}  \mGLsym{,}  \mGLnt{r_{{\mathrm{1}}}}  \mGLsym{,}  \mGLnt{s_{{\mathrm{1}}}}  \mGLsym{,}  \delta_{{\mathrm{2}}}  \odot  \Delta_{{\mathrm{1}}}  \mGLsym{,}  \mGLnt{X}  \mGLsym{,}  \mGLnt{X}  \mGLsym{,}  \Delta_{{\mathrm{2}}}  \vdash_{\mathsf{GS} }  \mGLnt{t}  \mGLsym{:}  \mGLnt{Y} \quad \mGLnt{r_{{\mathrm{1}}}}  \leq  \mGLnt{r_{{\mathrm{2}}}}}
           {\delta_{{\mathrm{1}}}  \mGLsym{,}  \mGLnt{r_{{\mathrm{2}}}}  \mGLsym{,}  \mGLnt{s_{{\mathrm{1}}}}  \mGLsym{,}  \delta_{{\mathrm{2}}}  \odot  \Delta_{{\mathrm{1}}}  \mGLsym{,}  \mGLnt{X}  \mGLsym{,}  \mGLnt{X}  \mGLsym{,}  \Delta_{{\mathrm{2}}}  \vdash_{\mathsf{GS} }  \mGLnt{t}  \mGLsym{:}  \mGLnt{Y}}
\quad
\mGLnt{s_{{\mathrm{1}}}}  \leq  \mGLnt{s_{{\mathrm{2}}}}
}
{\delta_{{\mathrm{1}}}  \mGLsym{,}  \mGLnt{r_{{\mathrm{2}}}}  \mGLsym{,}  \mGLnt{s_{{\mathrm{2}}}}  \mGLsym{,}  \delta_{{\mathrm{2}}}  \odot  \Delta_{{\mathrm{1}}}  \mGLsym{,}  \mGLnt{X}  \mGLsym{,}  \mGLnt{X}  \mGLsym{,}  \Delta_{{\mathrm{2}}}  \vdash_{\mathsf{GS} }  \mGLnt{t}  \mGLsym{:}  \mGLnt{Y}}
}
{\delta_{{\mathrm{1}}}  \mGLsym{,}  \mGLnt{r_{{\mathrm{2}}}}  +  \mGLnt{s_{{\mathrm{2}}}}  \mGLsym{,}  \delta_{{\mathrm{2}}}  \odot  \Delta_{{\mathrm{1}}}  \mGLsym{,}  \mGLnt{X}  \mGLsym{,}  \Delta_{{\mathrm{2}}}  \vdash_{\mathsf{GS} }  \mGLnt{t}  \mGLsym{:}  \mGLnt{Y} }
\equiv
\;
\inferrule*[right=$\mGLdruleGSTXXSubName$]
{
\inferrule*[right=$\mGLdruleGSTXXContName$]
{\delta_{{\mathrm{1}}}  \mGLsym{,}  \mGLnt{r_{{\mathrm{1}}}}  \mGLsym{,}  \mGLnt{s_{{\mathrm{1}}}}  \mGLsym{,}  \delta_{{\mathrm{2}}}  \odot  \Delta_{{\mathrm{1}}}  \mGLsym{,}  \mGLnt{X}  \mGLsym{,}  \mGLnt{X}  \mGLsym{,}  \Delta_{{\mathrm{2}}}  \vdash_{\mathsf{GS} }  \mGLnt{t}  \mGLsym{:}  \mGLnt{Y}}
{\delta_{{\mathrm{1}}}  \mGLsym{,}  \mGLnt{r_{{\mathrm{1}}}}  +  \mGLnt{s_{{\mathrm{1}}}}  \mGLsym{,}  \delta_{{\mathrm{2}}}  \odot  \Delta_{{\mathrm{1}}}  \mGLsym{,}  \mGLnt{X}  \mGLsym{,}  \Delta_{{\mathrm{2}}}  \vdash_{\mathsf{GS} }  \mGLnt{t}  \mGLsym{:}  \mGLnt{Y}}
\quad
\dfrac{\mGLnt{r_{{\mathrm{1}}}}  \leq  \mGLnt{r_{{\mathrm{2}}}} \quad \mGLnt{s_{{\mathrm{1}}}}  \leq  \mGLnt{s_{{\mathrm{2}}}}}{\mGLnt{r_{{\mathrm{1}}}}  +  \mGLnt{s_{{\mathrm{1}}}}  \leq  \mGLnt{r_{{\mathrm{2}}}}  +  \mGLnt{s_{{\mathrm{2}}}}}
}
{\delta_{{\mathrm{1}}}  \mGLsym{,}  \mGLnt{r_{{\mathrm{2}}}}  +  \mGLnt{s_{{\mathrm{2}}}}  \mGLsym{,}  \delta_{{\mathrm{2}}}  \odot  \Delta_{{\mathrm{1}}}  \mGLsym{,}  \mGLnt{X}  \mGLsym{,}  \Delta_{{\mathrm{2}}}  \vdash_{\mathsf{GS} }  \mGLnt{t}  \mGLsym{:}  \mGLnt{Y}}
\tag{\textsc{contr-mono}}
  \end{gather*}

  \begin{gather*}
%
%
\inferrule*[right=$\mGLdruleGSTXXSubName$]
{
\inferrule*[right=$\mGLdruleGSTXXUnitLName$]
{(  \delta_{{\mathrm{1}}}  \mGLsym{,}  \delta_{{\mathrm{2}}}  )   \odot   ( \Delta_{{\mathrm{1}}}  \mGLsym{,}  \Delta_{{\mathrm{2}}} )   \vdash_{\mathsf{GS} }  \mGLnt{t}  \mGLsym{:}  \mGLnt{X}}
{(  \delta_{{\mathrm{1}}}  \mGLsym{,}  \mGLnt{r}  \mGLsym{,}  \delta_{{\mathrm{2}}}  )   \odot   ( \Delta_{{\mathrm{1}}}  \mGLsym{,}  \mGLmv{x}  \mGLsym{:}  \mathsf{J}  \mGLsym{,}  \Delta_{{\mathrm{2}}} )   \vdash_{\mathsf{GS} }  \mathsf{let} \, \mathsf{j} \, \mGLsym{=}  \mGLmv{x} \, \mathsf{in} \, \mGLnt{t}  \mGLsym{:}  \mGLnt{Y}}
\quad
\mGLnt{r}  \leq  \mGLnt{s}
}
{(  \delta_{{\mathrm{1}}}  \mGLsym{,}  \mGLnt{s}  \mGLsym{,}  \delta_{{\mathrm{2}}}  )   \odot   ( \Delta_{{\mathrm{1}}}  \mGLsym{,}  \mGLmv{x}  \mGLsym{:}  \mathsf{J}  \mGLsym{,}  \Delta_{{\mathrm{2}}} )   \vdash_{\mathsf{GS} }  \mathsf{let} \, \mathsf{j} \, \mGLsym{=}  \mGLmv{x} \, \mathsf{in} \, \mGLnt{t}  \mGLsym{:}  \mGLnt{Y}}
\equiv
\inferrule*[right=$\mGLdruleGSTXXUnitLName$]
{(  \delta_{{\mathrm{1}}}  \mGLsym{,}  \delta_{{\mathrm{2}}}  )   \odot   ( \Delta_{{\mathrm{1}}}  \mGLsym{,}  \Delta_{{\mathrm{2}}} )   \vdash_{\mathsf{GS} }  \mGLnt{t}  \mGLsym{:}  \mGLnt{X}}
{(  \delta_{{\mathrm{1}}}  \mGLsym{,}  \mGLnt{s}  \mGLsym{,}  \delta_{{\mathrm{2}}}  )   \odot   ( \Delta_{{\mathrm{1}}}  \mGLsym{,}  \mGLmv{x}  \mGLsym{:}  \mathsf{J}  \mGLsym{,}  \Delta_{{\mathrm{2}}} )   \vdash_{\mathsf{GS} }  \mathsf{let} \, \mathsf{j} \, \mGLsym{=}  \mGLmv{x} \, \mathsf{in} \, \mGLnt{t}  \mGLsym{:}  \mGLnt{Y}}
\tag{\textsc{sub-unitL}}
  \end{gather*}

  \begin{gather*}
%
%
\inferrule*[right=$\mGLdruleGSTXXSubName$]
{
\inferrule*[right=$\mGLdruleGSTXXTenLName$]
{(  \delta_{{\mathrm{1}}}  \mGLsym{,}  \mGLnt{r}  \mGLsym{,}  \mGLnt{r}  \mGLsym{,}  \delta_{{\mathrm{2}}}  )   \odot   ( \Delta_{{\mathrm{1}}}  \mGLsym{,}  \mGLmv{x}  \mGLsym{:}  \mGLnt{X}  \mGLsym{,}  \mGLmv{y}  \mGLsym{:}  \mGLnt{Y}  \mGLsym{,}  \Delta_{{\mathrm{2}}} )   \vdash_{\mathsf{GS} }  \mGLnt{t}  \mGLsym{:}  \mGLnt{Z}}
{(  \delta_{{\mathrm{1}}}  \mGLsym{,}  \mGLnt{r}  \mGLsym{,}  \delta_{{\mathrm{2}}}  )   \odot   ( \Delta_{{\mathrm{1}}}  \mGLsym{,}  \mGLmv{z}  \mGLsym{:}  \mGLnt{X}  \boxtimes  \mGLnt{Y}  \mGLsym{,}  \Delta_{{\mathrm{2}}} )   \vdash_{\mathsf{GS} }   \mathsf{let} \,( \mGLmv{x} , \mGLmv{y} ) =  \mGLmv{z} \, \mathsf{in} \, \mGLnt{t}   \mGLsym{:}  \mGLnt{Z}}
\quad
\mGLnt{r}  \leq  \mGLnt{s}
}
{(  \delta_{{\mathrm{1}}}  \mGLsym{,}  \mGLnt{s}  \mGLsym{,}  \delta_{{\mathrm{2}}}  )   \odot   ( \Delta_{{\mathrm{1}}}  \mGLsym{,}  \mGLmv{z}  \mGLsym{:}  \mGLnt{X}  \boxtimes  \mGLnt{Y}  \mGLsym{,}  \Delta_{{\mathrm{2}}} )   \vdash_{\mathsf{GS} }   \mathsf{let} \,( \mGLmv{x} , \mGLmv{y} ) =  \mGLmv{z} \, \mathsf{in} \, \mGLnt{t}   \mGLsym{:}  \mGLnt{Z} }
\equiv
\inferrule*[right=$\mGLdruleGSTXXTenLName$]
{
\inferrule*[right=$\mGLdruleGSTXXSubName$]
{
  \inferrule*[right=$\mGLdruleGSTXXSubName$]
  {
    (  \delta_{{\mathrm{1}}}  \mGLsym{,}  \mGLnt{r}  \mGLsym{,}  \mGLnt{r}  \mGLsym{,}  \delta_{{\mathrm{2}}}  )   \odot   ( \Delta_{{\mathrm{1}}}  \mGLsym{,}  \mGLmv{x}  \mGLsym{:}  \mGLnt{X}  \mGLsym{,}  \mGLmv{y}  \mGLsym{:}  \mGLnt{Y}  \mGLsym{,}  \Delta_{{\mathrm{2}}} )   \vdash_{\mathsf{GS} }  \mGLnt{t}  \mGLsym{:}  \mGLnt{Z} \quad \mGLnt{r}  \leq  \mGLnt{s}
  }
  {(  \delta_{{\mathrm{1}}}  \mGLsym{,}  \mGLnt{s}  \mGLsym{,}  \mGLnt{r}  \mGLsym{,}  \delta_{{\mathrm{2}}}  )   \odot   ( \Delta_{{\mathrm{1}}}  \mGLsym{,}  \mGLmv{x}  \mGLsym{:}  \mGLnt{X}  \mGLsym{,}  \mGLmv{y}  \mGLsym{:}  \mGLnt{Y}  \mGLsym{,}  \Delta_{{\mathrm{2}}} )   \vdash_{\mathsf{GS} }  \mGLnt{t}  \mGLsym{:}  \mGLnt{Z}}
  \quad
  \mGLnt{r}  \leq  \mGLnt{s}
}
{(  \delta_{{\mathrm{1}}}  \mGLsym{,}  \mGLnt{s}  \mGLsym{,}  \mGLnt{s}  \mGLsym{,}  \delta_{{\mathrm{2}}}  )   \odot   ( \Delta_{{\mathrm{1}}}  \mGLsym{,}  \mGLmv{x}  \mGLsym{:}  \mGLnt{X}  \mGLsym{,}  \mGLmv{y}  \mGLsym{:}  \mGLnt{Y}  \mGLsym{,}  \Delta_{{\mathrm{2}}} )   \vdash_{\mathsf{GS} }  \mGLnt{t}  \mGLsym{:}  \mGLnt{Z}}
}
{(  \delta_{{\mathrm{1}}}  \mGLsym{,}  \mGLnt{s}  \mGLsym{,}  \delta_{{\mathrm{2}}}  )   \odot   ( \Delta_{{\mathrm{1}}}  \mGLsym{,}  \mGLmv{x}  \mGLsym{:}  \mathsf{J}  \mGLsym{,}  \Delta_{{\mathrm{2}}} )   \vdash_{\mathsf{GS} }  \mathsf{let} \, \mathsf{j} \, \mGLsym{=}  \mGLmv{x} \, \mathsf{in} \, \mGLnt{t}  \mGLsym{:}  \mGLnt{Y}}
\tag{\textsc{sub-tensorL}}
  \end{gather*}

    \begin{gather*}
%
%
\inferrule*[right=$\mGLdruleGSTXXCutName$]
 {
  \inferrule*[right=$\mGLdruleGSTXXSubName$]
             {\delta_{{\mathrm{2}}}  \odot  \Delta_{{\mathrm{2}}}  \vdash_{\mathsf{GS} }  \mGLnt{t_{{\mathrm{1}}}}  \mGLsym{:}  \mGLnt{X} \quad
               \delta_{{\mathrm{2}}}  \leq  \delta'_{{\mathrm{2}}}}
             {\delta'_{{\mathrm{2}}}  \odot  \Delta_{{\mathrm{2}}}  \vdash_{\mathsf{GS} }  \mGLnt{t_{{\mathrm{1}}}}  \mGLsym{:}  \mGLnt{X} }
               \\
  \inferrule*[right=$\mGLdruleGSTXXSubName$]
            {(  \delta_{{\mathrm{1}}}  \mGLsym{,}  \mGLnt{r}  \mGLsym{,}  \delta_{{\mathrm{3}}}  )   \odot   ( \Delta_{{\mathrm{1}}}  \mGLsym{,}  \mGLmv{x}  \mGLsym{:}  \mGLnt{X}  \mGLsym{,}  \Delta_{{\mathrm{3}}} )   \vdash_{\mathsf{GS} }  \mGLnt{t_{{\mathrm{2}}}}  \mGLsym{:}  \mGLnt{Y} \quad
              \mGLnt{r}  \leq  \mGLnt{r'}}
            {(  \delta_{{\mathrm{1}}}  \mGLsym{,}  \mGLnt{r'}  \mGLsym{,}  \delta_{{\mathrm{3}}}  )   \odot   ( \Delta_{{\mathrm{1}}}  \mGLsym{,}  \mGLmv{x}  \mGLsym{:}  \mGLnt{X}  \mGLsym{,}  \Delta_{{\mathrm{3}}} )   \vdash_{\mathsf{GS} }  \mGLnt{t_{{\mathrm{2}}}}  \mGLsym{:}  \mGLnt{Y}}
 }
 {(  \delta_{{\mathrm{1}}}  \mGLsym{,}   \mGLnt{r'}   *   \delta'_{{\mathrm{2}}}   \mGLsym{,}  \delta_{{\mathrm{3}}}  )   \odot   ( \Delta_{{\mathrm{1}}}  \mGLsym{,}  \Delta_{{\mathrm{2}}}  \mGLsym{,}  \Delta_{{\mathrm{3}}} )   \vdash_{\mathsf{GS} }  \mGLsym{[}  \mGLnt{t_{{\mathrm{1}}}}  \mGLsym{/}  \mGLmv{x}  \mGLsym{]}  \mGLnt{t_{{\mathrm{2}}}}  \mGLsym{:}  \mGLnt{Y} }
\equiv
\inferrule*[right=$\mGLdruleGSTXXSubName$]
{
\inferrule*[right=$\mGLdruleGSTXXCutName$]
 {\delta_{{\mathrm{2}}}  \odot  \Delta_{{\mathrm{2}}}  \vdash_{\mathsf{GS} }  \mGLnt{t_{{\mathrm{1}}}}  \mGLsym{:}  \mGLnt{X} \quad
  (  \delta_{{\mathrm{1}}}  \mGLsym{,}  \mGLnt{r}  \mGLsym{,}  \delta_{{\mathrm{3}}}  )   \odot   ( \Delta_{{\mathrm{1}}}  \mGLsym{,}  \mGLmv{x}  \mGLsym{:}  \mGLnt{X}  \mGLsym{,}  \Delta_{{\mathrm{3}}} )   \vdash_{\mathsf{GS} }  \mGLnt{t_{{\mathrm{2}}}}  \mGLsym{:}  \mGLnt{Y} }
 { (  \delta_{{\mathrm{1}}}  \mGLsym{,}   \mGLnt{r}   *   \delta_{{\mathrm{2}}}   \mGLsym{,}  \delta_{{\mathrm{3}}}  )   \odot   ( \Delta_{{\mathrm{1}}}  \mGLsym{,}  \Delta_{{\mathrm{2}}}  \mGLsym{,}  \Delta_{{\mathrm{3}}} )   \vdash_{\mathsf{GS} }  \mGLsym{[}  \mGLnt{t_{{\mathrm{1}}}}  \mGLsym{/}  \mGLmv{x}  \mGLsym{]}  \mGLnt{t_{{\mathrm{2}}}}  \mGLsym{:}  \mGLnt{Y} }
 \dfrac{\mGLnt{r}  \leq  \mGLnt{r'} \quad \delta_{{\mathrm{2}}}  \leq  \delta'_{{\mathrm{2}}}}
       {\mGLnt{r}  *  \delta_{{\mathrm{2}}}  \leq  \mGLnt{r'}  *  \delta'_{{\mathrm{2}}}}
}
{
(  \delta_{{\mathrm{1}}}  \mGLsym{,}   \mGLnt{r'}   *   \delta'_{{\mathrm{2}}}   \mGLsym{,}  \delta_{{\mathrm{3}}}  )   \odot   ( \Delta_{{\mathrm{1}}}  \mGLsym{,}  \Delta_{{\mathrm{2}}}  \mGLsym{,}  \Delta_{{\mathrm{3}}} )   \vdash_{\mathsf{GS} }  \mGLsym{[}  \mGLnt{t_{{\mathrm{1}}}}  \mGLsym{/}  \mGLmv{x}  \mGLsym{]}  \mGLnt{t_{{\mathrm{2}}}}  \mGLsym{:}  \mGLnt{Y}
}
\tag{\textsc{mult-mono}}
  \end{gather*}

  \begin{gather*}
    \fbox{$\vdash_{\MS} \equiv$}\quad\hfill\textit{Mixed system}%
    \\[1.5em]
    \inferrule*[right=$\mGLdruleMSTXXSubName{}$]
    {
      \inferrule*[right=\Phi( \delta )]
                 { \mathcal{G}[ {  \delta  } ]   \odot  \Delta  \mGLsym{;}  \Gamma  \vdash_{\mathsf{MS} }  \mGLnt{l}  \mGLsym{:}  \mGLnt{A} }
                 { \mathcal{G}'[ {  \delta  } ]   \odot  \Delta'  \mGLsym{;}  \Gamma  \vdash_{\mathsf{MS} }  \mGLnt{l'}  \mGLsym{:}  \mGLnt{A'} }
      \quad
      \delta  \leq  \delta'
    }
    {
      \mathcal{G}'[ {  \delta'  } ]   \odot  \Delta'  \mGLsym{;}  \Gamma'  \vdash_{\mathsf{MS} }  \mGLnt{l'}  \mGLsym{:}  \mGLnt{A'}
    }
    \equiv
    \inferrule*[right=\Phi( \delta' )]
    {
      \inferrule*[right=$\mGLdruleMSTXXSubName{}$]
         { \mathcal{G}[ {  \delta  } ]   \odot  \Delta  \mGLsym{;}  \Gamma  \vdash_{\mathsf{MS} }  \mGLnt{l}  \mGLsym{:}  \mGLnt{A} \quad \delta  \leq  \delta' }
         { \mathcal{G}[ {  \delta'  } ]   \odot  \Delta  \mGLsym{;}  \Gamma  \vdash_{\mathsf{MS} }  \mGLnt{l}  \mGLsym{:}  \mGLnt{A} }
    }
    {
      \mathcal{G}'[ {  \delta'  } ]   \odot  \Delta'  \mGLsym{;}  \Gamma'  \vdash_{\mathsf{MS} }  \mGLnt{l'}  \mGLsym{:}  \mGLnt{A'}
    }
    \tag{\textsc{sub-comm-conv}}
  \end{gather*}

  \begin{gather*}
  \inferrule*[right=$\mGLdruleMSTXXContName{}$]
  {
\inferrule*[right=$\mGLdruleMSTXXWeakName{}$]
           {\delta_{{\mathrm{1}}}  \mGLsym{,}  \mGLnt{r}  \mGLsym{,}  \delta_{{\mathrm{2}}}  \odot  \Delta_{{\mathrm{1}}}  \mGLsym{,}  \mGLnt{X}  \mGLsym{,}  \Delta_{{\mathrm{2}}}  \mGLsym{;}  \Gamma  \vdash_{\mathsf{MS} }  \mGLnt{l}  \mGLsym{:}  \mGLnt{B}}
           {\delta_{{\mathrm{1}}}  \mGLsym{,}  0  \mGLsym{,}  \mGLnt{r}  \mGLsym{,}  \delta_{{\mathrm{2}}}  \odot  \Delta_{{\mathrm{1}}}  \mGLsym{,}  \mGLnt{X}  \mGLsym{,}  \mGLnt{X}  \mGLsym{,}  \Delta_{{\mathrm{2}}}  \mGLsym{;}  \Gamma  \vdash_{\mathsf{MS} }  \mGLnt{l}  \mGLsym{:}  \mGLnt{B}}
  }
  {\delta_{{\mathrm{1}}}  \mGLsym{,}  0  +  \mGLnt{r}  \mGLsym{,}  \delta_{{\mathrm{2}}}  \odot  \Delta_{{\mathrm{1}}}  \mGLsym{,}  \mGLnt{X}  \mGLsym{,}  \Delta_{{\mathrm{2}}}  \mGLsym{;}  \Gamma  \vdash_{\mathsf{MS} }  \mGLnt{l}  \mGLsym{:}  \mGLnt{B}}
\equiv\;\;
    \delta_{{\mathrm{1}}}  \mGLsym{,}  \mGLnt{r}  \mGLsym{,}  \delta_{{\mathrm{2}}}  \odot  \Delta_{{\mathrm{1}}}  \mGLsym{,}  \mGLnt{X}  \mGLsym{,}  \Delta_{{\mathrm{2}}}  \mGLsym{;}  \Gamma  \vdash_{\mathsf{MS} }  \mGLnt{l}  \mGLsym{:}  \mGLnt{B}
\quad
\tag{\textsc{contr-unitL}}
  \end{gather*}

  \begin{gather*}
%
  \inferrule*[right=$\mGLdruleMSTXXContName{}$]
  {
\inferrule*[right=$\mGLdruleMSTXXWeakName{}$]
           {\delta_{{\mathrm{1}}}  \mGLsym{,}  \mGLnt{r}  \mGLsym{,}  \delta_{{\mathrm{2}}}  \odot  \Delta_{{\mathrm{1}}}  \mGLsym{,}  \mGLnt{X}  \mGLsym{,}  \Delta_{{\mathrm{2}}}  \mGLsym{;}  \Gamma  \vdash_{\mathsf{MS} }  \mGLnt{l}  \mGLsym{:}  \mGLnt{B}}
           {\delta_{{\mathrm{1}}}  \mGLsym{,}  \mGLnt{r}  \mGLsym{,}  0  \mGLsym{,}  \delta_{{\mathrm{2}}}  \odot  \Delta_{{\mathrm{1}}}  \mGLsym{,}  \mGLnt{X}  \mGLsym{,}  \mGLnt{X}  \mGLsym{,}  \Delta_{{\mathrm{2}}}  \mGLsym{;}  \Gamma  \vdash_{\mathsf{MS} }  \mGLnt{l}  \mGLsym{:}  \mGLnt{B}}
  }
  {\delta_{{\mathrm{1}}}  \mGLsym{,}  \mGLnt{r}  +  0  \mGLsym{,}  \delta_{{\mathrm{2}}}  \odot  \Delta_{{\mathrm{1}}}  \mGLsym{,}  \mGLnt{X}  \mGLsym{,}  \Delta_{{\mathrm{2}}}  \mGLsym{;}  \Gamma  \vdash_{\mathsf{MS} }  \mGLnt{l}  \mGLsym{:}  \mGLnt{B}}
\equiv\;\;
    \delta_{{\mathrm{1}}}  \mGLsym{,}  \mGLnt{r}  \mGLsym{,}  \delta_{{\mathrm{2}}}  \odot  \Delta_{{\mathrm{1}}}  \mGLsym{,}  \mGLnt{X}  \mGLsym{,}  \Delta_{{\mathrm{2}}}  \mGLsym{;}  \Gamma  \vdash_{\mathsf{MS} }  \mGLnt{l}  \mGLsym{:}  \mGLnt{B}
\quad
\tag{\textsc{contr-unitR}}
  \end{gather*}

  \begin{gather*}
%
%
  \inferrule*[right=$\mGLdruleMSTXXContName{}$]
  {
    \delta_{{\mathrm{1}}}  \mGLsym{,}  \mGLnt{r}  \mGLsym{,}  \mGLnt{s}  \mGLsym{,}  \delta_{{\mathrm{2}}}  \odot  \Delta_{{\mathrm{1}}}  \mGLsym{,}  \mGLnt{X}  \mGLsym{,}  \mGLnt{X}  \mGLsym{,}  \Delta_{{\mathrm{2}}}  \mGLsym{;}  \Gamma  \vdash_{\mathsf{MS} }  \mGLnt{l}  \mGLsym{:}  \mGLnt{B}
  }
  {
   \delta_{{\mathrm{1}}}  \mGLsym{,}  \mGLnt{r}  +  \mGLnt{s}  \mGLsym{,}  \delta_{{\mathrm{2}}}  \odot  \Delta_{{\mathrm{1}}}  \mGLsym{,}  \mGLnt{X}  \mGLsym{,}  \mGLnt{X}  \mGLsym{,}  \Delta_{{\mathrm{2}}}  \mGLsym{;}  \Gamma  \vdash_{\mathsf{MS} }  \mGLnt{l}  \mGLsym{:}  \mGLnt{B}
  }
  \equiv
\inferrule*[right=$\mGLdruleMSTXXExName{}$]
  {\delta_{{\mathrm{1}}}  \mGLsym{,}  \mGLnt{r}  \mGLsym{,}  \mGLnt{s}  \mGLsym{,}  \delta_{{\mathrm{2}}}  \odot  \Delta_{{\mathrm{1}}}  \mGLsym{,}  \mGLnt{X}  \mGLsym{,}  \mGLnt{X}  \mGLsym{,}  \Delta_{{\mathrm{2}}}  \mGLsym{;}  \Gamma  \vdash_{\mathsf{MS} }  \mGLnt{l}  \mGLsym{:}  \mGLnt{B}}
  {\inferrule*[right=$\mGLdruleMSTXXContName{}$]
    {
    \delta_{{\mathrm{1}}}  \mGLsym{,}  \mGLnt{s}  \mGLsym{,}  \mGLnt{r}  \mGLsym{,}  \delta_{{\mathrm{2}}}  \odot  \Delta_{{\mathrm{1}}}  \mGLsym{,}  \mGLnt{X}  \mGLsym{,}  \mGLnt{X}  \mGLsym{,}  \Delta_{{\mathrm{2}}}  \mGLsym{;}  \Gamma  \vdash_{\mathsf{MS} }  \mGLnt{l}  \mGLsym{:}  \mGLnt{B}
    }
    {
     \delta_{{\mathrm{1}}}  \mGLsym{,}  \mGLnt{s}  +  \mGLnt{r}  \mGLsym{,}  \delta_{{\mathrm{2}}}  \odot  \Delta_{{\mathrm{1}}}  \mGLsym{,}  \mGLnt{X}  \mGLsym{,}  \Delta_{{\mathrm{2}}}  \mGLsym{;}  \Gamma  \vdash_{\mathsf{MS} }  \mGLnt{l}  \mGLsym{:}  \mGLnt{B}
    }
  }
\tag{\textsc{contr-sym}}
  \end{gather*}

  \begin{gather*}
%
%
\inferrule*[right=$\mGLdruleMSTXXContName{}$]{
  \inferrule*[right=$\mGLdruleMSTXXContName{}$]
  {
    \delta_{{\mathrm{1}}}  \mGLsym{,}  \mGLnt{r_{{\mathrm{1}}}}  \mGLsym{,}  \mGLnt{r_{{\mathrm{2}}}}  \mGLsym{,}  \mGLnt{r_{{\mathrm{3}}}}  \mGLsym{,}  \delta_{{\mathrm{2}}}  \odot  \Delta_{{\mathrm{1}}}  \mGLsym{,}  \mGLnt{X}  \mGLsym{,}  \mGLnt{X}  \mGLsym{,}  \mGLnt{X}  \mGLsym{,}  \Delta_{{\mathrm{2}}}  \mGLsym{;}  \Gamma  \vdash_{\mathsf{MS} }  \mGLnt{l}  \mGLsym{:}  \mGLnt{B}
  }
  {
   \delta_{{\mathrm{1}}}  \mGLsym{,}  \mGLnt{r_{{\mathrm{1}}}}  +  \mGLnt{r_{{\mathrm{2}}}}  \mGLsym{,}  \mGLnt{r_{{\mathrm{3}}}}  \mGLsym{,}  \delta_{{\mathrm{2}}}  \odot  \Delta_{{\mathrm{1}}}  \mGLsym{,}  \mGLnt{X}  \mGLsym{,}  \mGLnt{X}  \mGLsym{,}  \Delta_{{\mathrm{2}}}  \mGLsym{;}  \Gamma  \vdash_{\mathsf{MS} }  \mGLnt{l}  \mGLsym{:}  \mGLnt{B}
  }
}
{
\delta_{{\mathrm{1}}}  \mGLsym{,}  \mGLsym{(}  \mGLnt{r_{{\mathrm{1}}}}  +  \mGLnt{r_{{\mathrm{2}}}}  \mGLsym{)}  +  \mGLnt{r_{{\mathrm{3}}}}  \mGLsym{,}  \delta_{{\mathrm{2}}}  \odot  \Delta_{{\mathrm{1}}}  \mGLsym{,}  \mGLnt{X}  \mGLsym{,}  \Delta_{{\mathrm{2}}}  \mGLsym{;}  \Gamma  \vdash_{\mathsf{MS} }  \mGLnt{l}  \mGLsym{:}  \mGLnt{B}
}
\equiv
\inferrule*[right=$\mGLdruleMSTXXContName{}$]{
  \inferrule*[right=$\mGLdruleMSTXXContName{}$]
  {
    \delta_{{\mathrm{1}}}  \mGLsym{,}  \mGLnt{r_{{\mathrm{1}}}}  \mGLsym{,}  \mGLnt{r_{{\mathrm{2}}}}  \mGLsym{,}  \mGLnt{r_{{\mathrm{3}}}}  \mGLsym{,}  \delta_{{\mathrm{2}}}  \odot  \Delta_{{\mathrm{1}}}  \mGLsym{,}  \mGLnt{X}  \mGLsym{,}  \mGLnt{X}  \mGLsym{,}  \mGLnt{X}  \mGLsym{,}  \Delta_{{\mathrm{2}}}  \mGLsym{;}  \Gamma  \vdash_{\mathsf{MS} }  \mGLnt{l}  \mGLsym{:}  \mGLnt{B}
  }
  {
   \delta_{{\mathrm{1}}}  \mGLsym{,}  \mGLnt{r_{{\mathrm{1}}}}  \mGLsym{,}  \mGLnt{r_{{\mathrm{2}}}}  +  \mGLnt{r_{{\mathrm{3}}}}  \mGLsym{,}  \delta_{{\mathrm{2}}}  \odot  \Delta_{{\mathrm{1}}}  \mGLsym{,}  \mGLnt{X}  \mGLsym{,}  \mGLnt{X}  \mGLsym{,}  \Delta_{{\mathrm{2}}}  \mGLsym{;}  \Gamma  \vdash_{\mathsf{MS} }  \mGLnt{l}  \mGLsym{:}  \mGLnt{B}
  }
}
{
\delta_{{\mathrm{1}}}  \mGLsym{,}  \mGLnt{r_{{\mathrm{1}}}}  +  \mGLsym{(}  \mGLnt{r_{{\mathrm{2}}}}  +  \mGLnt{r_{{\mathrm{3}}}}  \mGLsym{)}  \mGLsym{,}  \delta_{{\mathrm{2}}}  \odot  \Delta_{{\mathrm{1}}}  \mGLsym{,}  \mGLnt{X}  \mGLsym{,}  \Delta_{{\mathrm{2}}}  \mGLsym{;}  \Gamma  \vdash_{\mathsf{MS} }  \mGLnt{l}  \mGLsym{:}  \mGLnt{B}
}
\tag{\textsc{contr-assoc}}
  \end{gather*}

    \begin{gather*}
%
%
\inferrule*[right=$\mGLdruleMSTXXGExName{}$]
           {
             \inferrule*[right=$\mGLdruleMSTXXGExName{}$]
  { \delta_{{\mathrm{1}}}  \mGLsym{,}  \mGLnt{r}  \mGLsym{,}  \mGLnt{s}  \mGLsym{,}  \delta_{{\mathrm{2}}}  \odot  \Delta_{{\mathrm{1}}}  \mGLsym{,}  \mGLnt{X}  \mGLsym{,}  \mGLnt{Y}  \mGLsym{,}  \Delta_{{\mathrm{2}}}  \mGLsym{;}  \Gamma  \vdash_{\mathsf{MS} }  \mGLnt{l}  \mGLsym{:}  \mGLnt{B} }
  { \delta_{{\mathrm{1}}}  \mGLsym{,}  \mGLnt{s}  \mGLsym{,}  \mGLnt{r}  \mGLsym{,}  \delta_{{\mathrm{2}}}  \odot  \Delta_{{\mathrm{1}}}  \mGLsym{,}  \mGLnt{Y}  \mGLsym{,}  \mGLnt{X}  \mGLsym{,}  \Delta_{{\mathrm{2}}}  \mGLsym{;}  \Gamma  \vdash_{\mathsf{MS} }  \mGLnt{l}  \mGLsym{:}  \mGLnt{B} }
           }
           {
    \delta_{{\mathrm{1}}}  \mGLsym{,}  \mGLnt{r}  \mGLsym{,}  \mGLnt{s}  \mGLsym{,}  \delta_{{\mathrm{2}}}  \odot  \Delta_{{\mathrm{1}}}  \mGLsym{,}  \mGLnt{X}  \mGLsym{,}  \mGLnt{Y}  \mGLsym{,}  \Delta_{{\mathrm{2}}}  \mGLsym{;}  \Gamma  \vdash_{\mathsf{MS} }  \mGLnt{l}  \mGLsym{:}  \mGLnt{B}
           }
\equiv
\;
\delta_{{\mathrm{1}}}  \mGLsym{,}  \mGLnt{r}  \mGLsym{,}  \mGLnt{s}  \mGLsym{,}  \delta_{{\mathrm{2}}}  \odot  \Delta_{{\mathrm{1}}}  \mGLsym{,}  \mGLnt{X}  \mGLsym{,}  \mGLnt{Y}  \mGLsym{,}  \Delta_{{\mathrm{2}}}  \mGLsym{;}  \Gamma  \vdash_{\mathsf{MS} }  \mGLnt{l}  \mGLsym{:}  \mGLnt{B}
\tag{\textsc{gex-gex}}
  \end{gather*}

    \begin{gather*}
%
%
\inferrule*[right=$\mGLdruleMSTXXExName{}$]
           {
             \inferrule*[right=$\mGLdruleMSTXXExName{}$]
  { \delta  \odot  \Delta  \mGLsym{;}  \Gamma_{{\mathrm{1}}}  \mGLsym{,}  \mGLmv{x}  \mGLsym{:}  \mGLnt{A}  \mGLsym{,}  \mGLmv{y}  \mGLsym{:}  \mGLnt{B}  \mGLsym{,}  \Gamma_{{\mathrm{2}}}  \vdash_{\mathsf{MS} }  \mGLnt{l}  \mGLsym{:}  \mGLnt{B} }
  { \delta  \odot  \Delta  \mGLsym{;}  \Gamma_{{\mathrm{1}}}  \mGLsym{,}  \mGLmv{y}  \mGLsym{:}  \mGLnt{B}  \mGLsym{,}  \mGLmv{x}  \mGLsym{:}  \mGLnt{A}  \mGLsym{,}  \Gamma_{{\mathrm{2}}}  \vdash_{\mathsf{MS} }  \mGLnt{l}  \mGLsym{:}  \mGLnt{B} }
           }
           {
    \delta  \odot  \Delta  \mGLsym{;}  \Gamma_{{\mathrm{1}}}  \mGLsym{,}  \mGLmv{x}  \mGLsym{:}  \mGLnt{A}  \mGLsym{,}  \mGLmv{y}  \mGLsym{:}  \mGLnt{B}  \mGLsym{,}  \Gamma_{{\mathrm{2}}}  \vdash_{\mathsf{MS} }  \mGLnt{l}  \mGLsym{:}  \mGLnt{B}
           }
\equiv
\;
\delta  \odot  \Delta  \mGLsym{;}  \Gamma_{{\mathrm{1}}}  \mGLsym{,}  \mGLmv{x}  \mGLsym{:}  \mGLnt{A}  \mGLsym{,}  \mGLmv{y}  \mGLsym{:}  \mGLnt{B}  \mGLsym{,}  \Gamma_{{\mathrm{2}}}  \vdash_{\mathsf{MS} }  \mGLnt{l}  \mGLsym{:}  \mGLnt{B}
\tag{\textsc{ex-ex}}
  \end{gather*}

      \begin{gather*}
%
%
\inferrule*[right=$\mGLdruleMSTXXSubName$]
           {\delta_{{\mathrm{1}}}  \odot  \Delta  \mGLsym{;}  \Gamma  \vdash_{\mathsf{MS} }  \mGLnt{l}  \mGLsym{:}  \mGLnt{B} \quad \delta_{{\mathrm{1}}}  \leq  \delta_{{\mathrm{1}}}}
           {\delta_{{\mathrm{1}}}  \odot  \Delta  \mGLsym{;}  \Gamma  \vdash_{\mathsf{MS} }  \mGLnt{l}  \mGLsym{:}  \mGLnt{B}}
\equiv
\;
\delta_{{\mathrm{1}}}  \odot  \Delta  \mGLsym{;}  \Gamma  \vdash_{\mathsf{MS} }  \mGLnt{l}  \mGLsym{:}  \mGLnt{B}
\tag{\textsc{sub-refl}}
\\[2em]
%
%
\inferrule*[right=$\mGLdruleMSTXXSubName$]
{\inferrule*[right=$\mGLdruleMSTXXSubName$]
           {\delta_{{\mathrm{1}}}  \odot  \Delta  \mGLsym{;}  \Gamma  \vdash_{\mathsf{MS} }  \mGLnt{l}  \mGLsym{:}  \mGLnt{B} \quad \delta_{{\mathrm{1}}}  \leq  \delta_{{\mathrm{2}}}}
           {\delta_{{\mathrm{2}}}  \odot  \Delta  \mGLsym{;}  \Gamma  \vdash_{\mathsf{MS} }  \mGLnt{l}  \mGLsym{:}  \mGLnt{B}} \quad \delta_{{\mathrm{2}}}  \leq  \delta_{{\mathrm{3}}} }
{\delta_{{\mathrm{3}}}  \odot  \Delta  \mGLsym{;}  \Gamma  \vdash_{\mathsf{MS} }  \mGLnt{l}  \mGLsym{:}  \mGLnt{B}}
\equiv
\inferrule*[right=$\mGLdruleMSTXXSubName$]
           {\delta_{{\mathrm{1}}}  \odot  \Delta  \mGLsym{;}  \Gamma  \vdash_{\mathsf{MS} }  \mGLnt{l}  \mGLsym{:}  \mGLnt{B} \quad \dfrac{\delta_{{\mathrm{1}}}  \leq  \delta_{{\mathrm{2}}} \quad \delta_{{\mathrm{2}}}  \leq  \delta_{{\mathrm{3}}}}{\delta_{{\mathrm{1}}}  \leq  \delta_{{\mathrm{3}}}} }
           {\delta_{{\mathrm{3}}}  \odot  \Delta  \mGLsym{;}  \Gamma  \vdash_{\mathsf{MS} }  \mGLnt{l}  \mGLsym{:}  \mGLnt{B}}
\tag{\textsc{sub-trans}}
  \end{gather*}

  \begin{gather*}
%
%
\inferrule*[right=$\mGLdruleMSTXXContName$]
{
\inferrule*[right=$\mGLdruleMSTXXSubName$]
{
\inferrule*[right=$\mGLdruleMSTXXSubName$]
           {\delta_{{\mathrm{1}}}  \mGLsym{,}  \mGLnt{r_{{\mathrm{1}}}}  \mGLsym{,}  \mGLnt{s_{{\mathrm{1}}}}  \mGLsym{,}  \delta_{{\mathrm{2}}}  \odot  \Delta_{{\mathrm{1}}}  \mGLsym{,}  \mGLnt{X}  \mGLsym{,}  \mGLnt{X}  \mGLsym{,}  \Delta_{{\mathrm{2}}}  \mGLsym{;}  \Gamma  \vdash_{\mathsf{MS} }  \mGLnt{l}  \mGLsym{:}  \mGLnt{B} \quad \mGLnt{r_{{\mathrm{1}}}}  \leq  \mGLnt{r_{{\mathrm{2}}}}}
           {\delta_{{\mathrm{1}}}  \mGLsym{,}  \mGLnt{r_{{\mathrm{2}}}}  \mGLsym{,}  \mGLnt{s_{{\mathrm{1}}}}  \mGLsym{,}  \delta_{{\mathrm{2}}}  \odot  \Delta_{{\mathrm{1}}}  \mGLsym{,}  \mGLnt{X}  \mGLsym{,}  \mGLnt{X}  \mGLsym{,}  \Delta_{{\mathrm{2}}}  \mGLsym{;}  \Gamma  \vdash_{\mathsf{MS} }  \mGLnt{l}  \mGLsym{:}  \mGLnt{B}}
\quad
\mGLnt{s_{{\mathrm{1}}}}  \leq  \mGLnt{s_{{\mathrm{2}}}}
}
{\delta_{{\mathrm{1}}}  \mGLsym{,}  \mGLnt{r_{{\mathrm{2}}}}  \mGLsym{,}  \mGLnt{s_{{\mathrm{2}}}}  \mGLsym{,}  \delta_{{\mathrm{2}}}  \odot  \Delta_{{\mathrm{1}}}  \mGLsym{,}  \mGLnt{X}  \mGLsym{,}  \mGLnt{X}  \mGLsym{,}  \Delta_{{\mathrm{2}}}  \mGLsym{;}  \Gamma  \vdash_{\mathsf{MS} }  \mGLnt{l}  \mGLsym{:}  \mGLnt{B}}
}
{\delta_{{\mathrm{1}}}  \mGLsym{,}  \mGLnt{r_{{\mathrm{2}}}}  +  \mGLnt{s_{{\mathrm{2}}}}  \mGLsym{,}  \delta_{{\mathrm{2}}}  \odot  \Delta_{{\mathrm{1}}}  \mGLsym{,}  \mGLnt{X}  \mGLsym{,}  \Delta_{{\mathrm{2}}}  \mGLsym{;}  \Gamma  \vdash_{\mathsf{MS} }  \mGLnt{l}  \mGLsym{:}  \mGLnt{B} }
\equiv
\;
\inferrule*[right=$\mGLdruleMSTXXSubName$]
{
\inferrule*[right=$\mGLdruleMSTXXContName$]
{\delta_{{\mathrm{1}}}  \mGLsym{,}  \mGLnt{r_{{\mathrm{1}}}}  \mGLsym{,}  \mGLnt{s_{{\mathrm{1}}}}  \mGLsym{,}  \delta_{{\mathrm{2}}}  \odot  \Delta_{{\mathrm{1}}}  \mGLsym{,}  \mGLnt{X}  \mGLsym{,}  \mGLnt{X}  \mGLsym{,}  \Delta_{{\mathrm{2}}}  \mGLsym{;}  \Gamma  \vdash_{\mathsf{MS} }  \mGLnt{l}  \mGLsym{:}  \mGLnt{B}}
{\delta_{{\mathrm{1}}}  \mGLsym{,}  \mGLnt{r_{{\mathrm{1}}}}  +  \mGLnt{s_{{\mathrm{1}}}}  \mGLsym{,}  \delta_{{\mathrm{2}}}  \odot  \Delta_{{\mathrm{1}}}  \mGLsym{,}  \mGLnt{X}  \mGLsym{,}  \Delta_{{\mathrm{2}}}  \mGLsym{;}  \Gamma  \vdash_{\mathsf{MS} }  \mGLnt{l}  \mGLsym{:}  \mGLnt{B}}
\quad
\dfrac{\mGLnt{r_{{\mathrm{1}}}}  \leq  \mGLnt{r_{{\mathrm{2}}}} \quad \mGLnt{s_{{\mathrm{1}}}}  \leq  \mGLnt{s_{{\mathrm{2}}}}}{\mGLnt{r_{{\mathrm{1}}}}  +  \mGLnt{s_{{\mathrm{1}}}}  \leq  \mGLnt{r_{{\mathrm{2}}}}  +  \mGLnt{s_{{\mathrm{2}}}}}
}
{\delta_{{\mathrm{1}}}  \mGLsym{,}  \mGLnt{r_{{\mathrm{2}}}}  +  \mGLnt{s_{{\mathrm{2}}}}  \mGLsym{,}  \delta_{{\mathrm{2}}}  \odot  \Delta_{{\mathrm{1}}}  \mGLsym{,}  \mGLnt{X}  \mGLsym{,}  \Delta_{{\mathrm{2}}}  \mGLsym{;}  \Gamma  \vdash_{\mathsf{MS} }  \mGLnt{l}  \mGLsym{:}  \mGLnt{B}}
\tag{\textsc{contr-mono}}
  \end{gather*}

  \begin{gather*}
%
%
\inferrule*[right=$\mGLdruleMSTXXSubName$]
{
\inferrule*[right=$\mGLdruleMSTXXUnitLName$]
{(  \delta_{{\mathrm{1}}}  \mGLsym{,}  \delta_{{\mathrm{2}}}  )   \odot   ( \Delta_{{\mathrm{1}}}  \mGLsym{,}  \Delta_{{\mathrm{2}}} )   \mGLsym{;}  \Gamma  \vdash_{\mathsf{MS} }  \mGLnt{l}  \mGLsym{:}  \mGLnt{A}}
{(  \delta_{{\mathrm{1}}}  \mGLsym{,}  \mGLnt{r}  \mGLsym{,}  \delta_{{\mathrm{2}}}  )   \odot   ( \Delta_{{\mathrm{1}}}  \mGLsym{,}  \mGLmv{x}  \mGLsym{:}  \mathsf{J}  \mGLsym{,}  \Delta_{{\mathrm{2}}} )   \mGLsym{;}  \Gamma  \vdash_{\mathsf{MS} }  \mathsf{let} \, \mathsf{j} \, \mGLsym{=}  \mGLmv{x} \, \mathsf{in} \, \mGLnt{l}  \mGLsym{:}  \mGLnt{A}}
\quad
\mGLnt{r}  \leq  \mGLnt{s}
}
{(  \delta_{{\mathrm{1}}}  \mGLsym{,}  \mGLnt{s}  \mGLsym{,}  \delta_{{\mathrm{2}}}  )   \odot   ( \Delta_{{\mathrm{1}}}  \mGLsym{,}  \mGLmv{x}  \mGLsym{:}  \mathsf{J}  \mGLsym{,}  \Delta_{{\mathrm{2}}} )   \mGLsym{;}  \Gamma  \vdash_{\mathsf{MS} }  \mathsf{let} \, \mathsf{j} \, \mGLsym{=}  \mGLmv{x} \, \mathsf{in} \, \mGLnt{l}  \mGLsym{:}  \mGLnt{A}}
\equiv
\inferrule*[right=$\mGLdruleMSTXXUnitLName$]
{(  \delta_{{\mathrm{1}}}  \mGLsym{,}  \delta_{{\mathrm{2}}}  )   \odot   ( \Delta_{{\mathrm{1}}}  \mGLsym{,}  \Delta_{{\mathrm{2}}} )   \mGLsym{;}  \Gamma  \vdash_{\mathsf{MS} }  \mGLnt{l}  \mGLsym{:}  \mGLnt{A}}
{(  \delta_{{\mathrm{1}}}  \mGLsym{,}  \mGLnt{s}  \mGLsym{,}  \delta_{{\mathrm{2}}}  )   \odot   ( \Delta_{{\mathrm{1}}}  \mGLsym{,}  \mGLmv{x}  \mGLsym{:}  \mathsf{J}  \mGLsym{,}  \Delta_{{\mathrm{2}}} )   \mGLsym{;}  \Gamma  \vdash_{\mathsf{MS} }  \mathsf{let} \, \mathsf{j} \, \mGLsym{=}  \mGLmv{x} \, \mathsf{in} \, \mGLnt{l}  \mGLsym{:}  \mGLnt{A}}
\tag{\textsc{sub-unitL}}
  \end{gather*}

  \begin{gather*}
%
%
\inferrule*[right=$\mGLdruleMSTXXSubName$]
{
\inferrule*[right=$\mGLdruleMSTXXTenLName$]
{(  \delta_{{\mathrm{1}}}  \mGLsym{,}  \mGLnt{r}  \mGLsym{,}  \mGLnt{r}  \mGLsym{,}  \delta_{{\mathrm{2}}}  )   \odot   ( \Delta_{{\mathrm{1}}}  \mGLsym{,}  \mGLmv{x}  \mGLsym{:}  \mGLnt{X}  \mGLsym{,}  \mGLmv{y}  \mGLsym{:}  \mGLnt{Y}  \mGLsym{,}  \Delta_{{\mathrm{2}}} )   \mGLsym{;}  \Gamma  \vdash_{\mathsf{MS} }  \mGLnt{l}  \mGLsym{:}  \mGLnt{B}}
{(  \delta_{{\mathrm{1}}}  \mGLsym{,}  \mGLnt{r}  \mGLsym{,}  \delta_{{\mathrm{2}}}  )   \odot   ( \Delta_{{\mathrm{1}}}  \mGLsym{,}  \mGLmv{z}  \mGLsym{:}  \mGLnt{X}  \boxtimes  \mGLnt{Y}  \mGLsym{,}  \Delta_{{\mathrm{2}}} )   \mGLsym{;}  \Gamma  \vdash_{\mathsf{MS} }  \mathsf{let} \, \mGLsym{(}  \mGLmv{x}  \mGLsym{,}  \mGLmv{y}  \mGLsym{)}  \mGLsym{=}  \mGLmv{z} \, \mathsf{in} \, \mGLnt{l}  \mGLsym{:}  \mGLnt{B}}
\quad
\mGLnt{r}  \leq  \mGLnt{s}
}
{(  \delta_{{\mathrm{1}}}  \mGLsym{,}  \mGLnt{s}  \mGLsym{,}  \delta_{{\mathrm{2}}}  )   \odot   ( \Delta_{{\mathrm{1}}}  \mGLsym{,}  \mGLmv{z}  \mGLsym{:}  \mGLnt{X}  \boxtimes  \mGLnt{Y}  \mGLsym{,}  \Delta_{{\mathrm{2}}} )   \mGLsym{;}  \Gamma  \vdash_{\mathsf{MS} }  \mathsf{let} \, \mGLsym{(}  \mGLmv{x}  \mGLsym{,}  \mGLmv{y}  \mGLsym{)}  \mGLsym{=}  \mGLmv{z} \, \mathsf{in} \, \mGLnt{l}  \mGLsym{:}  \mGLnt{B} }
\equiv
\inferrule*[right=$\mGLdruleMSTXXTenLName$]
{
\inferrule*[right=$\mGLdruleMSTXXSubName$]
{
  \inferrule*[right=$\mGLdruleMSTXXSubName$]
  {
    (  \delta_{{\mathrm{1}}}  \mGLsym{,}  \mGLnt{r}  \mGLsym{,}  \mGLnt{r}  \mGLsym{,}  \delta_{{\mathrm{2}}}  )   \odot   ( \Delta_{{\mathrm{1}}}  \mGLsym{,}  \mGLmv{x}  \mGLsym{:}  \mGLnt{X}  \mGLsym{,}  \mGLmv{y}  \mGLsym{:}  \mGLnt{Y}  \mGLsym{,}  \Delta_{{\mathrm{2}}} )   \mGLsym{;}  \Gamma  \vdash_{\mathsf{MS} }  \mGLnt{l}  \mGLsym{:}  \mGLnt{B} \quad \mGLnt{r}  \leq  \mGLnt{s}
  }
  {(  \delta_{{\mathrm{1}}}  \mGLsym{,}  \mGLnt{s}  \mGLsym{,}  \mGLnt{r}  \mGLsym{,}  \delta_{{\mathrm{2}}}  )   \odot   ( \Delta_{{\mathrm{1}}}  \mGLsym{,}  \mGLmv{x}  \mGLsym{:}  \mGLnt{X}  \mGLsym{,}  \mGLmv{y}  \mGLsym{:}  \mGLnt{Y}  \mGLsym{,}  \Delta_{{\mathrm{2}}} )   \mGLsym{;}  \Gamma  \vdash_{\mathsf{MS} }  \mGLnt{l}  \mGLsym{:}  \mGLnt{B}}
  \quad
  \mGLnt{r}  \leq  \mGLnt{s}
}
{(  \delta_{{\mathrm{1}}}  \mGLsym{,}  \mGLnt{s}  \mGLsym{,}  \mGLnt{s}  \mGLsym{,}  \delta_{{\mathrm{2}}}  )   \odot   ( \Delta_{{\mathrm{1}}}  \mGLsym{,}  \mGLmv{x}  \mGLsym{:}  \mGLnt{X}  \mGLsym{,}  \mGLmv{y}  \mGLsym{:}  \mGLnt{Y}  \mGLsym{,}  \Delta_{{\mathrm{2}}} )   \mGLsym{;}  \Gamma  \vdash_{\mathsf{MS} }  \mGLnt{l}  \mGLsym{:}  \mGLnt{B}}
}
{(  \delta_{{\mathrm{1}}}  \mGLsym{,}  \mGLnt{s}  \mGLsym{,}  \delta_{{\mathrm{2}}}  )   \odot   ( \Delta_{{\mathrm{1}}}  \mGLsym{,}  \mGLmv{x}  \mGLsym{:}  \mathsf{J}  \mGLsym{,}  \Delta_{{\mathrm{2}}} )   \mGLsym{;}  \Gamma  \vdash_{\mathsf{MS} }  \mathsf{let} \, \mathsf{j} \, \mGLsym{=}  \mGLmv{x} \, \mathsf{in} \, \mGLnt{l}  \mGLsym{:}  \mGLnt{B}}
\tag{\textsc{sub-tensorL}}
  \end{gather*}

      \begin{gather*}
%
%
\inferrule*[right=$\mGLdruleMSTXXCutName$]
 {
  \inferrule*[right=$\mGLdruleGSTXXSubName$]
             {\delta_{{\mathrm{2}}}  \odot  \Delta_{{\mathrm{2}}}  \vdash_{\mathsf{GS} }  \mGLnt{t}  \mGLsym{:}  \mGLnt{X} \quad
               \delta_{{\mathrm{2}}}  \leq  \delta'_{{\mathrm{2}}}}
             {\delta'_{{\mathrm{2}}}  \odot  \Delta_{{\mathrm{2}}}  \vdash_{\mathsf{GS} }  \mGLnt{t}  \mGLsym{:}  \mGLnt{X}}
               \\
  \inferrule*[right=$\mGLdruleMSTXXSubName$]
            {(  \delta_{{\mathrm{1}}}  \mGLsym{,}  \mGLnt{r}  \mGLsym{,}  \delta_{{\mathrm{3}}}  )   \odot   ( \Delta_{{\mathrm{1}}}  \mGLsym{,}  \mGLmv{x}  \mGLsym{:}  \mGLnt{X}  \mGLsym{,}  \Delta_{{\mathrm{3}}} )   \mGLsym{;}  \Gamma  \vdash_{\mathsf{MS} }  \mGLnt{l}  \mGLsym{:}  \mGLnt{B} \quad
              \mGLnt{r}  \leq  \mGLnt{r'}}
            {(  \delta_{{\mathrm{1}}}  \mGLsym{,}  \mGLnt{r'}  \mGLsym{,}  \delta_{{\mathrm{3}}}  )   \odot   ( \Delta_{{\mathrm{1}}}  \mGLsym{,}  \mGLmv{x}  \mGLsym{:}  \mGLnt{X}  \mGLsym{,}  \Delta_{{\mathrm{3}}} )   \mGLsym{;}  \Gamma  \vdash_{\mathsf{MS} }  \mGLnt{l}  \mGLsym{:}  \mGLnt{B}}
 }
 {(  \delta_{{\mathrm{1}}}  \mGLsym{,}   \mGLnt{r'}   *   \delta'_{{\mathrm{2}}}   \mGLsym{,}  \delta_{{\mathrm{3}}}  )   \odot   ( \Delta_{{\mathrm{1}}}  \mGLsym{,}  \Delta_{{\mathrm{2}}}  \mGLsym{,}  \Delta_{{\mathrm{3}}} )   \mGLsym{;}  \Gamma  \vdash_{\mathsf{MS} }  \mGLsym{[}  \mGLnt{t}  \mGLsym{/}  \mGLmv{x}  \mGLsym{]}  \mGLnt{l}  \mGLsym{:}  \mGLnt{B} }
\equiv
\inferrule*[right=$\mGLdruleGSTXXSubName$]
{
\inferrule*[right=$\mGLdruleMSTXXCutName$]
 {\delta_{{\mathrm{2}}}  \odot  \Delta_{{\mathrm{2}}}  \vdash_{\mathsf{GS} }  \mGLnt{t}  \mGLsym{:}  \mGLnt{X} \quad
  (  \delta_{{\mathrm{1}}}  \mGLsym{,}  \mGLnt{r}  \mGLsym{,}  \delta_{{\mathrm{3}}}  )   \odot   ( \Delta_{{\mathrm{1}}}  \mGLsym{,}  \mGLmv{x}  \mGLsym{:}  \mGLnt{X}  \mGLsym{,}  \Delta_{{\mathrm{3}}} )   \mGLsym{;}  \Gamma  \vdash_{\mathsf{MS} }  \mGLnt{l}  \mGLsym{:}  \mGLnt{B} }
 { (  \delta_{{\mathrm{1}}}  \mGLsym{,}   \mGLnt{r}   *   \delta_{{\mathrm{2}}}   \mGLsym{,}  \delta_{{\mathrm{3}}}  )   \odot   ( \Delta_{{\mathrm{1}}}  \mGLsym{,}  \Delta_{{\mathrm{2}}}  \mGLsym{,}  \Delta_{{\mathrm{3}}} )   \mGLsym{;}  \Gamma  \vdash_{\mathsf{MS} }  \mGLsym{[}  \mGLnt{t}  \mGLsym{/}  \mGLmv{x}  \mGLsym{]}  \mGLnt{l}  \mGLsym{:}  \mGLnt{B} }
 \dfrac{\mGLnt{r}  \leq  \mGLnt{r'} \quad \delta_{{\mathrm{2}}}  \leq  \delta'_{{\mathrm{2}}}}
       {\mGLnt{r}  *  \delta_{{\mathrm{2}}}  \leq  \mGLnt{r'}  *  \delta'_{{\mathrm{2}}}}
}
{
(  \delta_{{\mathrm{1}}}  \mGLsym{,}   \mGLnt{r'}   *   \delta'_{{\mathrm{2}}}   \mGLsym{,}  \delta_{{\mathrm{3}}}  )   \odot   ( \Delta_{{\mathrm{1}}}  \mGLsym{,}  \Delta_{{\mathrm{2}}}  \mGLsym{,}  \Delta_{{\mathrm{3}}} )   \mGLsym{;}  \Gamma  \vdash_{\mathsf{MS} }  \mGLsym{[}  \mGLnt{t}  \mGLsym{/}  \mGLmv{x}  \mGLsym{]}  \mGLnt{l}  \mGLsym{:}  \mGLnt{B}
}
\tag{\textsc{mult-mono}}
  \end{gather*}

\section{Collected rules for the $\mGL{}$ natural deduction system}
\label{sec:term-assigment_for_the_mGL-nat-deduc}
\begin{mdframed}
    \drules[GT]{$\delta  \odot  \Delta  \vdash_{\mathsf{GT} }  \mGLnt{t}  \mGLsym{:}  \mGLnt{X}$}{Graded System}{
      Id,UnitI,UnitE,TenI,TenE,LinI
    }
    \end{mdframed}

    \begin{mdframed}
    \drules[GT]{$\delta  \odot  \Delta  \vdash_{\mathsf{GT} }  \mGLnt{t}  \mGLsym{:}  \mGLnt{X}$}{Graded System (continued)}{
      Weak,Cont,Ex,Sub
    }
    \end{mdframed}

    \begin{mdframed}
    \drules[MT]{$\delta  \odot  \Delta  \mGLsym{;}  \Gamma  \vdash_{\mathsf{MT} }  \mGLnt{l}  \mGLsym{:}  \mGLnt{A}$}{Mixed System}{
      Id,GSub,UnitI,UnitE,TenI, TenE,ImpI,ImpE
    }
    \end{mdframed}

    \begin{mdframed}
    \drules[MT]{$\delta  \odot  \Delta  \mGLsym{;}  \Gamma  \vdash_{\mathsf{MT} }  \mGLnt{l}  \mGLsym{:}  \mGLnt{A}$}{Mixed System (continued)}{
     GrdI,LinE,GrdE,Weak,Cont,Ex,GEx
    }
\end{mdframed}


\section{Proofs of Properties of $\mGL{}$}
\label{sec:properties_of_mgl}

\subsection{Cut Reduction of $\mGL{}$}
\label{subsec:cut_reduction_of_mgl}
Cut reduction done termless for simplicity.

\begin{lemma}[Cut Reduction GS]
      If $\Pi_{{\mathrm{1}}}$ is a proof of $\delta_{{\mathrm{2}}}  \odot  \Delta_{{\mathrm{2}}}  \vdash_{\mathsf{GS} }  \mGLnt{X}$ 
      and $\Pi_{{\mathrm{2}}}$ is a proof of $(  \delta_{{\mathrm{1}}}  \mGLsym{,}  \delta  \mGLsym{,}  \delta_{{\mathrm{3}}}  )   \odot   ( \Delta_{{\mathrm{1}}}  \mGLsym{,}   \mGLnt{X} ^{ \mGLmv{n} }   \mGLsym{,}  \Delta_{{\mathrm{3}}} )   \vdash_{\mathsf{GS} }  \mGLnt{Y}$ 
      with $\mathsf{CutRank} \, \mGLsym{(}  \Pi_{{\mathrm{1}}}  \mGLsym{)} \, \leq \,  \mathsf{Rank}  (  \mGLnt{X}  )$
      and $\mathsf{CutRank} \, \mGLsym{(}  \Pi_{{\mathrm{2}}}  \mGLsym{)} \, \leq \,  \mathsf{Rank}  (  \mGLnt{X}  )$,
      then there exists a proof $\Pi$ of the sequent $(  \delta_{{\mathrm{1}}}  \mGLsym{,}   (   \delta  \boxast [ { \delta_{{\mathrm{2}}} }^{ \mGLmv{n} } ]   )   \mGLsym{,}  \delta_{{\mathrm{3}}}  )   \odot   ( \Delta_{{\mathrm{1}}}  \mGLsym{,}  \Delta_{{\mathrm{2}}}  \mGLsym{,}  \Delta_{{\mathrm{3}}} )   \vdash_{\mathsf{GS} }  \mGLnt{Y}$ 
      with $\mathsf{CutRank} \, \mGLsym{(}  \Pi  \mGLsym{)} \, \leq \,  \mathsf{Rank}  (  \mGLnt{X}  )$.

    \end{lemma}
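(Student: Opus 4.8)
\textbf{(Setup.)} The plan is to run the mutual induction stated in Lemma~\ref{lemma:cut_reduction_for_mgl}, with induction measure $\mathsf{Depth}(\Pi_1) + \mathsf{Depth}(\Pi_2)$, and to argue by case analysis on the final rules of $\Pi_1$ and $\Pi_2$. Following Gentzen, the cases split into axiom cases, commuting cases where the cut formula $X$ is not principal, and the structural and principal/principal cases that act on the $n$ cut copies of $X$. Throughout, the cut-rank side condition is maintained by observing that commuting steps introduce no new cut formula, while principal reductions replace the cut on $X$ by cuts on proper subformulas of strictly smaller rank.

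\textbf{(Axiom and commuting cases.)} First I would dispatch the axioms: if $\Pi_2$ is \mGLdruleGSTXXidName{} on a cut copy then $n = 1$ and the surrounding contexts are empty, so the result is $\Pi_1$ up to a grade adjustment via \mGLdruleGSTXXSubName{}, using $\delta \boxast [\delta_2^{1}] = 1 * \delta_2 = \delta_2$; symmetrically when $\Pi_1$ is an axiom the cut collapses to $\Pi_2$. Next, whenever $X$ is not principal in the last rule of $\Pi_2$ --- every right rule on $Y$, every left or structural rule acting within $\Delta_1$ or $\Delta_3$, and in particular every case where $X = \mathsf{Lin}\, A$, since $\mathsf{GS}$ has no left rule for $\mathsf{Lin}$ --- I permute the cut above that rule, invoke the induction hypothesis on the strictly shallower subderivation(s), and reapply the rule. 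The mirror-image permutations handle a last rule of $\Pi_1$ that acts on $\Delta_2$.

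\textbf{(Structural and principal cases on the cut copies.)} The interesting cases touch the copies of $X$. If $\Pi_2$ ends in \mGLdruleGSTXXWeakName{}, its subderivation has $n-1$ copies and the new grade entry is $0$; applying the induction hypothesis at $n-1$ already yields the desired sequent, because the missing summand $0 * \delta_2$ is the zero vector and is absorbed by $\bigoplus$, so $\delta \boxast [\delta_2^{n}]$ coincides with the grade vector produced upstream. If $\Pi_2$ ends in \mGLdruleGSTXXContName{} merging two copies at grades $r$ and $s$ into one at $r + s$, I apply the induction hypothesis at $n+1$ and reconcile grades using $(r + s) * \delta_2 = r * \delta_2 + s * \delta_2$. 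Finally, when $X$ is principal on both sides --- the pairs \mGLdruleGSTXXUnitRName{}/\mGLdruleGSTXXUnitLName{} for $X = \mathsf{J}$ and \mGLdruleGSTXXTenRName{}/\mGLdruleGSTXXTenLName{} for $X = X_1 \boxtimes X_2$ --- I replace the cut on $X$ by cuts on its immediate subformulas, which have rank $< \mathsf{Rank}(X)$ and are discharged by the induction hypothesis, keeping $\mathsf{CutRank}(\Pi) \leq \mathsf{Rank}(X)$.

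\textbf{(Main obstacle.)} I expect the contraction case to be the crux: moving from $n+1$ cut copies upstream to $n$ downstream forces the row-vector product $\delta \boxast [\delta_2^{n}]$ and the $(n{+}1)$-ary product delivered by the induction hypothesis to be identified entry-by-entry, which rests precisely on distributivity of scalar multiplication over grade-vector addition in the semimodule of Definition~\ref{def:graded-contexts}. Lining up the summation indices of $\bigoplus_{k=1}^{n} (\delta(k) * \delta_2)$ after two adjacent grade entries are merged is exactly the bookkeeping that the multicut formulation is designed to absorb, and is the one spot where iterated single cuts would not suffice.
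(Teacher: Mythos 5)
Your proposal follows essentially the same route as the paper's proof: induction on $\mathsf{Depth}(\Pi_1) + \mathsf{Depth}(\Pi_2)$, the same taxonomy of cases (axioms, commuting/secondary cases, structural cases acting on the cut copies, principal versus principal), and the same key observation that contraction is exactly where the multicut formulation and distributivity of $*$ over $+$ in the grade semimodule are needed. The core is sound, but three case-level points need repair.

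First, your weakening case does not go through as written at the boundary $n = 1$: there the subderivation of $\Pi_2$ contains \emph{zero} copies of $X$, so ``the induction hypothesis at $n-1$'' is not an instance of the lemma (and if you instead admit $n = 0$, that case itself still has to be proved). The paper constructs the result directly, weakening each of the $|\Delta_2|$ formulas of $\Delta_2$ into the context at grade $0$; this is precisely what realises your observation that $0 * \delta_2$ is the zero vector, but it is a construction, not an appeal to the induction hypothesis. Second, you omit the structural case in which \mGLdruleGSTXXSubName{} (approximation) rescales the grades of the cut copies; this is not a commuting case, because after applying the induction hypothesis one must pass from $\delta' \boxast [\delta_2^{n}]$ to $\delta \boxast [\delta_2^{n}]$, which needs the monotonicity of $*$ and $+$ required in Definition~\ref{def:resource-algebra} followed by a final application of \mGLdruleGSTXXSubName{}.

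Third, in the principal-versus-principal cases the subformula cuts are \emph{not} ``discharged by the induction hypothesis'': the premises of those new cuts are only known to satisfy $\mathsf{CutRank} \leq \mathsf{Rank}(X)$, not $\mathsf{CutRank} \leq \mathsf{Rank}$ of the relevant subformula, so the lemma's precondition fails and the induction hypothesis is inapplicable. One simply \emph{keeps} the new cuts in the constructed proof, as the paper does; this suffices because each contributes rank at most $\mathsf{Rank}(X)$, and the lemma asks only for $\mathsf{CutRank}(\Pi) \leq \mathsf{Rank}(X)$, not cut-freeness. Separately, you may notice the paper also $\eta$-expands identity-against-identity cuts on compound formulas instead of returning the axiom; this is unnecessary for the bare existence statement you prove, but it is what makes the procedure generate the $\eta$-equalities used later in the equational theory.
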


\begin{proof}
        This is by induction on $\mathsf{Depth}  (  \Pi_{{\mathrm{1}}}  )   +   \mathsf{Depth}  (  \Pi_{{\mathrm{2}}}  )$.
  \begin{enumerate}

  \item \textbf{Commuting Conversions}
    \begin{enumerate}
      \item \textbf{left-side:} Suppose we have

      \[
        \inferrule* [flushleft,right=,left=$\Pi_{{\mathrm{1}}} :$] {
          \pi_1
        }{\delta_{{\mathrm{3}}}  \odot  \Delta_{{\mathrm{3}}}  \vdash_{\mathsf{GS} }  \mGLnt{X}}
       \]
       \[
        \inferrule* [flushleft,right=$\mGLdruleGSTXXCutName{}$,left=$\Pi_{{\mathrm{2}}} :$] {
          \inferrule* [flushleft,right=,left=$\Pi_{{\mathrm{3}}} :$] {
            \pi_3
          }{(  \delta_{{\mathrm{2}}}  \mGLsym{,}  \gamma_{{\mathrm{3}}}  \mGLsym{,}  \delta_{{\mathrm{4}}}  )   \odot   ( \Delta_{{\mathrm{2}}}  \mGLsym{,}   \mGLnt{X} ^{ \mGLmv{n} }   \mGLsym{,}  \Delta_{{\mathrm{4}}} )   \vdash_{\mathsf{GS} }  \mGLnt{Y}}\\
          \inferrule* [flushleft,right=,left=$\Pi_{{\mathrm{4}}} :$] {
            \pi_4
          }{(  \delta_{{\mathrm{1}}}  \mGLsym{,}  \delta  \mGLsym{,}  \delta_{{\mathrm{5}}}  )   \odot   ( \Delta_{{\mathrm{1}}}  \mGLsym{,}   \mGLnt{Y} ^{ \mGLmv{m} }   \mGLsym{,}  \Delta_{{\mathrm{5}}} )   \vdash_{\mathsf{GS} }  \mGLnt{Z}}
        }{(  \delta_{{\mathrm{1}}}  \mGLsym{,}   (   \delta  \boxast [ {  (  \delta_{{\mathrm{2}}}  )  }^{ \mGLmv{m} } ]   )   \mGLsym{,}   (   \delta  \boxast [ {  (  \gamma_{{\mathrm{3}}}  )  }^{ \mGLmv{m} } ]   )   \mGLsym{,}   (   \delta  \boxast [ {  (  \delta_{{\mathrm{4}}}  )  }^{ \mGLmv{m} } ]   )   \mGLsym{,}  \delta_{{\mathrm{5}}}  )   \odot   ( \Delta_{{\mathrm{1}}}  \mGLsym{,}  \Delta_{{\mathrm{2}}}  \mGLsym{,}   \mGLnt{X} ^{ \mGLmv{n} }   \mGLsym{,}  \Delta_{{\mathrm{4}}}  \mGLsym{,}  \Delta_{{\mathrm{5}}} )   \vdash_{\mathsf{GS} }  \mGLnt{Z}}
      \]

      We know:
      \[
      \begin{array}{lll}
        \mathsf{Depth}  (  \Pi_{{\mathrm{1}}}  )   +   \mathsf{Depth}  (  \Pi_{{\mathrm{3}}}  )   \, \mGLsym{<} \,  \mathsf{Depth}  (  \Pi_{{\mathrm{1}}}  )   +   \mathsf{Depth}  (  \Pi_{{\mathrm{2}}}  )\\
        \mathsf{CutRank} \, \mGLsym{(}  \Pi_{{\mathrm{3}}}  \mGLsym{)} \, \leq \, \mGLkw{Max} \, \mGLsym{(}   \mathsf{CutRank} \, \mGLsym{(}  \Pi_{{\mathrm{3}}}  \mGLsym{)}  \mGLsym{,}  \mathsf{CutRank} \, \mGLsym{(}  \Pi_{{\mathrm{4}}}  \mGLsym{)}  \mGLsym{,}   \mathsf{Rank}  (  \mGLnt{Y}  )   + 1   \mGLsym{)} \, \leq \,  \mathsf{Rank}  (  \mGLnt{X}  )
      \end{array}
      \]

      and so applying the induction hypothesis
      to $\Pi_{{\mathrm{1}}}$ and $\Pi_{{\mathrm{3}}}$
      implies that there is a proof $\Pi'$ of
      $(  \delta_{{\mathrm{2}}}  \mGLsym{,}  \gamma  \mGLsym{,}  \delta_{{\mathrm{4}}}  )   \odot   ( \Delta_{{\mathrm{2}}}  \mGLsym{,}  \Delta_{{\mathrm{3}}}  \mGLsym{,}  \Delta_{{\mathrm{4}}} )   \vdash_{\mathsf{GS} }  \mGLnt{Y}$ with
      $\mathsf{CutRank} \, \mGLsym{(}  \Pi'  \mGLsym{)} \, \leq \,  \mathsf{Rank}  (  \mGLnt{X}  )$
      and $(   \gamma_{{\mathrm{3}}}  \boxast [ {  (  \delta_{{\mathrm{3}}}  )  }^{ \mGLmv{n} } ]   )   \mGLsym{=}  \gamma$.
      Thus, we construct the following proof $\Pi$:
      \[
      \inferrule* [flushleft,right=$\mGLdruleGSTXXCutName{}$,left=$\Pi :$] {
        \inferrule* [flushleft,left=$\Pi' : $] {
          \pi
        }{(  \delta_{{\mathrm{2}}}  \mGLsym{,}  \gamma  \mGLsym{,}  \delta_{{\mathrm{4}}}  )   \odot   ( \Delta_{{\mathrm{2}}}  \mGLsym{,}  \Delta_{{\mathrm{3}}}  \mGLsym{,}  \Delta_{{\mathrm{4}}} )   \vdash_{\mathsf{GS} }  \mGLnt{Y}}\\
        \inferrule* [flushleft,right=,left=$\Pi_{{\mathrm{4}}} :$] {
          \pi_4
        }{(  \delta_{{\mathrm{1}}}  \mGLsym{,}  \delta  \mGLsym{,}  \delta_{{\mathrm{5}}}  )   \odot   ( \Delta_{{\mathrm{1}}}  \mGLsym{,}   \mGLnt{Y} ^{ \mGLmv{n} }   \mGLsym{,}  \Delta_{{\mathrm{5}}} )   \vdash_{\mathsf{GS} }  \mGLnt{Z}}
      }{(  \delta_{{\mathrm{1}}}  \mGLsym{,}   (   \delta  \boxast [ {  (  \delta_{{\mathrm{2}}}  )  }^{ \mGLmv{n} } ]   )   \mGLsym{,}   (   \delta  \boxast [ { \gamma }^{ \mGLmv{n} } ]   )   \mGLsym{,}   (   \delta  \boxast [ {  (  \delta_{{\mathrm{4}}}  )  }^{ \mGLmv{n} } ]   )   \mGLsym{,}  \delta_{{\mathrm{5}}}  )   \odot   ( \Delta_{{\mathrm{1}}}  \mGLsym{,}  \Delta_{{\mathrm{2}}}  \mGLsym{,}  \Delta_{{\mathrm{3}}}  \mGLsym{,}  \Delta_{{\mathrm{4}}}  \mGLsym{,}  \Delta_{{\mathrm{5}}} )   \vdash_{\mathsf{GS} }  \mGLnt{Z}}
      \]
      Given the above we know:
      \[
        \begin{array}{lll}
          (   \gamma_{{\mathrm{3}}}  \boxast [ {  (  \delta_{{\mathrm{3}}}  )  }^{ \mGLmv{n} } ]   )   \mGLsym{=}  \gamma\\
          (   \gamma_{{\mathrm{3}}}  \boxast [ {  (  \delta_{{\mathrm{3}}}  )  }^{ \mGLmv{n} } ]   )   \mGLsym{(}  \mGLmv{j}  \mGLsym{)}  \mGLsym{=}  \gamma  \mGLsym{(}  \mGLmv{j}  \mGLsym{)} \text{  for any } j, 1<= j <= n\\
          \delta  \mGLsym{(}  \mGLmv{k}  \mGLsym{)}  \mGLsym{=}  \delta  \mGLsym{(}  \mGLmv{k}  \mGLsym{)} \text{  for any } k, 1<= k <= m
        \end{array}
        \]
      From this and the monotonicity properties of the operators in the resource
      algebra it follows that:
      \[
        \begin{array}{lll}
          (   \bigoplus ^{ \mGLmv{m} }_{ \mGLmv{k}  = 1}   (  \delta  \mGLsym{(}  \mGLmv{k}  \mGLsym{)}  *   (   \gamma_{{\mathrm{3}}}  \boxast [ {  (  \delta_{{\mathrm{3}}}  )  }^{ \mGLmv{n} } ]   )   )    )   \mGLsym{=}   (   \bigoplus ^{ \mGLmv{m} }_{ \mGLmv{k}  = 1}   (  \delta  \mGLsym{(}  \mGLmv{k}  \mGLsym{)}  *  \gamma  \mGLsym{(}  \mGLmv{j}  \mGLsym{)}  )    ) \\
          \delta  \boxast [ {  (   \gamma_{{\mathrm{3}}}  \boxast [ {  (  \delta_{{\mathrm{3}}}  )  }^{ \mGLmv{n} } ]   )  }^{ \mGLmv{m} } ]   \mGLsym{=}   \delta  \boxast [ { \gamma }^{ \mGLmv{m} } ]\\
          (    (   \delta  \boxast [ {  (  \gamma_{{\mathrm{3}}}  )  }^{ \mGLmv{m} } ]   )   \boxast [ {  (  \delta_{{\mathrm{3}}}  )  }^{ \mGLmv{n} } ]   )   \mGLsym{=}   (   \delta  \boxast [ { \gamma }^{ \mGLmv{m} } ]   )
        \end{array}
        \]
        and
      \[
      \begin{array}{lll}
        \mathsf{CutRank} \, \mGLsym{(}  \Pi'  \mGLsym{)} \, \leq \,  \mathsf{Rank}  (  \mGLnt{X}  )\\
        \mathsf{CutRank} \, \mGLsym{(}  \Pi_{{\mathrm{2}}}  \mGLsym{)} \, \mGLsym{=} \, \mGLkw{Max} \, \mGLsym{(}   \mathsf{CutRank} \, \mGLsym{(}  \Pi_{{\mathrm{3}}}  \mGLsym{)}  \mGLsym{,}  \mathsf{CutRank} \, \mGLsym{(}  \Pi_{{\mathrm{4}}}  \mGLsym{)}  \mGLsym{,}   \mathsf{Rank}  (  \mGLnt{Y}  )   + 1   \mGLsym{)} \, \leq \,  \mathsf{Rank}  (  \mGLnt{X}  )\\
      \end{array}
      \]
      This implies:
      \[
      \begin{array}{lll}
        \mathsf{CutRank} \, \mGLsym{(}  \Pi_{{\mathrm{4}}}  \mGLsym{)} \, \leq \,  \mathsf{Rank}  (  \mGLnt{X}  )\\
        \mathsf{Rank}  (  \mGLnt{Y}  )   + 1  \, \leq \,  \mathsf{Rank}  (  \mGLnt{X}  )
      \end{array}
      \]
      Thus, we obtain our result
      \[
        \mathsf{CutRank} \, \mGLsym{(}  \Pi  \mGLsym{)} \, \mGLsym{=} \, \mGLkw{Max} \, \mGLsym{(}   \mathsf{CutRank} \, \mGLsym{(}  \Pi'  \mGLsym{)}  \mGLsym{,}  \mathsf{CutRank} \, \mGLsym{(}  \Pi_{{\mathrm{4}}}  \mGLsym{)}  \mGLsym{,}   \mathsf{Rank}  (  \mGLnt{Y}  )   + 1   \mGLsym{)} \, \leq \,  \mathsf{Rank}  (  \mGLnt{X}  )
      \]

\item \textbf{cut vs. right-side cut (left case):} Suppose we have:

    \[
    \inferrule* [flushleft,right=,left=$\Pi_{{\mathrm{1}}} :$] {
      \pi_1
    }{\delta_{{\mathrm{2}}}  \odot  \Delta_{{\mathrm{2}}}  \vdash_{\mathsf{GS} }  \mGLnt{X}}
    \]
    \[
    \inferrule* [flushleft,right=$\mGLdruleGSTXXCutName{}$,left=$\Pi_{{\mathrm{2}}} :$] {
      \inferrule* [flushleft,right=,left=$\Pi_{{\mathrm{3}}} :$] {
        \pi_3
      }{\delta_{{\mathrm{4}}}  \odot  \Delta_{{\mathrm{4}}}  \vdash_{\mathsf{GS} }  \mGLnt{Y}}\\
      \inferrule* [flushleft,right=,left=$\Pi_{{\mathrm{4}}} :$] {
        \pi_4
      }{(  \delta_{{\mathrm{1}}}  \mGLsym{,}  \delta  \mGLsym{,}  \delta_{{\mathrm{4}}}  \mGLsym{,}  \delta'  \mGLsym{,}  \delta_{{\mathrm{5}}}  )   \odot   ( \Delta_{{\mathrm{1}}}  \mGLsym{,}   \mGLnt{X} ^{ \mGLmv{n} }   \mGLsym{,}  \Delta_{{\mathrm{3}}}  \mGLsym{,}   \mGLnt{Y} ^{ \mGLmv{m} }   \mGLsym{,}  \Delta_{{\mathrm{5}}} )   \vdash_{\mathsf{GS} }  \mGLnt{Z}}
    }{(  \delta_{{\mathrm{1}}}  \mGLsym{,}  \delta  \mGLsym{,}  \delta_{{\mathrm{3}}}  \mGLsym{,}   (   \delta'  \boxast [ { \delta_{{\mathrm{4}}} }^{ \mGLmv{m} } ]   )   \mGLsym{,}  \delta_{{\mathrm{5}}}  )   \odot   ( \Delta_{{\mathrm{1}}}  \mGLsym{,}   \mGLnt{X} ^{ \mGLmv{n} }   \mGLsym{,}  \Delta_{{\mathrm{3}}}  \mGLsym{,}  \Delta_{{\mathrm{4}}}  \mGLsym{,}  \Delta_{{\mathrm{5}}} )   \vdash_{\mathsf{GS} }  \mGLnt{Z}}
    \]
We know:
\[
\begin{array}{lll}
  \mathsf{Depth}  (  \Pi_{{\mathrm{1}}}  )   +   \mathsf{Depth}  (  \Pi_{{\mathrm{4}}}  )   \, \mGLsym{<} \,  \mathsf{Depth}  (  \Pi_{{\mathrm{1}}}  )   +   \mathsf{Depth}  (  \Pi_{{\mathrm{2}}}  )\\
  \mathsf{CutRank} \, \mGLsym{(}  \Pi_{{\mathrm{4}}}  \mGLsym{)} \, \leq \, \mGLkw{Max} \, \mGLsym{(}   \mathsf{CutRank} \, \mGLsym{(}  \Pi_{{\mathrm{3}}}  \mGLsym{)}  \mGLsym{,}  \mathsf{CutRank} \, \mGLsym{(}  \Pi_{{\mathrm{4}}}  \mGLsym{)}  \mGLsym{,}   \mathsf{Rank}  (  \mGLnt{Y}  )   + 1   \mGLsym{)} \, \leq \,  \mathsf{Rank}  (  \mGLnt{X}  )
\end{array}
\]
and so applying the induction hypothesis
to $\Pi_{{\mathrm{1}}}$ and $\Pi_{{\mathrm{4}}}$
implies that there is a proof $\Pi'$ of
$(  \delta_{{\mathrm{1}}}  \mGLsym{,}  \gamma  \mGLsym{,}  \delta_{{\mathrm{4}}}  \mGLsym{,}  \delta'  \mGLsym{,}  \delta_{{\mathrm{5}}}  )   \odot   ( \Delta_{{\mathrm{1}}}  \mGLsym{,}  \Delta_{{\mathrm{2}}}  \mGLsym{,}  \Delta_{{\mathrm{4}}}  \mGLsym{,}   \mGLnt{Y} ^{ \mGLmv{m} }   \mGLsym{,}  \Delta_{{\mathrm{5}}} )   \vdash_{\mathsf{GS} }  \mGLnt{Z}$ with
$\mathsf{CutRank} \, \mGLsym{(}  \Pi'  \mGLsym{)} \, \leq \,  \mathsf{Rank}  (  \mGLnt{X}  )$
and $(   \delta  \boxast [ {  (  \delta_{{\mathrm{2}}}  )  }^{ \mGLmv{n} } ]   )   \mGLsym{=}  \gamma$.
Thus, we construct the following proof $\Pi$:

\[
\inferrule* [flushleft,right=$\mGLdruleGSTXXCutName{}$,left=$\Pi :$] {
  \inferrule* [flushleft,left=$\Pi_{{\mathrm{3}}} : $] {
    \pi_3
  }{\delta_{{\mathrm{4}}}  \odot  \Delta_{{\mathrm{4}}}  \vdash_{\mathsf{GS} }  \mGLnt{Y}}\\
  \inferrule* [flushleft,right=,left=$\Pi' :$] {
    \pi'
  }{(  \delta_{{\mathrm{1}}}  \mGLsym{,}  \gamma  \mGLsym{,}  \delta_{{\mathrm{4}}}  \mGLsym{,}  \delta'  \mGLsym{,}  \delta_{{\mathrm{5}}}  )   \odot   ( \Delta_{{\mathrm{1}}}  \mGLsym{,}  \Delta_{{\mathrm{2}}}  \mGLsym{,}  \Delta_{{\mathrm{4}}}  \mGLsym{,}   \mGLnt{Y} ^{ \mGLmv{n} }   \mGLsym{,}  \Delta_{{\mathrm{5}}} )   \vdash_{\mathsf{GS} }  \mGLnt{Z}}
}{(  \delta_{{\mathrm{1}}}  \mGLsym{,}  \gamma  \mGLsym{,}  \delta_{{\mathrm{3}}}  \mGLsym{,}   (   \delta'  \boxast [ { \delta_{{\mathrm{4}}} }^{ \mGLmv{m} } ]   )   \mGLsym{,}  \delta_{{\mathrm{5}}}  )   \odot   ( \Delta_{{\mathrm{1}}}  \mGLsym{,}  \Delta_{{\mathrm{2}}}  \mGLsym{,}  \Delta_{{\mathrm{3}}}  \mGLsym{,}  \Delta_{{\mathrm{4}}}  \mGLsym{,}  \Delta_{{\mathrm{5}}} )   \vdash_{\mathsf{GS} }  \mGLnt{Z}}
\]
Given the above we know:
\[
\begin{array}{lll}
  (   \delta  \boxast [ {  (  \delta_{{\mathrm{2}}}  )  }^{ \mGLmv{n} } ]   )   \mGLsym{=}  \gamma\\
  \mathsf{CutRank} \, \mGLsym{(}  \Pi'  \mGLsym{)} \, \leq \,  \mathsf{Rank}  (  \mGLnt{X}  )\\
  \mathsf{CutRank} \, \mGLsym{(}  \Pi_{{\mathrm{2}}}  \mGLsym{)} \, \mGLsym{=} \, \mGLkw{Max} \, \mGLsym{(}   \mathsf{CutRank} \, \mGLsym{(}  \Pi_{{\mathrm{3}}}  \mGLsym{)}  \mGLsym{,}  \mathsf{CutRank} \, \mGLsym{(}  \Pi_{{\mathrm{4}}}  \mGLsym{)}  \mGLsym{,}   \mathsf{Rank}  (  \mGLnt{Y}  )   + 1   \mGLsym{)} \, \leq \,  \mathsf{Rank}  (  \mGLnt{X}  )\\
\end{array}
\]
This implies:
\[
\begin{array}{lll}
  \mathsf{CutRank} \, \mGLsym{(}  \Pi_{{\mathrm{3}}}  \mGLsym{)} \, \leq \,  \mathsf{Rank}  (  \mGLnt{X}  )\\
  \mathsf{Rank}  (  \mGLnt{Y}  )   + 1  \, \leq \,  \mathsf{Rank}  (  \mGLnt{X}  )
\end{array}
\]
Thus, we obtain our result:
\[
  \mathsf{CutRank} \, \mGLsym{(}  \Pi  \mGLsym{)} \, \mGLsym{=} \, \mGLkw{Max} \, \mGLsym{(}   \mathsf{CutRank} \, \mGLsym{(}  \Pi_{{\mathrm{3}}}  \mGLsym{)}  \mGLsym{,}  \mathsf{CutRank} \, \mGLsym{(}  \Pi'  \mGLsym{)}  \mGLsym{,}   \mathsf{Rank}  (  \mGLnt{Y}  )   + 1   \mGLsym{)} \, \leq \,  \mathsf{Rank}  (  \mGLnt{X}  )
  \]

  \item \textbf{cut vs. right-side cut (right case):} Suppose we have:
  \[
        \inferrule* [flushleft,right=,left=$\Pi_{{\mathrm{1}}} :$] {
          \pi_1
        }{\delta_{{\mathrm{2}}}  \odot  \Delta_{{\mathrm{2}}}  \vdash_{\mathsf{GS} }  \mGLnt{X}}
  \]

  \[
        \inferrule* [flushleft,right=$\mGLdruleGSTXXCutName{}$,left=$\Pi_{{\mathrm{2}}} :$] {
          \inferrule* [flushleft,right=,left=$\Pi_{{\mathrm{3}}} :$] {
            \pi_3
          }{\delta_{{\mathrm{3}}}  \odot  \Delta_{{\mathrm{3}}}  \vdash_{\mathsf{GS} }  \mGLnt{Y}}\\
          \inferrule* [flushleft,right=,left=$\Pi_{{\mathrm{4}}} :$] {
            \pi_4
          }{(  \delta_{{\mathrm{1}}}  \mGLsym{,}  \mGLnt{r_{{\mathrm{1}}}}  \mGLsym{,}  \delta_{{\mathrm{4}}}  \mGLsym{,}  \mGLnt{r_{{\mathrm{2}}}}  \mGLsym{,}  \delta_{{\mathrm{5}}}  )   \odot   ( \Delta_{{\mathrm{1}}}  \mGLsym{,}   \mGLnt{Y} ^{ \mGLmv{n} }   \mGLsym{,}  \Delta_{{\mathrm{4}}}  \mGLsym{,}   \mGLnt{X} ^{ \mGLmv{n} }   \mGLsym{,}  \Delta_{{\mathrm{5}}} )   \vdash_{\mathsf{GS} }  \mGLnt{Z}}
        }{(  \delta_{{\mathrm{4}}}  \mGLsym{,}  \mGLnt{r_{{\mathrm{1}}}}  \mGLsym{,}  \delta_{{\mathrm{1}}}  \mGLsym{,}  \delta_{{\mathrm{3}}}  \mGLsym{,}  \delta_{{\mathrm{5}}}  )   \odot   ( \Delta_{{\mathrm{4}}}  \mGLsym{,}   \mGLnt{X} ^{ \mGLmv{n} }   \mGLsym{,}  \Delta_{{\mathrm{1}}}  \mGLsym{,}  \Delta_{{\mathrm{3}}}  \mGLsym{,}  \Delta_{{\mathrm{5}}} )   \vdash_{\mathsf{GS} }  \mGLnt{Z}}
  \]
  Similar to the previous case.

    \end{enumerate}
    \item \textbf{$\eta$-Expansions}
      \begin{enumerate}
        \item \textbf{Tensor:}
        \[
            \begin{array}{lll}
              \inferrule* [flushleft,right=$\mGLdruleGSTXXidName{}$,left=$\Pi_{{\mathrm{1}}} :$] {
                \,
              }{1  \odot  \mGLnt{X}  \boxtimes  \mGLnt{Y}  \vdash_{\mathsf{GS} }  \mGLnt{X}  \boxtimes  \mGLnt{Y}}
              & \quad &
              \inferrule* [flushleft,right=$\mGLdruleGSTXXidName{}$,left=$\Pi_{{\mathrm{2}}} :$] {
                \,
              }{1  \odot  \mGLnt{X}  \boxtimes  \mGLnt{Y}  \vdash_{\mathsf{GS} }  \mGLnt{X}  \boxtimes  \mGLnt{Y}}
            \end{array}
            \]
            Since $\mathsf{Depth}  (  \Pi_{{\mathrm{1}}}  )   +   \mathsf{Depth}  (  \Pi_{{\mathrm{2}}}  )   \, \mGLsym{=} \, 0$ we find ourselves in a base
            case.   $\Pi_{{\mathrm{1}}}$ and $\Pi_{{\mathrm{2}}}$ are cut-free, so we must produce a
            cut-free proof, $\Pi$ with $1  \boxast [ { 1 }^{ \mGLmv{n} } ]   \mGLsym{=}  1$,
            which we do as follows:
            \[
            \inferrule* [flushleft,right=$\mGLdruleGSTXXTenLName{}$,] {
              \inferrule* [flushleft,right=$\mGLdruleGSTXXTenRName{}$,] {
                \inferrule* [flushleft,right=$\mGLdruleGSTXXidName{}$,] {
                  \,
                }{1  \odot  \mGLnt{X}  \vdash_{\mathsf{GS} }  \mGLnt{X}}\\
                \inferrule* [flushleft,right=$\mGLdruleGSTXXidName{}$,] {
                  \,
                }{1  \odot  \mGLnt{Y}  \vdash_{\mathsf{GS} }  \mGLnt{Y}}
              }{(  1  \mGLsym{,}  1  )   \odot   ( \mGLnt{X}  \mGLsym{,}  \mGLnt{Y} )   \vdash_{\mathsf{GS} }  \mGLnt{X}  \boxtimes  \mGLnt{Y}}
            }{1  \odot  \mGLnt{X}  \boxtimes  \mGLnt{Y}  \vdash_{\mathsf{GS} }  \mGLnt{X}  \boxtimes  \mGLnt{Y}}
            \]
        \item \textbf{Tensor Unit:}
        \[
            \begin{array}{lll}
              \inferrule* [flushleft,right=$\mGLdruleGSTXXidName{}$,left=$\Pi_{{\mathrm{1}}} :$] {
                \,
              }{1  \odot  \mathsf{J}  \vdash_{\mathsf{GS} }  \mathsf{J}}
              & \quad &
              \inferrule* [flushleft,right=$\mGLdruleGSTXXidName{}$,left=$\Pi_{{\mathrm{2}}} :$] {
                \,
              }{1  \odot  \mathsf{J}  \vdash_{\mathsf{GS} }  \mathsf{J}}
            \end{array}
            \]
            Since $\mathsf{Depth}  (  \Pi_{{\mathrm{1}}}  )   +   \mathsf{Depth}  (  \Pi_{{\mathrm{2}}}  )   \, \mGLsym{=} \, 0$ we find ourselves in a base
            case. $\Pi_{{\mathrm{1}}}$ and $\Pi_{{\mathrm{2}}}$ are cut-free, so we must produce a
            cut-free proof, $\Pi$ with $1  \boxast [ { 1 }^{ \mGLmv{n} } ]   \mGLsym{=}  1$,
            which we do as follows:
            \[
            \inferrule* [flushleft,right=$\mGLdruleGSTXXUnitLName{}$,] {
              \inferrule* [flushleft,right=$\mGLdruleGSTXXUnitRName{}$,] {
                \,
              }{\emptyset  \odot  \emptyset  \vdash_{\mathsf{GS} }  \mathsf{J}}
            }{1  \odot  \mathsf{J}  \vdash_{\mathsf{GS} }  \mathsf{J}}
            \]
        \item \textbf{Lin:}
        \[
          \begin{array}{lll}
            \inferrule* [flushleft,right=$\mGLdruleGSTXXidName{}$,left=$\Pi_{{\mathrm{1}}} :$] {
              \,
            }{1  \odot  \mathsf{Lin} \, \mGLnt{A}  \vdash_{\mathsf{GS} }  \mathsf{Lin} \, \mGLnt{A}}
            & \quad &
            \inferrule* [flushleft,right=$\mGLdruleGSTXXidName{}$,left=$\Pi_{{\mathrm{2}}} :$] {
              \,
            }{1  \odot  \mathsf{Lin} \, \mGLnt{A}  \vdash_{\mathsf{GS} }  \mathsf{Lin} \, \mGLnt{A}}
          \end{array}
          \]
          Since $\mathsf{Depth}  (  \Pi_{{\mathrm{1}}}  )   +   \mathsf{Depth}  (  \Pi_{{\mathrm{2}}}  )   \, \mGLsym{=} \, 0$ we find ourselves in a base
          case.$\Pi_{{\mathrm{1}}}$ and $\Pi_{{\mathrm{2}}}$ are cut-free, so we must produce a
          cut-free proof, $\Pi$ with $1  \boxast [ { 1 }^{ \mGLmv{n} } ]   \mGLsym{=}  1$,
          which we do as follows:
          \[
          \inferrule* [flushleft,right=$\mGLdruleGSTXXLinRName{}$,] {
            \inferrule* [flushleft,right=$\mGLdruleMSTXXLinLName{}$,] {
              \inferrule* [flushleft,right=$\mGLdruleMSTXXidName{}$,] {
              \,
              }{\emptyset  \odot  \emptyset  \mGLsym{;}  \mGLnt{A}  \vdash_{\mathsf{MS} }  \mGLnt{A}}
            }{1  \odot  \mathsf{Lin} \, \mGLnt{A}  \mGLsym{;}  \emptyset  \vdash_{\mathsf{MS} }  \mGLnt{A}}
          }{1  \odot  \mathsf{Lin} \, \mGLnt{A}  \vdash_{\mathsf{GS} }  \mathsf{Lin} \, \mGLnt{A}}
          \]
      \end{enumerate}

    \item \textbf{Axiom Cases}
      \begin{enumerate}
        \item \textbf{Axiom on the left:}
        \[
          \begin{array}{lll}
            \inferrule* [flushleft,right=$\mGLdruleGSTXXidName{}$,left=$\Pi_{{\mathrm{1}}} :$] {
              \,
            }{1  \odot  \mGLnt{X}  \vdash_{\mathsf{GS} }  \mGLnt{X}}
            & \quad &
            \inferrule* [flushleft,right=,left=$\Pi_{{\mathrm{2}}} :$] {
              \pi_2
            }{(  \delta_{{\mathrm{1}}}  \mGLsym{,}  \mGLnt{r}  \mGLsym{,}  \delta_{{\mathrm{3}}}  )   \odot   ( \Delta_{{\mathrm{1}}}  \mGLsym{,}  \mGLnt{X}  \mGLsym{,}  \Delta_{{\mathrm{3}}} )   \vdash_{\mathsf{GS} }  \mGLnt{Y}}
          \end{array}
          \]
      We know:
       $\Pi_{{\mathrm{2}}}$ is a proof of $(  \delta_{{\mathrm{1}}}  \mGLsym{,}  \mGLnt{r}  \mGLsym{,}  \delta_{{\mathrm{3}}}  )   \odot   ( \Delta_{{\mathrm{1}}}  \mGLsym{,}  \mGLnt{X}  \mGLsym{,}  \Delta_{{\mathrm{3}}} )   \vdash_{\mathsf{GS} }  \mGLnt{Y}$
       and that $\mGLnt{r}  *  1  \mGLsym{=}  \mGLnt{r}$
      Thus, we construct the following proof $\Pi \, \mGLsym{=} \, \Pi_{{\mathrm{2}}}$.

        \item \textbf{Axiom on the right:}
        \[
          \begin{array}{lll}
           \inferrule* [flushleft,right=,left=$\Pi_{{\mathrm{1}}} :$] {
              \pi_1
            }{\delta_{{\mathrm{1}}}  \odot  \Delta_{{\mathrm{1}}}  \vdash_{\mathsf{GS} }  \mGLnt{X}}
            & \quad &
             \inferrule* [flushleft,right=,left=$\Pi_{{\mathrm{1}}} :$] {
              \,
            }{1  \odot  \mGLnt{X}  \vdash_{\mathsf{GS} }  \mGLnt{X}}
          \end{array}
          \]
      We know:
       $\Pi_{{\mathrm{1}}}$ is a proof of $\delta_{{\mathrm{1}}}  \odot  \Delta_{{\mathrm{1}}}  \vdash_{\mathsf{GS} }  \mGLnt{X}$
       and that $1  \boxast [ { \delta_{{\mathrm{1}}} }^{ \mGLmv{n} } ]   \mGLsym{=}  \delta_{{\mathrm{1}}}$
      Thus, we construct the following proof $\Pi \, \mGLsym{=} \, \Pi_{{\mathrm{1}}}$.
    \end{enumerate}

    \item \textbf{Principal Formula vs Principal Formula}
      \begin{enumerate}
        \item \textbf{Tensor:}
        \[
          \begin{array}{lll}
            \inferrule* [flushleft,right=$\mGLdruleGSTXXTenRName{}$,left=$\Pi_{{\mathrm{1}}} :$] {
              \inferrule* [flushleft,right=,left=$\Pi_{{\mathrm{3}}} :$] {
                \pi_3
              }{\delta_{{\mathrm{1}}}  \odot  \Delta_{{\mathrm{1}}}  \vdash_{\mathsf{GS} }  \mGLnt{X}}\\
              \inferrule* [flushleft,right=,left=$\Pi_{{\mathrm{4}}} :$] {
                \pi_4
              }{\delta_{{\mathrm{2}}}  \odot  \Delta_{{\mathrm{2}}}  \vdash_{\mathsf{GS} }  \mGLnt{Y}}
            }{(  \delta_{{\mathrm{1}}}  \mGLsym{,}  \delta_{{\mathrm{2}}}  )   \odot   ( \Delta_{{\mathrm{1}}}  \mGLsym{,}  \Delta_{{\mathrm{2}}} )   \vdash_{\mathsf{GS} }  \mGLnt{X}  \boxtimes  \mGLnt{Y}}
            & \quad &
            \inferrule* [flushleft,right=$\mGLdruleGSTXXTenLName{}$,left=$\Pi_{{\mathrm{2}}} :$] {
              \inferrule* [flushleft,right=,left=$\Pi_{{\mathrm{5}}} :$] {
                \,
              }{(  \delta_{{\mathrm{1}}}  \mGLsym{,}  \mGLnt{r}  \mGLsym{,}  \mGLnt{r}  \mGLsym{,}  \delta_{{\mathrm{4}}}  )   \odot   ( \Delta_{{\mathrm{1}}}  \mGLsym{,}  \mGLnt{X}  \mGLsym{,}  \mGLnt{Y}  \mGLsym{,}  \Delta_{{\mathrm{4}}} )   \vdash_{\mathsf{GS} }  \mGLnt{Z}}
          }{(  \delta_{{\mathrm{1}}}  \mGLsym{,}  \mGLnt{r}  \mGLsym{,}  \delta_{{\mathrm{4}}}  )   \odot   ( \Delta_{{\mathrm{1}}}  \mGLsym{,}  \mGLnt{X}  \boxtimes  \mGLnt{Y}  \mGLsym{,}  \Delta_{{\mathrm{4}}} )   \vdash_{\mathsf{GS} }  \mGLnt{Z}}
          \end{array}
          \]
    We know:
    \[
    \begin{array}{lll}
      \mathsf{CutRank} \, \mGLsym{(}  \Pi_{{\mathrm{1}}}  \mGLsym{)} \, \mGLsym{=} \, \mGLkw{Max} \, \mGLsym{(}  \mathsf{CutRank} \, \mGLsym{(}  \Pi_{{\mathrm{3}}}  \mGLsym{)}  \mGLsym{,}  \mathsf{CutRank} \, \mGLsym{(}  \Pi_{{\mathrm{4}}}  \mGLsym{)}  \mGLsym{)} \, \leq \,  \mathsf{Rank}  (  \mGLnt{X}  \boxtimes  \mGLnt{Y}  )\\
      \mathsf{CutRank} \, \mGLsym{(}  \Pi_{{\mathrm{3}}}  \mGLsym{)} \, \leq \,  \mathsf{Rank}  (  \mGLnt{X}  \boxtimes  \mGLnt{Y}  )\\
      \mathsf{CutRank} \, \mGLsym{(}  \Pi_{{\mathrm{4}}}  \mGLsym{)} \, \leq \,  \mathsf{Rank}  (  \mGLnt{X}  \boxtimes  \mGLnt{Y}  )\\
      \mathsf{CutRank} \, \mGLsym{(}  \Pi_{{\mathrm{2}}}  \mGLsym{)} \, \mGLsym{=} \, \mathsf{CutRank} \, \mGLsym{(}  \Pi_{{\mathrm{5}}}  \mGLsym{)} \, \leq \,  \mathsf{Rank}  (  \mGLnt{X}  \boxtimes  \mGLnt{Y}  )\\
      \mathsf{Rank}  (  \mGLnt{X}  )  \, \mGLsym{<} \,  \mathsf{Rank}  (  \mGLnt{X}  \boxtimes  \mGLnt{Y}  )\\
      \mathsf{Rank}  (  \mGLnt{Y}  )  \, \mGLsym{<} \,  \mathsf{Rank}  (  \mGLnt{X}  \boxtimes  \mGLnt{Y}  )\\
      \mGLnt{r}  *   (  \delta_{{\mathrm{2}}}  \mGLsym{,}  \delta_{{\mathrm{3}}}  )   \mGLsym{=}   (  \mGLnt{r}  *  \delta_{{\mathrm{2}}}  \mGLsym{,}  \mGLnt{r}  *  \delta_{{\mathrm{3}}}  )
    \end{array}
    \]
    Instead of applying the induction hypothesis,
    we can directly build the proof $\Pi$:
    \[
      \inferrule* [flushleft,right=$\mGLdruleGSTXXCutName{}$,left=$\Pi :$] {
        \inferrule* [flushleft,left=$\Pi_{{\mathrm{4}}} : $] {
            \pi_4
          }{\delta_{{\mathrm{3}}}  \odot  \Delta_{{\mathrm{3}}}  \vdash_{\mathsf{GS} }  \mGLnt{Y}}\\
          \inferrule* [flushleft,right=$\mGLdruleGSTXXCutName{}$] {
            \inferrule* [flushleft,left=$\Pi_{{\mathrm{3}}} : $] {
            \pi_3
          }{\delta_{{\mathrm{1}}}  \odot  \Delta_{{\mathrm{1}}}  \vdash_{\mathsf{GS} }  \mGLnt{X}}\\
          \inferrule* [flushleft,right=,left=$\Pi_{{\mathrm{5}}} :$] {
            \pi_5
          }{(  \delta_{{\mathrm{1}}}  \mGLsym{,}  \mGLnt{r}  \mGLsym{,}  \mGLnt{r}  \mGLsym{,}  \delta_{{\mathrm{4}}}  )   \odot   ( \Delta_{{\mathrm{1}}}  \mGLsym{,}  \mGLnt{X}  \mGLsym{,}  \mGLnt{Y}  \mGLsym{,}  \Delta_{{\mathrm{4}}} )   \vdash_{\mathsf{GS} }  \mGLnt{Z}}
          }{(  \delta_{{\mathrm{1}}}  \mGLsym{,}  \mGLnt{r}  *  \delta_{{\mathrm{2}}}  \mGLsym{,}  \mGLnt{r}  \mGLsym{,}  \delta_{{\mathrm{4}}}  )   \odot   ( \Delta_{{\mathrm{1}}}  \mGLsym{,}  \Delta_{{\mathrm{2}}}  \mGLsym{,}  \mGLnt{Y}  \mGLsym{,}  \Delta_{{\mathrm{4}}} )   \vdash_{\mathsf{GS} }  \mGLnt{Z}}
      }{(  \delta_{{\mathrm{1}}}  \mGLsym{,}  \mGLnt{r}  *  \delta_{{\mathrm{2}}}  \mGLsym{,}  \mGLnt{r}  *  \delta_{{\mathrm{3}}}  \mGLsym{,}  \delta_{{\mathrm{4}}}  )   \odot   ( \Delta_{{\mathrm{1}}}  \mGLsym{,}  \Delta_{{\mathrm{2}}}  \mGLsym{,}  \Delta_{{\mathrm{3}}}  \mGLsym{,}  \Delta_{{\mathrm{4}}} )   \vdash_{\mathsf{GS} }  \mGLnt{Z}}
      \]
      So $\mathsf{CutRank} \, \mGLsym{(}  \Pi  \mGLsym{)} \, \mGLsym{=} \, \mGLkw{Max} \, \mGLsym{(}    \mathsf{CutRank} \, \mGLsym{(}  \Pi_{{\mathrm{3}}}  \mGLsym{)}  \mGLsym{,}  \mathsf{CutRank} \, \mGLsym{(}  \Pi_{{\mathrm{4}}}  \mGLsym{)}  \mGLsym{,}  \mathsf{CutRank} \, \mGLsym{(}  \Pi_{{\mathrm{5}}}  \mGLsym{)}  \mGLsym{,}   \mathsf{Rank}  (  \mGLnt{X}  )   + 1   \mGLsym{,}   \mathsf{Rank}  (  \mGLnt{Y}  )   + 1   \mGLsym{)} \, \leq \,  \mathsf{Rank}  (  \mGLnt{X}  \boxtimes  \mGLnt{Y}  )$

        \item \textbf{Tensor Unit:}
        \[
          \inferrule* [flushleft,right=$\mGLdruleGSTXXUnitRName{}$,left=$\Pi_{{\mathrm{1}}} :$] {
             \,
          }{\emptyset  \odot  \emptyset  \vdash_{\mathsf{GS} }  \mathsf{J}}
    \]

      \[
        \inferrule* [flushleft,right=$\mGLdruleGSTXXUnitLName{}$,left=$\Pi_{{\mathrm{2}}} :$] {
          \inferrule* [flushleft,left=$\Pi_{{\mathrm{3}}} : $] {
            \pi_3
          }{(  \delta_{{\mathrm{1}}}  \mGLsym{,}  \delta_{{\mathrm{2}}}  )   \odot   ( \Delta_{{\mathrm{1}}}  \mGLsym{,}  \Delta_{{\mathrm{2}}} )   \vdash_{\mathsf{GS} }  \mGLnt{X}}
        }{(  \delta_{{\mathrm{1}}}  \mGLsym{,}  \mGLnt{r}  \mGLsym{,}  \delta_{{\mathrm{2}}}  )   \odot   ( \Delta_{{\mathrm{1}}}  \mGLsym{,}  \mathsf{J}  \mGLsym{,}  \Delta_{{\mathrm{2}}} )   \vdash_{\mathsf{GS} }  \mGLnt{X}}
        \]

      We know:
      \[
      \begin{array}{lll}
        \mathsf{CutRank} \, \mGLsym{(}  \Pi_{{\mathrm{3}}}  \mGLsym{)} \, \leq \, \mathsf{CutRank} \, \mGLsym{(}  \Pi_{{\mathrm{2}}}  \mGLsym{)} \, \leq \,  \mathsf{Rank}  (  \mathsf{J}  )\\
        \mGLnt{r}  *  \emptyset  \mGLsym{=}  \emptyset
      \end{array}
      \]
      So $\Pi_{{\mathrm{3}}}$ is a proof of
      $(  \delta_{{\mathrm{1}}}  \mGLsym{,}  \mGLnt{r}  *  \emptyset  \mGLsym{,}  \delta_{{\mathrm{2}}}  )   \odot   ( \Delta_{{\mathrm{1}}}  \mGLsym{,}  \emptyset  \mGLsym{,}  \Delta_{{\mathrm{2}}} )   \vdash_{\mathsf{GS} }  \mGLnt{X}$ with
      $\mathsf{CutRank} \, \mGLsym{(}  \Pi_{{\mathrm{3}}}  \mGLsym{)} \, \leq \,  \mathsf{Rank}  (  \mathsf{J}  )$.
      Thus, we construct the following proof $\Pi \, \mGLsym{=} \, \Pi_{{\mathrm{3}}}$
      \end{enumerate}

    \item \textbf{Secondary Conclusion}
      \begin{enumerate}
        \item \textbf{Left Introduction of Tensor:}
        \[
          \inferrule* [flushleft,right=$\mGLdruleGSTXXTenLName{}$,left=$\Pi_{{\mathrm{1}}} :$] {
            \inferrule* [flushleft,right=,left=$\Pi_{{\mathrm{3}}} :$] {
            \pi_3
          }{(  \delta_{{\mathrm{2}}}  \mGLsym{,}  \mGLnt{r}  \mGLsym{,}  \mGLnt{r}  \mGLsym{,}  \delta_{{\mathrm{3}}}  )   \odot   ( \Delta_{{\mathrm{2}}}  \mGLsym{,}  \mGLnt{X}  \mGLsym{,}  \mGLnt{Y}  \mGLsym{,}  \Delta_{{\mathrm{3}}} )   \vdash_{\mathsf{GS} }  \mGLnt{Z}}
          }{(  \delta_{{\mathrm{2}}}  \mGLsym{,}  \mGLnt{r}  \mGLsym{,}  \delta_{{\mathrm{3}}}  )   \odot   ( \Delta_{{\mathrm{2}}}  \mGLsym{,}  \mGLnt{X}  \boxtimes  \mGLnt{Y}  \mGLsym{,}  \Delta_{{\mathrm{3}}} )   \vdash_{\mathsf{GS} }  \mGLnt{Z}}
    \]
      \[
        \inferrule* [flushleft,right=,left=$\Pi_{{\mathrm{2}}} :$] {
          \pi_2
        }{(  \delta_{{\mathrm{1}}}  \mGLsym{,}  \delta  \mGLsym{,}  \delta_{{\mathrm{4}}}  )   \odot   ( \Delta_{{\mathrm{1}}}  \mGLsym{,}   \mGLnt{Z} ^{ \mGLmv{n} }   \mGLsym{,}  \Delta_{{\mathrm{4}}} )   \vdash_{\mathsf{GS} }  \mGLnt{W} }
        \]
      We know:
      \[
      \begin{array}{lll}
        \mathsf{Depth}  (  \Pi_{{\mathrm{3}}}  )   +   \mathsf{Depth}  (  \Pi_{{\mathrm{2}}}  )   \, \mGLsym{<} \,  \mathsf{Depth}  (  \Pi_{{\mathrm{1}}}  )   +   \mathsf{Depth}  (  \Pi_{{\mathrm{2}}}  )\\
        \mathsf{CutRank} \, \mGLsym{(}  \Pi_{{\mathrm{3}}}  \mGLsym{)} \, \leq \, \mathsf{CutRank} \, \mGLsym{(}  \Pi_{{\mathrm{1}}}  \mGLsym{)} \, \leq \,  \mathsf{Rank}  (  \mGLnt{Z}  )
      \end{array}
      \]

      and so applying the induction hypothesis
      to $\Pi_{{\mathrm{3}}}$ and $\Pi_{{\mathrm{2}}}$
      implies that there is a proof $\Pi'$ of
      $(   \delta_{{\mathrm{1}}}  \mGLsym{,}  \delta  \boxast [ {  (  \delta_{{\mathrm{2}}}  \mGLsym{,}  \mGLnt{r}  \mGLsym{,}  \mGLnt{r}  \mGLsym{,}  \delta_{{\mathrm{3}}}  )  }^{ \mGLmv{n} } ]   \mGLsym{,}  \delta_{{\mathrm{4}}}  )   \odot   ( \Delta_{{\mathrm{1}}}  \mGLsym{,}   ( \Delta_{{\mathrm{2}}}  \mGLsym{,}  \mGLnt{X}  \mGLsym{,}  \mGLnt{Y}  \mGLsym{,}  \Delta_{{\mathrm{3}}} )   \mGLsym{,}  \Delta_{{\mathrm{4}}} )   \vdash_{\mathsf{GS} }  \mGLnt{W}$ with
      $\mathsf{CutRank} \, \mGLsym{(}  \Pi'  \mGLsym{)} \, \leq \,  \mathsf{Rank}  (  \mGLnt{Z}  )$.
      Thus, we construct the following proof $\Pi$:
      \[
        \inferrule* [flushleft,right=$\mGLdruleGSTXXTenLName{}$,left=$\Pi :$] {
          \inferrule* [flushleft,right=,left=$\Pi' :$] {
            \pi'
          }{(  \delta_{{\mathrm{1}}}  \mGLsym{,}   (  \delta'_{{\mathrm{2}}}  \mGLsym{,}  \mGLnt{r'}  \mGLsym{,}  \mGLnt{r'}  \mGLsym{,}  \delta'_{{\mathrm{3}}}  )   \mGLsym{,}  \delta_{{\mathrm{4}}}  )   \odot   ( \Delta_{{\mathrm{1}}}  \mGLsym{,}   ( \Delta_{{\mathrm{2}}}  \mGLsym{,}  \mGLnt{X}  \mGLsym{,}  \mGLnt{Y}  \mGLsym{,}  \Delta_{{\mathrm{3}}} )   \mGLsym{,}  \Delta_{{\mathrm{4}}} )   \vdash_{\mathsf{GS} }  \mGLnt{W}}
        }{(  \delta_{{\mathrm{1}}}  \mGLsym{,}   (  \delta'_{{\mathrm{2}}}  \mGLsym{,}  \mGLnt{r'}  \mGLsym{,}  \delta'_{{\mathrm{3}}}  )   \mGLsym{,}  \delta_{{\mathrm{4}}}  )   \odot   ( \Delta_{{\mathrm{1}}}  \mGLsym{,}   ( \Delta_{{\mathrm{2}}}  \mGLsym{,}  \mGLnt{X}  \boxtimes  \mGLnt{Y}  \mGLsym{,}  \Delta_{{\mathrm{3}}} )   \mGLsym{,}  \Delta_{{\mathrm{4}}} )   \vdash_{\mathsf{GS} }  \mGLnt{W}}
        \]
        Given the above, we know:
        \[
          \begin{array}{lll}
            \mathsf{CutRank} \, \mGLsym{(}  \Pi  \mGLsym{)} \, \mGLsym{=} \, \mathsf{CutRank} \, \mGLsym{(}  \Pi'  \mGLsym{)} \, \leq \,  \mathsf{Rank}  (  \mGLnt{Z}  )\\
            \delta  \boxast [ {  (  \delta_{{\mathrm{2}}}  \mGLsym{,}  \mGLnt{r}  \mGLsym{,}  \mGLnt{r}  \mGLsym{,}  \delta_{{\mathrm{3}}}  )  }^{ \mGLmv{n} } ]   \mGLsym{=}   (  \delta'_{{\mathrm{2}}}  \mGLsym{,}  \mGLnt{r'}  \mGLsym{,}  \mGLnt{r'}  \mGLsym{,}  \delta'_{{\mathrm{3}}}  )
          \end{array}
          \]
          Implies
          \[
          \begin{array}{lll}
            (      \delta  \boxast [ { \delta_{{\mathrm{2}}} }^{ \mGLmv{n} } ]   \mGLsym{,}  \delta  \boxast [ { \mGLnt{r} }^{ \mGLmv{n} } ]   \mGLsym{,}  \delta  \boxast [ { \mGLnt{r} }^{ \mGLmv{n} } ]   \mGLsym{,}  \delta  \boxast [ { \delta_{{\mathrm{3}}} }^{ \mGLmv{n} } ]   )   \mGLsym{=}   (  \delta'_{{\mathrm{2}}}  \mGLsym{,}  \mGLnt{r'}  \mGLsym{,}  \mGLnt{r'}  \mGLsym{,}  \delta'_{{\mathrm{3}}}  )\\
            (     \delta  \boxast [ { \delta_{{\mathrm{2}}} }^{ \mGLmv{n} } ]   \mGLsym{,}  \delta  \boxast [ { \mGLnt{r} }^{ \mGLmv{n} } ]   \mGLsym{,}  \delta  \boxast [ { \delta_{{\mathrm{3}}} }^{ \mGLmv{n} } ]   )   \mGLsym{=}   (  \delta'_{{\mathrm{2}}}  \mGLsym{,}  \mGLnt{r'}  \mGLsym{,}  \delta'_{{\mathrm{3}}}  )\\
            \delta  \boxast [ {  (  \delta_{{\mathrm{2}}}  \mGLsym{,}  \mGLnt{r}  \mGLsym{,}  \delta_{{\mathrm{3}}}  )  }^{ \mGLmv{n} } ]   \mGLsym{=}   (  \delta'_{{\mathrm{2}}}  \mGLsym{,}  \mGLnt{r'}  \mGLsym{,}  \delta'_{{\mathrm{3}}}  )
          \end{array}
          \]

        \item \textbf{Left Introduction of Tensor Unit:}
        \[
          \inferrule* [flushleft,right=$\mGLdruleGSTXXUnitLName{}$,left=$\Pi_{{\mathrm{1}}} :$] {
              \inferrule* [flushleft,right=,left=$\Pi_{{\mathrm{3}}} :$] {
                \pi_3
              }{(  \delta_{{\mathrm{2}}}  \mGLsym{,}  \delta_{{\mathrm{3}}}  )   \odot   ( \Delta_{{\mathrm{2}}}  \mGLsym{,}  \Delta_{{\mathrm{3}}} )   \vdash_{\mathsf{GS} }  \mGLnt{Y}}
          }{(  \delta_{{\mathrm{2}}}  \mGLsym{,}  \mGLnt{r}  \mGLsym{,}  \delta_{{\mathrm{3}}}  )   \odot   ( \Delta_{{\mathrm{2}}}  \mGLsym{,}  \mathsf{J}  \mGLsym{,}  \Delta_{{\mathrm{3}}} )   \vdash_{\mathsf{GS} }  \mGLnt{X}}
    \]
    \[
      \inferrule* [flushleft,right=,left=$\Pi_{{\mathrm{2}}} :$] {
        \pi_2
      }{(  \delta_{{\mathrm{1}}}  \mGLsym{,}  \delta  \mGLsym{,}  \delta_{{\mathrm{4}}}  )   \odot   ( \Delta_{{\mathrm{1}}}  \mGLsym{,}   \mGLnt{X} ^{ \mGLmv{n} }   \mGLsym{,}  \Delta_{{\mathrm{4}}} )   \vdash_{\mathsf{GS} }  \mGLnt{Y}}
\]
      We know:
      \[
      \begin{array}{lll}
        \mathsf{Depth}  (  \Pi_{{\mathrm{2}}}  )   +   \mathsf{Depth}  (  \Pi_{{\mathrm{3}}}  )   \, \mGLsym{<} \,  \mathsf{Depth}  (  \Pi_{{\mathrm{1}}}  )   +   \mathsf{Depth}  (  \Pi_{{\mathrm{2}}}  )\\
        \mathsf{CutRank} \, \mGLsym{(}  \Pi_{{\mathrm{3}}}  \mGLsym{)} \, \leq \, \mathsf{CutRank} \, \mGLsym{(}  \Pi_{{\mathrm{1}}}  \mGLsym{)} \, \leq \,  \mathsf{Rank}  (  \mGLnt{X}  )
      \end{array}
      \]

      and so applying the induction hypothesis
      to $\Pi_{{\mathrm{2}}}$ and $\Pi_{{\mathrm{3}}}$
      implies that there is a proof $\Pi'$ of
      $(  \delta_{{\mathrm{1}}}  \mGLsym{,}   (  \delta'_{{\mathrm{2}}}  \mGLsym{,}  \delta'_{{\mathrm{3}}}  )   \mGLsym{,}  \delta_{{\mathrm{4}}}  )   \odot   ( \Delta_{{\mathrm{1}}}  \mGLsym{,}  \Delta_{{\mathrm{2}}}  \mGLsym{,}  \Delta_{{\mathrm{3}}}  \mGLsym{,}  \Delta_{{\mathrm{4}}} )   \vdash_{\mathsf{GS} }  \mGLnt{Y}$ with
      $\mathsf{CutRank} \, \mGLsym{(}  \Pi'  \mGLsym{)} \, \leq \,  \mathsf{Rank}  (  \mGLnt{X}  )$ and $\delta  \boxast [ {  (  \delta_{{\mathrm{2}}}  \mGLsym{,}  \delta_{{\mathrm{3}}}  )  }^{ \mGLmv{n} } ]   \mGLsym{=}   (  \delta'_{{\mathrm{2}}}  \mGLsym{,}  \delta'_{{\mathrm{3}}}  )$
      where $ | \delta_{{\mathrm{2}}} |=| \delta'_{{\mathrm{2}}} |$ and $| \delta_{{\mathrm{3}}} |=| \delta'_{{\mathrm{3}}} |$.
      Thus, we construct the following proof $\Pi$:

      \[
        \inferrule* [flushleft,right=$\mGLdruleGSTXXUnitLName{}$,left=$\Pi :$] {
          \inferrule* [flushleft,right=,left=$\Pi' :$] {
            \pi'
          }{(  \delta_{{\mathrm{1}}}  \mGLsym{,}  \delta'_{{\mathrm{2}}}  \mGLsym{,}  \delta'_{{\mathrm{3}}}  \mGLsym{,}  \delta_{{\mathrm{4}}}  )   \odot   ( \Delta_{{\mathrm{1}}}  \mGLsym{,}  \Delta_{{\mathrm{2}}}  \mGLsym{,}  \Delta_{{\mathrm{3}}}  \mGLsym{,}  \Delta_{{\mathrm{4}}} )   \vdash_{\mathsf{GS} }  \mGLnt{Y}}
        }{(  \delta_{{\mathrm{1}}}  \mGLsym{,}  \delta'_{{\mathrm{2}}}  \mGLsym{,}   (   \delta  \boxast [ { \mGLnt{r} }^{ \mGLmv{n} } ]   )   \mGLsym{,}  \delta'_{{\mathrm{3}}}  \mGLsym{,}  \delta_{{\mathrm{4}}}  )   \odot   ( \Delta_{{\mathrm{1}}}  \mGLsym{,}  \Delta_{{\mathrm{2}}}  \mGLsym{,}  \mathsf{J}  \mGLsym{,}  \Delta_{{\mathrm{3}}}  \mGLsym{,}  \Delta_{{\mathrm{4}}} )   \vdash_{\mathsf{GS} }  \mGLnt{Y}}
        \]
        Given the above, we know:
        \[
          \begin{array}{lll}
            \mathsf{CutRank} \, \mGLsym{(}  \Pi  \mGLsym{)} \, \mGLsym{=} \, \mathsf{CutRank} \, \mGLsym{(}  \Pi'  \mGLsym{)} \, \leq \,  \mathsf{Rank}  (  \mGLnt{X}  )\\
            \delta  \boxast [ {  (  \delta_{{\mathrm{2}}}  \mGLsym{,}  \delta_{{\mathrm{3}}}  )  }^{ \mGLmv{n} } ]   \mGLsym{=}   (  \delta'_{{\mathrm{2}}}  \mGLsym{,}  \delta'_{{\mathrm{3}}}  )\\
            (    \delta  \boxast [ { \delta_{{\mathrm{2}}} }^{ \mGLmv{n} } ]   \mGLsym{,}  \delta  \boxast [ { \delta_{{\mathrm{3}}} }^{ \mGLmv{n} } ]   )   \mGLsym{=}  \delta'_{{\mathrm{2}}}  \mGLsym{,}  \delta'_{{\mathrm{3}}}\\
            \delta  \boxast [ { \delta_{{\mathrm{2}}} }^{ \mGLmv{n} } ]   \mGLsym{=}  \delta'_{{\mathrm{2}}}\\
            \delta  \boxast [ { \delta_{{\mathrm{3}}} }^{ \mGLmv{n} } ]   \mGLsym{=}  \delta'_{{\mathrm{3}}}\\
            \delta  \boxast [ { \mGLnt{r} }^{ \mGLmv{n} } ]   \mGLsym{=}   \delta  \boxast [ { \mGLnt{r} }^{ \mGLmv{n} } ]\\
            (     \delta  \boxast [ { \delta_{{\mathrm{2}}} }^{ \mGLmv{n} } ]   \mGLsym{,}  \delta  \boxast [ { \mGLnt{r} }^{ \mGLmv{n} } ]   \mGLsym{,}  \delta  \boxast [ { \delta_{{\mathrm{3}}} }^{ \mGLmv{n} } ]   )   \mGLsym{=}   (  \delta'_{{\mathrm{2}}}  \mGLsym{,}   \delta  \boxast [ { \mGLnt{r} }^{ \mGLmv{n} } ]   \mGLsym{,}  \delta'_{{\mathrm{3}}}  )\\
            \delta  \boxast [ {  (  \delta_{{\mathrm{2}}}  \mGLsym{,}  \mGLnt{r}  \mGLsym{,}  \delta_{{\mathrm{3}}}  )  }^{ \mGLmv{n} } ]   \mGLsym{=}   (  \delta'_{{\mathrm{2}}}  \mGLsym{,}   \delta  \boxast [ { \mGLnt{r} }^{ \mGLmv{n} } ]   \mGLsym{,}  \delta'_{{\mathrm{3}}}  )
          \end{array}
          \]

          \item \textbf{Weakening:}
          \[
        \inferrule* [flushleft,right=$\mGLdruleGSTXXWeakName{}$,left=$\Pi_{{\mathrm{1}}} :$] {
          \inferrule* [flushleft,left=$\Pi_{{\mathrm{3}}} : $] {
            \pi_3
          }{(  \delta_{{\mathrm{2}}}  \mGLsym{,}  \delta_{{\mathrm{3}}}  )   \odot   ( \Delta_{{\mathrm{1}}}  \mGLsym{,}  \Delta_{{\mathrm{2}}} )   \vdash_{\mathsf{GS} }  \mGLnt{X}}
        }{(  \delta_{{\mathrm{2}}}  \mGLsym{,}  \mathsf{0}  \mGLsym{,}  \delta_{{\mathrm{3}}}  )   \odot   ( \Delta_{{\mathrm{1}}}  \mGLsym{,}  \mGLnt{Y}  \mGLsym{,}  \Delta_{{\mathrm{2}}} )   \vdash_{\mathsf{GS} }  \mGLnt{X}}
        \]
        \[
          \inferrule* [flushleft,right=,left=$\Pi_{{\mathrm{2}}} :$] {
            \pi_2
          }{(  \delta_{{\mathrm{1}}}  \mGLsym{,}  \delta  \mGLsym{,}  \delta_{{\mathrm{4}}}  )   \odot   ( \Delta_{{\mathrm{1}}}  \mGLsym{,}   \mGLnt{X} ^{ \mGLmv{n} }   \mGLsym{,}  \Delta_{{\mathrm{4}}} )   \vdash_{\mathsf{GS} }  \mGLnt{Z}}
    \]
    We know:
      \[
      \begin{array}{lll}
        \mathsf{Depth}  (  \Pi_{{\mathrm{3}}}  )   +   \mathsf{Depth}  (  \Pi_{{\mathrm{2}}}  )   \, \mGLsym{<} \,  \mathsf{Depth}  (  \Pi_{{\mathrm{1}}}  )   +   \mathsf{Depth}  (  \Pi_{{\mathrm{2}}}  )\\
        \mathsf{CutRank} \, \mGLsym{(}  \Pi_{{\mathrm{3}}}  \mGLsym{)} \, \leq \, \mathsf{CutRank} \, \mGLsym{(}  \Pi_{{\mathrm{1}}}  \mGLsym{)} \, \leq \,  \mathsf{Rank}  (  \mGLnt{X}  )
      \end{array}
      \]

      and so applying the induction hypothesis
      to $\Pi_{{\mathrm{3}}}$ and $\Pi_{{\mathrm{2}}}$
      implies that there is a proof $\Pi'$ of
      $\delta_{{\mathrm{1}}}  \mGLsym{,}   (  \delta'_{{\mathrm{2}}}  \mGLsym{,}  \delta'_{{\mathrm{3}}}  )   \mGLsym{,}  \delta_{{\mathrm{4}}}  \odot  \Delta_{{\mathrm{1}}}  \mGLsym{,}  \Delta_{{\mathrm{2}}}  \mGLsym{,}  \Delta_{{\mathrm{3}}}  \mGLsym{,}  \Delta_{{\mathrm{4}}}  \vdash_{\mathsf{GS} }  \mGLnt{Z}$ with
      $\mathsf{CutRank} \, \mGLsym{(}  \Pi'  \mGLsym{)} \, \leq \,  \mathsf{Rank}  (  \mGLnt{X}  )$ and $\delta  \boxast [ {  (  \delta_{{\mathrm{2}}}  \mGLsym{,}  \delta_{{\mathrm{3}}}  )  }^{ \mGLmv{n} } ]   \mGLsym{=}   (  \delta'_{{\mathrm{2}}}  \mGLsym{,}  \delta'_{{\mathrm{3}}}  )$
      where $ | \delta_{{\mathrm{2}}} |=| \delta'_{{\mathrm{2}}} |$ and $| \delta_{{\mathrm{3}}} |=| \delta'_{{\mathrm{3}}} |$.
      Thus, we construct the following proof $\Pi$:

          \[
        \inferrule* [flushleft,right=$\mGLdruleGSTXXWeakName{}$,left=$\Pi :$] {
          \inferrule* [flushleft,left=$\Pi' : $] {
            \pi'
          }{\delta_{{\mathrm{1}}}  \mGLsym{,}  \delta'_{{\mathrm{2}}}  \mGLsym{,}  \delta'_{{\mathrm{3}}}  \mGLsym{,}  \delta_{{\mathrm{4}}}  \odot  \Delta_{{\mathrm{1}}}  \mGLsym{,}  \Delta_{{\mathrm{2}}}  \mGLsym{,}  \Delta_{{\mathrm{3}}}  \mGLsym{,}  \Delta_{{\mathrm{4}}}  \vdash_{\mathsf{GS} }  \mGLnt{Z}}
        }{\delta_{{\mathrm{1}}}  \mGLsym{,}  \delta'_{{\mathrm{2}}}  \mGLsym{,}  \mathsf{0}  \mGLsym{,}  \delta'_{{\mathrm{3}}}  \mGLsym{,}  \delta_{{\mathrm{4}}}  \odot  \Delta_{{\mathrm{1}}}  \mGLsym{,}  \Delta_{{\mathrm{2}}}  \mGLsym{,}  \mGLnt{Y}  \mGLsym{,}  \Delta_{{\mathrm{3}}}  \mGLsym{,}  \Delta_{{\mathrm{4}}}  \vdash_{\mathsf{GS} }  \mGLnt{Z}}
        \]
        Given the above, we know:
        \[
          \begin{array}{lll}
            \mathsf{CutRank} \, \mGLsym{(}  \Pi  \mGLsym{)} \, \mGLsym{=} \, \mathsf{CutRank} \, \mGLsym{(}  \Pi'  \mGLsym{)} \, \leq \,  \mathsf{Rank}  (  \mGLnt{X}  )\\
            \delta  \boxast [ {  (  \delta_{{\mathrm{2}}}  \mGLsym{,}  \delta_{{\mathrm{3}}}  )  }^{ \mGLmv{n} } ]   \mGLsym{=}   (  \delta'_{{\mathrm{2}}}  \mGLsym{,}  \delta'_{{\mathrm{3}}}  )\\
            (    \delta  \boxast [ { \delta_{{\mathrm{2}}} }^{ \mGLmv{n} } ]   \mGLsym{,}  \delta  \boxast [ { \delta_{{\mathrm{3}}} }^{ \mGLmv{n} } ]   )   \mGLsym{=}  \delta'_{{\mathrm{2}}}  \mGLsym{,}  \delta'_{{\mathrm{3}}}\\
            \delta  \boxast [ { \delta_{{\mathrm{2}}} }^{ \mGLmv{n} } ]   \mGLsym{=}  \delta'_{{\mathrm{2}}}\\
            \delta  \boxast [ { \delta_{{\mathrm{3}}} }^{ \mGLmv{n} } ]   \mGLsym{=}  \delta'_{{\mathrm{3}}}\\
            \delta  \boxast [ { \mathsf{0} }^{ \mGLmv{n} } ]   \mGLsym{=}  \mathsf{0}\\
            (     \delta  \boxast [ { \delta_{{\mathrm{2}}} }^{ \mGLmv{n} } ]   \mGLsym{,}  \delta  \boxast [ { \mathsf{0} }^{ \mGLmv{n} } ]   \mGLsym{,}  \delta  \boxast [ { \delta_{{\mathrm{3}}} }^{ \mGLmv{n} } ]   )   \mGLsym{=}   (  \delta'_{{\mathrm{2}}}  \mGLsym{,}  \mathsf{0}  \mGLsym{,}  \delta'_{{\mathrm{3}}}  )\\
            \delta  \boxast [ {  (  \delta_{{\mathrm{2}}}  \mGLsym{,}  \mGLnt{r}  \mGLsym{,}  \delta_{{\mathrm{3}}}  )  }^{ \mGLmv{n} } ]   \mGLsym{=}   (  \delta'_{{\mathrm{2}}}  \mGLsym{,}  \mathsf{0}  \mGLsym{,}  \delta'_{{\mathrm{3}}}  )
          \end{array}
          \]
      \end{enumerate}

    \item \textbf{Secondary Hypothesis}
      \begin{enumerate}
        \item \textbf{Right introduction of tensor product (first case):}
        \[
          \inferrule* [flushleft,right=,left=$\Pi_{{\mathrm{1}}} :$] {
            \pi_1
          }{\delta_{{\mathrm{2}}}  \odot  \Delta_{{\mathrm{2}}}  \vdash_{\mathsf{GS} }  \mGLnt{X}}
    \]

      \[
        \inferrule* [flushleft,right=$\mGLdruleGSTXXTenRName{}$,left=$\Pi_{{\mathrm{2}}} :$] {
          \inferrule* [flushleft,left=$\Pi_{{\mathrm{3}}} : $] {
            \pi_3
          }{\delta_{{\mathrm{1}}}  \mGLsym{,}  \delta  \mGLsym{,}  \delta_{{\mathrm{3}}}  \odot  \Delta_{{\mathrm{1}}}  \mGLsym{,}   \mGLnt{X} ^{ \mGLmv{n} }   \mGLsym{,}  \Delta_{{\mathrm{3}}}  \vdash_{\mathsf{GS} }  \mGLnt{Y} }\\
          \inferrule* [flushleft,right=,left=$\Pi_{{\mathrm{4}}} :$] {
            \pi_4
          }{ \delta_{{\mathrm{4}}}  \odot  \Delta_{{\mathrm{4}}}  \vdash_{\mathsf{GS} }  \mGLnt{Z}}
        }{\delta_{{\mathrm{1}}}  \mGLsym{,}  \delta  \mGLsym{,}  \delta_{{\mathrm{3}}}  \mGLsym{,}  \delta_{{\mathrm{4}}}  \odot  \Delta_{{\mathrm{1}}}  \mGLsym{,}   \mGLnt{X} ^{ \mGLmv{n} }   \mGLsym{,}  \Delta_{{\mathrm{3}}}  \mGLsym{,}  \Delta_{{\mathrm{4}}}  \vdash_{\mathsf{GS} }  \mGLnt{Y}  \boxtimes  \mGLnt{Z}  }
        \]

      We know:
      \[
      \begin{array}{lll}
        \mathsf{Depth}  (  \Pi_{{\mathrm{1}}}  )   +   \mathsf{Depth}  (  \Pi_{{\mathrm{3}}}  )   \, \mGLsym{<} \,  \mathsf{Depth}  (  \Pi_{{\mathrm{1}}}  )   +   \mathsf{Depth}  (  \Pi_{{\mathrm{2}}}  )\\
        \mathsf{CutRank} \, \mGLsym{(}  \Pi_{{\mathrm{3}}}  \mGLsym{)} \, \leq \, \mathsf{CutRank} \, \mGLsym{(}  \Pi_{{\mathrm{2}}}  \mGLsym{)} \, \leq \,  \mathsf{Rank}  (  \mGLnt{X}  )
      \end{array}
      \]

      and so applying the induction hypothesis
      to $\Pi_{{\mathrm{1}}}$ and $\Pi_{{\mathrm{3}}}$
      implies that there is a proof $\Pi'$ of
      $\delta_{{\mathrm{1}}}  \mGLsym{,}  \delta'_{{\mathrm{2}}}  \mGLsym{,}  \delta_{{\mathrm{3}}}  \odot  \Delta_{{\mathrm{1}}}  \mGLsym{,}  \Delta_{{\mathrm{2}}}  \mGLsym{,}  \Delta_{{\mathrm{3}}}  \vdash_{\mathsf{GS} }  \mGLnt{Y}$ with
      $\mathsf{CutRank} \, \mGLsym{(}  \Pi'  \mGLsym{)} \, \leq \,  \mathsf{Rank}  (  \mGLnt{X}  )$ and $(   \delta  \boxast [ {  (  \delta_{{\mathrm{2}}}  )  }^{ \mGLmv{n} } ]   )   \mGLsym{=}  \delta'_{{\mathrm{2}}}$.
      Thus, we construct the following proof $\Pi$:

        \[
        \inferrule* [flushleft,right=$\mGLdruleGSTXXTenRName{}$,left=$\Pi :$] {
          \inferrule* [flushleft,left=$\Pi' : $] {
            \pi'
          }{\delta_{{\mathrm{1}}}  \mGLsym{,}  \delta'_{{\mathrm{2}}}  \mGLsym{,}  \delta_{{\mathrm{3}}}  \odot  \Delta_{{\mathrm{1}}}  \mGLsym{,}  \Delta_{{\mathrm{2}}}  \mGLsym{,}  \Delta_{{\mathrm{3}}}  \vdash_{\mathsf{GS} }  \mGLnt{Y}}\\
          \inferrule* [flushleft,right=,left=$\Pi_{{\mathrm{4}}} :$] {
            \pi_4
          }{\delta_{{\mathrm{4}}}  \odot  \Delta_{{\mathrm{4}}}  \vdash_{\mathsf{GS} }  \mGLnt{Z}}
        }{\delta_{{\mathrm{1}}}  \mGLsym{,}  \delta'_{{\mathrm{2}}}  \mGLsym{,}  \delta_{{\mathrm{3}}}  \mGLsym{,}  \delta_{{\mathrm{4}}}  \odot  \Delta_{{\mathrm{1}}}  \mGLsym{,}  \Delta_{{\mathrm{2}}}  \mGLsym{,}  \Delta_{{\mathrm{3}}}  \mGLsym{,}  \Delta_{{\mathrm{4}}}  \vdash_{\mathsf{GS} }  \mGLnt{Y}  \boxtimes  \mGLnt{Z}}
        \]
        Given the above, we know:
        \[
          \begin{array}{lll}
            \mathsf{CutRank} \, \mGLsym{(}  \Pi  \mGLsym{)} \, \mGLsym{=} \, \mGLkw{Max} \, \mGLsym{(}  \mathsf{CutRank} \, \mGLsym{(}  \Pi'  \mGLsym{)}  \mGLsym{,}  \mathsf{CutRank} \, \mGLsym{(}  \Pi_{{\mathrm{4}}}  \mGLsym{)}  \mGLsym{)}\\
            \mathsf{CutRank} \, \mGLsym{(}  \Pi_{{\mathrm{4}}}  \mGLsym{)} \, \leq \,  \mathsf{Rank}  (  \mGLnt{X}  )\\
            \mathsf{CutRank} \, \mGLsym{(}  \Pi'  \mGLsym{)} \, \leq \,  \mathsf{Rank}  (  \mGLnt{X}  )\\
            \mathsf{CutRank} \, \mGLsym{(}  \Pi  \mGLsym{)} \, \leq \,  \mathsf{Rank}  (  \mGLnt{X}  )\\
            \delta  \boxast [ { \delta_{{\mathrm{2}}} }^{ \mGLmv{n} } ]   \mGLsym{=}  \delta'_{{\mathrm{2}}}\\
          \end{array}
          \]

        \item \textbf{Right introduction of tensor product (second case):}
        \[
          \inferrule* [flushleft,right=,left=$\Pi_{{\mathrm{1}}} :$] {
            \pi_1
          }{\delta_{{\mathrm{3}}}  \odot  \Delta_{{\mathrm{3}}}  \vdash_{\mathsf{GS} }  \mGLnt{X}}
    \]

      \[
        \inferrule* [flushleft,right=$\mGLdruleGSTXXTenRName{}$,left=$\Pi_{{\mathrm{2}}} :$] {
          \inferrule* [flushleft,left=$\Pi_{{\mathrm{3}}} : $] {
            \pi_3
          }{ \delta_{{\mathrm{1}}}  \odot  \Delta_{{\mathrm{1}}}  \vdash_{\mathsf{GS} }  \mGLnt{Y}}\\
          \inferrule* [flushleft,right=,left=$\Pi_{{\mathrm{4}}} :$] {
            \pi_4
          }{\delta_{{\mathrm{2}}}  \mGLsym{,}  \delta  \mGLsym{,}  \delta_{{\mathrm{4}}}  \odot  \Delta_{{\mathrm{2}}}  \mGLsym{,}   \mGLnt{X} ^{ \mGLmv{n} }   \mGLsym{,}  \Delta_{{\mathrm{4}}}  \vdash_{\mathsf{GS} }  \mGLnt{Z} }
        }{\delta_{{\mathrm{1}}}  \mGLsym{,}  \delta  \mGLsym{,}  \delta_{{\mathrm{3}}}  \mGLsym{,}  \delta_{{\mathrm{4}}}  \odot  \Delta_{{\mathrm{1}}}  \mGLsym{,}   \mGLnt{X} ^{ \mGLmv{n} }   \mGLsym{,}  \Delta_{{\mathrm{3}}}  \mGLsym{,}  \Delta_{{\mathrm{4}}}  \vdash_{\mathsf{GS} }  \mGLnt{Y}  \boxtimes  \mGLnt{Z}  }
        \]
        Similar to previous case.

        \item \textbf{Left introduction of tensor product:}
        \[
          \inferrule* [flushleft,right=,left=$\Pi_{{\mathrm{1}}} :$] {
            \pi_1
          }{\delta_{{\mathrm{3}}}  \odot  \Delta_{{\mathrm{3}}}  \vdash_{\mathsf{GS} }  \mGLnt{X}}
    \]

      \[
        \inferrule* [flushleft,right=$\mGLdruleGSTXXTenLName{}$,left=$\Pi_{{\mathrm{2}}} :$] {
          \inferrule* [flushleft,right=,left=$\Pi_{{\mathrm{3}}} :$] {
            \pi_3
          }{(  \delta_{{\mathrm{1}}}  \mGLsym{,}  \mGLnt{r}  \mGLsym{,}  \mGLnt{r}  \mGLsym{,}  \delta_{{\mathrm{2}}}  \mGLsym{,}  \delta  \mGLsym{,}  \delta_{{\mathrm{4}}}  )   \odot   ( \Delta_{{\mathrm{1}}}  \mGLsym{,}  \mGLnt{Y}  \mGLsym{,}  \mGLnt{Z}  \mGLsym{,}  \Delta_{{\mathrm{2}}}  \mGLsym{,}   \mGLnt{X} ^{ \mGLmv{n} }   \mGLsym{,}  \Delta_{{\mathrm{4}}} )   \vdash_{\mathsf{GS} }  \mGLnt{Z}}
          }{(  \delta_{{\mathrm{1}}}  \mGLsym{,}  \mGLnt{r}  \mGLsym{,}  \delta_{{\mathrm{2}}}  \mGLsym{,}  \delta  \mGLsym{,}  \delta_{{\mathrm{4}}}  )   \odot   ( \Delta_{{\mathrm{1}}}  \mGLsym{,}  \mGLnt{Y}  \boxtimes  \mGLnt{Z}  \mGLsym{,}  \Delta_{{\mathrm{2}}}  \mGLsym{,}   \mGLnt{X} ^{ \mGLmv{n} }   \mGLsym{,}  \Delta_{{\mathrm{4}}} )   \vdash_{\mathsf{GS} }  \mGLnt{Z}}
        \]

      We know:
      \[
      \begin{array}{lll}
        \mathsf{Depth}  (  \Pi_{{\mathrm{1}}}  )   +   \mathsf{Depth}  (  \Pi_{{\mathrm{3}}}  )   \, \mGLsym{<} \,  \mathsf{Depth}  (  \Pi_{{\mathrm{1}}}  )   +   \mathsf{Depth}  (  \Pi_{{\mathrm{2}}}  )\\
        \mathsf{CutRank} \, \mGLsym{(}  \Pi_{{\mathrm{3}}}  \mGLsym{)} \, \leq \, \mathsf{CutRank} \, \mGLsym{(}  \Pi_{{\mathrm{2}}}  \mGLsym{)} \, \leq \,  \mathsf{Rank}  (  \mGLnt{X}  )
      \end{array}
      \]

      and so applying the induction hypothesis
      to $\Pi_{{\mathrm{1}}}$ and $\Pi_{{\mathrm{3}}}$
      implies that there is a proof $\Pi'$ of
      $(  \delta_{{\mathrm{1}}}  \mGLsym{,}  \mGLnt{r}  \mGLsym{,}  \mGLnt{r}  \mGLsym{,}  \delta_{{\mathrm{2}}}  \mGLsym{,}  \delta'_{{\mathrm{3}}}  \mGLsym{,}  \delta_{{\mathrm{4}}}  )   \odot   ( \Delta_{{\mathrm{1}}}  \mGLsym{,}  \mGLnt{Y}  \mGLsym{,}  \mGLnt{Z}  \mGLsym{,}  \Delta_{{\mathrm{2}}}  \mGLsym{,}  \Delta_{{\mathrm{3}}}  \mGLsym{,}  \Delta_{{\mathrm{4}}} )   \vdash_{\mathsf{GS} }  \mGLnt{Z}$ with
      $\mathsf{CutRank} \, \mGLsym{(}  \Pi'  \mGLsym{)} \, \leq \,  \mathsf{Rank}  (  \mGLnt{X}  )$ and $\delta  \boxast [ { \delta_{{\mathrm{3}}} }^{ \mGLmv{n} } ]   \mGLsym{=}  \delta'_{{\mathrm{3}}}$.
      Thus, we construct the following proof $\Pi$:

      \[
        \inferrule* [flushleft,right=$\mGLdruleGSTXXTenLName{}$,left=$\Pi :$] {
          \inferrule* [flushleft,right=,left=$\Pi' :$] {
            \pi'
          }{(  \delta_{{\mathrm{1}}}  \mGLsym{,}  \mGLnt{r}  \mGLsym{,}  \mGLnt{r}  \mGLsym{,}  \delta_{{\mathrm{2}}}  \mGLsym{,}  \delta'_{{\mathrm{3}}}  \mGLsym{,}  \delta_{{\mathrm{4}}}  )   \odot   ( \Delta_{{\mathrm{1}}}  \mGLsym{,}  \mGLnt{Y}  \mGLsym{,}  \mGLnt{Z}  \mGLsym{,}  \Delta_{{\mathrm{2}}}  \mGLsym{,}  \Delta_{{\mathrm{3}}}  \mGLsym{,}  \Delta_{{\mathrm{4}}} )   \vdash_{\mathsf{GS} }  \mGLnt{Z}}
        }{(  \delta_{{\mathrm{1}}}  \mGLsym{,}  \mGLnt{r}  \mGLsym{,}  \delta_{{\mathrm{2}}}  \mGLsym{,}  \delta'_{{\mathrm{3}}}  \mGLsym{,}  \delta_{{\mathrm{4}}}  )   \odot   ( \Delta_{{\mathrm{1}}}  \mGLsym{,}  \mGLnt{Y}  \boxtimes  \mGLnt{Z}  \mGLsym{,}  \Delta_{{\mathrm{2}}}  \mGLsym{,}  \Delta_{{\mathrm{3}}}  \mGLsym{,}  \Delta_{{\mathrm{4}}} )   \vdash_{\mathsf{GS} }  \mGLnt{Z}}
        \]
        Given the above, we know:
        \[
          \begin{array}{lll}
            \mathsf{CutRank} \, \mGLsym{(}  \Pi  \mGLsym{)} \, \mGLsym{=} \, \mathsf{CutRank} \, \mGLsym{(}  \Pi'  \mGLsym{)} \, \leq \,  \mathsf{Rank}  (  \mGLnt{X}  )\\
            \delta  \boxast [ { \delta_{{\mathrm{3}}} }^{ \mGLmv{n} } ]   \mGLsym{=}  \delta'_{{\mathrm{3}}}\\
          \end{array}
          \]
        \item \textbf{Left introduction of tensor product: second case}
        \[
          \inferrule* [flushleft,right=,left=$\Pi_{{\mathrm{1}}} :$] {
            \pi_1
          }{\delta_{{\mathrm{2}}}  \odot  \Delta_{{\mathrm{2}}}  \vdash_{\mathsf{GS} }  \mGLnt{X}}
    \]

      \[
        \inferrule* [flushleft,right=$\mGLdruleGSTXXTenLName{}$,left=$\Pi_{{\mathrm{2}}} :$] {
          \inferrule* [flushleft,right=,left=$\Pi_{{\mathrm{3}}} :$] {
            \pi_3
          }{(  \delta_{{\mathrm{1}}}  \mGLsym{,}  \delta  \mGLsym{,}  \delta_{{\mathrm{3}}}  \mGLsym{,}  \mGLnt{r}  \mGLsym{,}  \mGLnt{r}  \mGLsym{,}  \delta_{{\mathrm{4}}}  )   \odot   ( \Delta_{{\mathrm{1}}}  \mGLsym{,}   \mGLnt{X} ^{ \mGLmv{n} }   \mGLsym{,}  \Delta_{{\mathrm{3}}}  \mGLsym{,}  \mGLnt{Y}  \mGLsym{,}  \mGLnt{Z}  \mGLsym{,}  \Delta_{{\mathrm{4}}} )   \vdash_{\mathsf{GS} }  \mGLnt{Z}}
          }{(  \delta_{{\mathrm{1}}}  \mGLsym{,}  \mGLnt{r}  \mGLsym{,}  \delta_{{\mathrm{2}}}  \mGLsym{,}  \delta  \mGLsym{,}  \delta_{{\mathrm{4}}}  )   \odot   ( \Delta_{{\mathrm{1}}}  \mGLsym{,}  \mGLnt{Y}  \boxtimes  \mGLnt{Z}  \mGLsym{,}  \Delta_{{\mathrm{2}}}  \mGLsym{,}   \mGLnt{X} ^{ \mGLmv{n} }   \mGLsym{,}  \Delta_{{\mathrm{4}}} )   \vdash_{\mathsf{GS} }  \mGLnt{Z}}
        \]
        Similar to the previous case.

        \item \textbf{Left introduction of the unit of tensor:}
        \[
          \inferrule* [flushleft,right=,left=$\Pi_{{\mathrm{1}}} :$] {
            \pi_1
          }{\delta_{{\mathrm{2}}}  \odot  \Delta_{{\mathrm{2}}}  \vdash_{\mathsf{GS} }  \mGLnt{X}}
    \]

      \[
        \inferrule* [flushleft,right=$\mGLdruleGSTXXUnitLName{}$,left=$\Pi_{{\mathrm{2}}} :$] {
          \inferrule* [flushleft,left=$\Pi_{{\mathrm{3}}} : $] {
            \pi_3
          }{(  \delta_{{\mathrm{1}}}  \mGLsym{,}  \delta  \mGLsym{,}  \delta_{{\mathrm{3}}}  \mGLsym{,}  \delta_{{\mathrm{4}}}  )   \odot   ( \Delta_{{\mathrm{1}}}  \mGLsym{,}   \mGLnt{X} ^{ \mGLmv{n} }   \mGLsym{,}  \Delta_{{\mathrm{3}}}  \mGLsym{,}  \Delta_{{\mathrm{4}}} )   \vdash_{\mathsf{GS} }  \mGLnt{Y}}\\
        }{(  \delta_{{\mathrm{1}}}  \mGLsym{,}  \delta  \mGLsym{,}  \delta_{{\mathrm{3}}}  \mGLsym{,}  \mGLnt{r}  \mGLsym{,}  \delta_{{\mathrm{4}}}  )   \odot   ( \Delta_{{\mathrm{1}}}  \mGLsym{,}   \mGLnt{X} ^{ \mGLmv{n} }   \mGLsym{,}  \Delta_{{\mathrm{3}}}  \mGLsym{,}  \mathsf{J}  \mGLsym{,}  \Delta_{{\mathrm{4}}} )   \vdash_{\mathsf{GS} }  \mGLnt{Y}}
        \]

      We know:
      \[
      \begin{array}{lll}
        \mathsf{Depth}  (  \Pi_{{\mathrm{1}}}  )   +   \mathsf{Depth}  (  \Pi_{{\mathrm{3}}}  )   \, \mGLsym{<} \,  \mathsf{Depth}  (  \Pi_{{\mathrm{1}}}  )   +   \mathsf{Depth}  (  \Pi_{{\mathrm{2}}}  )\\
        \mathsf{CutRank} \, \mGLsym{(}  \Pi_{{\mathrm{3}}}  \mGLsym{)} \, \leq \, \mathsf{CutRank} \, \mGLsym{(}  \Pi_{{\mathrm{2}}}  \mGLsym{)} \, \leq \,  \mathsf{Rank}  (  \mGLnt{X}  )
      \end{array}
      \]

      and so applying the induction hypothesis
      to $\Pi_{{\mathrm{1}}}$ and $\Pi_{{\mathrm{3}}}$
      implies that there is a proof $\Pi'$ of
      $(  \delta_{{\mathrm{1}}}  \mGLsym{,}  \delta'_{{\mathrm{2}}}  \mGLsym{,}  \delta_{{\mathrm{3}}}  \mGLsym{,}  \delta_{{\mathrm{4}}}  )   \odot   ( \Delta_{{\mathrm{1}}}  \mGLsym{,}  \Delta_{{\mathrm{2}}}  \mGLsym{,}  \Delta_{{\mathrm{3}}}  \mGLsym{,}  \Delta_{{\mathrm{4}}} )   \vdash_{\mathsf{GS} }  \mGLnt{Y}$ with
      $\mathsf{CutRank} \, \mGLsym{(}  \Pi'  \mGLsym{)} \, \leq \,  \mathsf{Rank}  (  \mGLnt{X}  )$ and $\delta  \boxast [ { \delta_{{\mathrm{2}}} }^{ \mGLmv{n} } ]   \mGLsym{=}  \delta'_{{\mathrm{2}}}$.
      Thus, we construct the following proof $\Pi$:
      \[
        \inferrule* [flushleft,right=$\mGLdruleGSTXXUnitLName{}$,left=$\Pi :$] {
          \inferrule* [flushleft,left=$\Pi' : $] {
            \pi'
          }{(  \delta_{{\mathrm{1}}}  \mGLsym{,}  \delta'_{{\mathrm{2}}}  \mGLsym{,}  \delta_{{\mathrm{3}}}  \mGLsym{,}  \delta_{{\mathrm{4}}}  )   \odot   ( \Delta_{{\mathrm{1}}}  \mGLsym{,}  \Delta_{{\mathrm{2}}}  \mGLsym{,}  \Delta_{{\mathrm{3}}}  \mGLsym{,}  \Delta_{{\mathrm{4}}} )   \vdash_{\mathsf{GS} }  \mGLnt{Y}}\\
        }{(  \delta_{{\mathrm{1}}}  \mGLsym{,}  \delta'_{{\mathrm{2}}}  \mGLsym{,}  \delta_{{\mathrm{3}}}  \mGLsym{,}  \mGLnt{r}  \mGLsym{,}  \delta_{{\mathrm{4}}}  )   \odot   ( \Delta_{{\mathrm{1}}}  \mGLsym{,}  \Delta_{{\mathrm{2}}}  \mGLsym{,}  \Delta_{{\mathrm{3}}}  \mGLsym{,}  \mathsf{J}  \mGLsym{,}  \Delta_{{\mathrm{4}}} )   \vdash_{\mathsf{GS} }  \mGLnt{Y}}
        \]
        Given the above, we know:
        \[
          \begin{array}{lll}
            \mathsf{CutRank} \, \mGLsym{(}  \Pi  \mGLsym{)} \, \mGLsym{=} \, \mathsf{CutRank} \, \mGLsym{(}  \Pi'  \mGLsym{)} \, \leq \,  \mathsf{Rank}  (  \mGLnt{X}  )\\
            \delta  \boxast [ { \delta_{{\mathrm{2}}} }^{ \mGLmv{n} } ]   \mGLsym{=}  \delta'_{{\mathrm{2}}}\\
          \end{array}
          \]
        \item \textbf{Left introduction of the unit of tensor: second case}
        \[
          \inferrule* [flushleft,right=,left=$\Pi_{{\mathrm{1}}} :$] {
            \pi_1
          }{\delta_{{\mathrm{3}}}  \odot  \Delta_{{\mathrm{3}}}  \vdash_{\mathsf{GS} }  \mGLnt{X}}
    \]
      \[
        \inferrule* [flushleft,right=$\mGLdruleGSTXXUnitLName{}$,left=$\Pi_{{\mathrm{2}}} :$] {
          \inferrule* [flushleft,left=$\Pi_{{\mathrm{3}}} : $] {
            \pi_3
          }{(  \delta_{{\mathrm{1}}}  \mGLsym{,}  \delta_{{\mathrm{2}}}  \mGLsym{,}  \delta  \mGLsym{,}  \delta_{{\mathrm{4}}}  )   \odot   ( \Delta_{{\mathrm{1}}}  \mGLsym{,}  \Delta_{{\mathrm{2}}}  \mGLsym{,}   \mGLnt{X} ^{ \mGLmv{n} }   \mGLsym{,}  \Delta_{{\mathrm{4}}} )   \vdash_{\mathsf{GS} }  \mGLnt{Y}}\\
        }{(  \delta_{{\mathrm{1}}}  \mGLsym{,}  \mGLnt{r}  \mGLsym{,}  \delta_{{\mathrm{2}}}  \mGLsym{,}  \delta  \mGLsym{,}  \delta_{{\mathrm{4}}}  )   \odot   ( \Delta_{{\mathrm{1}}}  \mGLsym{,}  \mathsf{J}  \mGLsym{,}  \Delta_{{\mathrm{2}}}  \mGLsym{,}   \mGLnt{X} ^{ \mGLmv{n} }   \mGLsym{,}  \Delta_{{\mathrm{4}}} )   \vdash_{\mathsf{GS} }  \mGLnt{Y}}
        \]
        Similar to the previous case.
        \item \textbf{Weakening:}
        \[
          \inferrule* [flushleft,right=,left=$\Pi_{{\mathrm{1}}} :$] {
            \pi_1
          }{\delta_{{\mathrm{2}}}  \odot  \Delta_{{\mathrm{2}}}  \vdash_{\mathsf{GS} }  \mGLnt{X}}
    \]

      \[
        \inferrule* [flushleft,right=$\mGLdruleGSTXXWeakName{}$,left=$\Pi_{{\mathrm{2}}} :$] {
          \inferrule* [flushleft,left=$\Pi_{{\mathrm{3}}} : $] {
            \pi_3
          }{(  \delta_{{\mathrm{1}}}  \mGLsym{,}  \delta  \mGLsym{,}  \delta_{{\mathrm{3}}}  \mGLsym{,}  \delta_{{\mathrm{4}}}  )   \odot   ( \Delta_{{\mathrm{1}}}  \mGLsym{,}   \mGLnt{X} ^{ \mGLmv{n} }   \mGLsym{,}  \Delta_{{\mathrm{3}}}  \mGLsym{,}  \Delta_{{\mathrm{4}}} )   \vdash_{\mathsf{GS} }  \mGLnt{Y}}\\
        }{(  \delta_{{\mathrm{1}}}  \mGLsym{,}  \delta  \mGLsym{,}  \delta_{{\mathrm{3}}}  \mGLsym{,}  \mathsf{0}  \mGLsym{,}  \delta_{{\mathrm{4}}}  )   \odot   ( \Delta_{{\mathrm{1}}}  \mGLsym{,}   \mGLnt{X} ^{ \mGLmv{n} }   \mGLsym{,}  \Delta_{{\mathrm{3}}}  \mGLsym{,}  \mGLnt{Z}  \mGLsym{,}  \Delta_{{\mathrm{4}}} )   \vdash_{\mathsf{GS} }  \mGLnt{Y}}
        \]
        Similar to the unit tensor case.
        \item \textbf{Weakening: second case}
        \[
          \inferrule* [flushleft,right=,left=$\Pi_{{\mathrm{1}}} :$] {
            \pi_1
          }{\delta_{{\mathrm{3}}}  \odot  \Delta_{{\mathrm{3}}}  \vdash_{\mathsf{GS} }  \mGLnt{X}}
    \]
      \[
        \inferrule* [flushleft,right=,left=$\Pi_{{\mathrm{2}}} :$] {
          \inferrule* [flushleft,left=$\Pi_{{\mathrm{3}}} : $] {
            \pi_3
          }{(  \delta_{{\mathrm{1}}}  \mGLsym{,}  \delta_{{\mathrm{2}}}  \mGLsym{,}  \delta  \mGLsym{,}  \delta_{{\mathrm{4}}}  )   \odot   ( \Delta_{{\mathrm{1}}}  \mGLsym{,}  \Delta_{{\mathrm{2}}}  \mGLsym{,}   \mGLnt{X} ^{ \mGLmv{n} }   \mGLsym{,}  \Delta_{{\mathrm{4}}} )   \vdash_{\mathsf{GS} }  \mGLnt{Y}}\\
        }{(  \delta_{{\mathrm{1}}}  \mGLsym{,}  \mathsf{0}  \mGLsym{,}  \delta_{{\mathrm{2}}}  \mGLsym{,}  \delta  \mGLsym{,}  \delta_{{\mathrm{4}}}  )   \odot   ( \Delta_{{\mathrm{1}}}  \mGLsym{,}  \mGLnt{Z}  \mGLsym{,}  \Delta_{{\mathrm{2}}}  \mGLsym{,}   \mGLnt{X} ^{ \mGLmv{n} }   \mGLsym{,}  \Delta_{{\mathrm{4}}} )   \vdash_{\mathsf{GS} }  \mGLnt{Y}}
        \]
        Similar to the unit tensor case.
        \item \textbf{Right introduction of Lin:} %
        \[
          \begin{array}{lll}
            \inferrule* [flushleft,right=,left=$\Pi_{{\mathrm{1}}} :$] {
              \pi_1
            }{\delta_{{\mathrm{2}}}  \odot  \Delta_{{\mathrm{2}}}  \vdash_{\mathsf{GS} }  \mGLnt{X}}
            & \quad &
            \inferrule* [flushleft,right=$\mGLdruleGSTXXLinRName{}$,left=$\Pi_{{\mathrm{2}}} :$] {
              \inferrule* [flushleft,right=, left=$\Pi_{{\mathrm{3}}} :$] {
              \pi_3
              }{(  \delta_{{\mathrm{1}}}  \mGLsym{,}  \delta  \mGLsym{,}  \delta_{{\mathrm{3}}}  )   \odot   ( \Delta_{{\mathrm{1}}}  \mGLsym{,}   \mGLnt{X} ^{ \mGLmv{n} }   \mGLsym{,}  \Delta_{{\mathrm{3}}} )   \mGLsym{;}  \emptyset  \vdash_{\mathsf{MS} }  \mGLnt{A}}
            }{(  \delta_{{\mathrm{1}}}  \mGLsym{,}  \delta  \mGLsym{,}  \delta_{{\mathrm{3}}}  )   \odot   ( \Delta_{{\mathrm{1}}}  \mGLsym{,}   \mGLnt{X} ^{ \mGLmv{n} }   \mGLsym{,}  \Delta_{{\mathrm{3}}} )   \vdash_{\mathsf{GS} }  \mathsf{Lin} \, \mGLnt{A}}
          \end{array}
          \]
          We know the following:
          \[
            \begin{array}{lll}
              \mathsf{Depth}  (  \Pi_{{\mathrm{1}}}  )   +   \mathsf{Depth}  (  \Pi_{{\mathrm{3}}}  )   \, \mGLsym{<} \,  \mathsf{Depth}  (  \Pi_{{\mathrm{1}}}  )   +   \mathsf{Depth}  (  \Pi_{{\mathrm{2}}}  )\\
              \mathsf{CutRank} \, \mGLsym{(}  \Pi_{{\mathrm{3}}}  \mGLsym{)} \, \leq \, \mathsf{CutRank} \, \mGLsym{(}  \Pi_{{\mathrm{2}}}  \mGLsym{)} \, \leq \,  \mathsf{Rank}  (  \mGLnt{X}  )
            \end{array}
          \]
          Thus, we apply the induction hypothesis of Lemma~\ref{lemma:cut_reduction_for_mgl} (2)
          to $\Pi_{{\mathrm{1}}}$ and $\Pi_{{\mathrm{3}}}$ to obtain a proof $\Pi'$ of the sequent
          $(  \delta_{{\mathrm{1}}}  \mGLsym{,}  \delta'_{{\mathrm{2}}}  \mGLsym{,}  \delta_{{\mathrm{3}}}  )   \odot   ( \Delta_{{\mathrm{1}}}  \mGLsym{,}  \Delta_{{\mathrm{2}}}  \mGLsym{,}  \Delta_{{\mathrm{3}}} )   \mGLsym{;}  \emptyset  \vdash_{\mathsf{MS} }  \mGLnt{A}$ with $\mathsf{CutRank} \, \mGLsym{(}  \Pi'  \mGLsym{)} \, \leq \,  \mathsf{Rank}  (  \mGLsym{(}  \mGLnt{X}  \mGLsym{)}  )$
          and $(   \delta  \boxast [ { \delta_{{\mathrm{2}}} }^{ \mGLmv{n} } ]   )   \mGLsym{=}  \delta'_{{\mathrm{2}}}$.
          Now we define the proof $\Pi$ as follows:
          \[
            \inferrule* [flushleft,right=$\mGLdruleGSTXXLinRName{}$, left=$\Pi' :$] {
              \inferrule* [flushleft,right=, left=$\Pi :$] {
                \pi'
              }{(  \delta_{{\mathrm{1}}}  \mGLsym{,}  \delta'_{{\mathrm{2}}}  \mGLsym{,}  \delta_{{\mathrm{3}}}  )   \odot   ( \Delta_{{\mathrm{1}}}  \mGLsym{,}  \Delta_{{\mathrm{2}}}  \mGLsym{,}  \Delta_{{\mathrm{3}}} )   \mGLsym{;}  \emptyset  \vdash_{\mathsf{MS} }  \mGLnt{A}}
            }{(  \delta_{{\mathrm{1}}}  \mGLsym{,}  \delta'_{{\mathrm{2}}}  \mGLsym{,}  \delta_{{\mathrm{3}}}  )   \odot   ( \Delta_{{\mathrm{1}}}  \mGLsym{,}  \Delta_{{\mathrm{2}}}  \mGLsym{,}  \Delta_{{\mathrm{3}}} )   \vdash_{\mathsf{GS} }  \mathsf{Lin} \, \mGLnt{A}}
            \]
          with: $\mathsf{CutRank} \, \mGLsym{(}  \Pi  \mGLsym{)} \, \mGLsym{=} \, \mathsf{CutRank} \, \mGLsym{(}  \Pi'  \mGLsym{)} \, \leq \,  \mathsf{Rank}  (  \mGLnt{X}  )$ and $(   \delta  \boxast [ { \delta_{{\mathrm{2}}} }^{ \mGLmv{n} } ]   )   \mGLsym{=}  \delta'_{{\mathrm{2}}}$

      \end{enumerate}
    \item \textbf{Structural}
      \begin{enumerate}
        \item \textbf{Weakening}
        \[
          \inferrule* [flushleft,right=,left=$\Pi_{{\mathrm{1}}} :$] {
            \pi_1
          }{\delta_{{\mathrm{2}}}  \odot  \Delta_{{\mathrm{2}}}  \vdash_{\mathsf{GS} }  \mGLnt{X}}
    \]
      \[
        \inferrule* [flushleft,right=$\mGLdruleGSTXXWeakName{}$,left=$\Pi_{{\mathrm{2}}} :$] {
          \inferrule* [flushleft,left=$\Pi_{{\mathrm{3}}} : $] {
            \pi_3
          }{(  \delta_{{\mathrm{1}}}  \mGLsym{,}  \delta_{{\mathrm{3}}}  )   \odot   ( \Delta_{{\mathrm{1}}}  \mGLsym{,}  \Delta_{{\mathrm{3}}} )   \vdash_{\mathsf{GS} }  \mGLnt{Y}}\\
        }{(  \delta_{{\mathrm{1}}}  \mGLsym{,}  \mathsf{0}  \mGLsym{,}  \delta_{{\mathrm{3}}}  )   \odot   ( \Delta_{{\mathrm{1}}}  \mGLsym{,}  \mGLnt{X}  \mGLsym{,}  \Delta_{{\mathrm{3}}} )   \vdash_{\mathsf{GS} }  \mGLnt{Y}}
        \]

      We know:
      \[
      \begin{array}{lll}
        \mathsf{CutRank} \, \mGLsym{(}  \Pi_{{\mathrm{3}}}  \mGLsym{)} \, \leq \, \mathsf{CutRank} \, \mGLsym{(}  \Pi_{{\mathrm{2}}}  \mGLsym{)} \, \leq \,  \mathsf{Rank}  (  \mGLnt{X}  )
      \end{array}
      \]

      We directly construct the proof $\Pi$ of
      $\delta_{{\mathrm{1}}}  \mGLsym{,}   (   \mathsf{0}  \boxast [ { \delta_{{\mathrm{2}}} }^{ \mGLmv{n} } ]   )   \mGLsym{,}  \delta_{{\mathrm{3}}}  \odot  \Delta_{{\mathrm{1}}}  \mGLsym{,}  \Delta_{{\mathrm{2}}}  \mGLsym{,}  \Delta_{{\mathrm{3}}}  \vdash_{\mathsf{GS} }  \mGLnt{Y}$
      by $| \Delta_{{\mathrm{2}}}| = m$ applications of weakening.

      \[
        \inferrule* [flushleft,right=$\mGLdruleGSTXXWeakName{}$,left=$\Pi :$] {
          \inferrule* [flushleft,right=$\mGLdruleGSTXXWeakName{}$] {
            \inferrule* [flushleft,right=$\mGLdruleGSTXXWeakName{}$,left=$\Pi_{{\mathrm{3}}} : $] {
              \pi_3
          }{(  \delta_{{\mathrm{1}}}  \mGLsym{,}  \delta_{{\mathrm{3}}}  )   \odot   ( \Delta_{{\mathrm{1}}}  \mGLsym{,}  \Delta_{{\mathrm{3}}} )   \vdash_{\mathsf{GS} }  \mGLnt{Y}}
          }{\vdots}
        }{(  \delta_{{\mathrm{1}}}  \mGLsym{,}   (   \mathsf{0}  \boxast [ { \delta_{{\mathrm{2}}} }^{ \mGLmv{n} } ]   )   \mGLsym{,}  \delta_{{\mathrm{3}}}  )   \odot   ( \Delta_{{\mathrm{1}}}  \mGLsym{,}  \Delta_{{\mathrm{2}}}  \mGLsym{,}  \Delta_{{\mathrm{3}}} )   \vdash_{\mathsf{GS} }  \mGLnt{Y}}
        \]

    with: $\mathsf{CutRank} \, \mGLsym{(}  \Pi  \mGLsym{)} \, \mGLsym{=} \, \mathsf{CutRank} \, \mGLsym{(}  \Pi_{{\mathrm{2}}}  \mGLsym{)} \, \leq \,  \mathsf{Rank}  (  \mGLnt{X}  )$
        \item \textbf{Contraction}
        \[
          \inferrule* [flushleft,right=,left=$\Pi_{{\mathrm{1}}} :$] {
            \pi_1
          }{\delta_{{\mathrm{2}}}  \odot  \Delta_{{\mathrm{2}}}  \vdash_{\mathsf{GS} }  \mGLnt{X}}
    \]
    \[
      \inferrule* [flushleft,right=$\mGLdruleGSTXXContName{}$,left=$\Pi_{{\mathrm{2}}} :$] {
        \inferrule* [flushleft,right=,left=$\Pi_{{\mathrm{3}}} :$] {
          \pi_3
        }{(  \delta_{{\mathrm{1}}}  \mGLsym{,}  \mGLnt{r_{{\mathrm{1}}}}  \mGLsym{,}  \mGLnt{r_{{\mathrm{2}}}}  \mGLsym{,}  \delta_{{\mathrm{3}}}  )   \odot   ( \Delta_{{\mathrm{1}}}  \mGLsym{,}  \mGLnt{X}  \mGLsym{,}  \mGLnt{X}  \mGLsym{,}  \Delta_{{\mathrm{3}}} )   \vdash_{\mathsf{GS} }  \mGLnt{Y}}
      }{(  \delta_{{\mathrm{1}}}  \mGLsym{,}  \mGLnt{r_{{\mathrm{1}}}}  +  \mGLnt{r_{{\mathrm{2}}}}  \mGLsym{,}  \delta_{{\mathrm{3}}}  )   \odot   ( \Delta_{{\mathrm{1}}}  \mGLsym{,}  \mGLnt{X}  \mGLsym{,}  \Delta_{{\mathrm{3}}} )   \vdash_{\mathsf{GS} }  \mGLnt{Y}}
      \]
      We know the following:
      \[
        \begin{array}{lll}
          \mathsf{Depth}  (  \Pi_{{\mathrm{1}}}  )   +   \mathsf{Depth}  (  \Pi_{{\mathrm{3}}}  )   \, \mGLsym{<} \,  \mathsf{Depth}  (  \Pi_{{\mathrm{1}}}  )   +   \mathsf{Depth}  (  \Pi_{{\mathrm{2}}}  )\\
          \mathsf{CutRank} \, \mGLsym{(}  \Pi_{{\mathrm{3}}}  \mGLsym{)} \, \leq \, \mathsf{CutRank} \, \mGLsym{(}  \Pi_{{\mathrm{2}}}  \mGLsym{)} \, \leq \,  \mathsf{Rank}  (  \mGLnt{X}  )
        \end{array}
      \]
      Thus, we apply the induction hypothesis
      to $\Pi_{{\mathrm{1}}}$ and $\Pi_{{\mathrm{3}}}$ to obtain a proof $\Pi'$ of the sequent
      $(  \delta_{{\mathrm{1}}}  \mGLsym{,}  \delta'_{{\mathrm{2}}}  \mGLsym{,}  \delta_{{\mathrm{3}}}  )   \odot   ( \Delta_{{\mathrm{1}}}  \mGLsym{,}  \Delta_{{\mathrm{2}}}  \mGLsym{,}  \Delta_{{\mathrm{3}}} )   \vdash_{\mathsf{GS} }  \mGLnt{Y}$ with $\mathsf{CutRank} \, \mGLsym{(}  \Pi'  \mGLsym{)} \, \leq \,  \mathsf{Rank}  (  \mGLsym{(}  \mGLnt{X}  \mGLsym{)}  )$
      and $(    (  \mGLnt{r_{{\mathrm{1}}}}  \mGLsym{,}  \mGLnt{r_{{\mathrm{2}}}}  )   \boxast [ { \delta_{{\mathrm{2}}} }^{ \mGLmv{n} } ]   )   \mGLsym{=}  \delta'_{{\mathrm{2}}}$.
      Since $(  \mGLnt{r_{{\mathrm{1}}}}  \mGLsym{,}  \mGLnt{r_{{\mathrm{2}}}}  )   \boxast [ { \delta_{{\mathrm{2}}} }^{ \mGLmv{n} } ]   \mGLsym{=}   (  \mGLnt{r_{{\mathrm{1}}}}  *  \delta_{{\mathrm{2}}}  )   +   (  \mGLnt{r_{{\mathrm{2}}}}  *  \delta_{{\mathrm{2}}}  )$ by definition, and
      for any $k, 1<= k<= | \delta_2 | $
      \[
        \begin{array}{lll}
          (  \mGLnt{r_{{\mathrm{1}}}}  *  \delta_{{\mathrm{2}}}  )   +   (  \mGLnt{r_{{\mathrm{2}}}}  *  \delta_{{\mathrm{2}}}  )   \mGLsym{(}  \mGLmv{k}  \mGLsym{)}  \mGLsym{=}   (  \mGLnt{r_{{\mathrm{1}}}}  *  \delta_{{\mathrm{2}}}  )   \mGLsym{(}  \mGLmv{k}  \mGLsym{)}  +   (  \mGLnt{r_{{\mathrm{2}}}}  *  \delta_{{\mathrm{2}}}  )   \mGLsym{(}  \mGLmv{k}  \mGLsym{)}  \mGLsym{=}  \mGLnt{r_{{\mathrm{1}}}}  *  \delta_{{\mathrm{2}}}  \mGLsym{(}  \mGLmv{k}  \mGLsym{)}  +  \mGLnt{r_{{\mathrm{2}}}}  *  \delta_{{\mathrm{2}}}  \mGLsym{(}  \mGLmv{k}  \mGLsym{)}\\
          \mGLnt{r_{{\mathrm{1}}}}  *  \delta_{{\mathrm{2}}}  \mGLsym{(}  \mGLmv{k}  \mGLsym{)}  +  \mGLnt{r_{{\mathrm{2}}}}  *  \delta_{{\mathrm{2}}}  \mGLsym{(}  \mGLmv{k}  \mGLsym{)}  \mGLsym{=}   (  \mGLnt{r_{{\mathrm{1}}}}  +  \mGLnt{r_{{\mathrm{2}}}}  )   *  \delta_{{\mathrm{2}}}  \mGLsym{(}  \mGLmv{k}  \mGLsym{)}\\
          (  \mGLnt{r_{{\mathrm{1}}}}  \mGLsym{,}  \mGLnt{r_{{\mathrm{2}}}}  )   \boxast [ { \delta_{{\mathrm{2}}} }^{ \mGLmv{n} } ]   \mGLsym{=}   (  \mGLnt{r_{{\mathrm{1}}}}  +  \mGLnt{r_{{\mathrm{2}}}}  )   *  \delta_{{\mathrm{2}}} \\
          (  \mGLnt{r_{{\mathrm{1}}}}  +  \mGLnt{r_{{\mathrm{2}}}}  )   *  \delta_{{\mathrm{2}}}  \mGLsym{=}    (  \mGLnt{r_{{\mathrm{1}}}}  +  \mGLnt{r_{{\mathrm{2}}}}  )   \boxast [ { \delta_{{\mathrm{2}}} }^{ \mGLmv{n} } ]\\
          (  \mGLnt{r_{{\mathrm{1}}}}  +  \mGLnt{r_{{\mathrm{2}}}}  )   *  \delta_{{\mathrm{2}}}  \mGLsym{=}  \delta'_{{\mathrm{2}}}
        \end{array}
      \]
        So we construct the proof $\Pi \, \mGLsym{=} \, \Pi'$
        \item \textbf{Exchange}
        \[
          \inferrule* [flushleft,right=,left=$\Pi_{{\mathrm{1}}} :$] {
            \pi_1
          }{\delta_{{\mathrm{2}}}  \odot  \Delta_{{\mathrm{2}}}  \vdash_{\mathsf{GS} }  \mGLnt{X}}
    \]
    \[
      \inferrule* [flushleft,right=$\mGLdruleGSTXXExName{}$,left=$\Pi_{{\mathrm{2}}} :$] {
        \inferrule* [flushleft,right=,left=$\Pi_{{\mathrm{3}}} :$] {
          \pi_3
        }{(  \delta_{{\mathrm{1}}}  \mGLsym{,}  \delta  \mGLsym{,}  \delta_{{\mathrm{3}}}  \mGLsym{,}  \mGLnt{r_{{\mathrm{1}}}}  \mGLsym{,}  \mGLnt{r_{{\mathrm{2}}}}  \mGLsym{,}  \delta_{{\mathrm{4}}}  )   \odot   ( \Delta_{{\mathrm{1}}}  \mGLsym{,}   \mGLnt{X} ^{ \mGLmv{n} }   \mGLsym{,}  \Delta_{{\mathrm{3}}}  \mGLsym{,}  \mGLnt{Y}  \mGLsym{,}  \mGLnt{Z}  \mGLsym{,}  \Delta_{{\mathrm{4}}} )   \vdash_{\mathsf{GS} }  \mGLnt{W}}
      }{(  \delta_{{\mathrm{1}}}  \mGLsym{,}  \delta  \mGLsym{,}  \delta_{{\mathrm{3}}}  \mGLsym{,}  \mGLnt{r_{{\mathrm{2}}}}  \mGLsym{,}  \mGLnt{r_{{\mathrm{1}}}}  \mGLsym{,}  \delta_{{\mathrm{4}}}  )   \odot   ( \Delta_{{\mathrm{1}}}  \mGLsym{,}   \mGLnt{X} ^{ \mGLmv{n} }   \mGLsym{,}  \Delta_{{\mathrm{3}}}  \mGLsym{,}  \mGLnt{Z}  \mGLsym{,}  \mGLnt{Y}  \mGLsym{,}  \Delta_{{\mathrm{4}}} )   \vdash_{\mathsf{GS} }  \mGLnt{W}}
      \]

      We know:
      \[
      \begin{array}{lll}
        \mathsf{Depth}  (  \Pi_{{\mathrm{1}}}  )   +   \mathsf{Depth}  (  \Pi_{{\mathrm{3}}}  )   \, \mGLsym{<} \,  \mathsf{Depth}  (  \Pi_{{\mathrm{1}}}  )   +   \mathsf{Depth}  (  \Pi_{{\mathrm{2}}}  )\\
        \mathsf{CutRank} \, \mGLsym{(}  \Pi_{{\mathrm{3}}}  \mGLsym{)} \, \leq \, \mathsf{CutRank} \, \mGLsym{(}  \Pi_{{\mathrm{2}}}  \mGLsym{)} \, \leq \,  \mathsf{Rank}  (  \mGLnt{X}  )
      \end{array}
      \]

      and so applying the induction hypothesis
      to $\Pi_{{\mathrm{1}}}$ and $\Pi_{{\mathrm{3}}}$
      implies that there is a proof $\Pi'$ of
      $(  \delta_{{\mathrm{1}}}  \mGLsym{,}  \delta'_{{\mathrm{2}}}  \mGLsym{,}  \delta_{{\mathrm{3}}}  \mGLsym{,}  \mGLnt{r_{{\mathrm{1}}}}  \mGLsym{,}  \mGLnt{r_{{\mathrm{2}}}}  \mGLsym{,}  \delta_{{\mathrm{4}}}  )   \odot   ( \Delta_{{\mathrm{1}}}  \mGLsym{,}  \Delta_{{\mathrm{2}}}  \mGLsym{,}  \Delta_{{\mathrm{3}}}  \mGLsym{,}  \mGLnt{Y}  \mGLsym{,}  \mGLnt{Z}  \mGLsym{,}  \Delta_{{\mathrm{4}}} )   \vdash_{\mathsf{GS} }  \mGLnt{W}$ with
      $\mathsf{CutRank} \, \mGLsym{(}  \Pi'  \mGLsym{)} \, \leq \,  \mathsf{Rank}  (  \mGLnt{X}  )$ and $(   \delta  \boxast [ { \delta_{{\mathrm{2}}} }^{ \mGLmv{n} } ]   )   \mGLsym{=}  \delta'_{{\mathrm{2}}}$.
      Thus, we construct the following proof $\Pi$:

      \[
        \inferrule* [flushleft,right=$\mGLdruleGSTXXExName{}$,left=$\Pi :$] {
          \inferrule* [flushleft,right=,left=$\Pi' :$] {
            \pi'
          }{(  \delta_{{\mathrm{1}}}  \mGLsym{,}  \delta'_{{\mathrm{2}}}  \mGLsym{,}  \delta_{{\mathrm{3}}}  \mGLsym{,}  \mGLnt{r_{{\mathrm{1}}}}  \mGLsym{,}  \mGLnt{r_{{\mathrm{2}}}}  \mGLsym{,}  \delta_{{\mathrm{4}}}  )   \odot   ( \Delta_{{\mathrm{1}}}  \mGLsym{,}  \Delta_{{\mathrm{2}}}  \mGLsym{,}  \Delta_{{\mathrm{3}}}  \mGLsym{,}  \mGLnt{Y}  \mGLsym{,}  \mGLnt{Z}  \mGLsym{,}  \Delta_{{\mathrm{4}}} )   \vdash_{\mathsf{GS} }  \mGLnt{W}}
        }{(  \delta_{{\mathrm{1}}}  \mGLsym{,}  \delta'_{{\mathrm{2}}}  \mGLsym{,}  \delta_{{\mathrm{3}}}  \mGLsym{,}  \mGLnt{r_{{\mathrm{2}}}}  \mGLsym{,}  \mGLnt{r_{{\mathrm{1}}}}  \mGLsym{,}  \delta_{{\mathrm{4}}}  )   \odot   ( \Delta_{{\mathrm{1}}}  \mGLsym{,}  \Delta_{{\mathrm{2}}}  \mGLsym{,}  \Delta_{{\mathrm{3}}}  \mGLsym{,}  \mGLnt{Z}  \mGLsym{,}  \mGLnt{Y}  \mGLsym{,}  \Delta_{{\mathrm{4}}} )   \vdash_{\mathsf{GS} }  \mGLnt{W}}
        \]
        Given the above, we know:
        \[
          \begin{array}{lll}
            \mathsf{CutRank} \, \mGLsym{(}  \Pi  \mGLsym{)} \, \mGLsym{=} \, \mathsf{CutRank} \, \mGLsym{(}  \Pi'  \mGLsym{)} \, \leq \,  \mathsf{Rank}  (  \mGLnt{X}  )\\
            \delta  \boxast [ { \delta_{{\mathrm{2}}} }^{ \mGLmv{n} } ]   \mGLsym{=}  \delta'_{{\mathrm{2}}}\\
          \end{array}
          \]
        \item \textbf{Exchange (second case)}
        \[
          \inferrule* [flushleft,right=,left=$\Pi_{{\mathrm{1}}} :$] {
            \pi_1
          }{\delta_{{\mathrm{3}}}  \odot  \Delta_{{\mathrm{3}}}  \vdash_{\mathsf{GS} }  \mGLnt{X}}
    \]
    \[
      \inferrule* [flushleft,right=$\mGLdruleGSTXXExName{}$,left=$\Pi_{{\mathrm{2}}} :$] {
        \inferrule* [flushleft,right=,left=$\Pi_{{\mathrm{3}}} :$] {
          \pi_3
        }{(  \delta_{{\mathrm{1}}}  \mGLsym{,}  \delta  \mGLsym{,}  \delta_{{\mathrm{3}}}  \mGLsym{,}  \mGLnt{r_{{\mathrm{1}}}}  \mGLsym{,}  \mGLnt{r_{{\mathrm{2}}}}  \mGLsym{,}  \delta_{{\mathrm{4}}}  )   \odot   ( \Delta_{{\mathrm{1}}}  \mGLsym{,}  \Delta_{{\mathrm{2}}}  \mGLsym{,}  \mGLnt{Y}  \mGLsym{,}  \mGLnt{Z}  \mGLsym{,}   \mGLnt{X} ^{ \mGLmv{n} }   \mGLsym{,}  \Delta_{{\mathrm{4}}} )   \vdash_{\mathsf{GS} }  \mGLnt{W}}
      }{(  \delta_{{\mathrm{1}}}  \mGLsym{,}  \delta  \mGLsym{,}  \delta_{{\mathrm{3}}}  \mGLsym{,}  \mGLnt{r_{{\mathrm{2}}}}  \mGLsym{,}  \mGLnt{r_{{\mathrm{1}}}}  \mGLsym{,}  \delta_{{\mathrm{4}}}  )   \odot   ( \Delta_{{\mathrm{1}}}  \mGLsym{,}  \Delta_{{\mathrm{2}}}  \mGLsym{,}  \mGLnt{Z}  \mGLsym{,}  \mGLnt{Y}  \mGLsym{,}   \mGLnt{X} ^{ \mGLmv{n} }   \mGLsym{,}  \Delta_{{\mathrm{4}}} )   \vdash_{\mathsf{GS} }  \mGLnt{W}}
      \]
      This is similar to the above case
        \item \textbf{Exchange (third case)}
        \[
          \inferrule* [flushleft,right=,left=$\Pi_{{\mathrm{1}}} :$] {
            \pi_1
          }{\delta_{{\mathrm{2}}}  \odot  \Delta_{{\mathrm{2}}}  \vdash_{\mathsf{GS} }  \mGLnt{X}}
    \]
    \[
      \inferrule* [flushleft,right=$\mGLdruleGSTXXExName{}$,left=$\Pi_{{\mathrm{2}}} :$] {
        \inferrule* [flushleft,right=,left=$\Pi_{{\mathrm{3}}} :$] {
          \pi_3
        }{(  \delta_{{\mathrm{1}}}  \mGLsym{,}  \mGLnt{r_{{\mathrm{1}}}}  \mGLsym{,}  \mGLnt{r_{{\mathrm{2}}}}  \mGLsym{,}  \delta_{{\mathrm{3}}}  )   \odot   ( \Delta_{{\mathrm{1}}}  \mGLsym{,}  \mGLnt{X}  \mGLsym{,}  \mGLnt{Y}  \mGLsym{,}  \Delta_{{\mathrm{3}}} )   \vdash_{\mathsf{GS} }  \mGLnt{Z}}
      }{(  \delta_{{\mathrm{1}}}  \mGLsym{,}  \mGLnt{r_{{\mathrm{2}}}}  \mGLsym{,}  \mGLnt{r_{{\mathrm{1}}}}  \mGLsym{,}  \delta_{{\mathrm{3}}}  )   \odot   ( \Delta_{{\mathrm{1}}}  \mGLsym{,}  \mGLnt{Y}  \mGLsym{,}  \mGLnt{X}  \mGLsym{,}  \Delta_{{\mathrm{3}}} )   \vdash_{\mathsf{GS} }  \mGLnt{Z}}
      \]
      We know:
      \[
      \begin{array}{lll}
        \mathsf{Depth}  (  \Pi_{{\mathrm{1}}}  )   +   \mathsf{Depth}  (  \Pi_{{\mathrm{3}}}  )   \, \mGLsym{<} \,  \mathsf{Depth}  (  \Pi_{{\mathrm{1}}}  )   +   \mathsf{Depth}  (  \Pi_{{\mathrm{2}}}  )\\
        \mathsf{CutRank} \, \mGLsym{(}  \Pi_{{\mathrm{3}}}  \mGLsym{)} \, \leq \, \mathsf{CutRank} \, \mGLsym{(}  \Pi_{{\mathrm{2}}}  \mGLsym{)} \, \leq \,  \mathsf{Rank}  (  \mGLnt{X}  )
      \end{array}
      \]

      and so applying the induction hypothesis
      to $\Pi_{{\mathrm{1}}}$ and $\Pi_{{\mathrm{3}}}$
      implies that there is a proof $\Pi'$ of
      $(  \delta_{{\mathrm{1}}}  \mGLsym{,}  \delta'_{{\mathrm{2}}}  \mGLsym{,}  \mGLnt{r_{{\mathrm{2}}}}  \mGLsym{,}  \delta_{{\mathrm{3}}}  )   \odot   ( \Delta_{{\mathrm{1}}}  \mGLsym{,}  \Delta_{{\mathrm{2}}}  \mGLsym{,}  \mGLnt{Y}  \mGLsym{,}  \Delta_{{\mathrm{3}}} )   \vdash_{\mathsf{GS} }  \mGLnt{Z}$ with
      $\mathsf{CutRank} \, \mGLsym{(}  \Pi'  \mGLsym{)} \, \leq \,  \mathsf{Rank}  (  \mGLnt{X}  )$ and $(   \mGLnt{r_{{\mathrm{1}}}}  \boxast [ { \delta_{{\mathrm{2}}} }^{ \mGLmv{n} } ]   )   \mGLsym{=}  \delta'_{{\mathrm{2}}}$.
      Thus, we construct the following proof $\Pi \, \mGLsym{=} \, \Pi'$:

        \item \textbf{Approximation}
        \[
          \inferrule* [flushleft,right=,left=$\Pi_{{\mathrm{1}}} :$] {
            \pi_1
          }{\delta_{{\mathrm{2}}}  \odot  \Delta_{{\mathrm{2}}}  \vdash_{\mathsf{GS} }  \mGLnt{X}}
    \]
    \[
      \inferrule* [flushleft,right=$\mGLdruleGSTXXSubName{}$,left=$\Pi_{{\mathrm{2}}} :$] {
        \inferrule* [flushleft,right=,left=$\Pi_{{\mathrm{3}}} :$] {
          \pi_3
        }{(  \delta_{{\mathrm{1}}}  \mGLsym{,}  \delta'  \mGLsym{,}  \delta_{{\mathrm{3}}}  )   \odot   ( \Delta_{{\mathrm{1}}}  \mGLsym{,}   \mGLnt{X} ^{ \mGLmv{n} }   \mGLsym{,}  \Delta_{{\mathrm{3}}} )   \vdash_{\mathsf{GS} }  \mGLnt{Y}}\\{\delta'  \leq  \delta}
      }{(  \delta_{{\mathrm{1}}}  \mGLsym{,}  \delta  \mGLsym{,}  \delta_{{\mathrm{3}}}  )   \odot   ( \Delta_{{\mathrm{1}}}  \mGLsym{,}   \mGLnt{X} ^{ \mGLmv{n} }   \mGLsym{,}  \Delta_{{\mathrm{3}}} )   \vdash_{\mathsf{GS} }  \mGLnt{Y}}
      \]

      We know:
      \[
      \begin{array}{lll}
        \mathsf{Depth}  (  \Pi_{{\mathrm{1}}}  )   +   \mathsf{Depth}  (  \Pi_{{\mathrm{3}}}  )   \, \mGLsym{<} \,  \mathsf{Depth}  (  \Pi_{{\mathrm{1}}}  )   +   \mathsf{Depth}  (  \Pi_{{\mathrm{2}}}  )\\
        \mathsf{CutRank} \, \mGLsym{(}  \Pi_{{\mathrm{3}}}  \mGLsym{)} \, \leq \, \mathsf{CutRank} \, \mGLsym{(}  \Pi_{{\mathrm{2}}}  \mGLsym{)} \, \leq \,  \mathsf{Rank}  (  \mGLnt{X}  )
      \end{array}
      \]

      and so applying the induction hypothesis
      to $\Pi_{{\mathrm{1}}}$ and $\Pi_{{\mathrm{3}}}$
      implies that there is a proof $\Pi'$ of
      $(   \delta_{{\mathrm{1}}}  \mGLsym{,}  \delta'  \boxast [ { \delta_{{\mathrm{2}}} }^{ \mGLmv{n} } ]   \mGLsym{,}  \delta_{{\mathrm{3}}}  )   \odot   ( \Delta_{{\mathrm{1}}}  \mGLsym{,}  \Delta_{{\mathrm{2}}}  \mGLsym{,}  \Delta_{{\mathrm{3}}} )   \vdash_{\mathsf{GS} }  \mGLnt{Y}$ with
      $\mathsf{CutRank} \, \mGLsym{(}  \Pi'  \mGLsym{)} \, \leq \,  \mathsf{Rank}  (  \mGLnt{X}  )$.
      We know:
      \[
      \begin{array}{lll}
        \delta'  \leq  \delta\\
        \delta'  \boxast [ { \delta_{{\mathrm{2}}} }^{ \mGLmv{n} } ]   \leq   \delta  \boxast [ { \delta_{{\mathrm{2}}} }^{ \mGLmv{n} } ]
      \end{array}
      \]
      So we construct the proof $\Pi$:

      \[
        \inferrule* [flushleft,right=$\mGLdruleGSTXXSubName{}$,left=$\Pi :$] {
          \inferrule* [flushleft,right=,left=$\Pi' :$] {
            \pi'
          }{(   \delta_{{\mathrm{1}}}  \mGLsym{,}  \delta'  \boxast [ { \delta_{{\mathrm{2}}} }^{ \mGLmv{n} } ]   \mGLsym{,}  \delta_{{\mathrm{3}}}  )   \odot   ( \Delta_{{\mathrm{1}}}  \mGLsym{,}  \Delta_{{\mathrm{2}}}  \mGLsym{,}  \Delta_{{\mathrm{3}}} )   \vdash_{\mathsf{GS} }  \mGLnt{Y}}\\{\delta'  \boxast [ { \delta_{{\mathrm{2}}} }^{ \mGLmv{n} } ]   \mGLsym{=}   \delta  \boxast [ { \delta_{{\mathrm{2}}} }^{ \mGLmv{n} } ]}
        }{(   \delta_{{\mathrm{1}}}  \mGLsym{,}  \delta  \boxast [ { \delta_{{\mathrm{2}}} }^{ \mGLmv{n} } ]   \mGLsym{,}  \delta_{{\mathrm{3}}}  )   \odot   ( \Delta_{{\mathrm{1}}}  \mGLsym{,}   \mGLnt{X} ^{ \mGLmv{n} }   \mGLsym{,}  \Delta_{{\mathrm{3}}} )   \vdash_{\mathsf{GS} }  \mGLnt{Y}}
        \]
      \end{enumerate}
\end{enumerate}

\end{proof}

\begin{lemma}[Cut Reduction GS/MS]
    If $\Pi_{{\mathrm{1}}}$ is a proof of $\delta_{{\mathrm{2}}}  \odot  \Delta_{{\mathrm{2}}}  \vdash_{\mathsf{GS} }  \mGLnt{X}$ 
    and $\Pi_{{\mathrm{2}}}$ is a proof of $(  \delta_{{\mathrm{1}}}  \mGLsym{,}  \delta  \mGLsym{,}  \delta_{{\mathrm{3}}}  )   \odot   ( \Delta_{{\mathrm{1}}}  \mGLsym{,}   \mGLnt{X} ^{ \mGLmv{n} }   \mGLsym{,}  \Delta_{{\mathrm{3}}} )   \mGLsym{;}  \Gamma  \vdash_{\mathsf{MS} }  \mGLnt{B}$ 
    with $\mathsf{CutRank} \, \mGLsym{(}  \Pi_{{\mathrm{1}}}  \mGLsym{)} \, \leq \,  \mathsf{Rank}  (  \mGLnt{X}  )$
    and $\mathsf{CutRank} \, \mGLsym{(}  \Pi_{{\mathrm{2}}}  \mGLsym{)} \, \leq \,  \mathsf{Rank}  (  \mGLnt{X}  )$,
    then there exists a proof $\Pi$ of the sequent $(  \delta_{{\mathrm{1}}}  \mGLsym{,}   (   \delta  \boxast [ { \delta_{{\mathrm{2}}} }^{ \mGLmv{n} } ]   )   \mGLsym{,}  \delta_{{\mathrm{3}}}  )   \odot   ( \Delta_{{\mathrm{1}}}  \mGLsym{,}  \Delta_{{\mathrm{2}}}  \mGLsym{,}  \Delta_{{\mathrm{3}}} )   \mGLsym{;}  \Gamma  \vdash_{\mathsf{MS} }  \mGLnt{B}$ 
    with $\mathsf{CutRank} \, \mGLsym{(}  \Pi  \mGLsym{)} \, \leq \,  \mathsf{Rank}  (  \mGLnt{X}  )$.

  \end{lemma}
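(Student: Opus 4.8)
The plan is to establish this as case (2) of Lemma~\ref{lemma:cut_reduction_for_mgl}, proved by mutual induction on $\mathsf{Depth}(\Pi_1) + \mathsf{Depth}(\Pi_2)$ simultaneously with cases (1) and (3). As in the preceding \textbf{Cut Reduction GS} lemma, I would proceed by case analysis on the last rule applied in $\Pi_1$ (a $\mathsf{GS}$ derivation) and in $\Pi_2$ (an $\mathsf{MS}$ derivation), organising the cases into commuting conversions, identity/axiom cases, principal-vs-principal, secondary conclusion, secondary hypothesis, and structural rules. The grade-vector bookkeeping is governed throughout by the row-vector multiplication $\delta \boxast [\delta_2^n]$, and each step must re-establish the bound $\mathsf{CutRank}(\Pi) \leq \mathsf{Rank}(X)$.

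Next I would dispatch the large number of secondary cases, where the last rule of $\Pi_2$ does not act on a copy of the cut formula $X$. Since $X$ lives in the graded context, all of the purely linear rules of $\mathsf{MS}$---\mGLdruleMSTXXImpRName{}, \mGLdruleMSTXXImpLName{}, the linear \mGLdruleMSTXXTenRName{}/\mGLdruleMSTXXTenLName{}, and the unit rules for $\mathsf{I}$---together with \mGLdruleMSTXXGrdLName{} and \mGLdruleMSTXXLinLName{} acting on a formula other than the cut formula, leave $X$ untouched: in each I would permute the cut above the rule, appeal to the induction hypothesis on the shorter pair, and reapply the rule, exactly mirroring the GS proof. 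The cases where $\Pi_2$ ends in a graded structural rule (\mGLdruleMSTXXWeakName{}, \mGLdruleMSTXXContName{}, \mGLdruleMSTXXExName{}) or a graded left rule on a non-cut formula (\mGLdruleMSTXXGUnitLName{}, \mGLdruleMSTXXGTenLName{}) are handled as in the GS proof, with the accompanying $\boxast$-identities (e.g.\ $(r_1,r_2)\boxast[\delta_2^n] = (r_1+r_2)*\delta_2$ for contraction) carrying over unchanged.

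The genuinely new work is in the principal-vs-principal cases, driven by the shape of the graded cut formula $X \in \{\mathsf{J},\, X_1 \boxtimes X_2,\, \mathsf{Lin}\,A\}$. The unit and $\boxtimes$ cases proceed just as in the GS lemma, reducing the cut to smaller cuts on the components and using the distribution of scalar multiplication over grade vectors. The crucial case is $X = \mathsf{Lin}\,A$: here $\Pi_1$ ends in \mGLdruleGSTXXLinRName{} (whose premise is an $\mathsf{MS}$ derivation with empty linear context concluding $A$) while the relevant copy of $X$ in $\Pi_2$ is introduced by \mGLdruleMSTXXLinLName{} (dereliction, at grade $1$). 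I would reduce this to a \emph{linear} cut on $A$ in the $\mathsf{MS}$ fragment, invoking case (3) of the mutual induction; the grade-$1$ dereliction is exactly what makes the scaling $1 * \delta_2 = \delta_2$ line up. When several copies of $\mathsf{Lin}\,A$ occur, I would first use contraction and weakening to reduce to a single copy, as in the standard multicut argument.

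The main obstacle I anticipate is the grade-vector arithmetic where the two fragments interact, rather than the cut permutations themselves. Two spots stand out. First, the cross-fragment $\mathsf{Lin}$ principal case must be threaded correctly through the mutual induction, since it changes the \emph{kind} of cut (graded to linear) while still having to satisfy the grade-vector equation. Second, whenever $\Pi_2$ ends in \mGLdruleMSTXXGrdRName{}, whose premise is itself a $\mathsf{GS}$ derivation and which scales the entire graded context by some $r$, permuting the cut upward (and appealing to case (1)) requires the identity $r * (\delta \boxast [\delta_2^n]) = (r * \delta) \boxast [\delta_2^n]$ relating scalar multiplication and row-vector multiplication, which follows from the semimodule laws of Definition~\ref{def:graded-contexts}. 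Verifying these grade equalities together with the $\mathsf{CutRank}$ bound in each case is where the bulk of the (routine but delicate) effort lies.
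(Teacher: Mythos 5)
Your proposal matches the paper's own proof in both structure and substance: the same mutual induction on $\mathsf{Depth}(\Pi_1) + \mathsf{Depth}(\Pi_2)$, the same case organisation, the reduction of the principal $\mathsf{Lin}\,A$ case to a linear cut via the mutual induction hypothesis for the Mixed case (with $1 * \delta_2 = \delta_2$ closing the grade bookkeeping), and the appeal to the Graded case when $\Pi_2$ ends in \mGLdruleMSTXXGrdRName{}. No gaps; this is essentially the paper's argument.
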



\begin{proof}
      This is by induction on $\mathsf{Depth}  (  \Pi_{{\mathrm{1}}}  )   +   \mathsf{Depth}  (  \Pi_{{\mathrm{2}}}  )$.
\begin{enumerate}
  \item \textbf{Commuting Conversions}
  \begin{enumerate}
    \item \textbf{left-side:} Suppose we have

    \[
      \inferrule* [flushleft,right=,left=$\Pi_{{\mathrm{1}}} :$] {
        \pi_1
      }{\delta_{{\mathrm{3}}}  \odot  \Delta_{{\mathrm{3}}}  \vdash_{\mathsf{GS} }  \mGLnt{X}}
     \]
     \[
      \inferrule* [flushleft,right=$\mGLdruleMSTXXCutName{}$,left=$\Pi_{{\mathrm{2}}} :$] {
        \inferrule* [flushleft,right=,left=$\Pi_{{\mathrm{3}}} :$] {
          \pi_3
        }{(  \delta_{{\mathrm{2}}}  \mGLsym{,}  \gamma_{{\mathrm{3}}}  \mGLsym{,}  \delta_{{\mathrm{4}}}  )   \odot   ( \Delta_{{\mathrm{2}}}  \mGLsym{,}   \mGLnt{X} ^{ \mGLmv{n} }   \mGLsym{,}  \Delta_{{\mathrm{4}}} )   \vdash_{\mathsf{GS} }  \mGLnt{Y}}\\
        \inferrule* [flushleft,right=,left=$\Pi_{{\mathrm{4}}} :$] {
          \pi_4
        }{(  \delta_{{\mathrm{1}}}  \mGLsym{,}  \delta  \mGLsym{,}  \delta_{{\mathrm{5}}}  )   \odot   ( \Delta_{{\mathrm{1}}}  \mGLsym{,}   \mGLnt{Y} ^{ \mGLmv{m} }   \mGLsym{,}  \Delta_{{\mathrm{5}}} )   \mGLsym{;}  \Gamma  \vdash_{\mathsf{MS} }  \mGLnt{A}}
      }{(  \delta_{{\mathrm{1}}}  \mGLsym{,}   (   \delta  \boxast [ {  (  \delta_{{\mathrm{2}}}  )  }^{ \mGLmv{m} } ]   )   \mGLsym{,}   (   \delta  \boxast [ {  (  \gamma_{{\mathrm{3}}}  )  }^{ \mGLmv{m} } ]   )   \mGLsym{,}   (   \delta  \boxast [ {  (  \delta_{{\mathrm{4}}}  )  }^{ \mGLmv{m} } ]   )   \mGLsym{,}  \delta_{{\mathrm{5}}}  )   \odot   ( \Delta_{{\mathrm{1}}}  \mGLsym{,}  \Delta_{{\mathrm{2}}}  \mGLsym{,}   \mGLnt{X} ^{ \mGLmv{n} }   \mGLsym{,}  \Delta_{{\mathrm{4}}}  \mGLsym{,}  \Delta_{{\mathrm{5}}} )   \mGLsym{;}  \Gamma  \vdash_{\mathsf{MS} }  \mGLnt{A}}
    \]

    We know:
    \[
    \begin{array}{lll}
      \mathsf{Depth}  (  \Pi_{{\mathrm{1}}}  )   +   \mathsf{Depth}  (  \Pi_{{\mathrm{3}}}  )   \, \mGLsym{<} \,  \mathsf{Depth}  (  \Pi_{{\mathrm{1}}}  )   +   \mathsf{Depth}  (  \Pi_{{\mathrm{2}}}  )\\
      \mathsf{CutRank} \, \mGLsym{(}  \Pi_{{\mathrm{3}}}  \mGLsym{)} \, \leq \, \mGLkw{Max} \, \mGLsym{(}   \mathsf{CutRank} \, \mGLsym{(}  \Pi_{{\mathrm{3}}}  \mGLsym{)}  \mGLsym{,}  \mathsf{CutRank} \, \mGLsym{(}  \Pi_{{\mathrm{4}}}  \mGLsym{)}  \mGLsym{,}   \mathsf{Rank}  (  \mGLnt{Y}  )   + 1   \mGLsym{)} \, \leq \,  \mathsf{Rank}  (  \mGLnt{X}  )
    \end{array}
    \]

    and so applying the mutual induction hypothesis
    from Lemma~\ref{lemma:cut_reduction_for_mgl} (1)
    to $\Pi_{{\mathrm{1}}}$ and $\Pi_{{\mathrm{3}}}$
    implies that there is a proof $\Pi'$ of
    $(  \delta_{{\mathrm{2}}}  \mGLsym{,}  \gamma  \mGLsym{,}  \delta_{{\mathrm{4}}}  )   \odot   ( \Delta_{{\mathrm{2}}}  \mGLsym{,}  \Delta_{{\mathrm{3}}}  \mGLsym{,}  \Delta_{{\mathrm{4}}} )   \vdash_{\mathsf{GS} }  \mGLnt{Y}$ with
    $\mathsf{CutRank} \, \mGLsym{(}  \Pi'  \mGLsym{)} \, \leq \,  \mathsf{Rank}  (  \mGLnt{X}  )$
    and $(   \gamma_{{\mathrm{3}}}  \boxast [ {  (  \delta_{{\mathrm{3}}}  )  }^{ \mGLmv{n} } ]   )   \mGLsym{=}  \gamma$.
    Thus, we construct the following proof $\Pi$:
    \[
    \inferrule* [flushleft,right=$\mGLdruleMSTXXCutName{}$,left=$\Pi :$] {
      \inferrule* [flushleft,left=$\Pi' : $] {
        \pi
      }{(  \delta_{{\mathrm{2}}}  \mGLsym{,}  \gamma  \mGLsym{,}  \delta_{{\mathrm{4}}}  )   \odot   ( \Delta_{{\mathrm{2}}}  \mGLsym{,}  \Delta_{{\mathrm{3}}}  \mGLsym{,}  \Delta_{{\mathrm{4}}} )   \vdash_{\mathsf{GS} }  \mGLnt{Y}}\\
      \inferrule* [flushleft,right=,left=$\Pi_{{\mathrm{4}}} :$] {
        \pi_4
      }{(  \delta_{{\mathrm{1}}}  \mGLsym{,}  \delta  \mGLsym{,}  \delta_{{\mathrm{5}}}  )   \odot   ( \Delta_{{\mathrm{1}}}  \mGLsym{,}   \mGLnt{Y} ^{ \mGLmv{n} }   \mGLsym{,}  \Delta_{{\mathrm{5}}} )   \mGLsym{;}  \Gamma  \vdash_{\mathsf{MS} }  \mGLnt{A}}
    }{(  \delta_{{\mathrm{1}}}  \mGLsym{,}   (   \delta  \boxast [ {  (  \delta_{{\mathrm{2}}}  )  }^{ \mGLmv{n} } ]   )   \mGLsym{,}   (   \delta  \boxast [ { \gamma }^{ \mGLmv{n} } ]   )   \mGLsym{,}   (   \delta  \boxast [ {  (  \delta_{{\mathrm{4}}}  )  }^{ \mGLmv{n} } ]   )   \mGLsym{,}  \delta_{{\mathrm{5}}}  )   \odot   ( \Delta_{{\mathrm{1}}}  \mGLsym{,}  \Delta_{{\mathrm{2}}}  \mGLsym{,}  \Delta_{{\mathrm{3}}}  \mGLsym{,}  \Delta_{{\mathrm{4}}}  \mGLsym{,}  \Delta_{{\mathrm{5}}} )   \mGLsym{;}  \Gamma  \vdash_{\mathsf{MS} }  \mGLnt{A}}
    \]
The rest of the proof is the same as the corresponding
case for Lemma~\ref{lemma:cut_reduction_for_mgl} (1)

\item \textbf{cut vs. right-side cut (left case):} Suppose we have:

  \[
  \inferrule* [flushleft,right=,left=$\Pi_{{\mathrm{1}}} :$] {
    \pi_1
  }{\delta_{{\mathrm{2}}}  \odot  \Delta_{{\mathrm{2}}}  \vdash_{\mathsf{GS} }  \mGLnt{X}}
  \]
  \[
  \inferrule* [flushleft,right=$\mGLdruleMSTXXCutName{}$,left=$\Pi_{{\mathrm{2}}} :$] {
    \inferrule* [flushleft,right=,left=$\Pi_{{\mathrm{3}}} :$] {
      \pi_3
    }{\delta_{{\mathrm{4}}}  \odot  \Delta_{{\mathrm{4}}}  \vdash_{\mathsf{GS} }  \mGLnt{Y}}\\
    \inferrule* [flushleft,right=,left=$\Pi_{{\mathrm{4}}} :$] {
      \pi_4
    }{(  \delta_{{\mathrm{1}}}  \mGLsym{,}  \delta  \mGLsym{,}  \delta_{{\mathrm{4}}}  \mGLsym{,}  \delta'  \mGLsym{,}  \delta_{{\mathrm{5}}}  )   \odot   ( \Delta_{{\mathrm{1}}}  \mGLsym{,}   \mGLnt{X} ^{ \mGLmv{n} }   \mGLsym{,}  \Delta_{{\mathrm{3}}}  \mGLsym{,}   \mGLnt{Y} ^{ \mGLmv{m} }   \mGLsym{,}  \Delta_{{\mathrm{5}}} )   \mGLsym{;}  \Gamma  \vdash_{\mathsf{MS} }  \mGLnt{A}}
  }{(  \delta_{{\mathrm{1}}}  \mGLsym{,}  \delta  \mGLsym{,}  \delta_{{\mathrm{3}}}  \mGLsym{,}   (   \delta'  \boxast [ { \delta_{{\mathrm{4}}} }^{ \mGLmv{m} } ]   )   \mGLsym{,}  \delta_{{\mathrm{5}}}  )   \odot   ( \Delta_{{\mathrm{1}}}  \mGLsym{,}   \mGLnt{X} ^{ \mGLmv{n} }   \mGLsym{,}  \Delta_{{\mathrm{3}}}  \mGLsym{,}  \Delta_{{\mathrm{4}}}  \mGLsym{,}  \Delta_{{\mathrm{5}}} )   \mGLsym{;}  \Gamma  \vdash_{\mathsf{MS} }  \mGLnt{A}}
  \]
We know:
\[
\begin{array}{lll}
\mathsf{Depth}  (  \Pi_{{\mathrm{1}}}  )   +   \mathsf{Depth}  (  \Pi_{{\mathrm{4}}}  )   \, \mGLsym{<} \,  \mathsf{Depth}  (  \Pi_{{\mathrm{1}}}  )   +   \mathsf{Depth}  (  \Pi_{{\mathrm{2}}}  )\\
\mathsf{CutRank} \, \mGLsym{(}  \Pi_{{\mathrm{4}}}  \mGLsym{)} \, \leq \, \mGLkw{Max} \, \mGLsym{(}   \mathsf{CutRank} \, \mGLsym{(}  \Pi_{{\mathrm{3}}}  \mGLsym{)}  \mGLsym{,}  \mathsf{CutRank} \, \mGLsym{(}  \Pi_{{\mathrm{4}}}  \mGLsym{)}  \mGLsym{,}   \mathsf{Rank}  (  \mGLnt{Y}  )   + 1   \mGLsym{)} \, \leq \,  \mathsf{Rank}  (  \mGLnt{X}  )
\end{array}
\]
and so applying the induction hypothesis
to $\Pi_{{\mathrm{1}}}$ and $\Pi_{{\mathrm{4}}}$
implies that there is a proof $\Pi'$ of
$(  \delta_{{\mathrm{1}}}  \mGLsym{,}  \gamma  \mGLsym{,}  \delta_{{\mathrm{4}}}  \mGLsym{,}  \delta'  \mGLsym{,}  \delta_{{\mathrm{5}}}  )   \odot   ( \Delta_{{\mathrm{1}}}  \mGLsym{,}  \Delta_{{\mathrm{2}}}  \mGLsym{,}  \Delta_{{\mathrm{4}}}  \mGLsym{,}   \mGLnt{Y} ^{ \mGLmv{m} }   \mGLsym{,}  \Delta_{{\mathrm{5}}} )   \mGLsym{;}  \Gamma  \vdash_{\mathsf{MS} }  \mGLnt{A}$ with
$\mathsf{CutRank} \, \mGLsym{(}  \Pi'  \mGLsym{)} \, \leq \,  \mathsf{Rank}  (  \mGLnt{X}  )$
and $(   \delta  \boxast [ {  (  \delta_{{\mathrm{2}}}  )  }^{ \mGLmv{n} } ]   )   \mGLsym{=}  \gamma$.
Thus, we construct the following proof $\Pi$:

\[
\inferrule* [flushleft,right=$\mGLdruleMSTXXCutName{}$,left=$\Pi :$] {
\inferrule* [flushleft,left=$\Pi_{{\mathrm{3}}} : $] {
  \pi_3
}{\delta_{{\mathrm{4}}}  \odot  \Delta_{{\mathrm{4}}}  \vdash_{\mathsf{GS} }  \mGLnt{Y}}\\
\inferrule* [flushleft,right=,left=$\Pi' :$] {
  \pi'
}{(  \delta_{{\mathrm{1}}}  \mGLsym{,}  \gamma  \mGLsym{,}  \delta_{{\mathrm{4}}}  \mGLsym{,}  \delta'  \mGLsym{,}  \delta_{{\mathrm{5}}}  )   \odot   ( \Delta_{{\mathrm{1}}}  \mGLsym{,}  \Delta_{{\mathrm{2}}}  \mGLsym{,}  \Delta_{{\mathrm{4}}}  \mGLsym{,}   \mGLnt{Y} ^{ \mGLmv{n} }   \mGLsym{,}  \Delta_{{\mathrm{5}}} )   \mGLsym{;}  \Gamma  \vdash_{\mathsf{MS} }  \mGLnt{A}}
}{(  \delta_{{\mathrm{1}}}  \mGLsym{,}  \gamma  \mGLsym{,}  \delta_{{\mathrm{3}}}  \mGLsym{,}   (   \delta'  \boxast [ { \delta_{{\mathrm{4}}} }^{ \mGLmv{m} } ]   )   \mGLsym{,}  \delta_{{\mathrm{5}}}  )   \odot   ( \Delta_{{\mathrm{1}}}  \mGLsym{,}  \Delta_{{\mathrm{2}}}  \mGLsym{,}  \Delta_{{\mathrm{3}}}  \mGLsym{,}  \Delta_{{\mathrm{4}}}  \mGLsym{,}  \Delta_{{\mathrm{5}}} )   \mGLsym{;}  \Gamma  \vdash_{\mathsf{MS} }  \mGLnt{A}}
\]
Given the above we know:
\[
\begin{array}{lll}
(   \delta  \boxast [ {  (  \delta_{{\mathrm{2}}}  )  }^{ \mGLmv{n} } ]   )   \mGLsym{=}  \gamma\\
\mathsf{CutRank} \, \mGLsym{(}  \Pi'  \mGLsym{)} \, \leq \,  \mathsf{Rank}  (  \mGLnt{X}  )\\
\mathsf{CutRank} \, \mGLsym{(}  \Pi_{{\mathrm{2}}}  \mGLsym{)} \, \mGLsym{=} \, \mGLkw{Max} \, \mGLsym{(}   \mathsf{CutRank} \, \mGLsym{(}  \Pi_{{\mathrm{3}}}  \mGLsym{)}  \mGLsym{,}  \mathsf{CutRank} \, \mGLsym{(}  \Pi_{{\mathrm{4}}}  \mGLsym{)}  \mGLsym{,}   \mathsf{Rank}  (  \mGLnt{Y}  )   + 1   \mGLsym{)} \, \leq \,  \mathsf{Rank}  (  \mGLnt{X}  )\\
\end{array}
\]
This implies:
\[
\begin{array}{lll}
\mathsf{CutRank} \, \mGLsym{(}  \Pi_{{\mathrm{3}}}  \mGLsym{)} \, \leq \,  \mathsf{Rank}  (  \mGLnt{X}  )\\
\mathsf{Rank}  (  \mGLnt{Y}  )   + 1  \, \leq \,  \mathsf{Rank}  (  \mGLnt{X}  )
\end{array}
\]
Thus, we obtain our result:
\[
\mathsf{CutRank} \, \mGLsym{(}  \Pi  \mGLsym{)} \, \mGLsym{=} \, \mGLkw{Max} \, \mGLsym{(}   \mathsf{CutRank} \, \mGLsym{(}  \Pi_{{\mathrm{3}}}  \mGLsym{)}  \mGLsym{,}  \mathsf{CutRank} \, \mGLsym{(}  \Pi'  \mGLsym{)}  \mGLsym{,}   \mathsf{Rank}  (  \mGLnt{Y}  )   + 1   \mGLsym{)} \, \leq \,  \mathsf{Rank}  (  \mGLnt{X}  )
\]

\item \textbf{cut vs. right-side cut (right case):} Suppose we have:
\[
      \inferrule* [flushleft,right=,left=$\Pi_{{\mathrm{1}}} :$] {
        \pi_1
      }{\delta_{{\mathrm{2}}}  \odot  \Delta_{{\mathrm{2}}}  \vdash_{\mathsf{GS} }  \mGLnt{X}}
\]

\[
      \inferrule* [flushleft,right=$\mGLdruleMSTXXCutName{}$,left=$\Pi_{{\mathrm{2}}} :$] {
        \inferrule* [flushleft,right=,left=$\Pi_{{\mathrm{3}}} :$] {
          \pi_3
        }{\delta_{{\mathrm{3}}}  \odot  \Delta_{{\mathrm{3}}}  \vdash_{\mathsf{GS} }  \mGLnt{Y}}\\
        \inferrule* [flushleft,right=,left=$\Pi_{{\mathrm{4}}} :$] {
          \pi_4
        }{(  \delta_{{\mathrm{1}}}  \mGLsym{,}  \mGLnt{r_{{\mathrm{1}}}}  \mGLsym{,}  \delta_{{\mathrm{4}}}  \mGLsym{,}  \mGLnt{r_{{\mathrm{2}}}}  \mGLsym{,}  \delta_{{\mathrm{5}}}  )   \odot   ( \Delta_{{\mathrm{1}}}  \mGLsym{,}   \mGLnt{Y} ^{ \mGLmv{n} }   \mGLsym{,}  \Delta_{{\mathrm{4}}}  \mGLsym{,}   \mGLnt{X} ^{ \mGLmv{n} }   \mGLsym{,}  \Delta_{{\mathrm{5}}} )   \mGLsym{;}  \Gamma  \vdash_{\mathsf{MS} }  \mGLnt{A}}
      }{(  \delta_{{\mathrm{4}}}  \mGLsym{,}  \mGLnt{r_{{\mathrm{1}}}}  \mGLsym{,}  \delta_{{\mathrm{1}}}  \mGLsym{,}  \delta_{{\mathrm{3}}}  \mGLsym{,}  \delta_{{\mathrm{5}}}  )   \odot   ( \Delta_{{\mathrm{4}}}  \mGLsym{,}   \mGLnt{X} ^{ \mGLmv{n} }   \mGLsym{,}  \Delta_{{\mathrm{1}}}  \mGLsym{,}  \Delta_{{\mathrm{3}}}  \mGLsym{,}  \Delta_{{\mathrm{5}}} )   \mGLsym{;}  \Gamma  \vdash_{\mathsf{MS} }  \mGLnt{A}}
\]
Similar to the previous case.
  \end{enumerate}

 \item \textbf{Axiom Cases}
  \begin{enumerate}
   \item \textbf{Axiom on the left:}
   \[
    \begin{array}{lll}
      \inferrule* [flushleft,right=$\mGLdruleGSTXXidName{}$,left=$\Pi_{{\mathrm{1}}} :$] {
        \,
      }{1  \odot  \mGLnt{X}  \vdash_{\mathsf{GS} }  \mGLnt{X}}
      & \quad &
      \inferrule* [flushleft,right=,left=$\Pi_{{\mathrm{2}}} :$] {
        \pi_2
      }{(  \delta_{{\mathrm{1}}}  \mGLsym{,}  \mGLnt{r}  \mGLsym{,}  \delta_{{\mathrm{3}}}  )   \odot   ( \Delta_{{\mathrm{1}}}  \mGLsym{,}  \mGLnt{X}  \mGLsym{,}  \Delta_{{\mathrm{3}}} )   \mGLsym{;}  \Gamma  \vdash_{\mathsf{MS} }  \mGLnt{A}}
    \end{array}
    \]
We know:
 $\Pi_{{\mathrm{2}}}$ is a proof of $(  \delta_{{\mathrm{1}}}  \mGLsym{,}  \mGLnt{r}  \mGLsym{,}  \delta_{{\mathrm{3}}}  )   \odot   ( \Delta_{{\mathrm{1}}}  \mGLsym{,}  \mGLnt{X}  \mGLsym{,}  \Delta_{{\mathrm{3}}} )   \mGLsym{;}  \Gamma  \vdash_{\mathsf{MS} }  \mGLnt{A}$
 and that $\mGLnt{r}  *  1  \leq  \mGLnt{r}$
Thus, we construct the following proof $\Pi \, \mGLsym{=} \, \Pi_{{\mathrm{2}}}$.
  \end{enumerate}

 \item \textbf{Principle Formula vs Principle Formula}
  \begin{enumerate}
   \item \textbf{Lin:}
   \[
    \inferrule* [flushleft,right=$\mGLdruleGSTXXLinRName{}$, left=$\Pi_{{\mathrm{1}}} :$] {
      \inferrule* [flushleft,right=, left=$\Pi_{{\mathrm{3}}} :$] {
        \pi_3
      }{\delta_{{\mathrm{2}}}  \odot  \Delta_{{\mathrm{2}}}  \mGLsym{;}  \emptyset  \vdash_{\mathsf{MS} }  \mGLnt{A}}
    }{\delta_{{\mathrm{2}}}  \odot  \Delta_{{\mathrm{2}}}  \vdash_{\mathsf{GS} }  \mathsf{Lin} \, \mGLnt{A}}
    \]
    \[
      \inferrule* [flushleft,right=$\mGLdruleMSTXXLinLName{}$, left=$\Pi_{{\mathrm{2}}} :$] {
        \inferrule* [flushleft,right=, left=$\Pi_{{\mathrm{4}}} :$] {
          \pi_4
        }{\delta_{{\mathrm{1}}}  \odot  \Delta_{{\mathrm{1}}}  \mGLsym{;}   ( \mGLnt{A}  \mGLsym{,}  \Gamma_{{\mathrm{1}}} )   \vdash_{\mathsf{MS} }  \mGLnt{B}}
      }{(  \delta_{{\mathrm{1}}}  \mGLsym{,}  1  )   \odot   ( \Delta_{{\mathrm{1}}}  \mGLsym{,}  \mathsf{Lin} \, \mGLnt{A} )   \mGLsym{;}  \Gamma_{{\mathrm{1}}}  \vdash_{\mathsf{MS} }  \mGLnt{B}}
      \]
      We know the following:
      \[
        \begin{array}{lll}
          \mathsf{Depth}  (  \Pi_{{\mathrm{1}}}  )   +   \mathsf{Depth}  (  \Pi_{{\mathrm{2}}}  ) = \mGLsym{(}    \mathsf{Depth}  (  \Pi_{{\mathrm{3}}}  )   + 1   \mGLsym{)}  +  \mGLsym{(}    \mathsf{Depth}  (  \Pi_{{\mathrm{4}}}  )   + 1   \mGLsym{)}\\
          \mathsf{CutRank} \, \mGLsym{(}  \Pi_{{\mathrm{1}}}  \mGLsym{)} \, \mGLsym{=} \, \mathsf{CutRank} \, \mGLsym{(}  \Pi_{{\mathrm{3}}}  \mGLsym{)} \, \leq \,  \mathsf{Rank}  (  \mathsf{Lin} \, \mGLnt{A}  )\\
          \mathsf{CutRank} \, \mGLsym{(}  \Pi_{{\mathrm{2}}}  \mGLsym{)} \, \mGLsym{=} \, \mathsf{CutRank} \, \mGLsym{(}  \Pi_{{\mathrm{3}}}  \mGLsym{)} \, \leq \,  \mathsf{Rank}  (  \mathsf{Lin} \, \mGLnt{A}  )
        \end{array}
      \]
      These imply that:
      \[
        \begin{array}{lll}
          \mathsf{Depth}  (  \Pi_{{\mathrm{3}}}  )   +   \mathsf{Depth}  (  \Pi_{{\mathrm{4}}}  )   \, \mGLsym{<} \,  \mathsf{Depth}  (  \Pi_{{\mathrm{1}}}  )   +   \mathsf{Depth}  (  \Pi_{{\mathrm{2}}}  )\\
          \mathsf{CutRank} \, \mGLsym{(}  \Pi_{{\mathrm{3}}}  \mGLsym{)} \, \leq \,  \mathsf{Rank}  (  \mathsf{Lin} \, \mGLnt{A}  )
          \mathsf{CutRank} \, \mGLsym{(}  \Pi_{{\mathrm{4}}}  \mGLsym{)} \, \leq \,  \mathsf{Rank}  (  \mathsf{Lin} \, \mGLnt{A}  )
        \end{array}
      \]
      Thus, we apply the induction hypothesis of Lemma~\ref{lemma:cut_reduction_for_mgl}(3)
      to $\Pi_{{\mathrm{3}}}$ and $\Pi_{{\mathrm{4}}}$ to obtain a proof $\Pi'$ of the sequent
      $(  \delta_{{\mathrm{1}}}  \mGLsym{,}  \delta_{{\mathrm{2}}}  )   \odot   ( \Delta_{{\mathrm{1}}}  \mGLsym{,}  \Delta_{{\mathrm{2}}} )   \mGLsym{;}  \Gamma_{{\mathrm{1}}}  \vdash_{\mathsf{MS} }  \mGLnt{B}$ with $\mathsf{CutRank} \, \mGLsym{(}  \Pi'  \mGLsym{)} \, \leq \,  \mathsf{Rank}  (  \mGLsym{(}  \mGLnt{A}  \mGLsym{)}  )$.
      Since $(   1  \boxast [ { \delta_{{\mathrm{2}}} }^{ \mGLmv{n} } ]   )   \mGLsym{=}   (  1  *  \delta_{{\mathrm{2}}}  )   \mGLsym{=}  \delta_{{\mathrm{2}}}$, we construct $\Pi \, \mGLsym{=} \, \Pi'$.

      \item \textbf{Graded Tensor}
      \[
            \inferrule* [flushleft,right=$\mGLdruleGSTXXTenRName{}$,left=$\Pi_{{\mathrm{1}}} :$] {
              \inferrule* [flushleft,right=,left=$\Pi_{{\mathrm{3}}} :$] {
                \pi_3
              }{\delta_{{\mathrm{1}}}  \odot  \Delta_{{\mathrm{1}}}  \vdash_{\mathsf{GS} }  \mGLnt{X}}\\
              \inferrule* [flushleft,right=,left=$\Pi_{{\mathrm{4}}} :$] {
                \pi_4
              }{\delta_{{\mathrm{2}}}  \odot  \Delta_{{\mathrm{2}}}  \vdash_{\mathsf{GS} }  \mGLnt{Y}}
            }{(  \delta_{{\mathrm{1}}}  \mGLsym{,}  \delta_{{\mathrm{2}}}  )   \odot   ( \Delta_{{\mathrm{1}}}  \mGLsym{,}  \Delta_{{\mathrm{2}}} )   \vdash_{\mathsf{GS} }  \mGLnt{X}  \boxtimes  \mGLnt{Y}}
            \]
            \[
            \inferrule* [flushleft,right=$\mGLdruleMSTXXGTenLName{}$,left=$\Pi_{{\mathrm{2}}} :$] {
              \inferrule* [flushleft,right=,left=$\Pi_{{\mathrm{5}}} :$] {
                \pi_5
              }{(  \delta_{{\mathrm{1}}}  \mGLsym{,}  \mGLnt{r}  \mGLsym{,}  \mGLnt{r}  \mGLsym{,}  \delta_{{\mathrm{4}}}  )   \odot   ( \Delta_{{\mathrm{1}}}  \mGLsym{,}  \mGLnt{X}  \mGLsym{,}  \mGLnt{Y}  \mGLsym{,}  \Delta_{{\mathrm{4}}} )   \mGLsym{;}  \Gamma  \vdash_{\mathsf{MS} }  \mGLnt{A}}
          }{(  \delta_{{\mathrm{1}}}  \mGLsym{,}  \mGLnt{r}  \mGLsym{,}  \delta_{{\mathrm{4}}}  )   \odot   ( \Delta_{{\mathrm{1}}}  \mGLsym{,}  \mGLnt{X}  \boxtimes  \mGLnt{Y}  \mGLsym{,}  \Delta_{{\mathrm{4}}} )   \mGLsym{;}  \Gamma  \vdash_{\mathsf{MS} }  \mGLnt{A}}
          \]
    We know:
    \[
    \begin{array}{lll}
      \mathsf{CutRank} \, \mGLsym{(}  \Pi_{{\mathrm{1}}}  \mGLsym{)} \, \mGLsym{=} \, \mGLkw{Max} \, \mGLsym{(}  \mathsf{CutRank} \, \mGLsym{(}  \Pi_{{\mathrm{3}}}  \mGLsym{)}  \mGLsym{,}  \mathsf{CutRank} \, \mGLsym{(}  \Pi_{{\mathrm{4}}}  \mGLsym{)}  \mGLsym{)} \, \leq \,  \mathsf{Rank}  (  \mGLnt{X}  \boxtimes  \mGLnt{Y}  )\\
      \mathsf{CutRank} \, \mGLsym{(}  \Pi_{{\mathrm{3}}}  \mGLsym{)} \, \leq \,  \mathsf{Rank}  (  \mGLnt{X}  \boxtimes  \mGLnt{Y}  )\\
      \mathsf{CutRank} \, \mGLsym{(}  \Pi_{{\mathrm{4}}}  \mGLsym{)} \, \leq \,  \mathsf{Rank}  (  \mGLnt{X}  \boxtimes  \mGLnt{Y}  )\\
      \mathsf{CutRank} \, \mGLsym{(}  \Pi_{{\mathrm{2}}}  \mGLsym{)} \, \mGLsym{=} \, \mathsf{CutRank} \, \mGLsym{(}  \Pi_{{\mathrm{5}}}  \mGLsym{)} \, \leq \,  \mathsf{Rank}  (  \mGLnt{X}  \boxtimes  \mGLnt{Y}  )\\
      \mathsf{Rank}  (  \mGLnt{X}  )  \, \mGLsym{<} \,  \mathsf{Rank}  (  \mGLnt{X}  \boxtimes  \mGLnt{Y}  )\\
      \mathsf{Rank}  (  \mGLnt{Y}  )  \, \mGLsym{<} \,  \mathsf{Rank}  (  \mGLnt{X}  \boxtimes  \mGLnt{Y}  )\\
      \mGLnt{r}  *   (  \delta_{{\mathrm{2}}}  \mGLsym{,}  \delta_{{\mathrm{3}}}  )   \mGLsym{=}   (  \mGLnt{r}  *  \delta_{{\mathrm{2}}}  \mGLsym{,}  \mGLnt{r}  *  \delta_{{\mathrm{3}}}  )
    \end{array}
    \]
    Instead of applying the induction hypothesis,
    we can directly build the proof $\Pi$:
    \begin{gather*}
      \inferrule* [flushleft,right=$\mGLdruleMSTXXGCutName{}$,left=$\Pi :$] {
        \inferrule* [flushleft,left=$\Pi_{{\mathrm{4}}} : $] {
            \pi_4
          }{\delta_{{\mathrm{3}}}  \odot  \Delta_{{\mathrm{3}}}  \vdash_{\mathsf{GS} }  \mGLnt{Y}}\\
          \inferrule* [flushleft,right=$\mGLdruleMSTXXGCutName{}$] {
            \inferrule* [flushleft,left=$\Pi_{{\mathrm{3}}} : $] {
            \pi_3
          }{\delta_{{\mathrm{1}}}  \odot  \Delta_{{\mathrm{1}}}  \vdash_{\mathsf{GS} }  \mGLnt{X}}\\
          \inferrule* [flushleft,right=,left=$\Pi_{{\mathrm{5}}} :$] {
            \pi_5
          }{(  \delta_{{\mathrm{1}}}  \mGLsym{,}  \mGLnt{r}  \mGLsym{,}  \mGLnt{r}  \mGLsym{,}  \delta_{{\mathrm{4}}}  )   \odot   ( \Delta_{{\mathrm{1}}}  \mGLsym{,}  \mGLnt{X}  \mGLsym{,}  \mGLnt{Y}  \mGLsym{,}  \Delta_{{\mathrm{4}}} )   \mGLsym{;}  \Gamma  \vdash_{\mathsf{MS} }  \mGLnt{A}}
          }{(  \delta_{{\mathrm{1}}}  \mGLsym{,}  \mGLnt{r}  *  \delta_{{\mathrm{2}}}  \mGLsym{,}  \mGLnt{r}  \mGLsym{,}  \delta_{{\mathrm{4}}}  )   \odot   ( \Delta_{{\mathrm{1}}}  \mGLsym{,}  \Delta_{{\mathrm{2}}}  \mGLsym{,}  \mGLnt{Y}  \mGLsym{,}  \Delta_{{\mathrm{4}}} )   \mGLsym{;}  \Gamma  \vdash_{\mathsf{MS} }  \mGLnt{A}}
      }{(  \delta_{{\mathrm{1}}}  \mGLsym{,}  \mGLnt{r}  *  \delta_{{\mathrm{2}}}  \mGLsym{,}  \mGLnt{r}  *  \delta_{{\mathrm{3}}}  \mGLsym{,}  \delta_{{\mathrm{4}}}  )   \odot   ( \Delta_{{\mathrm{1}}}  \mGLsym{,}  \Delta_{{\mathrm{2}}}  \mGLsym{,}  \Delta_{{\mathrm{3}}}  \mGLsym{,}  \Delta_{{\mathrm{4}}} )   \mGLsym{;}  \Gamma  \vdash_{\mathsf{MS} }  \mGLnt{A}}
      \end{gather*}
      So $\mathsf{CutRank} \, \mGLsym{(}  \Pi  \mGLsym{)} \, \mGLsym{=} \, \mGLkw{Max} \, \mGLsym{(}    \mathsf{CutRank} \, \mGLsym{(}  \Pi_{{\mathrm{3}}}  \mGLsym{)}  \mGLsym{,}  \mathsf{CutRank} \, \mGLsym{(}  \Pi_{{\mathrm{4}}}  \mGLsym{)}  \mGLsym{,}  \mathsf{CutRank} \, \mGLsym{(}  \Pi_{{\mathrm{5}}}  \mGLsym{)}  \mGLsym{,}   \mathsf{Rank}  (  \mGLnt{X}  )   + 1   \mGLsym{,}   \mathsf{Rank}  (  \mGLnt{Y}  )   + 1   \mGLsym{)} \, \leq \,  \mathsf{Rank}  (  \mGLnt{X}  \boxtimes  \mGLnt{Y}  )$

  \end{enumerate}

 \item \textbf{Secondary Conclusion}
  \begin{enumerate}
   \item \textbf{Left Introduction of Tensor:}
   \[
          \inferrule* [flushleft,right=$\mGLdruleGSTXXTenLName{}$,left=$\Pi_{{\mathrm{1}}} :$] {
            \inferrule* [flushleft,right=,left=$\Pi_{{\mathrm{3}}} :$] {
            \pi_3
          }{(  \delta_{{\mathrm{2}}}  \mGLsym{,}  \mGLnt{r}  \mGLsym{,}  \mGLnt{r}  \mGLsym{,}  \delta_{{\mathrm{3}}}  )   \odot   ( \Delta_{{\mathrm{2}}}  \mGLsym{,}  \mGLnt{X}  \mGLsym{,}  \mGLnt{Y}  \mGLsym{,}  \Delta_{{\mathrm{3}}} )   \vdash_{\mathsf{GS} }  \mGLnt{Z}}
          }{(  \delta_{{\mathrm{2}}}  \mGLsym{,}  \mGLnt{r}  \mGLsym{,}  \delta_{{\mathrm{3}}}  )   \odot   ( \Delta_{{\mathrm{2}}}  \mGLsym{,}  \mGLnt{X}  \boxtimes  \mGLnt{Y}  \mGLsym{,}  \Delta_{{\mathrm{3}}} )   \vdash_{\mathsf{GS} }  \mGLnt{Z}}
    \]
      \[
        \inferrule* [flushleft,right=,left=$\Pi_{{\mathrm{2}}} :$] {
          \pi_2
        }{(  \delta_{{\mathrm{1}}}  \mGLsym{,}  \delta  \mGLsym{,}  \delta_{{\mathrm{4}}}  )   \odot   ( \Delta_{{\mathrm{1}}}  \mGLsym{,}   \mGLnt{Z} ^{ \mGLmv{n} }   \mGLsym{,}  \Delta_{{\mathrm{4}}} )   \mGLsym{;}  \Gamma  \vdash_{\mathsf{MS} }  \mGLnt{A} }
        \]
      We know:
      \[
      \begin{array}{lll}
        \mathsf{Depth}  (  \Pi_{{\mathrm{3}}}  )   +   \mathsf{Depth}  (  \Pi_{{\mathrm{2}}}  )   \, \mGLsym{<} \,  \mathsf{Depth}  (  \Pi_{{\mathrm{1}}}  )   +   \mathsf{Depth}  (  \Pi_{{\mathrm{2}}}  )\\
        \mathsf{CutRank} \, \mGLsym{(}  \Pi_{{\mathrm{3}}}  \mGLsym{)} \, \leq \, \mathsf{CutRank} \, \mGLsym{(}  \Pi_{{\mathrm{1}}}  \mGLsym{)} \, \leq \,  \mathsf{Rank}  (  \mGLnt{Z}  )
      \end{array}
      \]

      and so applying the induction hypothesis
      to $\Pi_{{\mathrm{3}}}$ and $\Pi_{{\mathrm{2}}}$
      implies that there is a proof $\Pi'$ of
      $(  \delta_{{\mathrm{1}}}  \mGLsym{,}   (  \delta'_{{\mathrm{2}}}  \mGLsym{,}  \mGLnt{r'}  \mGLsym{,}  \mGLnt{r'}  \mGLsym{,}  \delta'_{{\mathrm{3}}}  )   \mGLsym{,}  \delta_{{\mathrm{4}}}  )   \odot   ( \Delta_{{\mathrm{1}}}  \mGLsym{,}   ( \Delta_{{\mathrm{2}}}  \mGLsym{,}  \mGLnt{X}  \mGLsym{,}  \mGLnt{Y}  \mGLsym{,}  \Delta_{{\mathrm{3}}} )   \mGLsym{,}  \Delta_{{\mathrm{4}}} )   \mGLsym{;}  \Gamma  \vdash_{\mathsf{MS} }  \mGLnt{A}$ with
      $\mathsf{CutRank} \, \mGLsym{(}  \Pi'  \mGLsym{)} \, \leq \,  \mathsf{Rank}  (  \mGLnt{Z}  )$ and $\delta  \boxast [ {  (  \delta_{{\mathrm{2}}}  \mGLsym{,}  \mGLnt{r}  \mGLsym{,}  \mGLnt{r}  \mGLsym{,}  \delta_{{\mathrm{3}}}  )  }^{ \mGLmv{n} } ]   \mGLsym{=}   (  \delta'_{{\mathrm{2}}}  \mGLsym{,}  \mGLnt{r'}  \mGLsym{,}  \mGLnt{r'}  \mGLsym{,}  \delta'_{{\mathrm{3}}}  )$
      where $ | \delta_{{\mathrm{2}}} |=| \delta'_{{\mathrm{2}}} |$ and $| \delta_{{\mathrm{3}}} |=| \delta'_{{\mathrm{3}}} |$.
      Thus, we construct the following proof $\Pi$:
      \[
        \inferrule* [flushleft,right=$\mGLdruleMSTXXGTenLName{}$,left=$\Pi :$] {
          \inferrule* [flushleft,right=,left=$\Pi' :$] {
            \pi'
          }{(  \delta_{{\mathrm{1}}}  \mGLsym{,}   (  \delta'_{{\mathrm{2}}}  \mGLsym{,}  \mGLnt{r'}  \mGLsym{,}  \mGLnt{r'}  \mGLsym{,}  \delta'_{{\mathrm{3}}}  )   \mGLsym{,}  \delta_{{\mathrm{4}}}  )   \odot   ( \Delta_{{\mathrm{1}}}  \mGLsym{,}   ( \Delta_{{\mathrm{2}}}  \mGLsym{,}  \mGLnt{X}  \mGLsym{,}  \mGLnt{Y}  \mGLsym{,}  \Delta_{{\mathrm{3}}} )   \mGLsym{,}  \Delta_{{\mathrm{4}}} )   \mGLsym{;}  \Gamma  \vdash_{\mathsf{MS} }  \mGLnt{A}}
        }{(  \delta_{{\mathrm{1}}}  \mGLsym{,}   (  \delta'_{{\mathrm{2}}}  \mGLsym{,}  \mGLnt{r'}  \mGLsym{,}  \delta'_{{\mathrm{3}}}  )   \mGLsym{,}  \delta_{{\mathrm{4}}}  )   \odot   ( \Delta_{{\mathrm{1}}}  \mGLsym{,}   ( \Delta_{{\mathrm{2}}}  \mGLsym{,}  \mGLnt{X}  \boxtimes  \mGLnt{Y}  \mGLsym{,}  \Delta_{{\mathrm{3}}} )   \mGLsym{,}  \Delta_{{\mathrm{4}}} )   \mGLsym{;}  \Gamma  \vdash_{\mathsf{MS} }  \mGLnt{A}}
        \]
        Given the above, we know:
        \[
          \begin{array}{lll}
            \mathsf{CutRank} \, \mGLsym{(}  \Pi  \mGLsym{)} \, \mGLsym{=} \, \mathsf{CutRank} \, \mGLsym{(}  \Pi'  \mGLsym{)} \, \leq \,  \mathsf{Rank}  (  \mGLnt{Z}  )\\
            \delta  \boxast [ {  (  \delta_{{\mathrm{2}}}  \mGLsym{,}  \mGLnt{r}  \mGLsym{,}  \mGLnt{r}  \mGLsym{,}  \delta_{{\mathrm{3}}}  )  }^{ \mGLmv{n} } ]   \mGLsym{=}   (  \delta'_{{\mathrm{2}}}  \mGLsym{,}  \mGLnt{r'}  \mGLsym{,}  \mGLnt{r'}  \mGLsym{,}  \delta'_{{\mathrm{3}}}  )
          \end{array}
          \]
          Implies
          \[
          \begin{array}{lll}
            (      \delta  \boxast [ { \delta_{{\mathrm{2}}} }^{ \mGLmv{n} } ]   \mGLsym{,}  \delta  \boxast [ { \mGLnt{r} }^{ \mGLmv{n} } ]   \mGLsym{,}  \delta  \boxast [ { \mGLnt{r} }^{ \mGLmv{n} } ]   \mGLsym{,}  \delta  \boxast [ { \delta_{{\mathrm{3}}} }^{ \mGLmv{n} } ]   )   \mGLsym{=}   (  \delta'_{{\mathrm{2}}}  \mGLsym{,}  \mGLnt{r'}  \mGLsym{,}  \mGLnt{r'}  \mGLsym{,}  \delta'_{{\mathrm{3}}}  )\\
            (     \delta  \boxast [ { \delta_{{\mathrm{2}}} }^{ \mGLmv{n} } ]   \mGLsym{,}  \delta  \boxast [ { \mGLnt{r} }^{ \mGLmv{n} } ]   \mGLsym{,}  \delta  \boxast [ { \delta_{{\mathrm{3}}} }^{ \mGLmv{n} } ]   )   \mGLsym{=}   (  \delta'_{{\mathrm{2}}}  \mGLsym{,}  \mGLnt{r'}  \mGLsym{,}  \delta'_{{\mathrm{3}}}  )\\
            \delta  \boxast [ {  (  \delta_{{\mathrm{2}}}  \mGLsym{,}  \mGLnt{r}  \mGLsym{,}  \delta_{{\mathrm{3}}}  )  }^{ \mGLmv{n} } ]   \mGLsym{=}   (  \delta'_{{\mathrm{2}}}  \mGLsym{,}  \mGLnt{r'}  \mGLsym{,}  \delta'_{{\mathrm{3}}}  )
          \end{array}
          \]

   \item \textbf{Left Introduction of Tensor Unit:}
   \[
          \inferrule* [flushleft,right=$\mGLdruleGSTXXUnitLName{}$,left=$\Pi_{{\mathrm{1}}} :$] {
              \inferrule* [flushleft,right=,left=$\Pi_{{\mathrm{3}}} :$] {
                \pi_3
              }{(  \delta_{{\mathrm{2}}}  \mGLsym{,}  \delta_{{\mathrm{3}}}  )   \odot   ( \Delta_{{\mathrm{2}}}  \mGLsym{,}  \Delta_{{\mathrm{3}}} )   \vdash_{\mathsf{GS} }  \mGLnt{Y}}
          }{(  \delta_{{\mathrm{2}}}  \mGLsym{,}  \mGLnt{r}  \mGLsym{,}  \delta_{{\mathrm{3}}}  )   \odot   ( \Delta_{{\mathrm{2}}}  \mGLsym{,}  \mathsf{J}  \mGLsym{,}  \Delta_{{\mathrm{3}}} )   \vdash_{\mathsf{GS} }  \mGLnt{X}}
    \]
    \[
      \inferrule* [flushleft,right=,left=$\Pi_{{\mathrm{2}}} :$] {
        \pi_2
      }{(  \delta_{{\mathrm{1}}}  \mGLsym{,}  \delta  \mGLsym{,}  \delta_{{\mathrm{4}}}  )   \odot   ( \Delta_{{\mathrm{1}}}  \mGLsym{,}   \mGLnt{X} ^{ \mGLmv{n} }   \mGLsym{,}  \Delta_{{\mathrm{4}}} )   \mGLsym{;}  \Gamma  \vdash_{\mathsf{MS} }  \mGLnt{A}}
\]
      We know:
      \[
      \begin{array}{lll}
        \mathsf{Depth}  (  \Pi_{{\mathrm{2}}}  )   +   \mathsf{Depth}  (  \Pi_{{\mathrm{3}}}  )   \, \mGLsym{<} \,  \mathsf{Depth}  (  \Pi_{{\mathrm{1}}}  )   +   \mathsf{Depth}  (  \Pi_{{\mathrm{2}}}  )\\
        \mathsf{CutRank} \, \mGLsym{(}  \Pi_{{\mathrm{3}}}  \mGLsym{)} \, \leq \, \mathsf{CutRank} \, \mGLsym{(}  \Pi_{{\mathrm{1}}}  \mGLsym{)} \, \leq \,  \mathsf{Rank}  (  \mGLnt{X}  )
      \end{array}
      \]

      and so applying the induction hypothesis
      to $\Pi_{{\mathrm{2}}}$ and $\Pi_{{\mathrm{3}}}$
      implies that there is a proof $\Pi'$ of
      $(  \delta_{{\mathrm{1}}}  \mGLsym{,}   (  \delta'_{{\mathrm{2}}}  \mGLsym{,}  \delta'_{{\mathrm{3}}}  )   \mGLsym{,}  \delta_{{\mathrm{4}}}  )   \odot   ( \Delta_{{\mathrm{1}}}  \mGLsym{,}  \Delta_{{\mathrm{2}}}  \mGLsym{,}  \Delta_{{\mathrm{3}}}  \mGLsym{,}  \Delta_{{\mathrm{4}}} )   \mGLsym{;}  \Gamma  \vdash_{\mathsf{MS} }  \mGLnt{A}$ with
      $\mathsf{CutRank} \, \mGLsym{(}  \Pi'  \mGLsym{)} \, \leq \,  \mathsf{Rank}  (  \mGLnt{X}  )$ and $\delta  \boxast [ {  (  \delta_{{\mathrm{2}}}  \mGLsym{,}  \delta_{{\mathrm{3}}}  )  }^{ \mGLmv{n} } ]   \mGLsym{=}   (  \delta'_{{\mathrm{2}}}  \mGLsym{,}  \delta'_{{\mathrm{3}}}  )$
      where $ | \delta_{{\mathrm{2}}} |=| \delta'_{{\mathrm{2}}} |$ and $| \delta_{{\mathrm{3}}} |=| \delta'_{{\mathrm{3}}} |$.
      Thus, we construct the following proof $\Pi$:

      \[
        \inferrule* [flushleft,right=$\mGLdruleMSTXXGUnitLName{}$,left=$\Pi :$] {
          \inferrule* [flushleft,right=,left=$\Pi' :$] {
            \pi'
          }{(  \delta_{{\mathrm{1}}}  \mGLsym{,}  \delta'_{{\mathrm{2}}}  \mGLsym{,}  \delta'_{{\mathrm{3}}}  \mGLsym{,}  \delta_{{\mathrm{4}}}  )   \odot   ( \Delta_{{\mathrm{1}}}  \mGLsym{,}  \Delta_{{\mathrm{2}}}  \mGLsym{,}  \Delta_{{\mathrm{3}}}  \mGLsym{,}  \Delta_{{\mathrm{4}}} )   \mGLsym{;}  \Gamma  \vdash_{\mathsf{MS} }  \mGLnt{A}}
        }{(  \delta_{{\mathrm{1}}}  \mGLsym{,}  \delta'_{{\mathrm{2}}}  \mGLsym{,}   (   \delta  \boxast [ { \mGLnt{r} }^{ \mGLmv{n} } ]   )   \mGLsym{,}  \delta'_{{\mathrm{3}}}  \mGLsym{,}  \delta_{{\mathrm{4}}}  )   \odot   ( \Delta_{{\mathrm{1}}}  \mGLsym{,}  \Delta_{{\mathrm{2}}}  \mGLsym{,}  \mathsf{J}  \mGLsym{,}  \Delta_{{\mathrm{3}}}  \mGLsym{,}  \Delta_{{\mathrm{4}}} )   \mGLsym{;}  \Gamma  \vdash_{\mathsf{MS} }  \mGLnt{A}}
        \]
        Given the above, we know:
        \[
          \begin{array}{lll}
            \mathsf{CutRank} \, \mGLsym{(}  \Pi  \mGLsym{)} \, \mGLsym{=} \, \mathsf{CutRank} \, \mGLsym{(}  \Pi'  \mGLsym{)} \, \leq \,  \mathsf{Rank}  (  \mGLnt{X}  )\\
            \delta  \boxast [ {  (  \delta_{{\mathrm{2}}}  \mGLsym{,}  \delta_{{\mathrm{3}}}  )  }^{ \mGLmv{n} } ]   \mGLsym{=}   (  \delta'_{{\mathrm{2}}}  \mGLsym{,}  \delta'_{{\mathrm{3}}}  )\\
            (    \delta  \boxast [ { \delta_{{\mathrm{2}}} }^{ \mGLmv{n} } ]   \mGLsym{,}  \delta  \boxast [ { \delta_{{\mathrm{3}}} }^{ \mGLmv{n} } ]   )   \mGLsym{=}  \delta'_{{\mathrm{2}}}  \mGLsym{,}  \delta'_{{\mathrm{3}}}\\
            \delta  \boxast [ { \delta_{{\mathrm{2}}} }^{ \mGLmv{n} } ]   \mGLsym{=}  \delta'_{{\mathrm{2}}}\\
            \delta  \boxast [ { \delta_{{\mathrm{3}}} }^{ \mGLmv{n} } ]   \mGLsym{=}  \delta'_{{\mathrm{3}}}\\
            \delta  \boxast [ { \mGLnt{r} }^{ \mGLmv{n} } ]   \mGLsym{=}   \delta  \boxast [ { \mGLnt{r} }^{ \mGLmv{n} } ]\\
            (     \delta  \boxast [ { \delta_{{\mathrm{2}}} }^{ \mGLmv{n} } ]   \mGLsym{,}  \delta  \boxast [ { \mGLnt{r} }^{ \mGLmv{n} } ]   \mGLsym{,}  \delta  \boxast [ { \delta_{{\mathrm{3}}} }^{ \mGLmv{n} } ]   )   \mGLsym{=}   (  \delta'_{{\mathrm{2}}}  \mGLsym{,}   \delta  \boxast [ { \mGLnt{r} }^{ \mGLmv{n} } ]   \mGLsym{,}  \delta'_{{\mathrm{3}}}  )\\
            \delta  \boxast [ {  (  \delta_{{\mathrm{2}}}  \mGLsym{,}  \mGLnt{r}  \mGLsym{,}  \delta_{{\mathrm{3}}}  )  }^{ \mGLmv{n} } ]   \mGLsym{=}   (  \delta'_{{\mathrm{2}}}  \mGLsym{,}   \delta  \boxast [ { \mGLnt{r} }^{ \mGLmv{n} } ]   \mGLsym{,}  \delta'_{{\mathrm{3}}}  )
          \end{array}
          \]
          \item \textbf{Weakening:}
          \[
        \inferrule* [flushleft,right=$\mGLdruleGSTXXWeakName{}$,left=$\Pi_{{\mathrm{1}}} :$] {
          \inferrule* [flushleft,left=$\Pi_{{\mathrm{3}}} : $] {
            \pi_3
          }{(  \delta_{{\mathrm{2}}}  \mGLsym{,}  \delta_{{\mathrm{3}}}  )   \odot   ( \Delta_{{\mathrm{1}}}  \mGLsym{,}  \Delta_{{\mathrm{2}}} )   \vdash_{\mathsf{GS} }  \mGLnt{X}}
        }{(  \delta_{{\mathrm{2}}}  \mGLsym{,}  \mathsf{0}  \mGLsym{,}  \delta_{{\mathrm{3}}}  )   \odot   ( \Delta_{{\mathrm{1}}}  \mGLsym{,}  \mGLnt{Y}  \mGLsym{,}  \Delta_{{\mathrm{2}}} )   \vdash_{\mathsf{GS} }  \mGLnt{X}}
        \]
        \[
          \inferrule* [flushleft,right=,left=$\Pi_{{\mathrm{2}}} :$] {
            \pi_2
          }{(  \delta_{{\mathrm{1}}}  \mGLsym{,}  \delta  \mGLsym{,}  \delta_{{\mathrm{4}}}  )   \odot   ( \Delta_{{\mathrm{1}}}  \mGLsym{,}   \mGLnt{X} ^{ \mGLmv{n} }   \mGLsym{,}  \Delta_{{\mathrm{4}}} )   \mGLsym{;}  \Gamma  \vdash_{\mathsf{MS} }  \mGLnt{A}}
    \]
    We know:
      \[
      \begin{array}{lll}
        \mathsf{Depth}  (  \Pi_{{\mathrm{3}}}  )   +   \mathsf{Depth}  (  \Pi_{{\mathrm{2}}}  )   \, \mGLsym{<} \,  \mathsf{Depth}  (  \Pi_{{\mathrm{1}}}  )   +   \mathsf{Depth}  (  \Pi_{{\mathrm{2}}}  )\\
        \mathsf{CutRank} \, \mGLsym{(}  \Pi_{{\mathrm{3}}}  \mGLsym{)} \, \leq \, \mathsf{CutRank} \, \mGLsym{(}  \Pi_{{\mathrm{1}}}  \mGLsym{)} \, \leq \,  \mathsf{Rank}  (  \mGLnt{X}  )
      \end{array}
      \]

      and so applying the induction hypothesis
      to $\Pi_{{\mathrm{3}}}$ and $\Pi_{{\mathrm{2}}}$
      implies that there is a proof $\Pi'$ of
      $\delta_{{\mathrm{1}}}  \mGLsym{,}   (  \delta'_{{\mathrm{2}}}  \mGLsym{,}  \delta'_{{\mathrm{3}}}  )   \mGLsym{,}  \delta_{{\mathrm{4}}}  \odot  \Delta_{{\mathrm{1}}}  \mGLsym{,}  \Delta_{{\mathrm{2}}}  \mGLsym{,}  \Delta_{{\mathrm{3}}}  \mGLsym{,}  \Delta_{{\mathrm{4}}}  \mGLsym{;}  \Gamma  \vdash_{\mathsf{MS} }  \mGLnt{A}$ with
      $\mathsf{CutRank} \, \mGLsym{(}  \Pi'  \mGLsym{)} \, \leq \,  \mathsf{Rank}  (  \mGLnt{X}  )$ and $\delta  \boxast [ {  (  \delta_{{\mathrm{2}}}  \mGLsym{,}  \delta_{{\mathrm{3}}}  )  }^{ \mGLmv{n} } ]   \mGLsym{=}   (  \delta'_{{\mathrm{2}}}  \mGLsym{,}  \delta'_{{\mathrm{3}}}  )$
      where $ | \delta_{{\mathrm{2}}} |=| \delta'_{{\mathrm{2}}} |$ and $| \delta_{{\mathrm{3}}} |=| \delta'_{{\mathrm{3}}} |$.
      Thus, we construct the following proof $\Pi$:

          \[
        \inferrule* [flushleft,right=$\mGLdruleMSTXXWeakName{}$,left=$\Pi :$] {
          \inferrule* [flushleft,left=$\Pi' : $] {
            \pi'
          }{\delta_{{\mathrm{1}}}  \mGLsym{,}  \delta'_{{\mathrm{2}}}  \mGLsym{,}  \delta'_{{\mathrm{3}}}  \mGLsym{,}  \delta_{{\mathrm{4}}}  \odot  \Delta_{{\mathrm{1}}}  \mGLsym{,}  \Delta_{{\mathrm{2}}}  \mGLsym{,}  \Delta_{{\mathrm{3}}}  \mGLsym{,}  \Delta_{{\mathrm{4}}}  \mGLsym{;}  \Gamma  \vdash_{\mathsf{MS} }  \mGLnt{A}}
        }{\delta_{{\mathrm{1}}}  \mGLsym{,}  \delta'_{{\mathrm{2}}}  \mGLsym{,}  \mathsf{0}  \mGLsym{,}  \delta'_{{\mathrm{3}}}  \mGLsym{,}  \delta_{{\mathrm{4}}}  \odot  \Delta_{{\mathrm{1}}}  \mGLsym{,}  \Delta_{{\mathrm{2}}}  \mGLsym{,}  \mGLnt{Y}  \mGLsym{,}  \Delta_{{\mathrm{3}}}  \mGLsym{,}  \Delta_{{\mathrm{4}}}  \mGLsym{;}  \Gamma  \vdash_{\mathsf{MS} }  \mGLnt{A}}
        \]
        Given the above, we know:
        \[
          \begin{array}{lll}
            \mathsf{CutRank} \, \mGLsym{(}  \Pi  \mGLsym{)} \, \mGLsym{=} \, \mathsf{CutRank} \, \mGLsym{(}  \Pi'  \mGLsym{)} \, \leq \,  \mathsf{Rank}  (  \mGLnt{X}  )\\
            \delta  \boxast [ {  (  \delta_{{\mathrm{2}}}  \mGLsym{,}  \delta_{{\mathrm{3}}}  )  }^{ \mGLmv{n} } ]   \mGLsym{=}   (  \delta'_{{\mathrm{2}}}  \mGLsym{,}  \delta'_{{\mathrm{3}}}  )\\
            (    \delta  \boxast [ { \delta_{{\mathrm{2}}} }^{ \mGLmv{n} } ]   \mGLsym{,}  \delta  \boxast [ { \delta_{{\mathrm{3}}} }^{ \mGLmv{n} } ]   )   \mGLsym{=}  \delta'_{{\mathrm{2}}}  \mGLsym{,}  \delta'_{{\mathrm{3}}}\\
            \delta  \boxast [ { \delta_{{\mathrm{2}}} }^{ \mGLmv{n} } ]   \mGLsym{=}  \delta'_{{\mathrm{2}}}\\
            \delta  \boxast [ { \delta_{{\mathrm{3}}} }^{ \mGLmv{n} } ]   \mGLsym{=}  \delta'_{{\mathrm{3}}}\\
            \delta  \boxast [ { \mathsf{0} }^{ \mGLmv{n} } ]   \mGLsym{=}  \mathsf{0}\\
            (     \delta  \boxast [ { \delta_{{\mathrm{2}}} }^{ \mGLmv{n} } ]   \mGLsym{,}  \delta  \boxast [ { \mathsf{0} }^{ \mGLmv{n} } ]   \mGLsym{,}  \delta  \boxast [ { \delta_{{\mathrm{3}}} }^{ \mGLmv{n} } ]   )   \mGLsym{=}   (  \delta'_{{\mathrm{2}}}  \mGLsym{,}  \mathsf{0}  \mGLsym{,}  \delta'_{{\mathrm{3}}}  )\\
            \delta  \boxast [ {  (  \delta_{{\mathrm{2}}}  \mGLsym{,}  \mathsf{0}  \mGLsym{,}  \delta_{{\mathrm{3}}}  )  }^{ \mGLmv{n} } ]   \mGLsym{=}   (  \delta'_{{\mathrm{2}}}  \mGLsym{,}  \mathsf{0}  \mGLsym{,}  \delta'_{{\mathrm{3}}}  )
          \end{array}
          \]
  \end{enumerate}

 \item \textbf{Secondary Hypothesis}
  \begin{enumerate}
   \item \textbf{Right introduction of tensor product (first case):}
   \[
    \inferrule* [flushleft,right=,left=$\Pi_{{\mathrm{1}}} :$] {
      \pi_1
    }{\delta_{{\mathrm{2}}}  \odot  \Delta_{{\mathrm{2}}}  \vdash_{\mathsf{GS} }  \mGLnt{X}}
\]

\[
  \inferrule* [flushleft,right=$\mGLdruleMSTXXTenRName{}$,left=$\Pi_{{\mathrm{2}}} :$] {
    \inferrule* [flushleft,left=$\Pi_{{\mathrm{3}}} : $] {
      \pi_3
    }{\delta_{{\mathrm{1}}}  \mGLsym{,}  \delta  \mGLsym{,}  \delta_{{\mathrm{3}}}  \odot  \Delta_{{\mathrm{1}}}  \mGLsym{,}   \mGLnt{X} ^{ \mGLmv{n} }   \mGLsym{,}  \Delta_{{\mathrm{3}}}  \mGLsym{;}  \Gamma_{{\mathrm{1}}}  \vdash_{\mathsf{MS} }  \mGLnt{A} }\\
    \inferrule* [flushleft,right=,left=$\Pi_{{\mathrm{4}}} :$] {
      \pi_4
    }{ \delta_{{\mathrm{4}}}  \odot  \Delta_{{\mathrm{4}}}  \mGLsym{;}  \Gamma_{{\mathrm{2}}}  \vdash_{\mathsf{MS} }  \mGLnt{B}}
  }{\delta_{{\mathrm{1}}}  \mGLsym{,}  \delta  \mGLsym{,}  \delta_{{\mathrm{3}}}  \mGLsym{,}  \delta_{{\mathrm{4}}}  \odot  \Delta_{{\mathrm{1}}}  \mGLsym{,}   \mGLnt{X} ^{ \mGLmv{n} }   \mGLsym{,}  \Delta_{{\mathrm{3}}}  \mGLsym{,}  \Delta_{{\mathrm{4}}}  \mGLsym{;}  \Gamma_{{\mathrm{1}}}  \mGLsym{,}  \Gamma_{{\mathrm{2}}}  \vdash_{\mathsf{MS} }  \mGLnt{A}  \otimes  \mGLnt{B}  }
  \]

We know:
\[
\begin{array}{lll}
  \mathsf{Depth}  (  \Pi_{{\mathrm{1}}}  )   +   \mathsf{Depth}  (  \Pi_{{\mathrm{3}}}  )   \, \mGLsym{<} \,  \mathsf{Depth}  (  \Pi_{{\mathrm{1}}}  )   +   \mathsf{Depth}  (  \Pi_{{\mathrm{2}}}  )\\
  \mathsf{CutRank} \, \mGLsym{(}  \Pi_{{\mathrm{3}}}  \mGLsym{)} \, \leq \, \mathsf{CutRank} \, \mGLsym{(}  \Pi_{{\mathrm{2}}}  \mGLsym{)} \, \leq \,  \mathsf{Rank}  (  \mGLnt{X}  )
\end{array}
\]

and so applying the induction hypothesis
to $\Pi_{{\mathrm{1}}}$ and $\Pi_{{\mathrm{3}}}$
implies that there is a proof $\Pi'$ of
$\delta_{{\mathrm{1}}}  \mGLsym{,}  \delta'_{{\mathrm{2}}}  \mGLsym{,}  \delta_{{\mathrm{3}}}  \odot  \Delta_{{\mathrm{1}}}  \mGLsym{,}  \Delta_{{\mathrm{2}}}  \mGLsym{,}  \Delta_{{\mathrm{3}}}  \mGLsym{;}  \Gamma_{{\mathrm{1}}}  \vdash_{\mathsf{MS} }  \mGLnt{A}$ with
$\mathsf{CutRank} \, \mGLsym{(}  \Pi'  \mGLsym{)} \, \leq \,  \mathsf{Rank}  (  \mGLnt{X}  )$ and $(   \delta  \boxast [ {  (  \delta_{{\mathrm{2}}}  )  }^{ \mGLmv{n} } ]   )   \mGLsym{=}  \delta'_{{\mathrm{2}}}$.
Thus, we construct the following proof $\Pi$:

  \[
  \inferrule* [flushleft,right=$\mGLdruleMSTXXTenRName{}$,left=$\Pi :$] {
    \inferrule* [flushleft,left=$\Pi' : $] {
      \pi'
    }{\delta_{{\mathrm{1}}}  \mGLsym{,}  \delta'_{{\mathrm{2}}}  \mGLsym{,}  \delta_{{\mathrm{3}}}  \odot  \Delta_{{\mathrm{1}}}  \mGLsym{,}  \Delta_{{\mathrm{2}}}  \mGLsym{,}  \Delta_{{\mathrm{3}}}  \mGLsym{;}  \Gamma_{{\mathrm{1}}}  \vdash_{\mathsf{MS} }  \mGLnt{A}}\\
    \inferrule* [flushleft,right=,left=$\Pi_{{\mathrm{4}}} :$] {
      \pi_4
    }{\delta_{{\mathrm{4}}}  \odot  \Delta_{{\mathrm{4}}}  \mGLsym{;}  \Gamma_{{\mathrm{2}}}  \vdash_{\mathsf{MS} }  \mGLnt{B}}
  }{\delta_{{\mathrm{1}}}  \mGLsym{,}  \delta'_{{\mathrm{2}}}  \mGLsym{,}  \delta_{{\mathrm{3}}}  \mGLsym{,}  \delta_{{\mathrm{4}}}  \odot  \Delta_{{\mathrm{1}}}  \mGLsym{,}  \Delta_{{\mathrm{2}}}  \mGLsym{,}  \Delta_{{\mathrm{3}}}  \mGLsym{,}  \Delta_{{\mathrm{4}}}  \mGLsym{;}  \Gamma_{{\mathrm{1}}}  \mGLsym{,}  \Gamma_{{\mathrm{2}}}  \vdash_{\mathsf{MS} }  \mGLnt{A}  \otimes  \mGLnt{B}}
  \]
  Given the above, we know:
  \[
    \begin{array}{lll}
      \mathsf{CutRank} \, \mGLsym{(}  \Pi  \mGLsym{)} \, \mGLsym{=} \, \mGLkw{Max} \, \mGLsym{(}  \mathsf{CutRank} \, \mGLsym{(}  \Pi'  \mGLsym{)}  \mGLsym{,}  \mathsf{CutRank} \, \mGLsym{(}  \Pi_{{\mathrm{4}}}  \mGLsym{)}  \mGLsym{)}\\
      \mathsf{CutRank} \, \mGLsym{(}  \Pi_{{\mathrm{4}}}  \mGLsym{)} \, \leq \,  \mathsf{Rank}  (  \mGLnt{X}  )\\
      \mathsf{CutRank} \, \mGLsym{(}  \Pi'  \mGLsym{)} \, \leq \,  \mathsf{Rank}  (  \mGLnt{X}  )\\
      \mathsf{CutRank} \, \mGLsym{(}  \Pi  \mGLsym{)} \, \leq \,  \mathsf{Rank}  (  \mGLnt{X}  )\\
      \delta  \boxast [ { \delta_{{\mathrm{2}}} }^{ \mGLmv{n} } ]   \mGLsym{=}  \delta'_{{\mathrm{2}}}\\
    \end{array}
    \]
   \item \textbf{Right introduction of tensor product (second case):}
   \[
    \inferrule* [flushleft,right=,left=$\Pi_{{\mathrm{1}}} :$] {
      \pi_1
    }{\delta_{{\mathrm{2}}}  \odot  \Delta_{{\mathrm{2}}}  \vdash_{\mathsf{GS} }  \mGLnt{X}}
\]

\[
  \inferrule* [flushleft,right=$\mGLdruleMSTXXTenRName{}$,left=$\Pi_{{\mathrm{2}}} :$] {
    \inferrule* [flushleft,left=$\Pi_{{\mathrm{3}}} : $] {
      \pi_3
    }{\delta_{{\mathrm{1}}}  \odot  \Delta_{{\mathrm{1}}}  \mGLsym{;}  \Gamma_{{\mathrm{1}}}  \vdash_{\mathsf{MS} }  \mGLnt{A} }\\
    \inferrule* [flushleft,right=,left=$\Pi_{{\mathrm{4}}} :$] {
      \pi_4
    }{ \delta_{{\mathrm{2}}}  \mGLsym{,}  \delta  \mGLsym{,}  \delta_{{\mathrm{4}}}  \odot  \Delta_{{\mathrm{2}}}  \mGLsym{,}   \mGLnt{X} ^{ \mGLmv{n} }   \mGLsym{,}  \Delta_{{\mathrm{4}}}  \mGLsym{;}  \Gamma_{{\mathrm{2}}}  \vdash_{\mathsf{MS} }  \mGLnt{B}}
  }{\delta_{{\mathrm{1}}}  \mGLsym{,}  \delta_{{\mathrm{2}}}  \mGLsym{,}  \delta  \mGLsym{,}  \delta_{{\mathrm{4}}}  \odot  \Delta_{{\mathrm{1}}}  \mGLsym{,}  \Delta_{{\mathrm{2}}}  \mGLsym{,}   \mGLnt{X} ^{ \mGLmv{n} }   \mGLsym{,}  \Delta_{{\mathrm{4}}}  \mGLsym{;}  \Gamma_{{\mathrm{1}}}  \mGLsym{,}  \Gamma_{{\mathrm{2}}}  \vdash_{\mathsf{MS} }  \mGLnt{A}  \otimes  \mGLnt{B}  }
  \]
  Similar to previous case.
   \item \textbf{Left introduction of linear tensor product:}
   \[
    \inferrule* [flushleft,right=,left=$\Pi_{{\mathrm{1}}} :$] {
      \pi_1
    }{\delta_{{\mathrm{2}}}  \odot  \Delta_{{\mathrm{2}}}  \vdash_{\mathsf{GS} }  \mGLnt{X}}
\]

\[
  \inferrule* [flushleft,right=$\mGLdruleMSTXXTenLName{}$,left=$\Pi_{{\mathrm{2}}} :$] {
    \inferrule* [flushleft,right=,left=$\Pi_{{\mathrm{3}}} :$] {
      \pi_3
    }{(  \delta_{{\mathrm{1}}}  \mGLsym{,}  \delta  \mGLsym{,}  \delta_{{\mathrm{3}}}  )   \odot   ( \Delta_{{\mathrm{1}}}  \mGLsym{,}   \mGLnt{X} ^{ \mGLmv{n} }   \mGLsym{,}  \Delta_{{\mathrm{3}}} )   \mGLsym{;}  \Gamma_{{\mathrm{1}}}  \mGLsym{,}  \mGLnt{A}  \mGLsym{,}  \mGLnt{B}  \mGLsym{,}  \Gamma_{{\mathrm{2}}}  \vdash_{\mathsf{MS} }  \mGLnt{C}}
    }{(  \delta_{{\mathrm{1}}}  \mGLsym{,}  \delta  \mGLsym{,}  \delta_{{\mathrm{3}}}  )   \odot   ( \Delta_{{\mathrm{1}}}  \mGLsym{,}   \mGLnt{X} ^{ \mGLmv{n} }   \mGLsym{,}  \Delta_{{\mathrm{3}}} )   \mGLsym{;}  \Gamma_{{\mathrm{1}}}  \mGLsym{,}  \mGLnt{A}  \otimes  \mGLnt{B}  \mGLsym{,}  \Gamma_{{\mathrm{2}}}  \vdash_{\mathsf{MS} }  \mGLnt{C}}
  \]

We know:
\[
\begin{array}{lll}
  \mathsf{Depth}  (  \Pi_{{\mathrm{1}}}  )   +   \mathsf{Depth}  (  \Pi_{{\mathrm{3}}}  )   \, \mGLsym{<} \,  \mathsf{Depth}  (  \Pi_{{\mathrm{1}}}  )   +   \mathsf{Depth}  (  \Pi_{{\mathrm{2}}}  )\\
  \mathsf{CutRank} \, \mGLsym{(}  \Pi_{{\mathrm{3}}}  \mGLsym{)} \, \leq \, \mathsf{CutRank} \, \mGLsym{(}  \Pi_{{\mathrm{2}}}  \mGLsym{)} \, \leq \,  \mathsf{Rank}  (  \mGLnt{X}  )
\end{array}
\]

and so applying the induction hypothesis
to $\Pi_{{\mathrm{1}}}$ and $\Pi_{{\mathrm{3}}}$
implies that there is a proof $\Pi'$ of
$(  \delta_{{\mathrm{1}}}  \mGLsym{,}  \delta'_{{\mathrm{2}}}  \mGLsym{,}  \delta_{{\mathrm{3}}}  )   \odot   ( \Delta_{{\mathrm{1}}}  \mGLsym{,}  \Delta_{{\mathrm{2}}}  \mGLsym{,}  \Delta_{{\mathrm{3}}} )   \mGLsym{;}  \Gamma_{{\mathrm{1}}}  \mGLsym{,}  \mGLnt{A}  \mGLsym{,}  \mGLnt{B}  \mGLsym{,}  \Gamma_{{\mathrm{2}}}  \vdash_{\mathsf{MS} }  \mGLnt{C}$ with
$\mathsf{CutRank} \, \mGLsym{(}  \Pi'  \mGLsym{)} \, \leq \,  \mathsf{Rank}  (  \mGLnt{X}  )$ and $\delta  \boxast [ { \delta_{{\mathrm{2}}} }^{ \mGLmv{n} } ]   \mGLsym{=}  \delta'_{{\mathrm{2}}}$.
Thus, we construct the following proof $\Pi$:

\[
  \inferrule* [flushleft,right=$\mGLdruleMSTXXTenLName{}$,left=$\Pi :$] {
    \inferrule* [flushleft,right=,left=$\Pi' :$] {
      \pi'
    }{(  \delta_{{\mathrm{1}}}  \mGLsym{,}  \delta'_{{\mathrm{2}}}  \mGLsym{,}  \delta_{{\mathrm{3}}}  )   \odot   ( \Delta_{{\mathrm{1}}}  \mGLsym{,}  \Delta_{{\mathrm{2}}}  \mGLsym{,}  \Delta_{{\mathrm{3}}} )   \mGLsym{;}  \Gamma_{{\mathrm{1}}}  \mGLsym{,}  \mGLnt{A}  \mGLsym{,}  \mGLnt{B}  \mGLsym{,}  \Gamma_{{\mathrm{2}}}  \vdash_{\mathsf{MS} }  \mGLnt{C}}
  }{(  \delta_{{\mathrm{1}}}  \mGLsym{,}  \delta'_{{\mathrm{2}}}  \mGLsym{,}  \delta_{{\mathrm{3}}}  )   \odot   ( \Delta_{{\mathrm{1}}}  \mGLsym{,}  \Delta_{{\mathrm{2}}}  \mGLsym{,}  \Delta_{{\mathrm{3}}} )   \mGLsym{;}  \Gamma_{{\mathrm{1}}}  \mGLsym{,}  \mGLnt{A}  \otimes  \mGLnt{B}  \mGLsym{,}  \Gamma_{{\mathrm{2}}}  \vdash_{\mathsf{MS} }  \mGLnt{C}}
  \]
  Given the above, we know:
  \[
    \begin{array}{lll}
      \mathsf{CutRank} \, \mGLsym{(}  \Pi  \mGLsym{)} \, \mGLsym{=} \, \mathsf{CutRank} \, \mGLsym{(}  \Pi'  \mGLsym{)} \, \leq \,  \mathsf{Rank}  (  \mGLnt{X}  )\\
      \delta  \boxast [ { \delta_{{\mathrm{2}}} }^{ \mGLmv{n} } ]   \mGLsym{=}  \delta'_{{\mathrm{2}}}\\
    \end{array}
    \]

   \item \textbf{Left introduction of the unit of linear tensor:}
   \[
          \inferrule* [flushleft,right=,left=$\Pi_{{\mathrm{1}}} :$] {
            \pi_1
          }{\delta_{{\mathrm{2}}}  \odot  \Delta_{{\mathrm{2}}}  \vdash_{\mathsf{GS} }  \mGLnt{X}}
    \]

      \[
        \inferrule* [flushleft,right=$\mGLdruleMSTXXUnitLName{}$,left=$\Pi_{{\mathrm{2}}} :$] {
          \inferrule* [flushleft,left=$\Pi_{{\mathrm{3}}} : $] {
            \pi_3
          }{(  \delta_{{\mathrm{1}}}  \mGLsym{,}  \delta  \mGLsym{,}  \delta_{{\mathrm{3}}}  )   \odot   ( \Delta_{{\mathrm{1}}}  \mGLsym{,}   \mGLnt{X} ^{ \mGLmv{n} }   \mGLsym{,}  \Delta_{{\mathrm{3}}} )   \mGLsym{;}  \Gamma_{{\mathrm{1}}}  \mGLsym{,}  \Gamma_{{\mathrm{2}}}  \vdash_{\mathsf{MS} }  \mGLnt{A}}\\
        }{(  \delta_{{\mathrm{1}}}  \mGLsym{,}  \delta  \mGLsym{,}  \delta_{{\mathrm{3}}}  )   \odot   ( \Delta_{{\mathrm{1}}}  \mGLsym{,}   \mGLnt{X} ^{ \mGLmv{n} }   \mGLsym{,}  \Delta_{{\mathrm{3}}} )   \mGLsym{;}  \Gamma_{{\mathrm{1}}}  \mGLsym{,}  \mathsf{I}  \mGLsym{,}  \Gamma_{{\mathrm{2}}}  \vdash_{\mathsf{MS} }  \mGLnt{A}}
        \]

      We know:
      \[
      \begin{array}{lll}
        \mathsf{Depth}  (  \Pi_{{\mathrm{1}}}  )   +   \mathsf{Depth}  (  \Pi_{{\mathrm{3}}}  )   \, \mGLsym{<} \,  \mathsf{Depth}  (  \Pi_{{\mathrm{1}}}  )   +   \mathsf{Depth}  (  \Pi_{{\mathrm{2}}}  )\\
        \mathsf{CutRank} \, \mGLsym{(}  \Pi_{{\mathrm{3}}}  \mGLsym{)} \, \leq \, \mathsf{CutRank} \, \mGLsym{(}  \Pi_{{\mathrm{2}}}  \mGLsym{)} \, \leq \,  \mathsf{Rank}  (  \mGLnt{X}  )
      \end{array}
      \]

      and so applying the induction hypothesis
      to $\Pi_{{\mathrm{1}}}$ and $\Pi_{{\mathrm{3}}}$
      implies that there is a proof $\Pi'$ of
      $(  \delta_{{\mathrm{1}}}  \mGLsym{,}  \delta'_{{\mathrm{2}}}  \mGLsym{,}  \delta_{{\mathrm{3}}}  )   \odot   ( \Delta_{{\mathrm{1}}}  \mGLsym{,}  \Delta_{{\mathrm{2}}}  \mGLsym{,}  \Delta_{{\mathrm{3}}} )   \mGLsym{;}  \Gamma_{{\mathrm{1}}}  \mGLsym{,}  \Gamma_{{\mathrm{2}}}  \vdash_{\mathsf{MS} }  \mGLnt{A}$ with
      $\mathsf{CutRank} \, \mGLsym{(}  \Pi'  \mGLsym{)} \, \leq \,  \mathsf{Rank}  (  \mGLnt{X}  )$ and $\delta  \boxast [ { \delta_{{\mathrm{2}}} }^{ \mGLmv{n} } ]   \mGLsym{=}  \delta'_{{\mathrm{2}}}$.
      Thus, we construct the following proof $\Pi$:
      \[
        \inferrule* [flushleft,right=$\mGLdruleMSTXXUnitLName{}$,left=$\Pi :$] {
          \inferrule* [flushleft,left=$\Pi' : $] {
            \pi'
          }{(  \delta_{{\mathrm{1}}}  \mGLsym{,}  \delta'_{{\mathrm{2}}}  \mGLsym{,}  \delta_{{\mathrm{3}}}  )   \odot   ( \Delta_{{\mathrm{1}}}  \mGLsym{,}  \Delta_{{\mathrm{2}}}  \mGLsym{,}  \Delta_{{\mathrm{3}}} )   \mGLsym{;}  \Gamma_{{\mathrm{1}}}  \mGLsym{,}  \Gamma_{{\mathrm{2}}}  \vdash_{\mathsf{MS} }  \mGLnt{A}}\\
        }{(  \delta_{{\mathrm{1}}}  \mGLsym{,}  \delta'_{{\mathrm{2}}}  \mGLsym{,}  \delta_{{\mathrm{3}}}  )   \odot   ( \Delta_{{\mathrm{1}}}  \mGLsym{,}  \Delta_{{\mathrm{2}}}  \mGLsym{,}  \Delta_{{\mathrm{3}}} )   \mGLsym{;}  \Gamma_{{\mathrm{1}}}  \mGLsym{,}  \mathsf{I}  \mGLsym{,}  \Gamma_{{\mathrm{2}}}  \vdash_{\mathsf{MS} }  \mGLnt{A}}
        \]
        Given the above, we know:
        \[
          \begin{array}{lll}
            \mathsf{CutRank} \, \mGLsym{(}  \Pi  \mGLsym{)} \, \mGLsym{=} \, \mathsf{CutRank} \, \mGLsym{(}  \Pi'  \mGLsym{)} \, \leq \,  \mathsf{Rank}  (  \mGLnt{X}  )\\
            \delta  \boxast [ { \delta_{{\mathrm{2}}} }^{ \mGLmv{n} } ]   \mGLsym{=}  \delta'_{{\mathrm{2}}}\\
          \end{array}
          \]
   \item \textbf{Left introduction of graded tensor product:}
   \[
    \inferrule* [flushleft,right=,left=$\Pi_{{\mathrm{1}}} :$] {
      \pi_1
    }{\delta_{{\mathrm{3}}}  \odot  \Delta_{{\mathrm{3}}}  \vdash_{\mathsf{GS} }  \mGLnt{X}}
\]

\[
  \inferrule* [flushleft,right=$\mGLdruleMSTXXGTenLName{}$,left=$\Pi_{{\mathrm{2}}} :$] {
    \inferrule* [flushleft,right=,left=$\Pi_{{\mathrm{3}}} :$] {
      \pi_3
    }{(  \delta_{{\mathrm{1}}}  \mGLsym{,}  \mGLnt{r}  \mGLsym{,}  \mGLnt{r}  \mGLsym{,}  \delta_{{\mathrm{2}}}  \mGLsym{,}  \delta  \mGLsym{,}  \delta_{{\mathrm{4}}}  )   \odot   ( \Delta_{{\mathrm{1}}}  \mGLsym{,}  \mGLnt{Y}  \mGLsym{,}  \mGLnt{Z}  \mGLsym{,}  \Delta_{{\mathrm{2}}}  \mGLsym{,}   \mGLnt{X} ^{ \mGLmv{n} }   \mGLsym{,}  \Delta_{{\mathrm{4}}} )   \mGLsym{;}  \Gamma  \vdash_{\mathsf{MS} }  \mGLnt{A}}
    }{(  \delta_{{\mathrm{1}}}  \mGLsym{,}  \mGLnt{r}  \mGLsym{,}  \delta_{{\mathrm{2}}}  \mGLsym{,}  \delta  \mGLsym{,}  \delta_{{\mathrm{4}}}  )   \odot   ( \Delta_{{\mathrm{1}}}  \mGLsym{,}  \mGLnt{Y}  \boxtimes  \mGLnt{Z}  \mGLsym{,}  \Delta_{{\mathrm{2}}}  \mGLsym{,}   \mGLnt{X} ^{ \mGLmv{n} }   \mGLsym{,}  \Delta_{{\mathrm{4}}} )   \mGLsym{;}  \Gamma  \vdash_{\mathsf{MS} }  \mGLnt{A}}
  \]

We know:
\[
\begin{array}{lll}
  \mathsf{Depth}  (  \Pi_{{\mathrm{1}}}  )   +   \mathsf{Depth}  (  \Pi_{{\mathrm{3}}}  )   \, \mGLsym{<} \,  \mathsf{Depth}  (  \Pi_{{\mathrm{1}}}  )   +   \mathsf{Depth}  (  \Pi_{{\mathrm{2}}}  )\\
  \mathsf{CutRank} \, \mGLsym{(}  \Pi_{{\mathrm{3}}}  \mGLsym{)} \, \leq \, \mathsf{CutRank} \, \mGLsym{(}  \Pi_{{\mathrm{2}}}  \mGLsym{)} \, \leq \,  \mathsf{Rank}  (  \mGLnt{X}  )
\end{array}
\]

and so applying the induction hypothesis
to $\Pi_{{\mathrm{1}}}$ and $\Pi_{{\mathrm{3}}}$
implies that there is a proof $\Pi'$ of
$(  \delta_{{\mathrm{1}}}  \mGLsym{,}  \mGLnt{r}  \mGLsym{,}  \mGLnt{r}  \mGLsym{,}  \delta_{{\mathrm{2}}}  \mGLsym{,}  \delta'_{{\mathrm{3}}}  \mGLsym{,}  \delta_{{\mathrm{4}}}  )   \odot   ( \Delta_{{\mathrm{1}}}  \mGLsym{,}  \mGLnt{Y}  \mGLsym{,}  \mGLnt{Z}  \mGLsym{,}  \Delta_{{\mathrm{2}}}  \mGLsym{,}  \Delta_{{\mathrm{3}}}  \mGLsym{,}  \Delta_{{\mathrm{4}}} )   \mGLsym{;}  \Gamma  \vdash_{\mathsf{MS} }  \mGLnt{A}$ with
$\mathsf{CutRank} \, \mGLsym{(}  \Pi'  \mGLsym{)} \, \leq \,  \mathsf{Rank}  (  \mGLnt{X}  )$ and $\delta  \boxast [ { \delta_{{\mathrm{3}}} }^{ \mGLmv{n} } ]   \mGLsym{=}  \delta'_{{\mathrm{3}}}$.
Thus, we construct the following proof $\Pi$:

\[
  \inferrule* [flushleft,right=$\mGLdruleMSTXXGTenLName{}$,left=$\Pi :$] {
    \inferrule* [flushleft,right=,left=$\Pi' :$] {
      \pi'
    }{(  \delta_{{\mathrm{1}}}  \mGLsym{,}  \mGLnt{r}  \mGLsym{,}  \mGLnt{r}  \mGLsym{,}  \delta_{{\mathrm{2}}}  \mGLsym{,}  \delta'_{{\mathrm{3}}}  \mGLsym{,}  \delta_{{\mathrm{4}}}  )   \odot   ( \Delta_{{\mathrm{1}}}  \mGLsym{,}  \mGLnt{Y}  \mGLsym{,}  \mGLnt{Z}  \mGLsym{,}  \Delta_{{\mathrm{2}}}  \mGLsym{,}  \Delta_{{\mathrm{3}}}  \mGLsym{,}  \Delta_{{\mathrm{4}}} )   \mGLsym{;}  \Gamma  \vdash_{\mathsf{MS} }  \mGLnt{A}}
  }{(  \delta_{{\mathrm{1}}}  \mGLsym{,}  \mGLnt{r}  \mGLsym{,}  \delta_{{\mathrm{2}}}  \mGLsym{,}  \delta'_{{\mathrm{3}}}  \mGLsym{,}  \delta_{{\mathrm{4}}}  )   \odot   ( \Delta_{{\mathrm{1}}}  \mGLsym{,}  \mGLnt{Y}  \boxtimes  \mGLnt{Z}  \mGLsym{,}  \Delta_{{\mathrm{2}}}  \mGLsym{,}  \Delta_{{\mathrm{3}}}  \mGLsym{,}  \Delta_{{\mathrm{4}}} )   \mGLsym{;}  \Gamma  \vdash_{\mathsf{MS} }  \mGLnt{A}}
  \]
  Given the above, we know:
  \[
    \begin{array}{lll}
      \mathsf{CutRank} \, \mGLsym{(}  \Pi  \mGLsym{)} \, \mGLsym{=} \, \mathsf{CutRank} \, \mGLsym{(}  \Pi'  \mGLsym{)} \, \leq \,  \mathsf{Rank}  (  \mGLnt{X}  )\\
      \delta  \boxast [ { \delta_{{\mathrm{3}}} }^{ \mGLmv{n} } ]   \mGLsym{=}  \delta'_{{\mathrm{3}}}\\
    \end{array}
    \]
    \item \textbf{Left introduction of graded tensor product: second case}
    \[
      \inferrule* [flushleft,right=,left=$\Pi_{{\mathrm{1}}} :$] {
        \pi_1
      }{\delta_{{\mathrm{2}}}  \odot  \Delta_{{\mathrm{2}}}  \vdash_{\mathsf{GS} }  \mGLnt{X}}
\]

  \[
    \inferrule* [flushleft,right=$\mGLdruleMSTXXGTenLName{}$,left=$\Pi_{{\mathrm{2}}} :$] {
      \inferrule* [flushleft,right=,left=$\Pi_{{\mathrm{3}}} :$] {
        \pi_3
      }{(  \delta_{{\mathrm{1}}}  \mGLsym{,}  \delta  \mGLsym{,}  \delta_{{\mathrm{3}}}  \mGLsym{,}  \mGLnt{r}  \mGLsym{,}  \mGLnt{r}  \mGLsym{,}  \delta_{{\mathrm{4}}}  )   \odot   ( \Delta_{{\mathrm{1}}}  \mGLsym{,}   \mGLnt{X} ^{ \mGLmv{n} }   \mGLsym{,}  \Delta_{{\mathrm{3}}}  \mGLsym{,}  \mGLnt{Y}  \mGLsym{,}  \mGLnt{Z}  \mGLsym{,}  \Delta_{{\mathrm{4}}} )   \mGLsym{;}  \Gamma  \vdash_{\mathsf{MS} }  \mGLnt{A}}
      }{(  \delta_{{\mathrm{1}}}  \mGLsym{,}  \mGLnt{r}  \mGLsym{,}  \delta_{{\mathrm{2}}}  \mGLsym{,}  \delta  \mGLsym{,}  \delta_{{\mathrm{4}}}  )   \odot   ( \Delta_{{\mathrm{1}}}  \mGLsym{,}  \mGLnt{Y}  \boxtimes  \mGLnt{Z}  \mGLsym{,}  \Delta_{{\mathrm{2}}}  \mGLsym{,}   \mGLnt{X} ^{ \mGLmv{n} }   \mGLsym{,}  \Delta_{{\mathrm{4}}} )   \mGLsym{;}  \Gamma  \vdash_{\mathsf{MS} }  \mGLnt{A}}
    \]
    Similar to the previous case.
   \item \textbf{Left introduction of the unit of graded tensor:}
   \[
          \inferrule* [flushleft,right=,left=$\Pi_{{\mathrm{1}}} :$] {
            \pi_1
          }{\delta_{{\mathrm{2}}}  \odot  \Delta_{{\mathrm{2}}}  \vdash_{\mathsf{GS} }  \mGLnt{X}}
    \]

      \[
        \inferrule* [flushleft,right=$\mGLdruleMSTXXGUnitLName{}$,left=$\Pi_{{\mathrm{2}}} :$] {
          \inferrule* [flushleft,left=$\Pi_{{\mathrm{3}}} : $] {
            \pi_3
          }{(  \delta_{{\mathrm{1}}}  \mGLsym{,}  \delta  \mGLsym{,}  \delta_{{\mathrm{3}}}  \mGLsym{,}  \delta_{{\mathrm{4}}}  )   \odot   ( \Delta_{{\mathrm{1}}}  \mGLsym{,}   \mGLnt{X} ^{ \mGLmv{n} }   \mGLsym{,}  \Delta_{{\mathrm{3}}}  \mGLsym{,}  \Delta_{{\mathrm{4}}} )   \mGLsym{;}  \Gamma  \vdash_{\mathsf{MS} }  \mGLnt{A}}\\
        }{(  \delta_{{\mathrm{1}}}  \mGLsym{,}  \delta  \mGLsym{,}  \delta_{{\mathrm{3}}}  \mGLsym{,}  \mGLnt{r}  \mGLsym{,}  \delta_{{\mathrm{4}}}  )   \odot   ( \Delta_{{\mathrm{1}}}  \mGLsym{,}   \mGLnt{X} ^{ \mGLmv{n} }   \mGLsym{,}  \Delta_{{\mathrm{3}}}  \mGLsym{,}  \mathsf{J}  \mGLsym{,}  \Delta_{{\mathrm{4}}} )   \mGLsym{;}  \Gamma  \vdash_{\mathsf{MS} }  \mGLnt{A}}
        \]

      We know:
      \[
      \begin{array}{lll}
        \mathsf{Depth}  (  \Pi_{{\mathrm{1}}}  )   +   \mathsf{Depth}  (  \Pi_{{\mathrm{3}}}  )   \, \mGLsym{<} \,  \mathsf{Depth}  (  \Pi_{{\mathrm{1}}}  )   +   \mathsf{Depth}  (  \Pi_{{\mathrm{2}}}  )\\
        \mathsf{CutRank} \, \mGLsym{(}  \Pi_{{\mathrm{3}}}  \mGLsym{)} \, \leq \, \mathsf{CutRank} \, \mGLsym{(}  \Pi_{{\mathrm{2}}}  \mGLsym{)} \, \leq \,  \mathsf{Rank}  (  \mGLnt{X}  )
      \end{array}
      \]

      and so applying the induction hypothesis
      to $\Pi_{{\mathrm{1}}}$ and $\Pi_{{\mathrm{3}}}$
      implies that there is a proof $\Pi'$ of
      $(  \delta_{{\mathrm{1}}}  \mGLsym{,}  \delta'_{{\mathrm{2}}}  \mGLsym{,}  \delta_{{\mathrm{3}}}  \mGLsym{,}  \delta_{{\mathrm{4}}}  )   \odot   ( \Delta_{{\mathrm{1}}}  \mGLsym{,}  \Delta_{{\mathrm{2}}}  \mGLsym{,}  \Delta_{{\mathrm{3}}}  \mGLsym{,}  \Delta_{{\mathrm{4}}} )   \mGLsym{;}  \Gamma  \vdash_{\mathsf{MS} }  \mGLnt{A}$ with
      $\mathsf{CutRank} \, \mGLsym{(}  \Pi'  \mGLsym{)} \, \leq \,  \mathsf{Rank}  (  \mGLnt{X}  )$ and $\delta  \boxast [ { \delta_{{\mathrm{2}}} }^{ \mGLmv{n} } ]   \mGLsym{=}  \delta'_{{\mathrm{2}}}$.
      Thus, we construct the following proof $\Pi$:
      \[
        \inferrule* [flushleft,right=$\mGLdruleMSTXXGUnitLName{}$,left=$\Pi :$] {
          \inferrule* [flushleft,left=$\Pi' : $] {
            \pi'
          }{(  \delta_{{\mathrm{1}}}  \mGLsym{,}  \delta'_{{\mathrm{2}}}  \mGLsym{,}  \delta_{{\mathrm{3}}}  \mGLsym{,}  \delta_{{\mathrm{4}}}  )   \odot   ( \Delta_{{\mathrm{1}}}  \mGLsym{,}  \Delta_{{\mathrm{2}}}  \mGLsym{,}  \Delta_{{\mathrm{3}}}  \mGLsym{,}  \Delta_{{\mathrm{4}}} )   \mGLsym{;}  \Gamma  \vdash_{\mathsf{MS} }  \mGLnt{A}}\\
        }{(  \delta_{{\mathrm{1}}}  \mGLsym{,}  \delta'_{{\mathrm{2}}}  \mGLsym{,}  \delta_{{\mathrm{3}}}  \mGLsym{,}  \mGLnt{r}  \mGLsym{,}  \delta_{{\mathrm{4}}}  )   \odot   ( \Delta_{{\mathrm{1}}}  \mGLsym{,}  \Delta_{{\mathrm{2}}}  \mGLsym{,}  \Delta_{{\mathrm{3}}}  \mGLsym{,}  \mathsf{J}  \mGLsym{,}  \Delta_{{\mathrm{4}}} )   \mGLsym{;}  \Gamma  \vdash_{\mathsf{MS} }  \mGLnt{A}}
        \]
        Given the above, we know:
        \[
          \begin{array}{lll}
            \mathsf{CutRank} \, \mGLsym{(}  \Pi  \mGLsym{)} \, \mGLsym{=} \, \mathsf{CutRank} \, \mGLsym{(}  \Pi'  \mGLsym{)} \, \leq \,  \mathsf{Rank}  (  \mGLnt{X}  )\\
            \delta  \boxast [ { \delta_{{\mathrm{2}}} }^{ \mGLmv{n} } ]   \mGLsym{=}  \delta'_{{\mathrm{2}}}\\
          \end{array}
          \]
   \item \textbf{Right introduction of linear implication:}
   \[
    \inferrule* [flushleft,right=,left=$\Pi_{{\mathrm{1}}} :$] {
      \pi_1
    }{\delta_{{\mathrm{2}}}  \odot  \Delta_{{\mathrm{2}}}  \vdash_{\mathsf{GS} }  \mGLnt{X}}
\]
\[
        \inferrule* [flushleft,right=$\mGLdruleMSTXXGImpRName{}$,left=$\Pi_{{\mathrm{2}}} :$] {
          \inferrule* [flushleft,left=$\Pi_{{\mathrm{3}}} : $] {
            \pi_3
          }{(  \delta_{{\mathrm{1}}}  \mGLsym{,}  \delta  \mGLsym{,}  \delta_{{\mathrm{3}}}  )   \odot   ( \Delta_{{\mathrm{1}}}  \mGLsym{,}   \mGLnt{X} ^{ \mGLmv{n} }   \mGLsym{,}  \Delta_{{\mathrm{3}}} )   \mGLsym{;}  \Gamma  \mGLsym{,}  \mGLnt{A}  \vdash_{\mathsf{MS} }  \mGLnt{B}}\\
        }{(  \delta_{{\mathrm{1}}}  \mGLsym{,}  \delta  \mGLsym{,}  \delta_{{\mathrm{3}}}  )   \odot   ( \Delta_{{\mathrm{1}}}  \mGLsym{,}   \mGLnt{X} ^{ \mGLmv{n} }   \mGLsym{,}  \Delta_{{\mathrm{3}}} )   \mGLsym{;}  \Gamma  \vdash_{\mathsf{MS} }  \mGLnt{A}  \multimap  \mGLnt{B}}
        \]

      We know:
      \[
      \begin{array}{lll}
        \mathsf{Depth}  (  \Pi_{{\mathrm{1}}}  )   +   \mathsf{Depth}  (  \Pi_{{\mathrm{3}}}  )   \, \mGLsym{<} \,  \mathsf{Depth}  (  \Pi_{{\mathrm{1}}}  )   +   \mathsf{Depth}  (  \Pi_{{\mathrm{2}}}  )\\
        \mathsf{CutRank} \, \mGLsym{(}  \Pi_{{\mathrm{3}}}  \mGLsym{)} \, \leq \, \mathsf{CutRank} \, \mGLsym{(}  \Pi_{{\mathrm{2}}}  \mGLsym{)} \, \leq \,  \mathsf{Rank}  (  \mGLnt{X}  )
      \end{array}
      \]

      and so applying the induction hypothesis
      to $\Pi_{{\mathrm{1}}}$ and $\Pi_{{\mathrm{3}}}$
      implies that there is a proof $\Pi'$ of
      $(  \delta_{{\mathrm{1}}}  \mGLsym{,}  \delta'_{{\mathrm{2}}}  \mGLsym{,}  \delta_{{\mathrm{3}}}  )   \odot   ( \Delta_{{\mathrm{1}}}  \mGLsym{,}  \Delta_{{\mathrm{2}}}  \mGLsym{,}  \Delta_{{\mathrm{3}}} )   \mGLsym{;}  \Gamma  \mGLsym{,}  \mGLnt{A}  \vdash_{\mathsf{MS} }  \mGLnt{B}$ with
      $\mathsf{CutRank} \, \mGLsym{(}  \Pi'  \mGLsym{)} \, \leq \,  \mathsf{Rank}  (  \mGLnt{X}  )$ and $\delta  \boxast [ { \delta_{{\mathrm{2}}} }^{ \mGLmv{n} } ]   \mGLsym{=}  \delta'_{{\mathrm{2}}}$.
      Thus, we construct the following proof $\Pi$:

      \[
        \inferrule* [flushleft,right=$\mGLdruleMSTXXGImpRName{}$,left=$\Pi_{{\mathrm{1}}} :$] {
          \inferrule* [flushleft,left=$\Pi' : $] {
            \pi'
          }{(  \delta_{{\mathrm{1}}}  \mGLsym{,}  \delta'_{{\mathrm{2}}}  \mGLsym{,}  \delta_{{\mathrm{3}}}  )   \odot   ( \Delta_{{\mathrm{1}}}  \mGLsym{,}  \Delta_{{\mathrm{2}}}  \mGLsym{,}  \Delta_{{\mathrm{3}}} )   \mGLsym{;}  \Gamma  \mGLsym{,}  \mGLnt{A}  \vdash_{\mathsf{MS} }  \mGLnt{B}}\\
        }{(  \delta_{{\mathrm{1}}}  \mGLsym{,}  \delta'_{{\mathrm{2}}}  \mGLsym{,}  \delta_{{\mathrm{3}}}  )   \odot   ( \Delta_{{\mathrm{1}}}  \mGLsym{,}  \Delta_{{\mathrm{2}}}  \mGLsym{,}  \Delta_{{\mathrm{3}}} )   \mGLsym{;}  \Gamma  \vdash_{\mathsf{MS} }  \mGLnt{A}  \multimap  \mGLnt{B}}
        \]
        Given the above we know:
        \[
        \begin{array}{lll}
          \mathsf{CutRank} \, \mGLsym{(}  \Pi  \mGLsym{)} \, \mGLsym{=} \, \mathsf{CutRank} \, \mGLsym{(}  \Pi'  \mGLsym{)} \, \leq \,  \mathsf{Rank}  (  \mGLnt{X}  )\\
          \delta  \boxast [ { \delta_{{\mathrm{2}}} }^{ \mGLmv{n} } ]   \mGLsym{=}  \delta'_{{\mathrm{2}}}
        \end{array}
        \]
   \item \textbf{Left introduction of linear implication:}
   \[
    \inferrule* [flushleft,right=,left=$\Pi_{{\mathrm{1}}} :$] {
      \pi_1
    }{\delta_{{\mathrm{2}}}  \odot  \Delta_{{\mathrm{2}}}  \vdash_{\mathsf{GS} }  \mGLnt{X}}
\]

\[
  \inferrule* [flushleft,right=$\mGLdruleMSTXXGImpLName{}$,left=$\Pi_{{\mathrm{2}}} :$] {
    \inferrule* [flushleft,left=$\Pi_{{\mathrm{3}}} : $] {
      \pi_3
    }{\delta_{{\mathrm{1}}}  \mGLsym{,}  \delta  \mGLsym{,}  \delta_{{\mathrm{3}}}  \odot  \Delta_{{\mathrm{1}}}  \mGLsym{,}   \mGLnt{X} ^{ \mGLmv{n} }   \mGLsym{,}  \Delta_{{\mathrm{3}}}  \mGLsym{;}  \Gamma_{{\mathrm{2}}}  \vdash_{\mathsf{MS} }  \mGLnt{A} }\\
    \inferrule* [flushleft,right=,left=$\Pi_{{\mathrm{4}}} :$] {
      \pi_4
    }{ \delta_{{\mathrm{4}}}  \odot  \Delta_{{\mathrm{4}}}  \mGLsym{;}  \Gamma_{{\mathrm{1}}}  \mGLsym{,}  \mGLnt{B}  \mGLsym{,}  \Gamma_{{\mathrm{3}}}  \vdash_{\mathsf{MS} }  \mGLnt{C}}
  }{\delta_{{\mathrm{1}}}  \mGLsym{,}  \delta  \mGLsym{,}  \delta_{{\mathrm{3}}}  \mGLsym{,}  \delta_{{\mathrm{4}}}  \odot  \Delta_{{\mathrm{1}}}  \mGLsym{,}   \mGLnt{X} ^{ \mGLmv{n} }   \mGLsym{,}  \Delta_{{\mathrm{3}}}  \mGLsym{,}  \Delta_{{\mathrm{4}}}  \mGLsym{;}  \Gamma_{{\mathrm{1}}}  \mGLsym{,}  \mGLnt{A}  \multimap  \mGLnt{B}  \mGLsym{,}  \Gamma_{{\mathrm{2}}}  \vdash_{\mathsf{MS} }  \mGLnt{C} }
  \]

      We know:
      \[
      \begin{array}{lll}
        \mathsf{Depth}  (  \Pi_{{\mathrm{1}}}  )   +   \mathsf{Depth}  (  \Pi_{{\mathrm{3}}}  )   \, \mGLsym{<} \,  \mathsf{Depth}  (  \Pi_{{\mathrm{1}}}  )   +   \mathsf{Depth}  (  \Pi_{{\mathrm{2}}}  )\\
        \mathsf{CutRank} \, \mGLsym{(}  \Pi_{{\mathrm{3}}}  \mGLsym{)} \, \leq \, \mathsf{CutRank} \, \mGLsym{(}  \Pi_{{\mathrm{2}}}  \mGLsym{)} \, \leq \,  \mathsf{Rank}  (  \mGLnt{X}  )
      \end{array}
      \]

      and so applying the induction hypothesis
      to $\Pi_{{\mathrm{1}}}$ and $\Pi_{{\mathrm{3}}}$
      implies that there is a proof $\Pi'$ of
      $\delta_{{\mathrm{1}}}  \mGLsym{,}  \delta'_{{\mathrm{2}}}  \mGLsym{,}  \delta_{{\mathrm{3}}}  \odot  \Delta_{{\mathrm{1}}}  \mGLsym{,}  \Delta_{{\mathrm{2}}}  \mGLsym{,}  \Delta_{{\mathrm{3}}}  \mGLsym{;}  \Gamma_{{\mathrm{2}}}  \vdash_{\mathsf{MS} }  \mGLnt{A}$ with
      $\mathsf{CutRank} \, \mGLsym{(}  \Pi'  \mGLsym{)} \, \leq \,  \mathsf{Rank}  (  \mGLnt{X}  )$ and $\delta  \boxast [ { \delta_{{\mathrm{2}}} }^{ \mGLmv{n} } ]   \mGLsym{=}  \delta'_{{\mathrm{2}}}$.
      Thus, we construct the following proof $\Pi$:

      \[
        \inferrule* [flushleft,right=$\mGLdruleMSTXXGImpLName{}$,left=$\Pi :$] {
          \inferrule* [flushleft,left=$\Pi' : $] {
            \pi'
          }{\delta_{{\mathrm{1}}}  \mGLsym{,}  \delta'_{{\mathrm{2}}}  \mGLsym{,}  \delta_{{\mathrm{3}}}  \odot  \Delta_{{\mathrm{1}}}  \mGLsym{,}  \Delta_{{\mathrm{2}}}  \mGLsym{,}  \Delta_{{\mathrm{3}}}  \mGLsym{;}  \Gamma_{{\mathrm{2}}}  \vdash_{\mathsf{MS} }  \mGLnt{A}}\\
          \inferrule* [flushleft,right=,left=$\Pi_{{\mathrm{4}}} :$] {
            \pi_4
          }{\delta_{{\mathrm{4}}}  \odot  \Delta_{{\mathrm{4}}}  \mGLsym{;}  \Gamma_{{\mathrm{1}}}  \mGLsym{,}  \mGLnt{B}  \mGLsym{,}  \Gamma_{{\mathrm{3}}}  \vdash_{\mathsf{MS} }  \mGLnt{C}}
        }{\delta_{{\mathrm{1}}}  \mGLsym{,}  \delta'_{{\mathrm{2}}}  \mGLsym{,}  \delta_{{\mathrm{3}}}  \mGLsym{,}  \delta_{{\mathrm{4}}}  \odot  \Delta_{{\mathrm{1}}}  \mGLsym{,}  \Delta_{{\mathrm{2}}}  \mGLsym{,}  \Delta_{{\mathrm{3}}}  \mGLsym{,}  \Delta_{{\mathrm{4}}}  \mGLsym{;}  \Gamma_{{\mathrm{1}}}  \mGLsym{,}  \mGLnt{A}  \multimap  \mGLnt{B}  \mGLsym{,}  \Gamma_{{\mathrm{2}}}  \vdash_{\mathsf{MS} }  \mGLnt{C}}
        \]
        Given the above we know:
        \[
        \begin{array}{lll}
          \mathsf{CutRank} \, \mGLsym{(}  \Pi  \mGLsym{)} \, \mGLsym{=} \, \mGLkw{Max} \, \mGLsym{(}  \mathsf{CutRank} \, \mGLsym{(}  \Pi'  \mGLsym{)}  \mGLsym{,}  \mathsf{CutRank} \, \mGLsym{(}  \Pi_{{\mathrm{4}}}  \mGLsym{)}  \mGLsym{)} \, \leq \,  \mathsf{Rank}  (  \mGLnt{X}  )\\
          \delta  \boxast [ { \delta_{{\mathrm{2}}} }^{ \mGLmv{n} } ]   \mGLsym{=}  \delta'_{{\mathrm{2}}}
        \end{array}
        \]

   \item \textbf{Right introduction of Grd:}
   \[
    \begin{array}{lll}
      \inferrule* [flushleft,right=,left=$\Pi_{{\mathrm{1}}} :$] {
        \pi_1
      }{\delta_{{\mathrm{2}}}  \odot  \Delta_{{\mathrm{2}}}  \vdash_{\mathsf{GS} }  \mGLnt{X}}
      & \quad &
      \inferrule* [flushleft,right=$\mGLdruleMSTXXGrdRName{}$,left=$\Pi_{{\mathrm{2}}} :$] {
        \inferrule* [flushleft,right=, left=$\Pi_{{\mathrm{3}}} :$] {
        \pi_3
        }{(  \delta_{{\mathrm{1}}}  \mGLsym{,}  \delta  \mGLsym{,}  \delta_{{\mathrm{3}}}  )   \odot   ( \Delta_{{\mathrm{1}}}  \mGLsym{,}   \mGLnt{X} ^{ \mGLmv{n} }   \mGLsym{,}  \Delta_{{\mathrm{3}}} )   \vdash_{\mathsf{GS} }  \mGLnt{Y}}
      }{(  \mGLnt{r}  *  \delta_{{\mathrm{1}}}  \mGLsym{,}  \delta  \mGLsym{,}  \delta_{{\mathrm{3}}}  )   \odot   ( \Delta_{{\mathrm{1}}}  \mGLsym{,}   \mGLnt{X} ^{ \mGLmv{n} }   \mGLsym{,}  \Delta_{{\mathrm{3}}} )   \mGLsym{;}  \emptyset  \vdash_{\mathsf{MS} }   \mathsf{Grd} _{ \mGLnt{r} }\, \mGLnt{Y}}
    \end{array}
    \]
    We know the following:
    \[
      \begin{array}{lll}
        \mathsf{Depth}  (  \Pi_{{\mathrm{1}}}  )   +   \mathsf{Depth}  (  \Pi_{{\mathrm{2}}}  ) = \mathsf{Depth}  (  \Pi_{{\mathrm{1}}}  )   +  \mGLsym{(}    \mathsf{Depth}  (  \Pi_{{\mathrm{3}}}  )   + 1   \mGLsym{)}\\
        \mathsf{CutRank} \, \mGLsym{(}  \Pi_{{\mathrm{1}}}  \mGLsym{)} \, \leq \,  \mathsf{Rank}  (  \mGLnt{X}  )\\
        \mathsf{CutRank} \, \mGLsym{(}  \Pi_{{\mathrm{2}}}  \mGLsym{)} \, \mGLsym{=} \, \mathsf{CutRank} \, \mGLsym{(}  \Pi_{{\mathrm{3}}}  \mGLsym{)} \, \leq \,  \mathsf{Rank}  (  \mGLnt{X}  )
      \end{array}
    \]
    These imply that:
    \[
      \begin{array}{lll}
        \mathsf{Depth}  (  \Pi_{{\mathrm{1}}}  )   +   \mathsf{Depth}  (  \Pi_{{\mathrm{3}}}  )   \, \mGLsym{<} \,  \mathsf{Depth}  (  \Pi_{{\mathrm{1}}}  )   +   \mathsf{Depth}  (  \Pi_{{\mathrm{2}}}  )\\
        \mathsf{CutRank} \, \mGLsym{(}  \Pi_{{\mathrm{3}}}  \mGLsym{)} \, \leq \,  \mathsf{Rank}  (  \mGLnt{X}  )
      \end{array}
    \]
    Thus, we apply the induction hypothesis of Lemma~\ref{lemma:cut_reduction_for_mgl} (1) to
    $\Pi_{{\mathrm{1}}}$ and $\Pi_{{\mathrm{3}}}$ to obtain a proof $\Pi'$ of the sequent
    $(  \delta_{{\mathrm{1}}}  \mGLsym{,}  \delta'_{{\mathrm{2}}}  \mGLsym{,}  \delta_{{\mathrm{3}}}  )   \odot   ( \Delta_{{\mathrm{1}}}  \mGLsym{,}  \Delta_{{\mathrm{2}}}  \mGLsym{,}  \Delta_{{\mathrm{3}}} )   \vdash_{\mathsf{GS} }  \mGLnt{Y}$ with $\mathsf{CutRank} \, \mGLsym{(}  \Pi'  \mGLsym{)} \, \leq \,  \mathsf{Rank}  (  \mGLsym{(}  \mGLnt{X}  \mGLsym{)}  )$
    and $(   \delta  \boxast [ { \delta_{{\mathrm{2}}} }^{ \mGLmv{n} } ]   )   \mGLsym{=}  \delta'_{{\mathrm{2}}}$.
    Now we define the proof $\Pi$ as follows:
    \[
      \inferrule* [flushleft,right=$\mGLdruleMSTXXGrdRName{}$, left=$\Pi' :$] {
        \inferrule* [flushleft,right=, left=$\Pi :$] {
          \pi'
        }{(  \delta_{{\mathrm{1}}}  \mGLsym{,}  \delta'_{{\mathrm{2}}}  \mGLsym{,}  \delta_{{\mathrm{3}}}  )   \odot   ( \Delta_{{\mathrm{1}}}  \mGLsym{,}  \Delta_{{\mathrm{2}}}  \mGLsym{,}  \Delta_{{\mathrm{3}}} )   \vdash_{\mathsf{GS} }  \mGLnt{Y}}
      }{(  \mGLnt{r}  *  \delta_{{\mathrm{1}}}  \mGLsym{,}  \delta'_{{\mathrm{2}}}  \mGLsym{,}  \delta_{{\mathrm{3}}}  )   \odot   ( \Delta_{{\mathrm{1}}}  \mGLsym{,}  \Delta_{{\mathrm{2}}}  \mGLsym{,}  \Delta_{{\mathrm{3}}} )   \mGLsym{;}  \emptyset  \vdash_{\mathsf{MS} }   \mathsf{Grd} _{ \mGLnt{r} }\, \mGLnt{Y}}
      \]
    with: $\mathsf{CutRank} \, \mGLsym{(}  \Pi  \mGLsym{)} \, \mGLsym{=} \, \mathsf{CutRank} \, \mGLsym{(}  \Pi'  \mGLsym{)} \, \leq \,  \mathsf{Rank}  (  \mGLnt{X}  )$
    \item \textbf{Weakening:}
    \[
      \inferrule* [flushleft,right=,left=$\Pi_{{\mathrm{1}}} :$] {
        \pi_1
      }{\delta_{{\mathrm{2}}}  \odot  \Delta_{{\mathrm{2}}}  \vdash_{\mathsf{GS} }  \mGLnt{X}}
\]

  \[
    \inferrule* [flushleft,right=$\mGLdruleMSTXXWeakName{}$,left=$\Pi_{{\mathrm{2}}} :$] {
      \inferrule* [flushleft,left=$\Pi_{{\mathrm{3}}} : $] {
        \pi_3
      }{(  \delta_{{\mathrm{1}}}  \mGLsym{,}  \delta  \mGLsym{,}  \delta_{{\mathrm{3}}}  \mGLsym{,}  \delta_{{\mathrm{4}}}  )   \odot   ( \Delta_{{\mathrm{1}}}  \mGLsym{,}   \mGLnt{X} ^{ \mGLmv{n} }   \mGLsym{,}  \Delta_{{\mathrm{3}}}  \mGLsym{,}  \Delta_{{\mathrm{4}}} )   \mGLsym{;}  \Gamma  \vdash_{\mathsf{MS} }  \mGLnt{A}}\\
    }{(  \delta_{{\mathrm{1}}}  \mGLsym{,}  \delta  \mGLsym{,}  \delta_{{\mathrm{3}}}  \mGLsym{,}  \mathsf{0}  \mGLsym{,}  \delta_{{\mathrm{4}}}  )   \odot   ( \Delta_{{\mathrm{1}}}  \mGLsym{,}   \mGLnt{X} ^{ \mGLmv{n} }   \mGLsym{,}  \Delta_{{\mathrm{3}}}  \mGLsym{,}  \mGLnt{Z}  \mGLsym{,}  \Delta_{{\mathrm{4}}} )   \mGLsym{;}  \Gamma  \vdash_{\mathsf{MS} }  \mGLnt{A}}
    \]
    Similar to the unit of the graded tensor case.
    \item \textbf{Weakening: second case}
    \[
      \inferrule* [flushleft,right=,left=$\Pi_{{\mathrm{1}}} :$] {
        \pi_1
      }{\delta_{{\mathrm{3}}}  \odot  \Delta_{{\mathrm{3}}}  \vdash_{\mathsf{GS} }  \mGLnt{X}}
\]
  \[
    \inferrule* [flushleft,right=$\mGLdruleMSTXXWeakName{}$,left=$\Pi_{{\mathrm{2}}} :$] {
      \inferrule* [flushleft,left=$\Pi_{{\mathrm{3}}} : $] {
        \pi_3
      }{(  \delta_{{\mathrm{1}}}  \mGLsym{,}  \delta_{{\mathrm{2}}}  \mGLsym{,}  \delta  \mGLsym{,}  \delta_{{\mathrm{4}}}  )   \odot   ( \Delta_{{\mathrm{1}}}  \mGLsym{,}  \Delta_{{\mathrm{2}}}  \mGLsym{,}   \mGLnt{X} ^{ \mGLmv{n} }   \mGLsym{,}  \Delta_{{\mathrm{4}}} )   \mGLsym{;}  \Gamma  \vdash_{\mathsf{MS} }  \mGLnt{A}}\\
    }{(  \delta_{{\mathrm{1}}}  \mGLsym{,}  \mathsf{0}  \mGLsym{,}  \delta_{{\mathrm{2}}}  \mGLsym{,}  \delta  \mGLsym{,}  \delta_{{\mathrm{4}}}  )   \odot   ( \Delta_{{\mathrm{1}}}  \mGLsym{,}  \mGLnt{Z}  \mGLsym{,}  \Delta_{{\mathrm{2}}}  \mGLsym{,}   \mGLnt{X} ^{ \mGLmv{n} }   \mGLsym{,}  \Delta_{{\mathrm{4}}} )   \mGLsym{;}  \Gamma  \vdash_{\mathsf{MS} }  \mGLnt{A}}
    \]
    Similar to the unit of the graded tensor case.
  \end{enumerate}

 \item \textbf{Structural}
  \begin{enumerate}
    \item \textbf{Weakening}
    \[
      \inferrule* [flushleft,right=,left=$\Pi_{{\mathrm{1}}} :$] {
        \pi_1
      }{\delta_{{\mathrm{2}}}  \odot  \Delta_{{\mathrm{2}}}  \vdash_{\mathsf{GS} }  \mGLnt{X}}
\]
  \[
    \inferrule* [flushleft,right=$\mGLdruleMSTXXWeakName{}$,left=$\Pi_{{\mathrm{2}}} :$] {
      \inferrule* [flushleft,left=$\Pi_{{\mathrm{3}}} : $] {
        \pi_3
      }{(  \delta_{{\mathrm{1}}}  \mGLsym{,}  \delta_{{\mathrm{3}}}  )   \odot   ( \Delta_{{\mathrm{1}}}  \mGLsym{,}  \Delta_{{\mathrm{3}}} )   \mGLsym{;}  \Gamma  \vdash_{\mathsf{MS} }  \mGLnt{A}}\\
    }{(  \delta_{{\mathrm{1}}}  \mGLsym{,}  \mathsf{0}  \mGLsym{,}  \delta_{{\mathrm{3}}}  )   \odot   ( \Delta_{{\mathrm{1}}}  \mGLsym{,}  \mGLnt{X}  \mGLsym{,}  \Delta_{{\mathrm{3}}} )   \mGLsym{;}  \Gamma  \vdash_{\mathsf{MS} }  \mGLnt{A}}
    \]
Similar to the structural weakening case for GS
    \item \textbf{Contraction}
    \[
      \inferrule* [flushleft,right=,left=$\Pi_{{\mathrm{1}}} :$] {
        \pi_1
      }{\delta_{{\mathrm{2}}}  \odot  \Delta_{{\mathrm{2}}}  \vdash_{\mathsf{GS} }  \mGLnt{X}}
\]
\[
  \inferrule* [flushleft,right=$\mGLdruleMSTXXContName{}$,left=$\Pi_{{\mathrm{2}}} :$] {
    \inferrule* [flushleft,right=,left=$\Pi_{{\mathrm{3}}} :$] {
      \pi_3
    }{(  \delta_{{\mathrm{1}}}  \mGLsym{,}  \mGLnt{r_{{\mathrm{1}}}}  \mGLsym{,}  \mGLnt{r_{{\mathrm{2}}}}  \mGLsym{,}  \delta_{{\mathrm{3}}}  )   \odot   ( \Delta_{{\mathrm{1}}}  \mGLsym{,}  \mGLnt{X}  \mGLsym{,}  \mGLnt{X}  \mGLsym{,}  \Delta_{{\mathrm{3}}} )   \mGLsym{;}  \Gamma  \vdash_{\mathsf{MS} }  \mGLnt{A}}
  }{(  \delta_{{\mathrm{1}}}  \mGLsym{,}  \mGLnt{r_{{\mathrm{1}}}}  +  \mGLnt{r_{{\mathrm{2}}}}  \mGLsym{,}  \delta_{{\mathrm{3}}}  )   \odot   ( \Delta_{{\mathrm{1}}}  \mGLsym{,}  \mGLnt{X}  \mGLsym{,}  \Delta_{{\mathrm{3}}} )   \mGLsym{;}  \Gamma  \vdash_{\mathsf{MS} }  \mGLnt{A}}
  \]
 Similar to the structural contraction case for GS
   \item \textbf{Exchange}
   \[
     \inferrule* [flushleft,right=,left=$\Pi_{{\mathrm{1}}} :$] {
       \pi_1
     }{\delta_{{\mathrm{2}}}  \odot  \Delta_{{\mathrm{2}}}  \vdash_{\mathsf{GS} }  \mGLnt{X}}
\]
\[
 \inferrule* [flushleft,right=$\mGLdruleMSTXXExName{}$,left=$\Pi_{{\mathrm{2}}} :$] {
   \inferrule* [flushleft,right=,left=$\Pi_{{\mathrm{3}}} :$] {
     \pi_3
   }{(  \delta_{{\mathrm{1}}}  \mGLsym{,}  \delta  \mGLsym{,}  \delta_{{\mathrm{3}}}  )   \odot   ( \Delta_{{\mathrm{1}}}  \mGLsym{,}   \mGLnt{X} ^{ \mGLmv{n} }   \mGLsym{,}  \Delta_{{\mathrm{3}}} )   \mGLsym{;}  \Gamma_{{\mathrm{1}}}  \mGLsym{,}  \mGLnt{A}  \mGLsym{,}  \mGLnt{B}  \mGLsym{,}  \Gamma_{{\mathrm{2}}}  \vdash_{\mathsf{MS} }  \mGLnt{A}}
 }{(  \delta_{{\mathrm{1}}}  \mGLsym{,}  \delta  \mGLsym{,}  \delta_{{\mathrm{3}}}  )   \odot   ( \Delta_{{\mathrm{1}}}  \mGLsym{,}   \mGLnt{X} ^{ \mGLmv{n} }   \mGLsym{,}  \Delta_{{\mathrm{3}}} )   \mGLsym{;}  \Gamma_{{\mathrm{1}}}  \mGLsym{,}  \mGLnt{B}  \mGLsym{,}  \mGLnt{A}  \mGLsym{,}  \Gamma_{{\mathrm{2}}}  \vdash_{\mathsf{MS} }  \mGLnt{A}}
 \]

 We know:
 \[
 \begin{array}{lll}
   \mathsf{Depth}  (  \Pi_{{\mathrm{1}}}  )   +   \mathsf{Depth}  (  \Pi_{{\mathrm{3}}}  )   \, \mGLsym{<} \,  \mathsf{Depth}  (  \Pi_{{\mathrm{1}}}  )   +   \mathsf{Depth}  (  \Pi_{{\mathrm{2}}}  )\\
   \mathsf{CutRank} \, \mGLsym{(}  \Pi_{{\mathrm{3}}}  \mGLsym{)} \, \leq \, \mathsf{CutRank} \, \mGLsym{(}  \Pi_{{\mathrm{2}}}  \mGLsym{)} \, \leq \,  \mathsf{Rank}  (  \mGLnt{X}  )
 \end{array}
 \]

 and so applying the induction hypothesis
 to $\Pi_{{\mathrm{1}}}$ and $\Pi_{{\mathrm{3}}}$
 implies that there is a proof $\Pi'$ of
 $(  \delta_{{\mathrm{1}}}  \mGLsym{,}  \delta  \mGLsym{,}  \delta_{{\mathrm{3}}}  )   \odot   ( \Delta_{{\mathrm{1}}}  \mGLsym{,}  \Delta_{{\mathrm{2}}}  \mGLsym{,}  \Delta_{{\mathrm{3}}} )   \mGLsym{;}  \Gamma_{{\mathrm{1}}}  \mGLsym{,}  \mGLnt{A}  \mGLsym{,}  \mGLnt{B}  \mGLsym{,}  \Gamma_{{\mathrm{2}}}  \vdash_{\mathsf{MS} }  \mGLnt{A}$ with
 $\mathsf{CutRank} \, \mGLsym{(}  \Pi'  \mGLsym{)} \, \leq \,  \mathsf{Rank}  (  \mGLnt{X}  )$ and $(   \delta  \boxast [ { \delta_{{\mathrm{2}}} }^{ \mGLmv{n} } ]   )   \mGLsym{=}  \delta'_{{\mathrm{2}}}$.
 Thus, we construct the following proof $\Pi$:

 \[
  \inferrule* [flushleft,right=$\mGLdruleMSTXXExName{}$,left=$\Pi :$] {
    \inferrule* [flushleft,right=,left=$\Pi' :$] {
      \pi'
    }{(  \delta_{{\mathrm{1}}}  \mGLsym{,}  \delta'_{{\mathrm{2}}}  \mGLsym{,}  \delta_{{\mathrm{3}}}  )   \odot   ( \Delta_{{\mathrm{1}}}  \mGLsym{,}  \Delta_{{\mathrm{2}}}  \mGLsym{,}  \Delta_{{\mathrm{3}}} )   \mGLsym{;}  \Gamma_{{\mathrm{1}}}  \mGLsym{,}  \mGLnt{A}  \mGLsym{,}  \mGLnt{B}  \mGLsym{,}  \Gamma_{{\mathrm{2}}}  \vdash_{\mathsf{MS} }  \mGLnt{A}}
  }{(  \delta_{{\mathrm{1}}}  \mGLsym{,}  \delta'_{{\mathrm{2}}}  \mGLsym{,}  \delta_{{\mathrm{3}}}  )   \odot   ( \Delta_{{\mathrm{1}}}  \mGLsym{,}  \Delta_{{\mathrm{2}}}  \mGLsym{,}  \Delta_{{\mathrm{3}}} )   \mGLsym{;}  \Gamma_{{\mathrm{1}}}  \mGLsym{,}  \mGLnt{B}  \mGLsym{,}  \mGLnt{A}  \mGLsym{,}  \Gamma_{{\mathrm{2}}}  \vdash_{\mathsf{MS} }  \mGLnt{A}}
  \]
   Given the above, we know:
   \[
     \begin{array}{lll}
       \mathsf{CutRank} \, \mGLsym{(}  \Pi  \mGLsym{)} \, \mGLsym{=} \, \mathsf{CutRank} \, \mGLsym{(}  \Pi'  \mGLsym{)} \, \leq \,  \mathsf{Rank}  (  \mGLnt{X}  )\\
       \delta  \boxast [ { \delta_{{\mathrm{2}}} }^{ \mGLmv{n} } ]   \mGLsym{=}  \delta'_{{\mathrm{2}}}\\
     \end{array}
     \]

   \item \textbf{Approximation}
   \[
    \inferrule* [flushleft,right=,left=$\Pi_{{\mathrm{1}}} :$] {
      \pi_1
    }{\delta_{{\mathrm{2}}}  \odot  \Delta_{{\mathrm{2}}}  \vdash_{\mathsf{GS} }  \mGLnt{X}}
\]
\[
\inferrule* [flushleft,right=$\mGLdruleMSTXXSubName{}$,left=$\Pi_{{\mathrm{2}}} :$] {
  \inferrule* [flushleft,right=,left=$\Pi_{{\mathrm{3}}} :$] {
    \pi_3
  }{(  \delta_{{\mathrm{1}}}  \mGLsym{,}  \delta'  \mGLsym{,}  \delta_{{\mathrm{3}}}  )   \odot   ( \Delta_{{\mathrm{1}}}  \mGLsym{,}   \mGLnt{X} ^{ \mGLmv{n} }   \mGLsym{,}  \Delta_{{\mathrm{3}}} )   \mGLsym{;}  \Gamma  \vdash_{\mathsf{MS} }  \mGLnt{A}}\\{\delta'  \leq  \delta}
}{(  \delta_{{\mathrm{1}}}  \mGLsym{,}  \delta  \mGLsym{,}  \delta_{{\mathrm{3}}}  )   \odot   ( \Delta_{{\mathrm{1}}}  \mGLsym{,}   \mGLnt{X} ^{ \mGLmv{n} }   \mGLsym{,}  \Delta_{{\mathrm{3}}} )   \mGLsym{;}  \Gamma  \vdash_{\mathsf{MS} }  \mGLnt{A}}
\]
Similar to the case for GS
  \end{enumerate}
\end{enumerate}

\end{proof}
\begin{lemma}[Cut Reduction MS]
    If $\Pi_{{\mathrm{1}}}$ is a proof of $\delta_{{\mathrm{2}}}  \odot  \Delta_{{\mathrm{2}}}  \mGLsym{;}  \Gamma_{{\mathrm{2}}}  \vdash_{\mathsf{MS} }  \mGLnt{A}$
    and $\Pi_{{\mathrm{2}}}$ is a proof of $\delta_{{\mathrm{1}}}  \odot  \Delta_{{\mathrm{1}}}  \mGLsym{;}   ( \Gamma_{{\mathrm{1}}}  \mGLsym{,}  \mGLnt{A}  \mGLsym{,}  \Gamma_{{\mathrm{3}}} )   \vdash_{\mathsf{MS} }  \mGLnt{B}$
    with $\mathsf{CutRank} \, \mGLsym{(}  \Pi_{{\mathrm{1}}}  \mGLsym{)} \, \leq \,  \mathsf{Rank}  (  \mGLnt{A}  )$
    and $\mathsf{CutRank} \, \mGLsym{(}  \Pi_{{\mathrm{2}}}  \mGLsym{)} \, \leq \,  \mathsf{Rank}  (  \mGLnt{A}  )$,
    then there exists 
    a proof $\Pi$ of the sequent
    $(  \delta_{{\mathrm{1}}}  \mGLsym{,}  \delta_{{\mathrm{2}}}  )   \odot   ( \Delta_{{\mathrm{1}}}  \mGLsym{,}  \Delta_{{\mathrm{2}}} )   \mGLsym{;}   ( \Gamma_{{\mathrm{1}}}  \mGLsym{,}  \Gamma_{{\mathrm{2}}}  \mGLsym{,}  \Gamma_{{\mathrm{3}}} )   \vdash_{\mathsf{MS} }  \mGLnt{B}$
    with $\mathsf{CutRank} \, \mGLsym{(}  \Pi  \mGLsym{)} \, \leq \,  \mathsf{Rank}  (  \mGLnt{A}  )$.

  \end{lemma}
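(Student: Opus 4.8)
The plan is to prove this as the \emph{(Mixed)} clause, i.e.\ clause (3), of the mutual induction of Lemma~\ref{lemma:cut_reduction_for_mgl}, proceeding by induction on $\mathsf{Depth}(\Pi_1) + \mathsf{Depth}(\Pi_2)$ simultaneously with the graded clauses (1) and (2). I would follow exactly the case structure already used for the two sub-lemmas proved earlier in this section: commuting conversions (the right premise is itself a cut), axiom cases, principal-formula-versus-principal-formula cases, and the two families of secondary cases (where the cut formula is not principal in $\Pi_1$, respectively in $\Pi_2$). A simplifying observation that organises the whole argument is that the cut formula $A$ here is \emph{linear} and lives in the linear context $\Gamma$, which admits no contraction; consequently the grade bookkeeping at this top-level cut is mere concatenation, $(\delta_1,\delta_2) \odot (\Delta_1,\Delta_2)$, with no row-vector multiplication $\boxast$ incurred. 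This makes the linear cut behave much like an ordinary linear-logic cut, with the grade vectors simply carried along.

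First I would dispatch the structural bulk. In every secondary case the last rule of $\Pi_1$ or $\Pi_2$ acts on a formula other than $A$, so the cut is permuted above that rule and the induction hypothesis is applied at a strictly smaller depth sum; since the affected contexts are only reshuffled or split (for $\otimes_R$, $\multimap_L$, the graded left rules, weakening, contraction, exchange, and approximation), the required grade identities follow exactly as in the secondary cases of clauses (1) and (2), using monotonicity of $*$ and $+$ where approximation is involved. The axiom cases are immediate: an identity on either side lets us take $\Pi$ to be the other derivation, using $1 * \delta = \delta$ (and $r * 1 \le r$ for the approximation bound). The commuting-conversion cases follow the same template as before, reassociating nested cuts and appealing to clauses (1)--(3) at smaller depth.

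The principal-versus-principal cases are where the rank strictly drops, and these carry the real content. For $A = \mathsf{I}$, $A = A_1 \otimes A_2$, and $A = A_1 \multimap A_2$ the reductions mirror the standard linear-logic cut reductions: I would rebuild $\Pi$ from the immediate subderivations using one or two cuts on the strictly smaller subformulas, so that $\mathsf{Rank}(A_i) < \mathsf{Rank}(A)$ keeps the cut rank bounded by $\mathsf{Rank}(A)$. The decisive case is the modal one, $A = \mathsf{Grd}_r X$, where $\Pi_1$ ends in the promotion rule (the right rule for $\mathsf{Grd}$, which forces $\Gamma_2 = \emptyset$ and exhibits $\delta_2 = r * \delta_2'$ with graded premise $\delta_2' \odot \Delta_2 \vdash_{\mathsf{GS}} X$) and $\Pi_2$ ends in the left rule $\mathsf{Grd}_L$ (which unboxes $\mathsf{Grd}_r X$ from $\Gamma$, placing $X$ at grade $r$ into the graded context, with premise $(\delta_1, r) \odot (\Delta_1, X) ; (\Gamma_1,\Gamma_3) \vdash_{\mathsf{MS}} B$). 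Here the linear cut must be transformed into a \emph{graded} cut on $X$: I would cut the promotion premise against the $\mathsf{Grd}_L$ premise using clause (2) of the mutual lemma, which is precisely why the three clauses must be proved simultaneously.

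The main obstacle will therefore be this modal principal case, and specifically the grade arithmetic it forces. The graded cut supplied by clause (2) yields grade vector $(\delta_1,\, r \boxast [\,\delta_2'^{\,1}]) = (\delta_1,\, r * \delta_2') = (\delta_1, \delta_2)$, so the $r$-scaling introduced by promotion is exactly reconciled with the scaling built into the graded cut, and since $\Gamma_2 = \emptyset$ the linear context matches $(\Gamma_1,\Gamma_2,\Gamma_3)$ as claimed; establishing these identities between the scaled vectors and the $\boxast$ operation, and confirming the side conditions survive under the approximation ordering via monotonicity of $*$, is the delicate bookkeeping. Because $\mathsf{Rank}(X) < \mathsf{Rank}(\mathsf{Grd}_r X) = \mathsf{Rank}(A)$, the cut rank of the reconstructed derivation remains within the bound, completing the case and the induction.
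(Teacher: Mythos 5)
Your proposal is correct and follows essentially the same route as the paper's proof: induction on $\mathsf{Depth}(\Pi_1)+\mathsf{Depth}(\Pi_2)$ within the mutual induction, the same case analysis (commuting conversions, axioms, principal-vs-principal, secondary cases), and in particular the decisive $\mathsf{Grd}_r X$ principal case handled exactly as the paper does — by invoking clause (2) on the promotion premise and the $\mathsf{Grd}_L$ premise, with the grade reconciliation $r \boxast [\delta_2'^{\,1}] = r * \delta_2' = \delta_2$ and $\Gamma_2 = \emptyset$. The only cosmetic difference is that the paper additionally spells out $\eta$-expansion cases (both sides axioms on a compound formula), which your axiom cases subsume for the existence statement as given.
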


\begin{proof}
      This is by induction on $\mathsf{Depth}  (  \Pi_{{\mathrm{1}}}  )   +   \mathsf{Depth}  (  \Pi_{{\mathrm{2}}}  )$.
\begin{enumerate}
  \item \textbf{Commuting Conversions}
    \begin{enumerate}
      \item \textbf{left-side:} Suppose we have

      \[
        \inferrule* [flushleft,right=,left=$\Pi_{{\mathrm{1}}} :$] {
          \pi_1
        }{\delta_{{\mathrm{3}}}  \odot  \Delta_{{\mathrm{3}}}  \mGLsym{;}  \Gamma_{{\mathrm{3}}}  \vdash_{\mathsf{MS} }  \mGLnt{A}}
       \]
       \[
        \inferrule* [flushleft,right=$\mGLdruleMSTXXCutName{}$,left=$\Pi_{{\mathrm{2}}} :$] {
          \inferrule* [flushleft,right=,left=$\Pi_{{\mathrm{3}}} :$] {
            \pi_3
          }{\delta_{{\mathrm{2}}}  \odot  \Delta_{{\mathrm{2}}}  \mGLsym{;}   ( \Gamma_{{\mathrm{2}}}  \mGLsym{,}  \mGLnt{A}  \mGLsym{,}  \Gamma_{{\mathrm{4}}} )   \vdash_{\mathsf{MS} }  \mGLnt{B}}\\
          \inferrule* [flushleft,right=,left=$\Pi_{{\mathrm{4}}} :$] {
            \pi_4
          }{\delta_{{\mathrm{1}}}  \odot  \Delta_{{\mathrm{1}}}  \mGLsym{;}   ( \Gamma_{{\mathrm{1}}}  \mGLsym{,}  \mGLnt{B}  \mGLsym{,}  \Gamma_{{\mathrm{5}}} )   \vdash_{\mathsf{MS} }  \mGLnt{C}}
        }{\delta_{{\mathrm{1}}}  \mGLsym{,}  \delta_{{\mathrm{2}}}  \odot  \Delta_{{\mathrm{1}}}  \mGLsym{,}  \Delta_{{\mathrm{2}}}  \mGLsym{;}   ( \Gamma_{{\mathrm{1}}}  \mGLsym{,}  \Gamma_{{\mathrm{2}}}  \mGLsym{,}  \mGLnt{A}  \mGLsym{,}  \Gamma_{{\mathrm{4}}}  \mGLsym{,}  \Gamma_{{\mathrm{5}}} )   \vdash_{\mathsf{MS} }  \mGLnt{C}}
      \]

      We know:
      \[
      \begin{array}{lll}
        \mathsf{Depth}  (  \Pi_{{\mathrm{1}}}  )   +   \mathsf{Depth}  (  \Pi_{{\mathrm{3}}}  )   \, \mGLsym{<} \,  \mathsf{Depth}  (  \Pi_{{\mathrm{1}}}  )   +   \mathsf{Depth}  (  \Pi_{{\mathrm{2}}}  )\\
        \mathsf{CutRank} \, \mGLsym{(}  \Pi_{{\mathrm{3}}}  \mGLsym{)} \, \leq \, \mGLkw{Max} \, \mGLsym{(}   \mathsf{CutRank} \, \mGLsym{(}  \Pi_{{\mathrm{3}}}  \mGLsym{)}  \mGLsym{,}  \mathsf{CutRank} \, \mGLsym{(}  \Pi_{{\mathrm{4}}}  \mGLsym{)}  \mGLsym{,}   \mathsf{Rank}  (  \mGLnt{B}  )   + 1   \mGLsym{)} \, \leq \,  \mathsf{Rank}  (  \mGLnt{A}  )
      \end{array}
      \]

      and so applying the induction hypothesis
      to $\Pi_{{\mathrm{1}}}$ and $\Pi_{{\mathrm{3}}}$
      implies that there is a proof $\Pi'$ of
      $\delta_{{\mathrm{2}}}  \mGLsym{,}  \delta_{{\mathrm{3}}}  \odot  \Delta_{{\mathrm{2}}}  \mGLsym{,}  \Delta_{{\mathrm{3}}}  \mGLsym{;}   ( \Gamma_{{\mathrm{2}}}  \mGLsym{,}  \Gamma_{{\mathrm{3}}}  \mGLsym{,}  \Gamma_{{\mathrm{4}}} )   \vdash_{\mathsf{MS} }  \mGLnt{B}$ with
      $\mathsf{CutRank} \, \mGLsym{(}  \Pi'  \mGLsym{)} \, \leq \,  \mathsf{Rank}  (  \mGLnt{A}  )$.
      Thus, we construct the following proof $\Pi$:
      \[
        \inferrule* [flushleft,right=$\mGLdruleMSTXXCutName{}$,left=$\Pi :$] {
          \inferrule* [flushleft,right=,left=$\Pi' :$] {
            \pi'
          }{\delta_{{\mathrm{2}}}  \mGLsym{,}  \delta_{{\mathrm{3}}}  \odot  \Delta_{{\mathrm{2}}}  \mGLsym{,}  \Delta_{{\mathrm{3}}}  \mGLsym{;}   ( \Gamma_{{\mathrm{2}}}  \mGLsym{,}  \Gamma_{{\mathrm{3}}}  \mGLsym{,}  \Gamma_{{\mathrm{4}}} )   \vdash_{\mathsf{MS} }  \mGLnt{B}}\\
          \inferrule* [flushleft,right=,left=$\Pi_{{\mathrm{4}}} :$] {
            \pi_4
          }{\delta_{{\mathrm{1}}}  \odot  \Delta_{{\mathrm{1}}}  \mGLsym{;}   ( \Gamma_{{\mathrm{1}}}  \mGLsym{,}  \mGLnt{B}  \mGLsym{,}  \Gamma_{{\mathrm{5}}} )   \vdash_{\mathsf{MS} }  \mGLnt{C}}
        }{\delta_{{\mathrm{1}}}  \mGLsym{,}  \delta_{{\mathrm{2}}}  \mGLsym{,}  \delta_{{\mathrm{3}}}  \odot  \Delta_{{\mathrm{1}}}  \mGLsym{,}  \Delta_{{\mathrm{2}}}  \mGLsym{,}  \Delta_{{\mathrm{3}}}  \mGLsym{;}   ( \Gamma_{{\mathrm{1}}}  \mGLsym{,}  \Gamma_{{\mathrm{2}}}  \mGLsym{,}  \Gamma_{{\mathrm{3}}}  \mGLsym{,}  \Gamma_{{\mathrm{4}}}  \mGLsym{,}  \Gamma_{{\mathrm{5}}} )   \vdash_{\mathsf{MS} }  \mGLnt{C}}
      \]
      Given the above we know:
      \[
      \begin{array}{lll}
        \mathsf{CutRank} \, \mGLsym{(}  \Pi'  \mGLsym{)} \, \leq \,  \mathsf{Rank}  (  \mGLnt{A}  )\\
        \mathsf{CutRank} \, \mGLsym{(}  \Pi_{{\mathrm{2}}}  \mGLsym{)} \, \mGLsym{=} \, \mGLkw{Max} \, \mGLsym{(}   \mathsf{CutRank} \, \mGLsym{(}  \Pi_{{\mathrm{3}}}  \mGLsym{)}  \mGLsym{,}  \mathsf{CutRank} \, \mGLsym{(}  \Pi_{{\mathrm{4}}}  \mGLsym{)}  \mGLsym{,}   \mathsf{Rank}  (  \mGLnt{B}  )   + 1   \mGLsym{)} \, \leq \,  \mathsf{Rank}  (  \mGLnt{A}  )\\
      \end{array}
      \]
      This implies:
      \[
      \begin{array}{lll}
        \mathsf{CutRank} \, \mGLsym{(}  \Pi_{{\mathrm{4}}}  \mGLsym{)} \, \leq \,  \mathsf{Rank}  (  \mGLnt{A}  )\\
        \mathsf{Rank}  (  \mGLnt{B}  )   + 1  \, \leq \,  \mathsf{Rank}  (  \mGLnt{A}  )
      \end{array}
      \]
      Thus, we obtain our result
      \[
        \mathsf{CutRank} \, \mGLsym{(}  \Pi  \mGLsym{)} \, \mGLsym{=} \, \mGLkw{Max} \, \mGLsym{(}   \mathsf{CutRank} \, \mGLsym{(}  \Pi'  \mGLsym{)}  \mGLsym{,}  \mathsf{CutRank} \, \mGLsym{(}  \Pi_{{\mathrm{4}}}  \mGLsym{)}  \mGLsym{,}   \mathsf{Rank}  (  \mGLnt{B}  )   + 1   \mGLsym{)} \, \leq \,  \mathsf{Rank}  (  \mGLnt{A}  )
      \]

\item \textbf{cut vs. right-side cut (left case):} Suppose we have:

\[
  \inferrule* [flushleft,right=,left=$\Pi_{{\mathrm{1}}} :$] {
    \pi_1
  }{\delta_{{\mathrm{3}}}  \odot  \Delta_{{\mathrm{3}}}  \mGLsym{;}  \Gamma_{{\mathrm{2}}}  \vdash_{\mathsf{MS} }  \mGLnt{A}}
 \]
 \[
  \inferrule* [flushleft,right=$\mGLdruleMSTXXCutName{}$,left=$\Pi_{{\mathrm{2}}} :$] {
    \inferrule* [flushleft,right=,left=$\Pi_{{\mathrm{3}}} :$] {
      \pi_3
    }{\delta_{{\mathrm{2}}}  \odot  \Delta_{{\mathrm{2}}}  \mGLsym{;}  \Gamma_{{\mathrm{4}}}  \vdash_{\mathsf{MS} }  \mGLnt{B}}\\
    \inferrule* [flushleft,right=,left=$\Pi_{{\mathrm{4}}} :$] {
      \pi_4
    }{\delta_{{\mathrm{1}}}  \odot  \Delta_{{\mathrm{1}}}  \mGLsym{;}   ( \Gamma_{{\mathrm{1}}}  \mGLsym{,}  \mGLnt{A}  \mGLsym{,}  \Gamma_{{\mathrm{3}}}  \mGLsym{,}  \mGLnt{B}  \mGLsym{,}  \Gamma_{{\mathrm{5}}} )   \vdash_{\mathsf{MS} }  \mGLnt{C}}
  }{\delta_{{\mathrm{1}}}  \mGLsym{,}  \delta_{{\mathrm{2}}}  \odot  \Delta_{{\mathrm{1}}}  \mGLsym{,}  \Delta_{{\mathrm{2}}}  \mGLsym{;}   ( \Gamma_{{\mathrm{1}}}  \mGLsym{,}  \mGLnt{A}  \mGLsym{,}  \Gamma_{{\mathrm{3}}}  \mGLsym{,}  \Gamma_{{\mathrm{4}}}  \mGLsym{,}  \Gamma_{{\mathrm{5}}} )   \vdash_{\mathsf{MS} }  \mGLnt{C}}
\]

We know:
\[
\begin{array}{lll}
  \mathsf{Depth}  (  \Pi_{{\mathrm{1}}}  )   +   \mathsf{Depth}  (  \Pi_{{\mathrm{4}}}  )   \, \mGLsym{<} \,  \mathsf{Depth}  (  \Pi_{{\mathrm{1}}}  )   +   \mathsf{Depth}  (  \Pi_{{\mathrm{2}}}  )\\
  \mathsf{CutRank} \, \mGLsym{(}  \Pi_{{\mathrm{4}}}  \mGLsym{)} \, \leq \, \mGLkw{Max} \, \mGLsym{(}   \mathsf{CutRank} \, \mGLsym{(}  \Pi_{{\mathrm{3}}}  \mGLsym{)}  \mGLsym{,}  \mathsf{CutRank} \, \mGLsym{(}  \Pi_{{\mathrm{4}}}  \mGLsym{)}  \mGLsym{,}   \mathsf{Rank}  (  \mGLnt{B}  )   + 1   \mGLsym{)} \, \leq \,  \mathsf{Rank}  (  \mGLnt{A}  )
\end{array}
\]
and so applying the induction hypothesis
to $\Pi_{{\mathrm{1}}}$ and $\Pi_{{\mathrm{4}}}$
implies that there is a proof $\Pi'$ of
$\delta_{{\mathrm{1}}}  \mGLsym{,}  \delta_{{\mathrm{3}}}  \odot  \Delta_{{\mathrm{1}}}  \mGLsym{,}  \Delta_{{\mathrm{3}}}  \mGLsym{;}   ( \Gamma_{{\mathrm{1}}}  \mGLsym{,}  \Gamma_{{\mathrm{2}}}  \mGLsym{,}  \Gamma_{{\mathrm{3}}}  \mGLsym{,}  \mGLnt{B}  \mGLsym{,}  \Gamma_{{\mathrm{5}}} )   \vdash_{\mathsf{MS} }  \mGLnt{C}$ with
$\mathsf{CutRank} \, \mGLsym{(}  \Pi'  \mGLsym{)} \, \leq \,  \mathsf{Rank}  (  \mGLnt{A}  )$
Thus, we construct the following proof $\Pi$:

\[
        \inferrule* [flushleft,right=$\mGLdruleMSTXXGExName{}$,left=$\Pi :$] {
          \inferrule* [flushleft,right=$\mGLdruleMSTXXCutName{}$,left=] {
          \inferrule* [flushleft,left=$\Pi_{{\mathrm{3}}} : $] {
            \pi_3
          }{\delta_{{\mathrm{2}}}  \odot  \Delta_{{\mathrm{2}}}  \mGLsym{;}  \Gamma_{{\mathrm{4}}}  \vdash_{\mathsf{MS} }  \mGLnt{B}}\\
          \inferrule* [flushleft,right=,left=$\Pi' :$] {
            \pi'
          }{\delta_{{\mathrm{1}}}  \mGLsym{,}  \delta_{{\mathrm{3}}}  \odot  \Delta_{{\mathrm{1}}}  \mGLsym{,}  \Delta_{{\mathrm{3}}}  \mGLsym{;}   ( \Gamma_{{\mathrm{1}}}  \mGLsym{,}  \Gamma_{{\mathrm{2}}}  \mGLsym{,}  \Gamma_{{\mathrm{3}}}  \mGLsym{,}  \mGLnt{B}  \mGLsym{,}  \Gamma_{{\mathrm{5}}} )   \vdash_{\mathsf{MS} }  \mGLnt{C}}
        }{\delta_{{\mathrm{1}}}  \mGLsym{,}  \delta_{{\mathrm{3}}}  \mGLsym{,}  \delta_{{\mathrm{2}}}  \odot  \Delta_{{\mathrm{1}}}  \mGLsym{,}  \Delta_{{\mathrm{3}}}  \mGLsym{,}  \Delta_{{\mathrm{2}}}  \mGLsym{;}   ( \Gamma_{{\mathrm{1}}}  \mGLsym{,}  \Gamma_{{\mathrm{2}}}  \mGLsym{,}  \Gamma_{{\mathrm{3}}}  \mGLsym{,}  \Gamma_{{\mathrm{4}}}  \mGLsym{,}  \Gamma_{{\mathrm{5}}} )   \vdash_{\mathsf{MS} }  \mGLnt{C} }
        }{\delta_{{\mathrm{1}}}  \mGLsym{,}  \delta_{{\mathrm{2}}}  \mGLsym{,}  \delta_{{\mathrm{3}}}  \odot  \Delta_{{\mathrm{1}}}  \mGLsym{,}  \Delta_{{\mathrm{2}}}  \mGLsym{,}  \Delta_{{\mathrm{3}}}  \mGLsym{;}   ( \Gamma_{{\mathrm{1}}}  \mGLsym{,}  \Gamma_{{\mathrm{2}}}  \mGLsym{,}  \Gamma_{{\mathrm{3}}}  \mGLsym{,}  \Gamma_{{\mathrm{4}}}  \mGLsym{,}  \Gamma_{{\mathrm{5}}} )   \vdash_{\mathsf{MS} }  \mGLnt{C}}
        \]

Given the above we know:
\[
\begin{array}{lll}
  \mathsf{CutRank} \, \mGLsym{(}  \Pi'  \mGLsym{)} \, \leq \,  \mathsf{Rank}  (  \mGLnt{A}  )\\
  \mathsf{CutRank} \, \mGLsym{(}  \Pi_{{\mathrm{2}}}  \mGLsym{)} \, \mGLsym{=} \, \mGLkw{Max} \, \mGLsym{(}   \mathsf{CutRank} \, \mGLsym{(}  \Pi_{{\mathrm{3}}}  \mGLsym{)}  \mGLsym{,}  \mathsf{CutRank} \, \mGLsym{(}  \Pi_{{\mathrm{4}}}  \mGLsym{)}  \mGLsym{,}   \mathsf{Rank}  (  \mGLnt{B}  )   + 1   \mGLsym{)} \, \leq \,  \mathsf{Rank}  (  \mGLnt{A}  )\\
\end{array}
\]
This implies:
\[
\begin{array}{lll}
  \mathsf{CutRank} \, \mGLsym{(}  \Pi_{{\mathrm{3}}}  \mGLsym{)} \, \leq \,  \mathsf{Rank}  (  \mGLnt{A}  )\\
  \mathsf{Rank}  (  \mGLnt{B}  )   + 1  \, \leq \,  \mathsf{Rank}  (  \mGLnt{A}  )
\end{array}
\]
Thus, we obtain our result:
\[
  \mathsf{CutRank} \, \mGLsym{(}  \Pi  \mGLsym{)} \, \mGLsym{=} \, \mGLkw{Max} \, \mGLsym{(}   \mathsf{CutRank} \, \mGLsym{(}  \Pi_{{\mathrm{3}}}  \mGLsym{)}  \mGLsym{,}  \mathsf{CutRank} \, \mGLsym{(}  \Pi'  \mGLsym{)}  \mGLsym{,}   \mathsf{Rank}  (  \mGLnt{B}  )   + 1   \mGLsym{)} \, \leq \,  \mathsf{Rank}  (  \mGLnt{A}  )
  \]

  \item \textbf{cut vs. right-side cut (right case):} Suppose we have:
  \[
    \inferrule* [flushleft,right=,left=$\Pi_{{\mathrm{1}}} :$] {
      \pi_1
    }{\delta_{{\mathrm{3}}}  \odot  \Delta_{{\mathrm{3}}}  \mGLsym{;}  \Gamma_{{\mathrm{2}}}  \vdash_{\mathsf{MS} }  \mGLnt{A}}
   \]
   \[
    \inferrule* [flushleft,right=$\mGLdruleMSTXXCutName{}$,left=$\Pi_{{\mathrm{2}}} :$] {
      \inferrule* [flushleft,right=,left=$\Pi_{{\mathrm{3}}} :$] {
        \pi_3
      }{\delta_{{\mathrm{2}}}  \odot  \Delta_{{\mathrm{2}}}  \mGLsym{;}  \Gamma_{{\mathrm{4}}}  \vdash_{\mathsf{MS} }  \mGLnt{B}}\\
      \inferrule* [flushleft,right=,left=$\Pi_{{\mathrm{4}}} :$] {
        \pi_4
      }{\delta_{{\mathrm{1}}}  \odot  \Delta_{{\mathrm{1}}}  \mGLsym{;}   ( \Gamma_{{\mathrm{1}}}  \mGLsym{,}  \mGLnt{B}  \mGLsym{,}  \Gamma_{{\mathrm{3}}}  \mGLsym{,}  \mGLnt{A}  \mGLsym{,}  \Gamma_{{\mathrm{5}}} )   \vdash_{\mathsf{MS} }  \mGLnt{C}}
    }{\delta_{{\mathrm{1}}}  \mGLsym{,}  \delta_{{\mathrm{2}}}  \odot  \Delta_{{\mathrm{1}}}  \mGLsym{,}  \Delta_{{\mathrm{2}}}  \mGLsym{;}   ( \Gamma_{{\mathrm{1}}}  \mGLsym{,}  \mGLnt{B}  \mGLsym{,}  \Gamma_{{\mathrm{3}}}  \mGLsym{,}  \Gamma_{{\mathrm{4}}}  \mGLsym{,}  \Gamma_{{\mathrm{5}}} )   \vdash_{\mathsf{MS} }  \mGLnt{C}}
  \]
This is similar to the previous case.

    \end{enumerate}

  \item \textbf{eta-Expansion}
  \begin{enumerate}
    \item \textbf{Tensor:}
    \[
      \begin{array}{lll}
        \inferrule* [flushleft,right=$\mGLdruleMSTXXidName{}$,left=$\Pi_{{\mathrm{1}}} :$] {
          \,
        }{\emptyset  \odot  \emptyset  \mGLsym{;}  \mGLnt{A}  \otimes  \mGLnt{B}  \vdash_{\mathsf{MS} }  \mGLnt{A}  \otimes  \mGLnt{B}}
        & \quad &
        \inferrule* [flushleft,right=$\mGLdruleMSTXXidName{}$,left=$\Pi_{{\mathrm{2}}} :$] {
          \,
        }{\emptyset  \odot  \emptyset  \mGLsym{;}  \mGLnt{A}  \otimes  \mGLnt{B}  \vdash_{\mathsf{MS} }  \mGLnt{A}  \otimes  \mGLnt{B}}
      \end{array}
      \]
      Since $\mathsf{Depth}  (  \Pi_{{\mathrm{1}}}  )   +   \mathsf{Depth}  (  \Pi_{{\mathrm{2}}}  )   \, \mGLsym{=} \, 0$ we find ourselves in a base
      case.   $\Pi_{{\mathrm{1}}}$ and $\Pi_{{\mathrm{2}}}$ are cut-free, so we must produce a
      cut-free proof, $\Pi$ with 
      which we do as follows:
      \[
      \inferrule* [flushleft,right=$\mGLdruleMSTXXTenLName{}$] {
        \inferrule* [flushleft,right=$\mGLdruleMSTXXTenRName{}$] {
          \inferrule* [flushleft,right=$\mGLdruleMSTXXidName{}$] {
            \,
          }{\emptyset  \odot  \emptyset  \mGLsym{;}  \mGLnt{A}  \vdash_{\mathsf{MS} }  \mGLnt{A}}\\
          \inferrule* [flushleft,right=$\mGLdruleMSTXXidName{}$] {
            \,
          }{\emptyset  \odot  \emptyset  \mGLsym{;}  \mGLnt{B}  \vdash_{\mathsf{MS} }  \mGLnt{B}}
        }{\emptyset  \odot  \emptyset  \mGLsym{;}  \mGLnt{A}  \mGLsym{,}  \mGLnt{B}  \vdash_{\mathsf{MS} }  \mGLnt{A}  \otimes  \mGLnt{B}}
      }{\emptyset  \odot  \emptyset  \mGLsym{;}  \mGLnt{A}  \otimes  \mGLnt{B}  \vdash_{\mathsf{MS} }  \mGLnt{A}  \otimes  \mGLnt{B}}
      \]

    \item \textbf{Tensor Unit:}
    \[
      \begin{array}{lll}
        \inferrule* [flushleft,right=$\mGLdruleMSTXXidName{}$,left=$\Pi_{{\mathrm{1}}} :$] {
          \,
        }{\emptyset  \odot  \emptyset  \mGLsym{;}  \mathsf{I}  \vdash_{\mathsf{MS} }  \mathsf{I}}
        & \quad &
        \inferrule* [flushleft,right=$\mGLdruleMSTXXidName{}$,left=$\Pi_{{\mathrm{2}}} :$] {
          \,
        }{\emptyset  \odot  \emptyset  \mGLsym{;}  \mathsf{I}  \vdash_{\mathsf{MS} }  \mathsf{I}}
      \end{array}
      \]
      Since $\mathsf{Depth}  (  \Pi_{{\mathrm{1}}}  )   +   \mathsf{Depth}  (  \Pi_{{\mathrm{2}}}  )   \, \mGLsym{=} \, 0$ we find ourselves in a base
      case. $\Pi_{{\mathrm{1}}}$ and $\Pi_{{\mathrm{2}}}$ are cut-free, so we must produce a
      cut-free proof, $\Pi$ with 
      which we do as follows:
      \[
      \inferrule* [flushleft,right=$\mGLdruleMSTXXUnitLName{}$] {
        \inferrule* [flushleft,right=$\mGLdruleMSTXXUnitRName{}$] {
          \,
        }{\emptyset  \odot  \emptyset  \mGLsym{;}  \emptyset  \vdash_{\mathsf{MS} }  \mathsf{I}}
      }{\emptyset  \odot  \emptyset  \mGLsym{;}  \mathsf{I}  \vdash_{\mathsf{MS} }  \mathsf{I}}
      \]
    \item \textbf{Grd:} 
    \[
      \begin{array}{lll}
        \inferrule* [flushleft,right=$\mGLdruleMSTXXidName{}$,left=$\Pi_{{\mathrm{1}}} :$] {
          \,
        }{\emptyset  \odot  \emptyset  \mGLsym{;}   \mathsf{Grd} _{ \mGLnt{r} }\, \mGLnt{X}   \vdash_{\mathsf{MS} }   \mathsf{Grd} _{ \mGLnt{r} }\, \mGLnt{X}}
        & \quad &
        \inferrule* [flushleft,right=$\mGLdruleMSTXXidName{}$,left=$\Pi_{{\mathrm{2}}} :$] {
          \,
        }{\emptyset  \odot  \emptyset  \mGLsym{;}   \mathsf{Grd} _{ \mGLnt{r} }\, \mGLnt{X}   \vdash_{\mathsf{MS} }   \mathsf{Grd} _{ \mGLnt{r} }\, \mGLnt{X}}
      \end{array}
      \]
      Since $\mathsf{Depth}  (  \Pi_{{\mathrm{1}}}  )   +   \mathsf{Depth}  (  \Pi_{{\mathrm{2}}}  )   \, \mGLsym{=} \, 0$ we find ourselves in a base
      case.$\Pi_{{\mathrm{1}}}$ and $\Pi_{{\mathrm{2}}}$ are cut-free, so we must produce a
      cut-free proof, $\Pi$ , which we do as follows:
      \[
      \inferrule* [flushleft,right=$\mGLdruleMSTXXGrdLName{}$] {
        \inferrule* [flushleft,right=$\mGLdruleMSTXXGrdRName{}$] {
          \inferrule* [flushleft,right=$\mGLdruleGSTXXidName{}$] {
          \,
          }{1  \odot  \mGLnt{X}  \vdash_{\mathsf{GS} }  \mGLnt{X}}
        }{\mGLnt{r}  *  1  \odot  \mGLnt{X}  \mGLsym{;}  \emptyset  \vdash_{\mathsf{MS} }   \mathsf{Grd} _{ \mGLnt{r} }\, \mGLnt{X}}
      }{\emptyset  \odot  \emptyset  \mGLsym{;}   \mathsf{Grd} _{ \mGLnt{r} }\, \mGLnt{X}   \vdash_{\mathsf{MS} }   \mathsf{Grd} _{ \mGLnt{r} }\, \mGLnt{X}}
      \]
  \end{enumerate}
 \item \textbf{Axiom Cases}
  \begin{enumerate}
   \item \textbf{Axiom on the left:}
   \[
          \begin{array}{lll}
            \inferrule* [flushleft,right=$\mGLdruleMSTXXidName{}$,left=$\Pi_{{\mathrm{1}}} :$] {
              \,
            }{\emptyset  \odot  \emptyset  \mGLsym{;}  \mGLnt{A}  \vdash_{\mathsf{MS} }  \mGLnt{A}}
            & \quad &
            \inferrule* [flushleft,right=,left=$\Pi_{{\mathrm{2}}} :$] {
              \pi_2
            }{\delta_{{\mathrm{1}}}  \odot  \Delta_{{\mathrm{1}}}  \mGLsym{;}  \Gamma_{{\mathrm{1}}}  \mGLsym{,}  \mGLnt{A}  \mGLsym{,}  \Gamma_{{\mathrm{2}}}  \vdash_{\mathsf{MS} }  \mGLnt{B}}
          \end{array}
          \]
      We know:
       $\Pi_{{\mathrm{2}}}$ is a proof of $\delta_{{\mathrm{1}}}  \mGLsym{,}  \emptyset  \odot  \Delta_{{\mathrm{1}}}  \mGLsym{,}  \emptyset  \mGLsym{;}  \Gamma_{{\mathrm{1}}}  \mGLsym{,}  \mGLnt{A}  \mGLsym{,}  \Gamma_{{\mathrm{2}}}  \vdash_{\mathsf{MS} }  \mGLnt{B}$
      Thus, we construct the following proof $\Pi \, \mGLsym{=} \, \Pi_{{\mathrm{2}}}$.
   \item \textbf{Axiom on the right:}
   \[
    \begin{array}{lll}
     \inferrule* [flushleft,right=,left=$\Pi_{{\mathrm{1}}} :$] {
        \pi_1
      }{\delta_{{\mathrm{2}}}  \odot  \Delta_{{\mathrm{2}}}  \mGLsym{;}  \Gamma_{{\mathrm{2}}}  \vdash_{\mathsf{MS} }  \mGLnt{A}}
      & \quad &
       \inferrule* [flushleft,right=$\mGLdruleMSTXXidName{}$,left=$\Pi_{{\mathrm{2}}} :$] {
        \,
      }{\emptyset  \odot  \emptyset  \mGLsym{;}  \mGLnt{A}  \vdash_{\mathsf{MS} }  \mGLnt{A}}
    \end{array}
    \]
We know:
 $\Pi_{{\mathrm{1}}}$ is a proof of $\emptyset  \mGLsym{,}  \delta_{{\mathrm{2}}}  \odot  \emptyset  \mGLsym{,}  \Delta_{{\mathrm{2}}}  \mGLsym{;}  \Gamma_{{\mathrm{2}}}  \vdash_{\mathsf{MS} }  \mGLnt{A}$
Thus, we construct the following proof $\Pi \, \mGLsym{=} \, \Pi_{{\mathrm{1}}}$.
  \end{enumerate}

 \item \textbf{Principle Formula vs Principle Formula}
  \begin{enumerate}
   \item \textbf{Linear Tensor:}
   \[
    \begin{array}{lll}
      \inferrule* [flushleft,right=$\mGLdruleMSTXXTenRName{}$,left=$\Pi_{{\mathrm{1}}} :$] {
        \inferrule* [flushleft,right=,left=$\Pi_{{\mathrm{3}}} :$] {
          \pi_3
        }{\delta_{{\mathrm{2}}}  \odot  \Delta_{{\mathrm{2}}}  \mGLsym{;}  \Gamma_{{\mathrm{2}}}  \vdash_{\mathsf{MS} }  \mGLnt{A}}\\
        \inferrule* [flushleft,right=,left=$\Pi_{{\mathrm{4}}} :$] {
          \pi_4
        }{\delta_{{\mathrm{3}}}  \odot  \Delta_{{\mathrm{3}}}  \mGLsym{;}  \Gamma_{{\mathrm{3}}}  \vdash_{\mathsf{MS} }  \mGLnt{B}}
      }{(  \delta_{{\mathrm{2}}}  \mGLsym{,}  \delta_{{\mathrm{3}}}  )   \odot   ( \Delta_{{\mathrm{2}}}  \mGLsym{,}  \Delta_{{\mathrm{3}}} )   \mGLsym{;}  \Gamma_{{\mathrm{2}}}  \mGLsym{,}  \Gamma_{{\mathrm{3}}}  \vdash_{\mathsf{MS} }  \mGLnt{A}  \otimes  \mGLnt{B}}
      & \quad &
      \inferrule* [flushleft,right=$\mGLdruleMSTXXTenLName{}$,left=$\Pi_{{\mathrm{2}}} :$] {
        \inferrule* [flushleft,right=,left=$\Pi_{{\mathrm{5}}} :$] {
          \pi_5,
        }{\delta_{{\mathrm{1}}}  \odot  \Delta_{{\mathrm{1}}}  \mGLsym{;}  \Gamma_{{\mathrm{1}}}  \mGLsym{,}  \mGLnt{A}  \mGLsym{,}  \mGLnt{B}  \mGLsym{,}  \Gamma_{{\mathrm{4}}}  \vdash_{\mathsf{MS} }  \mGLnt{C}}
    }{\delta_{{\mathrm{1}}}  \odot  \Delta_{{\mathrm{1}}}  \mGLsym{;}  \Gamma_{{\mathrm{1}}}  \mGLsym{,}  \mGLnt{A}  \otimes  \mGLnt{B}  \mGLsym{,}  \Gamma_{{\mathrm{4}}}  \vdash_{\mathsf{MS} }  \mGLnt{C}}
    \end{array}
    \]
We know:
\[
\begin{array}{lll}
\mathsf{CutRank} \, \mGLsym{(}  \Pi_{{\mathrm{1}}}  \mGLsym{)} \, \mGLsym{=} \, \mGLkw{Max} \, \mGLsym{(}  \mathsf{CutRank} \, \mGLsym{(}  \Pi_{{\mathrm{3}}}  \mGLsym{)}  \mGLsym{,}  \mathsf{CutRank} \, \mGLsym{(}  \Pi_{{\mathrm{4}}}  \mGLsym{)}  \mGLsym{)} \, \leq \,  \mathsf{Rank}  (  \mGLnt{A}  \otimes  \mGLnt{B}  )\\
\mathsf{CutRank} \, \mGLsym{(}  \Pi_{{\mathrm{3}}}  \mGLsym{)} \, \leq \,  \mathsf{Rank}  (  \mGLnt{A}  \otimes  \mGLnt{B}  )\\
\mathsf{CutRank} \, \mGLsym{(}  \Pi_{{\mathrm{4}}}  \mGLsym{)} \, \leq \,  \mathsf{Rank}  (  \mGLnt{A}  \otimes  \mGLnt{B}  )\\
\mathsf{CutRank} \, \mGLsym{(}  \Pi_{{\mathrm{2}}}  \mGLsym{)} \, \mGLsym{=} \, \mathsf{CutRank} \, \mGLsym{(}  \Pi_{{\mathrm{5}}}  \mGLsym{)} \, \leq \,  \mathsf{Rank}  (  \mGLnt{A}  \otimes  \mGLnt{B}  )\\
\mathsf{Rank}  (  \mGLnt{A}  )  \, \mGLsym{<} \,  \mathsf{Rank}  (  \mGLnt{A}  \otimes  \mGLnt{B}  )\\
\mathsf{Rank}  (  \mGLnt{B}  )  \, \mGLsym{<} \,  \mathsf{Rank}  (  \mGLnt{A}  \otimes  \mGLnt{B}  )\\
\end{array}
\]
Instead of applying the induction hypothesis,
we can directly build the proof $\Pi$:
\begin{gather*}
\inferrule* [flushleft,right=$\mGLdruleMSTXXCutName{}$,left=$\Pi :$] {
  \inferrule* [flushleft,left=$\Pi_{{\mathrm{4}}} : $] {
      \pi_4
    }{\delta_{{\mathrm{3}}}  \odot  \Delta_{{\mathrm{3}}}  \mGLsym{;}  \Gamma_{{\mathrm{3}}}  \vdash_{\mathsf{MS} }  \mGLnt{B}}\\
    \inferrule* [flushleft,right=$\mGLdruleMSTXXCutName{}$] {
      \inferrule* [flushleft,left=$\Pi_{{\mathrm{3}}} : $] {
      \pi_3
    }{\delta_{{\mathrm{2}}}  \odot  \Delta_{{\mathrm{2}}}  \mGLsym{;}  \Gamma_{{\mathrm{2}}}  \vdash_{\mathsf{MS} }  \mGLnt{A}}\\
    \inferrule* [flushleft,right=,left=$\Pi_{{\mathrm{5}}} :$] {
      \pi_5
    }{\delta_{{\mathrm{1}}}  \odot  \Delta_{{\mathrm{1}}}  \mGLsym{;}  \Gamma_{{\mathrm{1}}}  \mGLsym{,}  \mGLnt{A}  \mGLsym{,}  \mGLnt{B}  \mGLsym{,}  \Gamma_{{\mathrm{4}}}  \vdash_{\mathsf{MS} }  \mGLnt{C}}
    }{\delta_{{\mathrm{1}}}  \mGLsym{,}  \delta_{{\mathrm{2}}}  \odot  \Delta_{{\mathrm{1}}}  \mGLsym{,}  \Delta_{{\mathrm{2}}}  \mGLsym{;}  \Gamma_{{\mathrm{1}}}  \mGLsym{,}  \Gamma_{{\mathrm{2}}}  \mGLsym{,}  \mGLnt{B}  \mGLsym{,}  \Gamma_{{\mathrm{4}}}  \vdash_{\mathsf{MS} }  \mGLnt{C}}
}{\delta_{{\mathrm{1}}}  \mGLsym{,}  \delta_{{\mathrm{2}}}  \mGLsym{,}  \delta_{{\mathrm{3}}}  \odot  \Delta_{{\mathrm{1}}}  \mGLsym{,}  \Delta_{{\mathrm{2}}}  \mGLsym{,}  \Delta_{{\mathrm{3}}}  \mGLsym{;}  \Gamma_{{\mathrm{1}}}  \mGLsym{,}  \Gamma_{{\mathrm{2}}}  \mGLsym{,}  \Gamma_{{\mathrm{3}}}  \mGLsym{,}  \Gamma_{{\mathrm{4}}}  \vdash_{\mathsf{MS} }  \mGLnt{C}}
\end{gather*}
So $\mathsf{CutRank} \, \mGLsym{(}  \Pi  \mGLsym{)} \, \mGLsym{=} \, \mGLkw{Max} \, \mGLsym{(}    \mathsf{CutRank} \, \mGLsym{(}  \Pi_{{\mathrm{3}}}  \mGLsym{)}  \mGLsym{,}  \mathsf{CutRank} \, \mGLsym{(}  \Pi_{{\mathrm{4}}}  \mGLsym{)}  \mGLsym{,}  \mathsf{CutRank} \, \mGLsym{(}  \Pi_{{\mathrm{5}}}  \mGLsym{)}  \mGLsym{,}   \mathsf{Rank}  (  \mGLnt{X}  )   + 1   \mGLsym{,}   \mathsf{Rank}  (  \mGLnt{Y}  )   + 1   \mGLsym{)} \, \leq \,  \mathsf{Rank}  (  \mGLnt{X}  \boxtimes  \mGLnt{Y}  )$
   \item \textbf{Tensor Unit:}
   \[
    \inferrule* [flushleft,right=$\mGLdruleMSTXXUnitRName{}$,left=$\Pi_{{\mathrm{1}}} :$] {
       \,
    }{\emptyset  \odot  \emptyset  \mGLsym{;}  \emptyset  \vdash_{\mathsf{MS} }  \mathsf{I}}
\]

\[
  \inferrule* [flushleft,right=$\mGLdruleMSTXXUnitLName{}$,left=$\Pi_{{\mathrm{2}}} :$] {
    \inferrule* [flushleft,left=$\Pi_{{\mathrm{3}}} : $] {
      \pi_3
    }{\delta_{{\mathrm{1}}}  \odot  \Delta_{{\mathrm{1}}}  \mGLsym{;}  \Gamma_{{\mathrm{1}}}  \mGLsym{,}  \Gamma_{{\mathrm{2}}}  \vdash_{\mathsf{MS} }  \mGLnt{A}}
  }{\delta_{{\mathrm{1}}}  \odot  \Delta_{{\mathrm{1}}}  \mGLsym{;}  \Gamma_{{\mathrm{1}}}  \mGLsym{,}  \mathsf{I}  \mGLsym{,}  \Gamma_{{\mathrm{2}}}  \vdash_{\mathsf{MS} }  \mGLnt{A}}
  \]

We know:
\[
\begin{array}{lll}
  \mathsf{CutRank} \, \mGLsym{(}  \Pi_{{\mathrm{3}}}  \mGLsym{)} \, \leq \, \mathsf{CutRank} \, \mGLsym{(}  \Pi_{{\mathrm{2}}}  \mGLsym{)} \, \leq \,  \mathsf{Rank}  (  \mathsf{I}  )\\
\end{array}
\]
So $\Pi_{{\mathrm{3}}}$ is a proof of
$\delta_{{\mathrm{1}}}  \mGLsym{,}  \emptyset  \odot  \Delta_{{\mathrm{1}}}  \mGLsym{,}  \emptyset  \mGLsym{;}  \Gamma_{{\mathrm{1}}}  \mGLsym{,}  \emptyset  \mGLsym{,}  \Gamma_{{\mathrm{2}}}  \vdash_{\mathsf{MS} }  \mGLnt{A}$ with
$\mathsf{CutRank} \, \mGLsym{(}  \Pi_{{\mathrm{3}}}  \mGLsym{)} \, \leq \,  \mathsf{Rank}  (  \mathsf{I}  )$.
Thus, we construct the following proof $\Pi \, \mGLsym{=} \, \Pi_{{\mathrm{3}}}$
   \item \textbf{Grd:}
   \[
    \inferrule* [flushleft,right=$\mGLdruleMSTXXGrdRName{}$, left=$\Pi_{{\mathrm{1}}} :$] {
      \inferrule* [flushleft,right=, left=$\Pi_{{\mathrm{3}}} :$] {
        \pi_3
      }{\delta_{{\mathrm{2}}}  \odot  \Delta_{{\mathrm{2}}}  \vdash_{\mathsf{GS} }  \mGLnt{X}}
    }{\mGLnt{r}  *  \delta_{{\mathrm{2}}}  \odot  \Delta_{{\mathrm{2}}}  \mGLsym{;}  \emptyset  \vdash_{\mathsf{MS} }   \mathsf{Grd} _{ \mGLnt{r} }\, \mGLnt{X}}
    \]
    \[
      \inferrule* [flushleft,right=$\mGLdruleMSTXXGrdLName{}$, left=$\Pi_{{\mathrm{2}}} :$] {
        \inferrule* [flushleft,right=, left=$\Pi_{{\mathrm{4}}} :$] {
          \pi_4
        }{(  \delta_{{\mathrm{1}}}  \mGLsym{,}  \mGLnt{r}  )   \odot   ( \Delta_{{\mathrm{1}}}  \mGLsym{,}  \mGLnt{X} )   \mGLsym{;}  \Gamma  \vdash_{\mathsf{MS} }  \mGLnt{A}}
      }{\delta_{{\mathrm{1}}}  \odot  \Delta_{{\mathrm{1}}}  \mGLsym{;}   (  \mathsf{Grd} _{ \mGLnt{r} }\, \mGLnt{X}   \mGLsym{,}  \Gamma )   \vdash_{\mathsf{MS} }  \mGLnt{A}}
      \]
      We know the following:
      \[
        \begin{array}{lll}
          \mathsf{Depth}  (  \Pi_{{\mathrm{1}}}  )   +   \mathsf{Depth}  (  \Pi_{{\mathrm{2}}}  ) = \mGLsym{(}    \mathsf{Depth}  (  \Pi_{{\mathrm{3}}}  )   + 1   \mGLsym{)}  +  \mGLsym{(}    \mathsf{Depth}  (  \Pi_{{\mathrm{4}}}  )   + 1   \mGLsym{)}\\
          \mathsf{CutRank} \, \mGLsym{(}  \Pi_{{\mathrm{1}}}  \mGLsym{)} \, \mGLsym{=} \, \mathsf{CutRank} \, \mGLsym{(}  \Pi_{{\mathrm{3}}}  \mGLsym{)} \, \leq \,  \mathsf{Rank}  (   \mathsf{Grd} _{ \mGLnt{r} }\, \mGLnt{X}   )\\
          \mathsf{CutRank} \, \mGLsym{(}  \Pi_{{\mathrm{2}}}  \mGLsym{)} \, \mGLsym{=} \, \mathsf{CutRank} \, \mGLsym{(}  \Pi_{{\mathrm{4}}}  \mGLsym{)} \, \leq \,  \mathsf{Rank}  (   \mathsf{Grd} _{ \mGLnt{r} }\, \mGLnt{X}   )
        \end{array}
      \]
      These imply that:
      \[
        \begin{array}{lll}
          \mathsf{Depth}  (  \Pi_{{\mathrm{3}}}  )   +   \mathsf{Depth}  (  \Pi_{{\mathrm{4}}}  )   \, \mGLsym{<} \,  \mathsf{Depth}  (  \Pi_{{\mathrm{1}}}  )   +   \mathsf{Depth}  (  \Pi_{{\mathrm{2}}}  )\\
          \mathsf{CutRank} \, \mGLsym{(}  \Pi_{{\mathrm{3}}}  \mGLsym{)} \, \leq \,  \mathsf{Rank}  (   \mathsf{Grd} _{ \mGLnt{r} }\, \mGLnt{X}   )
          \mathsf{CutRank} \, \mGLsym{(}  \Pi_{{\mathrm{4}}}  \mGLsym{)} \, \leq \,  \mathsf{Rank}  (   \mathsf{Grd} _{ \mGLnt{r} }\, \mGLnt{X}   )
        \end{array}
      \]
      Thus, we apply the induction hypothesis of Lemma~\ref{lemma:cut_reduction_for_mgl} (2)
      to $\Pi_{{\mathrm{1}}}$ and $\Pi_{{\mathrm{3}}}$ to obtain a proof $\Pi'$ of the sequent
      $(  \delta_{{\mathrm{1}}}  \mGLsym{,}  \mGLnt{r}  *  \delta_{{\mathrm{2}}}  )   \odot   ( \Delta_{{\mathrm{1}}}  \mGLsym{,}  \Delta_{{\mathrm{2}}} )   \mGLsym{;}  \Gamma  \vdash_{\mathsf{MS} }  \mGLnt{A}$ with $\mathsf{CutRank} \, \mGLsym{(}  \Pi'  \mGLsym{)} \, \leq \,  \mathsf{Rank}  (   \mathsf{Grd} _{ \mGLnt{r} }\, \mGLnt{X}   )$.
      Which is exactly what we want.

      \item \textbf{Linear Implication:}
      \[
      \inferrule* [flushleft,right=$\mGLdruleMSTXXImpRName{}$,left=$\Pi_{{\mathrm{1}}} :$] {
          \inferrule* [flushleft,left=$\Pi_{{\mathrm{3}}} : $] {
            \pi_3
          }{\delta_{{\mathrm{3}}}  \odot  \Delta_{{\mathrm{3}}}  \mGLsym{;}  \Gamma_{{\mathrm{2}}}  \mGLsym{,}  \mGLnt{A}  \vdash_{\mathsf{MS} }  \mGLnt{B}}\\
        }{\delta_{{\mathrm{3}}}  \odot  \Delta_{{\mathrm{3}}}  \mGLsym{;}  \Gamma_{{\mathrm{2}}}  \vdash_{\mathsf{MS} }  \mGLnt{A}  \multimap  \mGLnt{B}}
        \]
        \[
      \inferrule* [flushleft,right=$\mGLdruleMSTXXImpLName{}$,left=$\Pi_{{\mathrm{2}}} :$] {
         \inferrule* [flushleft,left=$\Pi_{{\mathrm{4}}} : $] {
          \pi_4
        }{\delta_{{\mathrm{1}}}  \odot  \Delta_{{\mathrm{1}}}  \mGLsym{;}  \Gamma_{{\mathrm{3}}}  \vdash_{\mathsf{MS} }  \mGLnt{A} }\\
        \inferrule* [flushleft,right=$\mGLdruleMSTXXImpLName{}$,left=$\Pi_{{\mathrm{5}}} :$] {
         \pi_5
         }{ \delta_{{\mathrm{2}}}  \odot  \Delta_{{\mathrm{2}}}  \mGLsym{;}  \Gamma_{{\mathrm{1}}}  \mGLsym{,}  \mGLnt{B}  \mGLsym{,}  \Gamma_{{\mathrm{4}}}  \vdash_{\mathsf{MS} }  \mGLnt{C}}
          }{\delta_{{\mathrm{1}}}  \mGLsym{,}  \delta_{{\mathrm{2}}}  \odot  \Delta_{{\mathrm{1}}}  \mGLsym{,}  \Delta_{{\mathrm{2}}}  \mGLsym{;}  \Gamma_{{\mathrm{1}}}  \mGLsym{,}  \mGLnt{A}  \multimap  \mGLnt{B}  \mGLsym{,}  \Gamma_{{\mathrm{3}}}  \mGLsym{,}  \Gamma_{{\mathrm{4}}}  \vdash_{\mathsf{MS} }  \mGLnt{C} }
         \]
         We know:
    \[
    \begin{array}{lll}
      \mathsf{CutRank} \, \mGLsym{(}  \Pi_{{\mathrm{1}}}  \mGLsym{)} \, \mGLsym{=} \, \mGLkw{Max} \, \mGLsym{(}  \mathsf{CutRank} \, \mGLsym{(}  \Pi_{{\mathrm{3}}}  \mGLsym{)}  \mGLsym{,}  \mathsf{CutRank} \, \mGLsym{(}  \Pi_{{\mathrm{4}}}  \mGLsym{)}  \mGLsym{)} \, \leq \,  \mathsf{Rank}  (  \mGLnt{A}  \multimap  \mGLnt{B}  )\\
      \mathsf{CutRank} \, \mGLsym{(}  \Pi_{{\mathrm{3}}}  \mGLsym{)} \, \leq \,  \mathsf{Rank}  (  \mGLnt{A}  \multimap  \mGLnt{B}  )\\
      \mathsf{CutRank} \, \mGLsym{(}  \Pi_{{\mathrm{4}}}  \mGLsym{)} \, \leq \,  \mathsf{Rank}  (  \mGLnt{A}  \multimap  \mGLnt{B}  )\\
      \mathsf{CutRank} \, \mGLsym{(}  \Pi_{{\mathrm{2}}}  \mGLsym{)} \, \mGLsym{=} \, \mathsf{CutRank} \, \mGLsym{(}  \Pi_{{\mathrm{5}}}  \mGLsym{)} \, \leq \,  \mathsf{Rank}  (  \mGLnt{A}  \multimap  \mGLnt{B}  )\\
      \mathsf{Rank}  (  \mGLnt{A}  )  \, \mGLsym{<} \,  \mathsf{Rank}  (  \mGLnt{A}  \multimap  \mGLnt{B}  )\\
      \mathsf{Rank}  (  \mGLnt{B}  )  \, \mGLsym{<} \,  \mathsf{Rank}  (  \mGLnt{A}  \multimap  \mGLnt{B}  )\\
    \end{array}
    \]
    Instead of applying the induction hypothesis,
    we can directly build the proof $\Pi$:
    \begin{gather*}
    \inferrule* [flushleft,right=$\mGLdruleMSTXXGExName{}$] {
    \inferrule* [flushleft,right=$\mGLdruleMSTXXGExName{}$] {
      \inferrule* [flushleft,right=$\mGLdruleMSTXXExName{}$] {
      \inferrule* [flushleft,right=$\mGLdruleMSTXXCutName{}$] {
        \inferrule* [flushleft,right=$\mGLdruleMSTXXCutName{}$] {
          \inferrule* [flushleft,left=$\Pi_{{\mathrm{4}}} : $] {
           \pi_4
          }{\delta_{{\mathrm{1}}}  \odot  \Delta_{{\mathrm{1}}}  \mGLsym{;}  \Gamma_{{\mathrm{2}}}  \vdash_{\mathsf{MS} }  \mGLnt{A} }\\
            \inferrule* [flushleft,left=$\Pi_{{\mathrm{3}}} : $] {
            \pi_3
          }{\delta_{{\mathrm{3}}}  \odot  \Delta_{{\mathrm{3}}}  \mGLsym{;}  \Gamma_{{\mathrm{3}}}  \mGLsym{,}  \mGLnt{A}  \vdash_{\mathsf{MS} }  \mGLnt{B}}
          }{\delta_{{\mathrm{3}}}  \mGLsym{,}  \delta_{{\mathrm{1}}}  \odot  \Delta_{{\mathrm{3}}}  \mGLsym{,}  \Delta_{{\mathrm{1}}}  \mGLsym{;}  \Gamma_{{\mathrm{3}}}  \mGLsym{,}  \Gamma_{{\mathrm{2}}}  \vdash_{\mathsf{MS} }  \mGLnt{B} } \\
          \inferrule* [flushleft,left=$\Pi_{{\mathrm{4}}} : $] {
            \pi_5
          }{\delta_{{\mathrm{2}}}  \odot  \Delta_{{\mathrm{2}}}  \mGLsym{;}  \Gamma_{{\mathrm{1}}}  \mGLsym{,}  \mGLnt{B}  \mGLsym{,}  \Gamma_{{\mathrm{4}}}  \vdash_{\mathsf{MS} }  \mGLnt{C}}
      }{\delta_{{\mathrm{2}}}  \mGLsym{,}  \delta_{{\mathrm{3}}}  \mGLsym{,}  \delta_{{\mathrm{1}}}  \odot  \Delta_{{\mathrm{2}}}  \mGLsym{,}  \Delta_{{\mathrm{3}}}  \mGLsym{,}  \Delta_{{\mathrm{1}}}  \mGLsym{;}  \Gamma_{{\mathrm{1}}}  \mGLsym{,}  \Gamma_{{\mathrm{3}}}  \mGLsym{,}  \Gamma_{{\mathrm{2}}}  \mGLsym{,}  \Gamma_{{\mathrm{4}}}  \vdash_{\mathsf{MS} }  \mGLnt{C}}
    }{\delta_{{\mathrm{2}}}  \mGLsym{,}  \delta_{{\mathrm{3}}}  \mGLsym{,}  \delta_{{\mathrm{1}}}  \odot  \Delta_{{\mathrm{2}}}  \mGLsym{,}  \Delta_{{\mathrm{3}}}  \mGLsym{,}  \Delta_{{\mathrm{1}}}  \mGLsym{;}  \Gamma_{{\mathrm{1}}}  \mGLsym{,}  \Gamma_{{\mathrm{2}}}  \mGLsym{,}  \Gamma_{{\mathrm{3}}}  \mGLsym{,}  \Gamma_{{\mathrm{4}}}  \vdash_{\mathsf{MS} }  \mGLnt{C}}
    }{\delta_{{\mathrm{2}}}  \mGLsym{,}  \delta_{{\mathrm{1}}}  \mGLsym{,}  \delta_{{\mathrm{3}}}  \odot  \Delta_{{\mathrm{2}}}  \mGLsym{,}  \Delta_{{\mathrm{1}}}  \mGLsym{,}  \Delta_{{\mathrm{3}}}  \mGLsym{;}  \Gamma_{{\mathrm{1}}}  \mGLsym{,}  \Gamma_{{\mathrm{2}}}  \mGLsym{,}  \Gamma_{{\mathrm{3}}}  \mGLsym{,}  \Gamma_{{\mathrm{4}}}  \vdash_{\mathsf{MS} }  \mGLnt{C}}
    }{\delta_{{\mathrm{1}}}  \mGLsym{,}  \delta_{{\mathrm{2}}}  \mGLsym{,}  \delta_{{\mathrm{3}}}  \odot  \Delta_{{\mathrm{1}}}  \mGLsym{,}  \Delta_{{\mathrm{2}}}  \mGLsym{,}  \Delta_{{\mathrm{3}}}  \mGLsym{;}  \Gamma_{{\mathrm{1}}}  \mGLsym{,}  \Gamma_{{\mathrm{2}}}  \mGLsym{,}  \Gamma_{{\mathrm{3}}}  \mGLsym{,}  \Gamma_{{\mathrm{4}}}  \vdash_{\mathsf{MS} }  \mGLnt{C}}
      \end{gather*}
      So $\mathsf{CutRank} \, \mGLsym{(}  \Pi  \mGLsym{)} \, \mGLsym{=} \, \mGLkw{Max} \, \mGLsym{(}    \mathsf{CutRank} \, \mGLsym{(}  \Pi_{{\mathrm{3}}}  \mGLsym{)}  \mGLsym{,}  \mathsf{CutRank} \, \mGLsym{(}  \Pi_{{\mathrm{4}}}  \mGLsym{)}  \mGLsym{,}  \mathsf{CutRank} \, \mGLsym{(}  \Pi_{{\mathrm{5}}}  \mGLsym{)}  \mGLsym{,}   \mathsf{Rank}  (  \mGLnt{A}  )   + 1   \mGLsym{,}   \mathsf{Rank}  (  \mGLnt{B}  )   + 1   \mGLsym{)} \, \leq \,  \mathsf{Rank}  (  \mGLnt{A}  \otimes  \mGLnt{B}  )$

    \end{enumerate}

 \item \textbf{Secondary Conclusion}
  \begin{enumerate}
   \item \textbf{Left Introduction of Linear Tensor:}
   \[
    \inferrule* [flushleft,right=$\mGLdruleMSTXXTenLName{}$,left=$\Pi_{{\mathrm{1}}} :$] {
      \inferrule* [flushleft,right=,left=$\Pi_{{\mathrm{3}}} :$] {
        \pi_3,
      }{\delta_{{\mathrm{2}}}  \odot  \Delta_{{\mathrm{2}}}  \mGLsym{;}  \Gamma_{{\mathrm{2}}}  \mGLsym{,}  \mGLnt{A}  \mGLsym{,}  \mGLnt{B}  \mGLsym{,}  \Gamma_{{\mathrm{3}}}  \vdash_{\mathsf{MS} }  \mGLnt{C}}
  }{\delta_{{\mathrm{2}}}  \odot  \Delta_{{\mathrm{2}}}  \mGLsym{;}  \Gamma_{{\mathrm{2}}}  \mGLsym{,}  \mGLnt{A}  \otimes  \mGLnt{B}  \mGLsym{,}  \Gamma_{{\mathrm{3}}}  \vdash_{\mathsf{MS} }  \mGLnt{C}}
  \]

   \[
    \inferrule* [flushleft,right=,left=$\Pi_{{\mathrm{2}}} :$] {
      \pi_2
    }{\delta_{{\mathrm{1}}}  \odot  \Delta_{{\mathrm{2}}}  \mGLsym{;}  \Gamma_{{\mathrm{1}}}  \mGLsym{,}  \mGLnt{C}  \mGLsym{,}  \Gamma_{{\mathrm{4}}}  \vdash_{\mathsf{MS} }  \mGLnt{D}}
\]
We know:
\[
\begin{array}{lll}
  \mathsf{Depth}  (  \Pi_{{\mathrm{3}}}  )   +   \mathsf{Depth}  (  \Pi_{{\mathrm{2}}}  )   \, \mGLsym{<} \,  \mathsf{Depth}  (  \Pi_{{\mathrm{1}}}  )   +   \mathsf{Depth}  (  \Pi_{{\mathrm{2}}}  )\\
  \mathsf{CutRank} \, \mGLsym{(}  \Pi_{{\mathrm{3}}}  \mGLsym{)} \, \leq \, \mathsf{CutRank} \, \mGLsym{(}  \Pi_{{\mathrm{1}}}  \mGLsym{)} \, \leq \,  \mathsf{Rank}  (  \mGLnt{C}  )
\end{array}
\]

and so applying the induction hypothesis
to $\Pi_{{\mathrm{3}}}$ and $\Pi_{{\mathrm{2}}}$
implies that there is a proof $\Pi'$ of
$\delta_{{\mathrm{1}}}  \mGLsym{,}  \delta_{{\mathrm{2}}}  \odot  \Delta_{{\mathrm{1}}}  \mGLsym{,}  \Delta_{{\mathrm{2}}}  \mGLsym{;}  \Gamma_{{\mathrm{1}}}  \mGLsym{,}  \Gamma_{{\mathrm{2}}}  \mGLsym{,}  \mGLnt{A}  \mGLsym{,}  \mGLnt{B}  \mGLsym{,}  \Gamma_{{\mathrm{3}}}  \mGLsym{,}  \Gamma_{{\mathrm{4}}}  \vdash_{\mathsf{MS} }  \mGLnt{D}$ with
$\mathsf{CutRank} \, \mGLsym{(}  \Pi'  \mGLsym{)} \, \leq \,  \mathsf{Rank}  (  \mGLnt{C}  )$ 
Thus, we construct the following proof $\Pi$:
\[
  \inferrule* [flushleft,right=$\mGLdruleMSTXXTenLName{}$,left=$\Pi :$] {
    \inferrule* [flushleft,right=,left=$\Pi' :$] {
      \pi'
    }{\delta_{{\mathrm{1}}}  \mGLsym{,}  \delta_{{\mathrm{2}}}  \odot  \Delta_{{\mathrm{1}}}  \mGLsym{,}  \Delta_{{\mathrm{2}}}  \mGLsym{;}  \Gamma_{{\mathrm{1}}}  \mGLsym{,}  \Gamma_{{\mathrm{2}}}  \mGLsym{,}  \mGLnt{A}  \mGLsym{,}  \mGLnt{B}  \mGLsym{,}  \Gamma_{{\mathrm{3}}}  \mGLsym{,}  \Gamma_{{\mathrm{4}}}  \vdash_{\mathsf{MS} }  \mGLnt{D}}
  }{\delta_{{\mathrm{1}}}  \mGLsym{,}  \delta_{{\mathrm{2}}}  \odot  \Delta_{{\mathrm{1}}}  \mGLsym{,}  \Delta_{{\mathrm{2}}}  \mGLsym{;}  \Gamma_{{\mathrm{1}}}  \mGLsym{,}  \Gamma_{{\mathrm{2}}}  \mGLsym{,}  \mGLnt{A}  \otimes  \mGLnt{B}  \mGLsym{,}  \Gamma_{{\mathrm{3}}}  \mGLsym{,}  \Gamma_{{\mathrm{4}}}  \vdash_{\mathsf{MS} }  \mGLnt{D}}
  \]
  Given the above, we know:
  \[
    \begin{array}{lll}
      \mathsf{CutRank} \, \mGLsym{(}  \Pi  \mGLsym{)} \, \mGLsym{=} \, \mathsf{CutRank} \, \mGLsym{(}  \Pi'  \mGLsym{)} \, \leq \,  \mathsf{Rank}  (  \mGLnt{C}  )\\
    \end{array}
    \]
   \item \textbf{Left Introduction of Linear Tensor Unit:}
   \[
    \inferrule* [flushleft,right=$\mGLdruleMSTXXUnitLName{}$,left=$\Pi_{{\mathrm{1}}} :$] {
        \inferrule* [flushleft,right=,left=$\Pi_{{\mathrm{3}}} :$] {
          \pi_3
        }{\delta_{{\mathrm{2}}}  \odot  \Delta_{{\mathrm{2}}}  \mGLsym{;}  \Gamma_{{\mathrm{2}}}  \mGLsym{,}  \Gamma_{{\mathrm{3}}}  \vdash_{\mathsf{MS} }  \mGLnt{A}}
    }{\delta_{{\mathrm{2}}}  \odot  \Delta_{{\mathrm{2}}}  \mGLsym{;}  \Gamma_{{\mathrm{2}}}  \mGLsym{,}  \mathsf{I}  \mGLsym{,}  \Gamma_{{\mathrm{3}}}  \vdash_{\mathsf{MS} }  \mGLnt{A}}
\]
\[
\inferrule* [flushleft,right=,left=$\Pi_{{\mathrm{2}}} :$] {
  \pi_2
}{\delta_{{\mathrm{1}}}  \odot  \Delta_{{\mathrm{1}}}  \mGLsym{;}  \Gamma_{{\mathrm{1}}}  \mGLsym{,}  \mGLnt{A}  \mGLsym{,}  \Gamma_{{\mathrm{4}}}  \vdash_{\mathsf{MS} }  \mGLnt{B}}
\]
We know:
\[
\begin{array}{lll}
  \mathsf{Depth}  (  \Pi_{{\mathrm{2}}}  )   +   \mathsf{Depth}  (  \Pi_{{\mathrm{3}}}  )   \, \mGLsym{<} \,  \mathsf{Depth}  (  \Pi_{{\mathrm{1}}}  )   +   \mathsf{Depth}  (  \Pi_{{\mathrm{2}}}  )\\
  \mathsf{CutRank} \, \mGLsym{(}  \Pi_{{\mathrm{3}}}  \mGLsym{)} \, \leq \, \mathsf{CutRank} \, \mGLsym{(}  \Pi_{{\mathrm{1}}}  \mGLsym{)} \, \leq \,  \mathsf{Rank}  (  \mGLnt{A}  )
\end{array}
\]

and so applying the induction hypothesis
to $\Pi_{{\mathrm{2}}}$ and $\Pi_{{\mathrm{3}}}$
implies that there is a proof $\Pi'$ of
$\delta_{{\mathrm{1}}}  \mGLsym{,}  \delta_{{\mathrm{2}}}  \odot  \Delta_{{\mathrm{1}}}  \mGLsym{,}  \Delta_{{\mathrm{2}}}  \mGLsym{;}  \Gamma_{{\mathrm{1}}}  \mGLsym{,}  \Gamma_{{\mathrm{2}}}  \mGLsym{,}  \Gamma_{{\mathrm{3}}}  \mGLsym{,}  \Gamma_{{\mathrm{4}}}  \vdash_{\mathsf{MS} }  \mGLnt{B}$ with
$\mathsf{CutRank} \, \mGLsym{(}  \Pi'  \mGLsym{)} \, \leq \,  \mathsf{Rank}  (  \mGLnt{X}  )$ 
Thus, we construct the following proof $\Pi$:

\[
  \inferrule* [flushleft,right=$\mGLdruleMSTXXUnitLName{}$,left=$\Pi :$] {
    \inferrule* [flushleft,right=,left=$\Pi' :$] {
      \pi'
    }{\delta_{{\mathrm{1}}}  \mGLsym{,}  \delta_{{\mathrm{2}}}  \odot  \Delta_{{\mathrm{1}}}  \mGLsym{,}  \Delta_{{\mathrm{2}}}  \mGLsym{;}  \Gamma_{{\mathrm{1}}}  \mGLsym{,}  \Gamma_{{\mathrm{2}}}  \mGLsym{,}  \Gamma_{{\mathrm{3}}}  \mGLsym{,}  \Gamma_{{\mathrm{4}}}  \vdash_{\mathsf{MS} }  \mGLnt{B}}
  }{\delta_{{\mathrm{1}}}  \mGLsym{,}  \delta_{{\mathrm{2}}}  \odot  \Delta_{{\mathrm{1}}}  \mGLsym{,}  \Delta_{{\mathrm{2}}}  \mGLsym{;}  \Gamma_{{\mathrm{1}}}  \mGLsym{,}  \Gamma_{{\mathrm{2}}}  \mGLsym{,}  \mathsf{I}  \mGLsym{,}  \Gamma_{{\mathrm{3}}}  \mGLsym{,}  \Gamma_{{\mathrm{4}}}  \vdash_{\mathsf{MS} }  \mGLnt{B}}
  \]
  We know $\mathsf{CutRank} \, \mGLsym{(}  \Pi  \mGLsym{)} \, \mGLsym{=} \, \mathsf{CutRank} \, \mGLsym{(}  \Pi'  \mGLsym{)} \, \leq \,  \mathsf{Rank}  (  \mGLnt{A}  )$

   \item \textbf{Left introduction of graded tensor product:}
   \[
          \inferrule* [flushleft,right=$\mGLdruleMSTXXGTenLName{}$,left=$\Pi_{{\mathrm{1}}} :$] {
            \inferrule* [flushleft,right=,left=$\Pi_{{\mathrm{3}}} :$] {
            \pi_3
          }{(  \delta_{{\mathrm{2}}}  \mGLsym{,}  \mGLnt{r}  \mGLsym{,}  \mGLnt{r}  \mGLsym{,}  \delta_{{\mathrm{3}}}  )   \odot   ( \Delta_{{\mathrm{2}}}  \mGLsym{,}  \mGLnt{X}  \mGLsym{,}  \mGLnt{Y}  \mGLsym{,}  \Delta_{{\mathrm{3}}} )   \mGLsym{;}  \Gamma_{{\mathrm{2}}}  \vdash_{\mathsf{MS} }  \mGLnt{A}}
          }{(  \delta_{{\mathrm{2}}}  \mGLsym{,}  \mGLnt{r}  \mGLsym{,}  \delta_{{\mathrm{3}}}  )   \odot   ( \Delta_{{\mathrm{2}}}  \mGLsym{,}  \mGLnt{X}  \boxtimes  \mGLnt{Y}  \mGLsym{,}  \Delta_{{\mathrm{3}}} )   \mGLsym{;}  \Gamma_{{\mathrm{2}}}  \vdash_{\mathsf{MS} }  \mGLnt{A}}
    \]
      \[
        \inferrule* [flushleft,right=,left=$\Pi_{{\mathrm{2}}} :$] {
          \pi_2
        }{\delta_{{\mathrm{1}}}  \odot  \Delta_{{\mathrm{1}}}  \mGLsym{;}  \Gamma_{{\mathrm{1}}}  \mGLsym{,}  \mGLnt{A}  \mGLsym{,}  \Gamma_{{\mathrm{3}}}  \vdash_{\mathsf{MS} }  \mGLnt{B} }
        \]
      We know:
      \[
      \begin{array}{lll}
        \mathsf{Depth}  (  \Pi_{{\mathrm{3}}}  )   +   \mathsf{Depth}  (  \Pi_{{\mathrm{2}}}  )   \, \mGLsym{<} \,  \mathsf{Depth}  (  \Pi_{{\mathrm{1}}}  )   +   \mathsf{Depth}  (  \Pi_{{\mathrm{2}}}  )\\
        \mathsf{CutRank} \, \mGLsym{(}  \Pi_{{\mathrm{3}}}  \mGLsym{)} \, \leq \, \mathsf{CutRank} \, \mGLsym{(}  \Pi_{{\mathrm{1}}}  \mGLsym{)} \, \leq \,  \mathsf{Rank}  (  \mGLnt{A}  )
      \end{array}
      \]

      and so applying the induction hypothesis
      to $\Pi_{{\mathrm{3}}}$ and $\Pi_{{\mathrm{2}}}$
      implies that there is a proof $\Pi'$ of
      $(  \delta_{{\mathrm{1}}}  \mGLsym{,}  \delta_{{\mathrm{2}}}  \mGLsym{,}  \mGLnt{r}  \mGLsym{,}  \mGLnt{r}  \mGLsym{,}  \delta_{{\mathrm{3}}}  )   \odot   ( \Delta_{{\mathrm{1}}}  \mGLsym{,}  \Delta_{{\mathrm{2}}}  \mGLsym{,}  \mGLnt{X}  \mGLsym{,}  \mGLnt{Y}  \mGLsym{,}  \Delta_{{\mathrm{3}}} )   \mGLsym{;}  \Gamma_{{\mathrm{1}}}  \mGLsym{,}  \Gamma_{{\mathrm{2}}}  \mGLsym{,}  \Gamma_{{\mathrm{3}}}  \vdash_{\mathsf{MS} }  \mGLnt{B}$ with
      $\mathsf{CutRank} \, \mGLsym{(}  \Pi'  \mGLsym{)} \, \leq \,  \mathsf{Rank}  (  \mGLnt{A}  )$ 
      Thus, we construct the following proof $\Pi$:
      \[
        \inferrule* [flushleft,right=$\mGLdruleMSTXXGTenLName{}$,left=$\Pi :$] {
          \inferrule* [flushleft,right=,left=$\Pi' :$] {
            \pi'
          }{(  \delta_{{\mathrm{1}}}  \mGLsym{,}  \delta_{{\mathrm{2}}}  \mGLsym{,}  \mGLnt{r}  \mGLsym{,}  \mGLnt{r}  \mGLsym{,}  \delta_{{\mathrm{3}}}  )   \odot   ( \Delta_{{\mathrm{1}}}  \mGLsym{,}  \Delta_{{\mathrm{2}}}  \mGLsym{,}  \mGLnt{X}  \mGLsym{,}  \mGLnt{Y}  \mGLsym{,}  \Delta_{{\mathrm{3}}} )   \mGLsym{;}  \Gamma_{{\mathrm{1}}}  \mGLsym{,}  \Gamma_{{\mathrm{2}}}  \mGLsym{,}  \Gamma_{{\mathrm{3}}}  \vdash_{\mathsf{MS} }  \mGLnt{B}}
        }{(  \delta_{{\mathrm{1}}}  \mGLsym{,}  \delta_{{\mathrm{2}}}  \mGLsym{,}  \mGLnt{r}  \mGLsym{,}  \delta_{{\mathrm{3}}}  )   \odot   ( \Delta_{{\mathrm{1}}}  \mGLsym{,}  \Delta_{{\mathrm{2}}}  \mGLsym{,}  \mGLnt{X}  \boxtimes  \mGLnt{Y}  \mGLsym{,}  \Delta_{{\mathrm{3}}} )   \mGLsym{;}  \Gamma_{{\mathrm{1}}}  \mGLsym{,}  \Gamma_{{\mathrm{2}}}  \mGLsym{,}  \Gamma_{{\mathrm{3}}}  \vdash_{\mathsf{MS} }  \mGLnt{B}}
        \]
        Given the above, we know $\mathsf{CutRank} \, \mGLsym{(}  \Pi  \mGLsym{)} \, \mGLsym{=} \, \mathsf{CutRank} \, \mGLsym{(}  \Pi'  \mGLsym{)} \, \leq \,  \mathsf{Rank}  (  \mGLnt{A}  )$

    \item \textbf{Left Introduction of Graded Tensor Unit:}
    \[
          \inferrule* [flushleft,right=$\mGLdruleMSTXXGUnitLName{}$,left=$\Pi_{{\mathrm{1}}} :$] {
              \inferrule* [flushleft,right=,left=$\Pi_{{\mathrm{3}}} :$] {
                \pi_3
              }{(  \delta_{{\mathrm{2}}}  \mGLsym{,}  \delta_{{\mathrm{3}}}  )   \odot   ( \Delta_{{\mathrm{2}}}  \mGLsym{,}  \Delta_{{\mathrm{3}}} )   \mGLsym{;}  \Gamma_{{\mathrm{2}}}  \vdash_{\mathsf{MS} }  \mGLnt{A}}
          }{(  \delta_{{\mathrm{2}}}  \mGLsym{,}  \mGLnt{r}  \mGLsym{,}  \delta_{{\mathrm{3}}}  )   \odot   ( \Delta_{{\mathrm{2}}}  \mGLsym{,}  \mathsf{J}  \mGLsym{,}  \Delta_{{\mathrm{3}}} )   \mGLsym{;}  \Gamma_{{\mathrm{2}}}  \vdash_{\mathsf{MS} }  \mGLnt{A}}
    \]
    \[
      \inferrule* [flushleft,right=,left=$\Pi_{{\mathrm{2}}} :$] {
        \pi_2
      }{\delta_{{\mathrm{1}}}  \odot  \Delta_{{\mathrm{1}}}  \mGLsym{;}  \Gamma_{{\mathrm{1}}}  \mGLsym{,}  \mGLnt{A}  \mGLsym{,}  \Gamma_{{\mathrm{3}}}  \vdash_{\mathsf{MS} }  \mGLnt{B}}
\]
      We know:
      \[
      \begin{array}{lll}
        \mathsf{Depth}  (  \Pi_{{\mathrm{2}}}  )   +   \mathsf{Depth}  (  \Pi_{{\mathrm{3}}}  )   \, \mGLsym{<} \,  \mathsf{Depth}  (  \Pi_{{\mathrm{1}}}  )   +   \mathsf{Depth}  (  \Pi_{{\mathrm{2}}}  )\\
        \mathsf{CutRank} \, \mGLsym{(}  \Pi_{{\mathrm{3}}}  \mGLsym{)} \, \leq \, \mathsf{CutRank} \, \mGLsym{(}  \Pi_{{\mathrm{1}}}  \mGLsym{)} \, \leq \,  \mathsf{Rank}  (  \mGLnt{A}  )
      \end{array}
      \]

      and so applying the induction hypothesis
      to $\Pi_{{\mathrm{2}}}$ and $\Pi_{{\mathrm{3}}}$
      implies that there is a proof $\Pi'$ of
      $(  \delta_{{\mathrm{1}}}  \mGLsym{,}  \delta_{{\mathrm{2}}}  \mGLsym{,}  \delta_{{\mathrm{3}}}  )   \odot   ( \Delta_{{\mathrm{1}}}  \mGLsym{,}  \Delta_{{\mathrm{2}}}  \mGLsym{,}  \Delta_{{\mathrm{3}}} )   \mGLsym{;}  \Gamma_{{\mathrm{1}}}  \mGLsym{,}  \Gamma_{{\mathrm{2}}}  \mGLsym{,}  \Gamma_{{\mathrm{3}}}  \vdash_{\mathsf{MS} }  \mGLnt{B}$ with
      $\mathsf{CutRank} \, \mGLsym{(}  \Pi'  \mGLsym{)} \, \leq \,  \mathsf{Rank}  (  \mGLnt{A}  )$ 
      Thus, we construct the following proof $\Pi$:

      \[
        \inferrule* [flushleft,right=$\mGLdruleMSTXXGUnitLName{}$,left=$\Pi :$] {
          \inferrule* [flushleft,right=,left=$\Pi' :$] {
            \pi'
          }{(  \delta_{{\mathrm{1}}}  \mGLsym{,}  \delta_{{\mathrm{2}}}  \mGLsym{,}  \delta_{{\mathrm{3}}}  )   \odot   ( \Delta_{{\mathrm{1}}}  \mGLsym{,}  \Delta_{{\mathrm{2}}}  \mGLsym{,}  \Delta_{{\mathrm{3}}} )   \mGLsym{;}  \Gamma_{{\mathrm{1}}}  \mGLsym{,}  \Gamma_{{\mathrm{2}}}  \mGLsym{,}  \Gamma_{{\mathrm{3}}}  \vdash_{\mathsf{MS} }  \mGLnt{B}}
        }{(  \delta_{{\mathrm{1}}}  \mGLsym{,}  \delta_{{\mathrm{2}}}  \mGLsym{,}  \mGLnt{r}  \mGLsym{,}  \delta_{{\mathrm{3}}}  )   \odot   ( \Delta_{{\mathrm{1}}}  \mGLsym{,}  \Delta_{{\mathrm{2}}}  \mGLsym{,}  \mathsf{J}  \mGLsym{,}  \Delta_{{\mathrm{3}}} )   \mGLsym{;}  \Gamma_{{\mathrm{1}}}  \mGLsym{,}  \Gamma_{{\mathrm{2}}}  \mGLsym{,}  \Gamma_{{\mathrm{3}}}  \vdash_{\mathsf{MS} }  \mGLnt{B}}
        \]
        Given the above, we know  $\mathsf{CutRank} \, \mGLsym{(}  \Pi  \mGLsym{)} \, \mGLsym{=} \, \mathsf{CutRank} \, \mGLsym{(}  \Pi'  \mGLsym{)} \, \leq \,  \mathsf{Rank}  (  \mGLnt{A}  )$
    \item \textbf{Left Introduction of Linear Implication:}
    \[
      \inferrule* [flushleft,right=$\mGLdruleMSTXXImpLName{}$,left=$\Pi_{{\mathrm{1}}} :$] {
         \inferrule* [flushleft,left=$\Pi_{{\mathrm{3}}} : $] {
          \pi_3
        }{\delta_{{\mathrm{2}}}  \odot  \Delta_{{\mathrm{2}}}  \mGLsym{;}  \Gamma_{{\mathrm{3}}}  \vdash_{\mathsf{MS} }  \mGLnt{A} }\\
        \inferrule* [flushleft,right=,left=$\Pi_{{\mathrm{4}}} :$] {
         \pi_4
         }{ \delta_{{\mathrm{3}}}  \odot  \Delta_{{\mathrm{3}}}  \mGLsym{;}  \Gamma_{{\mathrm{2}}}  \mGLsym{,}  \mGLnt{B}  \mGLsym{,}  \Gamma_{{\mathrm{4}}}  \vdash_{\mathsf{MS} }  \mGLnt{C}}
          }{\delta_{{\mathrm{2}}}  \mGLsym{,}  \delta_{{\mathrm{3}}}  \odot  \Delta_{{\mathrm{2}}}  \mGLsym{,}  \Delta_{{\mathrm{3}}}  \mGLsym{;}  \Gamma_{{\mathrm{2}}}  \mGLsym{,}  \mGLnt{A}  \multimap  \mGLnt{B}  \mGLsym{,}  \Gamma_{{\mathrm{3}}}  \mGLsym{,}  \Gamma_{{\mathrm{4}}}  \vdash_{\mathsf{MS} }  \mGLnt{C} }
         \]
         \[
          \inferrule* [flushleft,right=,left=$\Pi_{{\mathrm{2}}} :$] {
            \pi_2
          }{\delta_{{\mathrm{1}}}  \odot  \Delta_{{\mathrm{1}}}  \mGLsym{;}  \Gamma_{{\mathrm{1}}}  \mGLsym{,}  \mGLnt{C}  \mGLsym{,}  \Gamma_{{\mathrm{5}}}  \vdash_{\mathsf{MS} }  \mGLnt{D}}
    \]

We know:
\[
\begin{array}{lll}
  \mathsf{Depth}  (  \Pi_{{\mathrm{4}}}  )   +   \mathsf{Depth}  (  \Pi_{{\mathrm{2}}}  )   \, \mGLsym{<} \,  \mathsf{Depth}  (  \Pi_{{\mathrm{1}}}  )   +   \mathsf{Depth}  (  \Pi_{{\mathrm{2}}}  )\\
  \mathsf{CutRank} \, \mGLsym{(}  \Pi_{{\mathrm{4}}}  \mGLsym{)} \, \leq \, \mathsf{CutRank} \, \mGLsym{(}  \Pi_{{\mathrm{1}}}  \mGLsym{)} \, \leq \,  \mathsf{Rank}  (  \mGLnt{C}  )
\end{array}
\]

and so applying the induction hypothesis
to $\Pi_{{\mathrm{4}}}$ and $\Pi_{{\mathrm{2}}}$
implies that there is a proof $\Pi'$ of
$\delta_{{\mathrm{1}}}  \mGLsym{,}  \delta_{{\mathrm{3}}}  \odot  \Delta_{{\mathrm{1}}}  \mGLsym{,}  \Delta_{{\mathrm{3}}}  \mGLsym{;}  \Gamma_{{\mathrm{1}}}  \mGLsym{,}  \Gamma_{{\mathrm{2}}}  \mGLsym{,}  \mGLnt{B}  \mGLsym{,}  \Gamma_{{\mathrm{4}}}  \mGLsym{,}  \Gamma_{{\mathrm{5}}}  \vdash_{\mathsf{MS} }  \mGLnt{D}$ with
$\mathsf{CutRank} \, \mGLsym{(}  \Pi'  \mGLsym{)} \, \leq \,  \mathsf{Rank}  (  \mGLnt{C}  )$
Thus, we construct the following proof $\Pi$:

\[
  \inferrule* [flushleft,right=$\mGLdruleMSTXXExName{}$,left=$\Pi :$] {
  \inferrule* [flushleft,right=$\mGLdruleMSTXXImpLName{}$,left=]{
    \inferrule* [flushleft,right=,left=$\Pi' :$] {
      \pi'
    }{\delta_{{\mathrm{1}}}  \mGLsym{,}  \delta_{{\mathrm{3}}}  \odot  \Delta_{{\mathrm{1}}}  \mGLsym{,}  \Delta_{{\mathrm{3}}}  \mGLsym{;}  \Gamma_{{\mathrm{1}}}  \mGLsym{,}  \Gamma_{{\mathrm{2}}}  \mGLsym{,}  \mGLnt{B}  \mGLsym{,}  \Gamma_{{\mathrm{4}}}  \mGLsym{,}  \Gamma_{{\mathrm{5}}}  \vdash_{\mathsf{MS} }  \mGLnt{D}} \\
    \inferrule* [flushleft,right=,left=$\Pi_{{\mathrm{4}}} :$] {
      \pi_4
      }{ \delta_{{\mathrm{3}}}  \odot  \Delta_{{\mathrm{3}}}  \mGLsym{;}  \Gamma_{{\mathrm{2}}}  \mGLsym{,}  \mGLnt{B}  \mGLsym{,}  \Gamma_{{\mathrm{4}}}  \vdash_{\mathsf{MS} }  \mGLnt{C}}
  }{\delta_{{\mathrm{2}}}  \mGLsym{,}  \delta_{{\mathrm{1}}}  \mGLsym{,}  \delta_{{\mathrm{3}}}  \odot  \Delta_{{\mathrm{2}}}  \mGLsym{,}  \Delta_{{\mathrm{1}}}  \mGLsym{,}  \Delta_{{\mathrm{3}}}  \mGLsym{;}  \Gamma_{{\mathrm{1}}}  \mGLsym{,}  \Gamma_{{\mathrm{2}}}  \mGLsym{,}  \mGLnt{A}  \multimap  \mGLnt{B}  \mGLsym{,}  \Gamma_{{\mathrm{3}}}  \mGLsym{,}  \Gamma_{{\mathrm{4}}}  \mGLsym{,}  \Gamma_{{\mathrm{5}}}  \vdash_{\mathsf{MS} }  \mGLnt{D}}
  }{\delta_{{\mathrm{1}}}  \mGLsym{,}  \delta_{{\mathrm{2}}}  \mGLsym{,}  \delta_{{\mathrm{3}}}  \odot  \Delta_{{\mathrm{1}}}  \mGLsym{,}  \Delta_{{\mathrm{2}}}  \mGLsym{,}  \Delta_{{\mathrm{3}}}  \mGLsym{;}  \Gamma_{{\mathrm{1}}}  \mGLsym{,}  \Gamma_{{\mathrm{2}}}  \mGLsym{,}  \mGLnt{B}  \mGLsym{,}  \Gamma_{{\mathrm{4}}}  \mGLsym{,}  \Gamma_{{\mathrm{5}}}  \vdash_{\mathsf{MS} }  \mGLnt{D}}
  \]
  We know $\mathsf{CutRank} \, \mGLsym{(}  \Pi  \mGLsym{)} \, \mGLsym{=} \, \mathsf{CutRank} \, \mGLsym{(}  \Pi'  \mGLsym{)} \, \leq \,  \mathsf{Rank}  (  \mGLnt{C}  )$

   \item \textbf{Left introduction of Lin:} 
   \[
    \inferrule* [flushleft,right=$\mGLdruleMSTXXLinLName{}$, left=$\Pi_{{\mathrm{1}}} :$] {
      \inferrule* [flushleft,right=, left=$\Pi_{{\mathrm{3}}} :$] {
        \pi_3
      }{\delta_{{\mathrm{2}}}  \odot  \Delta_{{\mathrm{2}}}  \mGLsym{;}   ( \mGLnt{A}  \mGLsym{,}  \Gamma_{{\mathrm{2}}} )   \vdash_{\mathsf{MS} }  \mGLnt{B}}
    }{(  \delta_{{\mathrm{2}}}  \mGLsym{,}  1  )   \odot   ( \Delta_{{\mathrm{2}}}  \mGLsym{,}  \mathsf{Lin} \, \mGLnt{A} )   \mGLsym{;}  \Gamma_{{\mathrm{2}}}  \vdash_{\mathsf{MS} }  \mGLnt{B}}
    \]
    \[
      \inferrule* [flushleft,right=, left=$\Pi_{{\mathrm{2}}} :$] {
        \pi_2
      }{\delta_{{\mathrm{1}}}  \odot  \Delta_{{\mathrm{1}}}  \mGLsym{;}   ( \Gamma_{{\mathrm{1}}}  \mGLsym{,}  \mGLnt{B}  \mGLsym{,}  \Gamma_{{\mathrm{3}}} )   \vdash_{\mathsf{MS} }  \mGLnt{C}}
      \]
      We know the following:
      \[
        \begin{array}{lll}
          \mathsf{Depth}  (  \Pi_{{\mathrm{1}}}  )   +   \mathsf{Depth}  (  \Pi_{{\mathrm{2}}}  ) = \mGLsym{(}    \mathsf{Depth}  (  \Pi_{{\mathrm{3}}}  )   + 1   \mGLsym{)}  +   \mathsf{Depth}  (  \Pi_{{\mathrm{2}}}  )\\
          \mathsf{CutRank} \, \mGLsym{(}  \Pi_{{\mathrm{1}}}  \mGLsym{)} \, \mGLsym{=} \, \mathsf{CutRank} \, \mGLsym{(}  \Pi_{{\mathrm{3}}}  \mGLsym{)} \, \leq \,  \mathsf{Rank}  (  \mGLnt{B}  )\\
          \mathsf{CutRank} \, \mGLsym{(}  \Pi_{{\mathrm{2}}}  \mGLsym{)} \, \leq \,  \mathsf{Rank}  (  \mGLnt{B}  )
        \end{array}
      \]
      These imply that:
      \[
        \begin{array}{lll}
          \mathsf{Depth}  (  \Pi_{{\mathrm{3}}}  )   +   \mathsf{Depth}  (  \Pi_{{\mathrm{4}}}  )   \, \mGLsym{<} \,  \mathsf{Depth}  (  \Pi_{{\mathrm{1}}}  )   +   \mathsf{Depth}  (  \Pi_{{\mathrm{2}}}  )\\
          \mathsf{CutRank} \, \mGLsym{(}  \Pi_{{\mathrm{3}}}  \mGLsym{)} \, \leq \,  \mathsf{Rank}  (  \mGLnt{B}  )
          \mathsf{CutRank} \, \mGLsym{(}  \Pi_{{\mathrm{4}}}  \mGLsym{)} \, \leq \,  \mathsf{Rank}  (  \mGLnt{B}  )
        \end{array}
      \]
      Thus, we apply the induction hypothesis of Lemma~\ref{lemma:cut_reduction_for_mgl} (3) to $\Pi_{{\mathrm{3}}}$ and $\Pi_{{\mathrm{2}}}$ to obtain a proof
      $\Pi'$ of the sequent $(  \delta_{{\mathrm{1}}}  \mGLsym{,}  \delta_{{\mathrm{2}}}  )   \odot   ( \Delta_{{\mathrm{1}}}  \mGLsym{,}  \Delta_{{\mathrm{2}}} )   \mGLsym{;}   ( \Gamma_{{\mathrm{1}}}  \mGLsym{,}  \mGLnt{A}  \mGLsym{,}  \Gamma_{{\mathrm{2}}}  \mGLsym{,}  \Gamma_{{\mathrm{3}}} )   \vdash_{\mathsf{MS} }  \mGLnt{C}$ with $\mathsf{CutRank} \, \mGLsym{(}  \Pi'  \mGLsym{)} \, \leq \,  \mathsf{Rank}  (  \mGLnt{B}  )$.
      Now we define the proof $\Pi$ as follows:
      \[
        \inferrule* [flushleft,right=$\mGLdruleMSTXXLinLName{}$, left=$\Pi :$] {
          \inferrule* [flushleft,right=$\mGLdruleMSTXXExName{}$, left=] {
            \inferrule* [flushleft,right=, left=$\Pi' :$] {
              \pi'
            }{(  \delta_{{\mathrm{1}}}  \mGLsym{,}  \delta_{{\mathrm{2}}}  )   \odot   ( \Delta_{{\mathrm{1}}}  \mGLsym{,}  \Delta_{{\mathrm{2}}} )   \mGLsym{;}   ( \Gamma_{{\mathrm{1}}}  \mGLsym{,}  \mGLnt{A}  \mGLsym{,}  \Gamma_{{\mathrm{2}}}  \mGLsym{,}  \Gamma_{{\mathrm{3}}} )   \vdash_{\mathsf{MS} }  \mGLnt{C}}
          }{(  \delta_{{\mathrm{1}}}  \mGLsym{,}  \delta_{{\mathrm{2}}}  )   \odot   ( \Delta_{{\mathrm{1}}}  \mGLsym{,}  \Delta_{{\mathrm{2}}} )   \mGLsym{;}   ( \mGLnt{A}  \mGLsym{,}  \Gamma_{{\mathrm{1}}}  \mGLsym{,}  \Gamma_{{\mathrm{2}}}  \mGLsym{,}  \Gamma_{{\mathrm{3}}} )   \vdash_{\mathsf{MS} }  \mGLnt{C}}
        }{(  \delta_{{\mathrm{1}}}  \mGLsym{,}  \delta_{{\mathrm{2}}}  \mGLsym{,}  1  )   \odot   ( \Delta_{{\mathrm{1}}}  \mGLsym{,}  \Delta_{{\mathrm{2}}}  \mGLsym{,}  \mathsf{Lin} \, \mGLnt{A} )   \mGLsym{;}   ( \Gamma_{{\mathrm{1}}}  \mGLsym{,}  \Gamma_{{\mathrm{2}}}  \mGLsym{,}  \Gamma_{{\mathrm{3}}} )   \vdash_{\mathsf{MS} }  \mGLnt{C}}
        \]
      with: $\mathsf{CutRank} \, \mGLsym{(}  \Pi  \mGLsym{)} \, \mGLsym{=} \, \mathsf{CutRank} \, \mGLsym{(}  \Pi'  \mGLsym{)} \, \leq \,  \mathsf{Rank}  (  \mGLnt{B}  )$
   \item \textbf{Left introduction of Grd:} 
   \[
    \inferrule* [flushleft,right=$\mGLdruleMSTXXGrdRName{}$, left=$\Pi_{{\mathrm{1}}} :$] {
      \inferrule* [flushleft,right=, left=$\Pi_{{\mathrm{3}}} :$] {
        \pi_3
      }{(  \delta_{{\mathrm{2}}}  \mGLsym{,}  \mGLnt{r}  )   \odot   ( \Delta_{{\mathrm{2}}}  \mGLsym{,}  \mGLnt{X} )   \mGLsym{;}  \Gamma_{{\mathrm{2}}}  \vdash_{\mathsf{MS} }  \mGLnt{B}}
    }{\delta_{{\mathrm{2}}}  \odot  \Delta_{{\mathrm{2}}}  \mGLsym{;}   (  \mathsf{Grd} _{ \mGLnt{r} }\, \mGLnt{X}   \mGLsym{,}  \Gamma_{{\mathrm{2}}} )   \vdash_{\mathsf{MS} }  \mGLnt{B}}
    \]
    \[
      \inferrule* [flushleft,right=, left=$\Pi_{{\mathrm{2}}} :$] {
        \pi_2
      }{\delta_{{\mathrm{1}}}  \odot  \Delta_{{\mathrm{1}}}  \mGLsym{;}   ( \Gamma_{{\mathrm{1}}}  \mGLsym{,}  \mGLnt{B}  \mGLsym{,}  \Gamma_{{\mathrm{3}}} )   \vdash_{\mathsf{MS} }  \mGLnt{C}}
      \]
      We know the following:
      \[
        \begin{array}{lll}
          \mathsf{Depth}  (  \Pi_{{\mathrm{1}}}  )   +   \mathsf{Depth}  (  \Pi_{{\mathrm{2}}}  ) = \mGLsym{(}    \mathsf{Depth}  (  \Pi_{{\mathrm{3}}}  )   + 1   \mGLsym{)}  +   \mathsf{Depth}  (  \Pi_{{\mathrm{2}}}  )\\
          \mathsf{CutRank} \, \mGLsym{(}  \Pi_{{\mathrm{1}}}  \mGLsym{)} \, \mGLsym{=} \, \mathsf{CutRank} \, \mGLsym{(}  \Pi_{{\mathrm{3}}}  \mGLsym{)} \, \leq \,  \mathsf{Rank}  (  \mGLnt{B}  )\\
          \mathsf{CutRank} \, \mGLsym{(}  \Pi_{{\mathrm{2}}}  \mGLsym{)} \, \leq \,  \mathsf{Rank}  (  \mGLnt{B}  )
        \end{array}
      \]
      These imply that:
      \[
        \begin{array}{lll}
          \mathsf{Depth}  (  \Pi_{{\mathrm{3}}}  )   +   \mathsf{Depth}  (  \Pi_{{\mathrm{4}}}  )   \, \mGLsym{<} \,  \mathsf{Depth}  (  \Pi_{{\mathrm{1}}}  )   +   \mathsf{Depth}  (  \Pi_{{\mathrm{2}}}  )\\
          \mathsf{CutRank} \, \mGLsym{(}  \Pi_{{\mathrm{3}}}  \mGLsym{)} \, \leq \,  \mathsf{Rank}  (  \mGLnt{B}  )
          \mathsf{CutRank} \, \mGLsym{(}  \Pi_{{\mathrm{4}}}  \mGLsym{)} \, \leq \,  \mathsf{Rank}  (  \mGLnt{B}  )
        \end{array}
      \]
      Thus, we apply the induction hypothesis of Lemma~\ref{lemma:cut_reduction_for_mgl} (3) to $\Pi_{{\mathrm{3}}}$ and $\Pi_{{\mathrm{2}}}$ to obtain a proof
      $\Pi'$ of the sequent $(  \delta_{{\mathrm{1}}}  \mGLsym{,}  \delta_{{\mathrm{2}}}  \mGLsym{,}  \mGLnt{r}  )   \odot   ( \Delta_{{\mathrm{1}}}  \mGLsym{,}  \Delta_{{\mathrm{2}}}  \mGLsym{,}  \mGLnt{X} )   \mGLsym{;}   ( \Gamma_{{\mathrm{1}}}  \mGLsym{,}  \Gamma_{{\mathrm{2}}}  \mGLsym{,}  \Gamma_{{\mathrm{3}}} )   \vdash_{\mathsf{MS} }  \mGLnt{C}$ with $\mathsf{CutRank} \, \mGLsym{(}  \Pi'  \mGLsym{)} \, \leq \,  \mathsf{Rank}  (  \mGLnt{B}  )$.
      Now we define the proof $\Pi$ as follows:
      \[
        \inferrule* [flushleft,right=$\mGLdruleMSTXXExName{}$, left=$\Pi :$] {
          \inferrule* [flushleft,right=$\mGLdruleMSTXXGrdRName{}$, left=] {
            \inferrule* [flushleft,right=, left=$\Pi' :$] {
              \pi'
            }{(  \delta_{{\mathrm{1}}}  \mGLsym{,}  \delta_{{\mathrm{2}}}  \mGLsym{,}  \mGLnt{r}  )   \odot   ( \Delta_{{\mathrm{1}}}  \mGLsym{,}  \Delta_{{\mathrm{2}}}  \mGLsym{,}  \mGLnt{X} )   \mGLsym{;}   ( \Gamma_{{\mathrm{1}}}  \mGLsym{,}  \Gamma_{{\mathrm{2}}}  \mGLsym{,}  \Gamma_{{\mathrm{3}}} )   \vdash_{\mathsf{MS} }  \mGLnt{C}}
          }{(  \delta_{{\mathrm{1}}}  \mGLsym{,}  \delta_{{\mathrm{2}}}  )   \odot   ( \Delta_{{\mathrm{1}}}  \mGLsym{,}  \Delta_{{\mathrm{2}}} )   \mGLsym{;}   (  \mathsf{Grd} _{ \mGLnt{r} }\, \mGLnt{X}   \mGLsym{,}  \Gamma_{{\mathrm{1}}}  \mGLsym{,}  \Gamma_{{\mathrm{2}}}  \mGLsym{,}  \Gamma_{{\mathrm{3}}} )   \vdash_{\mathsf{MS} }  \mGLnt{C}}
        }{(  \delta_{{\mathrm{1}}}  \mGLsym{,}  \delta_{{\mathrm{2}}}  )   \odot   ( \Delta_{{\mathrm{1}}}  \mGLsym{,}  \Delta_{{\mathrm{2}}} )   \mGLsym{;}   ( \Gamma_{{\mathrm{1}}}  \mGLsym{,}   \mathsf{Grd} _{ \mGLnt{r} }\, \mGLnt{X}   \mGLsym{,}  \Gamma_{{\mathrm{2}}}  \mGLsym{,}  \Gamma_{{\mathrm{3}}} )   \vdash_{\mathsf{MS} }  \mGLnt{C}}
        \]
      with: $\mathsf{CutRank} \, \mGLsym{(}  \Pi  \mGLsym{)} \, \mGLsym{=} \, \mathsf{CutRank} \, \mGLsym{(}  \Pi'  \mGLsym{)} \, \leq \,  \mathsf{Rank}  (  \mGLnt{B}  )$
  \end{enumerate}

 \item \textbf{Secondary Hypothesis}
  \begin{enumerate}
    \item \textbf{Right introduction of linear tensor product (first case):}
    \[
      \inferrule* [flushleft,right=,left=$\Pi_{{\mathrm{1}}} :$] {
        \pi_1
      }{\delta_{{\mathrm{3}}}  \odot  \Delta_{{\mathrm{3}}}  \mGLsym{;}  \Gamma_{{\mathrm{2}}}  \vdash_{\mathsf{MS} }  \mGLnt{A}}
\]

\[
  \inferrule* [flushleft,right=$\mGLdruleMSTXXTenRName{}$,left=$\Pi_{{\mathrm{2}}} :$] {
    \inferrule* [flushleft,right=,left=$\Pi_{{\mathrm{3}}} :$] {
      \pi_3
    }{\delta_{{\mathrm{1}}}  \odot  \Delta_{{\mathrm{1}}}  \mGLsym{;}  \Gamma_{{\mathrm{1}}}  \mGLsym{,}  \mGLnt{A}  \mGLsym{,}  \Gamma_{{\mathrm{3}}}  \vdash_{\mathsf{MS} }  \mGLnt{B}}\\
    \inferrule* [flushleft,right=,left=$\Pi_{{\mathrm{4}}} :$] {
      \pi_4
    }{\delta_{{\mathrm{2}}}  \odot  \Delta_{{\mathrm{2}}}  \mGLsym{;}  \Gamma_{{\mathrm{4}}}  \vdash_{\mathsf{MS} }  \mGLnt{C}}
  }{(  \delta_{{\mathrm{1}}}  \mGLsym{,}  \delta_{{\mathrm{2}}}  )   \odot   ( \Delta_{{\mathrm{1}}}  \mGLsym{,}  \Delta_{{\mathrm{2}}} )   \mGLsym{;}  \Gamma_{{\mathrm{2}}}  \mGLsym{,}  \mGLnt{A}  \mGLsym{,}  \Gamma_{{\mathrm{3}}}  \mGLsym{,}  \Gamma_{{\mathrm{4}}}  \vdash_{\mathsf{MS} }  \mGLnt{B}  \otimes  \mGLnt{C}}
  \]
      We know:
      \[
      \begin{array}{lll}
        \mathsf{Depth}  (  \Pi_{{\mathrm{1}}}  )   +   \mathsf{Depth}  (  \Pi_{{\mathrm{3}}}  )   \, \mGLsym{<} \,  \mathsf{Depth}  (  \Pi_{{\mathrm{1}}}  )   +   \mathsf{Depth}  (  \Pi_{{\mathrm{2}}}  )\\
        \mathsf{CutRank} \, \mGLsym{(}  \Pi_{{\mathrm{3}}}  \mGLsym{)} \, \leq \, \mathsf{CutRank} \, \mGLsym{(}  \Pi_{{\mathrm{2}}}  \mGLsym{)} \, \leq \,  \mathsf{Rank}  (  \mGLnt{A}  )
      \end{array}
      \]

      and so applying the induction hypothesis
      to $\Pi_{{\mathrm{1}}}$ and $\Pi_{{\mathrm{3}}}$
      implies that there is a proof $\Pi'$ of
      $\delta_{{\mathrm{1}}}  \mGLsym{,}  \delta_{{\mathrm{3}}}  \odot  \Delta_{{\mathrm{1}}}  \mGLsym{,}  \Delta_{{\mathrm{3}}}  \mGLsym{;}  \Gamma_{{\mathrm{1}}}  \mGLsym{,}  \Gamma_{{\mathrm{2}}}  \mGLsym{,}  \Gamma_{{\mathrm{3}}}  \vdash_{\mathsf{MS} }  \mGLnt{B}$ with
      $\mathsf{CutRank} \, \mGLsym{(}  \Pi'  \mGLsym{)} \, \leq \,  \mathsf{Rank}  (  \mGLnt{A}  )$.
      Thus, we construct the following proof $\Pi$:

      \[
        \inferrule* [flushleft,right=$\mGLdruleMSTXXGExName{}$,left=$\Pi :$] {
        \inferrule* [flushleft,right=$\mGLdruleMSTXXTenRName{}$,left=] {
          \inferrule* [flushleft,left=$\Pi' : $] {
            \pi'
          }{\delta_{{\mathrm{1}}}  \mGLsym{,}  \delta_{{\mathrm{3}}}  \odot  \Delta_{{\mathrm{1}}}  \mGLsym{,}  \Delta_{{\mathrm{3}}}  \mGLsym{;}  \Gamma_{{\mathrm{1}}}  \mGLsym{,}  \Gamma_{{\mathrm{2}}}  \mGLsym{,}  \Gamma_{{\mathrm{3}}}  \vdash_{\mathsf{MS} }  \mGLnt{B}}\\
          \inferrule* [flushleft,right=,left=$\Pi_{{\mathrm{4}}} :$] {
            \pi_4
          }{\delta_{{\mathrm{2}}}  \odot  \Delta_{{\mathrm{2}}}  \mGLsym{;}  \Gamma_{{\mathrm{4}}}  \vdash_{\mathsf{MS} }  \mGLnt{C}}
        }{\delta_{{\mathrm{1}}}  \mGLsym{,}  \delta_{{\mathrm{3}}}  \mGLsym{,}  \delta_{{\mathrm{2}}}  \odot  \Delta_{{\mathrm{1}}}  \mGLsym{,}  \Delta_{{\mathrm{3}}}  \mGLsym{,}  \Delta_{{\mathrm{2}}}  \mGLsym{;}  \Gamma_{{\mathrm{1}}}  \mGLsym{,}  \Gamma_{{\mathrm{2}}}  \mGLsym{,}  \Gamma_{{\mathrm{3}}}  \mGLsym{,}  \Gamma_{{\mathrm{4}}}  \vdash_{\mathsf{MS} }  \mGLnt{B}  \otimes  \mGLnt{C}}
        }{\delta_{{\mathrm{1}}}  \mGLsym{,}  \delta_{{\mathrm{2}}}  \mGLsym{,}  \delta_{{\mathrm{3}}}  \odot  \Delta_{{\mathrm{1}}}  \mGLsym{,}  \Delta_{{\mathrm{2}}}  \mGLsym{,}  \Delta_{{\mathrm{3}}}  \mGLsym{;}  \Gamma_{{\mathrm{1}}}  \mGLsym{,}  \Gamma_{{\mathrm{2}}}  \mGLsym{,}  \Gamma_{{\mathrm{3}}}  \mGLsym{,}  \Gamma_{{\mathrm{4}}}  \vdash_{\mathsf{MS} }  \mGLnt{B}  \otimes  \mGLnt{C}}
        \]
   \item \textbf{Right introduction of linear tensor product (second case):}
   \[
    \inferrule* [flushleft,right=,left=$\Pi_{{\mathrm{1}}} :$] {
      \pi_1
    }{\delta_{{\mathrm{3}}}  \odot  \Delta_{{\mathrm{3}}}  \mGLsym{;}  \Gamma_{{\mathrm{3}}}  \vdash_{\mathsf{MS} }  \mGLnt{A}}
\]

\[
\inferrule* [flushleft,right=$\mGLdruleMSTXXTenRName{}$,left=$\Pi_{{\mathrm{2}}} :$] {
  \inferrule* [flushleft,right=,left=$\Pi_{{\mathrm{3}}} :$] {
    \pi_3
  }{\delta_{{\mathrm{1}}}  \odot  \Delta_{{\mathrm{1}}}  \mGLsym{;}  \Gamma_{{\mathrm{1}}}  \vdash_{\mathsf{MS} }  \mGLnt{B}}\\
  \inferrule* [flushleft,right=,left=$\Pi_{{\mathrm{4}}} :$] {
    \pi_4
  }{\delta_{{\mathrm{2}}}  \odot  \Delta_{{\mathrm{2}}}  \mGLsym{;}  \Gamma_{{\mathrm{2}}}  \mGLsym{,}  \mGLnt{A}  \mGLsym{,}  \Gamma_{{\mathrm{4}}}  \vdash_{\mathsf{MS} }  \mGLnt{C}}
}{(  \delta_{{\mathrm{1}}}  \mGLsym{,}  \delta_{{\mathrm{2}}}  )   \odot   ( \Delta_{{\mathrm{1}}}  \mGLsym{,}  \Delta_{{\mathrm{2}}} )   \mGLsym{;}  \Gamma_{{\mathrm{2}}}  \mGLsym{,}  \mGLnt{A}  \mGLsym{,}  \Gamma_{{\mathrm{3}}}  \mGLsym{,}  \Gamma_{{\mathrm{4}}}  \vdash_{\mathsf{MS} }  \mGLnt{B}  \otimes  \mGLnt{C}}
\]
Similar to the previous case.
   \item \textbf{Left introduction of linear tensor product:}
   \[
    \inferrule* [flushleft,right=,left=$\Pi_{{\mathrm{1}}} :$] {
      \pi_1
    }{\delta_{{\mathrm{2}}}  \odot  \Delta_{{\mathrm{2}}}  \mGLsym{;}  \Gamma_{{\mathrm{3}}}  \vdash_{\mathsf{MS} }  \mGLnt{A}}
\]

\[
  \inferrule* [flushleft,right=$\mGLdruleMSTXXTenLName{}$,left=$\Pi_{{\mathrm{2}}} :$] {
    \inferrule* [flushleft,right=,left=$\Pi_{{\mathrm{3}}} :$] {
      \pi_3
    }{\delta_{{\mathrm{1}}}  \odot  \Delta_{{\mathrm{1}}}  \mGLsym{;}  \Gamma_{{\mathrm{1}}}  \mGLsym{,}  \mGLnt{A}  \mGLsym{,}  \Gamma_{{\mathrm{3}}}  \mGLsym{,}  \mGLnt{B}  \mGLsym{,}  \mGLnt{C}  \mGLsym{,}  \Gamma_{{\mathrm{4}}}  \vdash_{\mathsf{MS} }  \mGLnt{D}}
    }{\delta_{{\mathrm{1}}}  \odot  \Delta_{{\mathrm{1}}}  \mGLsym{;}  \Gamma_{{\mathrm{1}}}  \mGLsym{,}  \mGLnt{A}  \mGLsym{,}  \Gamma_{{\mathrm{3}}}  \mGLsym{,}  \mGLnt{B}  \otimes  \mGLnt{C}  \mGLsym{,}  \Gamma_{{\mathrm{4}}}  \vdash_{\mathsf{MS} }  \mGLnt{D}}
  \]
      We know:
      \[
      \begin{array}{lll}
        \mathsf{Depth}  (  \Pi_{{\mathrm{1}}}  )   +   \mathsf{Depth}  (  \Pi_{{\mathrm{3}}}  )   \, \mGLsym{<} \,  \mathsf{Depth}  (  \Pi_{{\mathrm{1}}}  )   +   \mathsf{Depth}  (  \Pi_{{\mathrm{2}}}  )\\
        \mathsf{CutRank} \, \mGLsym{(}  \Pi_{{\mathrm{3}}}  \mGLsym{)} \, \leq \, \mathsf{CutRank} \, \mGLsym{(}  \Pi_{{\mathrm{2}}}  \mGLsym{)} \, \leq \,  \mathsf{Rank}  (  \mGLnt{A}  )
      \end{array}
      \]

      and so applying the induction hypothesis
      to $\Pi_{{\mathrm{1}}}$ and $\Pi_{{\mathrm{3}}}$
      implies that there is a proof $\Pi'$ of
      $\delta_{{\mathrm{1}}}  \mGLsym{,}  \delta_{{\mathrm{2}}}  \odot  \Delta_{{\mathrm{1}}}  \mGLsym{,}  \Delta_{{\mathrm{2}}}  \mGLsym{;}  \Gamma_{{\mathrm{1}}}  \mGLsym{,}  \Gamma_{{\mathrm{2}}}  \mGLsym{,}  \Gamma_{{\mathrm{3}}}  \mGLsym{,}  \mGLnt{B}  \mGLsym{,}  \mGLnt{C}  \mGLsym{,}  \Gamma_{{\mathrm{4}}}  \vdash_{\mathsf{MS} }  \mGLnt{D}$ with
      $\mathsf{CutRank} \, \mGLsym{(}  \Pi'  \mGLsym{)} \, \leq \,  \mathsf{Rank}  (  \mGLnt{A}  )$.
      Thus, we construct the following proof $\Pi$:

      \[
        \inferrule* [flushleft,right=$\mGLdruleMSTXXTenLName{}$,left=$\Pi :$] {
          \inferrule* [flushleft,right=,left=$\Pi' :$] {
            \pi'
          }{\delta_{{\mathrm{1}}}  \mGLsym{,}  \delta_{{\mathrm{2}}}  \odot  \Delta_{{\mathrm{1}}}  \mGLsym{,}  \Delta_{{\mathrm{2}}}  \mGLsym{;}  \Gamma_{{\mathrm{1}}}  \mGLsym{,}  \Gamma_{{\mathrm{2}}}  \mGLsym{,}  \Gamma_{{\mathrm{3}}}  \mGLsym{,}  \mGLnt{B}  \mGLsym{,}  \mGLnt{C}  \mGLsym{,}  \Gamma_{{\mathrm{4}}}  \vdash_{\mathsf{MS} }  \mGLnt{D}}
        }{\delta_{{\mathrm{1}}}  \mGLsym{,}  \delta_{{\mathrm{2}}}  \odot  \Delta_{{\mathrm{1}}}  \mGLsym{,}  \Delta_{{\mathrm{2}}}  \mGLsym{;}  \Gamma_{{\mathrm{1}}}  \mGLsym{,}  \Gamma_{{\mathrm{2}}}  \mGLsym{,}  \Gamma_{{\mathrm{3}}}  \mGLsym{,}  \mGLnt{B}  \otimes  \mGLnt{C}  \mGLsym{,}  \Gamma_{{\mathrm{4}}}  \vdash_{\mathsf{MS} }  \mGLnt{D}}
        \]

        \item \textbf{Left introduction of linear tensor product (second case):}
        \[
         \inferrule* [flushleft,right=,left=$\Pi_{{\mathrm{1}}} :$] {
           \pi_1
         }{\delta_{{\mathrm{2}}}  \odot  \Delta_{{\mathrm{2}}}  \mGLsym{;}  \Gamma_{{\mathrm{3}}}  \vdash_{\mathsf{MS} }  \mGLnt{A}}
     \]

     \[
       \inferrule* [flushleft,right=$\mGLdruleMSTXXTenLName{}$,left=$\Pi_{{\mathrm{2}}} :$] {
         \inferrule* [flushleft,right=,left=$\Pi_{{\mathrm{3}}} :$] {
           \pi_3
         }{\delta_{{\mathrm{1}}}  \odot  \Delta_{{\mathrm{1}}}  \mGLsym{;}  \Gamma_{{\mathrm{1}}}  \mGLsym{,}  \mGLnt{B}  \mGLsym{,}  \mGLnt{C}  \mGLsym{,}  \Gamma_{{\mathrm{2}}}  \mGLsym{,}  \mGLnt{A}  \mGLsym{,}  \Gamma_{{\mathrm{4}}}  \vdash_{\mathsf{MS} }  \mGLnt{D}}
         }{\delta_{{\mathrm{1}}}  \odot  \Delta_{{\mathrm{1}}}  \mGLsym{;}  \Gamma_{{\mathrm{1}}}  \mGLsym{,}  \mGLnt{B}  \otimes  \mGLnt{C}  \mGLsym{,}  \Gamma_{{\mathrm{2}}}  \mGLsym{,}  \mGLnt{A}  \mGLsym{,}  \Gamma_{{\mathrm{4}}}  \vdash_{\mathsf{MS} }  \mGLnt{D}}
       \]
       Similar to previous case.
   \item \textbf{Left introduction of the unit of linear tensor:}
   \[
    \inferrule* [flushleft,right=,left=$\Pi_{{\mathrm{1}}} :$] {
      \pi_1
    }{\delta_{{\mathrm{2}}}  \odot  \Delta_{{\mathrm{2}}}  \mGLsym{;}  \Gamma_{{\mathrm{3}}}  \vdash_{\mathsf{MS} }  \mGLnt{A}}
\]

      \[
        \inferrule* [flushleft,right=$\mGLdruleMSTXXUnitLName{}$,left=$\Pi_{{\mathrm{2}}} :$] {
          \inferrule* [flushleft,left=$\Pi_{{\mathrm{3}}} : $] {
            \pi_3
          }{\delta_{{\mathrm{1}}}  \odot  \Delta_{{\mathrm{1}}}  \mGLsym{;}  \Gamma_{{\mathrm{1}}}  \mGLsym{,}  \Gamma_{{\mathrm{2}}}  \mGLsym{,}  \mGLnt{A}  \mGLsym{,}  \Gamma_{{\mathrm{4}}}  \vdash_{\mathsf{MS} }  \mGLnt{B}}
        }{\delta_{{\mathrm{1}}}  \odot  \Delta_{{\mathrm{1}}}  \mGLsym{;}  \Gamma_{{\mathrm{1}}}  \mGLsym{,}  \mathsf{I}  \mGLsym{,}  \Gamma_{{\mathrm{2}}}  \mGLsym{,}  \mGLnt{A}  \mGLsym{,}  \Gamma_{{\mathrm{4}}}  \vdash_{\mathsf{MS} }  \mGLnt{B}}
        \]

      We know:
      \[
      \begin{array}{lll}
        \mathsf{Depth}  (  \Pi_{{\mathrm{1}}}  )   +   \mathsf{Depth}  (  \Pi_{{\mathrm{3}}}  )   \, \mGLsym{<} \,  \mathsf{Depth}  (  \Pi_{{\mathrm{1}}}  )   +   \mathsf{Depth}  (  \Pi_{{\mathrm{2}}}  )\\
        \mathsf{CutRank} \, \mGLsym{(}  \Pi_{{\mathrm{3}}}  \mGLsym{)} \, \leq \, \mathsf{CutRank} \, \mGLsym{(}  \Pi_{{\mathrm{2}}}  \mGLsym{)} \, \leq \,  \mathsf{Rank}  (  \mGLnt{A}  )
      \end{array}
      \]

      and so applying the induction hypothesis
      to $\Pi_{{\mathrm{1}}}$ and $\Pi_{{\mathrm{3}}}$
      implies that there is a proof $\Pi'$ of
      $\delta_{{\mathrm{1}}}  \mGLsym{,}  \delta_{{\mathrm{2}}}  \odot  \Delta_{{\mathrm{1}}}  \mGLsym{,}  \Delta_{{\mathrm{2}}}  \mGLsym{;}  \Gamma_{{\mathrm{1}}}  \mGLsym{,}  \Gamma_{{\mathrm{2}}}  \mGLsym{,}  \Gamma_{{\mathrm{3}}}  \mGLsym{,}  \Gamma_{{\mathrm{4}}}  \vdash_{\mathsf{MS} }  \mGLnt{B}$ with
      $\mathsf{CutRank} \, \mGLsym{(}  \Pi'  \mGLsym{)} \, \leq \,  \mathsf{Rank}  (  \mGLnt{A}  )$.
      Thus, we construct the following proof $\Pi$:

      \[
        \inferrule* [flushleft,right=$\mGLdruleMSTXXUnitLName{}$,left=$\Pi :$] {
          \inferrule* [flushleft,left=$\Pi' : $] {
            \pi'
          }{\delta_{{\mathrm{1}}}  \mGLsym{,}  \delta_{{\mathrm{2}}}  \odot  \Delta_{{\mathrm{1}}}  \mGLsym{,}  \Delta_{{\mathrm{2}}}  \mGLsym{;}  \Gamma_{{\mathrm{1}}}  \mGLsym{,}  \Gamma_{{\mathrm{2}}}  \mGLsym{,}  \Gamma_{{\mathrm{3}}}  \mGLsym{,}  \Gamma_{{\mathrm{4}}}  \vdash_{\mathsf{MS} }  \mGLnt{B}}
        }{\delta_{{\mathrm{1}}}  \mGLsym{,}  \delta_{{\mathrm{2}}}  \odot  \Delta_{{\mathrm{1}}}  \mGLsym{,}  \Delta_{{\mathrm{2}}}  \mGLsym{;}  \Gamma_{{\mathrm{1}}}  \mGLsym{,}  \mathsf{I}  \mGLsym{,}  \Gamma_{{\mathrm{2}}}  \mGLsym{,}  \Gamma_{{\mathrm{3}}}  \mGLsym{,}  \Gamma_{{\mathrm{4}}}  \vdash_{\mathsf{MS} }  \mGLnt{B}}
        \]
        \item \textbf{Left introduction of the unit of linear tensor (second case):}
        \[
         \inferrule* [flushleft,right=,left=$\Pi_{{\mathrm{1}}} :$] {
           \pi_1
         }{\delta_{{\mathrm{2}}}  \odot  \Delta_{{\mathrm{2}}}  \mGLsym{;}  \Gamma_{{\mathrm{2}}}  \vdash_{\mathsf{MS} }  \mGLnt{A}}
     \]

           \[
             \inferrule* [flushleft,right=$\mGLdruleMSTXXUnitLName{}$,left=$\Pi_{{\mathrm{2}}} :$] {
               \inferrule* [flushleft,left=$\Pi_{{\mathrm{3}}} : $] {
                 \pi_3
               }{\delta_{{\mathrm{1}}}  \odot  \Delta_{{\mathrm{1}}}  \mGLsym{;}  \Gamma_{{\mathrm{1}}}  \mGLsym{,}  \mGLnt{A}  \mGLsym{,}  \Gamma_{{\mathrm{3}}}  \mGLsym{,}  \Gamma_{{\mathrm{4}}}  \vdash_{\mathsf{MS} }  \mGLnt{B}}
             }{\delta_{{\mathrm{1}}}  \odot  \Delta_{{\mathrm{1}}}  \mGLsym{;}  \Gamma_{{\mathrm{1}}}  \mGLsym{,}  \mGLnt{A}  \mGLsym{,}  \Gamma_{{\mathrm{3}}}  \mGLsym{,}  \mathsf{I}  \mGLsym{,}  \Gamma_{{\mathrm{4}}}  \vdash_{\mathsf{MS} }  \mGLnt{B}}
             \]
     Similar to the previous case.
      \item \textbf{Left introduction of graded tensor product:}
      \[
    \inferrule* [flushleft,right=,left=$\Pi_{{\mathrm{1}}} :$] {
      \pi_1
    }{\delta_{{\mathrm{3}}}  \odot  \Delta_{{\mathrm{3}}}  \mGLsym{;}  \Gamma_{{\mathrm{2}}}  \vdash_{\mathsf{MS} }  \mGLnt{A}}
\]

\[
  \inferrule* [flushleft,right=$\mGLdruleMSTXXGTenLName{}$,left=$\Pi_{{\mathrm{2}}} :$] {
    \inferrule* [flushleft,right=,left=$\Pi_{{\mathrm{3}}} :$] {
      \pi_3
    }{(  \delta_{{\mathrm{1}}}  \mGLsym{,}  \mGLnt{r}  \mGLsym{,}  \mGLnt{r}  \mGLsym{,}  \delta_{{\mathrm{2}}}  )   \odot   ( \Delta_{{\mathrm{1}}}  \mGLsym{,}  \mGLnt{Y}  \mGLsym{,}  \mGLnt{Z}  \mGLsym{,}  \Delta_{{\mathrm{2}}} )   \mGLsym{;}  \Gamma_{{\mathrm{1}}}  \mGLsym{,}  \mGLnt{A}  \mGLsym{,}  \Gamma_{{\mathrm{3}}}  \vdash_{\mathsf{MS} }  \mGLnt{B}}
    }{(  \delta_{{\mathrm{1}}}  \mGLsym{,}  \mGLnt{r}  \mGLsym{,}  \delta_{{\mathrm{2}}}  )   \odot   ( \Delta_{{\mathrm{1}}}  \mGLsym{,}  \mGLnt{Y}  \boxtimes  \mGLnt{Z}  \mGLsym{,}  \Delta_{{\mathrm{2}}} )   \mGLsym{;}  \Gamma_{{\mathrm{1}}}  \mGLsym{,}  \mGLnt{A}  \mGLsym{,}  \Gamma_{{\mathrm{3}}}  \vdash_{\mathsf{MS} }  \mGLnt{B}}
  \]

We know:
\[
\begin{array}{lll}
  \mathsf{Depth}  (  \Pi_{{\mathrm{1}}}  )   +   \mathsf{Depth}  (  \Pi_{{\mathrm{3}}}  )   \, \mGLsym{<} \,  \mathsf{Depth}  (  \Pi_{{\mathrm{1}}}  )   +   \mathsf{Depth}  (  \Pi_{{\mathrm{2}}}  )\\
  \mathsf{CutRank} \, \mGLsym{(}  \Pi_{{\mathrm{3}}}  \mGLsym{)} \, \leq \, \mathsf{CutRank} \, \mGLsym{(}  \Pi_{{\mathrm{2}}}  \mGLsym{)} \, \leq \,  \mathsf{Rank}  (  \mGLnt{A}  )
\end{array}
\]

and so applying the induction hypothesis
to $\Pi_{{\mathrm{1}}}$ and $\Pi_{{\mathrm{3}}}$
implies that there is a proof $\Pi'$ of
$(  \delta_{{\mathrm{1}}}  \mGLsym{,}  \mGLnt{r}  \mGLsym{,}  \mGLnt{r}  \mGLsym{,}  \delta_{{\mathrm{2}}}  \mGLsym{,}  \delta_{{\mathrm{3}}}  )   \odot   ( \Delta_{{\mathrm{1}}}  \mGLsym{,}  \mGLnt{Y}  \mGLsym{,}  \mGLnt{Z}  \mGLsym{,}  \Delta_{{\mathrm{2}}}  \mGLsym{,}  \Delta_{{\mathrm{3}}} )   \mGLsym{;}  \Gamma_{{\mathrm{1}}}  \mGLsym{,}  \Gamma_{{\mathrm{2}}}  \mGLsym{,}  \Gamma_{{\mathrm{3}}}  \vdash_{\mathsf{MS} }  \mGLnt{B}$ with
$\mathsf{CutRank} \, \mGLsym{(}  \Pi'  \mGLsym{)} \, \leq \,  \mathsf{Rank}  (  \mGLnt{X}  )$
Thus, we construct the following proof $\Pi$:

\[
  \inferrule* [flushleft,right=$\mGLdruleMSTXXGTenLName{}$,left=$\Pi :$] {
    \inferrule* [flushleft,right=,left=$\Pi' :$] {
      \pi'
    }{(  \delta_{{\mathrm{1}}}  \mGLsym{,}  \mGLnt{r}  \mGLsym{,}  \mGLnt{r}  \mGLsym{,}  \delta_{{\mathrm{2}}}  \mGLsym{,}  \delta_{{\mathrm{3}}}  )   \odot   ( \Delta_{{\mathrm{1}}}  \mGLsym{,}  \mGLnt{Y}  \mGLsym{,}  \mGLnt{Z}  \mGLsym{,}  \Delta_{{\mathrm{2}}}  \mGLsym{,}  \Delta_{{\mathrm{3}}} )   \mGLsym{;}  \Gamma_{{\mathrm{1}}}  \mGLsym{,}  \Gamma_{{\mathrm{2}}}  \mGLsym{,}  \Gamma_{{\mathrm{3}}}  \vdash_{\mathsf{MS} }  \mGLnt{B}}
  }{(  \delta_{{\mathrm{1}}}  \mGLsym{,}  \mGLnt{r}  \mGLsym{,}  \delta_{{\mathrm{2}}}  \mGLsym{,}  \delta_{{\mathrm{3}}}  )   \odot   ( \Delta_{{\mathrm{1}}}  \mGLsym{,}  \mGLnt{Y}  \boxtimes  \mGLnt{Z}  \mGLsym{,}  \Delta_{{\mathrm{2}}}  \mGLsym{,}  \Delta_{{\mathrm{3}}} )   \mGLsym{;}  \Gamma_{{\mathrm{1}}}  \mGLsym{,}  \Gamma_{{\mathrm{2}}}  \mGLsym{,}  \Gamma_{{\mathrm{3}}}  \vdash_{\mathsf{MS} }  \mGLnt{B}}
  \]
  Given the above, we know:
      $\mathsf{CutRank} \, \mGLsym{(}  \Pi  \mGLsym{)} \, \mGLsym{=} \, \mathsf{CutRank} \, \mGLsym{(}  \Pi'  \mGLsym{)} \, \leq \,  \mathsf{Rank}  (  \mGLnt{A}  )$

    \item \textbf{Left introduction of the unit of graded tensor:}
    \[
      \inferrule* [flushleft,right=,left=$\Pi_{{\mathrm{1}}} :$] {
        \pi_1
      }{\delta_{{\mathrm{3}}}  \odot  \Delta_{{\mathrm{3}}}  \mGLsym{;}  \Gamma_{{\mathrm{2}}}  \vdash_{\mathsf{MS} }  \mGLnt{A}}
\]

  \[
    \inferrule* [flushleft,right=$\mGLdruleMSTXXGUnitLName{}$,left=$\Pi_{{\mathrm{2}}} :$] {
      \inferrule* [flushleft,left=$\Pi_{{\mathrm{3}}} : $] {
        \pi_3
      }{(  \delta_{{\mathrm{1}}}  \mGLsym{,}  \delta_{{\mathrm{2}}}  )   \odot   ( \Delta_{{\mathrm{1}}}  \mGLsym{,}  \Delta_{{\mathrm{2}}} )   \mGLsym{;}  \Gamma_{{\mathrm{1}}}  \mGLsym{,}  \mGLnt{A}  \mGLsym{,}  \Gamma_{{\mathrm{3}}}  \vdash_{\mathsf{MS} }  \mGLnt{A}}\\
    }{(  \delta_{{\mathrm{1}}}  \mGLsym{,}  \mGLnt{r}  \mGLsym{,}  \delta_{{\mathrm{2}}}  )   \odot   ( \Delta_{{\mathrm{1}}}  \mGLsym{,}  \mathsf{J}  \mGLsym{,}  \Delta_{{\mathrm{2}}} )   \mGLsym{;}  \Gamma_{{\mathrm{1}}}  \mGLsym{,}  \mGLnt{A}  \mGLsym{,}  \Gamma_{{\mathrm{3}}}  \vdash_{\mathsf{MS} }  \mGLnt{A}}
    \]

      We know:
      \[
      \begin{array}{lll}
        \mathsf{Depth}  (  \Pi_{{\mathrm{1}}}  )   +   \mathsf{Depth}  (  \Pi_{{\mathrm{3}}}  )   \, \mGLsym{<} \,  \mathsf{Depth}  (  \Pi_{{\mathrm{1}}}  )   +   \mathsf{Depth}  (  \Pi_{{\mathrm{2}}}  )\\
        \mathsf{CutRank} \, \mGLsym{(}  \Pi_{{\mathrm{3}}}  \mGLsym{)} \, \leq \, \mathsf{CutRank} \, \mGLsym{(}  \Pi_{{\mathrm{2}}}  \mGLsym{)} \, \leq \,  \mathsf{Rank}  (  \mGLnt{A}  )
      \end{array}
      \]

      and so applying the induction hypothesis
      to $\Pi_{{\mathrm{1}}}$ and $\Pi_{{\mathrm{3}}}$
      implies that there is a proof $\Pi'$ of
      $(  \delta_{{\mathrm{1}}}  \mGLsym{,}  \delta_{{\mathrm{2}}}  \mGLsym{,}  \delta_{{\mathrm{3}}}  )   \odot   ( \Delta_{{\mathrm{1}}}  \mGLsym{,}  \Delta_{{\mathrm{2}}}  \mGLsym{,}  \Delta_{{\mathrm{3}}} )   \mGLsym{;}  \Gamma_{{\mathrm{1}}}  \mGLsym{,}  \Gamma_{{\mathrm{2}}}  \mGLsym{,}  \Gamma_{{\mathrm{3}}}  \vdash_{\mathsf{MS} }  \mGLnt{A}$ with
      $\mathsf{CutRank} \, \mGLsym{(}  \Pi'  \mGLsym{)} \, \leq \,  \mathsf{Rank}  (  \mGLnt{A}  )$.
      Thus, we construct the following proof $\Pi$:

      \[
        \inferrule* [flushleft,right=$\mGLdruleMSTXXGUnitLName{}$,left=$\Pi :$] {
          \inferrule* [flushleft,left=$\Pi' : $] {
            \pi'
          }{(  \delta_{{\mathrm{1}}}  \mGLsym{,}  \delta_{{\mathrm{2}}}  \mGLsym{,}  \delta_{{\mathrm{3}}}  )   \odot   ( \Delta_{{\mathrm{1}}}  \mGLsym{,}  \Delta_{{\mathrm{2}}}  \mGLsym{,}  \Delta_{{\mathrm{3}}} )   \mGLsym{;}  \Gamma_{{\mathrm{1}}}  \mGLsym{,}  \Gamma_{{\mathrm{2}}}  \mGLsym{,}  \Gamma_{{\mathrm{3}}}  \vdash_{\mathsf{MS} }  \mGLnt{A}}
        }{(  \delta_{{\mathrm{1}}}  \mGLsym{,}  \mGLnt{r}  \mGLsym{,}  \delta_{{\mathrm{2}}}  \mGLsym{,}  \delta_{{\mathrm{3}}}  )   \odot   ( \Delta_{{\mathrm{1}}}  \mGLsym{,}  \mathsf{J}  \mGLsym{,}  \Delta_{{\mathrm{2}}}  \mGLsym{,}  \Delta_{{\mathrm{3}}} )   \mGLsym{;}  \Gamma_{{\mathrm{1}}}  \mGLsym{,}  \Gamma_{{\mathrm{2}}}  \mGLsym{,}  \Gamma_{{\mathrm{3}}}  \vdash_{\mathsf{MS} }  \mGLnt{A}}
        \]
        We know $\mathsf{CutRank} \, \mGLsym{(}  \Pi  \mGLsym{)} \, \mGLsym{=} \, \mathsf{CutRank} \, \mGLsym{(}  \Pi'  \mGLsym{)} \, \leq \,  \mathsf{Rank}  (  \mGLnt{A}  )$
    \item \textbf{Right introduction of linear implication:}
    \[
    \inferrule* [flushleft,right=,left=$\Pi_{{\mathrm{1}}} :$] {
      \pi_1
    }{\delta_{{\mathrm{2}}}  \odot  \Delta_{{\mathrm{2}}}  \mGLsym{;}  \Gamma_{{\mathrm{2}}}  \vdash_{\mathsf{MS} }  \mGLnt{A}}
\]

      \[
        \inferrule* [flushleft,right=$\mGLdruleMSTXXImpRName{}$,left=$\Pi_{{\mathrm{2}}} :$] {
          \inferrule* [flushleft,left=$\Pi_{{\mathrm{3}}} : $] {
            \pi_3
          }{\delta_{{\mathrm{1}}}  \odot  \Delta_{{\mathrm{1}}}  \mGLsym{;}  \Gamma_{{\mathrm{1}}}  \mGLsym{,}  \mGLnt{A}  \mGLsym{,}  \Gamma_{{\mathrm{3}}}  \mGLsym{,}  \mGLnt{B}  \vdash_{\mathsf{MS} }  \mGLnt{C}}
        }{\delta_{{\mathrm{1}}}  \odot  \Delta_{{\mathrm{1}}}  \mGLsym{;}  \Gamma_{{\mathrm{1}}}  \mGLsym{,}  \mGLnt{A}  \mGLsym{,}  \Gamma_{{\mathrm{3}}}  \vdash_{\mathsf{MS} }  \mGLnt{B}  \multimap  \mGLnt{C}}
        \]

      We know:
      \[
      \begin{array}{lll}
        \mathsf{Depth}  (  \Pi_{{\mathrm{1}}}  )   +   \mathsf{Depth}  (  \Pi_{{\mathrm{3}}}  )   \, \mGLsym{<} \,  \mathsf{Depth}  (  \Pi_{{\mathrm{1}}}  )   +   \mathsf{Depth}  (  \Pi_{{\mathrm{2}}}  )\\
        \mathsf{CutRank} \, \mGLsym{(}  \Pi_{{\mathrm{3}}}  \mGLsym{)} \, \leq \, \mathsf{CutRank} \, \mGLsym{(}  \Pi_{{\mathrm{2}}}  \mGLsym{)} \, \leq \,  \mathsf{Rank}  (  \mGLnt{A}  )
      \end{array}
      \]

      and so applying the induction hypothesis
      to $\Pi_{{\mathrm{1}}}$ and $\Pi_{{\mathrm{3}}}$
      implies that there is a proof $\Pi'$ of
      $\delta_{{\mathrm{1}}}  \odot  \Delta_{{\mathrm{1}}}  \mGLsym{;}  \Gamma_{{\mathrm{1}}}  \mGLsym{,}  \Gamma_{{\mathrm{2}}}  \mGLsym{,}  \Gamma_{{\mathrm{3}}}  \mGLsym{,}  \mGLnt{B}  \vdash_{\mathsf{MS} }  \mGLnt{C}$ with
      $\mathsf{CutRank} \, \mGLsym{(}  \Pi'  \mGLsym{)} \, \leq \,  \mathsf{Rank}  (  \mGLnt{A}  )$.
      Thus, we construct the following proof $\Pi$:

      \[
        \inferrule* [flushleft,right=$\mGLdruleMSTXXImpRName{}$,left=$\Pi :$] {
          \inferrule* [flushleft,left=$\Pi' : $] {
            \pi'
          }{\delta_{{\mathrm{1}}}  \odot  \Delta_{{\mathrm{1}}}  \mGLsym{;}  \Gamma_{{\mathrm{1}}}  \mGLsym{,}  \Gamma_{{\mathrm{2}}}  \mGLsym{,}  \Gamma_{{\mathrm{3}}}  \mGLsym{,}  \mGLnt{B}  \vdash_{\mathsf{MS} }  \mGLnt{C}}
        }{\delta_{{\mathrm{1}}}  \odot  \Delta_{{\mathrm{1}}}  \mGLsym{;}  \Gamma_{{\mathrm{1}}}  \mGLsym{,}  \Gamma_{{\mathrm{2}}}  \mGLsym{,}  \Gamma_{{\mathrm{3}}}  \vdash_{\mathsf{MS} }  \mGLnt{B}  \multimap  \mGLnt{C}}
        \]
        We know $\mathsf{CutRank} \, \mGLsym{(}  \Pi  \mGLsym{)} \, \mGLsym{=} \, \mathsf{CutRank} \, \mGLsym{(}  \Pi'  \mGLsym{)} \, \leq \,  \mathsf{Rank}  (  \mGLnt{A}  )$

    \item \textbf{Left introduction of linear implication:}
    \[
    \inferrule* [flushleft,right=,left=$\Pi_{{\mathrm{1}}} :$] {
      \pi_1
    }{\delta_{{\mathrm{3}}}  \odot  \Delta_{{\mathrm{3}}}  \mGLsym{;}  \Gamma_{{\mathrm{3}}}  \vdash_{\mathsf{MS} }  \mGLnt{A}}
\]

      \[
        \inferrule* [flushleft,right=$\mGLdruleMSTXXImpLName{}$,left=$\Pi_{{\mathrm{2}}} :$] {
          \inferrule* [flushleft,left=$\Pi_{{\mathrm{3}}} : $] {
            \pi_3
          }{\delta_{{\mathrm{2}}}  \odot  \Delta_{{\mathrm{2}}}  \mGLsym{;}  \Gamma_{{\mathrm{2}}}  \mGLsym{,}  \mGLnt{A}  \mGLsym{,}  \Gamma_{{\mathrm{4}}}  \vdash_{\mathsf{MS} }  \mGLnt{B}}\\
          \inferrule* [flushleft,right=,left=$\Pi_{{\mathrm{4}}} :$] {
            \pi_4
          }{\delta_{{\mathrm{1}}}  \odot  \Delta_{{\mathrm{1}}}  \mGLsym{;}  \Gamma_{{\mathrm{1}}}  \mGLsym{,}  \mGLnt{C}  \mGLsym{,}  \Gamma_{{\mathrm{5}}}  \vdash_{\mathsf{MS} }  \mGLnt{D}}
        }{\delta_{{\mathrm{1}}}  \mGLsym{,}  \delta_{{\mathrm{2}}}  \odot  \Delta_{{\mathrm{1}}}  \mGLsym{,}  \Delta_{{\mathrm{2}}}  \mGLsym{;}  \Gamma_{{\mathrm{1}}}  \mGLsym{,}  \mGLnt{B}  \multimap  \mGLnt{C}  \mGLsym{,}  \Gamma_{{\mathrm{2}}}  \mGLsym{,}  \mGLnt{A}  \mGLsym{,}  \Gamma_{{\mathrm{4}}}  \mGLsym{,}  \Gamma_{{\mathrm{5}}}  \vdash_{\mathsf{MS} }  \mGLnt{D} }
        \]

      We know:
      \[
      \begin{array}{lll}
        \mathsf{Depth}  (  \Pi_{{\mathrm{1}}}  )   +   \mathsf{Depth}  (  \Pi_{{\mathrm{3}}}  )   \, \mGLsym{<} \,  \mathsf{Depth}  (  \Pi_{{\mathrm{1}}}  )   +   \mathsf{Depth}  (  \Pi_{{\mathrm{2}}}  )\\
        \mathsf{CutRank} \, \mGLsym{(}  \Pi_{{\mathrm{3}}}  \mGLsym{)} \, \leq \, \mathsf{CutRank} \, \mGLsym{(}  \Pi_{{\mathrm{2}}}  \mGLsym{)} \, \leq \,  \mathsf{Rank}  (  \mGLnt{X}  )
      \end{array}
      \]

      and so applying the induction hypothesis
      to $\Pi_{{\mathrm{1}}}$ and $\Pi_{{\mathrm{3}}}$
      implies that there is a proof $\Pi'$ of
      $\delta_{{\mathrm{2}}}  \mGLsym{,}  \delta_{{\mathrm{3}}}  \odot  \Delta_{{\mathrm{2}}}  \mGLsym{,}  \Delta_{{\mathrm{3}}}  \mGLsym{;}  \Gamma_{{\mathrm{2}}}  \mGLsym{,}  \Gamma_{{\mathrm{3}}}  \mGLsym{,}  \Gamma_{{\mathrm{4}}}  \vdash_{\mathsf{MS} }  \mGLnt{B}$ with
      $\mathsf{CutRank} \, \mGLsym{(}  \Pi'  \mGLsym{)} \, \leq \,  \mathsf{Rank}  (  \mGLnt{A}  )$.
      Thus, we construct the following proof $\Pi$:

      \[
        \inferrule* [flushleft,right=$\mGLdruleMSTXXImpLName{}$,left=$\Pi :$] {
          \inferrule* [flushleft,left=$\Pi' : $] {
            \pi'
          }{\delta_{{\mathrm{2}}}  \mGLsym{,}  \delta_{{\mathrm{3}}}  \odot  \Delta_{{\mathrm{2}}}  \mGLsym{,}  \Delta_{{\mathrm{3}}}  \mGLsym{;}  \Gamma_{{\mathrm{2}}}  \mGLsym{,}  \Gamma_{{\mathrm{3}}}  \mGLsym{,}  \Gamma_{{\mathrm{4}}}  \vdash_{\mathsf{MS} }  \mGLnt{B}}\\
          \inferrule* [flushleft,right=,left=$\Pi_{{\mathrm{4}}} :$] {
            \pi_4
          }{\delta_{{\mathrm{1}}}  \odot  \Delta_{{\mathrm{1}}}  \mGLsym{;}  \Gamma_{{\mathrm{1}}}  \mGLsym{,}  \mGLnt{C}  \mGLsym{,}  \Gamma_{{\mathrm{5}}}  \vdash_{\mathsf{MS} }  \mGLnt{D}}
        }{\delta_{{\mathrm{1}}}  \mGLsym{,}  \delta_{{\mathrm{2}}}  \mGLsym{,}  \delta_{{\mathrm{3}}}  \odot  \Delta_{{\mathrm{1}}}  \mGLsym{,}  \Delta_{{\mathrm{2}}}  \mGLsym{,}  \Delta_{{\mathrm{3}}}  \mGLsym{;}  \Gamma_{{\mathrm{1}}}  \mGLsym{,}  \mGLnt{B}  \multimap  \mGLnt{C}  \mGLsym{,}  \Gamma_{{\mathrm{2}}}  \mGLsym{,}  \Gamma_{{\mathrm{3}}}  \mGLsym{,}  \Gamma_{{\mathrm{4}}}  \mGLsym{,}  \Gamma_{{\mathrm{5}}}  \vdash_{\mathsf{MS} }  \mGLnt{D}}
        \]
        We know that $\mathsf{CutRank} \, \mGLsym{(}  \Pi  \mGLsym{)} \, \mGLsym{=} \, \mGLkw{Max} \, \mGLsym{(}  \mathsf{CutRank} \, \mGLsym{(}  \Pi'  \mGLsym{)}  \mGLsym{,}  \mathsf{CutRank} \, \mGLsym{(}  \Pi_{{\mathrm{4}}}  \mGLsym{)}  \mGLsym{)} \, \leq \,  \mathsf{Rank}  (  \mGLnt{A}  )$

   \item \textbf{Left introduction of linear implication (second case):}
        \[
        \inferrule* [flushleft,right=,left=$\Pi_{{\mathrm{1}}} :$] {
          \pi_1
        }{\delta_{{\mathrm{3}}}  \odot  \Delta_{{\mathrm{3}}}  \mGLsym{;}  \Gamma_{{\mathrm{4}}}  \vdash_{\mathsf{MS} }  \mGLnt{A}}
    \]

          \[
            \inferrule* [flushleft,right=$\mGLdruleMSTXXImpLName{}$,left=$\Pi_{{\mathrm{2}}} :$] {
              \inferrule* [flushleft,left=$\Pi_{{\mathrm{3}}} : $] {
                \pi_3
              }{\delta_{{\mathrm{2}}}  \odot  \Delta_{{\mathrm{2}}}  \mGLsym{;}  \Gamma_{{\mathrm{2}}}  \vdash_{\mathsf{MS} }  \mGLnt{B}}\\
              \inferrule* [flushleft,right=,left=$\Pi_{{\mathrm{4}}} :$] {
                \pi_4
              }{\delta_{{\mathrm{1}}}  \odot  \Delta_{{\mathrm{1}}}  \mGLsym{;}  \Gamma_{{\mathrm{1}}}  \mGLsym{,}  \mGLnt{C}  \mGLsym{,}  \Gamma_{{\mathrm{3}}}  \mGLsym{,}  \mGLnt{A}  \mGLsym{,}  \Gamma_{{\mathrm{5}}}  \vdash_{\mathsf{MS} }  \mGLnt{D}}
            }{\delta_{{\mathrm{1}}}  \mGLsym{,}  \delta_{{\mathrm{2}}}  \odot  \Delta_{{\mathrm{1}}}  \mGLsym{,}  \Delta_{{\mathrm{2}}}  \mGLsym{;}  \Gamma_{{\mathrm{1}}}  \mGLsym{,}  \mGLnt{B}  \multimap  \mGLnt{C}  \mGLsym{,}  \Gamma_{{\mathrm{2}}}  \mGLsym{,}  \Gamma_{{\mathrm{3}}}  \mGLsym{,}  \mGLnt{A}  \mGLsym{,}  \Gamma_{{\mathrm{5}}}  \vdash_{\mathsf{MS} }  \mGLnt{D} }
            \]
  We know:
      \[
      \begin{array}{lll}
        \mathsf{Depth}  (  \Pi_{{\mathrm{1}}}  )   +   \mathsf{Depth}  (  \Pi_{{\mathrm{4}}}  )   \, \mGLsym{<} \,  \mathsf{Depth}  (  \Pi_{{\mathrm{1}}}  )   +   \mathsf{Depth}  (  \Pi_{{\mathrm{2}}}  )\\
        \mathsf{CutRank} \, \mGLsym{(}  \Pi_{{\mathrm{3}}}  \mGLsym{)} \, \leq \, \mathsf{CutRank} \, \mGLsym{(}  \Pi_{{\mathrm{2}}}  \mGLsym{)} \, \leq \,  \mathsf{Rank}  (  \mGLnt{X}  )
      \end{array}
      \]

      and so applying the induction hypothesis
      to $\Pi_{{\mathrm{1}}}$ and $\Pi_{{\mathrm{4}}}$
      implies that there is a proof $\Pi'$ of
      $\delta_{{\mathrm{1}}}  \mGLsym{,}  \delta_{{\mathrm{3}}}  \odot  \Delta_{{\mathrm{1}}}  \mGLsym{,}  \Delta_{{\mathrm{3}}}  \mGLsym{;}  \Gamma_{{\mathrm{1}}}  \mGLsym{,}  \mGLnt{C}  \mGLsym{,}  \Gamma_{{\mathrm{3}}}  \mGLsym{,}  \Gamma_{{\mathrm{4}}}  \mGLsym{,}  \Gamma_{{\mathrm{5}}}  \vdash_{\mathsf{MS} }  \mGLnt{D}$ with
      $\mathsf{CutRank} \, \mGLsym{(}  \Pi'  \mGLsym{)} \, \leq \,  \mathsf{Rank}  (  \mGLnt{A}  )$.
      Thus, we construct the following proof $\Pi$:

      \[
        \inferrule* [flushleft,right=$\mGLdruleMSTXXGExName{}$,left=$\Pi :$] {
        \inferrule* [flushleft,right=$\mGLdruleMSTXXImpLName{}$,left=] {
          \inferrule* [flushleft,left=$\Pi_{{\mathrm{3}}} : $] {
            \pi_3
          }{\delta_{{\mathrm{2}}}  \odot  \Delta_{{\mathrm{2}}}  \mGLsym{;}  \Gamma_{{\mathrm{2}}}  \vdash_{\mathsf{MS} }  \mGLnt{B}}\\
          \inferrule* [flushleft,right=,left=$\Pi' :$] {
            \pi'
          }{\delta_{{\mathrm{1}}}  \mGLsym{,}  \delta_{{\mathrm{3}}}  \odot  \Delta_{{\mathrm{1}}}  \mGLsym{,}  \Delta_{{\mathrm{3}}}  \mGLsym{;}  \Gamma_{{\mathrm{1}}}  \mGLsym{,}  \mGLnt{C}  \mGLsym{,}  \Gamma_{{\mathrm{3}}}  \mGLsym{,}  \Gamma_{{\mathrm{4}}}  \mGLsym{,}  \Gamma_{{\mathrm{5}}}  \vdash_{\mathsf{MS} }  \mGLnt{D}}
        }{\delta_{{\mathrm{1}}}  \mGLsym{,}  \delta_{{\mathrm{3}}}  \mGLsym{,}  \delta_{{\mathrm{2}}}  \odot  \Delta_{{\mathrm{1}}}  \mGLsym{,}  \Delta_{{\mathrm{3}}}  \mGLsym{,}  \Delta_{{\mathrm{2}}}  \mGLsym{;}  \Gamma_{{\mathrm{1}}}  \mGLsym{,}  \mGLnt{B}  \multimap  \mGLnt{C}  \mGLsym{,}  \Gamma_{{\mathrm{2}}}  \mGLsym{,}  \Gamma_{{\mathrm{3}}}  \mGLsym{,}  \Gamma_{{\mathrm{4}}}  \mGLsym{,}  \Gamma_{{\mathrm{5}}}  \vdash_{\mathsf{MS} }  \mGLnt{D}}
        }{\delta_{{\mathrm{1}}}  \mGLsym{,}  \delta_{{\mathrm{2}}}  \mGLsym{,}  \delta_{{\mathrm{3}}}  \odot  \Delta_{{\mathrm{1}}}  \mGLsym{,}  \Delta_{{\mathrm{2}}}  \mGLsym{,}  \Delta_{{\mathrm{3}}}  \mGLsym{;}  \Gamma_{{\mathrm{1}}}  \mGLsym{,}  \mGLnt{B}  \multimap  \mGLnt{C}  \mGLsym{,}  \Gamma_{{\mathrm{2}}}  \mGLsym{,}  \Gamma_{{\mathrm{3}}}  \mGLsym{,}  \Gamma_{{\mathrm{4}}}  \mGLsym{,}  \Gamma_{{\mathrm{5}}}  \vdash_{\mathsf{MS} }  \mGLnt{D}}
        \]
        We know that $\mathsf{CutRank} \, \mGLsym{(}  \Pi  \mGLsym{)} \, \mGLsym{=} \, \mGLkw{Max} \, \mGLsym{(}  \mathsf{CutRank} \, \mGLsym{(}  \Pi'  \mGLsym{)}  \mGLsym{,}  \mathsf{CutRank} \, \mGLsym{(}  \Pi_{{\mathrm{3}}}  \mGLsym{)}  \mGLsym{)} \, \leq \,  \mathsf{Rank}  (  \mGLnt{A}  )$

   \item \textbf{Left introduction of linear implication (third case):}
        \[
        \inferrule* [flushleft,right=,left=$\Pi_{{\mathrm{1}}} :$] {
          \pi_1
        }{\delta_{{\mathrm{3}}}  \odot  \Delta_{{\mathrm{3}}}  \mGLsym{;}  \Gamma_{{\mathrm{2}}}  \vdash_{\mathsf{MS} }  \mGLnt{A}}
    \]

          \[
            \inferrule* [flushleft,right=$\mGLdruleMSTXXImpLName{}$,left=$\Pi_{{\mathrm{2}}} :$] {
              \inferrule* [flushleft,left=$\Pi_{{\mathrm{3}}} : $] {
                \pi_3
              }{\delta_{{\mathrm{2}}}  \odot  \Delta_{{\mathrm{2}}}  \mGLsym{;}  \Gamma_{{\mathrm{4}}}  \vdash_{\mathsf{MS} }  \mGLnt{B}}\\
              \inferrule* [flushleft,right=,left=$\Pi_{{\mathrm{4}}} :$] {
                \pi_4
              }{\delta_{{\mathrm{1}}}  \odot  \Delta_{{\mathrm{1}}}  \mGLsym{;}  \Gamma_{{\mathrm{1}}}  \mGLsym{,}  \mGLnt{A}  \mGLsym{,}  \Gamma_{{\mathrm{3}}}  \mGLsym{,}  \mGLnt{C}  \mGLsym{,}  \Gamma_{{\mathrm{5}}}  \vdash_{\mathsf{MS} }  \mGLnt{D}}
            }{\delta_{{\mathrm{1}}}  \mGLsym{,}  \delta_{{\mathrm{2}}}  \odot  \Delta_{{\mathrm{1}}}  \mGLsym{,}  \Delta_{{\mathrm{2}}}  \mGLsym{;}  \Gamma_{{\mathrm{1}}}  \mGLsym{,}  \mGLnt{A}  \mGLsym{,}  \Gamma_{{\mathrm{3}}}  \mGLsym{,}  \mGLnt{B}  \multimap  \mGLnt{C}  \mGLsym{,}  \Gamma_{{\mathrm{4}}}  \mGLsym{,}  \Gamma_{{\mathrm{5}}}  \vdash_{\mathsf{MS} }  \mGLnt{D} }
            \]
    Similar to the previous case.
   \item \textbf{Left introduction of Lin:} %
    \[
      \inferrule* [flushleft,right=, left=$\Pi_{{\mathrm{1}}} :$] {
        \pi_1
      }{\delta_{{\mathrm{2}}}  \odot  \Delta_{{\mathrm{2}}}  \mGLsym{;}  \Gamma_{{\mathrm{2}}}  \vdash_{\mathsf{MS} }  \mGLnt{B}}
      \]
      \[
    \inferrule* [flushleft,right=$\mGLdruleMSTXXLinLName{}$, left=$\Pi_{{\mathrm{2}}} :$] {
      \inferrule* [flushleft,right=, left=$\Pi_{{\mathrm{3}}} :$] {
        \pi_3
      }{\delta_{{\mathrm{1}}}  \odot  \Delta_{{\mathrm{1}}}  \mGLsym{;}  \mGLnt{A}  \mGLsym{,}  \Gamma_{{\mathrm{1}}}  \mGLsym{,}  \mGLnt{B}  \mGLsym{,}  \Gamma_{{\mathrm{3}}}  \vdash_{\mathsf{MS} }  \mGLnt{C}}
    }{(  \delta_{{\mathrm{1}}}  \mGLsym{,}  1  )   \odot   ( \Delta_{{\mathrm{1}}}  \mGLsym{,}  \mathsf{Lin} \, \mGLnt{A} )   \mGLsym{;}  \Gamma_{{\mathrm{1}}}  \mGLsym{,}  \mGLnt{B}  \mGLsym{,}  \Gamma_{{\mathrm{3}}}  \vdash_{\mathsf{MS} }  \mGLnt{C}}
    \]
      We know the following:
      \[
        \begin{array}{lll}
          \mathsf{Depth}  (  \Pi_{{\mathrm{3}}}  )   +   \mathsf{Depth}  (  \Pi_{{\mathrm{1}}}  )   \, \mGLsym{<} \,  \mathsf{Depth}  (  \Pi_{{\mathrm{1}}}  )   +   \mathsf{Depth}  (  \Pi_{{\mathrm{2}}}  )\\
          \mathsf{CutRank} \, \mGLsym{(}  \Pi_{{\mathrm{3}}}  \mGLsym{)} \, \leq \, \mathsf{CutRank} \, \mGLsym{(}  \Pi_{{\mathrm{2}}}  \mGLsym{)} \, \leq \,  \mathsf{Rank}  (  \mGLnt{B}  )
          \mathsf{CutRank} \, \mGLsym{(}  \Pi_{{\mathrm{1}}}  \mGLsym{)} \, \leq \,  \mathsf{Rank}  (  \mGLnt{B}  )
        \end{array}
      \]
      Thus, we apply the induction hypothesis to $\Pi_{{\mathrm{3}}}$ and $\Pi_{{\mathrm{1}}}$ to obtain
      a proof $\Pi'$ of the sequent $\delta_{{\mathrm{1}}}  \mGLsym{,}  \delta_{{\mathrm{2}}}  \odot  \Delta_{{\mathrm{1}}}  \mGLsym{,}  \Delta_{{\mathrm{2}}}  \mGLsym{;}  \mGLnt{A}  \mGLsym{,}  \Gamma_{{\mathrm{1}}}  \mGLsym{,}  \Gamma_{{\mathrm{2}}}  \mGLsym{,}  \Gamma_{{\mathrm{3}}}  \vdash_{\mathsf{MS} }  \mGLnt{C}$
      with $\mathsf{CutRank} \, \mGLsym{(}  \Pi'  \mGLsym{)} \, \leq \,  \mathsf{Rank}  (  \mGLnt{B}  )$.
      Now we define the proof $\Pi$ as follows:
      \[
        \inferrule* [flushleft,right=$\mGLdruleMSTXXGExName{}$, left=$\Pi :$] {
        \inferrule* [flushleft,right=$\mGLdruleMSTXXLinLName{}$, left=] {
          \inferrule* [flushleft,right=, left=$\Pi' :$] {
              \pi'
          }{\delta_{{\mathrm{1}}}  \mGLsym{,}  \delta_{{\mathrm{2}}}  \odot  \Delta_{{\mathrm{1}}}  \mGLsym{,}  \Delta_{{\mathrm{2}}}  \mGLsym{;}  \mGLnt{A}  \mGLsym{,}  \Gamma_{{\mathrm{1}}}  \mGLsym{,}  \Gamma_{{\mathrm{2}}}  \mGLsym{,}  \Gamma_{{\mathrm{3}}}  \vdash_{\mathsf{MS} }  \mGLnt{C}}
        }{(  \delta_{{\mathrm{1}}}  \mGLsym{,}  \delta_{{\mathrm{2}}}  \mGLsym{,}  1  )   \odot   ( \Delta_{{\mathrm{1}}}  \mGLsym{,}  \Delta_{{\mathrm{2}}}  \mGLsym{,}  \mathsf{Lin} \, \mGLnt{A} )   \mGLsym{;}  \Gamma_{{\mathrm{1}}}  \mGLsym{,}  \Gamma_{{\mathrm{2}}}  \mGLsym{,}  \Gamma_{{\mathrm{3}}}  \vdash_{\mathsf{MS} }  \mGLnt{C}}
        }{(  \delta_{{\mathrm{1}}}  \mGLsym{,}  1  \mGLsym{,}  \delta_{{\mathrm{2}}}  )   \odot   ( \Delta_{{\mathrm{1}}}  \mGLsym{,}  \mathsf{Lin} \, \mGLnt{A}  \mGLsym{,}  \Delta_{{\mathrm{2}}} )   \mGLsym{;}  \Gamma_{{\mathrm{1}}}  \mGLsym{,}  \Gamma_{{\mathrm{2}}}  \mGLsym{,}  \Gamma_{{\mathrm{3}}}  \vdash_{\mathsf{MS} }  \mGLnt{C}}
        \]
      with: $\mathsf{CutRank} \, \mGLsym{(}  \Pi  \mGLsym{)} \, \mGLsym{=} \, \mathsf{CutRank} \, \mGLsym{(}  \Pi'  \mGLsym{)} \, \leq \,  \mathsf{Rank}  (  \mGLnt{B}  )$
   \item \textbf{Left introduction of Grd:} %
   \[
    \inferrule* [flushleft,right=, left=$\Pi_{{\mathrm{1}}} :$] {
      \pi_1
    }{\delta_{{\mathrm{2}}}  \odot  \Delta_{{\mathrm{2}}}  \mGLsym{;}  \Gamma_{{\mathrm{2}}}  \vdash_{\mathsf{MS} }  \mGLnt{A}}
    \]
    \[
  \inferrule* [flushleft,right=$\mGLdruleMSTXXGrdLName{}$, left=$\Pi_{{\mathrm{2}}} :$] {
    \inferrule* [flushleft,right=, left=$\Pi_{{\mathrm{3}}} :$] {
      \pi_3
    }{\delta_{{\mathrm{1}}}  \mGLsym{,}  \mGLnt{r}  \odot  \Delta_{{\mathrm{1}}}  \mGLsym{,}  \mGLnt{X}  \mGLsym{;}   ( \Gamma_{{\mathrm{1}}}  \mGLsym{,}  \mGLnt{A}  \mGLsym{,}  \Gamma_{{\mathrm{3}}} )   \vdash_{\mathsf{MS} }  \mGLnt{B}}
  }{\delta_{{\mathrm{1}}}  \odot  \Delta_{{\mathrm{1}}}  \mGLsym{;}   \mathsf{Grd} _{ \mGLnt{r} }\, \mGLnt{X}   \mGLsym{,}  \Gamma_{{\mathrm{1}}}  \mGLsym{,}  \mGLnt{A}  \mGLsym{,}  \Gamma_{{\mathrm{3}}}  \vdash_{\mathsf{MS} }  \mGLnt{B}}
  \]
    We know the following:
    \[
      \begin{array}{lll}
        \mathsf{Depth}  (  \Pi_{{\mathrm{3}}}  )   +   \mathsf{Depth}  (  \Pi_{{\mathrm{1}}}  )   \, \mGLsym{<} \,  \mathsf{Depth}  (  \Pi_{{\mathrm{1}}}  )   +   \mathsf{Depth}  (  \Pi_{{\mathrm{2}}}  )\\
        \mathsf{CutRank} \, \mGLsym{(}  \Pi_{{\mathrm{3}}}  \mGLsym{)} \, \leq \, \mathsf{CutRank} \, \mGLsym{(}  \Pi_{{\mathrm{2}}}  \mGLsym{)} \, \leq \,  \mathsf{Rank}  (  \mGLnt{A}  )
        \mathsf{CutRank} \, \mGLsym{(}  \Pi_{{\mathrm{1}}}  \mGLsym{)} \, \leq \,  \mathsf{Rank}  (  \mGLnt{A}  )
      \end{array}
    \]
    Thus, we apply the induction hypothesis
    to $\Pi_{{\mathrm{3}}}$ and $\Pi_{{\mathrm{1}}}$ to obtain a proof
    $\Pi'$ of the sequent $\delta_{{\mathrm{1}}}  \mGLsym{,}  \mGLnt{r}  \mGLsym{,}  \delta_{{\mathrm{2}}}  \odot  \Delta_{{\mathrm{1}}}  \mGLsym{,}  \mGLnt{X}  \mGLsym{,}  \Delta_{{\mathrm{2}}}  \mGLsym{;}   ( \Gamma_{{\mathrm{1}}}  \mGLsym{,}  \Gamma_{{\mathrm{2}}}  \mGLsym{,}  \Gamma_{{\mathrm{3}}} )   \vdash_{\mathsf{MS} }  \mGLnt{B}$
    with $\mathsf{CutRank} \, \mGLsym{(}  \Pi'  \mGLsym{)} \, \leq \,  \mathsf{Rank}  (  \mGLnt{A}  )$.
    Now we define the proof $\Pi$ as follows:
    \[
      \inferrule* [flushleft,right=$\mGLdruleMSTXXGrdLName{}$, left=$\Pi :$] {
      \inferrule* [flushleft,right=$\mGLdruleMSTXXGExName{}$, left=] {
        \inferrule* [flushleft,right=, left=$\Pi' :$] {
            \pi'
        }{\delta_{{\mathrm{1}}}  \mGLsym{,}  \mGLnt{r}  \mGLsym{,}  \delta_{{\mathrm{2}}}  \odot  \Delta_{{\mathrm{1}}}  \mGLsym{,}  \mGLnt{X}  \mGLsym{,}  \Delta_{{\mathrm{2}}}  \mGLsym{;}   ( \Gamma_{{\mathrm{1}}}  \mGLsym{,}  \Gamma_{{\mathrm{2}}}  \mGLsym{,}  \Gamma_{{\mathrm{3}}} )   \vdash_{\mathsf{MS} }  \mGLnt{B}}
      }{\delta_{{\mathrm{1}}}  \mGLsym{,}  \delta_{{\mathrm{2}}}  \mGLsym{,}  \mGLnt{r}  \odot  \Delta_{{\mathrm{1}}}  \mGLsym{,}  \Delta_{{\mathrm{2}}}  \mGLsym{,}  \mGLnt{X}  \mGLsym{;}   ( \Gamma_{{\mathrm{1}}}  \mGLsym{,}  \Gamma_{{\mathrm{2}}}  \mGLsym{,}  \Gamma_{{\mathrm{3}}} )   \vdash_{\mathsf{MS} }  \mGLnt{B}}
      }{\delta_{{\mathrm{1}}}  \mGLsym{,}  \delta_{{\mathrm{2}}}  \odot  \Delta_{{\mathrm{1}}}  \mGLsym{,}  \Delta_{{\mathrm{2}}}  \mGLsym{;}   (  \mathsf{Grd} _{ \mGLnt{r} }\, \mGLnt{X}   \mGLsym{,}  \Gamma_{{\mathrm{1}}}  \mGLsym{,}  \Gamma_{{\mathrm{2}}}  \mGLsym{,}  \Gamma_{{\mathrm{3}}} )   \vdash_{\mathsf{MS} }  \mGLnt{B}}
      \]
    with: $\mathsf{CutRank} \, \mGLsym{(}  \Pi  \mGLsym{)} \, \mGLsym{=} \, \mathsf{CutRank} \, \mGLsym{(}  \Pi'  \mGLsym{)} \, \leq \,  \mathsf{Rank}  (  \mGLnt{A}  )$
  \end{enumerate}

 \item \textbf{Structural}
  \begin{enumerate}
   \item \textbf{Weakening}
   \[
    \inferrule* [flushleft,right=,left=$\Pi_{{\mathrm{1}}} :$] {
      \pi_1
    }{\delta_{{\mathrm{3}}}  \odot  \Delta_{{\mathrm{3}}}  \mGLsym{;}  \Gamma_{{\mathrm{2}}}  \vdash_{\mathsf{MS} }  \mGLnt{A}}
\]

      \[
        \inferrule* [flushleft,right=$\mGLdruleMSTXXWeakName{}$,left=$\Pi_{{\mathrm{2}}} :$] {
          \inferrule* [flushleft,left=$\Pi_{{\mathrm{3}}} : $] {
            \pi_3
          }{\delta_{{\mathrm{1}}}  \mGLsym{,}  \delta_{{\mathrm{2}}}  \odot  \Delta_{{\mathrm{1}}}  \mGLsym{,}  \Delta_{{\mathrm{2}}}  \mGLsym{;}  \Gamma_{{\mathrm{1}}}  \mGLsym{,}  \mGLnt{A}  \mGLsym{,}  \Gamma_{{\mathrm{3}}}  \vdash_{\mathsf{MS} }  \mGLnt{B}}
        }{\delta_{{\mathrm{1}}}  \mGLsym{,}  \mathsf{0}  \mGLsym{,}  \delta_{{\mathrm{2}}}  \odot  \Delta_{{\mathrm{1}}}  \mGLsym{,}  \mGLnt{X}  \mGLsym{,}  \Delta_{{\mathrm{2}}}  \mGLsym{;}  \Gamma_{{\mathrm{1}}}  \mGLsym{,}  \mGLnt{A}  \mGLsym{,}  \Gamma_{{\mathrm{3}}}  \vdash_{\mathsf{MS} }  \mGLnt{B}}
        \]

      We know:
      \[
      \begin{array}{lll}
        \mathsf{Depth}  (  \Pi_{{\mathrm{1}}}  )   +   \mathsf{Depth}  (  \Pi_{{\mathrm{3}}}  )   \, \mGLsym{<} \,  \mathsf{Depth}  (  \Pi_{{\mathrm{1}}}  )   +   \mathsf{Depth}  (  \Pi_{{\mathrm{2}}}  )\\
        \mathsf{CutRank} \, \mGLsym{(}  \Pi_{{\mathrm{3}}}  \mGLsym{)} \, \leq \, \mathsf{CutRank} \, \mGLsym{(}  \Pi_{{\mathrm{2}}}  \mGLsym{)} \, \leq \,  \mathsf{Rank}  (  \mGLnt{X}  )
      \end{array}
      \]

      and so applying the induction hypothesis
      to $\Pi_{{\mathrm{1}}}$ and $\Pi_{{\mathrm{3}}}$
      implies that there is a proof $\Pi'$ of
      $\delta_{{\mathrm{1}}}  \mGLsym{,}  \delta_{{\mathrm{2}}}  \mGLsym{,}  \delta_{{\mathrm{3}}}  \odot  \Delta_{{\mathrm{1}}}  \mGLsym{,}  \Delta_{{\mathrm{2}}}  \mGLsym{,}  \Delta_{{\mathrm{3}}}  \mGLsym{;}  \Gamma_{{\mathrm{1}}}  \mGLsym{,}  \Gamma_{{\mathrm{2}}}  \mGLsym{,}  \Gamma_{{\mathrm{3}}}  \vdash_{\mathsf{MS} }  \mGLnt{B}$ with
      $\mathsf{CutRank} \, \mGLsym{(}  \Pi'  \mGLsym{)} \, \leq \,  \mathsf{Rank}  (  \mGLnt{A}  )$.
      Thus, we construct the following proof $\Pi$:

      \[
        \inferrule* [flushleft,right=$\mGLdruleMSTXXWeakName{}$,left=$\Pi :$] {
          \inferrule* [flushleft,left=$\Pi' : $] {
            \pi'
          }{\delta_{{\mathrm{1}}}  \mGLsym{,}  \delta_{{\mathrm{2}}}  \mGLsym{,}  \delta_{{\mathrm{3}}}  \odot  \Delta_{{\mathrm{1}}}  \mGLsym{,}  \Delta_{{\mathrm{2}}}  \mGLsym{,}  \Delta_{{\mathrm{3}}}  \mGLsym{;}  \Gamma_{{\mathrm{1}}}  \mGLsym{,}  \Gamma_{{\mathrm{2}}}  \mGLsym{,}  \Gamma_{{\mathrm{3}}}  \vdash_{\mathsf{MS} }  \mGLnt{B}}
        }{\delta_{{\mathrm{1}}}  \mGLsym{,}  \mathsf{0}  \mGLsym{,}  \delta_{{\mathrm{2}}}  \mGLsym{,}  \delta_{{\mathrm{3}}}  \odot  \Delta_{{\mathrm{1}}}  \mGLsym{,}  \mGLnt{X}  \mGLsym{,}  \Delta_{{\mathrm{2}}}  \mGLsym{,}  \Delta_{{\mathrm{3}}}  \mGLsym{;}  \Gamma_{{\mathrm{1}}}  \mGLsym{,}  \Gamma_{{\mathrm{2}}}  \mGLsym{,}  \Gamma_{{\mathrm{3}}}  \vdash_{\mathsf{MS} }  \mGLnt{B}}
        \]
        with: $\mathsf{CutRank} \, \mGLsym{(}  \Pi  \mGLsym{)} \, \mGLsym{=} \, \mathsf{CutRank} \, \mGLsym{(}  \Pi'  \mGLsym{)} \, \leq \,  \mathsf{Rank}  (  \mGLnt{A}  )$
   \item \textbf{Contraction}
   \[
    \inferrule* [flushleft,right=,left=$\Pi_{{\mathrm{1}}} :$] {
      \pi_1
    }{\delta_{{\mathrm{3}}}  \odot  \Delta_{{\mathrm{3}}}  \mGLsym{;}  \Gamma_{{\mathrm{2}}}  \vdash_{\mathsf{MS} }  \mGLnt{A}}
\]

      \[
        \inferrule* [flushleft,right=$\mGLdruleMSTXXContName{}$,left=$\Pi_{{\mathrm{2}}} :$] {
          \inferrule* [flushleft,left=$\Pi_{{\mathrm{3}}} : $] {
            \pi_3
          }{\delta_{{\mathrm{1}}}  \mGLsym{,}  \mGLnt{r_{{\mathrm{1}}}}  \mGLsym{,}  \mGLnt{r_{{\mathrm{2}}}}  \mGLsym{,}  \delta_{{\mathrm{2}}}  \odot  \Delta_{{\mathrm{1}}}  \mGLsym{,}  \mGLnt{X}  \mGLsym{,}  \mGLnt{X}  \mGLsym{,}  \Delta_{{\mathrm{2}}}  \mGLsym{;}  \Gamma_{{\mathrm{1}}}  \mGLsym{,}  \mGLnt{A}  \mGLsym{,}  \Gamma_{{\mathrm{3}}}  \vdash_{\mathsf{MS} }  \mGLnt{B}}
        }{ \delta_{{\mathrm{1}}}  \mGLsym{,}  \mGLnt{r_{{\mathrm{1}}}}  +  \mGLnt{r_{{\mathrm{2}}}}  \mGLsym{,}  \delta_{{\mathrm{2}}}  \odot  \Delta_{{\mathrm{1}}}  \mGLsym{,}  \mGLnt{X}  \mGLsym{,}  \Delta_{{\mathrm{2}}}  \mGLsym{;}  \Gamma_{{\mathrm{1}}}  \mGLsym{,}  \mGLnt{A}  \mGLsym{,}  \Gamma_{{\mathrm{3}}}  \vdash_{\mathsf{MS} }  \mGLnt{B}}
        \]

      We know:
      \[
      \begin{array}{lll}
        \mathsf{Depth}  (  \Pi_{{\mathrm{1}}}  )   +   \mathsf{Depth}  (  \Pi_{{\mathrm{3}}}  )   \, \mGLsym{<} \,  \mathsf{Depth}  (  \Pi_{{\mathrm{1}}}  )   +   \mathsf{Depth}  (  \Pi_{{\mathrm{2}}}  )\\
        \mathsf{CutRank} \, \mGLsym{(}  \Pi_{{\mathrm{3}}}  \mGLsym{)} \, \leq \, \mathsf{CutRank} \, \mGLsym{(}  \Pi_{{\mathrm{2}}}  \mGLsym{)} \, \leq \,  \mathsf{Rank}  (  \mGLnt{A}  )
      \end{array}
      \]

      and so applying the induction hypothesis
      to $\Pi_{{\mathrm{1}}}$ and $\Pi_{{\mathrm{3}}}$
      implies that there is a proof $\Pi'$ of
      $\delta_{{\mathrm{1}}}  \mGLsym{,}  \mGLnt{r_{{\mathrm{1}}}}  \mGLsym{,}  \mGLnt{r_{{\mathrm{2}}}}  \mGLsym{,}  \delta_{{\mathrm{2}}}  \odot  \Delta_{{\mathrm{1}}}  \mGLsym{,}  \mGLnt{X}  \mGLsym{,}  \mGLnt{X}  \mGLsym{,}  \Delta_{{\mathrm{2}}}  \mGLsym{;}  \Gamma_{{\mathrm{1}}}  \mGLsym{,}  \Gamma_{{\mathrm{2}}}  \mGLsym{,}  \Gamma_{{\mathrm{3}}}  \vdash_{\mathsf{MS} }  \mGLnt{B}$ with
      $\mathsf{CutRank} \, \mGLsym{(}  \Pi'  \mGLsym{)} \, \leq \,  \mathsf{Rank}  (  \mGLnt{X}  )$.
      Thus, we construct the following proof $\Pi$:

      \[
        \inferrule* [flushleft,right=$\mGLdruleMSTXXContName{}$,left=$\Pi :$] {
          \inferrule* [flushleft,left=$\Pi' : $] {
            \pi_3'
          }{\delta_{{\mathrm{1}}}  \mGLsym{,}  \mGLnt{r_{{\mathrm{1}}}}  \mGLsym{,}  \mGLnt{r_{{\mathrm{2}}}}  \mGLsym{,}  \delta_{{\mathrm{2}}}  \odot  \Delta_{{\mathrm{1}}}  \mGLsym{,}  \mGLnt{X}  \mGLsym{,}  \mGLnt{X}  \mGLsym{,}  \Delta_{{\mathrm{2}}}  \mGLsym{;}  \Gamma_{{\mathrm{1}}}  \mGLsym{,}  \Gamma_{{\mathrm{2}}}  \mGLsym{,}  \Gamma_{{\mathrm{3}}}  \vdash_{\mathsf{MS} }  \mGLnt{B}}
        }{\delta_{{\mathrm{1}}}  \mGLsym{,}  \mGLnt{r_{{\mathrm{1}}}}  +  \mGLnt{r_{{\mathrm{2}}}}  \mGLsym{,}  \delta_{{\mathrm{2}}}  \odot  \Delta_{{\mathrm{1}}}  \mGLsym{,}  \mGLnt{X}  \mGLsym{,}  \Delta_{{\mathrm{2}}}  \mGLsym{;}  \Gamma_{{\mathrm{1}}}  \mGLsym{,}  \Gamma_{{\mathrm{2}}}  \mGLsym{,}  \Gamma_{{\mathrm{3}}}  \vdash_{\mathsf{MS} }  \mGLnt{B}}
        \]
        with: $\mathsf{CutRank} \, \mGLsym{(}  \Pi  \mGLsym{)} \, \mGLsym{=} \, \mathsf{CutRank} \, \mGLsym{(}  \Pi'  \mGLsym{)} \, \leq \,  \mathsf{Rank}  (  \mGLnt{A}  )$
   \item \textbf{Graded exchange}
   \[
    \inferrule* [flushleft,right=,left=$\Pi_{{\mathrm{1}}} :$] {
      \pi_1
    }{\delta_{{\mathrm{2}}}  \odot  \Delta  \mGLsym{;}  \Gamma_{{\mathrm{2}}}  \vdash_{\mathsf{MS} }  \mGLnt{A}}
\]

      \[
        \inferrule* [flushleft,right=$\mGLdruleMSTXXGExName{}$,left=$\Pi_{{\mathrm{2}}} :$] {
          \inferrule* [flushleft,left=$\Pi_{{\mathrm{3}}} : $] {
            \pi_3
          }{\delta_{{\mathrm{1}}}  \mGLsym{,}  \mGLnt{r_{{\mathrm{2}}}}  \mGLsym{,}  \mGLnt{r_{{\mathrm{1}}}}  \mGLsym{,}  \delta_{{\mathrm{2}}}  \odot  \Delta_{{\mathrm{1}}}  \mGLsym{,}  \mGLnt{X}  \mGLsym{,}  \mGLnt{Y}  \mGLsym{,}  \Delta_{{\mathrm{2}}}  \mGLsym{;}  \Gamma_{{\mathrm{1}}}  \mGLsym{,}  \mGLnt{A}  \mGLsym{,}  \Gamma_{{\mathrm{3}}}  \vdash_{\mathsf{MS} }  \mGLnt{B}}
        }{ \delta_{{\mathrm{1}}}  \mGLsym{,}  \mGLnt{r_{{\mathrm{1}}}}  \mGLsym{,}  \mGLnt{r_{{\mathrm{2}}}}  \mGLsym{,}  \delta_{{\mathrm{2}}}  \odot  \Delta_{{\mathrm{1}}}  \mGLsym{,}  \mGLnt{Y}  \mGLsym{,}  \mGLnt{X}  \mGLsym{,}  \Delta_{{\mathrm{2}}}  \mGLsym{;}  \Gamma_{{\mathrm{1}}}  \mGLsym{,}  \mGLnt{A}  \mGLsym{,}  \Gamma_{{\mathrm{3}}}  \vdash_{\mathsf{MS} }  \mGLnt{B}}
        \]

      We know:
      \[
      \begin{array}{lll}
        \mathsf{Depth}  (  \Pi_{{\mathrm{1}}}  )   +   \mathsf{Depth}  (  \Pi_{{\mathrm{3}}}  )   \, \mGLsym{<} \,  \mathsf{Depth}  (  \Pi_{{\mathrm{1}}}  )   +   \mathsf{Depth}  (  \Pi_{{\mathrm{2}}}  )\\
        \mathsf{CutRank} \, \mGLsym{(}  \Pi_{{\mathrm{3}}}  \mGLsym{)} \, \leq \, \mathsf{CutRank} \, \mGLsym{(}  \Pi_{{\mathrm{2}}}  \mGLsym{)} \, \leq \,  \mathsf{Rank}  (  \mGLnt{A}  )
      \end{array}
      \]

      and so applying the induction hypothesis
      to $\Pi_{{\mathrm{1}}}$ and $\Pi_{{\mathrm{3}}}$
      implies that there is a proof $\Pi'$ of
      $\delta_{{\mathrm{1}}}  \mGLsym{,}  \mGLnt{r_{{\mathrm{2}}}}  \mGLsym{,}  \mGLnt{r_{{\mathrm{1}}}}  \mGLsym{,}  \delta_{{\mathrm{2}}}  \odot  \Delta_{{\mathrm{1}}}  \mGLsym{,}  \mGLnt{X}  \mGLsym{,}  \mGLnt{Y}  \mGLsym{,}  \Delta_{{\mathrm{2}}}  \mGLsym{;}  \Gamma_{{\mathrm{1}}}  \mGLsym{,}  \Gamma_{{\mathrm{2}}}  \mGLsym{,}  \Gamma_{{\mathrm{3}}}  \vdash_{\mathsf{MS} }  \mGLnt{B}$ with
      $\mathsf{CutRank} \, \mGLsym{(}  \Pi'  \mGLsym{)} \, \leq \,  \mathsf{Rank}  (  \mGLnt{X}  )$.
      Thus, we construct the following proof $\Pi$:

      \[
        \inferrule* [flushleft,right=$\mGLdruleMSTXXGExName{}$,left=$\Pi :$] {
          \inferrule* [flushleft,left=$\Pi' : $] {
            \pi_3'
          }{\delta_{{\mathrm{1}}}  \mGLsym{,}  \mGLnt{r_{{\mathrm{2}}}}  \mGLsym{,}  \mGLnt{r_{{\mathrm{1}}}}  \mGLsym{,}  \delta_{{\mathrm{2}}}  \odot  \Delta_{{\mathrm{1}}}  \mGLsym{,}  \mGLnt{X}  \mGLsym{,}  \mGLnt{Y}  \mGLsym{,}  \Delta_{{\mathrm{2}}}  \mGLsym{;}  \Gamma_{{\mathrm{1}}}  \mGLsym{,}  \Gamma_{{\mathrm{2}}}  \mGLsym{,}  \Gamma_{{\mathrm{3}}}  \vdash_{\mathsf{MS} }  \mGLnt{B}}
        }{\delta_{{\mathrm{1}}}  \mGLsym{,}  \mGLnt{r_{{\mathrm{1}}}}  \mGLsym{,}  \mGLnt{r_{{\mathrm{2}}}}  \mGLsym{,}  \delta_{{\mathrm{2}}}  \odot  \Delta_{{\mathrm{1}}}  \mGLsym{,}  \mGLnt{Y}  \mGLsym{,}  \mGLnt{X}  \mGLsym{,}  \Delta_{{\mathrm{2}}}  \mGLsym{;}  \Gamma_{{\mathrm{1}}}  \mGLsym{,}  \Gamma_{{\mathrm{2}}}  \mGLsym{,}  \Gamma_{{\mathrm{3}}}  \vdash_{\mathsf{MS} }  \mGLnt{B}}
        \]
        with: $\mathsf{CutRank} \, \mGLsym{(}  \Pi  \mGLsym{)} \, \mGLsym{=} \, \mathsf{CutRank} \, \mGLsym{(}  \Pi'  \mGLsym{)} \, \leq \,  \mathsf{Rank}  (  \mGLnt{A}  )$
        \item \textbf{Linear exchange}
        \[
          \inferrule* [flushleft,right=,left=$\Pi_{{\mathrm{1}}} :$] {
            \pi_1
          }{\delta_{{\mathrm{2}}}  \odot  \Delta_{{\mathrm{2}}}  \mGLsym{;}  \Gamma  \vdash_{\mathsf{MS} }  \mGLnt{A}}
      \]

            \[
              \inferrule* [flushleft,right=$\mGLdruleMSTXXExName{}$,left=$\Pi_{{\mathrm{2}}} :$] {
                \inferrule* [flushleft,left=$\Pi_{{\mathrm{3}}} : $] {
                  \pi_3
                }{\delta_{{\mathrm{1}}}  \odot  \Delta_{{\mathrm{1}}}  \mGLsym{;}  \Gamma_{{\mathrm{1}}}  \mGLsym{,}  \mGLnt{A}  \mGLsym{,}  \mGLnt{B}  \mGLsym{,}  \Gamma_{{\mathrm{3}}}  \vdash_{\mathsf{MS} }  \mGLnt{C}}
              }{ \delta_{{\mathrm{1}}}  \odot  \Delta_{{\mathrm{1}}}  \mGLsym{;}  \Gamma_{{\mathrm{1}}}  \mGLsym{,}  \mGLnt{B}  \mGLsym{,}  \mGLnt{A}  \mGLsym{,}  \Gamma_{{\mathrm{3}}}  \vdash_{\mathsf{MS} }  \mGLnt{C}}
              \]
         We know:
      \[
      \begin{array}{lll}
        \mathsf{Depth}  (  \Pi_{{\mathrm{1}}}  )   +   \mathsf{Depth}  (  \Pi_{{\mathrm{3}}}  )   \, \mGLsym{<} \,  \mathsf{Depth}  (  \Pi_{{\mathrm{1}}}  )   +   \mathsf{Depth}  (  \Pi_{{\mathrm{2}}}  )\\
        \mathsf{CutRank} \, \mGLsym{(}  \Pi_{{\mathrm{3}}}  \mGLsym{)} \, \leq \, \mathsf{CutRank} \, \mGLsym{(}  \Pi_{{\mathrm{2}}}  \mGLsym{)} \, \leq \,  \mathsf{Rank}  (  \mGLnt{A}  )
      \end{array}
      \]

      and so applying the induction hypothesis
      to $\Pi_{{\mathrm{1}}}$ and $\Pi_{{\mathrm{3}}}$
      implies that there is a proof $\Pi'$ of
      $\delta_{{\mathrm{1}}}  \odot  \Delta_{{\mathrm{1}}}  \mGLsym{;}  \Gamma_{{\mathrm{1}}}  \mGLsym{,}  \Gamma_{{\mathrm{2}}}  \mGLsym{,}  \mGLnt{B}  \mGLsym{,}  \Gamma_{{\mathrm{3}}}  \vdash_{\mathsf{MS} }  \mGLnt{C}$ with
      $\mathsf{CutRank} \, \mGLsym{(}  \Pi'  \mGLsym{)} \, \leq \,  \mathsf{Rank}  (  \mGLnt{X}  )$.
      Thus, we construct the following proof $\Pi$:
      \[
              \inferrule* [flushleft,right=$\mGLdruleMSTXXExName{}$,left=$\Pi :$] {
                \inferrule* [flushleft,left=$\Pi' : $] {
                  \pi'
                }{\delta_{{\mathrm{1}}}  \odot  \Delta_{{\mathrm{1}}}  \mGLsym{;}  \Gamma_{{\mathrm{1}}}  \mGLsym{,}  \Gamma_{{\mathrm{2}}}  \mGLsym{,}  \mGLnt{B}  \mGLsym{,}  \Gamma_{{\mathrm{3}}}  \vdash_{\mathsf{MS} }  \mGLnt{C}}
              }{ \delta_{{\mathrm{1}}}  \odot  \Delta_{{\mathrm{1}}}  \mGLsym{;}  \Gamma_{{\mathrm{1}}}  \mGLsym{,}  \mGLnt{B}  \mGLsym{,}  \Gamma_{{\mathrm{2}}}  \mGLsym{,}  \Gamma_{{\mathrm{3}}}  \vdash_{\mathsf{MS} }  \mGLnt{C}}
              \]
              with: $\mathsf{CutRank} \, \mGLsym{(}  \Pi  \mGLsym{)} \, \mGLsym{=} \, \mathsf{CutRank} \, \mGLsym{(}  \Pi'  \mGLsym{)} \, \leq \,  \mathsf{Rank}  (  \mGLnt{A}  )$

        \item \textbf{Linear exchange (second case)}
        \[
          \inferrule* [flushleft,right=,left=$\Pi_{{\mathrm{1}}} :$] {
            \pi_1
          }{\delta_{{\mathrm{2}}}  \odot  \Delta_{{\mathrm{2}}}  \mGLsym{;}  \Gamma  \vdash_{\mathsf{MS} }  \mGLnt{A}}
      \]

            \[
              \inferrule* [flushleft,right=$\mGLdruleMSTXXExName{}$,left=$\Pi_{{\mathrm{2}}} :$] {
                \inferrule* [flushleft,left=$\Pi_{{\mathrm{3}}} : $] {
                  \pi_3
                }{\delta_{{\mathrm{1}}}  \odot  \Delta_{{\mathrm{1}}}  \mGLsym{;}  \Gamma_{{\mathrm{1}}}  \mGLsym{,}  \mGLnt{A}  \mGLsym{,}  \Gamma_{{\mathrm{3}}}  \mGLsym{,}  \mGLnt{C}  \mGLsym{,}  \mGLnt{B}  \mGLsym{,}  \Gamma_{{\mathrm{4}}}  \vdash_{\mathsf{MS} }  \mGLnt{D}}
              }{ \delta_{{\mathrm{1}}}  \odot  \Delta_{{\mathrm{1}}}  \mGLsym{;}  \Gamma_{{\mathrm{1}}}  \mGLsym{,}  \mGLnt{A}  \mGLsym{,}  \Gamma_{{\mathrm{3}}}  \mGLsym{,}  \mGLnt{B}  \mGLsym{,}  \mGLnt{C}  \mGLsym{,}  \Gamma_{{\mathrm{4}}}  \vdash_{\mathsf{MS} }  \mGLnt{B}}
              \]
 We know:
      \[
      \begin{array}{lll}
        \mathsf{Depth}  (  \Pi_{{\mathrm{1}}}  )   +   \mathsf{Depth}  (  \Pi_{{\mathrm{3}}}  )   \, \mGLsym{<} \,  \mathsf{Depth}  (  \Pi_{{\mathrm{1}}}  )   +   \mathsf{Depth}  (  \Pi_{{\mathrm{2}}}  )\\
        \mathsf{CutRank} \, \mGLsym{(}  \Pi_{{\mathrm{3}}}  \mGLsym{)} \, \leq \, \mathsf{CutRank} \, \mGLsym{(}  \Pi_{{\mathrm{2}}}  \mGLsym{)} \, \leq \,  \mathsf{Rank}  (  \mGLnt{A}  )
      \end{array}
      \]

      and so applying the induction hypothesis
      to $\Pi_{{\mathrm{1}}}$ and $\Pi_{{\mathrm{3}}}$
      implies that there is a proof $\Pi'$ of
      $\delta_{{\mathrm{1}}}  \odot  \Delta_{{\mathrm{1}}}  \mGLsym{;}  \Gamma_{{\mathrm{1}}}  \mGLsym{,}  \Gamma_{{\mathrm{2}}}  \mGLsym{,}  \Gamma_{{\mathrm{3}}}  \mGLsym{,}  \mGLnt{C}  \mGLsym{,}  \mGLnt{B}  \mGLsym{,}  \Gamma_{{\mathrm{4}}}  \vdash_{\mathsf{MS} }  \mGLnt{D}$ with
      $\mathsf{CutRank} \, \mGLsym{(}  \Pi'  \mGLsym{)} \, \leq \,  \mathsf{Rank}  (  \mGLnt{X}  )$.
      Thus, we construct the following proof $\Pi$:
      \[
              \inferrule* [flushleft,right=$\mGLdruleMSTXXExName{}$,left=$\Pi :$] {
                \inferrule* [flushleft,left=$\Pi' : $] {
                  \pi'
                }{\delta_{{\mathrm{1}}}  \odot  \Delta_{{\mathrm{1}}}  \mGLsym{;}  \Gamma_{{\mathrm{1}}}  \mGLsym{,}  \Gamma_{{\mathrm{2}}}  \mGLsym{,}  \Gamma_{{\mathrm{3}}}  \mGLsym{,}  \mGLnt{C}  \mGLsym{,}  \mGLnt{B}  \mGLsym{,}  \Gamma_{{\mathrm{4}}}  \vdash_{\mathsf{MS} }  \mGLnt{D}}
              }{ \delta_{{\mathrm{1}}}  \odot  \Delta_{{\mathrm{1}}}  \mGLsym{;}  \Gamma_{{\mathrm{1}}}  \mGLsym{,}  \Gamma_{{\mathrm{2}}}  \mGLsym{,}  \Gamma_{{\mathrm{3}}}  \mGLsym{,}  \mGLnt{B}  \mGLsym{,}  \mGLnt{C}  \mGLsym{,}  \Gamma_{{\mathrm{4}}}  \vdash_{\mathsf{MS} }  \mGLnt{D}}
              \]
              with: $\mathsf{CutRank} \, \mGLsym{(}  \Pi  \mGLsym{)} \, \mGLsym{=} \, \mathsf{CutRank} \, \mGLsym{(}  \Pi'  \mGLsym{)} \, \leq \,  \mathsf{Rank}  (  \mGLnt{A}  )$

        \item \textbf{Linear exchange (third case )}
        \[
          \inferrule* [flushleft,right=,left=$\Pi_{{\mathrm{1}}} :$] {
            \pi_1
          }{\delta_{{\mathrm{2}}}  \odot  \Delta_{{\mathrm{2}}}  \mGLsym{;}  \Gamma  \vdash_{\mathsf{MS} }  \mGLnt{A}}
      \]

            \[
              \inferrule* [flushleft,right=$\mGLdruleMSTXXExName{}$,left=$\Pi_{{\mathrm{2}}} :$] {
                \inferrule* [flushleft,left=$\Pi_{{\mathrm{3}}} : $] {
                  \pi_3
                }{\delta_{{\mathrm{1}}}  \odot  \Delta_{{\mathrm{1}}}  \mGLsym{;}  \Gamma_{{\mathrm{1}}}  \mGLsym{,}  \mGLnt{C}  \mGLsym{,}  \mGLnt{B}  \mGLsym{,}  \Gamma_{{\mathrm{2}}}  \mGLsym{,}  \mGLnt{A}  \mGLsym{,}  \Gamma_{{\mathrm{4}}}  \vdash_{\mathsf{MS} }  \mGLnt{D}}
              }{ \delta_{{\mathrm{1}}}  \odot  \Delta_{{\mathrm{1}}}  \mGLsym{;}  \Gamma_{{\mathrm{1}}}  \mGLsym{,}  \mGLnt{B}  \mGLsym{,}  \mGLnt{C}  \mGLsym{,}  \Gamma_{{\mathrm{2}}}  \mGLsym{,}  \mGLnt{A}  \mGLsym{,}  \Gamma_{{\mathrm{4}}}  \vdash_{\mathsf{MS} }  \mGLnt{B}}
              \]
      Similar to the previous case.
  \end{enumerate}
\end{enumerate}

\end{proof}

A notable example:

\begin{gather*}
  \begin{array}{lll}
    \inferrule* [flushleft,right=$\mGLdruleGSTXXLinRName{}$,left=$\Pi_{{\mathrm{1}}} :$] {
      \inferrule* [flushleft,right=$\mGLdruleMSTXXGrdRName{}$, left=] {
        \inferrule* [flushleft,right=$\mGLdruleGSTXXLinRName{}$, left=] {
          \inferrule* [flushleft,right=$\mGLdruleMSTXXLinLName{}$, left=] {
          \inferrule* [flushleft,right=$\mGLdruleMSTXXidName{}$, left=] {
          \
          }{\emptyset  \odot  \emptyset  \mGLsym{;}  \mGLnt{A}  \vdash_{\mathsf{MS} }  \mGLnt{A}}
          }{1  \odot  \mathsf{Lin} \, \mGLnt{A}  \mGLsym{;}  \emptyset  \vdash_{\mathsf{MS} }  \mGLnt{A}}
        }{1  \odot  \mathsf{Lin} \, \mGLnt{A}  \vdash_{\mathsf{GS} }  \mathsf{Lin} \, \mGLnt{A}}
    }{1  *  1  \odot  \mathsf{Lin} \, \mGLnt{A}  \mGLsym{;}  \emptyset  \vdash_{\mathsf{MS} }   \mathsf{Grd} _{ 1 }\, \mathsf{Lin} \, \mGLnt{A}}
    }{1  \odot  \mathsf{Lin} \, \mGLnt{A}  \vdash_{\mathsf{GS} }  \mathsf{Lin} \,  \mathsf{Grd} _{ 1 }\, \mathsf{Lin} \, \mGLnt{A}}
    & \quad &
    \inferrule* [flushleft,right=$\mGLdruleGSTXXLinRName{}$,left=$\Pi_{{\mathrm{0}}} :$] {
     \inferrule* [flushleft,right=$\mGLdruleMSTXXLinLName{}$, left=$\Pi_{{\mathrm{2}}} :$] {
       \inferrule* [flushleft,right=$\mGLdruleMSTXXGrdLName{}$, left=] {
        \inferrule* [flushleft,right=$\mGLdruleMSTXXLinLName{}$, left=] {
         \inferrule* [flushleft,right=$\mGLdruleMSTXXidName{}$, left=] {
          \
          }{\emptyset  \odot  \emptyset  \mGLsym{;}  \mGLnt{A}  \vdash_{\mathsf{MS} }  \mGLnt{A}}
          }{1  \odot  \mathsf{Lin} \, \mGLnt{A}  \mGLsym{;}  \emptyset  \vdash_{\mathsf{MS} }  \mGLnt{A}}
        }{\emptyset  \odot  \emptyset  \mGLsym{;}   \mathsf{Grd} _{ 1 }\, \mathsf{Lin} \, \mGLnt{A}   \vdash_{\mathsf{MS} }  \mGLnt{A}}
      }{1  \odot  \mathsf{Lin} \,  \mathsf{Grd} _{ 1 }\, \mathsf{Lin} \, \mGLnt{A}   \mGLsym{;}  \emptyset  \vdash_{\mathsf{MS} }  \mGLnt{A}}
    }{1  \odot  \mathsf{Lin} \,  \mathsf{Grd} _{ 1 }\, \mathsf{Lin} \, \mGLnt{A}   \vdash_{\mathsf{GS} }  \mathsf{Lin} \, \mGLnt{A}}
  \end{array}
  \end{gather*}

$ \Pi_{{\mathrm{1}}}$ and $\Pi_{{\mathrm{0}}}$ would be handled by a secondary hypothesis case for the
right introduction of $\mathsf{Lin}$. The inductive step of the reduction would be on
$ \Pi_{{\mathrm{1}}}$ and $\Pi_{{\mathrm{2}}}$. The right and left rules for the modal operators line up,
resulting in three principal versus principal cases. Finally the reduction bottoms
out at an axiom case. Resulting in the following proof:

\[
  \inferrule* [flushleft,right=$\mGLdruleGSTXXLinRName{}$, left=$\Pi' :$] {
    \inferrule* [flushleft,right=$\mGLdruleMSTXXLinLName{}$, left=] {
      \inferrule* [flushleft,right=$\mGLdruleMSTXXidName{}$, left=] {
      \
    }{\emptyset  \odot  \emptyset  \mGLsym{;}  \mGLnt{A}  \vdash_{\mathsf{MS} }  \mGLnt{A}}
    }{1  \odot  \mathsf{Lin} \, \mGLnt{A}  \mGLsym{;}  \emptyset  \vdash_{\mathsf{MS} }  \mGLnt{A}}
  }{1  \odot  \mathsf{Lin} \, \mGLnt{A}  \vdash_{\mathsf{GS} }  \mathsf{Lin} \, \mGLnt{A}}
  \]


\subsection{Decreasing Order of $\mGL{}$}
\label{subsec:decreasing_order_of_mgl}
\decreasingMGL*

Decreasing order is done termless for simplicity.
\begin{proof}
  By induction on $\mathsf{Depth}  (  \Pi  )$. For convenience we will only consider $\Pi$ a proof of $\delta  \odot  \Delta  \vdash_{\mathsf{GS} }  \mGLnt{Y}$. 
  If the last inference in $\Pi$ is not a cut, then we simply apply
  the induction hypothesis. Thus, suppose the last inference in $\Pi$ is a cut on a formula $X$. If
  $\mathsf{CutRank} \, \mGLsym{(}  \Pi  \mGLsym{)} \, \mGLsym{>} \,  \mathsf{Rank}  (  \mGLnt{X}  )   + 1$, then $X$ is not the main cut formula in $\Pi$, and hence, we can just
  apply the induction hypothesis to the premises of cut. This leaves the case where $\mathsf{CutRank} \, \mGLsym{(}  \Pi  \mGLsym{)} \, \mGLsym{=} \,  \mathsf{Rank}  (  \mGLnt{X}  )   + 1 $, and so $X$ is the main cut formula in $\Pi$. This implies $\Pi$ is of the following form: \\

\[
      \inferrule* [flushleft,right=GS-MCut,left=$\Pi :$] {        
        \inferrule* [flushleft,left=$\Pi_{{\mathrm{1}}} : $] {
          \pi_1
        }{\delta_{{\mathrm{2}}}  \odot  \Delta_{{\mathrm{2}}}  \vdash_{\mathsf{GS} }  \mGLnt{X}}\\ 
        \inferrule* [flushleft,right=,left=$\Pi_{{\mathrm{2}}} :$] {
          \pi_2
        }{(  \delta_{{\mathrm{1}}}  \mGLsym{,}  \delta  \mGLsym{,}  \delta_{{\mathrm{3}}}  )   \odot   ( \Delta_{{\mathrm{1}}}  \mGLsym{,}   \mGLnt{X} ^{ \mGLmv{n} }   \mGLsym{,}  \Delta_{{\mathrm{3}}} )   \vdash_{\mathsf{GS} }  \mGLnt{Y}} 
      }{(  \delta_{{\mathrm{1}}}  \mGLsym{,}  \delta'_{{\mathrm{2}}}  \mGLsym{,}  \delta_{{\mathrm{3}}}  )   \odot   ( \Delta_{{\mathrm{1}}}  \mGLsym{,}  \Delta_{{\mathrm{2}}}  \mGLsym{,}  \Delta_{{\mathrm{3}}} )   \vdash_{\mathsf{GS} }  \mGLnt{Y}} 
      \]

  We assumed above that $\mathsf{CutRank} \, \mGLsym{(}  \Pi  \mGLsym{)} \, \mGLsym{=} \,  \mathsf{Rank}  (  \mGLnt{X}  )   + 1$, and hence, these imply:
  \[
  \begin{array}{lll}
   \mathsf{Depth}  (  \Pi  )  \, \mGLsym{=} \,  \mathsf{Depth}  (  \Pi_{{\mathrm{1}}}  )   +   \mathsf{Depth}  (  \Pi_{{\mathrm{2}}}  )    + 1 \\
   \mathsf{Depth}  (  \Pi_{{\mathrm{1}}}  )  \, \mGLsym{<} \,  \mathsf{Depth}  (  \Pi  ) \\
   \mathsf{Depth}  (  \Pi_{{\mathrm{2}}}  )  \, \mGLsym{<} \,  \mathsf{Depth}  (  \Pi  ) \\
   \mathsf{CutRank} \, \mGLsym{(}  \Pi  \mGLsym{)}  = Max ( \mathsf{CutRank} \, \mGLsym{(}  \Pi_{{\mathrm{1}}}  \mGLsym{)} , \mathsf{CutRank} \, \mGLsym{(}  \Pi_{{\mathrm{2}}}  \mGLsym{)} , \mathsf{Rank}  (  \mGLnt{X}  )   + 1) \\
   \mathsf{CutRank} \, \mGLsym{(}  \Pi_{{\mathrm{1}}}  \mGLsym{)} \, \leq \,  \mathsf{Rank}  (  \mGLnt{X}  )   + 1 \\
   \mathsf{CutRank} \, \mGLsym{(}  \Pi_{{\mathrm{2}}}  \mGLsym{)} \, \leq \,  \mathsf{Rank}  (  \mGLnt{X}  )   + 1 
\end{array}
\]

\noindent
  Now without loss of generality we assume neither $\Pi_{{\mathrm{1}}}$ nor $\Pi_{{\mathrm{2}}}$ are axioms which follow 
  similarly. By the induction hypothesis we may conclude that there are proofs $\Pi'_{{\mathrm{1}}}$
  and $\Pi'_{{\mathrm{2}}}$ of the same sequents of $\Pi_{{\mathrm{1}}}$ and $\Pi_{{\mathrm{2}}}$, but with 
  $\mathsf{CutRank} \, \mGLsym{(}  \Pi'_{{\mathrm{1}}}  \mGLsym{)} \, \leq \,  \mathsf{Rank}  (  \mGLnt{X}  ) $ and $\mathsf{CutRank} \, \mGLsym{(}  \Pi'_{{\mathrm{2}}}  \mGLsym{)} \, \leq \,  \mathsf{Rank}  (  \mGLnt{X}  ) $. 
  Therefore, by applying the cut reduction lemma to $\Pi'_{{\mathrm{1}}}$ and $\Pi'_{{\mathrm{2}}}$ we obtain a proof $\Pi'$ 
  of the sequent $(  \delta_{{\mathrm{1}}}  \mGLsym{,}  \delta'_{{\mathrm{2}}}  \mGLsym{,}  \delta_{{\mathrm{3}}}  )   \odot   ( \Delta_{{\mathrm{1}}}  \mGLsym{,}  \Delta_{{\mathrm{2}}}  \mGLsym{,}  \Delta_{{\mathrm{3}}} )   \vdash_{\mathsf{GS} }  \mGLnt{Y}$ with $\mathsf{CutRank} \, \mGLsym{(}  \Pi'  \mGLsym{)} \, \leq \,  \mathsf{Rank}  (  \mGLnt{X}  )  \, \mGLsym{<} \,  \mathsf{Rank}  (  \mGLnt{X}  )   + 1  \, \mGLsym{=} \, \mathsf{CutRank} \, \mGLsym{(}  \Pi  \mGLsym{)} $.
  
\end{proof}


\subsection{Proof of subformula property}
\label{subsec:proof_of_subformaul_property}
Subformula definitions and property proof are done termless for simplicity. 
\begin{definition}
    We will define the function $Sf$ from formulas to sets of formulas as follows 
    \[
      \begin{array}{lll}
       Sf(\phi) & = & \{\phi \} \text{when } \phi \text{ is atomic} \\
       Sf(\mGLnt{X}  \boxtimes  \mGLnt{Y}) & = & \bigcup \{Sf(X), Sf(Y), \{\mGLnt{X}  \boxtimes  \mGLnt{Y} \}\} \\
       Sf(\mGLnt{A}  \otimes  \mGLnt{B}) & = & \bigcup \{Sf(A), Sf(B), \{\mGLnt{A}  \otimes  \mGLnt{B}\}\} \\
       Sf(\mGLnt{A}  \multimap  \mGLnt{B}) & = & \bigcup \{Sf(A), Sf(B), \{\mGLnt{A}  \multimap  \mGLnt{B}\}\} \\
       Sf(\mathsf{Lin} \, \mGLnt{A}) & = & Sf(A) \cup \{\mathsf{Lin} \, \mGLnt{A}\} \\
       Sf(\mathsf{Grd} _{ \mGLnt{r} }\, \mGLnt{X}) & = & Sf(X) \cup \{\mathsf{Grd} _{ \mGLnt{r} }\, \mGLnt{X}\}
        \end{array}
    \]
\end{definition}
Note: if $\phi \in Sf(\psi)$ then $Sf(\phi) \subseteq Sf(\psi)$
\begin{definition}
    We will define $\mathcal{F}(S)$, where $S$ is a well formed 
    judgement, as the set of formulas that appear in the judgement  $S$. 
    For a judgement $\delta  \odot  \Delta  \vdash_{\mathsf{GS} }  \mGLnt{X}$, $\mathcal{F}(\delta  \odot  \Delta  \vdash_{\mathsf{GS} }  \mGLnt{X})= \{X\} \cup \Delta $ 
    where $\Delta$ is viewed as a set. \\
    Similarly, we will define $\mathcal{F}[\Pi]$, where $\Pi$ is a proof of a judgement  
    $S$ as $\mathcal{F}[\Pi] = \mathcal{F}(S) \cup ( \bigcup \mathcal{F}[\Pi_{i}] )$ where 
    $\Pi_i$'s are the premises of $\Pi$
\end{definition}
\subformula*
\noindent
With the above definitions we can now restate the subformula property formally as follows: \\
\noindent
For any judgement $\delta  \odot  \Delta  \vdash_{\mathsf{GS} }  \mGLnt{X}$ and cut-free proof $\Pi$ of $\delta  \odot  \Delta  \vdash_{\mathsf{GS} }  \mGLnt{X}$, \\
  $$\mathcal{F}[\Pi] \subseteq \bigcup_{\phi \in \mathcal{F}(\delta  \odot  \Delta  \vdash_{\mathsf{GS} }  \mGLnt{X})} Sf(\phi)$$
  \noindent
For any judgement $\delta  \odot  \Delta  \mGLsym{;}  \Gamma  \vdash_{\mathsf{MS} }  \mGLnt{A}$ and cut-free proof $\Pi$ of $\delta  \odot  \Delta  \mGLsym{;}  \Gamma  \vdash_{\mathsf{MS} }  \mGLnt{A}$, \\
  $$\mathcal{F}[\Pi] \subseteq \bigcup_{\phi \in \mathcal{F}(\delta  \odot  \Delta  \mGLsym{;}  \Gamma  \vdash_{\mathsf{MS} }  \mGLnt{A})} Sf(\phi)$$

Proof is by mutual induction on derivations.
\begin{proof}
    The axiom cases vacuously true. Mutual induction is used in the right 
    introductions for the Grd and Lin constructors. All the inductive cases 
    follow a similar pattern. The proof for GS-TenL is provided as an example. 
    \[
     \inferrule* [flushleft,right=, left=$\Pi_{{\mathrm{1}}} :$] {
      \inferrule* [flushleft,right=, left=$\Pi_{{\mathrm{2}}} :$] {
       \pi_2
       }{(  \delta_{{\mathrm{1}}}  \mGLsym{,}  \mGLnt{r}  \mGLsym{,}  \mGLnt{r}  \mGLsym{,}  \delta_{{\mathrm{2}}}  )   \odot   ( \Delta_{{\mathrm{1}}}  \mGLsym{,}  \mGLnt{X}  \mGLsym{,}  \mGLnt{Y}  \mGLsym{,}  \Delta_{{\mathrm{2}}} )   \vdash_{\mathsf{GS} }  \mGLnt{Z}}
     }{(  \delta_{{\mathrm{1}}}  \mGLsym{,}  \mGLnt{r}  \mGLsym{,}  \delta_{{\mathrm{2}}}  )   \odot   ( \Delta_{{\mathrm{1}}}  \mGLsym{,}  \mGLnt{X}  \boxtimes  \mGLnt{Y}  \mGLsym{,}  \Delta_{{\mathrm{2}}} )   \vdash_{\mathsf{GS} }  \mGLnt{Z}}
    \]
    By assumption we know 
    $$\mathcal{F}[ \Pi_{{\mathrm{2}}} ] \subseteq 
    \bigcup  \{ Sf(\phi) \mid \phi \in \mathcal{F}((  \delta_{{\mathrm{1}}}  \mGLsym{,}  \mGLnt{r}  \mGLsym{,}  \mGLnt{r}  \mGLsym{,}  \delta_{{\mathrm{2}}}  )   \odot   ( \Delta_{{\mathrm{1}}}  \mGLsym{,}  \mGLnt{X}  \mGLsym{,}  \mGLnt{Y}  \mGLsym{,}  \Delta_{{\mathrm{2}}} )   \vdash_{\mathsf{GS} }  \mGLnt{Z})  \} $$
    We also know 
    \[
      \begin{array}{lll}
      \bigcup  \{ Sf(\phi) \mid \phi \in \mathcal{F}((  \delta_{{\mathrm{1}}}  \mGLsym{,}  \mGLnt{r}  \mGLsym{,}  \delta_{{\mathrm{2}}}  )   \odot   ( \Delta_{{\mathrm{1}}}  \mGLsym{,}  \mGLnt{X}  \boxtimes  \mGLnt{Y}  \mGLsym{,}  \Delta_{{\mathrm{2}}} )   \vdash_{\mathsf{GS} }  \mGLnt{Z}) \}\\
       = \left( \bigcup  \{ Sf(\phi) \mid \phi \in \{ \delta_1, r, \delta_2 , \Delta_1,  \Delta_2,  Z)  \} \right)
     \cup ( \bigcup \{ Sf(X) , Sf(Y), \mGLnt{X}  \boxtimes  \mGLnt{Y} \}) \\
     = \left( \bigcup  \{ Sf(\phi) \mid \phi \in \mathcal{F}((  \delta_{{\mathrm{1}}}  \mGLsym{,}  \mGLnt{r}  \mGLsym{,}  \mGLnt{r}  \mGLsym{,}  \delta_{{\mathrm{2}}}  )   \odot   ( \Delta_{{\mathrm{1}}}  \mGLsym{,}  \mGLnt{X}  \mGLsym{,}  \mGLnt{Y}  \mGLsym{,}  \Delta_{{\mathrm{2}}} )   \vdash_{\mathsf{GS} }  \mGLnt{Z})  \} \right) 
     \cup \{ \mGLnt{X}  \boxtimes  \mGLnt{Y} \}
    
    \end{array}
    \]
    So 
    \[ 
      \mathcal{F}[ \Pi_{{\mathrm{2}}} ] \subseteq   \bigcup  \{ Sf(\phi)  \mid \phi \in \mathcal{F}((  \delta_{{\mathrm{1}}}  \mGLsym{,}  \mGLnt{r}  \mGLsym{,}  \delta_{{\mathrm{2}}}  )   \odot   ( \Delta_{{\mathrm{1}}}  \mGLsym{,}  \mGLnt{X}  \boxtimes  \mGLnt{Y}  \mGLsym{,}  \Delta_{{\mathrm{2}}} )   \vdash_{\mathsf{GS} }  \mGLnt{Z}) \}
      \]
    from the definition of $\mathcal{F}[ \Pi ] $ 
    \[
      \mathcal{F}[ \Pi_{{\mathrm{1}}} ] = \mathcal{F}[ \Pi_{{\mathrm{2}}} ] \cup \mathcal{F}( (  \delta_{{\mathrm{1}}}  \mGLsym{,}  \mGLnt{r}  \mGLsym{,}  \delta_{{\mathrm{2}}}  )   \odot   ( \Delta_{{\mathrm{1}}}  \mGLsym{,}  \mGLnt{X}  \boxtimes  \mGLnt{Y}  \mGLsym{,}  \Delta_{{\mathrm{2}}} )   \vdash_{\mathsf{GS} }  \mGLnt{Z} )
      \]
    from the definition of $ Sf(\phi) $ 
    \[
      \mathcal{F}( (  \delta_{{\mathrm{1}}}  \mGLsym{,}  \mGLnt{r}  \mGLsym{,}  \delta_{{\mathrm{2}}}  )   \odot   ( \Delta_{{\mathrm{1}}}  \mGLsym{,}  \mGLnt{X}  \boxtimes  \mGLnt{Y}  \mGLsym{,}  \Delta_{{\mathrm{2}}} )   \vdash_{\mathsf{GS} }  \mGLnt{Z} )  \subseteq 
    \bigcup \{ Sf(\phi) \mid \phi \in \mathcal{F}((  \delta_{{\mathrm{1}}}  \mGLsym{,}  \mGLnt{r}  \mGLsym{,}  \delta_{{\mathrm{2}}}  )   \odot   ( \Delta_{{\mathrm{1}}}  \mGLsym{,}  \mGLnt{X}  \boxtimes  \mGLnt{Y}  \mGLsym{,}  \Delta_{{\mathrm{2}}} )   \vdash_{\mathsf{GS} }  \mGLnt{Z})\} 
    \]
    So we know
    \[
      \mathcal{F}[ \Pi_{{\mathrm{2}}} ] \cup \mathcal{F}( (  \delta_{{\mathrm{1}}}  \mGLsym{,}  \mGLnt{r}  \mGLsym{,}  \delta_{{\mathrm{2}}}  )   \odot   ( \Delta_{{\mathrm{1}}}  \mGLsym{,}  \mGLnt{X}  \boxtimes  \mGLnt{Y}  \mGLsym{,}  \Delta_{{\mathrm{2}}} )   \vdash_{\mathsf{GS} }  \mGLnt{Z} ) \subseteq 
    \bigcup \{ Sf(\phi)  \mid \phi \in \mathcal{F}((  \delta_{{\mathrm{1}}}  \mGLsym{,}  \mGLnt{r}  \mGLsym{,}  \delta_{{\mathrm{2}}}  )   \odot   ( \Delta_{{\mathrm{1}}}  \mGLsym{,}  \mGLnt{X}  \boxtimes  \mGLnt{Y}  \mGLsym{,}  \Delta_{{\mathrm{2}}} )   \vdash_{\mathsf{GS} }  \mGLnt{Z})\} 
    \]
    That is, 
    \[
      \mathcal{F}[ \Pi_{{\mathrm{1}}} ] \subseteq \bigcup \{Sf(\phi) \mid \phi \in \mathcal{F}((  \delta_{{\mathrm{1}}}  \mGLsym{,}  \mGLnt{r}  \mGLsym{,}  \delta_{{\mathrm{2}}}  )   \odot   ( \Delta_{{\mathrm{1}}}  \mGLsym{,}  \mGLnt{X}  \boxtimes  \mGLnt{Y}  \mGLsym{,}  \Delta_{{\mathrm{2}}} )   \vdash_{\mathsf{GS} }  \mGLnt{Z})\} 
      \]
\end{proof} 

\subsection{Denotational model}
\label{subsec:model}
\begin{definition}[Interpretation of \mGLL{} Logic.]
  \label{def:full-mgl-intepretation}
  %
  %
  Given a \mGLL{} model with symmetric monoidal
  adjunction $\cat{C} : \func{Mny} \dashv \func{Lin} :\cat{M}$
  and strict exponential action
  $\odot : \op{\mathcal{R}} \times \mathcal{C} \mto \mathcal{C}$,
  we interpret $\mGL{}$ proofs via two mutually
  defined interpretations $\interp{-}^{\GS}$ and $\interp{-}^{\MS{}}$
  on types and proofs (derivations):
  \begin{itemize}
  \item For every type $X$ there is an object $\interp{ \mGLnt{X} }^{\GS} \in \cat{C}$ and
    for every type $A$ there is an object $\interp{ \mGLnt{A} }^{\MS} \in \cat{M}$, mutually defined:
    \begin{align*}
      \begin{array}{rl}
      \interp{ \mathsf{J} }^{\GS} & = \mathsf{J} \\
      \interp{ \mGLnt{X}  \boxtimes  \mGLnt{Y} }^{\GS} & = \interp{ \mGLnt{X} }^{\GS} \boxtimes \interp{ \mGLnt{Y} }^{\GS} \\
      \interp{ \mathsf{Lin} \, \mGLnt{A} }^{\GS} & = \Lin \interp{ \mGLnt{A} }^\MS \\
      \quad & \quad
      \end{array}
      \quad & \quad
      \begin{array}{rl}
      \interp{ \mathsf{I} }^\MS & = I \\
      \interp{ \mGLnt{A}  \otimes  \mGLnt{B} }^\MS & = \interp{ \mGLnt{A} }^\MS \otimes \interp{ \mGLnt{B} }^\MS \\
      \interp{ \mGLnt{A}  \multimap  \mGLnt{B} }^\MS & = \interp{ \mGLnt{A} }^\MS \multimap \interp{ \mGLnt{B} }^\MS \\
      \interp{  \mathsf{Grd} _{ \mGLnt{r} }\, \mGLnt{X}  }^\MS & = \Mny (r \odot \interp{ \mGLnt{X} }^{\GS} )
      \end{array}
    \end{align*}

    \item $\GS{}$ contexts are interpetered as objects composed of tensors, e.g.,
        for the denotation of a derivation $\Pi$ of
        $(  \mGLnt{r_{{\mathrm{1}}}}  \mGLsym{,} \, ... \, \mGLsym{,}  \mGLnt{r_{\mGLmv{n}}}  )   \odot   ( \mGLmv{x_{{\mathrm{1}}}}  \mGLsym{:}  \mGLnt{X_{{\mathrm{1}}}}  \mGLsym{,} \, ... \, \mGLsym{,}  \mGLmv{x_{\mGLmv{n}}}  \mGLsym{:}  \mGLnt{X_{\mGLmv{n}}} )   \vdash_{\mathsf{GS} }  \mGLnt{t_{{\mathrm{1}}}}  \mGLsym{:}  \mGLnt{X}$
        its source is given by:
        \begin{align*}
          \emptyset^{\GS} & = \mathsf{J} \\
          \interp{ \mGLmv{x_{{\mathrm{1}}}}  \mGLsym{:}  \mGLnt{X_{{\mathrm{1}}}}  \mGLsym{,} \, ... \, \mGLsym{,}  \mGLmv{x_{\mGLmv{n}}}  \mGLsym{:}  \mGLnt{X_{\mGLmv{n}}} }_{  (  \mGLnt{r_{{\mathrm{1}}}}  \mGLsym{,} \, ... \, \mGLsym{,}  \mGLnt{r_{\mGLmv{n}}}  )  }^{\GS} & =
          (\mGLnt{r_{{\mathrm{1}}}}  \odot   \interp{ \mGLnt{X_{{\mathrm{1}}}} }^{\mathsf{GS} }) \boxtimes \ldots \boxtimes (\mGLnt{r_{\mGLmv{n}}}  \odot   \interp{ \mGLnt{X_{\mGLmv{n}}} }^{\mathsf{GS} })
        \end{align*}

    \item $\MS{}$ contexts are interpetered as object composed of tensors, e.g.,
    for the denotation of a derivation of
    $(  \mGLnt{r_{{\mathrm{1}}}}  \mGLsym{,} \, ... \, \mGLsym{,}  \mGLnt{r_{\mGLmv{n}}}  )   \odot   ( \mGLmv{x_{{\mathrm{1}}}}  \mGLsym{:}  \mGLnt{X_{{\mathrm{1}}}}  \mGLsym{,} \, ... \, \mGLsym{,}  \mGLmv{x_{\mGLmv{n}}}  \mGLsym{:}  \mGLnt{X_{\mGLmv{n}}} )   \mGLsym{;}  \mGLmv{y_{{\mathrm{1}}}}  \mGLsym{:}  \mGLnt{A_{{\mathrm{1}}}}  \mGLsym{,} \, ... \, \mGLsym{,}  \mGLmv{y_{\mGLmv{m}}}  \mGLsym{:}  \mGLnt{A_{\mGLmv{m}}}  \vdash_{\mathsf{MS} }  \mGLnt{l}  \mGLsym{:}  \mGLnt{B}$
    its source is given by:
    \[
    \begin{array}{lll}
      \interp{  ( \mGLmv{x_{{\mathrm{1}}}}  \mGLsym{:}  \mGLnt{X_{{\mathrm{1}}}}  \mGLsym{,} \, ... \, \mGLsym{,}  \mGLmv{x_{\mGLmv{n}}}  \mGLsym{:}  \mGLnt{X_{\mGLmv{n}}} )  ;  ( \mGLmv{y_{{\mathrm{1}}}}  \mGLsym{:}  \mGLnt{A_{{\mathrm{1}}}}  \mGLsym{,} \, ... \, \mGLsym{,}  \mGLmv{y_{\mGLmv{m}}}  \mGLsym{:}  \mGLnt{A_{\mGLmv{m}}} )  }_{  (  \mGLnt{r_{{\mathrm{1}}}}  \mGLsym{,} \, ... \, \mGLsym{,}  \mGLnt{r_{\mGLmv{n}}}  )  }^{\MS}\\
      \,\,\,\,\,= \func{Mny} \mGLsym{(}  \mGLnt{r_{{\mathrm{1}}}}  \odot   \interp{ \mGLnt{X_{{\mathrm{1}}}} }^{\mathsf{GS} }   \mGLsym{)} \otimes \ldots \otimes \func{Mny} \mGLsym{(}  \mGLnt{r_{\mGLmv{n}}}  \odot   \interp{ \mGLnt{X_{\mGLmv{n}}} }^{\mathsf{GS} }   \mGLsym{)} \otimes \interp{ \mGLnt{A_{{\mathrm{1}}}} }^{\MS} \otimes \ldots \otimes \interp{ \mGLnt{A_{\mGLmv{m}}} }^{\MS}
    \end{array}
    \]

   \item
    The interpretation of proofs is given as follows by induction
    on the structure of proofs, considering each proof rule in turn (see below).

    \textbf{Notational conventions}
    We assume below
    that the premise proofs of each rule are called $\Pi$
    if there is only one, or are numbered $\Pi_{{\mathrm{1}}}$, $\Pi_{{\mathrm{2}}}$, etc.
    if there are more than one.

    For brevity we often ommit the interpretation brackets $\interp{-}$.

    For $\delta = r_1, \ldots, r_n$
    and $\Delta = X_1, \ldots, X_n$
    $\delta {\odot} \Delta = (r_1 {\odot} X_1)  \boxtimes \ldots \boxtimes (r_n {\odot} X_n)$.

    To apply an operation $\mathit{op}$ to a $n$-ary product, we write $\overline{\mathit{op}}$.

    \textbf{Derived operations}
    As per Proposition 1 of Benton~\cite{Benton:1994}, an adjunction over monoidal categories,
    whose functors are lax monoidal, induces a strong monoidal structure. We use this on the
    left, where:
    \begin{align*}
      n_{\boxtimes}^\Mny & : \Mny (A \boxtimes B) \\
      & \xrightarrow{\Mny (\eta \boxtimes \eta)}
      \Mny (\Lin (\Mny A) \boxtimes \Lin (\Mny B)) \\
      & \xrightarrow{\Mny (m^\Lin)}
      \Mny (\Lin (\Mny A \otimes \Mny B)) \\
      & \xrightarrow{\varepsilon}
      \Mny A \otimes \Mny B
    \end{align*}
    and
    \begin{align*}
      n_{\mathsf{J}}^\Mny : \Mny \mathsf{J} \xrightarrow{\Mny (m^\Lin_{\mathsf{I}})}
      \Mny \Lin \mathsf{J} \xrightarrow{\varepsilon} \mathsf{I}
     \end{align*}

    For a (lax/strong) monoidal functor, we can derive a $n$-ary version of its monoidal
    operation in a standard way.
    For example, for $\mathsf{m}^\Mny : \Mny A \otimes \Mny B
    \rightarrow \Mny (A \boxtimes B)$, we write its $n$-ary version as
    \begin{align*}
      \overline{m^\Mny} : \Mny A_1 \boxtimes \ldots \boxtimes \Mny A_n \to
                          \Mny (A_1 \otimes \ldots \otimes A_n)
    \end{align*}
    where for $0$-arity we let %
    \begin{align*}
      m^\Mny_0 & : \mathsf{J} \rightarrow \Mny \mathsf{I} \\
      m^\Mny_0 & = m_{\mathsf{J}}
    \end{align*}
    and for $n+1$ then (where we write $\overline{\Delta}_n$ for the $n$-times product of
    objects in $\Delta$ to which we can lift functors pointwise):
    \begin{align*}
      m^\Mny_{n+1} & : \Mny A \boxtimes \overline{\Mny \Delta}_n \rightarrow \Mny (A \otimes \overline{\Delta}_n) \\
      m^\Mny_{n+1} & = m^\Mny \circ (id \boxtimes m^\Mny_{n}) \\
    \end{align*}
    We will work up to associativity (which is witnessed by an isomorphism).

    \textbf{Inductive definition - GS}

    \begin{itemize}
    \item
      %
    \item (\mGLdruleGSTXXidName)
      \begin{align*}
        \ruleinterp{\mGLdruleGSTXXid{}} =
         1  \odot   \interp{ \mGLnt{X} }^{\mathsf{GS} } \morph{\mathsf{id}} \interp{ \mGLnt{X} }
      \end{align*}
      the soundness of which follows by the following property:
      \begin{align*}
        \begin{array}{rrl}
             & 1  \odot   \interp{ \mGLnt{X} }^{\mathsf{GS} } \\
      \equiv & \interp{ \mGLnt{X} }^\GS & \{\textit{strict action}\}
      \end{array}
      \end{align*}
      %

      %
      %
    \item (\mGLdruleGSTXXUnitRName)
      \begin{align*}
        \ruleinterp{\mGLdruleGSTXXUnitR{}}
      \end{align*}
      We need a morphism:
      \begin{align*}
             & \interpGS{\emptyset} \mto^{\interpGS{\mGLmv{j}}} \interpGS{\mathsf{J}} \\
      \equiv \;\; & \mathsf{J} \mto^{id} \mathsf{J}
      \end{align*}

      %
      %
    \item (\mGLdruleGSTXXUnitLName)
      \begin{align*}
        \ruleinterp{\mGLdruleGSTXXUnitL{}}
      \end{align*}
      By induction we have the following morphism:
      \[
        \delta_{{\mathrm{1}}} \odot \interpGS{\Delta_{{\mathrm{1}}}} \boxtimes \delta_{{\mathrm{2}}} \odot \interpGS{\Delta_{{\mathrm{2}}}} \mto^{\interpGS{\Pi}} \interpGS{\mGLnt{X}}
      \]
      Then the final interpretation is as follows:
      \[
      \begin{array}{rll}
               & (\delta_{{\mathrm{1}}} \odot \interpGS{\Delta_{{\mathrm{1}}}}) \boxtimes (\mGLnt{r} \odot \mathsf{J}) \boxtimes (\delta_{{\mathrm{2}}} \odot \interpGS{\Delta_{{\mathrm{2}}}}) \\
        \mto^{id \boxtimes \, n_{\mathsf{J}, r} \, \boxtimes id}  & (\delta_{{\mathrm{1}}} \odot \interpGS{\Delta_{{\mathrm{1}}}}) \boxtimes \mathsf{J} \boxtimes (\delta_{{\mathrm{2}}} \odot \interpGS{\Delta_{{\mathrm{2}}}})\\
        \mto^{\lambda \boxtimes id} & (\delta_{{\mathrm{1}}} \odot \interpGS{\Delta_{{\mathrm{1}}}}) \boxtimes (\delta_{{\mathrm{2}}} \odot \interpGS{\Delta_{{\mathrm{2}}}}) \\
        \mto^{\interpGS{\Pi}} & \interpGS{\mGLnt{X}}
      \end{array}
      \]

      %
      %
    \item (\mGLdruleGSTXXTenRName)
      \begin{align*}
        & \ruleinterp{\mGLdruleGSTXXTenR{}} = \\
        & (  \delta_{{\mathrm{1}}}   \odot  \interp{  \Delta_{{\mathrm{1}}}  }  )   \boxtimes   (  \delta_{{\mathrm{2}}}   \odot  \interp{  \Delta_{{\mathrm{2}}}  }  )
          \morph{\interp{ \Pi_{{\mathrm{1}}} }^{\GST} \boxtimes \interp{ \Pi_{{\mathrm{2}}} }^{\GST}}
            \interp{ \mGLnt{X} }   \boxtimes   \interp{ \mGLnt{Y} }
      \end{align*}

      %
      %
    \item (\mGLdruleGSTXXTenLName)
      \begin{align*}
        & \ruleinterp{\mGLdruleGSTXXTenL{}} =
      \end{align*}
      \begin{align*}
        & ((  \delta_{{\mathrm{1}}}   \odot  \interp{  \Delta_{{\mathrm{1}}}  }  )   \boxtimes   ( \mGLnt{r}  \odot  \mGLsym{(}   \interp{ \mGLnt{X} }^{\mathsf{GS} }   \boxtimes   \interp{ \mGLnt{Y} }^{\mathsf{GS} }   \mGLsym{)} )   \boxtimes   (  \delta_{{\mathrm{2}}}   \odot  \interp{  \Delta_{{\mathrm{2}}}  }  )) \\
        \morph{id \boxtimes n_{\boxtimes, r, \interp{X}, \interp{Y}} \boxtimes id} &
        (  \delta_{{\mathrm{1}}}   \odot  \interp{  \Delta_{{\mathrm{1}}}  }  )   \boxtimes   ( \mGLnt{r}  \odot   \interp{ \mGLnt{X} }^{\mathsf{GS} }  )   \boxtimes   ( \mGLnt{r}  \odot   \interp{ \mGLnt{Y} }^{\mathsf{GS} }  )   \boxtimes   (  \delta_{{\mathrm{2}}}   \odot  \interp{  \Delta_{{\mathrm{2}}}  }  ) \\
        \morph{\interp{ \Pi }^{\GST}} &
        \interp{ \mGLnt{Z} }
      \end{align*}

      %
      %
    \item (\mGLdruleGSTXXLinRName)
      \begin{align*}
        & \ruleinterp{\mGLdruleGSTXXLinR{}} =
      \end{align*}
      First, by induction we have the interpretation of the premise as follows:
      \[
        \func{Mny} \mGLsym{(}  \mGLnt{r_{{\mathrm{1}}}}  \odot   \interp{ \mGLnt{X_{{\mathrm{1}}}} }^{\mathsf{GS} }   \mGLsym{)} \otimes \ldots \otimes \func{Mny} \mGLsym{(}  \mGLnt{r_{\mGLmv{n}}}  \odot   \interp{ \mGLnt{X_{\mGLmv{n}}} }^{\mathsf{GS} }   \mGLsym{)}
        \mto^{\interp{\Pi}^\MS} \interp{ \mGLnt{B} }
      \]
      where $\delta  \odot  \Delta = (  \mGLnt{r_{{\mathrm{1}}}}  \mGLsym{,} \, ... \, \mGLsym{,}  \mGLnt{r_{\mGLmv{n}}}  )   \odot   ( \mGLnt{X_{{\mathrm{1}}}}  \mGLsym{,} \, ... \, \mGLsym{,}  \mGLnt{X_{\mGLmv{n}}} )$

      Now we have the following interpretation of the conclusion:
      \[
      \begin{array}{rll}
        & (\mGLsym{(}  \mGLnt{r_{{\mathrm{1}}}}  \odot   \interp{ \mGLnt{X_{{\mathrm{1}}}} }^{\mathsf{GS} }   \mGLsym{)}) \boxtimes \ldots \boxtimes (\mGLsym{(}  \mGLnt{r_{\mGLmv{n}}}  \odot   \interp{ \mGLnt{X_{\mGLmv{n}}} }^{\mathsf{GS} }   \mGLsym{)})\\
        \morph{\eta_{r_1} \boxtimes \ldots \boxtimes \eta_{r_n}} &
        \Lin (\func{Mny} \mGLsym{(}  \mGLnt{r_{{\mathrm{1}}}}  \odot   \interp{ \mGLnt{X_{{\mathrm{1}}}} }^{\mathsf{GS} }   \mGLsym{)}) \boxtimes \ldots \boxtimes \Lin(\func{Mny}(\mGLnt{r_{\mGLmv{n}}} \odot \interp{ \mGLnt{X_{\mGLmv{n}}} }))\\
        \mto^{\overline{\func{m}_{\Lin}}} &
        \Lin (\func{Mny}(\mGLnt{r_{{\mathrm{1}}}} \odot \interp{ \mGLnt{X_{{\mathrm{1}}}} }) \otimes \ldots \otimes \func{Mny}(\mGLnt{r_{\mGLmv{n}}} \odot \interp{ \mGLnt{X_{\mGLmv{n}}} }))\\
        \mto^{\Lin \interp{\Pi}^\MS} &
        \Lin \interp{ \mGLnt{B} }
      \end{array}
      \]
      %

      %
      %
    \item (\mGLdruleGSTXXCutName)
      \begin{align*}
        \ruleinterp{\mGLdruleGSTXXCut{}}
      \end{align*}
      By induction we have the following morphisms:
      \[
      \begin{array}{lll}
        \delta_2 \odot \Delta_2 \mto^{\interp{\Pi_1}} \mGLnt{X}\\
        \delta_1 \odot \Delta_1 \boxtimes r \odot \mGLnt{X} \boxtimes \delta_3 \odot \Delta_3 \mto^{\interp{\Pi_2}} \mGLnt{X}\\
      \end{array}
      \]
      First, we promote $\interp{\Pi_1}$:
      \[
      \begin{array}{lll}
          & r \odot (\delta_2 \odot \Delta_2) \mto^{r \odot \interp{\Pi_1}} r \odot \mGLnt{X}\\
      \stackrel{\delta_{\Delta_2,r,\delta_2}}{=} & (r * \delta_2) \odot \Delta_2 \mto^{r \odot \interp{\Pi_1}} r \odot \mGLnt{X}
      \end{array}
      \]
      Then we compose with $\interp{\Pi_2}$:
      \[
      \begin{array}{rll}
        & \delta_1 \odot \Delta_1 \boxtimes (r * \delta_2) \odot \Delta_2 \boxtimes \delta_3 \odot \Delta_3\\
        \mto^{\id \boxtimes (r \odot \interp{\Pi_1}) \boxtimes \id} &
        \delta_1 \odot \Delta_1 \boxtimes r \odot \mGLnt{X} \boxtimes \delta_3 \odot \Delta_3\\
        \mto^{\interp{\Pi_2}} &
        \mGLnt{Y}
      \end{array}
      \]

      %
      %
    \item (\mGLdruleGSTXXWeakName)
      \begin{align*}
        & \ruleinterp{\mGLdruleGSTXXWeak{}}
      \end{align*}
      By the induction hypothesis we have:
      \[
      \delta_1 \odot \Delta_1 \boxtimes \delta_2 \odot \Delta_2 \mto^{\interp{\Pi}} \mGLnt{Y}
      \]
      Then the intepretation is:
      \[
      \begin{array}{rll}
        & \delta_1 \odot \Delta_1 \boxtimes 0 \odot \interp{\mGLnt{X}}^\GS \boxtimes \delta_2 \odot \Delta_2\\
      \mto^{\id \boxtimes (\func{weak}_{\interp{X}}) \boxtimes \id}
      & \delta_1 \odot \Delta_1 \boxtimes \mathsf{J} \boxtimes \delta_2 \odot \Delta_2\\
      \mto^{\lambda \boxtimes id}
      & \delta_1 \odot \Delta_1 \boxtimes \delta_2 \odot \Delta_2\\
      \mto^{\interp{\Pi}}
      & \mGLnt{Y}\\
      \end{array}
      \]

      %
      %
    \item (\mGLdruleGSTXXContName)
      \begin{align*}
        & \ruleinterp{\mGLdruleGSTXXCont{}}
      \end{align*}
      By induction we have a morphism:
      \[
      \delta_{{\mathrm{1}}} \odot \Delta_{{\mathrm{1}}} \boxtimes \mGLnt{r_{{\mathrm{1}}}} \odot \mGLnt{X} \boxtimes \mGLnt{r_{{\mathrm{2}}}} \odot \mGLnt{X} \boxtimes \delta_{{\mathrm{2}}} \odot \Delta_{{\mathrm{2}}} \mto^{\interp{\Pi}} \mGLnt{Y}
      \]
      The final interpretation is therefore as follows:
      \[
      \begin{array}{rll}
                          & \delta_{{\mathrm{1}}} \odot \Delta_{{\mathrm{1}}} \boxtimes \mGLsym{(}  \mGLnt{r_{{\mathrm{1}}}}  +  \mGLnt{r_{{\mathrm{2}}}}  \mGLsym{)} \odot \mGLnt{X} \boxtimes \delta_{{\mathrm{2}}} \odot \Delta_{{\mathrm{2}}}\\
        \mto^{\id \boxtimes (\mathsf{contr}_{r1,r2,X}) \boxtimes \id}
                          & \delta_{{\mathrm{1}}} \odot \Delta_{{\mathrm{1}}} \boxtimes (\mGLnt{r_{{\mathrm{1}}}} \odot \mGLnt{X} \boxtimes \mGLnt{r_{{\mathrm{2}}}} \odot \mGLnt{X}) \boxtimes \delta_{{\mathrm{2}}} \odot \Delta_{{\mathrm{2}}}\\
        \mto^=            & \delta_{{\mathrm{1}}} \odot \Delta_{{\mathrm{1}}} \boxtimes \mGLnt{r_{{\mathrm{1}}}} \odot \mGLnt{X} \boxtimes \mGLnt{r_{{\mathrm{2}}}} \odot \mGLnt{X} \boxtimes \delta_{{\mathrm{2}}} \odot \Delta_{{\mathrm{2}}}\\
        \mto^{\interp{\Pi}} & \mGLnt{Y}

      \end{array}
      \]

      %
      %
    \item (\mGLdruleGSTXXExName)
      \begin{align*}
        & \ruleinterp{\mGLdruleGSTXXEx{}}
      \end{align*}
      This case easily follows from symmetry of $\boxtimes$ the tensor product in $\catUnder{R}{C}$.

      %
      %
    \item (\mGLdruleGSTXXSubName)
      \begin{align*}
        & \ruleinterp{\mGLdruleGSTXXSub{}}
      \end{align*}
      This interpretation is constructed via functoriality of $\odot$ in its first argument:
      \begin{align*}
        \delta_{{\mathrm{2}}} \odot \interp{\Delta}
        \xrightarrow{\interp{\delta_{{\mathrm{1}}}  \leq  \delta_{{\mathrm{2}}}} \odot \interp{\Delta}}
        \delta_{{\mathrm{1}}} \odot \interp{\Delta}
        \xrightarrow{\interp{\Pi_{{\mathrm{1}}}}}
        \interp{ \mGLnt{X} }
      \end{align*}
      Note the syntactic sugar on rules for (sub) given at the start
      of Section~\ref{sec:full-eq-theory} allows a refinement of the
      inequality to focus on just one grade in the context. That is,
      given an inequality $\mGLnt{r}  \leq  \mGLnt{s}$ then we write:
      \begin{align*}
        \inferrule*[right=$\mGLdruleGSTXXSubName{}$]
                   { \delta'_{{\mathrm{1}}}  \mGLsym{,}  \mGLnt{s}  \mGLsym{,}  \delta'_{{\mathrm{2}}}  \odot  \Delta'_{{\mathrm{1}}}  \mGLsym{,}  \mGLnt{X'}  \mGLsym{,}  \Delta'_{{\mathrm{2}}}  \vdash_{\mathsf{GS} }  \mGLnt{t}  \mGLsym{:}  \mGLnt{Y} \quad \mGLnt{r}  \leq  \mGLnt{s}}
                   { \delta'_{{\mathrm{1}}}  \mGLsym{,}  \mGLnt{r}  \mGLsym{,}  \delta'_{{\mathrm{2}}}  \odot  \Delta'_{{\mathrm{1}}}  \mGLsym{,}  \mGLnt{X'}  \mGLsym{,}  \Delta'_{{\mathrm{2}}}  \vdash_{\mathsf{GS} }  \mGLnt{t}  \mGLsym{:}  \mGLnt{Y} }
      \end{align*}
      In this case, we can specialise the semantics to:
      \begin{gather*}
        (\delta'_{{\mathrm{1}}} \odot \interp{\Delta'_{{\mathrm{1}}}})
        \boxtimes
        (s \odot \interp{\mGLnt{X'}})
        \boxtimes
        (\delta'_{{\mathrm{2}}} \odot \interp{\Delta'_{{\mathrm{2}}}})
        \xrightarrow{id \boxtimes (\interp{\mGLnt{s}  \leq  \mGLnt{r}} \odot \interp{X'}) \boxtimes id}
        (\delta'_{{\mathrm{1}}} \odot \interp{\Delta'_{{\mathrm{1}}}})
        \boxtimes
        (r \odot \interp{\mGLnt{X'}})
        \boxtimes
        (\delta'_{{\mathrm{2}}} \odot \interp{\Delta'_{{\mathrm{2}}}})
        \xrightarrow{\interp{\Pi_{{\mathrm{1}}}}}
        \interp{ \mGLnt{Y} }
      \end{gather*}

    \end{itemize}

  \item \textbf{Inductive definition - MS}

  The interpretation of proofs is given as follows by induction
    on the structure of proofs, considering each proof rule in turn.
    We assume below
    that the premise proofs of each rule are called $\Pi$
    if there is only one, or are numbered $\Pi_{{\mathrm{1}}}$, $\Pi_{{\mathrm{2}}}$, etc.
    if there are more than one.

    \begin{itemize}
      %
      %
    \item (\mGLdruleMSTXXidName{})
      \begin{align*}
        & \ruleinterp{\mGLdruleMSTXXid{}}
      \end{align*}
      The final interpretation is as follows:
      \[
      \interpMS{\mGLnt{A}} \mto^{\mathsf{id}} \interpMS{\mGLnt{A}}
      \]

      %
      %
    \item (\mGLdruleMSTXXSubName{})
      \begin{align*}
        & \ruleinterp{\mGLdruleMSTXXSub{}}
      \end{align*}
      The assumption $\delta_{{\mathrm{1}}}  \leq  \delta_{{\mathrm{2}}}$ corresponds to a list of morphism
      in $\mathcal{R}$.  Thus, this case follows from functorality in the
      first argument of $\func{Mny}$.

      %
      %
    \item (\mGLdruleMSTXXUnitRName{})
      \begin{align*}
        & \ruleinterp{\mGLdruleMSTXXUnitR{}}
      \end{align*}
      We need a morphism:
      \[
      I \otimes I \mto^{\interpMS{\mGLmv{i}}} I
      \]
      Thus we simply choose $\lambda_I$ of the monoidal category $\mathcal{M}$.
      %
      %
    \item (\mGLdruleMSTXXUnitLName{})
      \begin{align*}
        & \ruleinterp{\mGLdruleMSTXXUnitL{}}
      \end{align*}
      By induction we have the following morphism:
      \[
      \Mny (\delta \odot \interpGS{\Delta}) \otimes \interpMS{\Gamma_{{\mathrm{1}}}} \otimes \interpMS{\Gamma_{{\mathrm{2}}}} \mto^{\interpMS{\Pi}} \interpMS{\mGLnt{A}}
      \]
      Then the final interpretation is as follows:
      \[
      \begin{array}{rll}
               & \Mny(\delta \odot \interpGS{\Delta}) \otimes \interpMS{\Gamma_{{\mathrm{1}}}} \otimes I \otimes \interpMS{\Gamma_{{\mathrm{2}}}}\\
        \mto^= & \Mny(\delta \odot \interpGS{\Delta}) \otimes (\interpMS{\Gamma_{{\mathrm{1}}}} \otimes I) \otimes \interpMS{\Gamma_{{\mathrm{2}}}}\\
        \mto^{\id \otimes \lambda \otimes \id} & \Mny(\delta \odot \interpGS{\Delta}) \otimes \interpMS{\Gamma_{{\mathrm{1}}}} \otimes \interpMS{\Gamma_{{\mathrm{2}}}}\\
        \mto^{\interpMS{\Pi}} & \interpMS{\mGLnt{A}}
      \end{array}
      \]
      %
      %
    \item (\mGLdruleMSTXXImpRName{})
      \begin{align*}
        & \ruleinterp{\mGLdruleMSTXXImpR{}}
      \end{align*}
      We know by assumption that $\cat{M}$ is closed, and thus, there is a natural isomoprhism:
      \[
      \mathsf{Hom}_{\cat{C}}(C \otimes A,B) \mto^{\mathsf{curry}} \mathsf{Hom}_{\cat{C}}(C,\mGLnt{A}  \multimap  \mGLnt{B})
      \]
      Now by induction we have the following morphism:
      \[
      \Mny(\delta \odot \interpGS{\Delta}) \otimes \interpMS{\Gamma} \otimes \interpMS{\mGLnt{A}} \mto^{\interpMS{\Pi}} \interpMS{\mGLnt{B}}
      \]
      The final interpretation is as follows:
      \[
      \Mny(\delta \odot \interpGS{\Delta}) \otimes \interpMS{\Gamma} \mto^{\mathsf{curry}(\interpMS{\Pi})} (\interpMS{\mGLnt{A}} \multimap \interpMS{\mGLnt{B}})
      \]

      %
      %
    \item (\mGLdruleMSTXXImpLName{})
      \begin{align*}
        & \ruleinterp{\mGLdruleMSTXXImpL{}}
      \end{align*}
      Recall that due to $\cat{M}$ being closed we have the following natural isomorphism:
      \[
      \mathsf{Hom}_{\cat{C}}(C \otimes A,B) \mto^{\mathsf{curry}} \mathsf{Hom}_{\cat{C}}(C,\mGLnt{A}  \multimap  \mGLnt{B})
      \]
      Now replacing $C$ with $\mGLnt{A}  \multimap  \mGLnt{B}$ we obtain the following:
      \[
      \mathsf{Hom}_{\cat{C}}((\mGLnt{A}  \multimap  \mGLnt{B}) \otimes A,B) \mto^{\mathsf{curry}} \mathsf{Hom}_{\cat{C}}(\mGLnt{A}  \multimap  \mGLnt{B},\mGLnt{A}  \multimap  \mGLnt{B})
      \]
      Thus, $\mathsf{curry}^{-1}(\id) : \mGLsym{(}  \mGLnt{A}  \multimap  \mGLnt{B}  \mGLsym{)}  \otimes  \mGLnt{A} \mto B$.

      \ \\ \noindent
      Next by the induction hypothesis we have the following morphisms:
      \[
      \begin{array}{lll}
        \Mny(\delta_{{\mathrm{2}}} \odot \interpGS{\Delta_{{\mathrm{2}}}}) \otimes \interpMS{\Gamma_{{\mathrm{2}}}} \mto^{\interpMS{\Pi_2}} \interpMS{\mGLnt{A}}\\
        \Mny(\delta_{{\mathrm{1}}} \odot \interpGS{\Delta_{{\mathrm{1}}}}) \otimes \interpMS{\Gamma_{{\mathrm{1}}}} \otimes \interpMS{\mGLnt{B}} \otimes \interpMS{\Gamma_{{\mathrm{3}}}} \mto^{\interpMS{\Pi_1}} \interpMS{\mGLnt{C}}
      \end{array}
      \]
      First, we can compose $\interpMS{\Pi_2}$ and $\mathsf{curry}^{-1}(\id)$:
      \[
      \begin{array}{rll}
        & \Mny(\delta_{{\mathrm{2}}} \odot \interpGS{\Delta_{{\mathrm{2}}}}) \otimes (\interpMS{\mGLnt{A}} \multimap \interpMS{\mGLnt{B}}) \otimes \interpMS{\Gamma_{{\mathrm{2}}}} \\
        \mto^= & (\interpMS{\mGLnt{A}} \multimap \interpMS{\mGLnt{B}}) \otimes \Mny(\delta_{{\mathrm{2}}} \odot \interpGS{\Delta_{{\mathrm{2}}}}) \otimes \interpMS{\Gamma_{{\mathrm{2}}}} \\
        \mto^{id \otimes \interp{\Pi_2}} & (\interpMS{\mGLnt{A}} \multimap \interpMS{\mGLnt{B}}) \otimes \interpMS{\mGLnt{A}}\\
        \mto^{\mathsf{curry}^{-1}(\id)} & \interpMS{\mGLnt{B}}\\
      \end{array}
      \]
      We will denote the previous composition by $\interpMS{\Pi} = (\mathsf{curry}^{-1}(\id)) \circ (id \otimes \interp{\Pi_2})$ in the final interpretation:
      \[
      \begin{array}{rll}
        & \Mny(\delta_{{\mathrm{1}}} \odot \interpGS{\Delta_{{\mathrm{1}}}}) \otimes \Mny(\delta_{{\mathrm{2}}} \odot \interpGS{\Delta_{{\mathrm{2}}}}) \otimes \interpMS{\Gamma_{{\mathrm{1}}}} \otimes (\interpMS{\mGLnt{A}} \multimap \interpMS{\mGLnt{B}}) \otimes \interpMS{\Gamma_{{\mathrm{2}}}} \otimes \interpMS{\Gamma_{{\mathrm{3}}}}\\
        \mto^= & \Mny(\delta_{{\mathrm{1}}} \odot \interpGS{\Delta_{{\mathrm{1}}}}) \otimes \interpMS{\Gamma_{{\mathrm{1}}}} \otimes \Mny(\delta_{{\mathrm{2}}} \odot \interpGS{\Delta_{{\mathrm{2}}}}) \otimes (\interpMS{\mGLnt{A}} \multimap \interpMS{\mGLnt{B}}) \otimes \interpMS{\Gamma_{{\mathrm{2}}}} \otimes \interpMS{\Gamma_{{\mathrm{3}}}}\\
        \mto^{\id \otimes \interpMS{\Pi} \otimes \id} & \Mny(\delta_{{\mathrm{1}}} \odot \interpGS{\Delta_{{\mathrm{1}}}}) \otimes \interpMS{\Gamma_{{\mathrm{1}}}} \otimes \interpMS{\mGLnt{B}} \otimes \interpMS{\Gamma_{{\mathrm{3}}}}\\
        \mto^{\interpMS{\Pi_1}} & \interpMS{\mGLnt{C}}
      \end{array}
      \]

            %
      %
    \item (\mGLdruleMSTXXTenLName{})
      \begin{align*}
        & \ruleinterp{\mGLdruleMSTXXTenL{}}
      \end{align*}
      This case follows directly from the induction hypothesis.

      %
      %
    \item (\mGLdruleMSTXXTenRName{})
      \begin{align*}
        & \ruleinterp{\mGLdruleMSTXXTenR{}}
      \end{align*}
      By the induction hypothesis we have the following morphisms:
      \[
      \begin{array}{lll}
        \Mny(\delta_{{\mathrm{1}}} \odot\interpGS{\Delta_{{\mathrm{1}}}}) \otimes \interpMS{\Gamma_{{\mathrm{1}}}} \mto^{\interpMS{\Pi_1}} \interpMS{\mGLnt{A}}\\
        \Mny(\delta_{{\mathrm{2}}} \odot\interpGS{\Delta_{{\mathrm{2}}}}) \otimes \interpMS{\Gamma_{{\mathrm{2}}}} \mto^{\interpMS{\Pi_2}} \interpMS{\mGLnt{B}}\\
      \end{array}
      \]
      The final interpretation is as follows:
      \[
      \begin{array}{rll}
        & \Mny(\delta_{{\mathrm{1}}} \odot\interpGS{\Delta_{{\mathrm{1}}}}) \otimes \Mny(\delta_{{\mathrm{2}}} \odot\interpGS{\Delta_{{\mathrm{2}}}}) \otimes \interpMS{\Gamma_{{\mathrm{1}}}} \otimes \interpMS{\Gamma_{{\mathrm{2}}}}\\
        \mto^= & \Mny(\delta_{{\mathrm{1}}} \odot\interpGS{\Delta_{{\mathrm{1}}}}) \otimes \interpMS{\Gamma_{{\mathrm{1}}}} \otimes \Mny(\delta_{{\mathrm{2}}} \odot\interpGS{\Delta_{{\mathrm{2}}}}) \otimes \interpMS{\Gamma_{{\mathrm{2}}}}\\
        \mto^{\interpMS{\Pi_1} \otimes \interpMS{\Pi_2}} & \interpMS{\mGLnt{A}} \otimes \interpMS{\mGLnt{B}}
      \end{array}
      \]

      %
      %
    \item (\mGLdruleMSTXXGUnitLName{})
      \begin{align*}
        & \ruleinterp{\mGLdruleMSTXXGUnitL{}} \\
        &
      \end{align*}
      By the induction hypothesis we have:
      \[
      \Mny(\delta_{{\mathrm{1}}} \odot \interpGS{\Delta_{{\mathrm{1}}}}) \otimes \Mny(\delta_{{\mathrm{1}}} \odot \interpGS{\Delta_{{\mathrm{1}}}}) \otimes \interpMS{\Gamma} \mto^{\interpMS{\Pi}} \interpMS{\mGLnt{A}}
      \]
      The interpretation is then:
      \[
      \begin{array}{rll}
        & \Mny(\delta_{{\mathrm{1}}} \odot \interpGS{\Delta_{{\mathrm{1}}}}) \otimes \Mny(r \odot J) \otimes \Mny(\delta_{{\mathrm{2}}} \odot \interpGS{\Delta_{{\mathrm{2}}}}) \otimes \interpMS{\Gamma}\\
        \mto^{\id \otimes \Mny (n_{J,r}) \otimes \id} & \Mny(\delta_{{\mathrm{1}}} \odot \interpGS{\Delta_{{\mathrm{1}}}}) \otimes \Mny(J) \otimes \Mny(\delta_{{\mathrm{2}}} \odot \interpGS{\Delta_{{\mathrm{2}}}}) \otimes \interpMS{\Gamma}\\
        \mto^{\id \otimes n^{\Mny}_J \otimes \id} & \Mny(\delta_{{\mathrm{1}}} \odot \interpGS{\Delta_{{\mathrm{1}}}}) \otimes I \otimes \Mny(\delta_{{\mathrm{2}}} \odot \interpGS{\Delta_{{\mathrm{2}}}}) \otimes \interpMS{\Gamma}\\
        \mto^{=} & \Mny(\delta_{{\mathrm{1}}} \odot \interpGS{\Delta_{{\mathrm{1}}}}) \otimes \Mny(\delta_{{\mathrm{2}}} \odot \interpGS{\Delta_{{\mathrm{2}}}}) \otimes \interpMS{\Gamma}\\
        \mto^{\interpMS{\Pi}} & \interpMS{\mGLnt{A}}
      \end{array}
      \]

      %
      %
    \item (\mGLdruleMSTXXGTenLName{})
      \begin{align*}
        & \ruleinterp{\mGLdruleMSTXXGTenL{}}
      \end{align*}
      By the induction hypothesis we have the following:
      \[
      \Mny(\delta_{{\mathrm{1}}} \odot \interpGS{\Delta_{{\mathrm{1}}}}) \otimes \Mny(\mGLnt{r} \odot \interpGS{\mGLnt{X}}) \otimes \Mny(\mGLnt{r} \odot \interpGS{\mGLnt{Y}}) \otimes \Mny(\delta_{{\mathrm{2}}} \odot \interpGS{\Delta_{{\mathrm{2}}}}) \otimes \interpMS{\Gamma} \mto^{\interpMS{\Pi}} \interpMS{\mGLnt{A}}
      \]
      The final interpretation is then:
      \begin{gather*}
      \begin{array}{rll}\
        & \Mny(\delta_{{\mathrm{1}}} \odot \interpGS{\Delta_{{\mathrm{1}}}}) \otimes \Mny(\mGLnt{r} \odot (\interpGS{\mGLnt{X}} \boxtimes \interpGS{\mGLnt{Y}})) \otimes \Mny(\delta_{{\mathrm{2}}} \odot \interpGS{\Delta_{{\mathrm{2}}}}) \otimes \interpMS{\Gamma}\\
        \mto^{\id \otimes \Mny (n^\odot_{\boxtimes,r}) \otimes \id}& \Mny(\delta_{{\mathrm{1}}} \odot \interpGS{\Delta_{{\mathrm{1}}}}) \otimes \Mny((\mGLnt{r} \odot \interpGS{\mGLnt{X}}) \boxtimes (\mGLnt{r} \odot \interpGS{\mGLnt{Y}})) \otimes \Mny(\delta_{{\mathrm{2}}} \odot \interpGS{\Delta_{{\mathrm{2}}}}) \otimes \interpMS{\Gamma}\\
        \mto^{\id \otimes n^\Mny_{\boxtimes} \otimes \id}& \Mny(\delta_{{\mathrm{1}}} \odot \interpGS{\Delta_{{\mathrm{1}}}}) \otimes \Mny(\mGLnt{r} \odot \interpGS{\mGLnt{X}}) \otimes \Mny(\mGLnt{r} \odot \interpGS{\mGLnt{Y}}) \otimes \Mny(\delta_{{\mathrm{2}}} \odot \interpGS{\Delta_{{\mathrm{2}}}}) \otimes \interpMS{\Gamma}\\
        \mto^{\interpMS{\Pi}} & \interpMS{\mGLnt{A}}
      \end{array}
      \end{gather*}

      %
      %
    \item (\mGLdruleMSTXXGrdRName{})
      \begin{align*}
        & \ruleinterp{\mGLdruleMSTXXGrdR{}}
      \end{align*}
      Suppose $\delta  \odot  \Delta$ = $(\mGLnt{r_{{\mathrm{1}}}}  \mGLsym{,} \, ... \, \mGLsym{,}  \mGLnt{r_{\mGLmv{n}}}) \odot (\mGLmv{x_{{\mathrm{1}}}}  \mGLsym{:}  \mGLnt{X_{{\mathrm{1}}}}  \mGLsym{,} \, ... \, \mGLsym{,}  \mGLmv{x_{\mGLmv{n}}}  \mGLsym{:}  \mGLnt{X_{\mGLmv{n}}})$.
      Then by the inductive hypothesis we have:
      \[
      (\mGLnt{r_{{\mathrm{1}}}}  \odot   \interp{ \mGLnt{X_{{\mathrm{1}}}} }^{\mathsf{GS} }) \boxtimes \ldots \boxtimes (\mGLnt{r_{\mGLmv{n}}}  \odot   \interp{ \mGLnt{X_{\mGLmv{n}}} }^{\mathsf{GS} })
      \mto^{\interp{\Pi}} \interp{\mGLnt{X}}^\GS
      \]
      The final interpretation is as follows:
      \begin{gather*}
      \begin{array}{rll}
        & \func{Mny} \mGLsym{(}  \mGLsym{(}  \mGLnt{r}  *  \mGLnt{r_{{\mathrm{1}}}}  \mGLsym{)}  \odot   \interp{ \mGLnt{X_{{\mathrm{1}}}} }^{\mathsf{GS} }   \mGLsym{)} \otimes \cdots
          \otimes \func{Mny} \mGLsym{(}  \mGLsym{(}  \mGLnt{r}  *  \mGLnt{r_{\mGLmv{n}}}  \mGLsym{)}  \odot   \interp{ \mGLnt{X_{\mGLmv{n}}} }^{\mathsf{GS} }   \mGLsym{)} \\
      \xrightarrow{\overline{m^{\mathsf{Mny}}}} & \Mny(((\mGLnt{r}  *  \mGLnt{r_{{\mathrm{1}}}}) \odot \interp{\mGLnt{X_{{\mathrm{1}}}}}^\GS \boxtimes \cdots \boxtimes ((\mGLnt{r}  *  \mGLnt{r_{\mGLmv{n}}}) \odot \interp{\mGLnt{X_{\mGLmv{n}}}}^\GS)) \\
          =^{\mathsf{Mny}(\delta^\odot \boxtimes \ldots \boxtimes \delta^\odot)} &
          \Mny((\mGLnt{r} \odot (\mGLnt{r_{{\mathrm{1}}}} \odot \interp{\mGLnt{X_{{\mathrm{1}}}}}^\GS)) \boxtimes \cdots \boxtimes (\mGLnt{r} \odot (\mGLnt{r_{\mGLmv{n}}} \odot \interp{\mGLnt{X_{\mGLmv{n}}}}^\GS))) \\
\xrightarrow{\Mny (\overline{m_{\boxtimes,r,X,Y}})}  &
\Mny(\mGLnt{r} \odot ((\mGLnt{r_{{\mathrm{1}}}} \odot \interp{\mGLnt{X_{{\mathrm{1}}}}}^\GS) \boxtimes \cdots \boxtimes (\mGLnt{r_{\mGLmv{n}}} \odot \interp{\mGLnt{X_{\mGLmv{n}}}}^\GS)))  \\
          \mto^{\Mny(r \odot \interp{\Pi})}
        & \Mny(r \odot \interp{\mGLnt{X}})^\GS
      \end{array}
      \end{gather*}

      %
      %
    \item (\mGLdruleMSTXXLinLName{})
      \begin{align*}
        & \ruleinterp{\mGLdruleMSTXXLinL{}}
      \end{align*}
      By induction we have the following morphism:
      \[
      \Mny(\delta \odot \interpGS{\Delta}) \otimes \interpMS{\mGLnt{A}} \otimes \interpMS{\Gamma} \mto^{\interpMS{\Pi}} \interpMS{B}
      \]
      Thus, the final interpretation is as follows:
      \[
      \begin{array}{rlll}
        & \Mny(\delta \odot \interpGS{\Delta}) \otimes \Mny(1 \odot \Lin\interpMS{\mGLnt{A}}) \otimes \interpMS{\Gamma}\\
= &  \Mny(\delta \odot \interpGS{\Delta}) \otimes \Mny(\Lin\interpMS{\mGLnt{A}}) \otimes \interpMS{\Gamma}\\
        \mto^{\id \otimes\ \varepsilon\ \otimes \id} & \Mny(\delta \odot \interpGS{\Delta}) \otimes \interpMS{\mGLnt{A}} \otimes \interpMS{\Gamma}\\
        \mto^{\interpMS{\Pi}} & \interpMS{B}
      \end{array}
      \]

      %
      %
    \item (\mGLdruleMSTXXGrdLName{})
      \begin{align*}
        & \ruleinterp{\mGLdruleMSTXXGrdL{}}
      \end{align*}
      By induction we have the following morphism:
      \[
      \Mny(\delta \odot \interpGS{\Delta}) \otimes \Mny(r \odot \interpGS{X}) \otimes \interpMS{\Gamma} \mto^{\interpMS{\Pi}} \interpMS{C}
      \]
      Notice that this is exactly the interpretation of the conclusion.

      %
      %
    \item (\mGLdruleMSTXXCutName{})
      \begin{align*}
        & \ruleinterp{\mGLdruleMSTXXCut{}}
      \end{align*}
      This case follows from simply composing the two morphisms obtained from the induction hypothesis.

      %
      %
    \item (\mGLdruleMSTXXGCutName{})
      \begin{align*}
        & \ruleinterp{\mGLdruleMSTXXGCut{}} = \\
        &
      \end{align*}
      By induction on the premises we have:
      \begin{gather*}
        \begin{align*}
        & \delta_{{\mathrm{2}}} \odot \interpGS{\Delta_{{\mathrm{2}}}} \xrightarrow{\interpGS{\Pi_1}} \interpGS{\mGLnt{X}} \\
        & \Mny(\delta_{{\mathrm{1}}} \odot \interpGS{\Delta_{{\mathrm{1}}}}) \otimes
          \Mny(\mGLnt{r} \odot \interpGS{\mGLnt{X}}) \otimes
          \Mny(\delta_{{\mathrm{3}}} \odot \interpGS{\Delta_{{\mathrm{3}}}}) \otimes \interpMS{\Gamma} \mto^{\interpMS{\Pi_2}} \interpMS{B}
          \end{align*}
      \end{gather*}
      Let $\delta_{{\mathrm{2}}} = r_1 , \ldots, r_n$
      and let $\Delta_{{\mathrm{2}}} = \mGLnt{X_{{\mathrm{1}}}}, \ldots, \mGLnt{X_{\mGLmv{n}}}$.

      Then we construct the interpretation:
      \begin{gather*}
        \begin{align*}
       & \Mny(\delta_{{\mathrm{1}}} \odot \interpGS{\Delta_{{\mathrm{1}}}}) \otimes
          \Mny(\mGLnt{r}  *  \mGLnt{r_{{\mathrm{0}}}} \odot \interpGS{\mGLnt{X_{{\mathrm{1}}}}}) \otimes
          \ldots
          \otimes
          \Mny(\mGLnt{r}  *  \mGLnt{r_{\mGLmv{n}}} \odot \interpGS{\mGLnt{X_{\mGLmv{n}}}}) \otimes
          \Mny(\delta_{{\mathrm{2}}} \odot \interpGS{\Delta_{{\mathrm{3}}}}) \otimes \interpMS{\Gamma}  \\
\stackrel{id \otimes \Mny(\delta) \otimes \ldots \otimes \Mny(\delta) \otimes id}{=}  \;\;&
        \Mny(\delta_{{\mathrm{1}}} \odot \interpGS{\Delta_{{\mathrm{1}}}}) \otimes
          \Mny(\mGLnt{r} \odot (\mGLnt{r_{{\mathrm{0}}}} \odot \interpGS{\mGLnt{X_{{\mathrm{1}}}}})) \otimes
          \ldots
          \otimes
          \Mny(\mGLnt{r} \odot (\mGLnt{r_{\mGLmv{n}}} \odot \interpGS{\mGLnt{X_{\mGLmv{n}}}})) \otimes
          \Mny(\delta_{{\mathrm{2}}} \odot \interpGS{\Delta_{{\mathrm{3}}}}) \otimes \interpMS{\Gamma}  \\
          \xrightarrow{id \otimes \overline{\mathsf{m}^\Mny} \otimes id} \;\; &
          \Mny(\delta_{{\mathrm{1}}} \odot \interpGS{\Delta_{{\mathrm{1}}}}) \otimes
          \Mny((\mGLnt{r} \odot (\mGLnt{r_{{\mathrm{0}}}} \odot \interpGS{\mGLnt{X_{{\mathrm{1}}}}})) \boxtimes
          \ldots
          \boxtimes
          (\mGLnt{r} \odot (\mGLnt{r_{\mGLmv{n}}} \odot \interpGS{\mGLnt{X_{\mGLmv{n}}}}))) \otimes
          \Mny(\delta_{{\mathrm{2}}} \odot \interpGS{\Delta_{{\mathrm{3}}}}) \otimes \interpMS{\Gamma}
          \\
\xymatrix{\ar@{=}[r]^{id \otimes \overline{\mathsf{m}_{\boxtimes, r}} \otimes id} & } \;\; &
          \Mny(\delta_{{\mathrm{1}}} \odot \interpGS{\Delta_{{\mathrm{1}}}}) \otimes
          \Mny(\mGLnt{r} \odot ((\mGLnt{r_{{\mathrm{0}}}} \odot \interpGS{\mGLnt{X_{{\mathrm{1}}}}}) \boxtimes
          \ldots
          \boxtimes (\mGLnt{r_{\mGLmv{n}}} \odot \interpGS{\mGLnt{X_{\mGLmv{n}}}}))) \otimes
          \Mny(\delta_{{\mathrm{2}}} \odot \interpGS{\Delta_{{\mathrm{3}}}}) \otimes \interpMS{\Gamma}
          \\
\xrightarrow{id \otimes \Mny(id_r \odot \interpGS{\Pi_1}) \otimes id}
          \;\;&
         \Mny(\delta_{{\mathrm{1}}} \odot \interpGS{\Delta_{{\mathrm{1}}}}) \otimes
          \Mny(\mGLnt{r} \odot \interpGS{\mGLnt{X}}) \otimes
          \Mny(\delta_{{\mathrm{2}}} \odot \interpGS{\Delta_{{\mathrm{3}}}}) \otimes \interpMS{\Gamma} \\
\xrightarrow{\interpMS{\Pi_2}} &  \interpMS{B}
       \end{align*}
      \end{gather*}

      This case follows by applying $\Mny($ to the first
      morphism obtained from the induction hypothesis, and then
      composing it with the second morphism obtained from the
      induction hypothesis.

      %
      %
    \item (\mGLdruleMSTXXWeakName{})
      \begin{align*}
        & \ruleinterp{\mGLdruleMSTXXWeak{}}
      \end{align*}
      By induction we have the following morphism:
      \[
      \Mny(\delta_{{\mathrm{1}}} \odot \interpGS{\Delta_{{\mathrm{1}}}}) \otimes \Mny(\delta_{{\mathrm{2}}} \odot \interpGS{\Delta_{{\mathrm{2}}}}) \otimes \interpMS{\Gamma} \mto^{\interpMS{\Pi}} \interpMS{B}
      \]
      The final interpretation is as follows:
      \[
      \begin{array}{rll}
        & \Mny(\delta_{{\mathrm{1}}} \odot \interpGS{\Delta_{{\mathrm{1}}}}) \otimes \Mny(0 \odot \interpGS{X}) \otimes \Mny(\delta_{{\mathrm{2}}} \odot \interpGS{\Delta_{{\mathrm{2}}}}) \otimes \interpMS{\Gamma}\\
        \mto^{\id \otimes \Mny(\mathsf{weak}) \otimes \id} & \Mny(\delta_{{\mathrm{1}}} \odot \interpGS{\Delta_{{\mathrm{1}}}}) \otimes \Mny \mathsf{J} \otimes \Mny(\delta_{{\mathrm{2}}} \odot \interpGS{\Delta_{{\mathrm{2}}}}) \otimes \interpMS{\Gamma}\\
        \mto^{n^\Mny_I} & \Mny(\delta_{{\mathrm{1}}} \odot \interpGS{\Delta_{{\mathrm{1}}}}) \otimes \mathsf{I} \otimes \Mny(\delta_{{\mathrm{2}}} \odot \interpGS{\Delta_{{\mathrm{2}}}}) \otimes \interpMS{\Gamma}\\
        = & \Mny(\delta_{{\mathrm{1}}} \odot \interpGS{\Delta_{{\mathrm{1}}}}) \otimes \Mny(\delta_{{\mathrm{2}}} \odot \interpGS{\Delta_{{\mathrm{2}}}}) \otimes \interpMS{\Gamma}\\
        \mto^{\interpMS{\Pi}} & \interpMS{B}
      \end{array}
      \]

      %
      %
    \item (\mGLdruleMSTXXContName{})
      \begin{align*}
        & \ruleinterp{\mGLdruleMSTXXCont{}}
      \end{align*}
      By induction we have the following morphism:
      \[
      \Mny(\delta_{{\mathrm{1}}} \odot \interpGS{\Delta_{{\mathrm{1}}}}) \otimes \Mny(\mGLnt{r_{{\mathrm{1}}}} \odot \interpGS{\mGLnt{X}}) \otimes \Mny(\mGLnt{r_{{\mathrm{2}}}} \odot \interpGS{\mGLnt{X}}) \otimes \Mny(\delta_{{\mathrm{2}}} \odot \interpGS{\Delta_{{\mathrm{2}}}}) \otimes \interpMS{\Gamma} \mto^{\interpMS{\Pi}} \interpMS{B}
      \]
      The final interpretation is as follows:
      \begin{gather*}
      \begin{array}{rll}
        & \Mny(\delta_{{\mathrm{1}}} \odot \interpGS{\Delta_{{\mathrm{1}}}}) \otimes \Mny((\mGLnt{r_{{\mathrm{1}}}}  +  \mGLnt{r_{{\mathrm{2}}}}) \odot \interpGS{\mGLnt{X}}) \otimes \Mny(\delta_{{\mathrm{2}}} \odot \interpGS{\Delta_{{\mathrm{2}}}}) \otimes \interpMS{\Gamma}\\
        \mto^{\id \otimes \Mny(\mathsf{contr}) \otimes \id \otimes \id} & \Mny(\delta_{{\mathrm{1}}} \odot \interpGS{\Delta_{{\mathrm{1}}}}) \otimes \Mny((\mGLnt{r_{{\mathrm{1}}}} \odot \interpGS{\mGLnt{X}}) \boxtimes (\mGLnt{r_{{\mathrm{2}}}} \odot \interpGS{\mGLnt{X}})) \otimes \Mny(\delta_{{\mathrm{2}}} \odot \interpGS{\Delta_{{\mathrm{2}}}}) \otimes \interpMS{\Gamma}\\
        \mto^{\id \otimes n^\Mny_{\boxtimes, r} \otimes \id \otimes \id} & \Mny(\delta_{{\mathrm{1}}} \odot \interpGS{\Delta_{{\mathrm{1}}}}) \otimes \Mny(\mGLnt{r_{{\mathrm{1}}}} \odot \interpGS{\mGLnt{X}}) \otimes \Mny(\mGLnt{r_{{\mathrm{2}}}} \odot \interpGS{\mGLnt{X}}) \otimes \Mny(\delta_{{\mathrm{2}}} \odot \interpGS{\Delta_{{\mathrm{2}}}}) \otimes \interpMS{\Gamma}\\
        \mto^{\interpMS{\Pi}} & \interpMS{\mGLnt{B}}
      \end{array}
      \end{gather*}

      %
      %
    \item (\mGLdruleMSTXXExName{})
      \begin{align*}
        & \ruleinterp{\mGLdruleMSTXXEx{}}
      \end{align*}
      This case follows from symmetry of $\otimes$.

      %
      %
    \item (\mGLdruleMSTXXGExName{})
      \begin{align*}
        & \ruleinterp{\mGLdruleMSTXXGEx{}}
      \end{align*}
      This case follows from symmetry of $\otimes$.

    \end{itemize}

\end{itemize}
\end{definition}

\subsection{Proof of soundness of \mGLL{} Logic}
\label{subsec:proof_of_soundness_of_mgll_logic}
\subsection*{Soundness proof}
\mGLLSoundTheorem{}

Note that, by the definition of $\equiv$ which includes the source- and target of
cut-reductions, then we first proof the following sub-lemma before we prove
the remaining rules of the equational theory not induced by cut-reduction
(Section~\ref{subsec:proof-soundness-eq-theory-rest}):

\begin{lemma}
  Suppose $\cat{C} : \Mny \dashv
  \func{Lin} : \cat{M}$ with $\odot : \cat{R}^{op} \times \cat{C} \rightarrow \cat{C}$
  a mixed graded/linear model.
  Then:
  \begin{enumerate}
  \item If using the cut-reduction strategy a derivation
  $\Pi_{{\mathrm{1}}}$ of
  $\delta_{{\mathrm{1}}}  \odot  \Delta  \vdash_{\mathsf{GS} }  \mGLnt{t_{{\mathrm{1}}}}  \mGLsym{:}  \mGLnt{X}$

    reduces to
    the derivation $\Pi_{{\mathrm{2}}}$ of
  $\delta_{{\mathrm{2}}}  \odot  \Delta  \vdash_{\mathsf{GS} }  \mGLnt{t_{{\mathrm{2}}}}  \mGLsym{:}  \mGLnt{X}$,
  then
  ${\interp{ \Pi_{{\mathrm{2}}} }} : \interp{ \Delta }_{ \delta_{{\mathrm{1}}} } \mto \interp{ \mGLnt{X} }
  = {\interp{ \Pi_{{\mathrm{2}}} }} : \interp{ \Delta }_{ \delta_{{\mathrm{2}}} } \mto \interp{ \mGLnt{X} }$
  with $\delta_{{\mathrm{1}}}  \mGLsym{=}  \delta_{{\mathrm{2}}}$.

\item If using the cut-reduction strategy a derivation
  $\Pi_{{\mathrm{1}}}$ of
  $\delta_{{\mathrm{1}}}  \odot  \Delta  \mGLsym{;}  \Gamma  \vdash_{\mathsf{MS} }  \mGLnt{l_{{\mathrm{1}}}}  \mGLsym{:}  \mGLnt{A}$
  reduces to the derivation $\Pi_{{\mathrm{2}}}$ of
  $\delta_{{\mathrm{2}}}  \odot  \Delta  \mGLsym{;}  \Gamma  \vdash_{\mathsf{MS} }  \mGLnt{l_{{\mathrm{2}}}}  \mGLsym{:}  \mGLnt{A}$, then
  ${\interp{ \Pi_{{\mathrm{1}}} }} : \interp{ \Delta ; \Gamma }_{ \delta_{{\mathrm{1}}} } \mto \interp{ \mGLnt{A} }
  = \interp{ \Pi_{{\mathrm{2}}} } : \interp{ \Delta ; \Gamma }_{ \delta_{{\mathrm{2}}} } \mto \interp{ \mGLnt{A} }$
  with $\delta_{{\mathrm{1}}}  \mGLsym{=}  \delta_{{\mathrm{2}}}$.
  \end{enumerate}
\end{lemma}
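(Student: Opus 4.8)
The plan is to prove this by induction following exactly the structure of the cut-reduction procedure of Lemma~\ref{lemma:cut_reduction_for_mgl}, which itself proceeds by mutual induction on $\mathsf{Depth}(\Pi_{{\mathrm{1}}}) + \mathsf{Depth}(\Pi_{{\mathrm{2}}})$. Each clause of that lemma either rewrites a cut to a smaller configuration (commuting conversions, secondary-conclusion and secondary-hypothesis cases, structural cases) or discharges it outright (axiom cases, $\eta$-expansions, principal-vs-principal cases). For every such clause I would take the interpretation of the left-hand (pre-reduction) derivation, unfold it according to the interpretation clauses given in Definition~\ref{def:full-mgl-intepretation}, and show it equals the interpretation of the right-hand (post-reduction) derivation. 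The claim that $\delta_{{\mathrm{1}}} = \delta_{{\mathrm{2}}}$ is essentially already supplied: each case of Lemma~\ref{lemma:cut_reduction_for_mgl} computes the resulting grade vector via the row-vector multiplication $\boxast$ and records the requisite equalities (e.g.\ $\mGLnt{r} * 1 = \mGLnt{r}$, $1 \boxast [\delta^{\mGLmv{n}}] = \delta$, the distributivity identities for contraction), so for the semantic statement I only need to observe that these grade equalities hold on the nose and hence the two interpretations share a common source and target object.

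For the routine cases I would argue as follows. The commuting-conversion cases follow from naturality of the cut/composition together with associativity of $\otimes$ and $\boxtimes$ (which we work up to the coherence isomorphisms), since moving a cut past a left rule only reassociates the composite morphism. The structural cases (weakening, contraction, exchange, approximation) are preserved because the interpretation clauses apply $\mathsf{weak}$, $\mathsf{contr}$, the symmetries of $\otimes$ and $\boxtimes$, and the action-functoriality $\interp{\delta_{{\mathrm{1}}} \leq \delta_{{\mathrm{2}}}} \odot -$ uniformly, so rebuilding the rule below the cut yields the same morphism up to these natural transformations. The axiom cases reduce to the identity-law and unit-coherence equalities of $\cat{C}$ and $\cat{M}$.

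The substantive work is in the principal-vs-principal cases, which encode the $\beta$-equalities, and in the $\eta$-expansion cases. For the additive and multiplicative connectives ($\boxtimes$, $\mathsf{J}$, $\otimes$, $\mathsf{I}$, $\multimap$) the $\beta$-cases are checked by the usual symmetric-monoidal-closed calculations (naturality of $\mathsf{curry}$, the unitors, the $m_{\boxtimes}/n_{\boxtimes}$ pairs), and the $\eta$-cases by the corresponding coherence. The delicate cases are the modal ones. When $\mathsf{Grd}_R$ meets $\mathsf{Grd}_L$, the pre-reduction interpretation threads $\func{Mny}(r \odot -)$ through the promotion built from $\overline{m^{\func{Mny}}}$, the strict-action equalities $\delta$, and $\overline{m_{\boxtimes,r}}$, and unboxing it must collapse to the bare subproof; verifying this requires precisely the coherence equations of the strict exponential action (Figure~\ref{fig:equations-strict-action}) together with the triangle (zig-zag) identity $\varepsilon \circ \func{Mny}\,\eta = \mathsf{id}$ of the adjunction $\func{Mny} \dashv \func{Lin}$. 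Symmetrically, the $\mathsf{Lin}_R$-vs-$\mathsf{Lin}_L$ case invokes $\func{Lin}\,\varepsilon \circ \eta = \mathsf{id}$, and the strong-monoidal structure $n^{\func{Mny}}_{\boxtimes}, n^{\func{Mny}}_{\mathsf{J}}$ induced via Benton's Proposition~1.

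The main obstacle I anticipate is exactly the chain of consecutive modal principal-vs-principal steps illustrated by the ``notable example'' at the end of Appendix~\ref{subsec:cut_reduction_of_mgl} (reducing a cut through $\mathsf{Lin}\,(\mathsf{Grd}_{1}\,\mathsf{Lin}\,\mGLnt{A})$): here the left and right rules for $\mathsf{Grd}$ and $\mathsf{Lin}$ line up into three successive principal cases, so semantically one must compose the two triangle identities with the strictness equalities $\varepsilon_X : 1 \odot \mGLnt{X} = \mGLnt{X}$ and $\delta_{X,r,s}$ and show the whole stack of $\func{Mny}$-, $\func{Lin}$-, and $\odot$-structure telescopes down to the identity on $\func{Lin}\interp{\mGLnt{A}}^{\MS}$. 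Keeping the monoidal coherence bookkeeping under control across these nested applications, while confirming that the grade-vector side conditions ($1 * 1 = 1$, etc.) make the source and target objects coincide strictly rather than merely up to isomorphism, is where the proof will demand the most care; the strictness we imposed on the monoidal structure of the action in Definition~\ref{def:strict-action} is what makes these composites equal on the nose and thus makes the argument go through.
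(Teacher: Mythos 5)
Your proposal is correct and takes essentially the same approach as the paper's proof: mutual induction over the cases of the cut-reduction strategy, with the routine cases (commuting conversions, axioms, structural rules) discharged by associativity, identity laws, and naturality of the symmetries and of $\mathsf{weak}$/$\mathsf{contr}$, and the modal principal cases discharged by the strong monoidality of $\Mny$ (via Benton's Proposition~1), the adjunction triangle identities, and the strictness of the exponential action. The only cosmetic discrepancy is that in the paper's calculation the triangle identity $\varepsilon_{\Mny}\circ\Mny(\eta)=\mathsf{id}$ is what the $\mathsf{Lin}$ principal case uses, while the $\mathsf{Grd}$ principal case needs only the strictness equalities of the action --- the reverse of your attribution --- but since both triangle identities hold in any adjunction, this changes nothing.
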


\begin{proof}
  This is a proof by mutual induction on the form of the cut-reduction
  strategy.
  \begin{enumerate}
  \item The derivation:
    \begin{gather*}
    \inferrule* [flushleft,right=$\mGLdruleGSTXXCutName{}$] {
      \delta  \odot  \Delta  \vdash_{\mathsf{GS} }  \mGLnt{t_{{\mathrm{1}}}}  \mGLsym{:}  \mGLnt{X}\\
      \inferrule* [flushleft,right=$\mGLdruleGSTXXCutName{}$] {
        (  \delta_{{\mathrm{2}}}  \mGLsym{,}  \mGLnt{r_{{\mathrm{1}}}}  \mGLsym{,}  \delta_{{\mathrm{3}}}  )   \odot   ( \Delta_{{\mathrm{2}}}  \mGLsym{,}  \mGLnt{X}  \mGLsym{,}  \Delta_{{\mathrm{3}}} )   \vdash_{\mathsf{GS} }  \mGLnt{t_{{\mathrm{2}}}}  \mGLsym{:}  \mGLnt{Y}\\
        (  \delta_{{\mathrm{1}}}  \mGLsym{,}  \mGLnt{r_{{\mathrm{2}}}}  \mGLsym{,}  \delta_{{\mathrm{4}}}  )   \odot   ( \Delta_{{\mathrm{1}}}  \mGLsym{,}  \mGLnt{Y}  \mGLsym{,}  \Delta_{{\mathrm{4}}} )   \vdash_{\mathsf{GS} }  \mGLnt{t}  \mGLsym{:}  \mGLnt{Z}
      }{(  \delta_{{\mathrm{1}}}  \mGLsym{,}   \mGLnt{r_{{\mathrm{2}}}}   *   \delta_{{\mathrm{2}}}   \mGLsym{,}   \mGLnt{r_{{\mathrm{2}}}}   *   \mGLnt{r_{{\mathrm{1}}}}   \mGLsym{,}   \mGLnt{r_{{\mathrm{2}}}}   *   \delta_{{\mathrm{3}}}   \mGLsym{,}  \delta_{{\mathrm{4}}}  )   \odot   ( \Delta_{{\mathrm{1}}}  \mGLsym{,}  \Delta_{{\mathrm{2}}}  \mGLsym{,}  \mGLnt{X}  \mGLsym{,}  \Delta_{{\mathrm{3}}}  \mGLsym{,}  \Delta_{{\mathrm{4}}} )   \vdash_{\mathsf{GS} }  \mGLsym{[}  \mGLnt{t_{{\mathrm{2}}}}  \mGLsym{/}  \mGLmv{y}  \mGLsym{]}  \mGLnt{t}  \mGLsym{:}  \mGLnt{Z}}
    }{(  \delta_{{\mathrm{1}}}  \mGLsym{,}   \mGLnt{r_{{\mathrm{2}}}}   *   \delta_{{\mathrm{2}}}   \mGLsym{,}    \mGLnt{r_{{\mathrm{2}}}}   *   \mGLnt{r_{{\mathrm{1}}}}    *   \delta   \mGLsym{,}   \mGLnt{r_{{\mathrm{2}}}}   *   \delta_{{\mathrm{3}}}   \mGLsym{,}  \delta_{{\mathrm{4}}}  )   \odot   ( \Delta_{{\mathrm{1}}}  \mGLsym{,}  \Delta_{{\mathrm{2}}}  \mGLsym{,}  \Delta  \mGLsym{,}  \Delta_{{\mathrm{3}}}  \mGLsym{,}  \Delta_{{\mathrm{4}}} )   \vdash_{\mathsf{GS} }  \mGLsym{[}  \mGLnt{t_{{\mathrm{1}}}}  \mGLsym{/}  \mGLmv{x}  \mGLsym{]}  \mGLsym{[}  \mGLnt{t_{{\mathrm{2}}}}  \mGLsym{/}  \mGLmv{y}  \mGLsym{]}  \mGLnt{t}  \mGLsym{:}  \mGLnt{Z}}
    \end{gather*}
    reduces to the derivation:
    \begin{gather*}
    \inferrule* [flushleft,right=$\mGLdruleGSTXXCutName{}$] {
      \inferrule* [flushleft,right=$\mGLdruleGSTXXCutName{}$] {
        \delta  \odot  \Delta  \vdash_{\mathsf{GS} }  \mGLnt{t_{{\mathrm{1}}}}  \mGLsym{:}  \mGLnt{X}\\
        (  \delta_{{\mathrm{2}}}  \mGLsym{,}  \mGLnt{r_{{\mathrm{1}}}}  \mGLsym{,}  \delta_{{\mathrm{3}}}  )   \odot   ( \Delta_{{\mathrm{2}}}  \mGLsym{,}  \mGLnt{X}  \mGLsym{,}  \Delta_{{\mathrm{3}}} )   \vdash_{\mathsf{GS} }  \mGLnt{t_{{\mathrm{2}}}}  \mGLsym{:}  \mGLnt{Y}
      }{(  \delta_{{\mathrm{2}}}  \mGLsym{,}   \mGLnt{r_{{\mathrm{1}}}}   *   \delta   \mGLsym{,}  \delta_{{\mathrm{3}}}  )   \odot   ( \Delta_{{\mathrm{2}}}  \mGLsym{,}  \Delta  \mGLsym{,}  \Delta_{{\mathrm{3}}} )   \vdash_{\mathsf{GS} }  \mGLsym{[}  \mGLnt{t_{{\mathrm{1}}}}  \mGLsym{/}  \mGLmv{x}  \mGLsym{]}  \mGLnt{t_{{\mathrm{2}}}}  \mGLsym{:}  \mGLnt{Y}}\\
      (  \delta_{{\mathrm{1}}}  \mGLsym{,}  \mGLnt{r_{{\mathrm{2}}}}  \mGLsym{,}  \delta_{{\mathrm{4}}}  )   \odot   ( \Delta_{{\mathrm{1}}}  \mGLsym{,}  \mGLnt{Y}  \mGLsym{,}  \Delta_{{\mathrm{4}}} )   \vdash_{\mathsf{GS} }  \mGLnt{t}  \mGLsym{:}  \mGLnt{Z}
    }{(  \delta_{{\mathrm{1}}}  \mGLsym{,}   \mGLnt{r_{{\mathrm{2}}}}   *   \delta_{{\mathrm{2}}}   \mGLsym{,}   \mGLnt{r_{{\mathrm{2}}}}   *    \mGLnt{r_{{\mathrm{1}}}}   *   \delta    \mGLsym{,}   \mGLnt{r_{{\mathrm{2}}}}   *   \delta_{{\mathrm{3}}}   \mGLsym{,}  \delta_{{\mathrm{4}}}  )   \odot   ( \Delta_{{\mathrm{1}}}  \mGLsym{,}  \Delta_{{\mathrm{2}}}  \mGLsym{,}  \Delta  \mGLsym{,}  \Delta_{{\mathrm{3}}}  \mGLsym{,}  \Delta_{{\mathrm{4}}} )   \vdash_{\mathsf{GS} }  \mGLsym{[}  \mGLsym{[}  \mGLnt{t_{{\mathrm{1}}}}  \mGLsym{/}  \mGLmv{x}  \mGLsym{]}  \mGLnt{t_{{\mathrm{2}}}}  \mGLsym{/}  \mGLmv{y}  \mGLsym{]}  \mGLnt{t}  \mGLsym{:}  \mGLnt{Z}}
    \end{gather*}
    This case follows from associativity of composition of morphisms:
    \begin{align*}
    (lhs) \;\; & (\interp{\Pi_3} \circ (\id_{\delta_{{\mathrm{1}}}   \odot  \interp{  \Delta_{{\mathrm{1}}}  }} \boxtimes (\mGLnt{r_{{\mathrm{2}}}} \odot \interp{\Pi_2}) \boxtimes \id_{\delta_{{\mathrm{4}}}   \odot  \interp{  \Delta_{{\mathrm{4}}}  }})) \\
      & \qquad \circ (\id_{(  \delta_{{\mathrm{1}}}  \mGLsym{,}  \mGLnt{r_{{\mathrm{2}}}}  *  \delta_{{\mathrm{2}}}  )    \odot  \interp{  \Delta_{{\mathrm{1}}}  \mGLsym{,}  \Delta_{{\mathrm{2}}}  }} \boxtimes (\mGLnt{r_{{\mathrm{2}}}}  *  \mGLnt{r_{{\mathrm{1}}}}) \odot \interp{\Pi_1} \boxtimes \id_{(  \mGLnt{r_{{\mathrm{2}}}}  *  \delta_{{\mathrm{3}}}  \mGLsym{,}  \delta_{{\mathrm{4}}}  )    \odot  \interp{  \Delta_{{\mathrm{3}}}  \mGLsym{,}  \Delta_{{\mathrm{4}}}  }})  \\
    (rhs) \;\; & \interp{\Pi_3} \circ (\id_{\delta_{{\mathrm{1}}}   \odot  \interp{  \Delta_{{\mathrm{1}}}  }} \boxtimes (\mGLnt{r_{{\mathrm{2}}}} \odot (\interp{\Pi_2} \circ (\id_{\delta_{{\mathrm{2}}}   \odot  \interp{  \Delta_{{\mathrm{2}}}  }} \boxtimes (\mGLnt{r_{{\mathrm{1}}}} \odot \interp{\Pi_1}) \boxtimes \id_{\delta_{{\mathrm{3}}}   \odot  \interp{  \Delta_{{\mathrm{3}}}  }}))) \boxtimes \id_{\delta_{{\mathrm{4}}}   \odot  \interp{  \Delta_{{\mathrm{4}}}  }})
    \end{align*}

      \item The derivation:
    \begin{gather*}
    \inferrule* [flushleft,right=$\mGLdruleGSTXXCutName{}$] {
      \delta  \odot  \Delta  \vdash_{\mathsf{GS} }  \mGLnt{t_{{\mathrm{1}}}}  \mGLsym{:}  \mGLnt{X}\\
      \inferrule* [flushleft,right=$\mGLdruleGSTXXCutName{}$] {
        \delta'  \odot  \Delta'  \vdash_{\mathsf{GS} }  \mGLnt{t_{{\mathrm{2}}}}  \mGLsym{:}  \mGLnt{Y}\\
        (  \delta_{{\mathrm{1}}}  \mGLsym{,}  \mGLnt{r_{{\mathrm{1}}}}  \mGLsym{,}  \delta_{{\mathrm{2}}}  \mGLsym{,}  \mGLnt{r_{{\mathrm{2}}}}  \mGLsym{,}  \delta_{{\mathrm{3}}}  )   \odot   ( \Delta_{{\mathrm{1}}}  \mGLsym{,}  \mGLmv{x}  \mGLsym{:}  \mGLnt{X}  \mGLsym{,}  \Delta_{{\mathrm{2}}}  \mGLsym{,}  \mGLmv{y}  \mGLsym{:}  \mGLnt{Y}  \mGLsym{,}  \Delta_{{\mathrm{3}}} )   \vdash_{\mathsf{GS} }  \mGLnt{t}  \mGLsym{:}  \mGLnt{Z}
      }{(  \delta_{{\mathrm{1}}}  \mGLsym{,}  \mGLnt{r_{{\mathrm{1}}}}  \mGLsym{,}  \delta_{{\mathrm{2}}}  \mGLsym{,}   \mGLnt{r_{{\mathrm{2}}}}   *   \delta'   \mGLsym{,}  \delta_{{\mathrm{3}}}  )   \odot   ( \Delta_{{\mathrm{1}}}  \mGLsym{,}  \mGLmv{x}  \mGLsym{:}  \mGLnt{X}  \mGLsym{,}  \Delta_{{\mathrm{2}}}  \mGLsym{,}  \Delta'  \mGLsym{,}  \Delta_{{\mathrm{3}}} )   \vdash_{\mathsf{GS} }  \mGLsym{[}  \mGLnt{t_{{\mathrm{2}}}}  \mGLsym{/}  \mGLmv{y}  \mGLsym{]}  \mGLnt{t}  \mGLsym{:}  \mGLnt{Z}}
    }{(  \delta_{{\mathrm{1}}}  \mGLsym{,}   \mGLnt{r_{{\mathrm{1}}}}   *   \delta   \mGLsym{,}  \delta_{{\mathrm{2}}}  \mGLsym{,}   \mGLnt{r_{{\mathrm{2}}}}   *   \delta'   \mGLsym{,}  \delta_{{\mathrm{3}}}  )   \odot   ( \Delta_{{\mathrm{1}}}  \mGLsym{,}  \Delta  \mGLsym{,}  \Delta_{{\mathrm{2}}}  \mGLsym{,}  \Delta'  \mGLsym{,}  \Delta_{{\mathrm{3}}} )   \vdash_{\mathsf{GS} }  \mGLsym{[}  \mGLnt{t_{{\mathrm{1}}}}  \mGLsym{/}  \mGLmv{x}  \mGLsym{]}  \mGLsym{[}  \mGLnt{t_{{\mathrm{2}}}}  \mGLsym{/}  \mGLmv{y}  \mGLsym{]}  \mGLnt{t}  \mGLsym{:}  \mGLnt{Z}}
    \end{gather*}
    reduces to the derivation:
    \begin{gather*}
    \inferrule* [flushleft,right=$\mGLdruleGSTXXCutName{}$] {
      \delta'  \odot  \Delta'  \vdash_{\mathsf{GS} }  \mGLnt{t_{{\mathrm{2}}}}  \mGLsym{:}  \mGLnt{Y}\\
      \inferrule* [flushleft,right=$\mGLdruleGSTXXCutName{}$] {
        \delta  \odot  \Delta  \vdash_{\mathsf{GS} }  \mGLnt{t_{{\mathrm{1}}}}  \mGLsym{:}  \mGLnt{X}\\
        (  \delta_{{\mathrm{1}}}  \mGLsym{,}  \mGLnt{r_{{\mathrm{1}}}}  \mGLsym{,}  \delta_{{\mathrm{2}}}  \mGLsym{,}  \mGLnt{r_{{\mathrm{2}}}}  \mGLsym{,}  \delta_{{\mathrm{3}}}  )   \odot   ( \Delta_{{\mathrm{1}}}  \mGLsym{,}  \mGLmv{x}  \mGLsym{:}  \mGLnt{X}  \mGLsym{,}  \Delta_{{\mathrm{2}}}  \mGLsym{,}  \mGLmv{y}  \mGLsym{:}  \mGLnt{Y}  \mGLsym{,}  \Delta_{{\mathrm{3}}} )   \vdash_{\mathsf{GS} }  \mGLnt{t}  \mGLsym{:}  \mGLnt{Z}
      }{(  \delta_{{\mathrm{1}}}  \mGLsym{,}  \mGLnt{r_{{\mathrm{1}}}}  \mGLsym{,}  \delta_{{\mathrm{2}}}  \mGLsym{,}   \mGLnt{r_{{\mathrm{2}}}}   *   \delta'   \mGLsym{,}  \delta_{{\mathrm{3}}}  )   \odot   ( \Delta_{{\mathrm{1}}}  \mGLsym{,}  \mGLmv{x}  \mGLsym{:}  \mGLnt{X}  \mGLsym{,}  \Delta_{{\mathrm{2}}}  \mGLsym{,}  \Delta'  \mGLsym{,}  \Delta_{{\mathrm{3}}} )   \vdash_{\mathsf{GS} }  \mGLsym{[}  \mGLnt{t_{{\mathrm{1}}}}  \mGLsym{/}  \mGLmv{x}  \mGLsym{]}  \mGLnt{t}  \mGLsym{:}  \mGLnt{Z}}
    }{(  \delta_{{\mathrm{1}}}  \mGLsym{,}   \mGLnt{r_{{\mathrm{1}}}}   *   \delta   \mGLsym{,}  \delta_{{\mathrm{2}}}  \mGLsym{,}   \mGLnt{r_{{\mathrm{2}}}}   *   \delta'   \mGLsym{,}  \delta_{{\mathrm{3}}}  )   \odot   ( \Delta_{{\mathrm{1}}}  \mGLsym{,}  \Delta  \mGLsym{,}  \Delta_{{\mathrm{2}}}  \mGLsym{,}  \Delta'  \mGLsym{,}  \Delta_{{\mathrm{3}}} )   \vdash_{\mathsf{GS} }  \mGLsym{[}  \mGLnt{t_{{\mathrm{2}}}}  \mGLsym{/}  \mGLmv{y}  \mGLsym{]}  \mGLsym{[}  \mGLnt{t_{{\mathrm{1}}}}  \mGLsym{/}  \mGLmv{x}  \mGLsym{]}  \mGLnt{t}  \mGLsym{:}  \mGLnt{Z}}
    \end{gather*}
    This case follows from associativity of composition of morphisms (similarly to the above).

  \item The derivation:
    \[
    \inferrule* [flushleft,right=$\mGLdruleGSTXXidName{}$] {
      \,
    }{1  \odot  \mGLmv{x}  \mGLsym{:}  \mGLsym{(}  \mGLnt{X}  \boxtimes  \mGLnt{Y}  \mGLsym{)}  \vdash_{\mathsf{GS} }  \mGLmv{x}  \mGLsym{:}  \mGLnt{X}  \boxtimes  \mGLnt{Y}}
    \]
    expands to the derivation:
    \[
    \inferrule* [flushleft,right=$\mGLdruleGSTXXTenLName{}$] {
      \inferrule* [flushleft,right=$\mGLdruleGSTXXTenRName{}$] {
        \inferrule* [flushleft,right=$\mGLdruleGSTXXidName{}$] {}{1  \odot  \mGLmv{y}  \mGLsym{:}  \mGLnt{X}  \vdash_{\mathsf{GS} }  \mGLmv{y}  \mGLsym{:}  \mGLnt{X}}\\
        \inferrule* [flushleft,right=$\mGLdruleGSTXXidName{}$] {}{1  \odot  \mGLmv{z}  \mGLsym{:}  \mGLnt{Y}  \vdash_{\mathsf{GS} }  \mGLmv{z}  \mGLsym{:}  \mGLnt{Y}}
      }{(  1  \mGLsym{,}  1  )   \odot   ( \mGLmv{y}  \mGLsym{:}  \mGLnt{X}  \mGLsym{,}  \mGLmv{z}  \mGLsym{:}  \mGLnt{Y} )   \vdash_{\mathsf{GS} }  \mGLsym{(}  \mGLmv{y}  \mGLsym{,}  \mGLmv{z}  \mGLsym{)}  \mGLsym{:}  \mGLnt{X}  \boxtimes  \mGLnt{Y}}
    }{(  1  )   \odot   ( \mGLmv{x}  \mGLsym{:}  \mGLnt{X}  \boxtimes  \mGLnt{Y} )   \vdash_{\mathsf{GS} }   \mathsf{let} \,( \mGLmv{y} , \mGLmv{z} ) =  \mGLmv{x} \, \mathsf{in} \, \mGLsym{(}  \mGLmv{y}  \mGLsym{,}  \mGLmv{z}  \mGLsym{)}   \mGLsym{:}  \mGLnt{X}  \boxtimes  \mGLnt{Y}}
    \]
    This case follows from the fact that $\odot$ is a bifunctor
    and that $n_{\boxtimes,r,X,Y}$ is strict, such that:
    \begin{align*}
    \begin{array}{rll}
    (rhs)  & & \;\; (id_{\interpGS{X}} \boxtimes id_{\interpGS{Y}}) \circ n_{\boxtimes,1,\interpGS{X},\interpGS{Y}} \\
           & = & \;\; (id_{\interpGS{X}} \boxtimes id_{\interpGS{Y}}) \\
     (lhs) & = & \;\; id_{{\interpGS{X}} \boxtimes \interpGS{Y}} \\
     \end{array}
       \end{align*}

      \item The derivation:
    \[
    \inferrule* [flushleft,right=$\mGLdruleGSTXXidName{}$] {
      \,
    }{1  \odot  \mGLmv{x}  \mGLsym{:}  \mathsf{J}  \vdash_{\mathsf{GS} }  \mGLmv{x}  \mGLsym{:}  \mathsf{J}}
    \]
    reduces to the derivation:
    \[
    \inferrule* [flushleft,right=$\mGLdruleGSTXXUnitLName{}$] {
      \inferrule* [flushleft,right=$\mGLdruleGSTXXUnitRName{}$] {
        \,
      }{\emptyset  \odot  \emptyset  \vdash_{\mathsf{GS} }  \mathsf{j}  \mGLsym{:}  \mathsf{J}}
    }{1  \odot  \mGLmv{x}  \mGLsym{:}  \mathsf{J}  \vdash_{\mathsf{GS} }  \mathsf{let} \, \mathsf{j} \, \mGLsym{=}  \mGLmv{x} \, \mathsf{in} \, \mathsf{j}  \mGLsym{:}  \mathsf{J}}
    \]
    The interpretations are then:
    \begin{align*}
    (lhs) & = id_{\mathsf{J}} \\
    (rhs) & = id_{\mathsf{J}} \circ \circ n_{\mathsf{J},1}
    \end{align*}
    which are equivalent by the strict monoidality of $n$.

  \item The derivation:
    \begin{gather*}
    \inferrule* [flushleft,right=$\mGLdruleGSTXXCutName{}$] {
      \inferrule* [flushleft,right=$\mGLdruleGSTXXidName{}$] {
        \,
      }{1  \odot  \mGLmv{x}  \mGLsym{:}  \mGLnt{X}  \vdash_{\mathsf{GS} }  \mGLmv{x}  \mGLsym{:}  \mGLnt{X}}\\
      (  \delta_{{\mathrm{1}}}  \mGLsym{,}  \mGLnt{r}  \mGLsym{,}  \delta_{{\mathrm{2}}}  )   \odot   ( \Delta_{{\mathrm{1}}}  \mGLsym{,}  \mGLmv{y}  \mGLsym{:}  \mGLnt{X}  \mGLsym{,}  \Delta_{{\mathrm{2}}} )   \vdash_{\mathsf{GS} }  \mGLnt{t}  \mGLsym{:}  \mGLnt{Z}
    }{(  \delta_{{\mathrm{1}}}  \mGLsym{,}   \mGLnt{r}   *   1   \mGLsym{,}  \delta_{{\mathrm{2}}}  )   \odot   ( \Delta_{{\mathrm{1}}}  \mGLsym{,}  \mGLmv{x}  \mGLsym{:}  \mGLnt{X}  \mGLsym{,}  \Delta_{{\mathrm{2}}} )   \vdash_{\mathsf{GS} }  \mGLsym{[}  \mGLmv{x}  \mGLsym{/}  \mGLmv{y}  \mGLsym{]}  \mGLnt{t}  \mGLsym{:}  \mGLnt{Z}}
    \end{gather*}
    reduces to:
    \[
      (  \delta_{{\mathrm{1}}}  \mGLsym{,}  \mGLnt{r}  \mGLsym{,}  \delta_{{\mathrm{2}}}  )   \odot   ( \Delta_{{\mathrm{1}}}  \mGLsym{,}  \mGLmv{y}  \mGLsym{:}  \mGLnt{X}  \mGLsym{,}  \Delta_{{\mathrm{2}}} )   \vdash_{\mathsf{GS} }  \mGLnt{t}  \mGLsym{:}  \mGLnt{Z}
    \]
    This case follows from the identity law of composition of morphisms.

  \item The derivation:
    \begin{gather*}
    \inferrule* [flushleft,right=$\mGLdruleGSTXXCutName{}$] {
      \delta  \odot  \Delta  \vdash_{\mathsf{GS} }  \mGLnt{t}  \mGLsym{:}  \mGLnt{X}\\
      \inferrule* [flushleft,right=$\mGLdruleGSTXXidName{}$] {
        \,
      }{1  \odot  \mGLmv{x}  \mGLsym{:}  \mGLnt{X}  \vdash_{\mathsf{GS} }  \mGLmv{x}  \mGLsym{:}  \mGLnt{X}}
    }{1   *   \delta   \odot  \Delta  \vdash_{\mathsf{GS} }  \mGLsym{[}  \mGLnt{t}  \mGLsym{/}  \mGLmv{x}  \mGLsym{]}  \mGLmv{x}  \mGLsym{:}  \mGLnt{Z}}
    \end{gather*}
    reduces to:
    \[
      \delta  \odot  \Delta  \vdash_{\mathsf{GS} }  \mGLnt{t}  \mGLsym{:}  \mGLnt{X}
      \]
      This case follows from the identity law of composition of morphisms.

  \item The derivation:
    \begin{gather*}
    \inferrule* [flushleft,right=$\mGLdruleGSTXXCutName{}$] {
      \delta  \odot  \Delta  \vdash_{\mathsf{GS} }  \mGLnt{t_{{\mathrm{1}}}}  \mGLsym{:}  \mGLnt{X}\\
      \inferrule* [flushleft,right=$\mGLdruleGSTXXExName{}$] {
        (  \delta_{{\mathrm{1}}}  \mGLsym{,}  \mGLnt{r_{{\mathrm{1}}}}  \mGLsym{,}  \mGLnt{r_{{\mathrm{2}}}}  \mGLsym{,}  \delta_{{\mathrm{2}}}  )   \odot   ( \Delta_{{\mathrm{1}}}  \mGLsym{,}  \mGLmv{x}  \mGLsym{:}  \mGLnt{X}  \mGLsym{,}  \mGLmv{y}  \mGLsym{:}  \mGLnt{Y}  \mGLsym{,}  \Delta_{{\mathrm{2}}} )   \vdash_{\mathsf{GS} }  \mGLnt{t_{{\mathrm{2}}}}  \mGLsym{:}  \mGLnt{Z}
      }{(  \delta_{{\mathrm{1}}}  \mGLsym{,}  \mGLnt{r_{{\mathrm{2}}}}  \mGLsym{,}  \mGLnt{r_{{\mathrm{1}}}}  \mGLsym{,}  \delta_{{\mathrm{2}}}  )   \odot   ( \Delta_{{\mathrm{1}}}  \mGLsym{,}  \mGLmv{y}  \mGLsym{:}  \mGLnt{Y}  \mGLsym{,}  \mGLmv{x}  \mGLsym{:}  \mGLnt{X}  \mGLsym{,}  \Delta_{{\mathrm{2}}} )   \vdash_{\mathsf{GS} }  \mGLnt{t_{{\mathrm{2}}}}  \mGLsym{:}  \mGLnt{Z}}
    }{(  \delta_{{\mathrm{1}}}  \mGLsym{,}  \mGLnt{r_{{\mathrm{2}}}}  \mGLsym{,}   \mGLnt{r_{{\mathrm{1}}}}   *   \delta   \mGLsym{,}  \delta_{{\mathrm{2}}}  )   \odot   ( \Delta_{{\mathrm{1}}}  \mGLsym{,}  \mGLmv{y}  \mGLsym{:}  \mGLnt{Y}  \mGLsym{,}  \Delta  \mGLsym{,}  \Delta_{{\mathrm{2}}} )   \vdash_{\mathsf{GS} }  \mGLsym{[}  \mGLnt{t_{{\mathrm{1}}}}  \mGLsym{/}  \mGLmv{x}  \mGLsym{]}  \mGLnt{t_{{\mathrm{2}}}}  \mGLsym{:}  \mGLnt{Z}}
    \end{gather*}
    reduces to:
    \begin{gather*}
    \inferrule* [flushleft,right=$\mGLdruleGSTXXExName{}*$] {
      \inferrule* [flushleft,right=$\mGLdruleGSTXXCutName{}$] {
        \delta  \odot  \Delta  \vdash_{\mathsf{GS} }  \mGLnt{t_{{\mathrm{1}}}}  \mGLsym{:}  \mGLnt{X}\\
        (  \delta_{{\mathrm{1}}}  \mGLsym{,}  \mGLnt{r_{{\mathrm{1}}}}  \mGLsym{,}  \mGLnt{r_{{\mathrm{2}}}}  \mGLsym{,}  \delta_{{\mathrm{2}}}  )   \odot   ( \Delta_{{\mathrm{1}}}  \mGLsym{,}  \mGLmv{x}  \mGLsym{:}  \mGLnt{X}  \mGLsym{,}  \mGLmv{y}  \mGLsym{:}  \mGLnt{Y}  \mGLsym{,}  \Delta_{{\mathrm{2}}} )   \vdash_{\mathsf{GS} }  \mGLnt{t_{{\mathrm{2}}}}  \mGLsym{:}  \mGLnt{Z}
      }{(  \delta_{{\mathrm{1}}}  \mGLsym{,}   \mGLnt{r_{{\mathrm{1}}}}   *   \delta   \mGLsym{,}  \mGLnt{r_{{\mathrm{2}}}}  \mGLsym{,}  \delta_{{\mathrm{2}}}  )   \odot   ( \Delta_{{\mathrm{1}}}  \mGLsym{,}  \Delta  \mGLsym{,}  \mGLmv{y}  \mGLsym{:}  \mGLnt{Y}  \mGLsym{,}  \Delta_{{\mathrm{2}}} )   \vdash_{\mathsf{GS} }  \mGLsym{[}  \mGLnt{t_{{\mathrm{1}}}}  \mGLsym{/}  \mGLmv{x}  \mGLsym{]}  \mGLnt{t_{{\mathrm{2}}}}  \mGLsym{:}  \mGLnt{Z}}
    }{(  \delta_{{\mathrm{1}}}  \mGLsym{,}  \mGLnt{r_{{\mathrm{2}}}}  \mGLsym{,}   \mGLnt{r_{{\mathrm{1}}}}   *   \delta   \mGLsym{,}  \delta_{{\mathrm{2}}}  )   \odot   ( \Delta_{{\mathrm{1}}}  \mGLsym{,}  \mGLmv{y}  \mGLsym{:}  \mGLnt{Y}  \mGLsym{,}  \Delta  \mGLsym{,}  \Delta_{{\mathrm{2}}} )   \vdash_{\mathsf{GS} }  \mGLsym{[}  \mGLnt{t_{{\mathrm{1}}}}  \mGLsym{/}  \mGLmv{x}  \mGLsym{]}  \mGLnt{t_{{\mathrm{2}}}}  \mGLsym{:}  \mGLnt{Z}}
    \end{gather*}
    This case follows from naturality of symmetry, and the fact that
    exchange is interpreted as the symmetry in the symmetric
    monoidal category $\cat{C}$.

  \item The derivation:
    \begin{gather*}
    \inferrule* [flushleft,right=$\mGLdruleGSTXXCutName{}$] {
      \delta  \odot  \Delta  \vdash_{\mathsf{GS} }  \mGLnt{t_{{\mathrm{1}}}}  \mGLsym{:}  \mGLnt{Y}\\
      \inferrule* [flushleft,right=$\mGLdruleGSTXXExName{}$] {
        (  \delta_{{\mathrm{1}}}  \mGLsym{,}  \mGLnt{r_{{\mathrm{1}}}}  \mGLsym{,}  \mGLnt{r_{{\mathrm{2}}}}  \mGLsym{,}  \delta_{{\mathrm{2}}}  )   \odot   ( \Delta_{{\mathrm{1}}}  \mGLsym{,}  \mGLmv{x}  \mGLsym{:}  \mGLnt{X}  \mGLsym{,}  \mGLmv{y}  \mGLsym{:}  \mGLnt{Y}  \mGLsym{,}  \Delta_{{\mathrm{2}}} )   \vdash_{\mathsf{GS} }  \mGLnt{t_{{\mathrm{2}}}}  \mGLsym{:}  \mGLnt{Z}
      }{(  \delta_{{\mathrm{1}}}  \mGLsym{,}  \mGLnt{r_{{\mathrm{2}}}}  \mGLsym{,}  \mGLnt{r_{{\mathrm{1}}}}  \mGLsym{,}  \delta_{{\mathrm{2}}}  )   \odot   ( \Delta_{{\mathrm{1}}}  \mGLsym{,}  \mGLmv{y}  \mGLsym{:}  \mGLnt{Y}  \mGLsym{,}  \mGLmv{x}  \mGLsym{:}  \mGLnt{X}  \mGLsym{,}  \Delta_{{\mathrm{2}}} )   \vdash_{\mathsf{GS} }  \mGLnt{t_{{\mathrm{2}}}}  \mGLsym{:}  \mGLnt{Z}}
    }{(  \delta_{{\mathrm{1}}}  \mGLsym{,}   \mGLnt{r_{{\mathrm{2}}}}   *   \delta   \mGLsym{,}  \mGLnt{r_{{\mathrm{1}}}}  \mGLsym{,}  \delta_{{\mathrm{2}}}  )   \odot   ( \Delta_{{\mathrm{1}}}  \mGLsym{,}  \Delta  \mGLsym{,}  \mGLmv{x}  \mGLsym{:}  \mGLnt{X}  \mGLsym{,}  \Delta_{{\mathrm{2}}} )   \vdash_{\mathsf{GS} }  \mGLsym{[}  \mGLnt{t_{{\mathrm{1}}}}  \mGLsym{/}  \mGLmv{x}  \mGLsym{]}  \mGLnt{t_{{\mathrm{2}}}}  \mGLsym{:}  \mGLnt{Z}}
    \end{gather*}
    reduces to:
    \begin{gather*}
    \inferrule* [flushleft,right=$\mGLdruleGSTXXExName{}*$] {
      \inferrule* [flushleft,right=$\mGLdruleGSTXXCutName{}$] {
        \delta  \odot  \Delta  \vdash_{\mathsf{GS} }  \mGLnt{t_{{\mathrm{1}}}}  \mGLsym{:}  \mGLnt{Y}\\
        (  \delta_{{\mathrm{1}}}  \mGLsym{,}  \mGLnt{r_{{\mathrm{1}}}}  \mGLsym{,}  \mGLnt{r_{{\mathrm{2}}}}  \mGLsym{,}  \delta_{{\mathrm{2}}}  )   \odot   ( \Delta_{{\mathrm{1}}}  \mGLsym{,}  \mGLmv{x}  \mGLsym{:}  \mGLnt{X}  \mGLsym{,}  \mGLmv{y}  \mGLsym{:}  \mGLnt{Y}  \mGLsym{,}  \Delta_{{\mathrm{2}}} )   \vdash_{\mathsf{GS} }  \mGLnt{t_{{\mathrm{2}}}}  \mGLsym{:}  \mGLnt{Z}
      }{(  \delta_{{\mathrm{1}}}  \mGLsym{,}  \mGLnt{r_{{\mathrm{1}}}}  \mGLsym{,}   \mGLnt{r_{{\mathrm{2}}}}   *   \delta   \mGLsym{,}  \delta_{{\mathrm{2}}}  )   \odot   ( \Delta_{{\mathrm{1}}}  \mGLsym{,}  \mGLmv{x}  \mGLsym{:}  \mGLnt{X}  \mGLsym{,}  \Delta  \mGLsym{,}  \Delta_{{\mathrm{2}}} )   \vdash_{\mathsf{GS} }  \mGLsym{[}  \mGLnt{t_{{\mathrm{1}}}}  \mGLsym{/}  \mGLmv{x}  \mGLsym{]}  \mGLnt{t_{{\mathrm{2}}}}  \mGLsym{:}  \mGLnt{Z}}
    }{(  \delta_{{\mathrm{1}}}  \mGLsym{,}   \mGLnt{r_{{\mathrm{2}}}}   *   \delta   \mGLsym{,}  \mGLnt{r_{{\mathrm{1}}}}  \mGLsym{,}  \delta_{{\mathrm{2}}}  )   \odot   ( \Delta_{{\mathrm{1}}}  \mGLsym{,}  \Delta  \mGLsym{,}  \mGLmv{x}  \mGLsym{:}  \mGLnt{X}  \mGLsym{,}  \Delta_{{\mathrm{2}}} )   \vdash_{\mathsf{GS} }  \mGLsym{[}  \mGLnt{t_{{\mathrm{1}}}}  \mGLsym{/}  \mGLmv{x}  \mGLsym{]}  \mGLnt{t_{{\mathrm{2}}}}  \mGLsym{:}  \mGLnt{Z}}
    \end{gather*}
    Similarly to the above,
    this case follows from naturality of symmetry, and the fact that
    exchange is interpreted as the symmetry in the symmetric
    monoidal category $\cat{C}$.

  \item The derivation:
    \begin{gather*}
    \inferrule* [flushleft,right=$\mGLdruleGSTXXCutName{}$] {
      \inferrule* [flushleft,right=$\mGLdruleGSTXXTenRName{}$] {
        (  \delta_{{\mathrm{2}}}  )   \odot   ( \Delta_{{\mathrm{2}}} )   \vdash_{\mathsf{GS} }  \mGLnt{t_{{\mathrm{1}}}}  \mGLsym{:}  \mGLnt{X}\\
        (  \delta_{{\mathrm{3}}}  )   \odot   ( \Delta_{{\mathrm{3}}} )   \vdash_{\mathsf{GS} }  \mGLnt{t_{{\mathrm{2}}}}  \mGLsym{:}  \mGLnt{Y}
      }{(  \delta_{{\mathrm{2}}}  \mGLsym{,}  \delta_{{\mathrm{3}}}  )   \odot   ( \Delta_{{\mathrm{2}}}  \mGLsym{,}  \Delta_{{\mathrm{3}}} )   \vdash_{\mathsf{GS} }  \mGLsym{(}  \mGLnt{t_{{\mathrm{1}}}}  \mGLsym{,}  \mGLnt{t_{{\mathrm{2}}}}  \mGLsym{)}  \mGLsym{:}  \mGLnt{X}  \boxtimes  \mGLnt{Y}}\\
      \inferrule* [flushleft,right=$\mGLdruleGSTXXTenLName{}$] {
        (  \delta_{{\mathrm{1}}}  \mGLsym{,}  \mGLnt{r}  \mGLsym{,}  \mGLnt{r}  \mGLsym{,}  \delta_{{\mathrm{4}}}  )   \odot   ( \Delta_{{\mathrm{1}}}  \mGLsym{,}  \mGLmv{x}  \mGLsym{:}  \mGLnt{X}  \mGLsym{,}  \mGLmv{y}  \mGLsym{:}  \mGLnt{Y}  \mGLsym{,}  \Delta_{{\mathrm{4}}} )   \vdash_{\mathsf{GS} }  \mGLnt{t_{{\mathrm{3}}}}  \mGLsym{:}  \mGLnt{Z}
      }{(  \delta_{{\mathrm{1}}}  \mGLsym{,}  \mGLnt{r}  \mGLsym{,}  \delta_{{\mathrm{4}}}  )   \odot   ( \Delta_{{\mathrm{1}}}  \mGLsym{,}  \mGLmv{z}  \mGLsym{:}  \mGLnt{X}  \boxtimes  \mGLnt{Y}  \mGLsym{,}  \Delta_{{\mathrm{4}}} )   \vdash_{\mathsf{GS} }   \mathsf{let} \,( \mGLmv{x} , \mGLmv{y} ) =  \mGLmv{z} \, \mathsf{in} \, \mGLnt{t_{{\mathrm{3}}}}   \mGLsym{:}  \mGLnt{Z}}
    }{(  \delta_{{\mathrm{1}}}  \mGLsym{,}   \mGLnt{r}   *    (  \delta_{{\mathrm{2}}}  \mGLsym{,}  \delta_{{\mathrm{3}}}  )    \mGLsym{,}  \delta_{{\mathrm{4}}}  )   \odot   ( \Delta_{{\mathrm{1}}}  \mGLsym{,}  \Delta_{{\mathrm{2}}}  \mGLsym{,}  \Delta_{{\mathrm{3}}}  \mGLsym{,}  \Delta_{{\mathrm{4}}} )   \vdash_{\mathsf{GS} }   \mathsf{let} \,( \mGLmv{x} , \mGLmv{y} ) =  \mGLsym{(}  \mGLnt{t_{{\mathrm{1}}}}  \mGLsym{,}  \mGLnt{t_{{\mathrm{2}}}}  \mGLsym{)} \, \mathsf{in} \, \mGLnt{t_{{\mathrm{3}}}}   \mGLsym{:}  \mGLnt{Z}}
    \end{gather*}
    reduces to:
    \begin{gather*}
    \inferrule* [flushleft,right=$\mGLdruleGSTXXCutName{}$] {
      (  \delta_{{\mathrm{2}}}  )   \odot   ( \Delta_{{\mathrm{2}}} )   \vdash_{\mathsf{GS} }  \mGLnt{t_{{\mathrm{1}}}}  \mGLsym{:}  \mGLnt{X}\\
      \inferrule* [flushleft,right=$\mGLdruleGSTXXCutName{}$] {
        (  \delta_{{\mathrm{3}}}  )   \odot   ( \Delta_{{\mathrm{3}}} )   \vdash_{\mathsf{GS} }  \mGLnt{t_{{\mathrm{2}}}}  \mGLsym{:}  \mGLnt{Y}\\
        (  \delta_{{\mathrm{1}}}  \mGLsym{,}  \mGLnt{r}  \mGLsym{,}  \mGLnt{r}  \mGLsym{,}  \delta_{{\mathrm{4}}}  )   \odot   ( \Delta_{{\mathrm{1}}}  \mGLsym{,}  \mGLmv{x}  \mGLsym{:}  \mGLnt{X}  \mGLsym{,}  \mGLmv{y}  \mGLsym{:}  \mGLnt{Y}  \mGLsym{,}  \Delta_{{\mathrm{4}}} )   \vdash_{\mathsf{GS} }  \mGLnt{t_{{\mathrm{3}}}}  \mGLsym{:}  \mGLnt{Z}
      }{(  \delta_{{\mathrm{1}}}  \mGLsym{,}  \mGLnt{r}  \mGLsym{,}   \mGLnt{r}   *   \delta_{{\mathrm{3}}}   \mGLsym{,}  \delta_{{\mathrm{4}}}  )   \odot   ( \Delta_{{\mathrm{1}}}  \mGLsym{,}  \mGLmv{x}  \mGLsym{:}  \mGLnt{X}  \mGLsym{,}  \Delta_{{\mathrm{3}}}  \mGLsym{,}  \Delta_{{\mathrm{4}}} )   \vdash_{\mathsf{GS} }  \mGLsym{[}  \mGLnt{t_{{\mathrm{2}}}}  \mGLsym{/}  \mGLmv{x}  \mGLsym{]}  \mGLnt{t_{{\mathrm{3}}}}  \mGLsym{:}  \mGLnt{Z}}
    }{(  \delta_{{\mathrm{1}}}  \mGLsym{,}   \mGLnt{r}   *   \delta_{{\mathrm{2}}}   \mGLsym{,}   \mGLnt{r}   *   \delta_{{\mathrm{3}}}   \mGLsym{,}  \delta_{{\mathrm{4}}}  )   \odot   ( \Delta_{{\mathrm{1}}}  \mGLsym{,}  \Delta_{{\mathrm{2}}}  \mGLsym{,}  \Delta_{{\mathrm{3}}}  \mGLsym{,}  \Delta_{{\mathrm{4}}} )   \vdash_{\mathsf{GS} }  \mGLsym{[}  \mGLnt{t_{{\mathrm{1}}}}  \mGLsym{/}  \mGLmv{x}  \mGLsym{]}  \mGLsym{[}  \mGLnt{t_{{\mathrm{2}}}}  \mGLsym{/}  \mGLmv{x}  \mGLsym{]}  \mGLnt{t_{{\mathrm{3}}}}  \mGLsym{:}  \mGLnt{Z}}
    \end{gather*}
    This case follows from strictness of $n$, associativity, and functoriality:
    \begin{align*}
    (lhs) & (\interpGS{\Pi_3} \circ (id \boxtimes n_{\boxtimes,r,X,Y} \boxtimes id))
         \circ (id \boxtimes (r \odot (\interpGS{\Pi_1} \boxtimes \interpGS{\Pi_2}))
                   \boxtimes id) \\
    (rhs) & (\interp{\Pi_3} \circ (\id_{(  \delta_{{\mathrm{1}}}  \mGLsym{,}  \mGLnt{r}  )    \odot  \interp{  \mGLnt{X}  \mGLsym{,}  \Delta_{{\mathrm{1}}}  }} \boxtimes (\mGLnt{r} \odot \interp{\Pi_2}) \boxtimes \id_{\delta_{{\mathrm{4}}}   \odot  \interp{  \Delta_{{\mathrm{4}}}  }})) \\
    & \qquad \quad \circ (\id_{(  \delta_{{\mathrm{1}}}  \mGLsym{,}  \mGLnt{r}  *  \delta_{{\mathrm{2}}}  )    \odot  \interp{  \Delta_{{\mathrm{1}}}  \mGLsym{,}  \Delta_{{\mathrm{2}}}  }} \boxtimes \mGLnt{r} \odot \interp{\Pi_1} \boxtimes \id_{(  \mGLnt{r}  *  \delta_{{\mathrm{3}}}  \mGLsym{,}  \delta_{{\mathrm{4}}}  )    \odot  \interp{  \Delta_{{\mathrm{3}}}  \mGLsym{,}  \Delta_{{\mathrm{4}}}  }})
    \end{align*}
  \item The derivation:
    \begin{gather*}
    \inferrule* [flushleft,right=$\mGLdruleGSTXXCutName{}$] {
      \inferrule* [flushleft,right=$\mGLdruleGSTXXUnitRName{}$] {
        \,
      }{\emptyset  \odot  \emptyset  \vdash_{\mathsf{GS} }  \mathsf{j}  \mGLsym{:}  \mathsf{J}}\\
      \inferrule* [flushleft,right=$\mGLdruleGSTXXTenLName{}$] {
        (  \delta_{{\mathrm{1}}}  \mGLsym{,}  \delta_{{\mathrm{2}}}  )   \odot   ( \Delta_{{\mathrm{1}}}  \mGLsym{,}  \Delta_{{\mathrm{2}}} )   \vdash_{\mathsf{GS} }  \mGLnt{t}  \mGLsym{:}  \mGLnt{Z}
      }{(  \delta_{{\mathrm{1}}}  \mGLsym{,}  1  \mGLsym{,}  \delta_{{\mathrm{4}}}  )   \odot   ( \Delta_{{\mathrm{1}}}  \mGLsym{,}  \mGLmv{z}  \mGLsym{:}  \mathsf{J}  \mGLsym{,}  \Delta_{{\mathrm{4}}} )   \vdash_{\mathsf{GS} }  \mathsf{let} \, \mathsf{j} \, \mGLsym{=}  \mGLmv{z} \, \mathsf{in} \, \mGLnt{t}  \mGLsym{:}  \mGLnt{Z}}
    }{(  \delta_{{\mathrm{1}}}  \mGLsym{,}  \delta_{{\mathrm{4}}}  )   \odot   ( \Delta_{{\mathrm{1}}}  \mGLsym{,}  \Delta_{{\mathrm{4}}} )   \vdash_{\mathsf{GS} }  \mathsf{let} \, \mathsf{j} \, \mGLsym{=}  \mathsf{j} \, \mathsf{in} \, \mGLnt{t}  \mGLsym{:}  \mGLnt{Z}}
    \end{gather*}
    reduces to:
    \begin{gather*}
    (  \delta_{{\mathrm{1}}}  \mGLsym{,}  \delta_{{\mathrm{2}}}  )   \odot   ( \Delta_{{\mathrm{1}}}  \mGLsym{,}  \Delta_{{\mathrm{2}}} )   \vdash_{\mathsf{GS} }  \mGLnt{t}  \mGLsym{:}  \mGLnt{Z}
    \end{gather*}
    This case follows from the fact that $\mathsf{J}$ is the unit of tensor
    product in $(\cat{C}, \mathsf{J}, \boxtimes)$.

  \item The derivation:
    \begin{gather*}
    \inferrule* [flushleft,right=$\mGLdruleGSTXXCutName{}$] {
      \delta_{{\mathrm{2}}}  \odot  \Delta_{{\mathrm{2}}}  \vdash_{\mathsf{GS} }  \mGLnt{t_{{\mathrm{1}}}}  \mGLsym{:}  \mGLnt{X}\\
      \inferrule* [flushleft,right=$\mGLdruleGSTXXWeakName{}$] {
        (  \delta_{{\mathrm{1}}}  \mGLsym{,}  \delta_{{\mathrm{3}}}  )   \odot   ( \Delta_{{\mathrm{1}}}  \mGLsym{,}  \Delta_{{\mathrm{3}}} )   \vdash_{\mathsf{GS} }  \mGLnt{t_{{\mathrm{2}}}}  \mGLsym{:}  \mGLnt{Z}
      }{(  \delta_{{\mathrm{1}}}  \mGLsym{,}  \mathsf{0}  \mGLsym{,}  \delta_{{\mathrm{3}}}  )   \odot   ( \Delta_{{\mathrm{1}}}  \mGLsym{,}  \mGLmv{x}  \mGLsym{:}  \mGLnt{X}  \mGLsym{,}  \Delta_{{\mathrm{3}}} )   \vdash_{\mathsf{GS} }  \mGLnt{t_{{\mathrm{2}}}}  \mGLsym{:}  \mGLnt{Z}}
    }{(  \delta_{{\mathrm{1}}}  \mGLsym{,}   \mathsf{0}   *   \delta_{{\mathrm{2}}}   \mGLsym{,}  \delta_{{\mathrm{3}}}  )   \odot   ( \Delta_{{\mathrm{1}}}  \mGLsym{,}  \Delta_{{\mathrm{2}}}  \mGLsym{,}  \Delta_{{\mathrm{3}}} )   \vdash_{\mathsf{GS} }  \mGLsym{[}  \mGLnt{t_{{\mathrm{1}}}}  \mGLsym{/}  \mGLmv{x}  \mGLsym{]}  \mGLnt{t_{{\mathrm{2}}}}  \mGLsym{:}  \mGLnt{Z}}
    \end{gather*}
    reduces to:
    \begin{gather*}
      \inferrule* [flushleft,right=$\mGLdruleGSTXXWeakName{}$]
      {(  \delta_{{\mathrm{1}}}  \mGLsym{,}  \delta_{{\mathrm{3}}}  )   \odot   ( \Delta_{{\mathrm{1}}}  \mGLsym{,}  \Delta_{{\mathrm{2}}}  \mGLsym{,}  \Delta_{{\mathrm{3}}} )   \vdash_{\mathsf{GS} }  \mGLnt{t_{{\mathrm{2}}}}  \mGLsym{:}  \mGLnt{Z}}
      {(  \delta_{{\mathrm{1}}}  \mGLsym{,}   \mathsf{0}   *   \delta_{{\mathrm{2}}}   \mGLsym{,}  \delta_{{\mathrm{3}}}  )   \odot   ( \Delta_{{\mathrm{1}}}  \mGLsym{,}  \Delta_{{\mathrm{2}}}  \mGLsym{,}  \Delta_{{\mathrm{3}}} )   \vdash_{\mathsf{GS} }  \mGLnt{t_{{\mathrm{2}}}}  \mGLsym{:}  \mGLnt{Z}}
    \end{gather*}
    with morphisms:
    \begin{align*}
    (lhs) & \;\; \interpGS{\Pi_2} \circ (\lambda \boxtimes id) \circ (id_{\delta_1 \odot \Delta_{{\mathrm{1}}}} \boxtimes \mathsf{weak}_{\interpGS{X}} \boxtimes id_{\delta_3 \odot \Delta_{{\mathrm{3}}}}) \\
    & \qquad \circ (id_{\delta_1 \odot \Delta_{{\mathrm{1}}}} \boxtimes (0 \odot \interpGS{\Pi_1}) \boxtimes id_{\delta_3 \odot \Delta_{{\mathrm{3}}}}) \\
    (rhs) & \;\; (\interpGS{\Pi_2} \circ (\lambda \boxtimes id) \circ (id_{\delta_1 \odot \Delta_{{\mathrm{1}}}} \boxtimes \mathsf{weak}_{\interpGS{\mGLnt{D_{{\mathrm{2}}}}}} \boxtimes id_{\delta_3 \odot \Delta_{{\mathrm{3}}}})
    \end{align*}
    This case holds by the core properties of weakening
    $0 \odot X \mto^{\mathsf{weak}} J$ (Fig.~\ref{fig:equations-strict-action}).

  \item The derivation:
    \begin{gather*}
    \inferrule* [flushleft,right=$\mGLdruleGSTXXCutName{}$] {
      \delta_{{\mathrm{2}}}  \odot  \Delta_{{\mathrm{2}}}  \vdash_{\mathsf{GS} }  \mGLnt{t_{{\mathrm{1}}}}  \mGLsym{:}  \mGLnt{X}\\
      \inferrule* [flushleft,right=$\mGLdruleGSTXXContName{}$] {
        (  \delta_{{\mathrm{1}}}  \mGLsym{,}  \mGLnt{r_{{\mathrm{1}}}}  \mGLsym{,}  \mGLnt{r_{{\mathrm{2}}}}  \mGLsym{,}  \delta_{{\mathrm{3}}}  )   \odot   ( \Delta_{{\mathrm{1}}}  \mGLsym{,}  \mGLmv{x}  \mGLsym{:}  \mGLnt{X}  \mGLsym{,}  \mGLmv{y}  \mGLsym{:}  \mGLnt{X}  \mGLsym{,}  \Delta_{{\mathrm{3}}} )   \vdash_{\mathsf{GS} }  \mGLnt{t_{{\mathrm{2}}}}  \mGLsym{:}  \mGLnt{Z}
      }{(  \delta_{{\mathrm{1}}}  \mGLsym{,}  \mGLnt{r_{{\mathrm{1}}}}  +  \mGLnt{r_{{\mathrm{2}}}}  \mGLsym{,}  \delta_{{\mathrm{3}}}  )   \odot   ( \Delta_{{\mathrm{1}}}  \mGLsym{,}  \mGLmv{x}  \mGLsym{:}  \mGLnt{X}  \mGLsym{,}  \Delta_{{\mathrm{3}}} )   \vdash_{\mathsf{GS} }  \mGLsym{[}  \mGLmv{x}  \mGLsym{/}  \mGLmv{y}  \mGLsym{]}  \mGLnt{t_{{\mathrm{2}}}}  \mGLsym{:}  \mGLnt{Z}}
    }{(  \delta_{{\mathrm{1}}}  \mGLsym{,}   \mGLnt{r_{{\mathrm{1}}}}  +  \mGLnt{r_{{\mathrm{2}}}}   *   \delta_{{\mathrm{2}}}   \mGLsym{,}  \delta_{{\mathrm{3}}}  )   \odot   ( \Delta_{{\mathrm{1}}}  \mGLsym{,}  \Delta_{{\mathrm{2}}}  \mGLsym{,}  \Delta_{{\mathrm{3}}} )   \vdash_{\mathsf{GS} }  \mGLsym{[}  \mGLnt{t_{{\mathrm{1}}}}  \mGLsym{/}  \mGLmv{x}  \mGLsym{]}  \mGLsym{[}  \mGLmv{x}  \mGLsym{/}  \mGLmv{y}  \mGLsym{]}  \mGLnt{t_{{\mathrm{2}}}}  \mGLsym{:}  \mGLnt{Z}}
    \end{gather*}
    reduces to:
    \begin{gather*}
    \inferrule* [flushleft,right=$\mGLdruleGSTXXContName{}$] {
      \inferrule* [flushleft,right=$\mGLdruleGSTXXCutName{}$] {
        \inferrule* [flushleft,right=$\mGLdruleGSTXXCutName{}$] {
          \delta_{{\mathrm{2}}}  \odot  \Delta_{{\mathrm{2}}}  \vdash_{\mathsf{GS} }  \mGLnt{t_{{\mathrm{1}}}}  \mGLsym{:}  \mGLnt{X} \\ 
          (  \delta_{{\mathrm{1}}}  \mGLsym{,}  \mGLnt{r_{{\mathrm{1}}}}  \mGLsym{,}  \mGLnt{r_{{\mathrm{2}}}}  \mGLsym{,}  \delta_{{\mathrm{3}}}  )   \odot   ( \Delta_{{\mathrm{1}}}  \mGLsym{,}  \mGLmv{x}  \mGLsym{:}  \mGLnt{X}  \mGLsym{,}  \mGLmv{y}  \mGLsym{:}  \mGLnt{X}  \mGLsym{,}  \Delta_{{\mathrm{3}}} )   \vdash_{\mathsf{GS} }  \mGLnt{t_{{\mathrm{2}}}}  \mGLsym{:}  \mGLnt{Z}
        }{(  \delta_{{\mathrm{1}}}  \mGLsym{,}  \mGLnt{r_{{\mathrm{1}}}}  \mGLsym{,}   \mGLnt{r_{{\mathrm{2}}}}   *   \delta_{{\mathrm{2}}}   \mGLsym{,}  \delta_{{\mathrm{3}}}  )   \odot   ( \Delta_{{\mathrm{1}}}  \mGLsym{,}  \mGLmv{x}  \mGLsym{:}  \mGLnt{X}  \mGLsym{,}  \Delta_{{\mathrm{2}}}  \mGLsym{,}  \Delta_{{\mathrm{3}}} )   \vdash_{\mathsf{GS} }  \mGLsym{[}  \mGLnt{t_{{\mathrm{1}}}}  \mGLsym{/}  \mGLmv{y}  \mGLsym{]}  \mGLnt{t_{{\mathrm{2}}}}  \mGLsym{:}  \mGLnt{Z}}
      }{(  \delta_{{\mathrm{1}}}  \mGLsym{,}   \mGLnt{r_{{\mathrm{1}}}}   *   \delta_{{\mathrm{2}}}   \mGLsym{,}   \mGLnt{r_{{\mathrm{2}}}}   *   \delta_{{\mathrm{2}}}   \mGLsym{,}  \delta_{{\mathrm{3}}}  )   \odot   ( \Delta_{{\mathrm{1}}}  \mGLsym{,}  \Delta_{{\mathrm{2}}}  \mGLsym{,}  \Delta_{{\mathrm{2}}}  \mGLsym{,}  \Delta_{{\mathrm{3}}} )   \vdash_{\mathsf{GS} }  \mGLsym{[}  \mGLnt{t_{{\mathrm{1}}}}  \mGLsym{/}  \mGLmv{x}  \mGLsym{]}  \mGLsym{[}  \mGLnt{t_{{\mathrm{1}}}}  \mGLsym{/}  \mGLmv{y}  \mGLsym{]}  \mGLnt{t_{{\mathrm{2}}}}  \mGLsym{:}  \mGLnt{Z}}
    }{(  \delta_{{\mathrm{1}}}  \mGLsym{,}   \mGLnt{r_{{\mathrm{1}}}}   *   \delta_{{\mathrm{2}}}   +   \mGLnt{r_{{\mathrm{2}}}}   *   \delta_{{\mathrm{2}}}   \mGLsym{,}  \delta_{{\mathrm{3}}}  )   \odot   ( \Delta_{{\mathrm{1}}}  \mGLsym{,}  \Delta_{{\mathrm{2}}}  \mGLsym{,}  \Delta_{{\mathrm{3}}} )   \vdash_{\mathsf{GS} }  \mGLsym{[}  \Delta_{{\mathrm{2}}}  \mGLsym{/}  \Delta_{{\mathrm{2}}}  \mGLsym{]}  \mGLsym{[}  \mGLnt{t_{{\mathrm{1}}}}  \mGLsym{/}  \mGLmv{x}  \mGLsym{]}  \mGLsym{[}  \mGLnt{t_{{\mathrm{1}}}}  \mGLsym{/}  \mGLmv{y}  \mGLsym{]}  \mGLnt{t_{{\mathrm{2}}}}  \mGLsym{:}  \mGLnt{Z}}
    \end{gather*}
    Similar to the previous case, this case holds by the commutation of
    $\delta$ and
    $(\mGLnt{r_{{\mathrm{1}}}}  +  \mGLnt{r_{{\mathrm{2}}}}) \odot X \mto^{\mathsf{contr}} \mGLnt{r_{{\mathrm{1}}}} \odot X \boxtimes \mGLnt{r_{{\mathrm{2}}}} \odot X$
    (Fig.~\ref{fig:equations-strict-action}).

  \item The derivation:
    \begin{gather*}
    \inferrule* [flushleft,right=$\mGLdruleGSTXXCutName{}$] {
      \inferrule* [flushleft,right=$\mGLdruleGSTXXExName{}$] {
        (  \delta_{{\mathrm{2}}}  \mGLsym{,}  \mGLnt{r_{{\mathrm{1}}}}  \mGLsym{,}  \mGLnt{r_{{\mathrm{2}}}}  \mGLsym{,}  \delta_{{\mathrm{3}}}  )   \odot   ( \Delta_{{\mathrm{2}}}  \mGLsym{,}  \mGLmv{x}  \mGLsym{:}  \mGLnt{X}  \mGLsym{,}  \mGLmv{y}  \mGLsym{:}  \mGLnt{Y}  \mGLsym{,}  \Delta_{{\mathrm{3}}} )   \vdash_{\mathsf{GS} }  \mGLnt{t_{{\mathrm{1}}}}  \mGLsym{:}  \mGLnt{Z}
      }{(  \delta_{{\mathrm{2}}}  \mGLsym{,}  \mGLnt{r_{{\mathrm{2}}}}  \mGLsym{,}  \mGLnt{r_{{\mathrm{1}}}}  \mGLsym{,}  \delta_{{\mathrm{3}}}  )   \odot   ( \Delta_{{\mathrm{2}}}  \mGLsym{,}  \mGLmv{y}  \mGLsym{:}  \mGLnt{Y}  \mGLsym{,}  \mGLmv{x}  \mGLsym{:}  \mGLnt{X}  \mGLsym{,}  \Delta_{{\mathrm{3}}} )   \vdash_{\mathsf{GS} }  \mGLnt{t_{{\mathrm{1}}}}  \mGLsym{:}  \mGLnt{Z}}\\
      (  \delta_{{\mathrm{1}}}  \mGLsym{,}  \mGLnt{s}  \mGLsym{,}  \delta_{{\mathrm{4}}}  )   \odot   ( \Delta_{{\mathrm{1}}}  \mGLsym{,}  \mGLmv{z}  \mGLsym{:}  \mGLnt{Z}  \mGLsym{,}  \Delta_{{\mathrm{4}}} )   \vdash_{\mathsf{GS} }  \mGLnt{t_{{\mathrm{2}}}}  \mGLsym{:}  \mGLnt{W}
    }{(  \delta_{{\mathrm{1}}}  \mGLsym{,}   \mGLnt{s}   *    (  \delta_{{\mathrm{2}}}  \mGLsym{,}  \mGLnt{r_{{\mathrm{2}}}}  \mGLsym{,}  \mGLnt{r_{{\mathrm{1}}}}  \mGLsym{,}  \delta_{{\mathrm{3}}}  )    \mGLsym{,}  \delta_{{\mathrm{4}}}  )   \odot   ( \Delta_{{\mathrm{1}}}  \mGLsym{,}  \Delta_{{\mathrm{2}}}  \mGLsym{,}  \mGLmv{y}  \mGLsym{:}  \mGLnt{Y}  \mGLsym{,}  \mGLmv{x}  \mGLsym{:}  \mGLnt{X}  \mGLsym{,}  \Delta_{{\mathrm{3}}}  \mGLsym{,}  \Delta_{{\mathrm{4}}} )   \vdash_{\mathsf{GS} }  \mGLsym{[}  \mGLnt{t_{{\mathrm{1}}}}  \mGLsym{/}  \mGLmv{x}  \mGLsym{]}  \mGLnt{t_{{\mathrm{2}}}}  \mGLsym{:}  \mGLnt{W}}
    \end{gather*}
    reduces to:
    \begin{gather*}
    \inferrule* [flushleft,right=$\mGLdruleGSTXXExName{}$] {
      \inferrule* [flushleft,right=$\mGLdruleGSTXXCutName{}$] {
        (  \delta_{{\mathrm{2}}}  \mGLsym{,}  \mGLnt{r_{{\mathrm{1}}}}  \mGLsym{,}  \mGLnt{r_{{\mathrm{2}}}}  \mGLsym{,}  \delta_{{\mathrm{3}}}  )   \odot   ( \Delta_{{\mathrm{2}}}  \mGLsym{,}  \mGLmv{x}  \mGLsym{:}  \mGLnt{X}  \mGLsym{,}  \mGLmv{y}  \mGLsym{:}  \mGLnt{Y}  \mGLsym{,}  \Delta_{{\mathrm{3}}} )   \vdash_{\mathsf{GS} }  \mGLnt{t_{{\mathrm{1}}}}  \mGLsym{:}  \mGLnt{Z}\\
        (  \delta_{{\mathrm{1}}}  \mGLsym{,}  \mGLnt{s}  \mGLsym{,}  \delta_{{\mathrm{4}}}  )   \odot   ( \Delta_{{\mathrm{1}}}  \mGLsym{,}  \mGLmv{z}  \mGLsym{:}  \mGLnt{Z}  \mGLsym{,}  \Delta_{{\mathrm{4}}} )   \vdash_{\mathsf{GS} }  \mGLnt{t_{{\mathrm{2}}}}  \mGLsym{:}  \mGLnt{W}
      }{(  \delta_{{\mathrm{1}}}  \mGLsym{,}   \mGLnt{s}   *    (  \delta_{{\mathrm{2}}}  \mGLsym{,}  \mGLnt{r_{{\mathrm{1}}}}  \mGLsym{,}  \mGLnt{r_{{\mathrm{2}}}}  \mGLsym{,}  \delta_{{\mathrm{3}}}  )    \mGLsym{,}  \delta_{{\mathrm{4}}}  )   \odot   ( \Delta_{{\mathrm{1}}}  \mGLsym{,}  \Delta_{{\mathrm{2}}}  \mGLsym{,}  \mGLmv{x}  \mGLsym{:}  \mGLnt{X}  \mGLsym{,}  \mGLmv{y}  \mGLsym{:}  \mGLnt{Y}  \mGLsym{,}  \Delta_{{\mathrm{3}}}  \mGLsym{,}  \Delta_{{\mathrm{4}}} )   \vdash_{\mathsf{GS} }  \mGLsym{[}  \mGLnt{t_{{\mathrm{1}}}}  \mGLsym{/}  \mGLmv{z}  \mGLsym{]}  \mGLnt{t_{{\mathrm{2}}}}  \mGLsym{:}  \mGLnt{W}}\\
    }{(  \delta_{{\mathrm{1}}}  \mGLsym{,}   \mGLnt{s}   *    (  \delta_{{\mathrm{2}}}  \mGLsym{,}  \mGLnt{r_{{\mathrm{2}}}}  \mGLsym{,}  \mGLnt{r_{{\mathrm{1}}}}  \mGLsym{,}  \delta_{{\mathrm{3}}}  )    \mGLsym{,}  \delta_{{\mathrm{4}}}  )   \odot   ( \Delta_{{\mathrm{1}}}  \mGLsym{,}  \Delta_{{\mathrm{2}}}  \mGLsym{,}  \mGLmv{y}  \mGLsym{:}  \mGLnt{Y}  \mGLsym{,}  \mGLmv{x}  \mGLsym{:}  \mGLnt{X}  \mGLsym{,}  \Delta_{{\mathrm{3}}}  \mGLsym{,}  \Delta_{{\mathrm{4}}} )   \vdash_{\mathsf{GS} }  \mGLsym{[}  \mGLnt{t_{{\mathrm{1}}}}  \mGLsym{/}  \mGLmv{z}  \mGLsym{]}  \mGLnt{t_{{\mathrm{2}}}}  \mGLsym{:}  \mGLnt{W}}
    \end{gather*}
    This case follows from naturality of symmetry, and the fact that
    exchange is interpreted as the symmetry in $\cat{C}$ which is symmetric monoidal.

  \item The derivation:
    \begin{gather*}
    \inferrule* [flushleft,right=$\mGLdruleGSTXXCutName{}$] {
      \inferrule* [flushleft,right=$\mGLdruleGSTXXTenLName{}$] {
        (  \delta_{{\mathrm{2}}}  \mGLsym{,}  \mGLnt{r}  \mGLsym{,}  \mGLnt{r}  \mGLsym{,}  \delta_{{\mathrm{3}}}  )   \odot   ( \Delta_{{\mathrm{2}}}  \mGLsym{,}  \mGLmv{x}  \mGLsym{:}  \mGLnt{X}  \mGLsym{,}  \mGLmv{y}  \mGLsym{:}  \mGLnt{Y}  \mGLsym{,}  \Delta_{{\mathrm{3}}} )   \vdash_{\mathsf{GS} }  \mGLnt{t_{{\mathrm{1}}}}  \mGLsym{:}  \mGLnt{Z}
      }{(  \delta_{{\mathrm{2}}}  \mGLsym{,}  \mGLnt{r}  \mGLsym{,}  \delta_{{\mathrm{3}}}  )   \odot   ( \Delta_{{\mathrm{2}}}  \mGLsym{,}  \mGLmv{w}  \mGLsym{:}  \mGLnt{X}  \boxtimes  \mGLnt{Y}  \mGLsym{,}  \Delta_{{\mathrm{3}}} )   \vdash_{\mathsf{GS} }   \mathsf{let} \,( \mGLmv{x} , \mGLmv{y} ) =  \mGLmv{w} \, \mathsf{in} \, \mGLnt{t_{{\mathrm{1}}}}   \mGLsym{:}  \mGLnt{Z}}\\
      (  \delta_{{\mathrm{1}}}  \mGLsym{,}  \mGLnt{s}  \mGLsym{,}  \delta_{{\mathrm{4}}}  )   \odot   ( \Delta_{{\mathrm{1}}}  \mGLsym{,}  \mGLmv{z}  \mGLsym{:}  \mGLnt{Z}  \mGLsym{,}  \Delta_{{\mathrm{4}}} )   \vdash_{\mathsf{GS} }  \mGLnt{t_{{\mathrm{2}}}}  \mGLsym{:}  \mGLnt{W}
    }{(  \delta_{{\mathrm{1}}}  \mGLsym{,}   \mGLnt{s}   *    (  \delta_{{\mathrm{2}}}  \mGLsym{,}  \mGLnt{r}  \mGLsym{,}  \delta_{{\mathrm{3}}}  )    \mGLsym{,}  \delta_{{\mathrm{4}}}  )   \odot   ( \Delta_{{\mathrm{1}}}  \mGLsym{,}  \Delta_{{\mathrm{2}}}  \mGLsym{,}  \mGLmv{w}  \mGLsym{:}  \mGLnt{X}  \boxtimes  \mGLnt{Y}  \mGLsym{,}  \Delta_{{\mathrm{3}}}  \mGLsym{,}  \Delta_{{\mathrm{4}}} )   \vdash_{\mathsf{GS} }  \mGLsym{[}   \mathsf{let} \,( \mGLmv{x} , \mGLmv{y} ) =  \mGLmv{w} \, \mathsf{in} \, \mGLnt{t_{{\mathrm{1}}}}   \mGLsym{/}  \mGLmv{x}  \mGLsym{]}  \mGLnt{t_{{\mathrm{2}}}}  \mGLsym{:}  \mGLnt{W}}
    \end{gather*}
    reduces to:
    \begin{gather*}
    \inferrule* [flushleft,right=$\mGLdruleGSTXXExName{}$] {
      \inferrule* [flushleft,right=$\mGLdruleGSTXXCutName{}$] {
        (  \delta_{{\mathrm{2}}}  \mGLsym{,}  \mGLnt{r}  \mGLsym{,}  \mGLnt{r}  \mGLsym{,}  \delta_{{\mathrm{3}}}  )   \odot   ( \Delta_{{\mathrm{2}}}  \mGLsym{,}  \mGLmv{x}  \mGLsym{:}  \mGLnt{X}  \mGLsym{,}  \mGLmv{y}  \mGLsym{:}  \mGLnt{Y}  \mGLsym{,}  \Delta_{{\mathrm{3}}} )   \vdash_{\mathsf{GS} }  \mGLnt{t_{{\mathrm{1}}}}  \mGLsym{:}  \mGLnt{Z}\\
        (  \delta_{{\mathrm{1}}}  \mGLsym{,}  \mGLnt{s}  \mGLsym{,}  \delta_{{\mathrm{4}}}  )   \odot   ( \Delta_{{\mathrm{1}}}  \mGLsym{,}  \mGLmv{z}  \mGLsym{:}  \mGLnt{Z}  \mGLsym{,}  \Delta_{{\mathrm{4}}} )   \vdash_{\mathsf{GS} }  \mGLnt{t_{{\mathrm{2}}}}  \mGLsym{:}  \mGLnt{W}
      }{(  \delta_{{\mathrm{1}}}  \mGLsym{,}   \mGLnt{s}   *    (  \delta_{{\mathrm{2}}}  \mGLsym{,}  \mGLnt{r}  \mGLsym{,}  \mGLnt{r}  \mGLsym{,}  \delta_{{\mathrm{3}}}  )    \mGLsym{,}  \delta_{{\mathrm{4}}}  )   \odot   ( \Delta_{{\mathrm{1}}}  \mGLsym{,}  \Delta_{{\mathrm{2}}}  \mGLsym{,}  \mGLmv{x}  \mGLsym{:}  \mGLnt{X}  \mGLsym{,}  \mGLmv{y}  \mGLsym{:}  \mGLnt{Y}  \mGLsym{,}  \Delta_{{\mathrm{3}}}  \mGLsym{,}  \Delta_{{\mathrm{4}}} )   \vdash_{\mathsf{GS} }  \mGLsym{[}  \mGLnt{t_{{\mathrm{1}}}}  \mGLsym{/}  \mGLmv{z}  \mGLsym{]}  \mGLnt{t_{{\mathrm{2}}}}  \mGLsym{:}  \mGLnt{W}}\\
    }{(  \delta_{{\mathrm{1}}}  \mGLsym{,}   \mGLnt{s}   *    (  \delta_{{\mathrm{2}}}  \mGLsym{,}  \mGLnt{r}  \mGLsym{,}  \delta_{{\mathrm{3}}}  )    \mGLsym{,}  \delta_{{\mathrm{4}}}  )   \odot   ( \Delta_{{\mathrm{1}}}  \mGLsym{,}  \Delta_{{\mathrm{2}}}  \mGLsym{,}  \mGLmv{w}  \mGLsym{:}  \mGLnt{X}  \boxtimes  \mGLnt{Y}  \mGLsym{,}  \Delta_{{\mathrm{3}}}  \mGLsym{,}  \Delta_{{\mathrm{4}}} )   \vdash_{\mathsf{GS} }   \mathsf{let} \,( \mGLmv{x} , \mGLmv{y} ) =  \mGLmv{w} \, \mathsf{in} \, \mGLsym{[}  \mGLnt{t_{{\mathrm{1}}}}  \mGLsym{/}  \mGLmv{z}  \mGLsym{]}  \mGLnt{t_{{\mathrm{2}}}}   \mGLsym{:}  \mGLnt{W}}
    \end{gather*}
    Both have identical interpretations up to symmetry as per Definition~\ref{def:full-mgl-intepretation} and by the strictness of $n_{\boxtimes}$.

  \item The derivation:
    \begin{gather*}
    \inferrule* [flushleft,right=$\mGLdruleGSTXXCutName{}$] {
      \inferrule* [flushleft,right=$\mGLdruleGSTXXTenLName{}$] {
        (  \delta_{{\mathrm{2}}}  \mGLsym{,}  \delta_{{\mathrm{3}}}  )   \odot   ( \Delta_{{\mathrm{2}}}  \mGLsym{,}  \Delta_{{\mathrm{3}}} )   \vdash_{\mathsf{GS} }  \mGLnt{t_{{\mathrm{1}}}}  \mGLsym{:}  \mGLnt{Z}
      }{(  \delta_{{\mathrm{2}}}  \mGLsym{,}  1  \mGLsym{,}  \delta_{{\mathrm{3}}}  )   \odot   ( \Delta_{{\mathrm{2}}}  \mGLsym{,}  \mGLmv{w}  \mGLsym{:}  \mathsf{J}  \mGLsym{,}  \Delta_{{\mathrm{3}}} )   \vdash_{\mathsf{GS} }  \mathsf{let} \, \mathsf{j} \, \mGLsym{=}  \mGLmv{w} \, \mathsf{in} \, \mGLnt{t_{{\mathrm{1}}}}  \mGLsym{:}  \mGLnt{Z}}\\
      (  \delta_{{\mathrm{1}}}  \mGLsym{,}  \mGLnt{s}  \mGLsym{,}  \delta_{{\mathrm{4}}}  )   \odot   ( \Delta_{{\mathrm{1}}}  \mGLsym{,}  \mGLmv{z}  \mGLsym{:}  \mGLnt{Z}  \mGLsym{,}  \Delta_{{\mathrm{4}}} )   \vdash_{\mathsf{GS} }  \mGLnt{t_{{\mathrm{2}}}}  \mGLsym{:}  \mGLnt{W}
    }{(  \delta_{{\mathrm{1}}}  \mGLsym{,}   \mGLnt{s}   *    (  \delta_{{\mathrm{2}}}  \mGLsym{,}  1  \mGLsym{,}  \delta_{{\mathrm{3}}}  )    \mGLsym{,}  \delta_{{\mathrm{4}}}  )   \odot   ( \Delta_{{\mathrm{1}}}  \mGLsym{,}  \Delta_{{\mathrm{2}}}  \mGLsym{,}  \mGLmv{w}  \mGLsym{:}  \mathsf{J}  \mGLsym{,}  \Delta_{{\mathrm{3}}}  \mGLsym{,}  \Delta_{{\mathrm{4}}} )   \vdash_{\mathsf{GS} }  \mGLsym{[}  \mathsf{let} \, \mathsf{j} \, \mGLsym{=}  \mGLmv{w} \, \mathsf{in} \, \mGLnt{t_{{\mathrm{1}}}}  \mGLsym{/}  \mGLmv{x}  \mGLsym{]}  \mGLnt{t_{{\mathrm{2}}}}  \mGLsym{:}  \mGLnt{W}}
    \end{gather*}
    reduces to:
    \begin{gather*}
    \inferrule* [flushleft,right=$\mGLdruleGSTXXExName{}$] {
      \inferrule* [flushleft,right=$\mGLdruleGSTXXCutName{}$] {
        (  \delta_{{\mathrm{2}}}  \mGLsym{,}  \delta_{{\mathrm{3}}}  )   \odot   ( \Delta_{{\mathrm{2}}}  \mGLsym{,}  \Delta_{{\mathrm{3}}} )   \vdash_{\mathsf{GS} }  \mGLnt{t_{{\mathrm{1}}}}  \mGLsym{:}  \mGLnt{Z}\\
        (  \delta_{{\mathrm{1}}}  \mGLsym{,}  \mGLnt{s}  \mGLsym{,}  \delta_{{\mathrm{4}}}  )   \odot   ( \Delta_{{\mathrm{1}}}  \mGLsym{,}  \mGLmv{z}  \mGLsym{:}  \mGLnt{Z}  \mGLsym{,}  \Delta_{{\mathrm{4}}} )   \vdash_{\mathsf{GS} }  \mGLnt{t_{{\mathrm{2}}}}  \mGLsym{:}  \mGLnt{W}
      }{(  \delta_{{\mathrm{1}}}  \mGLsym{,}   \mGLnt{s}   *    (  \delta_{{\mathrm{2}}}  \mGLsym{,}  \delta_{{\mathrm{3}}}  )    \mGLsym{,}  \delta_{{\mathrm{4}}}  )   \odot   ( \Delta_{{\mathrm{1}}}  \mGLsym{,}  \Delta_{{\mathrm{2}}}  \mGLsym{,}  \Delta_{{\mathrm{3}}}  \mGLsym{,}  \Delta_{{\mathrm{4}}} )   \vdash_{\mathsf{GS} }  \mGLsym{[}  \mGLnt{t_{{\mathrm{1}}}}  \mGLsym{/}  \mGLmv{z}  \mGLsym{]}  \mGLnt{t_{{\mathrm{2}}}}  \mGLsym{:}  \mGLnt{W}}\\
    }{(  \delta_{{\mathrm{1}}}  \mGLsym{,}   \mGLnt{s}   *    (  \delta_{{\mathrm{2}}}  \mGLsym{,}  1  \mGLsym{,}  \delta_{{\mathrm{3}}}  )    \mGLsym{,}  \delta_{{\mathrm{4}}}  )   \odot   ( \Delta_{{\mathrm{1}}}  \mGLsym{,}  \Delta_{{\mathrm{2}}}  \mGLsym{,}  \mGLmv{w}  \mGLsym{:}  \mathsf{J}  \mGLsym{,}  \Delta_{{\mathrm{3}}}  \mGLsym{,}  \Delta_{{\mathrm{4}}} )   \vdash_{\mathsf{GS} }  \mathsf{let} \, \mathsf{j} \, \mGLsym{=}  \mGLmv{z} \, \mathsf{in} \, \mGLsym{[}  \mGLnt{t_{{\mathrm{1}}}}  \mGLsym{/}  \mGLmv{z}  \mGLsym{]}  \mGLnt{t_{{\mathrm{2}}}}  \mGLsym{:}  \mGLnt{W}}
    \end{gather*}
    Both have identical interpretations up to symmetry as per Definition~\ref{def:full-mgl-intepretation}.

      \item The derivation:
    \begin{gather*}
    \inferrule* [flushleft,right=$\mGLdruleGSTXXCutName{}$] {
      \delta_{{\mathrm{2}}}  \odot  \Delta_{{\mathrm{2}}}  \vdash_{\mathsf{GS} }  \mGLnt{t_{{\mathrm{1}}}}  \mGLsym{:}  \mGLnt{W}\\
      \inferrule* [flushleft,right=$\mGLdruleGSTXXTenRName{}$] {
        (  \delta_{{\mathrm{1}}}  \mGLsym{,}  \mGLnt{s}  \mGLsym{,}  \delta_{{\mathrm{3}}}  )   \odot   ( \Delta_{{\mathrm{1}}}  \mGLsym{,}  \mGLmv{w}  \mGLsym{:}  \mGLnt{W}  \mGLsym{,}  \Delta_{{\mathrm{3}}} )   \vdash_{\mathsf{GS} }  \mGLnt{t_{{\mathrm{2}}}}  \mGLsym{:}  \mGLnt{X}\\
        \delta_{{\mathrm{4}}}  \odot  \Delta_{{\mathrm{4}}}  \vdash_{\mathsf{GS} }  \mGLnt{t_{{\mathrm{3}}}}  \mGLsym{:}  \mGLnt{Y}
      }{(  \delta_{{\mathrm{1}}}  \mGLsym{,}  \mGLnt{s}  \mGLsym{,}  \delta_{{\mathrm{3}}}  \mGLsym{,}  \delta_{{\mathrm{4}}}  )   \odot   ( \Delta_{{\mathrm{1}}}  \mGLsym{,}  \mGLmv{w}  \mGLsym{:}  \mGLnt{W}  \mGLsym{,}  \Delta_{{\mathrm{3}}}  \mGLsym{,}  \Delta_{{\mathrm{4}}} )   \vdash_{\mathsf{GS} }  \mGLsym{(}  \mGLnt{t_{{\mathrm{2}}}}  \mGLsym{,}  \mGLnt{t_{{\mathrm{3}}}}  \mGLsym{)}  \mGLsym{:}  \mGLnt{X}  \boxtimes  \mGLnt{Y}}\\
    }{(  \delta_{{\mathrm{1}}}  \mGLsym{,}   \mGLnt{s}   *   \delta_{{\mathrm{2}}}   \mGLsym{,}  \delta_{{\mathrm{3}}}  \mGLsym{,}  \delta_{{\mathrm{4}}}  )   \odot   ( \Delta_{{\mathrm{1}}}  \mGLsym{,}  \Delta_{{\mathrm{2}}}  \mGLsym{,}  \Delta_{{\mathrm{3}}}  \mGLsym{,}  \Delta_{{\mathrm{4}}} )   \vdash_{\mathsf{GS} }  \mGLsym{[}  \mGLnt{t_{{\mathrm{1}}}}  \mGLsym{/}  \mGLmv{w}  \mGLsym{]}  \mGLsym{(}  \mGLnt{t_{{\mathrm{2}}}}  \mGLsym{,}  \mGLnt{t_{{\mathrm{3}}}}  \mGLsym{)}  \mGLsym{:}  \mGLnt{X}  \boxtimes  \mGLnt{Y}}
    \end{gather*}
    reduces to:
    \begin{gather*}
    \inferrule* [flushleft,right=$\mGLdruleGSTXXTenRName{}$] {
      \inferrule* [flushleft,right=$\mGLdruleGSTXXCutName{}$] {
        \delta_{{\mathrm{2}}}  \odot  \Delta_{{\mathrm{2}}}  \vdash_{\mathsf{GS} }  \mGLnt{t_{{\mathrm{1}}}}  \mGLsym{:}  \mGLnt{W}\\
        (  \delta_{{\mathrm{1}}}  \mGLsym{,}  \mGLnt{s}  \mGLsym{,}  \delta_{{\mathrm{3}}}  )   \odot   ( \Delta_{{\mathrm{1}}}  \mGLsym{,}  \mGLmv{w}  \mGLsym{:}  \mGLnt{W}  \mGLsym{,}  \Delta_{{\mathrm{3}}} )   \vdash_{\mathsf{GS} }  \mGLnt{t_{{\mathrm{2}}}}  \mGLsym{:}  \mGLnt{X}
      }{(  \delta_{{\mathrm{1}}}  \mGLsym{,}   \mGLnt{s}   *   \delta_{{\mathrm{2}}}   \mGLsym{,}  \delta_{{\mathrm{3}}}  )   \odot   ( \Delta_{{\mathrm{1}}}  \mGLsym{,}  \Delta_{{\mathrm{2}}}  \mGLsym{,}  \Delta_{{\mathrm{3}}} )   \vdash_{\mathsf{GS} }  \mGLsym{[}  \mGLnt{t_{{\mathrm{1}}}}  \mGLsym{/}  \mGLmv{w}  \mGLsym{]}  \mGLnt{t_{{\mathrm{2}}}}  \mGLsym{:}  \mGLnt{X}}\\
      \delta_{{\mathrm{4}}}  \odot  \Delta_{{\mathrm{4}}}  \vdash_{\mathsf{GS} }  \mGLnt{t_{{\mathrm{3}}}}  \mGLsym{:}  \mGLnt{Y}
    }{(  \delta_{{\mathrm{1}}}  \mGLsym{,}   \mGLnt{s}   *   \delta_{{\mathrm{2}}}   \mGLsym{,}  \delta_{{\mathrm{3}}}  \mGLsym{,}  \delta_{{\mathrm{4}}}  )   \odot   ( \Delta_{{\mathrm{1}}}  \mGLsym{,}  \Delta_{{\mathrm{2}}}  \mGLsym{,}  \Delta_{{\mathrm{3}}}  \mGLsym{,}  \Delta_{{\mathrm{4}}} )   \vdash_{\mathsf{GS} }  \mGLsym{(}  \mGLsym{[}  \mGLnt{t_{{\mathrm{1}}}}  \mGLsym{/}  \mGLmv{w}  \mGLsym{]}  \mGLnt{t_{{\mathrm{2}}}}  \mGLsym{,}  \mGLnt{t_{{\mathrm{3}}}}  \mGLsym{)}  \mGLsym{:}  \mGLnt{X}  \boxtimes  \mGLnt{Y}}
    \end{gather*}
    This case holds by functorality of $\boxtimes$. The case where $\mGLmv{w}  \mGLsym{:}  \mGLnt{W}$ is free in $\mGLnt{t_{{\mathrm{3}}}}$ is similar.

    Finally, there are a few additional cases for the structural
    rules, but all of them are equivalent to naturality of their
    respective natural transformations, and hence, we omit them here.

  \item The final cases are the commuting conversions related to the
    rule:
    \begin{center}
      \begin{math}
        \mGLdruleGSTXXLinR{}
      \end{math}
    \end{center}
    However, are all similar to the previous cases which hold by
    straightforwardly applying the induction hypotheses.
  \end{enumerate}

  We now consider part two.  Here we only give the beta-reduction
  cases for the modal operators, because they are the most
  interesting.

  Consider the following reduction:
  \begin{center}
    \begin{gather*}
      \inferrule* [flushleft,right=,fraction={===}] {
        \inferrule* [flushleft,right={ $\mGLdruleMSTXXGCutName{}$},fraction={---}] {
          \inferrule* [flushleft,right={ $\mGLdruleGSTXXLinRName{}$}] {
            \inferrule* [flushleft,right=] {
              \Pi_2\\\\
              \vdots
            }{\delta_{{\mathrm{2}}}  \odot  \Delta_{{\mathrm{2}}}  \mGLsym{;}  \emptyset  \vdash_{\mathsf{MS} }  \mGLnt{l_{{\mathrm{1}}}}  \mGLsym{:}  \mGLnt{A}}
          }{\delta_{{\mathrm{2}}}  \odot  \Delta_{{\mathrm{2}}}  \vdash_{\mathsf{GS} }  \mathsf{Lin} \, \mGLnt{l_{{\mathrm{1}}}}  \mGLsym{:}  \mathsf{Lin} \, \mGLnt{A}}\\
          \inferrule* [flushleft,right={{ $\mGLdruleMSTXXLinLName{}$}}] {
            \inferrule* [flushleft,right=] {
              \Pi_1\\\\
              \vdots
            }{\delta_{{\mathrm{1}}}  \odot  \Delta_{{\mathrm{1}}}  \mGLsym{;}   ( \mGLmv{x}  \mGLsym{:}  \mGLnt{A}  \mGLsym{,}  \Gamma )   \vdash_{\mathsf{MS} }  \mGLnt{l_{{\mathrm{2}}}}  \mGLsym{:}  \mGLnt{B}}
          }{(  \delta_{{\mathrm{1}}}  \mGLsym{,}  1  )   \odot   ( \Delta_{{\mathrm{1}}}  \mGLsym{,}  \mGLmv{z}  \mGLsym{:}  \mathsf{Lin} \, \mGLnt{A} )   \mGLsym{;}  \Gamma  \vdash_{\mathsf{MS} }  \mGLsym{[}   \mathsf{Unlin} \, \mGLmv{z}   \mGLsym{/}  \mGLmv{x}  \mGLsym{]}  \mGLnt{l_{{\mathrm{2}}}}  \mGLsym{:}  \mGLnt{B}}\\
        }{(  \delta_{{\mathrm{1}}}  \mGLsym{,}   1   *   \delta_{{\mathrm{2}}}   )   \odot   ( \Delta_{{\mathrm{1}}}  \mGLsym{,}  \Delta_{{\mathrm{2}}} )   \mGLsym{;}  \Gamma  \vdash_{\mathsf{MS} }  \mGLsym{[}  \mathsf{Lin} \, \mGLnt{l_{{\mathrm{1}}}}  \mGLsym{/}  \mGLmv{z}  \mGLsym{]}  \mGLsym{[}   \mathsf{Unlin} \, \mGLmv{z}   \mGLsym{/}  \mGLmv{x}  \mGLsym{]}  \mGLnt{l_{{\mathrm{2}}}}  \mGLsym{:}  \mGLnt{B}}
      }{(  \delta_{{\mathrm{1}}}  \mGLsym{,}  \delta_{{\mathrm{2}}}  )   \odot   ( \Delta_{{\mathrm{1}}}  \mGLsym{,}  \Delta_{{\mathrm{2}}} )   \mGLsym{;}  \Gamma  \vdash_{\mathsf{MS} }  \mGLsym{[}   \mathsf{Unlin} \, \mGLsym{(}  \mathsf{Lin} \, \mGLnt{l_{{\mathrm{1}}}}  \mGLsym{)}   \mGLsym{/}  \mGLmv{x}  \mGLsym{]}  \mGLnt{l_{{\mathrm{2}}}}  \mGLsym{:}  \mGLnt{B}}
    \end{gather*}
  \end{center}
  This proof reduces to the following:
  \begin{center}
    \begin{gather*}
        \inferrule* [flushleft,right={ $\mGLdruleMSTXXCutName{}$},fraction={---}] {
            \inferrule* [flushleft,right=] {
              \Pi_2\\\\
              \vdots
            }{\delta_{{\mathrm{2}}}  \odot  \Delta_{{\mathrm{2}}}  \mGLsym{;}  \emptyset  \vdash_{\mathsf{MS} }  \mGLnt{l_{{\mathrm{1}}}}  \mGLsym{:}  \mGLnt{A}}
            \inferrule* [flushleft,right=] {
              \Pi_1\\\\
              \vdots
            }{\delta_{{\mathrm{1}}}  \odot  \Delta_{{\mathrm{1}}}  \mGLsym{;}   ( \mGLmv{x}  \mGLsym{:}  \mGLnt{A}  \mGLsym{,}  \Gamma )   \vdash_{\mathsf{MS} }  \mGLnt{l_{{\mathrm{2}}}}  \mGLsym{:}  \mGLnt{B}}
        }{(  \delta_{{\mathrm{1}}}  \mGLsym{,}  \delta_{{\mathrm{2}}}  )   \odot   ( \Delta_{{\mathrm{1}}}  \mGLsym{,}  \Delta_{{\mathrm{2}}} )   \mGLsym{;}  \Gamma  \vdash_{\mathsf{MS} }  \mGLsym{[}  \mGLnt{l_{{\mathrm{1}}}}  \mGLsym{/}  \mGLmv{x}  \mGLsym{]}  \mGLnt{l_{{\mathrm{2}}}}  \mGLsym{:}  \mGLnt{B}}
    \end{gather*}
  \end{center}
  The previous reduction corresponds to the following equation in the interpretation (Definition~\ref{def:full-mgl-intepretation}):
  \begin{gather*}
    \begin{align*}
    & \interpMS{\Pi_2} \circ (id \otimes \varepsilon \otimes id) \circ
      (id \otimes \Mny(id_1 \odot (\Lin\interpMS{\Pi_1} \circ \mathsf{m}_{\mathsf{Lin}} \circ \overline{\eta})) \otimes id_{\interp{\Gamma}}) \circ (id \otimes \overline{\mathsf{m}^\Mny} \otimes id) \\
\{1 \odot A \equiv A\}
\equiv \;\;&
    \interpMS{\Pi_2} \circ (id \otimes \varepsilon \otimes id) \circ
    (id \otimes \Mny(\Lin\interpMS{\Pi_1} \circ \mathsf{m}_{\mathsf{Lin}} \circ \overline{\eta}) \otimes id_{\interp{\Gamma}}) \circ (id \otimes \overline{\mathsf{m}^\Mny} \otimes id) \\
\{\textit{$\varepsilon$ naturality}\}
\equiv \;\; &
    \interpMS{\Pi_2} \circ (id \otimes \interpMS{\Pi_1} \otimes id) \circ
    (id \otimes (\varepsilon \circ \Mny(\mathsf{m}_{\mathsf{Lin}} \circ \overline{\eta}) \otimes) id_{\interp{\Gamma}}) \circ (id \otimes \overline{\mathsf{m}^\Mny} \otimes id) \\
\{\textit{strong monoidal $\Mny$}\}
\equiv \;\; &
    \interpMS{\Pi_2} \circ (id \otimes \interpMS{\Pi_1} \otimes id) \circ
    (id \otimes (\varepsilon \circ \Mny(\mathsf{m}_{\mathsf{Lin}} \circ \overline{\eta}) \circ \overline{m^\Mny} \circ \overline{n^\Mny}) \otimes) id_{\interp{\Gamma}}) \circ (id \otimes \overline{\mathsf{m}^\Mny} \otimes id) \\
\{\textit{$m^\Mny$ naturality}\}
\equiv \;\; &
    \interpMS{\Pi_2} \circ (id \otimes \interpMS{\Pi_1} \otimes id) \circ
    (id \otimes (\varepsilon \circ \Mny(\mathsf{m}_{\mathsf{Lin}}) \circ \overline{m^\Mny} \circ \overline{\Mny (\eta)} \circ \overline{n^\Mny}) \otimes) id_{\interp{\Gamma}}) \circ (id \otimes \overline{\mathsf{m}^\Mny} \otimes id) \\
\{\textit{monoidal adjunction}\}
\equiv \;\; &
    \interpMS{\Pi_2} \circ (id \otimes \interpMS{\Pi_1} \otimes id) \circ
    (id \otimes (\overline{\varepsilon_{\Mny}} \circ \overline{\Mny (\eta)} \circ \overline{n^\Mny}) \otimes) id_{\interp{\Gamma}}) \circ (id \otimes \overline{\mathsf{m}^\Mny} \otimes id) \\
\{\textit{adjunction}\}
\equiv \;\; &
    \interpMS{\Pi_2} \circ (id \otimes \interpMS{\Pi_1} \otimes id) \circ
    (id \otimes \overline{n^\Mny} \otimes id_{\interp{\Gamma}}) \circ (id \otimes \overline{\mathsf{m}^\Mny} \otimes id) \\
\{\textit{strong monoidal $\Mny$}\}
\equiv \;\; &
    \interpMS{\Pi_2} \circ (id \otimes \interpMS{\Pi_1} \otimes id)
    \end{align*}
  \end{gather*}

    Next we consider the second beta-reduction for the modal operators.  The following proof:
  \begin{center}
    \begin{gather*}
    \begin{align*}
      \inferrule* [flushleft,right={ $\mGLdruleMSTXXCutName{}$}] {
        \inferrule* [flushleft,right={ $\mGLdruleMSTXXGrdRName{}$}] {
          \inferrule* [flushleft,right=] {
            \Pi_2\\\\
          \vdots
        }{\delta_{{\mathrm{2}}}  \odot   ( \Delta_{{\mathrm{2}}} )   \vdash_{\mathsf{GS} }  \mGLnt{t}  \mGLsym{:}  \mGLnt{X}}
        }{\mGLnt{r}   *   \delta_{{\mathrm{2}}}   \odot  \Delta_{{\mathrm{2}}}  \mGLsym{;}  \emptyset  \vdash_{\mathsf{MS} }  \mathsf{Grd} \, \mGLnt{r} \, \mGLnt{t}  \mGLsym{:}   \mathsf{Grd} _{ \mGLnt{r} }\, \mGLnt{X}}\\
        \inferrule* [flushleft,right={$\mGLdruleMSTXXGrdLName{}$}] {
          \inferrule* [flushleft,right=] {
          \Pi_1\\\\
          \vdots
        }{(  \delta_{{\mathrm{1}}}  \mGLsym{,}  \mGLnt{r}  )   \odot   ( \Delta_{{\mathrm{1}}}  \mGLsym{,}  \mGLmv{x}  \mGLsym{:}  \mGLnt{X} )   \mGLsym{;}  \Gamma  \vdash_{\mathsf{MS} }  \mGLnt{l}  \mGLsym{:}  \mGLnt{B}}
        }{\delta_{{\mathrm{1}}}  \odot  \Delta_{{\mathrm{1}}}  \mGLsym{;}   ( \mGLmv{z}  \mGLsym{:}   \mathsf{Grd} _{ \mGLnt{r} }\, \mGLnt{X}   \mGLsym{,}  \Gamma )   \vdash_{\mathsf{MS} }  \mathsf{let} \, \mathsf{Grd} \, \mGLnt{r} \, \mGLmv{x}  \mGLsym{=}  \mGLmv{z} \, \mathsf{in} \, \mGLnt{l}  \mGLsym{:}  \mGLnt{B}}
      }{(  \delta_{{\mathrm{1}}}  \mGLsym{,}   \mGLnt{r}   *   \delta_{{\mathrm{2}}}   )   \odot   ( \Delta_{{\mathrm{1}}}  \mGLsym{,}  \Delta_{{\mathrm{2}}} )   \mGLsym{;}  \Gamma  \vdash_{\mathsf{MS} }  \mathsf{let} \, \mathsf{Grd} \, \mGLnt{r} \, \mGLmv{x}  \mGLsym{=}  \mathsf{Grd} \, \mGLnt{r} \, \mGLnt{t} \, \mathsf{in} \, \mGLnt{l}  \mGLsym{:}  \mGLnt{B}}
    \end{align*}
    \end{gather*}
  \end{center}
  reduces to the following proof:
  \begin{center}
    \begin{gather*}
    \begin{align*}
      \inferrule* [flushleft,right=$\mGLdruleMSTXXGCutName{}$] {
        \inferrule* [flushleft,right=] {
            \Pi_2\\\\
          \vdots
        }{\delta_{{\mathrm{2}}}  \odot  \Delta_{{\mathrm{2}}}  \vdash_{\mathsf{GS} }  \mGLnt{t}  \mGLsym{:}  \mGLnt{X}}\\
        \inferrule* [flushleft,right=] {
          \Pi_1\\\\
          \vdots
        }{(  \delta_{{\mathrm{1}}}  \mGLsym{,}  \mGLnt{r}  )   \odot   ( \Delta_{{\mathrm{1}}}  \mGLsym{,}  \mGLmv{x}  \mGLsym{:}  \mGLnt{X} )   \mGLsym{;}  \Gamma  \vdash_{\mathsf{MS} }  \mGLnt{l}  \mGLsym{:}  \mGLnt{B}}
      }{(  \delta_{{\mathrm{1}}}  \mGLsym{,}   \mGLnt{r}   *   \delta_{{\mathrm{2}}}   )   \odot   ( \Delta_{{\mathrm{1}}}  \mGLsym{,}  \Delta_{{\mathrm{2}}} )   \mGLsym{;}  \Gamma  \vdash_{\mathsf{MS} }  \mGLsym{[}  \mGLnt{t}  \mGLsym{/}  \mGLmv{x}  \mGLsym{]}  \mGLnt{l}  \mGLsym{:}  \mGLnt{B}}
    \end{align*}
    \end{gather*}
  \end{center}
  We then have the following interpretation of the lhs:
  \begin{align*}
    & \interp{\Pi_1} \circ (id \otimes (\Mny(r \odot \interp{\Pi_2}) \circ \Mny (\overline{m_{\boxtimes,r,X,Y}})
    \circ \overline{m^\Mny}) \otimes id_{\interp{\Gamma}})
  \end{align*}
  which is exactly the interpretation of the rhs, due to the strict action
  properties and the strictness of the exponential action.
\end{proof}

\subsubsection{Soundness of the (rest of the) equational theory}
\label{subsec:proof-soundness-eq-theory-rest}

By induction of the definition of the equational theory $\equiv$ (Definition~\ref{sec:full-eq-theory}).

\paragraph{$\mathsf{GS}$ system}

\begin{itemize}
\item (\textsc{contr-sym})
  \begin{align*}
    \tag{lhs}
       & \delta_{{\mathrm{1}}} \odot \Delta_{{\mathrm{1}}} \boxtimes \mGLsym{(}  \mGLnt{r}  +  \mGLnt{s}  \mGLsym{)} \odot \mGLnt{X} \boxtimes \delta_{{\mathrm{2}}} \odot \Delta_{{\mathrm{2}}}
        \xrightarrow{\interp{\Pi} \circ (id \boxtimes (\mathsf{contr}_{r,s,X}) \boxtimes id)}
                    \interp{ \mGLnt{Y} } \\
\equiv &
    \tag{rhs}
       \delta_{{\mathrm{1}}} \odot \Delta_{{\mathrm{1}}} \boxtimes \mGLsym{(}  \mGLnt{s}  +  \mGLnt{r}  \mGLsym{)} \odot \mGLnt{X} \boxtimes \delta_{{\mathrm{2}}} \odot \Delta_{{\mathrm{2}}}
        \xrightarrow{\interp{\Pi} \circ (id \boxtimes c_{X,X} \boxtimes id) \circ (id \boxtimes \mathsf{contr_{s,r,X}} \boxtimes id)}
        \interp{ \mGLnt{Y} } \\
  \end{align*}
  by the symmetry of the colax monoidal part of the exponential
  action (Definition~\ref{def:strict-action});

\item (\textsc{contr-unitL})
  \begin{align*}
    \tag{lhs}
   &  (\delta_{{\mathrm{1}}} \odot \Delta_{{\mathrm{1}}}) \boxtimes (\mGLnt{r} \odot \interp{X})
        \boxtimes (\delta_{{\mathrm{2}}} \odot \Delta_{{\mathrm{2}}}) \\
   & \xrightarrow{\interp{\Pi} \circ (\lambda \boxtimes id) \circ (id \boxtimes \func{weak}_{X} \boxtimes id) \circ (id \boxtimes \mathsf{contr}_{r,0,X} \boxtimes id)} \interp{ \mGLnt{Y} } \\
    \tag{rhs} \equiv & (\delta_{{\mathrm{1}}} \odot \Delta_{{\mathrm{1}}}) \boxtimes (\mGLnt{r} \odot \interp{X})
         \boxtimes (\delta_{{\mathrm{2}}} \odot \Delta_{{\mathrm{2}}})
         \xrightarrow{\interp{\Pi} }
         \interp{ \mGLnt{Y} }
  \end{align*}
  By the unitality of the colax monoidal action.

\item (\textsc{contr-unitR})
Similar to the above by unitality

\item (\textsc{sub-comm-conv})
 Let $f = \interp{\delta  \leq  \delta'}$, then:
 \begin{align*}
 \tag{lhs}
 & \mathcal{G}[ \delta' \odot \Delta ]
  \xrightarrow{\interp{\Phi}(\interp{\Pi}) \circ \mathcal{G}(f)}
   \interp{X'} \\
 \tag{rhs}
= & \mathcal{G}[ \delta' \odot \Delta ]
  \xrightarrow{\interp{\Phi}(\interp{\Pi \circ \mathcal{G}(f)})}
  \interp{X'}
 \end{align*}
 by naturality properties of the underlying transformations.

\item (\textsc{contr-assoc}) By the associativity of $\func{contr}$.

\item (\textsc{ex-ex}) By the involutive property of the commutativity
natural isomorphism $c$ for $\boxtimes$.

\item (\textsc{sub-refl}) Since $\mathcal{R}$ is a thin category
then any interpretation of $\delta_{{\mathrm{1}}}  \leq  \delta_{{\mathrm{1}}}$ is the identity morphisms,
and thus this equation is preserved
 in the semantics by the functor identity property of $\func{Grd}$
 in the first argument.

\item (\textsc{sub-trans}) by the functoriality of $\func{Grd}$ in the
first argument.

\item (\textsc{contr-mono}) follows by naturality of $\func{contr}$
with respect to the input functor $F\ r\ s\ X = (r + s) \odot X$
and the output functor $G\ r\ s\ X = (r \odot X) \boxtimes (s \odot X)$.

\item (\textsc{sub-unitl})
\begin{align*}
\tag{lhs}
& (  \delta_{{\mathrm{1}}}   \odot  \interp{  \Delta_{{\mathrm{1}}}  }  )   \boxtimes   ( \mGLnt{s}  \odot  \mathsf{J} )   \boxtimes   (  \delta_{{\mathrm{2}}}   \odot  \interp{  \Delta_{{\mathrm{2}}}  }  )
\xrightarrow{\interp{\Pi} \circ (\lambda \boxtimes id) \circ (id \boxtimes n_{\mathsf{J},r} \boxtimes id) \circ (id \boxtimes (\interp{\mGLnt{r}  \leq  \mGLnt{s}} \odot \mathsf{J}) \boxtimes id)}
\interp{X} \\
\tag{rhs}
\equiv \; &
(  \delta_{{\mathrm{1}}}   \odot  \interp{  \Delta_{{\mathrm{1}}}  }  )   \boxtimes   ( \mGLnt{s}  \odot  \mathsf{J} )   \boxtimes   (  \delta_{{\mathrm{2}}}   \odot  \interp{  \Delta_{{\mathrm{2}}}  }  )
\xrightarrow{\interp{\Pi} \circ (\lambda \boxtimes id) \circ (id \boxtimes n_{\mathsf{J},s} \boxtimes id)}
\interp{X}
\end{align*}
where $n_{\mathsf{J},r} \circ (\interp{\mGLnt{r}  \leq  \mGLnt{s}} \odot \mathsf{J}) \equiv n_{\mathsf{J},s}$ follows by the terminality of $\mathsf{J}$ / or alternatively via naturality of $n_{\mathsf{J}}$ in
the first parameter.

\item (\textsc{sub-tensorL}) follows by naturality of $n_{\boxtimes}$ in the first parameter (grade parameter).

\item (\textsc{mult-mono})

\hspace{-2em}\begin{align*}
& \delta_{{\mathrm{1}}}   \odot  \interp{  \Delta_{{\mathrm{1}}}  }    \boxtimes     (  \mGLnt{r'}  *  \delta'_{{\mathrm{2}}}  )    \odot  \interp{  \Delta_{{\mathrm{2}}}  }    \boxtimes    \delta_{{\mathrm{3}}}   \odot  \interp{  \Delta_{{\mathrm{3}}}  } \\
& \xrightarrow{id \boxtimes (r' \odot (\interp{\delta_{{\mathrm{2}}}  \leq  \delta'_{{\mathrm{2}}}} \odot \interp{ \Pi_{{\mathrm{1}}} })) \boxtimes id}
\delta_{{\mathrm{1}}}   \odot  \interp{  \Delta_{{\mathrm{1}}}  }    \boxtimes   \mGLnt{r'}  \odot  \mGLnt{X}   \boxtimes    \delta_{{\mathrm{3}}}   \odot  \interp{  \Delta_{{\mathrm{3}}}  } \\
& \xrightarrow{id \boxtimes (\interp{\mGLnt{r}  \leq  \mGLnt{r'}} \odot X) \boxtimes id}
\delta_{{\mathrm{1}}}   \odot  \interp{  \Delta_{{\mathrm{1}}}  }    \boxtimes   \mGLnt{r}  \odot  \mGLnt{X}   \boxtimes    \delta_{{\mathrm{3}}}   \odot  \interp{  \Delta_{{\mathrm{3}}}  } \\
& \xrightarrow{\interp{\Pi_{{\mathrm{2}}}} } Y \\
\\
\equiv &
\delta_{{\mathrm{1}}}   \odot  \interp{  \Delta_{{\mathrm{1}}}  }    \boxtimes     (  \mGLnt{r'}  *  \delta'_{{\mathrm{2}}}  )    \odot  \interp{  \Delta_{{\mathrm{2}}}  }    \boxtimes    \delta_{{\mathrm{3}}}   \odot  \interp{  \Delta_{{\mathrm{3}}}  } \\
& \xrightarrow{id \boxtimes (\interp{\mGLnt{r}  *  \delta_{{\mathrm{2}}}  \leq  \mGLnt{r'}  *  \delta'_{{\mathrm{2}}}} \odot \Delta_{{\mathrm{2}}})
             \boxtimes id}
\delta_{{\mathrm{1}}}   \odot  \interp{  \Delta_{{\mathrm{1}}}  }    \boxtimes     (  \mGLnt{r}  *  \delta_{{\mathrm{2}}}  )    \odot  \interp{  \Delta_{{\mathrm{2}}}  }    \boxtimes    \delta_{{\mathrm{3}}}   \odot  \interp{  \Delta_{{\mathrm{3}}}  } \\
& \xrightarrow{id \boxtimes (r \odot \interp{\Pi_{{\mathrm{1}}}}) \boxtimes id}
\delta_{{\mathrm{1}}}   \odot  \interp{  \Delta_{{\mathrm{1}}}  }    \boxtimes   \mGLnt{r}  \odot  \mGLnt{X}   \boxtimes    \delta_{{\mathrm{3}}}   \odot  \interp{  \Delta_{{\mathrm{3}}}  } \\
& \xrightarrow{\interp{\Pi_{{\mathrm{2}}} }} Y
\end{align*}
by functoriality of $\odot$ in both arguments, and its definition, such that:
\begin{align*}
& (\interp{\mGLnt{r}  \leq  \mGLnt{r'}} \odot id_X) \circ (id_{r'} \odot (\interp{\delta_{{\mathrm{2}}}  \leq  \delta'_{{\mathrm{2}}}} \odot \interp{ \Pi_{{\mathrm{1}}} })) \\
\equiv \; & \interp{\mGLnt{r}  \leq  \mGLnt{r'}} \odot (\interp{\delta_{{\mathrm{2}}}  \leq  \delta'_{{\mathrm{2}}}} \odot \interp{ \Pi_{{\mathrm{1}}} }))
\\
\equiv \; & \interp{\mGLnt{r}  *  \delta_{{\mathrm{2}}}  \leq  \mGLnt{r'}  *  \delta'_{{\mathrm{2}}}} \odot \interp{\Pi_{{\mathrm{1}}}}
\\
\equiv \; & (id_r \odot \interp{\Pi_{{\mathrm{1}}}}) \circ (\interp{\mGLnt{r}  *  \delta_{{\mathrm{2}}}  \leq  \mGLnt{r'}  *  \delta'_{{\mathrm{2}}}} \odot id_{\Delta_{{\mathrm{2}}}})
\end{align*}

\end{itemize}

\paragraph{$\MS{}$ system}

\begin{itemize}
\item (\textsc{sub-comm-conv})

 Let $f = \interp{\delta  \leq  \delta'}$, then:
 \begin{align*}
 \tag{lhs}
 & \mathcal{G}[ \delta' \odot \Delta ] \otimes \interp{ \Gamma }
  \xrightarrow{\interp{\Phi}(\interp{\Pi}) \circ \mathcal{G}(f)}
   \interp{X'} \\
 \tag{rhs}
= & \mathcal{G}[ \delta' \odot \Delta ] \otimes \interp{ \Gamma}
  \xrightarrow{\interp{\Phi}(\interp{\Pi \circ \mathcal{G}(f)})}
  \interp{X'}
 \end{align*}
 by naturality properties of the underlying transformations.

\item (\textsc{contr-unitL}) The lifting of (\textsc{contr-unitL}) in $\GS{}$ system.

\item (\textsc{contr-unitR}) The lifting of (\textsc{contr-unitL}) in $\GS{}$ system.

\item (\textsc{contr-sym}) The lifting of (\textsc{contr-unitL}) in $\GS{}$ system.

\item (\textsc{contr-assoc}) The lifting of (\textsc{contr-unitL}) in $\GS{}$ system.

\item (\textsc{gex-gex}) Isomorphism property of symmetry in $\mathcal{M}$.

\item (\textsc{ex-ex}) Isomorphism property of symmetry in $\mathcal{C}$.

\item (\textsc{sub-refl}) The lifting of (\textsc{sub-refl}) in $\GS{}$ system.

\item (\textsc{sub-trans}) The lifting of (\textsc{sub-trans}) in $\GS{}$ system.

\item (\textsc{contr-mono})  The lifting of (\textsc{contr-mono}) in $\GS{}$ system.

\item (\textsc{sub-unitL}) The lifting of (\textsc{sub-unitL}) in $\GS{}$ system.

\item (\textsc{sub-tensorL}) The lifting of (\textsc{sub-unitL}) in $\GS{}$ system.

\item (\textsc{mult-mono}) The lifting of (\textsc{mult-mono}) in $\GS{}$ system.

\end{itemize}


\subsection{Proof of completeness of \mGLL{} Logic}
\label{subsec:proof_of_completeness_of_mgll_logic}
We define two categories, one for each fragment of $\mGL{}$,
along with an action on the category corresponding to the graded fragment,
and an adjunction between them.
Then define an equivelence relation on arrows based on the equational theory.
Using the equivelence we define a generic model such that the interpretation of
two derivations are equal exactly when those derivations are equivelent.
Completeness follows straightforwardly from this construction.

\begin{definition}[Syntactic categories ] Given a preordered semring over $\mathcal{R}$, construct the symmetric monoidal closed category freely from the syntax of graded fragment of $\mGL{}$:
    $$\text{Obj}(\mathbb{G}) ::=  \langle r, \mathsf{J}  \rangle \mid \langle r, X \rangle  \boxtimes \langle s, Y \rangle \mid \langle r, \mathsf{Lin} \, \mGLnt{A} \rangle $$
    $$ \text{Hom}_{\mathbb{G}}( \langle r, X \rangle  , \langle 1, Y \rangle )= \{ t \mid \mGLnt{r}  \odot  \mGLmv{x}  \mGLsym{:}  \mGLnt{X}  \vdash_{\mathsf{GS} }  \mGLnt{t}  \mGLsym{:}  \mGLnt{Y} \} $$

    Similarly, construct the symmetric monoidal closed category freely from the syntax of mixed fragment of $\mGL{}$:
    $$\text{Obj}(\mathbb{L}) ::= \mathsf{I} \mid \mGLnt{A}  \otimes  \mGLnt{B} \mid \mGLnt{A}  \multimap  \mGLnt{B} \mid \mathsf{Grd} _{ \mGLnt{r} }\, \mGLnt{X} $$
    $$\text{Hom}_{\mathbb{L}}(A, B)= \{ l \mid \emptyset  \odot  \emptyset  \mGLsym{;}  \emptyset  \vdash_{\mathsf{MS} }  \mGLnt{l}  \mGLsym{:}  \mGLnt{A}  \multimap  \mGLnt{B} \} $$
    With $ r \in \mathcal{R}$.
\end{definition}
\begin{definition}
Let $ \odot : \mathcal{R} \times \mathbb{G} \mto \mathbb{G} $ be
\[    \begin{array}{rcl}
    r \odot \langle s, \mGLnt{X}  \rangle & = & \langle r * s, \mGLnt{X}  \rangle \\
    r \odot (\langle s_1, X \rangle \boxtimes \langle s_2 , Y \rangle ) & = & r \odot (\langle s_1, X \rangle) \boxtimes r \odot(\langle s_2, Y \rangle )\\

\end{array}
\]

such that for any $ r \in \mathcal{R}$ and $ \langle s , X \rangle \in \text{Obj}(\mathbb{G}) $ there are natural transformations
\[
    \begin{array}{rll}
      0 \odot \langle s , X \rangle & \mto^{\mathsf{weak}_{\langle s , X \rangle}} & \langle 0 , J \rangle \\
      (r_1 + r_2) \odot \langle s , X \rangle & \mto^{\mathsf{contr}_{r_1,r_2,X}} & (r_1 \odot \langle s , X \rangle) \boxtimes (r_2 \odot \langle s , X \rangle)
    \end{array}
    \]

\end{definition}

\begin{definition}[ $ \mathbb{G} : G \dashv L :  \mathbb{L} $ ]
\[    \begin{array}{rcl}
        L : \mathbb{G} \rightarrow \mathbb{L}  & & \\
        L(\mathsf{I}) & = & \langle 1, \mathsf{Lin} \, \mathsf{I} \rangle \\
        L(\mGLnt{A}  \otimes  \mGLnt{B} ) & = & L(A) \boxtimes L(B)  \\
        L(\mathsf{Grd} _{ \mGLnt{r} }\, \mGLnt{X} ) & = & \langle r, X \rangle
    \end{array}
\]

\[    \begin{array}{rcl}
    G : \mathbb{L} \rightarrow \mathbb{G}  & & \\
    G(\langle r, \mathsf{J}  \rangle) & = & \mathsf{Grd} _{ \mGLnt{r} }\, \mathsf{J}\\
    G(\langle r, X \rangle \boxtimes \langle s , Y \rangle ) & = & G(\langle r, X \rangle) \otimes G(\langle s, Y \rangle )\\
    G(\langle r, \mathsf{Lin} \, \mGLnt{A} \rangle) & = & \mathsf{Grd} _{ \mGLnt{r} }\, \mGLsym{(}  \mathsf{Lin} \, \mGLnt{A}  \mGLsym{)}
\end{array}
\]

\[
    \begin{array}{lll}
      \begin{array}{lll}
          & \eta : \langle r, X \rangle \mto \func{L}(\func{G}(r,X))\\
          &  \eta = \func{id}_\mathbb{G}
      \end{array}\\\\
      \begin{array}{lll}
          & \epsilon : \func{Grd} \langle 1,\func{Lin}(A) \rangle \mto A\\
          & \epsilon(\langle \mathsf{Grd} _{ 1 }\, \mGLsym{(}  \mathsf{Lin} \, \mGLnt{A}  \mGLsym{)} \rangle) = A
      \end{array}
    \end{array}
\]
\end{definition}

\begin{lemma}
    $(\mathbb{G} : G \dashv L :  \mathbb{L}, \odot ) $
    is an $\mGL{}$ model.
\end{lemma}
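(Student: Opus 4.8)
The plan is to verify, one at a time, the four ingredients demanded by Definition~\ref{def:mGLL-adjoint-model}: that $\mathbb{G}$ and $\mathbb{L}$ are symmetric monoidal categories with $\mathbb{L}$ closed, that $\odot$ is an exponential action in the sense of Definition~\ref{def:strict-action}, and that $G \dashv L$ is a symmetric monoidal adjunction. The organising principle throughout is that each categorical law we must check is \emph{already} an equation of the equational theory $\equiv$ (Section~\ref{sec:full-eq-theory}): since the categories are constructed freely from the syntax with Hom-sets quotiented by $\equiv$, the real content is matching each law to the derivation-equivalence that witnesses it, not proving anything genuinely new. In fact the soundness argument (Theorem~\ref{theorem:soundness_of_mgll_logic}) has already shown every such equivalence is respected, and here we read those verifications in the opposite direction.

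First I would confirm $\mathbb{G}$ and $\mathbb{L}$ are categories. Composition is cut, which is well defined on $\equiv$-classes by Lemma~\ref{lemma:cut_reduction_for_mgl} together with the interderivability of cut and substitution (Theorem~\ref{lemma:mst-implies-mt-gen}); identities are the axiom rules $\mGLdruleGSTXXidName{}$ and $\mGLdruleMSTXXidName{}$; and the unit and associativity laws hold modulo $\equiv$ by exactly the identity-cut reductions and the associativity-of-cut commuting conversions. The symmetric monoidal structure on each is built from the tensor and unit rules: $\boxtimes$ and $\otimes$ are bifunctorial by the functoriality equations in $\equiv$, the associator, unitors and symmetry are read off the corresponding structural derivations, and pentagon/triangle/hexagon coherence holds up to $\equiv$. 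Closedness of $\mathbb{L}$ is the natural isomorphism $\Hom{\mathbb{L}}{\mGLnt{A}  \otimes  \mGLnt{B}}{\mGLnt{C}} \cong \Hom{\mathbb{L}}{\mGLnt{A}}{\mGLnt{B}  \multimap  \mGLnt{C}}$ witnessed by $\multimap_R$ (currying) and $\multimap_L$ with evaluation, whose mutual-inverse and naturality conditions are the $\beta\eta$-equalities in $\equiv$.

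Second, for the exponential action I would check the three clauses of Definition~\ref{def:strict-action}. The strict-action equalities $1  \odot  \mGLnt{X} = \mGLnt{X}$ and $\mGLsym{(}  \mGLnt{r}  *  \mGLnt{s}  \mGLsym{)}  \odot  \mGLnt{X} = \mGLnt{r}  \odot  \mGLsym{(}  \mGLnt{s}  \odot  \mGLnt{X}  \mGLsym{)}$ hold on the nose, straight from the definition of $\odot$ on objects plus the semiring laws $1  *  \mGLnt{s} = \mGLnt{s}$ and associativity of $*$; this is precisely where strictness comes for free in the syntactic model. The lax maps $m_{\mathsf{J}}, m_{\boxtimes}$ and the colax maps $\mathsf{weak}, \mathsf{contr}$ arise from the graded unit/tensor rules and from $\mGLdruleGSTXXWeakName{}$/$\mGLdruleGSTXXContName{}$ acting on the graded context, and the coherence diagrams of Figure~\ref{fig:equations-strict-action} reduce to equalities of grades following from the semiring axioms (distributivity of $*$ over $+$, unit and associativity of $+$), lifted through $\equiv$ exactly as the equations \textsc{contr-assoc}, \textsc{contr-sym}, and \textsc{mult-mono} were handled in the soundness proof.

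Finally I would show $G \dashv L$ is a symmetric monoidal adjunction: naturality of $\eta = \id$ and of the counit $\epsilon$ (extracted from the $\mathsf{Lin}_L$/$\mathsf{Grd}_L$ derelictions, i.e.\ the decomposition counit of Lemma~\ref{lemma:graded_modalities_in_mGL}) is immediate, and the two triangle identities hold up to $\equiv$ by precisely the ``triangle identity'' cut-reduction cases noted in the Remark following Lemma~\ref{lemma:subformula}, where the left and right rules for the modal operators line up and cancel through consecutive principal-versus-principal reductions; the (co)lax monoidality of $G$ and $L$ and their compatibility with unit and multiplication match the remaining modal coherence equations in $\equiv$. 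The main obstacle I expect is exactly this last bundle together with Figure~\ref{fig:equations-strict-action}: these are the only equations reconciling the strict action, the monoidal structure, and the adjunction simultaneously, whereas everything else is a mechanical transcription of a single proof rule into a single categorical datum. Once these are in place, $(\mathbb{G} : G \dashv L : \mathbb{L}, \odot)$ is an $\mGL{}$ model, and completeness follows at once since the interpretation into this syntactic model is the identity on $\equiv$-classes of derivations.
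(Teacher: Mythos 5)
Your proposal is correct and follows the same skeleton as the paper's proof: unfold the definition of an $\mGL{}$ model and check that each required component is present. The difference is one of degree and of where the verification lives. The paper's proof is a single sentence---it asserts that $\odot$ is an exponential action and that $G \dashv L$ is a symmetric monoidal adjunction, and concludes immediately from the definition of an $\mGL{}$ model; everything you verify explicitly (category laws via cut and substitution, monoidal coherence, closedness of $\mathbb{L}$ via currying, the strict-action equations from the semiring laws, the triangle identities via the principal-versus-principal cut reductions) is left implicit in the word ``freely'' in the construction of the syntactic categories. Making this explicit is a genuine improvement in rigour. One caveat is worth flagging: your organising principle appeals throughout to Hom-sets quotiented by $\equiv$, but the lemma as stated concerns the unquotiented categories $\mathbb{G}$ and $\mathbb{L}$, whose Hom-sets are literally sets of terms; the $\equiv$-quotient is introduced only afterwards, in the generic model $(\mathbb{G}_\equiv, \mathbb{L}_\equiv)$, which gets its own (identically worded) lemma and one-line proof. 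For raw terms the monoidal coherence laws and triangle identities you invoke do not hold as equalities of morphisms, so strictly speaking your argument establishes the later lemma about $(\mathbb{G}_\equiv, \mathbb{L}_\equiv)$; for this one you must either read the paper's ``construct freely'' as already imposing those laws (which is evidently the intended reading), or observe that the statement only makes sense under that reading. This is a defect of the paper's presentation rather than of your reasoning, but you should be aware that your proof and the statement as literally given are not about the same Hom-sets.
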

\begin{proof}
  $ \odot $ is an exponential action and $ \mathbb{G} : G \dashv L :  \mathbb{L} $ is a symmetric monoidal adjunction, so this follows from the definition of an $\mGL{}$: model.
\end{proof}
 \begin{lemma}
    $\mathsf{Gr}_\odot(\mathcal{R}, \mathbb{G}) = \mathbb{G}$
\end{lemma}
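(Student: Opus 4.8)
The plan is to prove the equality by unfolding the definition of the graded-category construction $\mathsf{Gr}_\odot$ and observing that, applied to the syntactic data $(\mathcal{R}, \mathbb{G}, \odot)$ built above, it reproduces $\mathbb{G}$ definitionally rather than merely up to equivalence. Recall that $\mathbb{G}$ was set up so that its objects are already grade/formula pairs $\langle r, X\rangle$ (closed under $\boxtimes$ and the $\mathsf{Lin}$ former), and that the action touches only the grade component, $r \odot \langle s, X\rangle = \langle r * s, X\rangle$. Since $\mathsf{Gr}_\odot(\mathcal{R}, \mathbb{G})$ (in the sense of the Fujii--Katsumata--Melli\`es resolution, Lemma~\ref{lemma:adjoint-decomposition}) assembles exactly such graded objects from $\mathbb{G}$ and the action, the first and easiest step is to exhibit an identity between their object classes, matching the grade index of $\mathsf{Gr}_\odot$ with the leading grade $r$ in $\langle r, X\rangle$.

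Second I would check that the hom-sets agree. A morphism $\langle r, X\rangle \to \langle s, Y\rangle$ in $\mathsf{Gr}_\odot$ is, by definition, a suitable morphism of $\mathbb{G}$ out of the acted-on object; using the strict-action equalities $\varepsilon \colon 1 \odot X = X$ and $\delta \colon (r * s) \odot X = r \odot (s \odot X)$ of Definition~\ref{def:strict-action}, this reduces to the grade-$1$-target case, which is precisely $\mathsf{Hom}_{\mathbb{G}}(\langle r, X\rangle, \langle 1, Y\rangle) = \{ t \mid r \odot x : X \vdash_{\GS} t : Y \}$. Hence the two hom-sets are literally the same sets of $\GS$ derivations modulo $\equiv$, and identities and composition coincide because both are inherited from $\mathbb{G}$ via the same cut/substitution operation. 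I would then confirm that the symmetric monoidal structure, the action itself, and its $\mathsf{weak}/\mathsf{contr}$ colax data transport identically, which holds because $\mathsf{Gr}_\odot$ defines them by the very formulas used to equip $\mathbb{G}$.

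The main obstacle is the strictness: the lemma asserts an on-the-nose equality, not just an equivalence, so I must ensure that no coherence isomorphisms are silently inserted. The essential leverage is that $\odot$ is a \emph{strict} action, so $\varepsilon$ and $\delta$ are genuine equalities and $r \odot \langle s, X\rangle$ is \emph{equal} to $\langle r * s, X\rangle$ rather than merely isomorphic; the syntactic monoidal structure is likewise taken strict. I would also align the polarity of the grade preorder: the model uses $\op{\mathcal{R}}$, so I must verify that the reindexing morphisms of $\mathsf{Gr}_\odot$ are oriented to agree with the $\mathsf{sub}_{\GS}$-generated morphisms of $\mathbb{G}$, appealing to the equational-theory laws \textsc{sub-refl}, \textsc{sub-trans}, and \textsc{sub-comm-conv} to see that approximation along $\leq$ matches on both sides. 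Once object classes, hom-sets, and all structure are matched definitionally, the equality $\mathsf{Gr}_\odot(\mathcal{R}, \mathbb{G}) = \mathbb{G}$ follows; this is the coherence fact that lets the subsequent completeness argument read off $\Pi_{{\mathrm{1}}} \equiv \Pi_{{\mathrm{2}}}$ directly from equality of interpretations in this syntactic model.
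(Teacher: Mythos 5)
Your systematic unfold-and-check plan covers objects, hom-sets, and the colax data, but it never addresses the one point that actually carries the paper's proof: the monoidal unit of $\mathsf{Gr}_\odot(\mathcal{R}, \mathbb{G})$. The construction being compared against $\mathbb{G}$ must come equipped with a unit object $\mathsf{J}$ that is fixed by the action \emph{on the nose}, because for a strict exponential action the structure map $m_{\mathsf{J},r}$ is an equality $\mathsf{J} = r \odot \mathsf{J}$ (Definition~\ref{def:strict-action}, strict version). In $\mathbb{G}$ the action is $r \odot \langle s, X\rangle = \langle r * s, X\rangle$, so an object $\langle s, \mathsf{J}\rangle$ is action-invariant iff $r * s = s$ for every $r$, which forces $s = 0$. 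The paper's entire proof is exactly this observation: $r \odot \langle 0, \mathsf{J}\rangle = \langle r * 0, \mathsf{J}\rangle = \langle 0, \mathsf{J}\rangle$ for all $r$, so one picks the unit to be $\langle 0, \mathsf{J}\rangle$. Your blanket claim that the symmetric monoidal structure, the action, and the $\mathsf{weak}/\mathsf{contr}$ data ``transport identically'' silently assumes a compatible choice of unit; had you taken the seemingly natural candidate $\langle 1, \mathsf{J}\rangle$ (the grade-$1$ image of the unit type), the equality would fail, since $r \odot \langle 1, \mathsf{J}\rangle = \langle r, \mathsf{J}\rangle \neq \langle 1, \mathsf{J}\rangle$ in general.

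The rest of your outline (strictness of $\varepsilon$ and $\delta$, reduction of hom-sets to the grade-$1$-target case, orientation of the preorder) is consistent with how $\mathbb{G}$ is defined and is harmless, if considerably more than the paper bothers to write down. But without explicitly exhibiting an action-invariant unit and invoking the semiring annihilation law $r * 0 = 0$, the on-the-nose equality $\mathsf{Gr}_\odot(\mathcal{R}, \mathbb{G}) = \mathbb{G}$ is not established; this is the step you need to add.
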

\begin{proof}
    For any $ r$ , $ r \odot \langle 0 , J \rangle = \langle 0 , J \rangle $  So pick $\mathsf{J} $ it be $ \langle 0 , J \rangle $
\end{proof}

\begin{definition}[ generic model ] Given a preordered semring over $\mathcal{R}$, construct the symmetric monoidal category from the syntax of graded fragment of $\mGL{}$.
    $$\text{Obj}(\mathbb{G}_\equiv) ::=  \langle r, \mathsf{J}  \rangle \mid \langle r, X \rangle  \boxtimes \langle s, Y \rangle \mid \langle r, \mathsf{Lin} \, \mGLnt{A} \rangle $$
    $$ \text{Hom}_{\mathbb{G}_\equiv}( \langle r, X \rangle  , \langle 1, Y \rangle )= \{ t \mid \mGLnt{r}  \odot  \mGLmv{x}  \mGLsym{:}  \mGLnt{X}  \vdash_{\mathsf{GS} }  \mGLnt{t}  \mGLsym{:}  \mGLnt{Y} \}_{/ \equiv} $$
    And
   $$\text{Obj}(\mathbb{L}_\equiv) ::= \mathsf{I} \mid \mGLnt{A}  \otimes  \mGLnt{B} \mid \mGLnt{A}  \multimap  \mGLnt{B} \mid \mathsf{Grd} _{ \mGLnt{r} }\, \mGLnt{X} $$
    $$\text{Hom}_{\mathbb{L}_\equiv}(A, B) = \{ l \mid \emptyset  \odot  \emptyset  \mGLsym{;}  \emptyset  \vdash_{\mathsf{MS} }  \mGLnt{l}  \mGLsym{:}  \mGLnt{A}  \multimap  \mGLnt{B} \}_{/ \equiv}  $$

    With $ r \in R$ and $ \equiv $ is the the equivelence relation from the equatonal theory moved across the interpretation.
\end{definition}

Note that moving the equivelence relation across the interpretation assumes coherence, that is that derivations of the same judgement are equivelent, to maintain reflexivity. A statement or proof of coherence is omitted, but it follows from the construction of the equational theory.

\begin{lemma}
    $ (\mathbb{G}_\equiv : G_\equiv \dashv L_\equiv :  \mathbb{L}_\equiv, \odot ) $ is an $\mGL{}$ model
\end{lemma}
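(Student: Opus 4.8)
The plan is to reuse, essentially verbatim, the verification already carried out for the syntactic categories $(\mathbb{G} : G \dashv L : \mathbb{L}, \odot)$, transporting each ingredient along the quotient that identifies $\equiv$-related proofs. Concretely, I would first observe that passing from $\mathbb{G}$ to $\mathbb{G}_\equiv$ (and $\mathbb{L}$ to $\mathbb{L}_\equiv$) only changes the hom-sets, replacing sets of derivations by their quotients under $\equiv$; the objects, the action $\odot$, the functors $G$, $L$, and the units/counits $\eta$, $\epsilon$ are all defined on objects or on representatives, so they descend to the quotient provided they respect $\equiv$. Thus the real content of the proof is a collection of \emph{well-definedness} checks: that composition of arrows is compatible with $\equiv$ (so $\mathbb{G}_\equiv$ and $\mathbb{L}_\equiv$ are genuinely categories), that $G_\equiv$ and $L_\equiv$ act on equivalence classes, and that the structural morphisms ($\mathsf{weak}$, $\mathsf{contr}$, the monoidal and closed structure, the adjunction (co)unit) are well-defined on classes.

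The key steps, in order, would be: (1) confirm $\mathbb{G}_\equiv$ and $\mathbb{L}_\equiv$ are symmetric monoidal (closed, for $\mathbb{L}_\equiv$) categories, which amounts to checking that the congruence $\equiv$ is closed under the operations used to build composites and tensors --- this is exactly the congruence content already baked into the definition of the equational theory in \citeappendix{sec:full-eq-theory}{A.1}; (2) verify that $\odot : \mathcal{R} \times \mathbb{G}_\equiv \mto \mathbb{G}_\equiv$ remains an exponential action, reusing the previous lemma that $\odot$ is an exponential action on $\mathbb{G}$ together with the observation that the defining equalities of Definition~\ref{def:strict-action} and the coherence diagrams of Figure~\ref{fig:equations-strict-action} hold up to $\equiv$ (indeed they are among the generators of $\equiv$, e.g.\ \textsc{contr-sym}, \textsc{contr-assoc}, \textsc{contr-unitL}); (3) verify $G_\equiv \dashv L_\equiv$ is still a symmetric monoidal adjunction, reusing the prior lemma and checking the triangle identities modulo $\equiv$; then (4) assemble these into the $\mGL{}$ model via Definition~\ref{def:mGLL-adjoint-model}, exactly as in the preceding lemma for $\mathbb{G}$.

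The main obstacle I expect is precisely the well-definedness of composition and of the structural transformations with respect to $\equiv$, i.e.\ that $\equiv$ is a genuine \emph{congruence} compatible with all the categorical operations, and the associated \emph{coherence} issue flagged in the note after the generic-model definition: to quotient hom-sets by $\equiv$ while keeping reflexivity, one needs that any two derivations of the same sequent are already $\equiv$-related, so that the interpretation is coherent and $\equiv$ can be ``moved across the interpretation'' without collapsing or splitting classes incorrectly. I would discharge this by appealing to the structure of the equational theory: the cut-reduction-induced equalities together with the explicitly listed structural equalities (\textsc{sub-comm-conv}, \textsc{sub-trans}, \textsc{contr-mono}, \textsc{mult-mono}, etc.) are exactly the congruence rules needed, so each categorical law one must check is either an identity between syntactic proofs that holds definitionally or is one of the generating equations of $\equiv$. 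Once this congruence-plus-coherence bookkeeping is in place, the statement follows immediately by reusing the two earlier lemmas --- that $\odot$ is an exponential action and that $G \dashv L$ is a symmetric monoidal adjunction --- applied to the quotiented categories, since none of those arguments are disturbed by quotienting hom-sets by a congruence.
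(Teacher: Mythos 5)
Your proposal is correct and follows essentially the same route as the paper: the paper's own proof is just the one-line observation that $\odot$ is an exponential action and $G_\equiv \dashv L_\equiv$ is a symmetric monoidal adjunction, so the result is immediate from Definition~\ref{def:mGLL-adjoint-model}, exactly your step (4) after reusing the two earlier lemmas. The congruence and coherence bookkeeping you identify as the real content is precisely the issue the paper itself flags (and omits) in the remark following the generic-model definition, so your treatment is, if anything, more explicit than the published argument.
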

\begin{proof}
    $ \odot $ is an exponential action and $(\mathbb{G}_\equiv : G_\equiv \dashv L_\equiv :  \mathbb{L}_\equiv, \odot )$ is a symmetric monoidal adjunction, so this follows from the definition of an $\mGL{}$: model
\end{proof}
\mGLLCompletenessTheorem
\begin{proof}
    If $\interp{ \Pi_{{\mathrm{1}}} } = \interp{ \Pi_{{\mathrm{2}}} }$ in all mixed graded/linear models, then
    $\interp{ \Pi_{{\mathrm{1}}} } = \interp{ \Pi_{{\mathrm{2}}} }$ in
    $ (\mathbb{G}_\equiv : G_\equiv \dashv L_\equiv :  \mathbb{L}_\equiv, \odot ) $ so
    $\Pi_{{\mathrm{1}}} \equiv \Pi_{{\mathrm{2}}}$.
\end{proof}


\subsection{Proof of Substitution for \mGLL{} Logic}
\label{subsec:proof_of_substitution_for_mgll_logic}
\substitutionLemma{}
\begin{proof}
  The most interesting cases of this proof are in part i.  Thus, we
  prove part i in full, and omit the parts ii and iii, because they
  are similar.  Throughout this proof we use the following definition:
  $(\mGLnt{s_{{\mathrm{1}}}}  \mGLsym{,} \, ... \, \mGLsym{,}  \mGLnt{s_{\mGLmv{n}}},\mGLmv{x_{{\mathrm{1}}}}  \mGLsym{:}  \mGLnt{X}  \mGLsym{,} \, ... \, \mGLsym{,}  \mGLmv{x_{\mGLmv{n}}}  \mGLsym{:}  \mGLnt{X}) \in (\delta,\Delta)$ iff there exists $\delta'$,
  $\delta''$, $\Delta'$, and $\Delta''$ such that $\delta  \mGLsym{=}   (  \delta'  \mGLsym{,}  \mGLnt{s_{{\mathrm{1}}}}  \mGLsym{,} \, ... \, \mGLsym{,}  \mGLnt{s_{\mGLmv{n}}}  \mGLsym{,}  \delta''  )$,
  $\Delta  \mGLsym{=}   ( \Delta'  \mGLsym{,}  \mGLmv{x_{{\mathrm{1}}}}  \mGLsym{:}  \mGLnt{X}  \mGLsym{,} \, ... \, \mGLsym{,}  \mGLmv{x_{\mGLmv{n}}}  \mGLsym{:}  \mGLnt{X}  \mGLsym{,}  \Delta'' )$, $| \delta' | = | \Delta' |$, $| \delta'' |
  = | \Delta'' |$.

  This is a proof by induction on the second assumed typing derivation.

  \begin{itemize}
  \item[] Case.\ \\
    \begin{center}
      \begin{math}
        $$\mprset{flushleft}
        \inferrule* [right=$\mGLdruleGTXXUnitIName{}$] {
          \
        }{\emptyset  \odot  \emptyset  \vdash_{\mathsf{GT} }  \mathsf{j}  \mGLsym{:}  \mathsf{J}}
      \end{math}
    \end{center}
    The result holds trivially.

  \item[] Case.\ \\
    \begin{center}
      \begin{math}
        $$\mprset{flushleft}
        \inferrule* [right=$\mGLdruleGTXXUnitEName{}$] {
          \delta'_{{\mathrm{2}}}  \odot  \Delta'_{{\mathrm{2}}}  \vdash_{\mathsf{GT} }  \mGLnt{t'_{{\mathrm{2}}}}  \mGLsym{:}  \mathsf{J}
          \\\\
              (  \delta'_{{\mathrm{1}}}  \mGLsym{,}  \mGLnt{s}  \mGLsym{,}  \delta'_{{\mathrm{3}}}  )   \odot   ( \Delta'_{{\mathrm{1}}}  \mGLsym{,}  \mGLmv{y}  \mGLsym{:}  \mathsf{J}  \mGLsym{,}  \Delta'_{{\mathrm{3}}} )   \vdash_{\mathsf{GT} }  \mGLnt{t''_{{\mathrm{2}}}}  \mGLsym{:}  \mGLnt{Y}
        }{(  \delta'_{{\mathrm{1}}}  \mGLsym{,}  \mGLnt{s}  *  \delta'_{{\mathrm{2}}}  \mGLsym{,}  \delta'_{{\mathrm{3}}}  )   \odot   ( \Delta'_{{\mathrm{1}}}  \mGLsym{,}  \Delta'_{{\mathrm{2}}}  \mGLsym{,}  \Delta'_{{\mathrm{3}}} )   \vdash_{\mathsf{GT} }  \mathsf{let} \, \mathsf{j} \, \mGLsym{(}  \mGLmv{y}  \mGLsym{)}  \mGLsym{=}  \mGLnt{t'_{{\mathrm{2}}}} \, \mathsf{in} \, \mGLnt{t''_{{\mathrm{2}}}}  \mGLsym{:}  \mGLnt{Y}}
      \end{math}
    \end{center}
    In this case we have the following assignments:
    \[
    \begin{array}{lll}
      (  \delta_{{\mathrm{1}}}  \mGLsym{,}  \mGLnt{r_{{\mathrm{1}}}}  \mGLsym{,} \, ... \, \mGLsym{,}  \mGLnt{r_{\mGLmv{n}}}  \mGLsym{,}  \delta_{{\mathrm{3}}}  ) = (  \delta'_{{\mathrm{1}}}  \mGLsym{,}  \mGLnt{s}  *  \delta'_{{\mathrm{2}}}  \mGLsym{,}  \delta'_{{\mathrm{3}}}  )\\
      ( \Delta_{{\mathrm{1}}}  \mGLsym{,}  \mGLmv{x_{{\mathrm{1}}}}  \mGLsym{:}  \mGLnt{X}  \mGLsym{,} \, ... \, \mGLsym{,}  \mGLmv{x_{\mGLmv{n}}}  \mGLsym{:}  \mGLnt{X}  \mGLsym{,}  \Delta_{{\mathrm{3}}} ) = ( \Delta'_{{\mathrm{1}}}  \mGLsym{,}  \Delta'_{{\mathrm{2}}}  \mGLsym{,}  \Delta'_{{\mathrm{3}}} )\\
      \mGLnt{t_{{\mathrm{2}}}} = \mGLsym{(}  \mathsf{let} \, \mathsf{j} \, \mGLsym{(}  \mGLmv{y}  \mGLsym{)}  \mGLsym{=}  \mGLnt{t'_{{\mathrm{2}}}} \, \mathsf{in} \, \mGLnt{t''_{{\mathrm{2}}}}  \mGLsym{)}
    \end{array}
    \]
    It suffices to case split over which vector and context $(\mGLnt{r_{{\mathrm{1}}}}  \mGLsym{,} \, ... \, \mGLsym{,}  \mGLnt{r_{\mGLmv{n}}},
    \mGLmv{x_{{\mathrm{1}}}}  \mGLsym{:}  \mGLnt{X}  \mGLsym{,} \, ... \, \mGLsym{,}  \mGLmv{x_{\mGLmv{n}}}  \mGLsym{:}  \mGLnt{X})$ fall into. Thus, we have the following cases:
    \begin{enumerate}
    \item Suppose $\mGLsym{(}  \mGLnt{r_{{\mathrm{1}}}}  \mGLsym{,} \, ... \, \mGLsym{,}  \mGLnt{r_{\mGLmv{n}}}  \mGLsym{,}  \mGLmv{x_{{\mathrm{1}}}}  \mGLsym{:}  \mGLnt{X}  \mGLsym{,} \, ... \, \mGLsym{,}  \mGLmv{x_{\mGLmv{n}}}  \mGLsym{:}  \mGLnt{X}  \mGLsym{)} \, \in \, \mGLsym{(}  \delta'_{{\mathrm{1}}}  \mGLsym{,}  \Delta'_{{\mathrm{1}}}  \mGLsym{)}$.  Hence, we have the
      following situation:
      \begin{center}
        \begin{math}
          $$\mprset{flushleft}
          \inferrule* [right=$\mGLdruleGTXXUnitEName{}$] {
            \delta'_{{\mathrm{2}}}  \odot  \Delta'_{{\mathrm{2}}}  \vdash_{\mathsf{GT} }  \mGLnt{t'_{{\mathrm{2}}}}  \mGLsym{:}  \mathsf{J}\\\\
            (  \delta''_{{\mathrm{1}}}  \mGLsym{,}  \mGLnt{r_{{\mathrm{1}}}}  \mGLsym{,} \, ... \, \mGLsym{,}  \mGLnt{r_{\mGLmv{n}}}  \mGLsym{,}  \delta'''_{{\mathrm{1}}}  \mGLsym{,}  \mGLnt{s}  \mGLsym{,}  \delta'_{{\mathrm{3}}}  )   \odot   ( \Delta''_{{\mathrm{1}}}  \mGLsym{,}  \mGLmv{x_{{\mathrm{1}}}}  \mGLsym{:}  \mGLnt{X}  \mGLsym{,} \, ... \, \mGLsym{,}  \mGLmv{x_{\mGLmv{n}}}  \mGLsym{:}  \mGLnt{X}  \mGLsym{,}  \Delta''_{{\mathrm{1}}}  \mGLsym{,}  \mGLmv{y}  \mGLsym{:}  \mathsf{J}  \mGLsym{,}  \Delta'_{{\mathrm{3}}} )   \vdash_{\mathsf{GT} }  \mGLnt{t''_{{\mathrm{2}}}}  \mGLsym{:}  \mGLnt{Y}
          }{(  \delta''_{{\mathrm{1}}}  \mGLsym{,}  \mGLnt{r_{{\mathrm{1}}}}  \mGLsym{,} \, ... \, \mGLsym{,}  \mGLnt{r_{\mGLmv{n}}}  \mGLsym{,}  \delta'''_{{\mathrm{1}}}  \mGLsym{,}  \mGLnt{s}  *  \delta'_{{\mathrm{2}}}  \mGLsym{,}  \delta'_{{\mathrm{3}}}  )   \odot   ( \Delta''_{{\mathrm{1}}}  \mGLsym{,}  \mGLmv{x_{{\mathrm{1}}}}  \mGLsym{:}  \mGLnt{X}  \mGLsym{,} \, ... \, \mGLsym{,}  \mGLmv{x_{\mGLmv{n}}}  \mGLsym{:}  \mGLnt{X}  \mGLsym{,}  \Delta''_{{\mathrm{1}}}  \mGLsym{,}  \Delta'_{{\mathrm{2}}}  \mGLsym{,}  \Delta'_{{\mathrm{3}}} )   \vdash_{\mathsf{GT} }  \mathsf{let} \, \mathsf{j} \, \mGLsym{(}  \mGLmv{y}  \mGLsym{)}  \mGLsym{=}  \mGLnt{t'_{{\mathrm{2}}}} \, \mathsf{in} \, \mGLnt{t''_{{\mathrm{2}}}}  \mGLsym{:}  \mGLnt{Y}}
        \end{math}
      \end{center}
      Now by the inductive hypothesis we have
      $(  \delta''_{{\mathrm{1}}}  \mGLsym{,}  \mGLsym{(}  \mGLnt{r_{{\mathrm{1}}}}  + \, \cdots \, +  \mGLnt{r_{\mGLmv{n}}}  \mGLsym{)}  *  \delta_{{\mathrm{2}}}  \mGLsym{,}  \delta'''_{{\mathrm{1}}}  \mGLsym{,}  \mGLnt{s}  \mGLsym{,}  \delta'_{{\mathrm{3}}}  )   \odot   ( \Delta''_{{\mathrm{1}}}  \mGLsym{,}  \Delta_{{\mathrm{2}}}  \mGLsym{,}  \Delta''_{{\mathrm{1}}}  \mGLsym{,}  \mGLmv{y}  \mGLsym{:}  \mathsf{J}  \mGLsym{,}  \Delta'_{{\mathrm{3}}} )   \vdash_{\mathsf{GT} }  \mGLsym{[}   \mGLnt{t_{{\mathrm{1}}}} ,\ldots, \mGLnt{t_{{\mathrm{1}}}}   \mGLsym{/}  \mGLmv{x_{{\mathrm{1}}}}  \mGLsym{,} \, ... \, \mGLsym{,}  \mGLmv{x_{\mGLmv{n}}}  \mGLsym{]}  \mGLnt{t''_{{\mathrm{2}}}}  \mGLsym{:}  \mGLnt{Y}$.

      Next we simply re-apply the rule obtaining:
      \begin{gather*}
        \begin{align*}
          \mprset{flushleft}
          \inferrule* [right=$\mGLdruleGTXXUnitEName{}$] {
            \delta'_{{\mathrm{2}}}  \odot  \Delta'_{{\mathrm{2}}}  \vdash_{\mathsf{GT} }  \mGLnt{t'_{{\mathrm{2}}}}  \mGLsym{:}  \mathsf{J}\\\\
            (  \delta''_{{\mathrm{1}}}  \mGLsym{,}  \mGLsym{(}  \mGLnt{r_{{\mathrm{1}}}}  + \, \cdots \, +  \mGLnt{r_{\mGLmv{n}}}  \mGLsym{)}  *  \delta_{{\mathrm{2}}}  \mGLsym{,}  \delta'''_{{\mathrm{1}}}  \mGLsym{,}  \mGLnt{s}  \mGLsym{,}  \delta'_{{\mathrm{3}}}  )   \odot   ( \Delta''_{{\mathrm{1}}}  \mGLsym{,}  \Delta_{{\mathrm{2}}}  \mGLsym{,}  \Delta''_{{\mathrm{1}}}  \mGLsym{,}  \mGLmv{y}  \mGLsym{:}  \mathsf{J}  \mGLsym{,}  \Delta'_{{\mathrm{3}}} )   \vdash_{\mathsf{GT} }  \mGLsym{[}   \mGLnt{t_{{\mathrm{1}}}} ,\ldots, \mGLnt{t_{{\mathrm{1}}}}   \mGLsym{/}  \mGLmv{x_{{\mathrm{1}}}}  \mGLsym{,} \, ... \, \mGLsym{,}  \mGLmv{x_{\mGLmv{n}}}  \mGLsym{]}  \mGLnt{t''_{{\mathrm{2}}}}  \mGLsym{:}  \mGLnt{Y}
          }{(  \delta''_{{\mathrm{1}}}  \mGLsym{,}  \mGLsym{(}  \mGLnt{r_{{\mathrm{1}}}}  + \, \cdots \, +  \mGLnt{r_{\mGLmv{n}}}  \mGLsym{)}  *  \delta_{{\mathrm{2}}}  \mGLsym{,}  \delta'''_{{\mathrm{1}}}  \mGLsym{,}  \mGLnt{s}  *  \delta'_{{\mathrm{2}}}  \mGLsym{,}  \delta'_{{\mathrm{3}}}  )   \odot   ( \Delta''_{{\mathrm{1}}}  \mGLsym{,}  \Delta_{{\mathrm{2}}}  \mGLsym{,}  \Delta''_{{\mathrm{1}}}  \mGLsym{,}  \Delta'_{{\mathrm{2}}}  \mGLsym{,}  \Delta'_{{\mathrm{3}}} )   \vdash_{\mathsf{GT} }  \mathsf{let} \, \mathsf{j} \, \mGLsym{(}  \mGLmv{y}  \mGLsym{)}  \mGLsym{=}  \mGLnt{t'_{{\mathrm{2}}}} \, \mathsf{in} \, \mGLsym{[}   \mGLnt{t_{{\mathrm{1}}}} ,\ldots, \mGLnt{t_{{\mathrm{1}}}}   \mGLsym{/}  \mGLmv{x_{{\mathrm{1}}}}  \mGLsym{,} \, ... \, \mGLsym{,}  \mGLmv{x_{\mGLmv{n}}}  \mGLsym{]}  \mGLnt{t''_{{\mathrm{2}}}}  \mGLsym{:}  \mGLnt{Y}}
        \end{align*}
      \end{gather*}
      We know that $\mGLmv{x_{{\mathrm{1}}}}  \mGLsym{,} \, ... \, \mGLsym{,}  \mGLmv{x_{\mGLmv{n}}}$ are all free in $\mGLnt{t'_{{\mathrm{2}}}}$ which implies that
      $\mathsf{let} \, \mathsf{j} \, \mGLsym{(}  \mGLmv{y}  \mGLsym{)}  \mGLsym{=}  \mGLnt{t'_{{\mathrm{2}}}} \, \mathsf{in} \, \mGLsym{[}   \mGLnt{t_{{\mathrm{1}}}} ,\ldots, \mGLnt{t_{{\mathrm{1}}}}   \mGLsym{/}  \mGLmv{x_{{\mathrm{1}}}}  \mGLsym{,} \, ... \, \mGLsym{,}  \mGLmv{x_{\mGLmv{n}}}  \mGLsym{]}  \mGLnt{t''_{{\mathrm{2}}}} = \mGLsym{[}   \mGLnt{t_{{\mathrm{1}}}} ,\ldots, \mGLnt{t_{{\mathrm{1}}}}   \mGLsym{/}  \mGLmv{x_{{\mathrm{1}}}}  \mGLsym{,} \, ... \, \mGLsym{,}  \mGLmv{x_{\mGLmv{n}}}  \mGLsym{]}  \mGLsym{(}  \mathsf{let} \, \mathsf{j} \, \mGLsym{(}  \mGLmv{y}  \mGLsym{)}  \mGLsym{=}  \mGLnt{t'_{{\mathrm{2}}}} \, \mathsf{in} \, \mGLnt{t''_{{\mathrm{2}}}}  \mGLsym{)}$, and thus we
      have derived the goal judgment.

    \item Suppose $\mGLsym{(}  \mGLnt{r_{{\mathrm{1}}}}  \mGLsym{,} \, ... \, \mGLsym{,}  \mGLnt{r_{\mGLmv{n}}}  \mGLsym{,}  \mGLmv{x_{{\mathrm{1}}}}  \mGLsym{:}  \mGLnt{X}  \mGLsym{,} \, ... \, \mGLsym{,}  \mGLmv{x_{\mGLmv{n}}}  \mGLsym{:}  \mGLnt{X}  \mGLsym{)} \, \in \, \mGLsym{(}  \delta'_{{\mathrm{2}}}  \mGLsym{,}  \Delta'_{{\mathrm{2}}}  \mGLsym{)}$.  Similar to the previous case.
    \item Suppose $\mGLsym{(}  \mGLnt{r_{{\mathrm{1}}}}  \mGLsym{,} \, ... \, \mGLsym{,}  \mGLnt{r_{\mGLmv{n}}}  \mGLsym{,}  \mGLmv{x_{{\mathrm{1}}}}  \mGLsym{:}  \mGLnt{X}  \mGLsym{,} \, ... \, \mGLsym{,}  \mGLmv{x_{\mGLmv{n}}}  \mGLsym{:}  \mGLnt{X}  \mGLsym{)} \, \in \, \mGLsym{(}  \delta'_{{\mathrm{3}}}  \mGLsym{,}  \Delta'_{{\mathrm{3}}}  \mGLsym{)}$.  Similar to the previous case.
    \end{enumerate}

  \item[] Case.\ \\
    \begin{center}
      \begin{math}
        $$\mprset{flushleft}
        \inferrule* [right=$\mGLdruleGTXXTenIName{}$] {
          \delta'_{{\mathrm{1}}}  \odot  \Delta'_{{\mathrm{1}}}  \vdash_{\mathsf{GT} }  \mGLnt{t'_{{\mathrm{1}}}}  \mGLsym{:}  \mGLnt{Y_{{\mathrm{1}}}}
          \\\\
          \delta'_{{\mathrm{2}}}  \odot  \Delta'_{{\mathrm{2}}}  \vdash_{\mathsf{GT} }  \mGLnt{t'_{{\mathrm{2}}}}  \mGLsym{:}  \mGLnt{Y_{{\mathrm{2}}}}
        }{(  \delta'_{{\mathrm{1}}}  \mGLsym{,}  \delta'_{{\mathrm{2}}}  )   \odot   ( \Delta'_{{\mathrm{1}}}  \mGLsym{,}  \Delta'_{{\mathrm{2}}} )   \vdash_{\mathsf{GT} }  \mGLsym{(}  \mGLnt{t'_{{\mathrm{1}}}}  \mGLsym{,}  \mGLnt{t'_{{\mathrm{2}}}}  \mGLsym{)}  \mGLsym{:}  \mGLnt{Y_{{\mathrm{1}}}}  \boxtimes  \mGLnt{Y_{{\mathrm{2}}}}}
      \end{math}
    \end{center}
    This case follows straightforwardly from the application of the
    induction hypothesis to premises of the rule and reapplying the
    rule.

  \item[] Case.\ \\
    \begin{center}
      \begin{math}
        $$\mprset{flushleft}
        \inferrule* [right=$\mGLdruleGTXXSubName{}$] {
          \delta'_{{\mathrm{1}}}  \odot  \Delta'_{{\mathrm{1}}}  \vdash_{\mathsf{GT} }  \mGLnt{t'}  \mGLsym{:}  \mGLnt{Y}  \quad  \delta'_{{\mathrm{1}}}  \leq  \delta'_{{\mathrm{2}}}
        }{\delta'_{{\mathrm{2}}}  \odot  \Delta'_{{\mathrm{1}}}  \vdash_{\mathsf{GT} }  \mGLnt{t'}  \mGLsym{:}  \mGLnt{Y}}
      \end{math}
    \end{center}
    with $\delta'_{{\mathrm{1}}} = (  \delta_{{\mathrm{1}}}  \mGLsym{,}  \mGLnt{r_{{\mathrm{1}}}}  \mGLsym{,} \, ... \, \mGLsym{,}  \mGLnt{r_{\mGLmv{n}}}  \mGLsym{,}  \delta_{{\mathrm{3}}}  )$
    and $\Delta'_{{\mathrm{1}}} = ( \Delta_{{\mathrm{1}}}  \mGLsym{,}  \mGLmv{x_{{\mathrm{1}}}}  \mGLsym{:}  \mGLnt{X}  \mGLsym{,} \, ... \, \mGLsym{,}  \mGLmv{x_{\mGLmv{n}}}  \mGLsym{:}  \mGLnt{X}  \mGLsym{,}  \Delta_{{\mathrm{3}}} )$
    and $\delta'_{{\mathrm{2}}} = (  \gamma_{{\mathrm{1}}}  \mGLsym{,}  \mGLnt{r'_{{\mathrm{1}}}}  \mGLsym{,} \, ... \, \mGLsym{,}  \mGLnt{r'_{\mGLmv{n}}}  \mGLsym{,}  \gamma_{{\mathrm{3}}}  )$.

    By induction we have that:
    \begin{align*}
    (  \delta_{{\mathrm{1}}}  \mGLsym{,}  \mGLsym{(}  \mGLnt{r_{{\mathrm{1}}}}  + \, \cdots \, +  \mGLnt{r_{\mGLmv{n}}}  \mGLsym{)}  *  \delta_{{\mathrm{2}}}  \mGLsym{,}  \delta_{{\mathrm{3}}}  )   \odot   ( \Delta_{{\mathrm{1}}}  \mGLsym{,}  \Delta_{{\mathrm{2}}}  \mGLsym{,}  \Delta_{{\mathrm{3}}} )   \vdash_{\mathsf{GT} }  \mGLsym{[}   \mGLnt{t_{{\mathrm{1}}}} ,\ldots, \mGLnt{t_{{\mathrm{1}}}}   \mGLsym{/}  \mGLmv{x_{{\mathrm{1}}}}  \mGLsym{,} \, ... \, \mGLsym{,}  \mGLmv{x_{\mGLmv{n}}}  \mGLsym{]}  \mGLnt{t'}  \mGLsym{:}  \mGLnt{Y}
    \end{align*}

    Since $\mGLnt{r_{{\mathrm{1}}}}  \leq  \mGLnt{r'_{{\mathrm{1}}}} \wedge \ldots
    \mGLnt{r_{\mGLmv{n}}}  \leq  \mGLnt{r'_{\mGLmv{n}}}$ from the structure of $\delta'_{{\mathrm{1}}}  \leq  \delta'_{{\mathrm{2}}}$
    then by reflexivity and monotonicity of $+$ and monotonicity of $*$:
    $$(  \mGLnt{r_{{\mathrm{1}}}}  + \, \cdots \, +  \mGLnt{r_{\mGLmv{n}}}  )   *  \delta_{{\mathrm{2}}}  \leq   (  \mGLnt{r'_{{\mathrm{1}}}}  + \, \cdots \, +  \mGLnt{r'_{\mGLmv{n}}}  )   *  \delta_{{\mathrm{2}}}$$
    Thus, we can reapply the sub rule on the inductive hypothesis:
    \begin{center}
      \begin{math}
        $$\mprset{flushleft}
        \inferrule* [right=$\mGLdruleGTXXSubName{}$] {
          \delta'_{{\mathrm{1}}}  \odot  \Delta'_{{\mathrm{1}}}  \vdash_{\mathsf{GT} }  \mGLnt{t'}  \mGLsym{:}  \mGLnt{Y}  \quad   (  \delta_{{\mathrm{1}}}  \mGLsym{,}  \mGLsym{(}  \mGLnt{r_{{\mathrm{1}}}}  + \, \cdots \, +  \mGLnt{r_{\mGLmv{n}}}  \mGLsym{)}  *  \delta_{{\mathrm{2}}}  \mGLsym{,}  \delta_{{\mathrm{3}}}  )   \leq   (  \gamma_{{\mathrm{1}}}  \mGLsym{,}  \mGLsym{(}  \mGLnt{r'_{{\mathrm{1}}}}  + \, \cdots \, +  \mGLnt{r'_{\mGLmv{n}}}  \mGLsym{)}  *  \delta_{{\mathrm{2}}}  \mGLsym{,}  \gamma_{{\mathrm{3}}}  )
        }{(  \gamma_{{\mathrm{1}}}  \mGLsym{,}  \mGLsym{(}  \mGLnt{r'_{{\mathrm{1}}}}  + \, \cdots \, +  \mGLnt{r'_{\mGLmv{n}}}  \mGLsym{)}  *  \delta_{{\mathrm{2}}}  \mGLsym{,}  \gamma_{{\mathrm{3}}}  )   \odot  \Delta'_{{\mathrm{1}}}  \vdash_{\mathsf{GT} }  \mGLnt{t'}  \mGLsym{:}  \mGLnt{Y}}
      \end{math}
    \end{center}
    satisfying the goal.

  \item[] Case.\ \\
    \begin{center}
      \begin{math}
        $$\mprset{flushleft}
        \inferrule* [right=$\mGLdruleGTXXIdName{}$] {
          \
        }{1  \odot  \mGLmv{x}  \mGLsym{:}  \mGLnt{X}  \vdash_{\mathsf{GT} }  \mGLmv{x}  \mGLsym{:}  \mGLnt{X}}
      \end{math}
    \end{center}
    This case holds by assumption and the fact that $1  *  \delta_{{\mathrm{2}}} = \delta_{{\mathrm{2}}}$.

  \item[] Case.\ \\
    \begin{center}
      \begin{math}
        $$\mprset{flushleft}
        \inferrule* [right=$\mGLdruleGTXXLinIName{}$] {
          \delta'  \odot  \Delta'  \mGLsym{;}  \emptyset  \vdash_{\mathsf{MT} }  \mGLnt{l}  \mGLsym{:}  \mGLnt{B}
        }{\delta'  \odot  \Delta'  \vdash_{\mathsf{GT} }  \mathsf{Lin} \, \mGLnt{l}  \mGLsym{:}  \mathsf{Lin} \, \mGLnt{B}}
      \end{math}
    \end{center}
    This case straightforwardly follows by applying the induction
    hypothesis and then reapplying the rule.

  \item[] Case.\ \\
    \begin{center}
      \begin{math}
        $$\mprset{flushleft}
        \inferrule* [right=$\mGLdruleGTXXTenEName{}$] {
          \delta'_{{\mathrm{2}}}  \odot  \Delta'_{{\mathrm{2}}}  \vdash_{\mathsf{GT} }  \mGLnt{t'_{{\mathrm{1}}}}  \mGLsym{:}  \mGLnt{Y_{{\mathrm{1}}}}  \boxtimes  \mGLnt{Y_{{\mathrm{2}}}}\\\\
          (  \delta'_{{\mathrm{1}}}  \mGLsym{,}  \mGLnt{s}  \mGLsym{,}  \mGLnt{s}  \mGLsym{,}  \delta'_{{\mathrm{3}}}  )   \odot   ( \Delta'_{{\mathrm{1}}}  \mGLsym{,}  \mGLmv{y_{{\mathrm{1}}}}  \mGLsym{:}  \mGLnt{Y_{{\mathrm{1}}}}  \mGLsym{,}  \mGLmv{y_{{\mathrm{2}}}}  \mGLsym{:}  \mGLnt{Y_{{\mathrm{2}}}}  \mGLsym{,}  \Delta'_{{\mathrm{3}}} )   \vdash_{\mathsf{GT} }  \mGLnt{t'_{{\mathrm{2}}}}  \mGLsym{:}  \mGLnt{Z}
        }{(  \delta'_{{\mathrm{1}}}  \mGLsym{,}   \mGLnt{s}  *  \delta'_{{\mathrm{2}}}   \mGLsym{,}  \delta'_{{\mathrm{3}}}  )   \odot   ( \Delta'_{{\mathrm{1}}}  \mGLsym{,}  \Delta'_{{\mathrm{2}}}  \mGLsym{,}  \Delta'_{{\mathrm{3}}} )   \vdash_{\mathsf{GT} }   \mathsf{let} \,( \mGLmv{y_{{\mathrm{1}}}} , \mGLmv{y_{{\mathrm{2}}}} ) =  \mGLnt{t'_{{\mathrm{1}}}} \, \mathsf{in} \, \mGLnt{t'_{{\mathrm{2}}}}   \mGLsym{:}  \mGLnt{Z}}
      \end{math}
    \end{center}
    This case is similar to the case for \mGLdruleGTXXUnitEName{} above.

  \item[] Case.\ \\
    \begin{center}
      \begin{math}
        $$\mprset{flushleft}
        \inferrule* [right=$\mGLdruleGTXXWeakName{}$] {
          (  \delta'_{{\mathrm{1}}}  \mGLsym{,}  \delta'_{{\mathrm{2}}}  )   \odot   ( \Delta'_{{\mathrm{1}}}  \mGLsym{,}  \Delta'_{{\mathrm{2}}} )   \vdash_{\mathsf{GT} }  \mGLnt{t}  \mGLsym{:}  \mGLnt{Z}
        }{(  \delta'_{{\mathrm{1}}}  \mGLsym{,}  \mathsf{0}  \mGLsym{,}  \delta'_{{\mathrm{2}}}  )   \odot   ( \Delta'_{{\mathrm{1}}}  \mGLsym{,}  \mGLmv{y}  \mGLsym{:}  \mGLnt{Y}  \mGLsym{,}  \Delta'_{{\mathrm{2}}} )   \vdash_{\mathsf{GT} }  \mGLnt{t}  \mGLsym{:}  \mGLnt{Z}}
      \end{math}
    \end{center}
    If one of $\mGLmv{x_{{\mathrm{1}}}}  \mGLsym{,} \, ... \, \mGLsym{,}  \mGLmv{x_{\mGLmv{n}}}$ are $y$, then this case holds trivially, because the $x$ would be
    free in $\mGLnt{t}$, but in the case where $y$ is not in $\mGLmv{x_{{\mathrm{1}}}}  \mGLsym{,} \, ... \, \mGLsym{,}  \mGLmv{x_{\mGLmv{n}}}$, then this case
    holds by applying the induction hypothesis and reapplying the
    rule.

  \item[] Case.\ \\
    \begin{center}
      \begin{math}
        $$\mprset{flushleft}
        \inferrule* [right=$\mGLdruleGTXXContName{}$] {
          (  \delta'_{{\mathrm{1}}}  \mGLsym{,}  \mGLnt{s_{{\mathrm{1}}}}  \mGLsym{,}  \mGLnt{s_{{\mathrm{2}}}}  \mGLsym{,}  \delta'_{{\mathrm{2}}}  )   \odot   ( \Delta'_{{\mathrm{1}}}  \mGLsym{,}  \mGLmv{y_{{\mathrm{1}}}}  \mGLsym{:}  \mGLnt{Y}  \mGLsym{,}  \mGLmv{y_{{\mathrm{2}}}}  \mGLsym{:}  \mGLnt{Y}  \mGLsym{,}  \Delta'_{{\mathrm{2}}} )   \vdash_{\mathsf{GT} }  \mGLnt{t}  \mGLsym{:}  \mGLnt{Z}
        }{(  \delta'_{{\mathrm{1}}}  \mGLsym{,}  \mGLnt{s_{{\mathrm{1}}}}  +  \mGLnt{s_{{\mathrm{2}}}}  \mGLsym{,}  \delta'_{{\mathrm{2}}}  )   \odot   ( \Delta'_{{\mathrm{1}}}  \mGLsym{,}  \mGLmv{y_{{\mathrm{1}}}}  \mGLsym{:}  \mGLnt{Y}  \mGLsym{,}  \Delta'_{{\mathrm{2}}} )   \vdash_{\mathsf{GT} }  \mGLsym{[}  \mGLmv{y_{{\mathrm{1}}}}  \mGLsym{/}  \mGLmv{y_{{\mathrm{2}}}}  \mGLsym{]}  \mGLnt{t}  \mGLsym{:}  \mGLnt{Z}}
      \end{math}
    \end{center}
    This case is similar to the previous case.  If $\mGLmv{y_{{\mathrm{1}}}}$ and/or $\mGLmv{y_{{\mathrm{2}}}}$ are in $\mGLmv{x_{{\mathrm{1}}}}  \mGLsym{,} \, ... \, \mGLsym{,}  \mGLmv{x_{\mGLmv{n}}}$, then we
    apply the induction hypothesis replacing both $\mGLmv{y_{{\mathrm{1}}}}$ and
    $\mGLmv{y_{{\mathrm{2}}}}$.  If $\mGLmv{y_{{\mathrm{1}}}}$ nor $\mGLmv{y_{{\mathrm{2}}}}$ are in $\mGLmv{x_{{\mathrm{1}}}}  \mGLsym{,} \, ... \, \mGLsym{,}  \mGLmv{x_{\mGLmv{n}}}$, then we simply apply the induction
    hypothesis, and reapply the rule.

  \item[] Case.\ \\
    \begin{center}
      \begin{math}
        $$\mprset{flushleft}
        \inferrule* [right=$\mGLdruleGTXXExName{}$] {
          (  \delta'_{{\mathrm{1}}}  \mGLsym{,}  \mGLnt{s_{{\mathrm{1}}}}  \mGLsym{,}  \mGLnt{s_{{\mathrm{2}}}}  \mGLsym{,}  \delta'_{{\mathrm{2}}}  )   \odot   ( \Delta'_{{\mathrm{1}}}  \mGLsym{,}  \mGLmv{y_{{\mathrm{1}}}}  \mGLsym{:}  \mGLnt{Y_{{\mathrm{1}}}}  \mGLsym{,}  \mGLmv{y_{{\mathrm{2}}}}  \mGLsym{:}  \mGLnt{Y_{{\mathrm{2}}}}  \mGLsym{,}  \Delta'_{{\mathrm{2}}} )   \vdash_{\mathsf{GT} }  \mGLnt{t}  \mGLsym{:}  \mGLnt{Z}
        }{(  \delta'_{{\mathrm{1}}}  \mGLsym{,}  \mGLnt{s_{{\mathrm{2}}}}  \mGLsym{,}  \mGLnt{s_{{\mathrm{1}}}}  \mGLsym{,}  \delta'_{{\mathrm{2}}}  )   \odot   ( \Delta'_{{\mathrm{1}}}  \mGLsym{,}  \mGLmv{y_{{\mathrm{1}}}}  \mGLsym{:}  \mGLnt{Y_{{\mathrm{1}}}}  \mGLsym{,}  \mGLmv{y_{{\mathrm{2}}}}  \mGLsym{:}  \mGLnt{Y_{{\mathrm{2}}}}  \mGLsym{,}  \Delta'_{{\mathrm{2}}} )   \vdash_{\mathsf{GT} }  \mGLnt{t}  \mGLsym{:}  \mGLnt{Z}}
      \end{math}
    \end{center}
    This case is similar to the previous two cases.

  \end{itemize}

\end{proof}


\subsection{Proof that MS/GS implies MT/GT}
\label{subsec:proof_of_mst_implies_mt}
\begin{restatable}[$\vdash_{\mathsf{MS} }$ implies $\vdash_{\mathsf{MT} }$]{lemma}{MSTimpliesMT}
  \label{lemma:mst-implies-mt}
  The following hold by mutual induction:
  \begin{enumerate}
  \item If $\delta  \odot  \Delta  \vdash_{\mathsf{GS} }  \mGLnt{t}  \mGLsym{:}  \mGLnt{X}$, then $\delta  \odot  \Delta  \vdash_{\mathsf{GT} }  \mGLnt{t}  \mGLsym{:}  \mGLnt{X}$.
  \item If $\delta  \odot  \Delta  \mGLsym{;}  \Gamma  \vdash_{\mathsf{MS} }  \mGLnt{l}  \mGLsym{:}  \mGLnt{A}$, then $\delta  \odot  \Delta  \mGLsym{;}  \Gamma  \vdash_{\mathsf{MT} }  \mGLnt{l}  \mGLsym{:}  \mGLnt{A}$.
  \end{enumerate}
\end{restatable}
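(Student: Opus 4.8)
The plan is to proceed by mutual induction on the structure of the two assumed sequent calculus derivations, translating each sequent calculus rule into a natural deduction derivation of the \emph{same} judgment carrying the \emph{same} term, so that the term-preservation property demanded by the statement holds by construction. The cases split along the usual sequent-to-natural-deduction correspondence, and the only genuinely delicate cases are the left rules that act on graded formulas sitting in the graded context of a mixed judgment.

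First I would dispatch the routine cases. The identity axioms of $\vdash_{\mathsf{GS}}$ and $\vdash_{\mathsf{MS}}$ map directly to the natural deduction identity rule. The structural rules (weakening, contraction, the exchange rules $\textsc{ex}$ and $\textsc{gex}$, and the approximation rule $\textsc{sub}$) appear verbatim in both systems, so each follows by applying the induction hypothesis to the premise(s) and reapplying the corresponding $\vdash_{\mathsf{GT}}$/$\vdash_{\mathsf{MT}}$ rule. Every \emph{right} rule of the sequent calculus becomes the matching \emph{introduction} rule of natural deduction: $\boxtimes_R$ becomes $\boxtimes_I$, $\otimes_R$ becomes $\otimes_I$, $\multimap_R$ becomes $\multimap_I$, $\mathsf{Grd}_R$ becomes $\mathsf{Grd}_I$, $\mathsf{Lin}_R$ becomes $\mathsf{Lin}_I$, and the unit right rules become the unit introductions; in each case I translate the premises via the induction hypotheses and apply the introduction rule.

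The cut rules and the remaining left rules are all handled through the substitution lemma (Lemma~\ref{lemma:subsitution_for_gt_mt}). For $\textsc{cut}_{\mathsf{GS}}$, $\textsc{cut}_{\mathsf{MS}}$, and $\textsc{gmcut}$ I translate both premises and then invoke the appropriate clause of the substitution lemma, clause (i)/(ii) when the cut formula is graded and clause (iii) when it is linear, noting that the grade bookkeeping (the row-vector multiplication on grade vectors) in the two statements coincides exactly. Each \emph{left} rule is then simulated by exposing its principal formula as a fresh variable through the identity rule, applying the corresponding \emph{elimination} rule to that variable, and substituting the result into the translated continuation: $\multimap_L$ uses $\multimap_E$, $\mathsf{Lin}_L$ uses $\mathsf{Lin}_E$, $\mathsf{Grd}_L$ uses $\mathsf{Grd}_E$, and the linear tensor and unit left rules use $\otimes_E$ and the linear unit eliminator, with substitution gluing the pieces together while leaving the witnessing term unchanged.

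The main obstacle is the pair of graded left rules of the mixed system, $\text{unit}^{\mathsf{J}\text{-}\mathsf{MS}}_L$ and $\boxtimes_{L\text{-}\mathsf{MS}}$, which eliminate a graded unit or a graded tensor \emph{from the graded context} of a mixed judgment. The system $\vdash_{\mathsf{MT}}$ deliberately omits primitive eliminators for these (its linear unit eliminator and $\otimes_E$ act only on the \emph{linear} connectives $\mathsf{I}$ and $\otimes$), yet the lemma requires that the unchanged terms $\mathsf{let}\,\mathsf{j}=z\,\mathsf{in}\,l$ and $\mathsf{let}\,(x,y)=z\,\mathsf{in}\,l$ remain typable. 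The crux is therefore to first establish these two graded eliminators as \emph{admissible} rules of $\vdash_{\mathsf{MT}}$; I expect to derive them from the primitive graded structural rules together with the modal eliminators ($\mathsf{Grd}_E$, $\mathsf{Lin}_E$) and the substitution lemma, realising the paper's claim that these left rules ``can be derived'' once one passes to natural deduction. With the admissible graded eliminators in hand, the $\boxtimes_{L\text{-}\mathsf{MS}}$ and $\text{unit}^{\mathsf{J}\text{-}\mathsf{MS}}_L$ cases fit the same elimination-plus-substitution pattern as the other left rules, and the mutual induction closes.
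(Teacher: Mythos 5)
Your overall strategy is exactly the paper's: mutual induction on the sequent derivation, right rules sent to introductions, structural rules to themselves, the cut rules discharged by the substitution lemma (Lemma~\ref{lemma:subsitution_for_gt_mt}), and each left rule simulated by typing a fresh variable with the identity rule, applying the matching elimination rule to it, and splicing the result into the translated continuation by substitution precisely when the sequent-calculus term demands it (as for $\multimap_L$ and $\mathsf{Lin}_L$), while no substitution is needed when the eliminator already produces the required term (as for $\mathsf{Grd}_L$, the linear tensor and unit left rules, and $\boxtimes_L$ in the graded fragment). Up to this point your proof and the paper's coincide, case for case.

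The gap is in the case you single out as the crux. Your plan for $\text{unit}^{\mathsf{J}\text{-}\mathsf{MS}}_L$ and $\boxtimes_{L\text{-}\mathsf{MS}}$ is to prove the graded eliminators \emph{admissible} in $\vdash_{\mathsf{MT}}$ from $\mathsf{Grd}_E$, $\mathsf{Lin}_E$, the structural rules, and substitution. Under your own premise --- that $\vdash_{\mathsf{MT}}$ has no primitive rule typing $\mathsf{let}\,\mathsf{j}=z\,\mathsf{in}\,l$ or $\mathsf{let}\,(x,y)=z\,\mathsf{in}\,l$ for a \emph{graded} variable $z$ --- that admissibility claim is not merely hard but false. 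The last rule of a derivation fixes the outermost term constructor of its subject; substitution only replaces variables by terms and never manufactures a new constructor; the linear $\otimes_E$ requires a scrutinee of linear type $A \otimes B$ and binds $x,y$ into the \emph{linear} context, whereas the inductive hypothesis types $l$ with $x,y$ in the \emph{graded} context; and $\mathsf{Grd}_E$, $\mathsf{Lin}_E$ produce the different term formers $\mathsf{let}\,\mathsf{Grd}\,r\,x = \cdots$ and $\mathsf{Unlin}\,(\cdots)$. Hence no $\vdash_{\mathsf{MT}}$ derivation of the required judgment carrying the required term could exist, and these two cases would in fact \emph{refute} the lemma rather than complete it. The resolution --- and what the paper's proof relies on when it dismisses these cases as ``similar or hold trivially'' --- is that the natural deduction system does contain graded eliminators, the rules $\text{unit}_{GE}$ and $\boxtimes_{GE}$, which are precisely the natural-deduction counterparts of the two mixed graded left rules (the remark that the graded left rules ``can be derived'' means the left-rule format is recovered from these eliminators applied to a variable, not that the eliminators are absent; the appendix rule listing simply does not display them). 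With those rules present, both cases are immediate: apply the induction hypothesis and reapply the rule. You should invoke them directly rather than attempt an admissibility proof that cannot succeed.
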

\begin{proof}
  This is a proof by mutual induction on the assumed typing
  derivation.  We consider part one first, but only supply the case
  for tensor left, because all other cases are either a result of
  substitution for typing or are trivial.

  \begin{enumerate}
  \item[] Case. \,\\
    \begin{center}
      \begin{math}
        $$\mprset{flushleft}
        \inferrule* [right=$\mGLdruleGSTXXTenLName{}$] {
          (  \delta_{{\mathrm{1}}}  \mGLsym{,}  \mGLnt{r}  \mGLsym{,}  \mGLnt{r}  \mGLsym{,}  \delta_{{\mathrm{2}}}  )   \odot   ( \Delta_{{\mathrm{1}}}  \mGLsym{,}  \mGLmv{x}  \mGLsym{:}  \mGLnt{Y_{{\mathrm{1}}}}  \mGLsym{,}  \mGLmv{y}  \mGLsym{:}  \mGLnt{Y_{{\mathrm{2}}}}  \mGLsym{,}  \Delta_{{\mathrm{2}}} )   \vdash_{\mathsf{GS} }  \mGLnt{t'}  \mGLsym{:}  \mGLnt{X}
        }{(  \delta_{{\mathrm{1}}}  \mGLsym{,}  \mGLnt{r}  \mGLsym{,}  \delta_{{\mathrm{2}}}  )   \odot   ( \Delta_{{\mathrm{1}}}  \mGLsym{,}  \mGLmv{z}  \mGLsym{:}  \mGLnt{Y_{{\mathrm{1}}}}  \boxtimes  \mGLnt{Y_{{\mathrm{2}}}}  \mGLsym{,}  \Delta_{{\mathrm{2}}} )   \vdash_{\mathsf{GS} }   \mathsf{let} \,( \mGLmv{x} , \mGLmv{y} ) =  \mGLmv{z} \, \mathsf{in} \, \mGLnt{t'}   \mGLsym{:}  \mGLnt{X}} 
      \end{math}
    \end{center}
    This rule can be derived in $\vdash_{\mathsf{GT} }$ using the tensor
    elimination rule and substitution.  First, by the induction
    hypothesis we know that:
    \[      
      (  \delta_{{\mathrm{1}}}  \mGLsym{,}  \mGLnt{r}  \mGLsym{,}  \mGLnt{r}  \mGLsym{,}  \delta_{{\mathrm{2}}}  )   \odot   ( \Delta_{{\mathrm{1}}}  \mGLsym{,}  \mGLmv{x}  \mGLsym{:}  \mGLnt{Y_{{\mathrm{1}}}}  \mGLsym{,}  \mGLmv{y}  \mGLsym{:}  \mGLnt{Y_{{\mathrm{2}}}}  \mGLsym{,}  \Delta_{{\mathrm{2}}} )   \vdash_{\mathsf{GT} }  \mGLnt{t'}  \mGLsym{:}  \mGLnt{X}
    \]
    holds.  Now we have the following derivation:
    \[
    \inferrule* [flushleft,right=$\mGLdruleGTXXTenEName{}$] {
      \inferrule* [flushleft,right=$\mGLdruleGTXXIdName{}$] {
        \,
      }{1  \odot  \mGLmv{z}  \mGLsym{:}  \mGLnt{Y_{{\mathrm{1}}}}  \boxtimes  \mGLnt{Y_{{\mathrm{2}}}}  \vdash_{\mathsf{GT} }  \mGLmv{z}  \mGLsym{:}  \mGLnt{Y_{{\mathrm{1}}}}  \boxtimes  \mGLnt{Y_{{\mathrm{2}}}}}\\
       (  \delta_{{\mathrm{1}}}  \mGLsym{,}  \mGLnt{r}  \mGLsym{,}  \mGLnt{r}  \mGLsym{,}  \delta_{{\mathrm{2}}}  )   \odot   ( \Delta_{{\mathrm{1}}}  \mGLsym{,}  \mGLmv{x}  \mGLsym{:}  \mGLnt{Y_{{\mathrm{1}}}}  \mGLsym{,}  \mGLmv{y}  \mGLsym{:}  \mGLnt{Y_{{\mathrm{2}}}}  \mGLsym{,}  \Delta_{{\mathrm{2}}} )   \vdash_{\mathsf{GT} }  \mGLnt{t'}  \mGLsym{:}  \mGLnt{X}
    }{(  \delta_{{\mathrm{1}}}  \mGLsym{,}  \mGLnt{r}  *  1  \mGLsym{,}  \delta_{{\mathrm{2}}}  )   \odot   ( \Delta_{{\mathrm{1}}}  \mGLsym{,}  \mGLmv{z}  \mGLsym{:}  \mGLnt{Y_{{\mathrm{1}}}}  \boxtimes  \mGLnt{Y_{{\mathrm{2}}}}  \mGLsym{,}  \Delta_{{\mathrm{2}}} )   \vdash_{\mathsf{GS} }   \mathsf{let} \,( \mGLmv{x} , \mGLmv{y} ) =  \mGLmv{z} \, \mathsf{in} \, \mGLnt{t'}   \mGLsym{:}  \mGLnt{X}}
    \]
    However, $\mGLnt{r}  *  1  \mGLsym{=}  \mGLnt{r}$, so let $\delta' = (  \delta_{{\mathrm{1}}}  \mGLsym{,}  \mGLnt{r}  *  1  \mGLsym{,}  \delta_{{\mathrm{2}}}  )$, and we obtain our desired derivation.
    Finally, we can see that the derived term matches our original
    term.  Thus, we obtain our result.
  \end{enumerate}
  We now move onto part two.  We consider the following cases:
  \begin{enumerate}
  \item[] Case.\ \\ 
    \begin{center}
      \begin{math}
        $$\mprset{flushleft}
        \inferrule* [right=$\mGLdruleMSTXXImpLName{}$] {
          \delta_{{\mathrm{2}}}  \odot  \Delta_{{\mathrm{2}}}  \mGLsym{;}  \Gamma_{{\mathrm{2}}}  \vdash_{\mathsf{MS} }  \mGLnt{l''}  \mGLsym{:}  \mGLnt{B_{{\mathrm{1}}}}\\\\
          \delta_{{\mathrm{1}}}  \odot  \Delta_{{\mathrm{1}}}  \mGLsym{;}   ( \Gamma_{{\mathrm{1}}}  \mGLsym{,}  \mGLmv{x}  \mGLsym{:}  \mGLnt{B_{{\mathrm{2}}}}  \mGLsym{,}  \Gamma_{{\mathrm{3}}} )   \vdash_{\mathsf{MS} }  \mGLnt{l'}  \mGLsym{:}  \mGLnt{A}
        }{(  \delta_{{\mathrm{1}}}  \mGLsym{,}  \delta_{{\mathrm{2}}}  )   \odot   ( \Delta_{{\mathrm{1}}}  \mGLsym{,}  \Delta_{{\mathrm{2}}} )   \mGLsym{;}   ( \Gamma_{{\mathrm{1}}}  \mGLsym{,}  \mGLmv{z}  \mGLsym{:}  \mGLnt{B_{{\mathrm{1}}}}  \multimap  \mGLnt{B_{{\mathrm{2}}}}  \mGLsym{,}  \Gamma_{{\mathrm{2}}}  \mGLsym{,}  \Gamma_{{\mathrm{3}}} )   \vdash_{\mathsf{MS} }  \mGLsym{[}  \mGLmv{z} \, \mGLnt{l''}  \mGLsym{/}  \mGLmv{x}  \mGLsym{]}  \mGLnt{l'}  \mGLsym{:}  \mGLnt{A}}
      \end{math}
    \end{center}
    By the induction hypothesis we can see that the following hold:
    \[
    \begin{array}{lll}
      \delta_{{\mathrm{2}}}  \odot  \Delta_{{\mathrm{2}}}  \mGLsym{;}  \Gamma_{{\mathrm{2}}}  \vdash_{\mathsf{MT} }  \mGLnt{l''}  \mGLsym{:}  \mGLnt{B_{{\mathrm{1}}}}\\
      \delta_{{\mathrm{1}}}  \odot  \Delta_{{\mathrm{1}}}  \mGLsym{;}   ( \Gamma_{{\mathrm{1}}}  \mGLsym{,}  \mGLmv{x}  \mGLsym{:}  \mGLnt{B_{{\mathrm{2}}}}  \mGLsym{,}  \Gamma_{{\mathrm{3}}} )   \vdash_{\mathsf{MT} }  \mGLnt{l'}  \mGLsym{:}  \mGLnt{A}
    \end{array}
    \]
    Next we have the following derivation:
    \[
    \inferrule* [flushleft,right=$\mGLdruleMTXXImpEName{}$] {
      \inferrule* [flushleft,right=$\mGLdruleMTXXIdName{}$] {
        \,
      }{\emptyset  \odot  \emptyset  \mGLsym{;}  \mGLmv{z}  \mGLsym{:}  \mGLnt{B_{{\mathrm{1}}}}  \multimap  \mGLnt{B_{{\mathrm{2}}}}  \vdash_{\mathsf{MT} }  \mGLmv{z}  \mGLsym{:}  \mGLnt{B_{{\mathrm{1}}}}  \multimap  \mGLnt{B_{{\mathrm{2}}}}}\\
      \delta_{{\mathrm{2}}}  \odot  \Delta_{{\mathrm{2}}}  \mGLsym{;}  \Gamma_{{\mathrm{2}}}  \vdash_{\mathsf{MT} }  \mGLnt{l''}  \mGLsym{:}  \mGLnt{B_{{\mathrm{1}}}}
    }{\delta_{{\mathrm{2}}}  \odot  \Delta_{{\mathrm{2}}}  \mGLsym{;}  \mGLmv{z}  \mGLsym{:}  \mGLnt{B_{{\mathrm{1}}}}  \multimap  \mGLnt{B_{{\mathrm{2}}}}  \mGLsym{,}  \Gamma_{{\mathrm{2}}}  \vdash_{\mathsf{MT} }  \mGLmv{z} \, \mGLnt{l''}  \mGLsym{:}  \mGLnt{B_{{\mathrm{2}}}}}
    \]
    Now using substitution for typing (Lemma~\ref{lemma:subsitution_for_gt_mt}) with $\delta_{{\mathrm{1}}}  \odot  \Delta_{{\mathrm{1}}}  \mGLsym{;}   ( \Gamma_{{\mathrm{1}}}  \mGLsym{,}  \mGLmv{x}  \mGLsym{:}  \mGLnt{B_{{\mathrm{2}}}}  \mGLsym{,}  \Gamma_{{\mathrm{3}}} )   \vdash_{\mathsf{MT} }  \mGLnt{l'}  \mGLsym{:}  \mGLnt{A}$ and
    the previous derivation we have
    $(  \delta_{{\mathrm{1}}}  \mGLsym{,}  \delta_{{\mathrm{2}}}  )   \odot   ( \Delta_{{\mathrm{1}}}  \mGLsym{,}  \Delta_{{\mathrm{2}}} )   \mGLsym{;}   ( \Gamma_{{\mathrm{1}}}  \mGLsym{,}  \mGLmv{z}  \mGLsym{:}  \mGLnt{B_{{\mathrm{1}}}}  \multimap  \mGLnt{B_{{\mathrm{2}}}}  \mGLsym{,}  \Gamma_{{\mathrm{2}}}  \mGLsym{,}  \Gamma_{{\mathrm{3}}} )   \vdash_{\mathsf{MS} }  \mGLsym{[}  \mGLmv{z} \, \mGLnt{l''}  \mGLsym{/}  \mGLmv{x}  \mGLsym{]}  \mGLnt{l'}  \mGLsym{:}  \mGLnt{A}$. So let $\delta' = (  \delta_{{\mathrm{1}}}  \mGLsym{,}  \delta_{{\mathrm{2}}}  )$.
    
  \item[] Case.\ \\ 
    \begin{center}
      \begin{math}
        $$\mprset{flushleft}
        \inferrule* [right=$\mGLdruleMSTXXLinLName{}$] {
          \delta_{{\mathrm{1}}}  \odot  \Delta_{{\mathrm{1}}}  \mGLsym{;}   ( \mGLmv{x}  \mGLsym{:}  \mGLnt{B}  \mGLsym{,}  \Gamma )   \vdash_{\mathsf{MS} }  \mGLnt{l'}  \mGLsym{:}  \mGLnt{A}
        }{(  \delta_{{\mathrm{1}}}  \mGLsym{,}  1  )   \odot   ( \Delta_{{\mathrm{1}}}  \mGLsym{,}  \mGLmv{z}  \mGLsym{:}  \mathsf{Lin} \, \mGLnt{B} )   \mGLsym{;}  \Gamma  \vdash_{\mathsf{MS} }  \mGLsym{[}   \mathsf{Unlin} \, \mGLmv{z}   \mGLsym{/}  \mGLmv{x}  \mGLsym{]}  \mGLnt{l'}  \mGLsym{:}  \mGLnt{A}}
      \end{math}
    \end{center}
    We know the following holds by the induction hypothesis:
    \[
      \delta_{{\mathrm{1}}}  \odot  \Delta_{{\mathrm{1}}}  \mGLsym{;}   ( \mGLmv{x}  \mGLsym{:}  \mGLnt{B}  \mGLsym{,}  \Gamma )   \vdash_{\mathsf{MT} }  \mGLnt{l'}  \mGLsym{:}  \mGLnt{A}
    \]
    Next we have the following derivation:
    \[
    \inferrule* [flushleft,right=$\mGLdruleMTXXLinEName{}$] {
      \inferrule* [flushleft,right=$\mGLdruleMTXXIdName{}$] {
        \,
      }{1  \odot  \mGLmv{z}  \mGLsym{:}  \mathsf{Lin} \, \mGLnt{B}  \vdash_{\mathsf{GT} }  \mGLmv{z}  \mGLsym{:}  \mathsf{Lin} \, \mGLnt{B}}
    }{1  \odot  \mGLmv{z}  \mGLsym{:}  \mathsf{Lin} \, \mGLnt{B}  \mGLsym{;}  \emptyset  \vdash_{\mathsf{MT} }   \mathsf{Unlin} \, \mGLmv{z}   \mGLsym{:}  \mGLnt{B}}
    \]
    Now using substitution for typing (Lemma~\ref{lemma:subsitution_for_gt_mt}) with $\delta_{{\mathrm{1}}}  \odot  \Delta_{{\mathrm{1}}}  \mGLsym{;}   ( \mGLmv{x}  \mGLsym{:}  \mGLnt{B}  \mGLsym{,}  \Gamma )   \vdash_{\mathsf{MS} }  \mGLnt{l'}  \mGLsym{:}  \mGLnt{A}$ and
    the previous derivation we have:
    \[
      \delta'  \odot   ( \Delta_{{\mathrm{1}}}  \mGLsym{,}  \mGLmv{z}  \mGLsym{:}  \mathsf{Lin} \, \mGLnt{B} )   \mGLsym{;}  \Gamma  \vdash_{\mathsf{MS} }  \mGLsym{[}   \mathsf{Unlin} \, \mGLmv{z}   \mGLsym{/}  \mGLmv{x}  \mGLsym{]}  \mGLnt{l'}  \mGLsym{:}  \mGLnt{A}
    \]
    for some $\delta'$ with $(  \delta_{{\mathrm{1}}}  \mGLsym{,}  1  )   \leq  \delta'$.  Thus, we obtain our result.
  \item[] Case.\ \\ 
    \begin{center}
      \begin{math}
        $$\mprset{flushleft}
        \inferrule* [right=$\mGLdruleMSTXXGrdLName{}$] {
          (  \delta  \mGLsym{,}  \mGLnt{r}  )   \odot   ( \Delta  \mGLsym{,}  \mGLmv{x}  \mGLsym{:}  \mGLnt{X} )   \mGLsym{;}  \Gamma'  \vdash_{\mathsf{MS} }  \mGLnt{l'}  \mGLsym{:}  \mGLnt{A}
        }{\delta  \odot  \Delta  \mGLsym{;}   ( \mGLmv{z}  \mGLsym{:}   \mathsf{Grd} _{ \mGLnt{r} }\, \mGLnt{X}   \mGLsym{,}  \Gamma' )   \vdash_{\mathsf{MS} }  \mathsf{let} \, \mathsf{Grd} \, \mGLnt{r} \, \mGLmv{x}  \mGLsym{=}  \mGLmv{z} \, \mathsf{in} \, \mGLnt{l'}  \mGLsym{:}  \mGLnt{A}}
      \end{math}
    \end{center}
    We know the following holds by induction hypothesis:
    \[
      (  \delta  \mGLsym{,}  \mGLnt{r}  )   \odot   ( \Delta  \mGLsym{,}  \mGLmv{x}  \mGLsym{:}  \mGLnt{X} )   \mGLsym{;}  \Gamma'  \vdash_{\mathsf{MT} }  \mGLnt{l'}  \mGLsym{:}  \mGLnt{A}
    \]
    Next we have the following derivation:
    \[
    \inferrule* [flushleft,right=$\mGLdruleMTXXGrdEName{}$] {
      \inferrule* [flushleft,right=$\mGLdruleMTXXIdName{}$] {
        \,
      }{\emptyset  \odot  \emptyset  \mGLsym{;}  \mGLmv{z}  \mGLsym{:}   \mathsf{Grd} _{ \mGLnt{r} }\, \mGLnt{X}   \vdash_{\mathsf{MT} }  \mGLmv{z}  \mGLsym{:}   \mathsf{Grd} _{ \mGLnt{r} }\, \mGLnt{X}}
      \\
      (  \delta  \mGLsym{,}  \mGLnt{r}  )   \odot   ( \Delta  \mGLsym{,}  \mGLmv{x}  \mGLsym{:}  \mGLnt{X} )   \mGLsym{;}  \Gamma'  \vdash_{\mathsf{MT} }  \mGLnt{l'}  \mGLsym{:}  \mGLnt{A}
    }{\delta  \odot  \Delta  \mGLsym{;}   ( \mGLmv{z}  \mGLsym{:}   \mathsf{Grd} _{ \mGLnt{r} }\, \mGLnt{X}   \mGLsym{,}  \Gamma' )   \vdash_{\mathsf{MS} }  \mathsf{let} \, \mathsf{Grd} \, \mGLnt{r} \, \mGLmv{x}  \mGLsym{=}  \mGLmv{z} \, \mathsf{in} \, \mGLnt{l'}  \mGLsym{:}  \mGLnt{A}}
    \]
    Here choose $\delta' = \delta$, and as we can see we obtain our result.
  \end{enumerate}
\end{proof}


\subsection{Proof that MT/GT implies MS/GS}
\label{subsec:proof_of_mt_implies_mst}
\begin{restatable}[$\vdash_{\mathsf{MT} }$ implies $\vdash_{\mathsf{MS} }$]{lemma}{MTimpliesMST}
  \label{lemma:mt-implies-mst}
  The following hold by mutual induction:
  \begin{enumerate}
  \item If $\delta  \odot  \Delta  \vdash_{\mathsf{GT} }  \mGLnt{t}  \mGLsym{:}  \mGLnt{X}$, then $\delta  \odot  \Delta  \vdash_{\mathsf{GS} }  \mGLnt{t}  \mGLsym{:}  \mGLnt{X}$.
  \item If $\delta  \odot  \Delta  \mGLsym{;}  \Gamma  \vdash_{\mathsf{MT} }  \mGLnt{l}  \mGLsym{:}  \mGLnt{A}$, then $\delta  \odot  \Delta  \mGLsym{;}  \Gamma  \vdash_{\mathsf{MS} }  \mGLnt{l}  \mGLsym{:}  \mGLnt{A}$.
  \end{enumerate}
\end{restatable}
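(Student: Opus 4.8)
The plan is to prove both parts simultaneously by mutual induction on the assumed natural-deduction derivation, dualising the strategy of Lemma~\ref{lemma:mst-implies-mt}. In that lemma each sequent-calculus left rule was simulated in natural deduction by combining the matching elimination rule with the substitution lemma (Lemma~\ref{lemma:subsitution_for_gt_mt}); here I would instead simulate each natural-deduction elimination rule inside the sequent calculus by combining the matching left rule with a \emph{cut}. The introduction and structural cases transfer essentially verbatim: the introduction rules \mGLdruleMTXXUnitIName, \mGLdruleMTXXTenIName, \mGLdruleMTXXImpIName, \mGLdruleMTXXGrdIName{} and \mGLdruleGTXXLinIName{} are exactly the right rules \mGLdruleMSTXXUnitRName, \mGLdruleMSTXXTenRName, \mGLdruleMSTXXImpRName, \mGLdruleMSTXXGrdRName{} and \mGLdruleGSTXXLinRName; weakening, contraction, exchange, graded exchange and approximation are shared by both systems; and the identity rules coincide. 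For all of these I would apply the induction hypothesis to the premises and reapply the corresponding rule, noting that the witnessing term is unchanged.

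The substantive cases are the elimination rules, which I would treat by the uniform recipe ``induction hypothesis, then left rule (introducing the eliminated connective as a fresh hypothesis via an identity axiom where needed), then cut''. For \mGLdruleMTXXImpEName{} with premises $\delta_1 \odot \Delta_1 ; \Gamma_1 \vdash_{\mathsf{MT}} l_1 : A \multimap B$ and $\delta_2 \odot \Delta_2 ; \Gamma_2 \vdash_{\mathsf{MT}} l_2 : A$, the induction hypothesis gives the analogous $\vdash_{\mathsf{MS}}$ derivations; applying \mGLdruleMSTXXImpLName{} to $l_2$ together with an identity axiom on $B$ yields $\delta_2 \odot \Delta_2 ; z : A \multimap B, \Gamma_2 \vdash_{\mathsf{MS}} z\,l_2 : B$; and a \mGLdruleMSTXXCutName{} against $l_1$ on the cut-formula $A \multimap B$ produces the goal, since $[l_1/z](z\,l_2) = l_1\,l_2$. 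The identical pattern handles the tensor and unit eliminators \mGLdruleMTXXTenEName, \mGLdruleMTXXUnitEName{} and their graded counterparts \mGLdruleMTXXGTenEName, \mGLdruleMTXXGUnitEName{} via \mGLdruleMSTXXTenLName, \mGLdruleMSTXXGTenLName, \mGLdruleMSTXXUnitLName, \mGLdruleMSTXXGUnitLName, and the modal eliminators \mGLdruleMTXXLinEName{} and \mGLdruleMTXXGrdEName{} via \mGLdruleMSTXXLinLName{} and \mGLdruleMSTXXGrdLName; in each case the cut-formula is the type of the eliminated subterm and the cut's substitution rebuilds the intended $\mathsf{let}$- or $\mathsf{Unlin}$-term. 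The two parts are genuinely mutually dependent because \mGLdruleGTXXLinIName{} and \mGLdruleMTXXGrdIName{} (and their sequent-calculus images) cross between the $\mathsf{GS}$/$\mathsf{GT}$ and $\mathsf{MS}$/$\mathsf{MT}$ judgments.

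The main obstacle I anticipate is the bookkeeping of grade vectors rather than any conceptual difficulty. The graded eliminators manipulate grades --- for instance \mGLdruleMTXXGrdEName{} binds its variable at some grade $r$, and \mGLdruleMTXXGTenEName{} shares a grade across both projected components --- and the sequent-calculus reconstruction must reproduce precisely the same grade vector once the cut has scaled the grades of the eliminated subterm. I expect agreement to follow from the same arithmetic already built into the cut rules (notably $r * 1 = r$ for the identity cuts, and the distribution of multiplication over the combined contexts), but confirming that the grades coincide exactly in the $\mathsf{Grd}$- and $\boxtimes$-elimination cases is where the careful checking concentrates. Notably, this direction relies only on the presence of cut in the sequent calculus, so, unlike cut elimination, it makes no appeal to Lemma~\ref{lemma:cut_reduction_for_mgl}.
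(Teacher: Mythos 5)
Your proposal is correct and matches the paper's own proof: the paper likewise proceeds by mutual induction on the natural-deduction derivation, transfers introduction/structural rules directly, and simulates each eliminator by the induction hypothesis followed by the matching left rule (with an identity axiom where needed) and a cut — e.g.\ \mGLdruleMTXXImpEName{} via \mGLdruleMSTXXImpLName{} plus \mGLdruleMSTXXCutName{}, \mGLdruleMTXXLinEName{} via \mGLdruleMSTXXLinLName{} plus \mGLdruleMSTXXGCutName{} using $1 * \delta = \delta$, and \mGLdruleMTXXGrdEName{} via \mGLdruleMSTXXGrdLName{} plus cut. The grade-arithmetic checks you flag are exactly the ones the paper discharges in those cases.
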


\begin{proof}
  This is a proof by mutual induction on the assumed typing
  derivation.  We only give the most interesting cases, and the others
  are either similar or hold trivially. First, we consider part one.

  \begin{enumerate}
  \item[] Case.\ \\ 
    \begin{center}
      \begin{math}
        $$\mprset{flushleft}
        \inferrule* [right=$\mGLdruleGTXXTenEName{}$] {
          \delta_{{\mathrm{2}}}  \odot  \Delta_{{\mathrm{2}}}  \vdash_{\mathsf{GT} }  \mGLnt{t'}  \mGLsym{:}  \mGLnt{Y_{{\mathrm{1}}}}  \boxtimes  \mGLnt{Y_{{\mathrm{2}}}}\\\\
          (  \delta_{{\mathrm{1}}}  \mGLsym{,}  \mGLnt{r}  \mGLsym{,}  \mGLnt{r}  \mGLsym{,}  \delta_{{\mathrm{3}}}  )   \odot   ( \Delta_{{\mathrm{1}}}  \mGLsym{,}  \mGLmv{x}  \mGLsym{:}  \mGLnt{Y_{{\mathrm{1}}}}  \mGLsym{,}  \mGLmv{y}  \mGLsym{:}  \mGLnt{Y_{{\mathrm{2}}}}  \mGLsym{,}  \Delta_{{\mathrm{3}}} )   \vdash_{\mathsf{GT} }  \mGLnt{t''}  \mGLsym{:}  \mGLnt{X}
        }{(  \delta_{{\mathrm{1}}}  \mGLsym{,}   \mGLnt{r}  *  \delta_{{\mathrm{2}}}   \mGLsym{,}  \delta_{{\mathrm{3}}}  )   \odot   ( \Delta_{{\mathrm{1}}}  \mGLsym{,}  \Delta_{{\mathrm{2}}}  \mGLsym{,}  \Delta_{{\mathrm{3}}} )   \vdash_{\mathsf{GT} }   \mathsf{let} \,( \mGLmv{x} , \mGLmv{y} ) =  \mGLnt{t'} \, \mathsf{in} \, \mGLnt{t''}   \mGLsym{:}  \mGLnt{X}}
      \end{math}
    \end{center}
    We know the following hold by induction hypothesis:
    \[
    \begin{array}{lll}
      \delta_{{\mathrm{2}}}  \odot  \Delta_{{\mathrm{2}}}  \vdash_{\mathsf{GS} }  \mGLnt{t'}  \mGLsym{:}  \mGLnt{Y_{{\mathrm{1}}}}  \boxtimes  \mGLnt{Y_{{\mathrm{2}}}}\\
      (  \delta_{{\mathrm{1}}}  \mGLsym{,}  \mGLnt{r}  \mGLsym{,}  \mGLnt{r}  \mGLsym{,}  \delta_{{\mathrm{3}}}  )   \odot   ( \Delta_{{\mathrm{1}}}  \mGLsym{,}  \mGLmv{x}  \mGLsym{:}  \mGLnt{Y_{{\mathrm{1}}}}  \mGLsym{,}  \mGLmv{y}  \mGLsym{:}  \mGLnt{Y_{{\mathrm{2}}}}  \mGLsym{,}  \Delta_{{\mathrm{3}}} )   \vdash_{\mathsf{GS} }  \mGLnt{t''}  \mGLsym{:}  \mGLnt{X}
    \end{array}
    \]
    Now we have the following derivation:
    \[
    \inferrule* [flushleft,right=$\mGLdruleGSTXXCutName{}$] {
      \delta_{{\mathrm{2}}}  \odot  \Delta_{{\mathrm{2}}}  \vdash_{\mathsf{GT} }  \mGLnt{t'}  \mGLsym{:}  \mGLnt{Y_{{\mathrm{1}}}}  \boxtimes  \mGLnt{Y_{{\mathrm{2}}}}\\
      \inferrule* [flushleft,right=$\mGLdruleGSTXXTenLName{}$] {
      (  \delta_{{\mathrm{1}}}  \mGLsym{,}  \mGLnt{r}  \mGLsym{,}  \mGLnt{r}  \mGLsym{,}  \delta_{{\mathrm{3}}}  )   \odot   ( \Delta_{{\mathrm{1}}}  \mGLsym{,}  \mGLmv{x}  \mGLsym{:}  \mGLnt{Y_{{\mathrm{1}}}}  \mGLsym{,}  \mGLmv{y}  \mGLsym{:}  \mGLnt{Y_{{\mathrm{2}}}}  \mGLsym{,}  \Delta_{{\mathrm{3}}} )   \vdash_{\mathsf{GS} }  \mGLnt{t''}  \mGLsym{:}  \mGLnt{X}
      }{(  \delta_{{\mathrm{1}}}  \mGLsym{,}  \mGLnt{r}  \mGLsym{,}  \delta_{{\mathrm{3}}}  )   \odot   ( \Delta_{{\mathrm{1}}}  \mGLsym{,}  \mGLmv{z}  \mGLsym{:}  \mGLnt{Y_{{\mathrm{1}}}}  \boxtimes  \mGLnt{Y_{{\mathrm{2}}}}  \mGLsym{,}  \Delta_{{\mathrm{3}}} )   \vdash_{\mathsf{GS} }   \mathsf{let} \,( \mGLmv{x} , \mGLmv{y} ) =  \mGLmv{z} \, \mathsf{in} \, \mGLnt{t''}   \mGLsym{:}  \mGLnt{X}}
    }{(  \delta_{{\mathrm{1}}}  \mGLsym{,}   \mGLnt{r}  *  \delta_{{\mathrm{2}}}   \mGLsym{,}  \delta_{{\mathrm{3}}}  )   \odot   ( \Delta_{{\mathrm{1}}}  \mGLsym{,}  \Delta_{{\mathrm{2}}}  \mGLsym{,}  \Delta_{{\mathrm{3}}} )   \vdash_{\mathsf{GT} }   \mathsf{let} \,( \mGLmv{x} , \mGLmv{y} ) =  \mGLnt{t'} \, \mathsf{in} \, \mGLnt{t''}   \mGLsym{:}  \mGLnt{X}}
    \]    
    And as we can see we obtain our result.
  \end{enumerate}
  We now consider part two.
  \begin{enumerate}
  \item[] Case.\ \\ 
    \begin{center}
      \begin{math}
        $$\mprset{flushleft}
        \inferrule* [right=$\mGLdruleMTXXImpEName{}$] {
          \delta_{{\mathrm{2}}}  \odot  \Delta_{{\mathrm{2}}}  \mGLsym{;}  \Gamma_{{\mathrm{2}}}  \vdash_{\mathsf{MT} }  \mGLnt{l_{{\mathrm{2}}}}  \mGLsym{:}  \mGLnt{B}\\\\
          \delta_{{\mathrm{1}}}  \odot  \Delta_{{\mathrm{1}}}  \mGLsym{;}  \Gamma_{{\mathrm{1}}}  \vdash_{\mathsf{MT} }  \mGLnt{l_{{\mathrm{1}}}}  \mGLsym{:}  \mGLnt{B}  \multimap  \mGLnt{A}
        }{(  \delta_{{\mathrm{1}}}  \mGLsym{,}  \delta_{{\mathrm{2}}}  )   \odot   ( \Delta_{{\mathrm{1}}}  \mGLsym{,}  \Delta_{{\mathrm{2}}} )   \mGLsym{;}   ( \Gamma_{{\mathrm{1}}}  \mGLsym{,}  \Gamma_{{\mathrm{2}}} )   \vdash_{\mathsf{MT} }  \mGLnt{l_{{\mathrm{1}}}} \, \mGLnt{l_{{\mathrm{2}}}}  \mGLsym{:}  \mGLnt{A}}
      \end{math}
    \end{center}
    By the induction hypothesis we know the following hold:
    \[
      \begin{array}{lll}
        \delta_{{\mathrm{2}}}  \odot  \Delta_{{\mathrm{2}}}  \mGLsym{;}  \Gamma_{{\mathrm{2}}}  \vdash_{\mathsf{MS} }  \mGLnt{l_{{\mathrm{2}}}}  \mGLsym{:}  \mGLnt{B}\\
        \delta_{{\mathrm{1}}}  \odot  \Delta_{{\mathrm{1}}}  \mGLsym{;}  \Gamma_{{\mathrm{1}}}  \vdash_{\mathsf{MS} }  \mGLnt{l_{{\mathrm{1}}}}  \mGLsym{:}  \mGLnt{B}  \multimap  \mGLnt{A}
      \end{array}
    \]
    Now we have the following derivation:
    \begin{gather*}
    \inferrule* [flushleft,right=$\mGLdruleMSTXXCutName{}$] {
      \delta_{{\mathrm{1}}}  \odot  \Delta_{{\mathrm{1}}}  \mGLsym{;}  \Gamma_{{\mathrm{1}}}  \vdash_{\mathsf{MS} }  \mGLnt{l_{{\mathrm{1}}}}  \mGLsym{:}  \mGLnt{B}  \multimap  \mGLnt{A}\\
      \inferrule* [flushleft,right=$\mGLdruleMSTXXImpLName{}$] {       
        \delta_{{\mathrm{2}}}  \odot  \Delta_{{\mathrm{2}}}  \mGLsym{;}  \Gamma_{{\mathrm{2}}}  \vdash_{\mathsf{MS} }  \mGLnt{l_{{\mathrm{2}}}}  \mGLsym{:}  \mGLnt{B}\\
        \inferrule* [flushleft,right=$\mGLdruleMSTXXidName{}$] {
          \,
        }{\emptyset  \odot  \emptyset  \mGLsym{;}  \mGLmv{x}  \mGLsym{:}  \mGLnt{B}  \vdash_{\mathsf{MS} }  \mGLmv{x}  \mGLsym{:}  \mGLnt{B}}
      }{\delta_{{\mathrm{2}}}  \odot  \Delta_{{\mathrm{2}}}  \mGLsym{;}   ( \mGLmv{z}  \mGLsym{:}  \mGLnt{A}  \multimap  \mGLnt{B}  \mGLsym{,}  \Gamma_{{\mathrm{2}}} )   \vdash_{\mathsf{MS} }  \mGLmv{z} \, \mGLnt{l_{{\mathrm{2}}}}  \mGLsym{:}  \mGLnt{B}}
    }{(  \delta_{{\mathrm{1}}}  \mGLsym{,}  \delta_{{\mathrm{2}}}  )   \odot   ( \Delta_{{\mathrm{1}}}  \mGLsym{,}  \Delta_{{\mathrm{2}}} )   \mGLsym{;}   ( \Gamma_{{\mathrm{1}}}  \mGLsym{,}  \Gamma_{{\mathrm{2}}} )   \vdash_{\mathsf{MS} }  \mGLnt{l_{{\mathrm{1}}}} \, \mGLnt{l_{{\mathrm{2}}}}  \mGLsym{:}  \mGLnt{A}}
    \end{gather*}
    And as we can see we obtain our result.

  \item[] Case.\ \\ 
    \begin{center}
      \begin{math}
        $$\mprset{flushleft}
        \inferrule* [right=$\mGLdruleMTXXLinEName{}$] {
          \delta  \odot  \Delta  \vdash_{\mathsf{GT} }  \mGLnt{t}  \mGLsym{:}  \mathsf{Lin} \, \mGLnt{A}
        }{\delta  \odot  \Delta  \mGLsym{;}  \emptyset  \vdash_{\mathsf{MT} }   \mathsf{Unlin} \, \mGLnt{t}   \mGLsym{:}  \mGLnt{A}}
      \end{math}
    \end{center}
    By the induction hypothesis we know the following hold:
    \[
      \delta  \odot  \Delta  \vdash_{\mathsf{GS} }  \mGLnt{t}  \mGLsym{:}  \mathsf{Lin} \, \mGLnt{A}
    \]
    Now we have the following derivation:
    \[
    \inferrule* [flushleft,right=$\mGLdruleMSTXXGCutName{}$] {
      \delta  \odot  \Delta  \vdash_{\mathsf{GS} }  \mGLnt{t}  \mGLsym{:}  \mathsf{Lin} \, \mGLnt{A}\\
      \inferrule* [flushleft,right=$\mGLdruleMSTXXLinLName{}$] {
        \inferrule* [flushleft,right=$\mGLdruleMSTXXidName{}$] {
        \,
      }{\emptyset  \odot  \emptyset  \mGLsym{;}  \mGLmv{x}  \mGLsym{:}  \mGLnt{A}  \vdash_{\mathsf{MS} }  \mGLmv{x}  \mGLsym{:}  \mGLnt{A}}\\      
      }{1  \odot  \mGLmv{z}  \mGLsym{:}  \mathsf{Lin} \, \mGLnt{A}  \mGLsym{;}  \emptyset  \vdash_{\mathsf{MS} }   \mathsf{Unlin} \, \mGLmv{z}   \mGLsym{:}  \mGLnt{A}}
    }{1  *  \delta  \odot  \Delta  \mGLsym{;}  \emptyset  \vdash_{\mathsf{MS} }   \mathsf{Unlin} \, \mGLnt{t}   \mGLsym{:}  \mGLnt{A}}
    \]
    And we obtain our result, because $1  *  \delta  \mGLsym{=}  \delta$.

  \item[] Case.\ \\ 
    \begin{center}
      \begin{math}
        $$\mprset{flushleft}
        \inferrule* [right=$\mGLdruleMTXXGrdEName{}$] {
          \delta_{{\mathrm{2}}}  \odot  \Delta_{{\mathrm{2}}}  \mGLsym{;}  \Gamma_{{\mathrm{2}}}  \vdash_{\mathsf{MT} }  \mGLnt{l_{{\mathrm{1}}}}  \mGLsym{:}   \mathsf{Grd} _{ \mGLnt{r} }\, \mGLnt{X}\\\\
          (  \delta_{{\mathrm{1}}}  \mGLsym{,}  \mGLnt{r}  \mGLsym{,}  \delta_{{\mathrm{3}}}  )   \odot   ( \Delta_{{\mathrm{1}}}  \mGLsym{,}  \mGLmv{x}  \mGLsym{:}  \mGLnt{X}  \mGLsym{,}  \Delta_{{\mathrm{3}}} )   \mGLsym{;}  \Gamma_{{\mathrm{1}}}  \vdash_{\mathsf{MT} }  \mGLnt{l_{{\mathrm{2}}}}  \mGLsym{:}  \mGLnt{A}
        }{(  \delta_{{\mathrm{1}}}  \mGLsym{,}  \delta_{{\mathrm{2}}}  \mGLsym{,}  \delta_{{\mathrm{3}}}  )   \odot   ( \Delta_{{\mathrm{1}}}  \mGLsym{,}  \Delta_{{\mathrm{2}}}  \mGLsym{,}  \Delta_{{\mathrm{3}}} )   \mGLsym{;}   ( \Gamma_{{\mathrm{1}}}  \mGLsym{,}  \Gamma_{{\mathrm{2}}} )   \vdash_{\mathsf{MT} }  \mathsf{let} \, \mathsf{Grd} \, \mGLnt{r} \, \mGLmv{x}  \mGLsym{=}  \mGLnt{l_{{\mathrm{1}}}} \, \mathsf{in} \, \mGLnt{l_{{\mathrm{2}}}}  \mGLsym{:}  \mGLnt{A}}
      \end{math}
    \end{center}
    We know by the induction hypothesis that the following hold:
    \[
    \begin{array}{lll}
      \delta_{{\mathrm{2}}}  \odot  \Delta_{{\mathrm{2}}}  \mGLsym{;}  \Gamma_{{\mathrm{2}}}  \vdash_{\mathsf{MS} }  \mGLnt{l_{{\mathrm{1}}}}  \mGLsym{:}   \mathsf{Grd} _{ \mGLnt{r} }\, \mGLnt{X}\\
      (  \delta_{{\mathrm{1}}}  \mGLsym{,}  \mGLnt{r}  \mGLsym{,}  \delta_{{\mathrm{2}}}  )   \odot   ( \Delta_{{\mathrm{1}}}  \mGLsym{,}  \mGLmv{x}  \mGLsym{:}  \mGLnt{X}  \mGLsym{,}  \Delta_{{\mathrm{3}}} )   \mGLsym{;}  \Gamma_{{\mathrm{1}}}  \vdash_{\mathsf{MS} }  \mGLnt{l_{{\mathrm{2}}}}  \mGLsym{:}  \mGLnt{A}
    \end{array}
    \]
    Now we have the following derivation:
    \begin{gather*}
    \inferrule* [flushleft,right=$\mGLdruleMSTXXCutName{}$] {
      \delta_{{\mathrm{2}}}  \odot  \Delta_{{\mathrm{2}}}  \mGLsym{;}  \Gamma_{{\mathrm{2}}}  \vdash_{\mathsf{MS} }  \mGLnt{l_{{\mathrm{1}}}}  \mGLsym{:}   \mathsf{Grd} _{ \mGLnt{r} }\, \mGLnt{X}\\
      \inferrule* [flushleft,right=$\mGLdruleMSTXXGrdLName{}$] {
        (  \delta_{{\mathrm{1}}}  \mGLsym{,}  \mGLnt{r}  \mGLsym{,}  \delta_{{\mathrm{3}}}  )   \odot   ( \Delta_{{\mathrm{1}}}  \mGLsym{,}  \mGLmv{x}  \mGLsym{:}  \mGLnt{X}  \mGLsym{,}  \Delta_{{\mathrm{3}}} )   \mGLsym{;}  \Gamma_{{\mathrm{1}}}  \vdash_{\mathsf{MS} }  \mGLnt{l_{{\mathrm{2}}}}  \mGLsym{:}  \mGLnt{A}
      }{(  \delta_{{\mathrm{1}}}  \mGLsym{,}  \delta_{{\mathrm{3}}}  )   \odot   ( \Delta_{{\mathrm{1}}}  \mGLsym{,}  \Delta_{{\mathrm{3}}} )   \mGLsym{;}  \Gamma_{{\mathrm{1}}}  \mGLsym{,}  \mGLmv{z}  \mGLsym{:}   \mathsf{Grd} _{ \mGLnt{r} }\, \mGLnt{X}   \vdash_{\mathsf{MS} }  \mathsf{let} \, \mathsf{Grd} \, \mGLnt{r} \, \mGLmv{x}  \mGLsym{=}  \mGLmv{z} \, \mathsf{in} \, \mGLnt{l_{{\mathrm{2}}}}  \mGLsym{:}  \mGLnt{A}}
    }{(  \delta_{{\mathrm{1}}}  \mGLsym{,}  \delta_{{\mathrm{2}}}  \mGLsym{,}  \delta_{{\mathrm{3}}}  )   \odot   ( \Delta_{{\mathrm{1}}}  \mGLsym{,}  \Delta_{{\mathrm{2}}}  \mGLsym{,}  \Delta_{{\mathrm{3}}} )   \mGLsym{;}  \Gamma_{{\mathrm{1}}}  \mGLsym{,}  \Gamma_{{\mathrm{2}}}  \vdash_{\mathsf{MS} }  \mathsf{let} \, \mathsf{Grd} \, \mGLnt{r} \, \mGLmv{x}  \mGLsym{=}  \mGLnt{l_{{\mathrm{1}}}} \, \mathsf{in} \, \mGLnt{l_{{\mathrm{2}}}}  \mGLsym{:}  \mGLnt{A}}
    \end{gather*}
    In the above we leave the use of exchange implicit.  As we can see we have obtained our result.
  \end{enumerate}
\end{proof}




\else
\fi

\end{document}
